\numberwithin{equation}{chapter}
\renewcommand{\it}{\emph}
\newcommand{\boxd}[1]{\boxed{\phantom{\Biggl(}#1\phantom{\Biggl)}}}
\renewcommand{\sec}[1]{section \ref{sec:#1}}
\newcommand{\fig}[1]{figure \ref{fig:#1}}
\newcommand{\dref}[1]{definition\,\ref{def:#1}}
\newcommand{\rem}[1]{remark \ref{rem:#1}}
\definecolor{jred}{rgb}{0.8,0,0}
\definecolor{jgreen}{rgb}{0,0.7,0}
\definecolor{jblue}{rgb}{0,0,0.8}
\tikzstyle{c} =	[coordinate]
\tikzstyle{v} = 	[circle, draw=black, line width=.2pt, fill=black, inner sep=0pt, minimum size=1.5mm]
\tikzstyle{vb} =	[circle, draw=black, line width=.2pt, fill=jred, inner sep=0pt, minimum size=1.5mm]
\tikzstyle{vh} =	[circle, draw=black, line width=.2pt, fill=jblue, inner sep=0pt, minimum size=1.5mm]
\tikzstyle{vs} =	[circle, draw=black, line width=.2pt, fill=jgreen, inner sep=0pt, minimum size=1.5mm]
\tikzstyle{e} =	[draw=jred,line width=1.6pt]
\tikzstyle{eb} =	[draw=jgreen,line width=1.6pt]
\tikzstyle{eh} =	[dashed]
\tikzstyle{es} =	[draw=jblue,line width=1.6pt]
\tikzstyle{f} = 	[line width=0.01pt,dotted,fill=blue, fill opacity=.1]
\tikzstyle{cs} = 	[draw=none,fill=jred, fill opacity=.7]
\tikzstyle{bb} =	[dashed,line width=1.4pt]
\tikzstyle{bh} =	[dotted,line width=1pt]
\tikzstyle{h} = 	[ellipse, inner sep=0.1pt, draw=black]
\tikzstyle{hv} = 	[ellipse, inner sep=0.1pt, draw=red]
\tikzstyle{he} = 	[ellipse, inner sep=0.1pt, draw=blue]
\tikzstyle{hb} = 	[ellipse, inner sep=0.1pt, draw=jgreen]
\tikzstyle{venn1} = [fill=gray!20!white, draw=black, pattern=north east lines]
\tikzstyle{venn2} = [fill=gray!20!white, draw=black, pattern=north west lines]
\newcommand{\trace}{{trace}}
\newcommand{\gn}{G_{\textsc{n}}}
\newcommand{\gnb}{\kappa}
\newcommand{\h}{\hbar}
\newcommand{\bi}{\gamma_{\textsc{bi}}}
\newcommand{\lpl}{\ell_{\textsc{pl}}}
\newcommand{\lbi}{\ell_{\gamma}}
\newcommand{\gr}{\textsc{gr}}
\newcommand{\qrc}{\textsc{qrc}}
\newcommand{\cdt}{\textsc{cdt}}
\newcommand{\lqg}{\textsc{lqg}}
\newcommand{\sfm}{\textsc{sf}}
\newcommand{\gft}{\textsc{gft}}
\newcommand{\sub}{\textsc{mf}} %sub}}
\newcommand{\mfgft}{\textsc{mf}}
\newcommand{\sgft}{}%{\textsc{s-gft}}
\newcommand{\dwgft}{\textsc{dw-gft}}
\newcommand{\regge}{\text{Regge}}
\newcommand{\pr}{\textsc{pr}}
\newcommand{\bft}{\textsc{bf}}
\renewcommand{\mp}{M_{\textrm{kin}}}
\newcommand{\ma}{M_{\textrm{dyn}}}
\newcommand{\mpp}{M_{\textrm{emp}}}
\newcommand{\ia}{M_{\textrm{app}}}
\newcommand{\res}{r}
\newcommand{\stm}{\mathcal{M}}		% space-time manifold
\newcommand{\sm}{\Sigma}			% spatial mf
\newcommand{\std}{D}				% space-time dimension
\newcommand{\sd}{d}				% spatial dimension
\newcommand{\br}{\partial}			% boundary
\newcommand{\m}{n}				% mathematical dimension
\newcommand{\mf}{\mathcal{M}}		% smooth manifold
\newcommand{\surface}{\mathcal{S}}
\newcommand{\p}{p}					% dimension of cells, forms etc
\newcommand{\q}{q}		
\newcommand{\size}{N}				% "lattice" size 
\newcommand{\np}[1]{N^{[#1]}}
\newcommand{\cl}{c}
\newcommand{\clp}[1]{\cl_{#1}}
\newcommand{\lc}{\cl_{-1}}			% least cell
\newcommand{\gc}{\cl_{\m+1}}			% greatest cell in a \m-manifold
\newcommand{\cs}{\cl^\star}
\newcommand{\csp}[1]{\cs_{#1}}
\newcommand{\s}{\sigma}
\newcommand{\cp}{\mathcal{C}}		% complex
\newcommand{\cps}{\cp^{\star}}		
\newcommand{\cpp}[1]{\cp^{[#1]}}
\newcommand{\cpsp}[1]{{\cp^{\star [#1]}}}
\newcommand{\cm}{\mathcal{C}}		% combinatorial manifold
\newcommand{\cms}{\cm^\star}	
\newcommand{\cmsp}[1]{{\cm^{\star [#1]}}}
\newcommand{\dm}{dim}				% dimension map	
\newcommand{\bs}{\partial}			% boundary set
\newcommand{\bsc}[2]{\epsilon_{#1,#2}}	% mod2 boundary coefficients
\newcommand{\bsi}{\partial^{\star}}
\newcommand{\dbs}{\partial^{\star}}
\newcommand{\clos}{\overline}
\newcommand{\bm}{\delta}			% boundary map
\newcommand{\bmp}[1]{\delta_{#1}}		% boundary map
\newcommand{\bc}[2]{\eta_{#1,#2}}		% boundary map coefficients
\newcommand{\bcs}[2]{\eta^\star_{#1,#2}}% dual boundary coefficients
\newcommand{\simc}{\cp_\mathrm{sim}}	% simplicial complex
\newcommand{\simcp}[1]{\cp_\mathrm{sim}^{[#1]}}	
\newcommand{\simcsp}[1]{{\cp_\mathrm{sim}^{\star [#1]}}}
\newcommand{\T}{\mathcal{T}}			% triangulation
\newcommand{\Tp}[1]{\mathcal{T}^{[#1]}}
\newcommand{\at}{a_{\text{t}}}			% time-like edge length
\newcommand{\as}{a_{\text{s}}}			% space-like edge length
\newcommand{\polyc}{\cp_\mathrm{poly}}	% polyhedral complex
\newcommand{\gpolyc}{\cp_\mathrm{gen}}% generalized polyhedral complex	{\widetilde{\mathrm{poly}}}
\newcommand{\gpolycp}[1]{\cp_\mathrm{gen}^{[#1]}}
\newcommand{\ssub}{\Delta}			% simplicial subdivision map
\newcommand{\issub}{\Delta^{-1}}
\newcommand{\issubs}{\Delta^{-1\star}}
\newcommand{\subc}{\mathcal{S}}		% simplicial subdivision
\newcommand{\subcp}[1]{\subc^{[#1]}}
\newcommand{\bra}{(}
\newcommand{\ket}{)}
\newcommand{\Nc}{ }%N_{\clp}}
\newcommand{\Vc}{ }%V_{\clp}^{(\m)}}	% support volumes
\newcommand{\dl}{l^\star}				% dual length
\newcommand{\dist}[2]{D(#1,#2)}		% distance
\newcommand{\V}{\mathcal{V}}
\newcommand{\Vb}{\overline{\mathcal{V}}}
\newcommand{\vb}{{\bar v}}
\newcommand{\Vh}{\widehat{\mathcal{V}}}
\newcommand{\vh}{{\hat v}}
\newcommand{\E}{\mathcal{E}}
\newcommand{\Eb}{\overline{\mathcal{E}}}
\newcommand{\eb}{{\bar e}}
\newcommand{\Eh}{\widehat{\mathcal{E}}}
\newcommand{\F}{\mathcal{F}}
\newcommand{\bulk}{\alpha}			% bulk map 
\newcommand{\bisec}{\beta}			% bisection map
\newcommand{\gm}{\gamma}			% gluing (bonding) map
\newcommand{\bst}{\bs}
\newcommand{\copies}{k}
\newcommand{\ks}{{k^\star}}
\newcommand{\loopless}{\textsc{l}}
\newcommand{\rl}{{\copies,\loopless}}
\newcommand{\drl}{{\std,\loopless}}
\newcommand{\simplicial}{\textsc{s}}
\newcommand{\rs}{{\copies,\simplicial}}
\newcommand{\drs}{{{\std},\simplicial}}
\newcommand{\lnb}{\textsc{l}\textrm{-}\textsc{nb}}
\newcommand{\snb}{\textsc{s}\textrm{-}\textsc{nb}}
\newcommand{\sdec}{\std_{\copies,\textsc{l-s}}}
\newcommand{\sda}{\mathfrak{s}}
\newcommand{\coils}{\mathcal{\V_{\sda}}}
\newcommand{\coil}{\mathcal{\V_{\text{coil}}}}
\newcommand{\reroutings}{\E_\sda}
\newcommand{\hs}{\mathcal{H}	}			% Hilbert space
\newcommand{\hkin}{\mathcal{H}_{\text{kin}}}	% Hilbert space
\newcommand{\hk}[1]{\mathcal{H}_{\text{kin},#1}}	
\newcommand{\hphys}{\mathcal{H}_{\text{phys}}}% Hilbert space
\newcommand{\phys}{\textrm{phys}}
\newcommand{\prjct}{\pi}
\newcommand{\alg}{\mathfrak{A}}
\newcommand{\spec}{\text{Spec}}
\newcommand{\qbra}{\langle}
\newcommand{\qket}{\rangle}
\newcommand{\G}{G}				% Lie group
\newcommand{\dg}{{{\delta}_{_\G}}}
\newcommand{\la}{\mathfrak{g}}		% lie algebra of G
\newcommand{\dla}{\delta_\la}
\newcommand{\A}{\omega}			% G-connection
\newcommand{\B}[1]{B^{^{#1}}}			% B field with LA indices
\newcommand{\eu}[1]{e^{^{#1}}}		
\newcommand{\rep}[1]{{j_{#1}}}
\newcommand{\intw}[1]{{\iota_{#1}}}
\renewcommand{\dj}[1]{\dim({#1})}
\newcommand{\threej}[4]{\begin{pmatrix}
 \rep{#1} & \rep{#2} & \rep{#3} \\ 
 #4_{#1} & #4_{#2} & #4_{#3}
\end{pmatrix}}
\newcommand{\sixj}[6]{\begin{Bmatrix}
 \rep{#1} & \rep{#2} & \rep{#3} \\ 
 \rep{#4} & \rep{#5} & \rep{#6}
\end{Bmatrix}}
\newcommand{\csu}{c_G}
\newcommand{\sn}{(\bg,\rep{\eb},\intw{\vb})}
\newcommand{\snr}{|\bg,\rep{\eb},\intw{\vb}\qket}
\newcommand{\sms}{{\bg,\rep{\eb}}}
\newcommand{\smr}{|\bg,\rep{\eb}\qket}
\newcommand{\sml}{\qbra \bg,\rep{\eb} |}
\newcommand{\gftvr}{| \emptyset \qket}
\newcommand{\bvol}{vol}
\newcommand{\so}{\mathfrak{so}}
\newcommand{\su}{\mathfrak{su}}
\newcommand{\SU}{\mathrm{SU}}
\newcommand{\fdiagram}{\Gamma}
\newcommand{\sym}{\mathrm{sym}}
\newcommand{\Mcal}{\Z_{M}}
\newcommand{\real}{\mathrm{rl}}
\newcommand{\virtual}{\mathrm{vl}}
\newcommand{\curv}{\gamma}
\newcommand{\bg}{\mathfrak{c}}
\newcommand{\bge}{\mathcal{G}}
\newcommand{\bgs}{\mathfrak{C}}
\newcommand{\bgt}{\widetilde{\mathfrak{c}}}
\newcommand{\bgst}{\widetilde{\mathfrak{C}}}
\newcommand{\bbg}{\mathfrak{b}}
\newcommand{\bbgs}{\mathfrak{B}}
\newcommand{\bbgt}{\widetilde{\mathfrak{b}}}
\newcommand{\bbgst}{\widetilde{\mathfrak{B}}}
\newcommand{\sfa}{\mathfrak{a}}
\newcommand{\sfas}{\mathfrak{A}}
\newcommand{\sfat}{\widetilde{\mathfrak{a}}}
\newcommand{\sfast}{\widetilde{\mathfrak{A}}}
\newcommand{\bp}{{\mathfrak{p}}}
\newcommand{\bps}{\mathfrak{P}}
\newcommand{\bpt}{{\widetilde{\mathfrak{p}}}}
\newcommand{\bpst}{\widetilde{\mathfrak{P}}}
\newcommand{\sfr}{\mathfrak{m}}
\newcommand{\sfrs}{\mathfrak{M}}
\newcommand{\sfrt}{\widetilde{\mathfrak{m}}}
\newcommand{\sfrst}{\widetilde{\mathfrak{M}}}
\newcommand{\sfo}{A} %{\mathfrak{o}}
\newcommand{\sfls}{g} %{\mathfrak{L}}
\newcommand{\sta}{\mathfrak{s}}
\newcommand{\Vbar}{\overline{\mathcal{V}}}
\newcommand{\Vhat}{\widehat{\mathcal{V}}}
\newcommand{\Phit}{\widetilde{\Phi}}
\newcommand{\phit}{\widetilde{\phi}}
\newcommand{\Obt}{\widetilde{\mathcal{O}}}
\newcommand{\obt}{\widetilde{O}}
\newcommand{\Acal}{\mathcal{A}}
\newcommand{\Dcal}{\mathcal{D}}
\newcommand{\Fcal}{\mathcal{F}}
\newcommand{\Ocal}{\mathcal{O}}
\newcommand{\Pcal}{\mathcal{P}}
\newcommand{\bbB}{\mathbf{B}}  %\Bbb was already defined within some package
\newcommand{\Dbb}{\mathbf{D}}
\newcommand{\Ibb}{\mathbf{I}}
\newcommand{\Kbb}{\mathbf{K}}
\newcommand{\Pbb}{\mathbf{P}}
\newcommand{\Sbb}{\mathbf{S}}
\newcommand{\Vbb}{\mathbf{V}}
\newcommand{\bbBt}{\widetilde{\mathbf{B}}}
\newcommand{\Kbbt}{\widetilde{\mathbf{K}}}
\newcommand{\Vbbt}{\widetilde{\mathbf{V}}}
\newcommand{\Kbbb}{\overline{\mathbf{K}}}
\newcommand{\Vbbb}{\overline{\mathbf{V}}}
\newcommand{\bbBb}{\overline{\mathbf{B}}}
\newcommand{\dbb}{\mathbf{d}}
\newcommand{\ibb}{\mathbf{i}}
\newcommand{\pbb}{\mathbf{p}}
\newcommand{\vbb}{\mathbf{v}}
\def\Ds{D_{\textsc s}}
\def\ds{d_{\textsc s}}
\def\dh{d_{\textsc h}}
\def\dw{d_{\textsc w}}
\def\uv{\textsc{uv}}
\newcommand{\Vv}[1]{V_{v_0,#1}(r)}
\newcommand{\Vr}[1]{V_{#1}(r)}
\renewcommand{\P}[1]{P_{#1}(\tau)}
\newcommand{\X}[1]{\langle X^2 \rangle_{#1}(\tau)}
\newcommand{\Xv}[1]{\langle X^2 \rangle_{v_0,#1}(\tau)}
\newcommand{\ql}{\widehat l}
\newcommand{\qVv}[1]{\widehat {V_{v_0,#1}(r)}}
\newcommand{\qVr}[1]{\widehat {V_{#1}(r)}}
\newcommand{\qX}{\widehat{\langle X^2 \rangle(\tau)}}
\newcommand{\qXv}[1]{\widehat{\langle X^2 \rangle}_{v_0,#1}(\tau)}
\newcommand{\hl}[1]{{\boxplus_\size^#1}}
\newcommand{\hln}[2]{{\boxplus_{#2}^{#1}}}
\newcommand{\hlinf}[1]{{\boxplus_{\infty}^{#1}}}
\newcommand{\hxinf}[1]{\varhexagon_\infty^{#1}}
\newcommand{\olinf}[1]{{\octagon_{\infty}^{#1}}}
\newcommand{\dipole}[1]{{\ominus^{#1}}}
\newcommand{\cdiff}{c_{\text{diff}}}		% diffusion constant
\newcommand{\ttau}{\tilde\tau}
\newcommand{\tphi}{\widetilde\phi}
\renewcommand{\digamma}{\psi}
\newcommand{\qs}{\psi}				% generic quantum state			
\newcommand{\qsc}{a}
\renewcommand{\csc}{\psi_\sms^{J_\eb,K_\eb}}
\newcommand{\cscp}{\psi_{\bg,j'_\eb}^{J'_\eb,K'_\eb}}
\newcommand{\jmin}{{j_{\text{min}}}}
\newcommand{\jmax}{{j_{\text{max}}}}
\newcommand{\rjc}{| j_c,\cm\qket}
\newcommand{\jc}{{j_c,\cm}}
\newcommand{\rj}{|j,\cm\rangle}
\newcommand{\lj}{\langle j,\cm |}
\newcommand{\rjb}{|j,\hl\sd \rangle}
\newcommand{\ljb}{\langle j,\hl\sd |}
\newcommand{\rsup}{| V_0,\jmin,\jmax \rangle}
\newcommand{\ssup}{{V_0,\jmin,\jmax}}
\newcommand{\sumint}{\sum}
\renewcommand\[{\begin{equation}}
\renewcommand\]{\end{equation}}
\newcommand{\ba}{\begin{eqnarray}}
\newcommand{\ea}{\end{eqnarray}}
\newcommand{\N}{\mathbb N}
\newcommand{\Z}{\mathbb Z}
\newcommand{\Q}{\mathbb Q}
\newcommand{\R}{\mathbb R}
\newcommand{\C}{\mathbb C}
\newcommand{\id}{\mathbb I}
\newcommand{\Tr}{\mathrm{Tr}}
\newcommand{\tr}{\mathrm{tr}}
\def\d{\mathrm{d}}
\def\da{\mathrm{d^\star}}
\def\D{\mathcal{D}}
\def\e{\textrm e}
\def\rme{\mathrm e}
\def\i{\mathrm i}
\def\rmi{\mathrm i}
\def\j{\jmath}
\def\sgn{sgn} %\mathrm{sgn}}
\renewcommand{\ge}{\geqslant}
\renewcommand{\le}{\leqslant}
\mathchardef\ordinarycolon\mathcode`\:
\newcommand{\cd}{\cdot}
\newcommand{\ra}{\rightarrow}
\newcommand{\lora}{\longrightarrow}
\newcommand{\us}{\underset}
\newcommand{\os}{\overset}
\newcommand{\ol}{\overline}
\newcommand{\In}{\subset}
\def\ie{{i.e.}~}
\def\eg{{e.g.}~}
\def\cf{{cf.}~}
\newtheoremstyle{mydef}%             % Name
  {}%                                   	 	% Space above
  {}%                                     		% Space below
  {}%\itshape}%                                     	% Body font
  {}%                                     		% Indent amount
  {\bfseries}%                            		% Theorem head font
  {.}%                                    		% Punctuation after theorem head
  { }%                                    		% Space after theorem head, ' ', or \newline
  {\thmname{#1}\thmnumber{ #2}\thmnote{ (#3)}}%         % Theorem head spec (can be left empty, meaning `normal')
\theoremstyle{mydef}
\newtheorem{defin}{Definition}[chapter]
\newtheorem{remark}[defin]{Remark}
\newtheorem{example}[defin]{Example}
\theoremstyle{plain}
\newtheorem{proposition}[defin]{Proposition}
\newtheorem{conjecture}[defin]{Conjecture}
\newtheorem{corollary}[defin]{Corollary}
\newcommand{\COTa}{Calcagni:2013ku}
\newcommand{\COTb}{Calcagni:2014ep}
\newcommand{\COTc}{Calcagni:2015is}
\newcommand{\ORT}{Oriti:2015kv}
\newcommand{\lqgT}{Thiemann:2007wt}
\newcommand{\lqgR}{Rovelli:2004wb,Rovelli:2011tk}%,Rovelli:2014vg}
\newcommand{\sfOP}{Perez:2003wk,Perez:2013uz}
\newcommand{\gftFO}{Freidel:2005jy,Oriti:2012wt}%,Oriti:2006ts}
\newcommand{\cdtAJL}{Ambjorn:2000hp,Ambjorn:2007wl,Ambjorn:2012vc}
\newcommand{\qrcW}{Williams:1992kw,Williams:1997bn,Williams:2006iu,Williams:2007up,Hamber:2009wl}
\newcommand{\asR}{Niedermaier:2006up,Reuter:2012jx}
\newcommand{\hlH}{Horava:2009ho}
\newcommand{\gftcondensate}{Gielen:2013cr,Gielen:2014gv,Gielen:2014ca,Calcagni:2014jt,Gielen:2014vk,Sindoni:2014vs}
\newcommand{\BMS}{Balzer:1987tx}
\newcommand{\KKL}{Kaminski:2010ba}
\newcommand{\KLP}{Kisielowski:2012bo}
\newcommand{\EPRL}{Engle:2007em,Engle:2008ka,Engle:2008fj}
\newcommand{\FK}{Freidel:2008fv}
\newcommand{\BO}{Baratin:2012br}
\newcommand{\lostGS}{Gurau:2010iu,Smerlak:2011ea}
\newcommand{\gftrenorm}{Freidel:2009ek,BenGeloun:2013fw,BenGeloun:2013dl,BenGeloun:2013ek,BenGeloun:2013uf,Samary:2014bs}
\newcommand{\COR}{Carrozza:2014ee,Carrozza:2014bh,Carrozza:2014tf}
\newcommand{\largeN}{Bonzom:2011cs,Gurau:2012ek,Gurau:2012hl,Bonzom:2012bg,Baratin:2014bea}
\newcommand{\double}{Gurau:2011sk}
\newcommand{\cdtfractal}{Ambjorn:2005fj,Ambjorn:2005fh}
\newcommand{\asfractal}{Lauscher:2005kn}
\newcommand{\hlfractal}{Horava:2009ho}
\newcommand{\ncfractal}{Benedetti:2009fo,Alesci:2012jl}
\newcommand{\snfractal}{Modesto:2009bc}
\begin{document}

\title{Discrete quantum geometries and their effective dimension}

\publishers{
DISSERTATION\\ 

\

zur Erlangung des akademischen Grades\\

DOCTOR RERUM NATURALIUM\\ 

(Dr. rer. nat.)\\

im Fach Physik\\

\

eingereicht an der\\
Mathematisch-Naturwissenschaftlichen Fakult\"at\\
der Humboldt-Universit\"at zu Berlin\\

\

von\\
JOHANNES TH\"URIGEN

\vspace{1cm}

Pr\"asident der Humboldt-Universit\"at zu Berlin:\\
Prof. Dr. Jan-Hendrik Olbertz \\
Dekan der Mathematisch-Naturwissenschaftlichen Fakult\"at:\\
Prof. Dr. Elmar Kulke

\

Betreuer: 

\

\begin{tabular}{lll}
\hspace{2.1cm}	&  Dr. Daniele Oriti \hspace{1cm}	& Albert-Einstein-Institut Potsdam \\
			&  Dr. Gianluca Calcagni 			& Instituto de Estructura de la Materia\\  
			&							& (CSIC) Madrid\\
\end{tabular}

\

Gutachter: 

\

\begin{tabular}{lll}
1. & Prof. Dr. Hermann Nicolai 	& Albert-Einstein-Institut Potsdam \\
2. & Prof. Dr. Dirk Kreimer		& Humboldt-Universit\"at zu  Berlin \\
3. & Prof. Dr. Jerzy Lewandowski & Uniwersytet Warszawski \\
\end{tabular}

\vspace{1cm}

Tag der m\"undlichen Pr\"ufung: 2. Juli 2015
}
\date{}

\dedication{\emph{F\"ur Annika \& Nikolas}}

\maketitle

%===========================================================================

\clearpage

\frontmatter

\subsection*{Abstract}

%Background
The challenge of coherently combining general relativity and quantum field theory into a quantum theory of gravity is one of the main outstanding tasks in theoretical physics.
In several related approaches towards this goal, such as group field theory, spin-foam models, loop quantum gravity and simplicial quantum gravity, quantum states and histories of the geometric degrees of freedom turn out to be based on discrete space and spacetime.
%Teaser
The most pressing issue is then how the smooth classical geometries of general relativity arise from such discrete quantum geometries in some semiclassical and continuum limit.
This has to be expressed in terms of suitable geometric observables which should demonstrate that the desired features of smooth spacetime are recovered.

%Method
In this thesis I tackle the question of suitable observables focusing on the effective dimension of discrete quantum geometries, more specifically, the spectral, Hausdorff and walk dimension. These are also the relevant indicators of a possible fractal structure.
For this purpose I give an extensive and exhaustive, purely combinatorial description of the discrete structures which these geometries have support on.
%Side topic
As a side topic, this allows to present an extension of group field theory to cover the combinatorially larger kinematical state space of loop quantum gravity.
This can be realized with a particularly effective construction using the tensor model technique of a dual-weighting mechanism.

Then, I introduce a discrete calculus for arbitrary ($p$-form) fields on such fundamentally discrete geometries with a particular focus on the Laplacian.
This permits to define the effective-dimension observables for quantum geometries.
Preliminary, I study classical effects of topology, geometry and discreteness in a systematic way. %in these observables.
This sets the stage to check whether quantum geometries reproduce the effective dimensions of classical geometries in an appropriate semiclassical regime and to identify quantum effects in a quantum regime.
I analyse the effective dimensions for various classes of quantum geometries, in particular (a) I develop and apply numerical techniques to be able to compute the spectral dimension of combinatorially large geometries in the precise setting of $(2+1)$-dimensional loop quantum gravity;  (b) I use analytic solutions for a particular model to analyse quantum geometries of arbitrary spatial dimension $\sd$.

%Result
As a general result I find that the spectral dimension is more sensitive to the underlying combinatorial structure than to the details of the additional geometric data thereon. 
Semiclassical (coherent) states (a) on a given complex turn out to approximate the classical geometries they are peaking on rather well and there are no indications for stronger quantum effects.
On the other hand (b), I do find such effects for states which are superposition over a large number of complexes: there is a flow of the spectral dimension from the topological dimension $\sd$ on low energy (IR) scales to a real number $0<\alpha<\sd$ on high energy (UV) scales for power function superposition coefficients which is related to their exponent.
The Hausdorff and walk dimension do not exhibit any particular quantum effect.
% Discussion
In the special case of $\alpha=1$ these results allow to understand the quantum geometry  as effectively fractal.
Moreover, in this case the spectral dimension indicates a flow of the spacetime dimension $\std$ to a UV dimension $\std^\uv=2$, in accordance with the findings in other approaches.
These results apply in particular to special superpositions of spin-network states providing more solid 
indications for a dimensional flow in this context.
Quantum-geometry properties like a fractal structure or a dimensional flow may have phenomenological consequences, for example in the early universe.

%===========================================================================

\subsection*{Zusammenfassung}

Allgemeine Relativit\"atstheorie und Quantenfeldtheorie in koh\"arenter Weise zu einer Quantentheorie der Gravitation zu verbinden, ist eine der gr\"o\ss ten offenen Aufgaben in der theoretischen Physik.
In mehreren miteinander in Beziehung stehenden Ans\"atzen, die dieses Ziel verfolgen, n\"amlich Gruppenfeldtheorie, Spinschaum-Modellen, Schleifenquantengravitation und simplizialer Quantengravitation, stellt sich heraus, dass Quantenzust\"ande und Quantenentwicklungen der geometrischen Freiheitsgrade auf einem dis\-kre\-ten Raum beziehungsweise einer diskre\-ten Raumzeit basieren.
Die dringendste Frage ist dann, wie die glatten klassischen Geometrien der Allgemeinen Relativit\"atstheorie aus solch diskreten Quantengeometrien im semiklassischen und Kontinuums-Limes hervorgehen.
Dies muss durch geeignete geometrische Beobachtungsgr\"o\ss en beschrieben werden, welche zeigen sollten, dass die gew\"unschten Merkmale einer glatten Raumzeit wiedergewonnen werden.

In der vorliegenden Dissertation nehme ich die Frage geeigneter Beobachtungsgr\"o\ss en mit einem Fokus auf die effektive Dimension diskreter Quantengeometrien, genauer gesagt, spektrale, Hausdorff- und (Random-)Walk-Dimension,  in Angriff.
Dies sind auch die relevanten Indikatoren f\"ur eine m\"ogliche fraktale Geometrie.
Zu diesem Zweck gebe ich eine ausf\"uhrliche und ersch\"opfende, rein kombinatorische Beschreibung der diskreten Strukturen, auf denen solche Geometrien basieren. 
Als ein Nebenthema erlaubt dies die Darlegung einer Erweiterung der Gruppenfeldtheorie, so dass diese den kombinatorisch gr\"o\ss eren kinematischen Zustandsraum der Schleifenquantengravitation abdeckt,
was sich mit einer besonders wirkungsvollen Konstruktion durch die Tensor-Modell-Methode eines Mechanismus dualer Gewichtung realisieren  l\"asst.

Daraufhin f\"uhre ich einen diskreten Differentialrechnungskalk\"ul f\"ur beliebige ($\p$-Form-) Felder auf solch fundamental diskreten Geometrien mit einem speziellen Augenmerk auf dem Laplace-Operator ein. Dadurch wird die Definition der Observablen der effektiven Dimensionen f\"ur Quantengeometrien m\"oglich.
Als Voruntersuchung betrachte ich sys\-te\-ma\-tisch klassische Effekte von Topologie, Geometrie und Diskretheit.
Dies ist die Voraussetzung f\"ur die \"Uberpr\"ufung, ob Quantengeometrien die effektiven Dimensionen klas\-sischer Geo\-metrien in einem geeigneten semiklassischen Bereich reproduzieren, und, um Quanteneffekte in einem Quantenregime zu identifizieren.
Ich analysiere die effektiven Dimensionen verschiedener Klassen von Quantengeometrien. 
Insbesondere ent\-wickele ich (a) numerische Techniken, um die spektrale Dimension von kombinatorisch gro\ss en Geo\-metrien in der pr\"azisen Situation $(2+1)$-dimensionaler Schleifenquantengravitation berechnen zu k\"onnen, und wende sie an; und ich verwende (b) ana\-ly\-tische L\"osungen eines speziellen Modells, um Quantengeometrien in beliebiger r\"aumlicher Dimen\-sion $\sd$ zu untersuchen.

Als ein allgemeines Resultat finde ich heraus, dass die spektrale Dimension st\"arker von der zugrunde liegenden kombinatorischen Struktur als von den Details der zus\"atzlichen geometrischen Daten darauf abh\"angt. 
Es stellt sich heraus, dass (a)  semiklassische (koh\"arente) Zust\"ande auf einem gegebenen Komplex die entsprechenden klassischen Geometrien ziemlich genau approximieren, und es gibt keine Anzeichen f\"ur st\"arkere Quanteneffekte. 
Andererseits entdecke ich solche Effekte f\"ur (b) Zust\"ande, die aus \"Uberlagerungen einer gro\ss en Anzahl von Komplexen bestehen: Ich finde einen Fluss der spektralen Dimension von der topologischen Dimension $\sd$ bei kleinen Energieskalen (IR) hin zu einem reellen Wert $0<\alpha<\sd$ bei hohen Energien (UV) f\"ur Potenzfunktions-\"Uberlagerungskoeffizienten, der von deren Exponent abh\"angt.
Hausdorff- und Walk-Dimension zeigen keine besonderen Quanteneffekte.
Im Spezialfall $\alpha=1$ erlauben diese Resultate, die Quantengeometrie als effektiv fraktal aufzufassen.
Des Weiteren deutet die spektrale Dimension in diesem Fall auf einen Fluss der Raumzeitdimension $\std$ zu einer UV-Dimension $\std^\uv=2$ hin, im Einklang mit Ergebnissen in anderen Ans\"atzen.
Die genannten Resultate lassen sich insbesondere auf spezielle \"Uberlagerungen von Spin-Netzwerk-Zust\"
anden \"ubertragen, womit sie solidere Hinweise auf einen Dimensionsfluss in diesem Kontext darstellen.
Quantengeometrische Eigenschaften von der Art einer fraktalen Struktur oder eines Dimensionsflusses k\"onnten ph\"anomenologische Auswirkungen haben, beispielsweise im fr\"uhen Universum.

\tableofcontents

%===========================================================================
%===========================================================================

\newpage

\mainmatter

\addchap[tocentry={}]{Introduction}
\addcontentsline{toc}{chapter}{\protect\numberline{}Introduction}

One of the biggest open challenges in theoretical physics is to bring together the two fundamental theories describing phenomena on the macroscopic scales of large masses, % usually occuring on large  length scales, 
that is general relativity (GR), and on microscopic scales governed by high energies, that is quantum (field) theory. %, into a quantum theory of gravity. 
By now several approaches towards such a quantum theory of gravity have developed into sophisticated active research programs \cite{Oriti:2007uc}.
%On the other hand, there are no experiments or observations in need for a quantum gravity explanation.

Indications for a breakdown of theory at very small length scales are present both in general relativity and in the relativistic quantum field theories describing elementary particles and their interactions: general relativity is challenged by the generic presence of spacetime singularities in black hole and cosmological solutions of the gravitational field equations,
while divergences at large frequencies and momenta render relevant quantum field theories mathematically not well defined.
As a consequence, almost all approaches to quantum gravity agree that at very small length scale %the Planck scale 
continuum should effectively be replaced in favour for some kind of \emph{discretum} \cite{Nicolai:2014hy}.
There are two main strategies in terms of which such a discretum may arise: one may either establish a quantum theory of gravity based on a generalization of point particles to higher dimensional objects, or based on discrete geometries of space and spacetime.
This thesis focuses on the latter alternative.
%? Could even be that the former entails the latter (e.g. in DEC).

To sketch the challenge of quantum gravity a bit more explicitly, one expects such a theory to provide expectation values of observables of the form (in a path integral description)
\[\label{gr-path-integral}
\qbra O_{\sm}[g,\Psi] \qket
 = \int_{\us{\br \stm = \sm}{\stm}} \D g \D\Psi\; O_{\sm}[g,\Psi] \;\e^{\frac{\rmi}{\hbar} (S_{\gr}[g] + S_{\textrm{matter}}[g,\Psi] )}\:
\]
%That is the path integral expectation value of a
where the observable $O_{\sm}[g,\Psi]=O[g|_{\sm},\Psi|_{\sm}]$ is a functional of the gravitational metric field $g$ and all kinds of matter fields $\Psi$ %(such as elementary particles and interaction gauge bosons, among potential further fields) 
on a spatial slice $\sm$ of spacetime. 
Thereby, according to observation, space has $\sd =  3$ dimensions and spacetime $\std = \sd +1 = 4$. 
The integral is a sum over all field configurations $g,\Psi$ on a spacetime manifold $\stm$ with boundary $\bs\stm=\sm$, with formal measures $\D g$ and $\D\Psi$ and weighted by the exponential of the action which divides into a pure gravitational part $S_\gr$ and a matter part $S_{\textrm{matter}}$.

Since GR places no restriction to spacetime other than being some pseudo-Riemannian $\std$-manifold, one might consider not only the gravitational field $g$ on $\stm$ but also $\stm$ itself as variable.
In that case, according to the idea of the path integral as a sum over all possible intermediate configurations \cite{Feynman:1942va,Feynman:1948iw}, the path integral would contain an additional integral over a class of manifolds, formally
\[
\qbra O_{\sm}[g,\Psi] \qket
% = \sum_{\us{\br \stm = \sm}{M}} \D M \int_{M} \D g \D\Psi\; O_{\sm}[g,\Psi] \;\e^{\frac{\rmi}{\hbar} (S_{\gr}[g] + S_{\textrm{matter}}[g,\Psi] )}
  = {\us{\br \stm = \sm}{\int}} \D \stm \us{\stm}{\int} \D g \D\Psi\; O_{\sm}[g,\Psi] \;\e^{\frac{\rmi}{\hbar} (S_{\gr}[g] + S_{\textrm{matter}}[g,\Psi] )}\;.
\]

There are several main directions how to define the formal observable \eqref{gr-path-integral}.
One is to treat the path integral perturbatively, as common in high energy physics. % theories of elementary particles. 
To this end, a partition of the gravitational action $S_\gr$ into kinetic and interaction part is only possible upon splitting the metric $g = \bar g + h$ where a background metric $\bar g$ is fixed and $h$ are small disturbances around it treated as the degrees of freedom, their quantum excitations called gravitons \cite{Dewitt:1967cs}.
Thus, eventually the perturbative treatment is quite different to the usual one in particle physics.
There, perturbations result in a restriction in the dynamics but not in the degrees of freedom.
Nevertheless, in this way pure gravity (neglecting $S_{\textrm{matter}}$) %[g,\Psi]$ in \eqref{gr-path-integral}) 
turns out to be non-renormalizable since at every finite order of perturbation counterterms of higher order are needed to render divergencies finite, leading to an infinite tower of counterterms \cite{Goroff:1985kg}
(though there are ideas how the structure of its Schwinger-Dyson equations could entail  renormalizability even so \cite{Kreimer:2008jm}).
Still it could be that inclusion of appropriate matter would render the theory finite.
Particular strong hopes in this direction are associated with maximal, $N=8$ supergravity which have been revived recently with the development of new efficient techniques for the calculation of amplitudes in maximally supersymmetric theories \cite{Bern:2007cx,Bern:2011kk}.

Another direction picks up the idea that gravitational and standard model interactions, extended and unified in an appropriate way, might solve the UV divergence issue.
String theory \cite{Green:1987tq,Polchinski:1998tm,Blumenhagen:2012wr} is the candidate which seems to accomplish this, replacing elementary point particles by extended objects.  Supersymmetric versions turn out to be anomaly free in $\std=10$ dimensions of target space which is thus usually understood as a prediction of string theory (though on the grounds of various dualities it is conjectured that these theories describe only certain regimes of a theory in $\std=11$ dimensions).
The spectrum of the string contains an infinite tower of string excitations, in particular a spin-2 particle which can be understood as the graviton since the corresponding part in low-energy effective actions are of the type of the Einstein-Hilbert action (with higher curvature corrections).

Yet another direction is characterized by questioning the setup of quantum gravity as a perturbative theory of metric disturbances around a fixed background metric, in this sense emphasizing the property of background independence in general relativity.
One example, still in the usual smooth metric setting, is the  idea of asymptotic safe quantum gravity where a nontrivial renormalization group fixed point may render the theory finite in a nonperturbative way \cite{\asR}.
Complementary thereto are approaches which are based on geometric degrees of freedom different but classically equivalent to the full metric $g_{\mu\nu}$ and which have support on a certain kind of discrete structures.
For this reason, their quantum states of geometry can be referred to as discrete quantum geometries, that is certain kind of combinatorial or cellular complexes with some kind of geometric data attached to their cells.
Most prominent examples of such approaches are loop quantum gravity (LQG) \cite{\lqgT,\lqgR}, spin-foam (SF) models \cite{Baez:2000kp,\sfOP}, group field theory (GFT) \cite{\gftFO,Krajewski:2012wm} and simplicial quantum gravity, either as a quantization of Regge calculus (QRC) \cite{\qrcW} or in terms of causal dynamical triangulations (CDT) \cite{\cdtAJL}.
%These theories provide the setting for this thesis.

These discrete quantum gravity approaches turn out to be related in many ways, not only with respect to their conceptual setup but also concerning the proposed dynamics.
This comes to a certain extent as a surprise since they originate from different ideas and they are distinct theory proposals from a systematic perspective.
Nevertheless, by now most of them%
\footnote{CDT plays to some extent a different role since therein discrete geometries are equilateral (up to the distinction of time-like and space-like edges) and thus effectively purely combinatorial.}
have evolved into a tightly connected web of theories, sometimes even regarded as a single theory which is then called ``loop quantum gravity" as well, though evidently in a generalized sense.
At times, this development is considered as a convergence of these theories. 
Apart from the pragmatic benefits of enabling to transfer insights and techniques from one approach into the other, the question in what sense one can speak of a genuine convergence process is interesting since theory convergence increases the degree of (epistemological) justification of theories. 
I will use the opportunity of introduction into the theoretical context of this thesis to discuss their relations and address this question in a systematic way. 
This will show that strict relations between the approaches occur mostly on the level of specific models.
On this basis I propose to understand the evolution into a web of theories rather as a crystallization process (a notion recently discussed in the philosophy of science \cite{Moulines:2010hi,Moulines:2013jn,Moulines:2013fy}) than theory convergence in a stricter sense.

% problem of GFT for all LQG
A condition for the possibility of dynamical relations between the approaches concerns the compatibility of their discrete structures.
For example, covariant approaches such as SF models and GFT are originally defined most efficiently in the setting of simplicial complexes involving (dual) boundary graphs with vertices of fixed valency, while LQG, due to its formulation as a canonical quantization of continuum classical gravity, entails kinematical states technically defined on arbitrary embedded closed graphs.
While it has been shown recently that SF models as a proposal for LQG dynamics can be extended to combinatorially cover the whole kinematical LQG state space \cite{\KKL,\KLP}, it was not clear how such an extension could be possible in GFT.
Since GFT is a candidate both for a completion of SF models and a second quantized reformulation of LQG \cite{Oriti:2014wf}, this is an important question.

% methods & results
Here I will present how a generalization of GFT to combinatorially cover the kinematical LQG space can be accomplished in different ways.
First, there are neither mathematical nor conceptual obstructions to extend GFT in exactly the same way as has been done for SF models \cite{\KKL,\KLP} introducing multiple quantum fields to account for the combinatorial variety. However, such a theory is not expected to be of practical use since the large (and possibly infinite) number of fields and interactions would be hard to control (the same being true already in the SF case).
The particular advantage of GFT comes to the forefront in a second proposal: using field theoretical techniques one can reproduce the same state space and amplitudes with a standard simplicial GFT equipped with a dual weighting which is much more manageable and controllable.

\

% Effective dimensions.
% Background
%The identification of at the same time meaningful and tractable  geometric observables is one of the main challenges in (quantum) gravitational physics in general, and the nonperturbative, background-independent approaches to quantum gravity considered in this thesis in particular,
%where the fundamental degrees of freedom characterizing quantum states and histories of the system are nongeometric in the standard sense and characterized by intrinsic discreteness. 
%Examples are loop quantum gravity (LQG) \cite{\lqgT,\lqgR}, spin-foam models \cite{\sfOP} and group field theory (GFT) \cite{\gftFO}, strictly related to LQG \cite{Oriti:2014wf,Oriti:2014td}. 
%In discrete quantum gravity as worked out in the last chapter \ref{ch:dqg}, 
Apart from such questions of the precise definition of kinematics and dynamics, in all the discrete quantum gravity approaches the major challenge is to find a relation to the continuum spacetime geometries of classical general relativity.
That is, one has to show that the latter emerge from the fundamental discrete quantum geometries of the theory in some approximation. This emergence has to be expressed in terms of suitable geometry observables, both classical and quantum, that should indicate that the desired features of smooth spacetimes are recovered.
The identification of at the same time meaningful and tractable  geometric observables is one of the main tasks in this endeavour.
In fact, it is a precondition for extracting physics from such quantum gravity proposals. %Both classical and quantum observables are an essential ingredient to understand this relation.

Effective-dimension observables provide important information about the geometric properties of quantum states of space and spacetime histories in quantum gravity.
In particular, the spectral dimension $\ds$, which depends on the spectral properties of a geo\-metry through its definition as the scaling of the heat-kernel trace, has attracted special attention due to the observation of a dimensional flow (\ie the change of spacetime dimension across a range of scales \cite{Hooft:1993vl,Carlip:2009cy,Calcagni:2009fh}) in various approaches including the mentioned causal dynamical triangulations and the functional renormalization-group approach of asymptotic safety, as well as Ho\v{r}ava--Lifshitz gravity \cite{\hlH} among others. 

In all these approaches, the spectral dimension of spacetime exhibits a scale dependence itself, flowing from the topological spacetime dimension $\std$ in the infrared (IR) to $\Ds\simeq 2$ in the ultraviolet (UV) \cite{\asfractal,\hlfractal,Calcagni:2013jx,\cdtfractal,Benedetti:2009bi}.%
\footnote{More recent calculations in  CDT hint at $\Ds\simeq 3/2$ though \cite{Coumbe:2015bq}.} 
For smooth geometries, modified dispersion relations provide an obvious reason for this behaviour\cite{\asfractal,\hlfractal,\ncfractal,\snfractal}.
In contrast, in the case of discrete calculations as in the CDT approach \cite{\cdtfractal,Benedetti:2009bi, Coumbe:2015bq} the dimensional flow remains to be better understood. 
Causal dynamical triangulations provide a definition of the continuum path integral for quantum gravity via a regularization in terms of a sum over simplicial complexes %(thus a form of discrete geometries) 
weighted by the Regge action. While it is more difficult to identify the underlying reason for the dimensional flow in this context, such a flow is obtained in a very direct manner from the evaluation of the heat trace as a quantum geometric observable inside the CDT partition function. %as in CDT \cjt{technically discrete, but continuum geometries}  {\bf [this claim clashes with the one I inserted in the introduction. Again, CDT is not a theory of discrete geometries but a regularization of smooth ones]} 

Here I choose a very similar direct approach %in the context that it is closer to the formalism of loop quantum gravity. 
for the evaluation of effective-dimension observables of quantum states of geometry as they appear in LQG, SF models and GFT.
Such quantum states can be expanded as superpositions of spin-network states, which are graphs labelled by algebraic data from the representation theory of $SU(2)$. 
Accordingly, there is an interplay between two types of objects and their corresponding discreteness: a combinatorial discreteness due to the underlying graph, as well as an algebraic discreteness due to the fact that the labels are half-integers corresponding to $SU(2)$ irreducible representations. Quantum effects in the evaluation of observables are thus to be expected from both these sources, in particular from superpositions of algebraic data as well as of graphs. %and it is an important limitation to focus only on one of them, as in preliminary studies of dimensional flow in LQG \cite{\snfractal}. 

Perhaps surprisingly, superpositions of quantum states supported on different graphs and complexes have not been considered much in the LQG literature so far. Instead, most analyses have involved only states based on one and the same complex. A first example of states based on superpositions of combinatorial structures are the simple condensate states with a homogeneous cosmology interpretation introduced recently in the GFT context \cite{\gftcondensate} and their generalization to states based on connected complexes in \cite{Oriti:2015tl}.

%Method
Technically, a particular challenge stems from the fact that the dimension observables are only meaningful on discrete quantum geometries of sufficiently large combinatorial size.
This challenge can be met in different ways.
On the one hand, I use numerics to directly evaluate the spectral dimension in the setup of semiclassical kinematical LQG states in $\std=2+1$ dimensions were such computations are still feasible.
On the other hand, I set up a model of a special class of superpositions based on a number of reasonable assumptions; this allows to use analytic solutions of the effective dimensions on the lattice for calculations on superpositions over a very large  number  (up to $10^6$) of complexes.

%result
A fairly general results of these calculations is that the effective-dimension observables are considerably more sensitive to the combinatorial structures underlying the states than to the algebraic data associated with them. 
Furthermore, semiclassical states such as coherent LQG states on a single graph approximate well 
the discrete geometries they are peaking on.
More interesting quantum effects are found for large superpositions.
In particular, I have found strong evidence for a dimensional flow for superpositions with power function coefficients, where a value of spectral dimension in the UV below the topological dimension $\sd$ is found to depend on the exponent in the superposition coefficients.
Similarly, based on findings of analytic solutions for the walk dimension and Hausdorff dimension of lattice geometries, I do not
%For these observables, however, while \cjt{we} recover the topological dimension at large scales, 
find any special properties in these observables for superpositions as compared to states defined on fixed complexes.

\

The outline of this thesis is the following. 
I will start in chapter \ref{ch:convergence} with an introduction to the discrete quantum gravity proposals guided by the question of convergence and their intertheoretical relations.
To this end, I will first present the five proposals of quantum Regge calculus, causal dynamical triangulations, loop quantum gravity, spin-foam models and group field theory in a brief but systematic manner as distinct theories. Then I will discuss in more detail their differences, conceptual similarities and dynamical relations. This allows finally to argue that their example is an instance of a crystallization process rather than theory convergence.

The purpose of chapter \ref{ch:discrete-spacetime} is to present an extensive and exhaustive treatment of the discrete space and spacetime structures entering in discrete quantum gravity. 
I will show that the combinatorial and differential structure of the relevant classes of complexes can be defined without any reference to topological or analytic structure in the first place, thus proposing combinatorial complexes as the most general common objects in all the approaches.
In particular, I will review in detail the 2-complex structure underlying the amplitudes in SF models and GFT, coined spin-foam molecules because of their structure as bondings of atomic building blocks.
Furthermore, I will introduce a differential calculus on combinatorial complexes and discuss the properties of the resulting Laplacian, necessary for the definition of effective dimensions as quantum observables.

Chapter \ref{ch:dqg} is devoted to the extension of GFT to cover discrete quantum geometries of the most general combinatorial type in its state space. 
To this end I will first give a more detailed introduction into SF models and GFT as proposals for quantum gravity with a particular focus on the way the 2-complex structure of spin-foam molecules arises.
I will then show how the most general spin-foam molecules give rise to GFT with multiple fields in a straightforward fashion. 
Finally I will introduce the dual-weighting mechanism  and demonstrate how it can be implemented on a standard simplicial GFT to reproduce the same amplitudes for general spin-foam molecules as the multi-field GFT. Furthermore this will open up the possibility to define new models which have arguably a better geometric interpretation.

Eventually, in chapter \ref{ch:dimensions}, I will come to the main topic of the thesis of effective dimensions. I will show how their definitions on smooth geometries naturally extend to discrete geometries based on the discrete calculus, and how one can define them as observables for quantum states of geometry.
Preliminary to the calculation of quantum states, I will systematically investigate effects of topology, geometry and discreteness of classical geometries in the effective-dimension quantities.
On these grounds, the results for the spectral dimension of numerical calculations of semiclassical, coherent states can be interpreted to agree well with their classical counterparts and exhibit only minor quantum corrections.
Significant quantum effects consisting of a dimensional flow are then presented for large superposition states. In a special case, such a discrete quantum geometry can be understood as fractal.

\ 

\begin{center}
***
\end{center}

\

This thesis is based on the publications \cite{\COTa,\COTb,\COTc,\ORT}. 
However, the analysis of relations between discrete quantum gravity approaches, chapter \ref{ch:convergence}, as well as the various concepts of combinatorial complexes in \sec{combinatorial-complexes} and some background material in \sec{dimension-definition} and  \sec{classical} are new.

%===========================================================================
%===========================================================================

\chapter{Convergence in quantum gravity?}\label{ch:convergence}

It is frequently stated in introductions to papers on results in quantum gravity approaches such as loop quantum gravity, spin-foam models, simplicial quantum gravity or group field theories that there is convergence between these theories.
Such a relation between research programs is certainly interesting for practical purposes allowing to transfer results and methods from one approach to the other.
But there is also another interesting, more philosophical aspect: if there is indeed convergence in one or the other direction, this might raise the epistemological value of the emerging theory. 
This is of particular interest in the case of quantum gravity where the standard criterion of predictivity for theory assessment does not apply due to the lack of observational accessibility of the quantum gravity regime.
Other criteria such as uniformity and coherence become then central for the evaluation of a theory's epistemological status, \ie the question on which grounds one is justified in one's conviction and belief in a theory.

In this chapter I want to use the opportunity to provide the research context for this thesis to address the question in what sense one can talk about convergence in the mentioned quantum gravity proposals.
That is, this chapter is meant to serve a twofold purpose. 
First of all, I will introduce the theoretical background for the presentation of results 
%on the combinatorial and differential structure of discrete geometries, the generalization of group field theory and the effective dimension of discrete quantum geometries 
in later chapters introducing the relevant quantum gravity theories.
On the other hand, I will do so from a particular perspective with the question of theory convergence as a recurrent theme.

As a consequence, the presentation of theories will be slightly different from what one is used to. 
To provide the basis for an analysis of intertheoretical relations between proposals,
I will introduce them in a particular manner as common in the philosophy of science. 
This might seem a little artificial both in form and in choice of theory definitions. 
But it sets the stage to address then the questions about the relation between these theories:
what are important relations, what kind of relations are these, and finally, do they give rise to an interpretation of theory convergence?
Based on the result that the various equivalence and embedding relations are mostly between particular models and not the general theories as a whole, the answer will be that the notion of a crystallization process is more appropriate to account for the development of these quantum gravity theories.

Accordingly, this chapter is naturally divided into two parts:
in \sec{qg-theories} I will explain the conceptual framework to then present five distinct theories which are quantum Regge calculus, causal dynamical triangulations, loop quantum gravity, the theory of spin-foam models and group field theories.
This enables then, in \sec{relations}, the discussion of their relations, both on a conceptual, kinematical as well as dynamical level, and how these can be understood from the perspective of theory convergence.
Finally I will conclude with a brief remark on the epistemological relevance of this investigation.
%Is there any epistemological impact/gain from this: Does the connectedness of approaches already provide a higher degree of epistemic justification?

%===========================================================================

\section{Proposals for a quantum theory of gravity \label{sec:qg-theories}}
% wording?: theories, candidates, proposals, approaches, frameworks, programs 

As a basis for an analysis of the relations between quantum gravity approaches I will present the relevant theory proposals in a systematic manner.
That is, I will explicate the theories in a semiformal way motivated by a formal scheme of logical reconstructions according to a model-theoretic account of theories, called \emph{structuralism} in philosophy of science \cite{\BMS,Sneed:1971tc,Moulines:2002ww}.
An important aspect thereby is to consider the approaches as scientific, \ie empirical theories in contrast to mere formal, mathematical objects.
For such an investigation it is not necessary that the proposals already provide complete theories of quantum gravity.

By now there is rather a continuum of theories under investigation and a particular challenge here is to differentiate between distinct approaches.
%Especially concerning the physical interpretation of formalisms which is a crucial part of a scientific theory), there is almost continuous change between research groups, individuals and through time.}
This can be regarded as an attempt to distinguish a certain historical version of each theory. 
In the end, theory convergence and theory crystallization are processes demanding a diachronic view on the theories.
However, a detailed historical study is beyond the scope of this work
which is why I will focus on a systematic presentation. %\ca{ with a historical motivation in the back which, however, cannot be discussed further (beyond the mere references given).}

Let me  briefly sketch the most important aspects of the structure of a scientific theory. %\ca{according to structuralism}
These are components appearing in every more involved assessment of empirical theories in the philosophy of science; % \cite{XXX}. 
however, they are explicated in a particularly obvious and systematic via using the structuralist conception. % (structuralism for short in the following). %Some more details are given in the Appendix \ref{structuralism}.

Structuralism is a set theoretic account to theory reconstruction, based on structures in the sense of model theory \cite{\BMS}.
This means that a theory is not understood as a class of statements, but as structures defined by predicates.
Therefore, by definition it is independent of language.
The crucial point here is that these structures should be understood as actual physical systems described by the theory.
That is, such a structure usually consists of physical objects together with their relations and associated quantities, possibly even depending on a specific gauge.
I will base my analysis on this formalism in a slightly informal, but more intuitive denomination in the following. 

According to a structuralist reconstruction, more precisely, a scientific theory T consists of four essential parts which are classes of structures (\fig{structuralism}):
\begin{itemize}[leftmargin=1.5cm]
\item [$(\mp^\textsc{t})$] The conceptual framework: elements of $\mp^\textsc{t}$ consist of all objects of the theory together with their typification and characterization.
Loosely speaking, from a physicist's point of view, they capture the kinematics of a theory, hence the intuitive name here (in the philosophical literature they are referred to as potential models).
\item [$(\ma^\textsc{t})$] The actual physical systems: 
the subclass $\ma^\textsc{t}\In\mp^\textsc{t}$ consists of those structures furthermore obeying the physical axioms and laws. 
While $\mp^\textsc{t}$ is the set of potential physical systems in the sense that they have the right conceptual structure, $\ma^\textsc{t}\In\mp^\textsc{t}$ is the subset of those systems which follow the dynamics of the theory T.
%\cjt{For illustration, in a phase space description an element in $\mp^\textsc{t}$ corresponds to an arbitrary set of points in phase space while the elements of $\ma^\textsc{t}\In\mp^\textsc{t}$ correspond to trajectories in phase space determined by the equations of motion.}

\item [$(\mpp^\textsc{t})$] The observable part: a restriction $\res:\mp\lora\mpp$ can be defined in terms of a distinction of objects in $\mp^\textsc{t}$ into theoretical and non-theoretical ones, ``projecting out" the theoretical components in the structures $\mp$.
For empirical theories this is a crucial part, though the distinction is very subtle and turns out to be possible only with reference to the theory T itself: roughly speaking, any quantity whose measurement presupposes the laws of the theory is theoretical with respect to the theory T \cite{Balzer:1980,\BMS}. % (more details in Appendix\ref{structuralism}).
\item [$(\ia^\textsc{t})$] The  specification of a class of physical systems $\ia$ which the theory is intended to be applied to, in this way providing the physical interpretation of its concepts.
\end{itemize}

Structuralism provides an even finer resolution of the inner building of theories.
For example, there is furthermore a concept of \emph{constraints} relating structures of a theory, \emph{links} relating structures of different theories and \emph{admissible blurs} accounting for the accuracy of theoretical descriptions \cite{\BMS}.
%In particular there are various types of laws capturing the non-uniformity of this concept.
Moreover, there is a finer notion of an empirical theory as a tree-like net of so-called \emph{theory elements} of different degree of specialization \cite{\BMS}.
In the following, I will explicate in detail only the roots of such trees, the \emph{basic} theory elements, sometimes also called theory cores.
More specialized theory-elements are what one usually calls a ``model'' in physics and I will use this denomination when discussing specialized theory elements in an informal way.
For the scope of this thesis, a semiformal explication of the basic four parts $\mp$, $\ma$, $\mpp$ and $\ia$ is enough of formalities and, when relevant, I will discuss the other aspects in an intuitive way.

\begin{figure}%[htb]
\centering
\begin{tikzpicture}
\filldraw [venn1]	(0,0) ellipse (1.5 and .25); 
\draw 		(0,0) ellipse (3 and .75); 
\filldraw [venn1]	(0,3) ellipse (2 and .5); 
\draw 		(0,3) ellipse (5 and 1); 
\path (2.5,3.2) node{ $\ma{}$}; 
\filldraw[venn2] (-1,-.2) ellipse (.3 and .5); 
\draw 		(0,3) ellipse (5 and 1);   
\draw 		(-1.5,0) -- (-2,3); 
\draw		(1.5,0) -- (2,3);
\draw [->]		(6,3) -- node[auto] {$\res$} (6,0);
\path (-3.5,3) node{ $\mp{}$}; 
\path (2,0) node{ $\mpp{}$}; 
\path (-2,-0.1) node{ $\ia{}$}; 
\end{tikzpicture}
\caption{A sketch of the basic components of a theory according to the structuralist view on scientific theories: on the upper level, a subclass $\ma\In\mp$ of physical systems is specified by the axioms and laws of a theory; restriction $\res$ to the non-theoretical, ``observable" domain allows to compare the empirical content of the theory with the physical systems $\ia$ which the theory is intended to be applied to.
}
\label{fig:structuralism}
\end{figure}
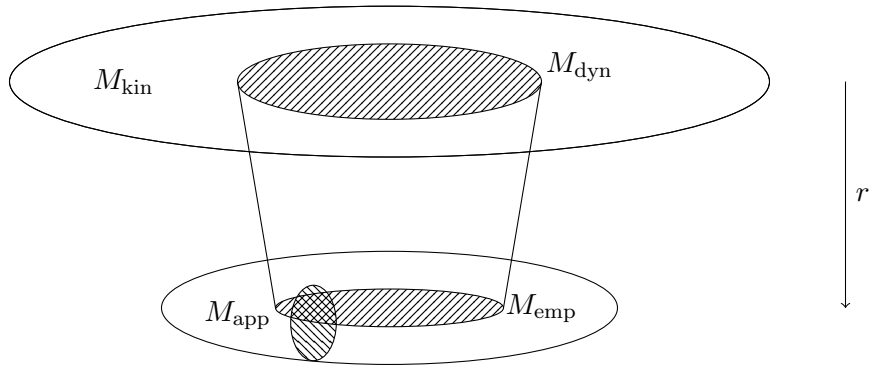

In this chapter I will use the scheme to, at the same time, introduce the theories in a more physicist's manner but also provide a model theoretic account for them in a semiformal way.
This is possible for the following reason. 
On the one hand, the definition of the logical reconstruction of a theory should capture all relevant information also from a physics' point of view. 
On the other hand, %\cjt{with an appropriate dictionary at hand (Appendix \ref{structuralism})}, 
an informal treatment of logical reconstruction can nevertheless already capture most of the formalities lurking behind.

I will illustrate this account on scientific theories briefly with the example of general relativity.
This is also helpful in setting the stage for the discussion of candidates for a quantum theory of gravity.

\setcounter{defin}{-1}
\tdefin[]{\label{theory:gr}
General relativity (GR)
\begin{itemize}[leftmargin=1.5cm]
\item [$\mp^{\gr}:$] A structure of GR, \ie a potential physical system of GR,  consists of a representative $(\stm, g_{\mu\nu})$ of a Lorentzian geometry, that is a smooth 4-manifold $\stm$ and a pseudo-Riemannian metric $g_{\mu\nu}$ thereon%
\footnote{
In the structuralist reconstruction, single structures in $\mp^\gr$ correspond to specific metrics.
Diffeomorphism symmetry must be described by further constraints on the class of structures \cite{Bartelborth:1993vi} which I will further explain below.
},
together with various derived quantities such as a Levi-Civita connection, notions of curvature (Riemann tensor, Ricci tensor $R_{\mu\nu}$, Ricci scalar $R$), geodesics and their length.
Matter is described by a stress-energy tensor $T_{\mu\nu}$.
%\cjt{The strength of the coupling between geometry and matter is governed by the gravitational constant $\gn$. 
%Furthermore there is a cosmological constant $\Lambda$. }
\item [$\ma^\gr:$] Actual physical systems are those obeying Einstein's field equations 
\[\label{eeq}
R_{\mu\nu} - \frac 1 2 R g_{\mu\nu} + \Lambda g_{\mu\nu} = 8\pi \gn T_{\mu\nu}\;
\]
where the strength of the coupling between geometry and matter is governed by the gravitational constant $\gn$ and $\Lambda$ is the cosmological constant.
Alternatively, as for any classical system, other formulations of the dynamics can be used such a Lagrangian, Hamiltonian, symplectic or least action formulation (among others).
The least action formulation, for example, is given by the generic least action principle $\delta S = 0$ and the specific action $S=S_{\textsc{eh}}+S_{\text{matter}}$ of the theory, \ie the Einstein-Hilbert action
\[\label{einstein-hilbert}
S_{\textsc{eh}} = \frac1{16\pi\gn} \int_\stm \d x^4 \sqrt{g} (R-2\Lambda) \;,
\]
where $\sqrt g$ denotes the square root of the metric determinant's modulus, % $g = |\det g_{\mu\nu}|$.
together with some matter action $S_{\text{matter}}$.

\item [$\mpp^\gr:$] The basic notions of GR, $\stm$ and $g_{\mu\nu}$, are GR-theoretical, \ie their determination presupposes the field equations \eqref{eeq}.
Only some  derived geometric quantities are GR-non-theoretical, most importantly geodesic distances and trajectories of test particles and light on spacetime geometry.
\item [$\ia^\gr:$] As the fundamental classical theory of gravitational interaction, GR is intended to apply to all phenomena where gravitation plays a role. 
For the main part, this concerns physics on large length scales, from celestial mechanics (Mercury perihelion, gravitational lensing etc.) to cosmology. 
A phenomenon where the peculiarities of GR become eminently obvious are black holes.
Note that in all these instances indeed the observations concern only the motion of light and particles, not the gravitational metric field itself.
\end{itemize}
}
A remark is in order concerning the central notion of \emph{diffeomorphism invariance}. 
This requires an explication in terms of the mentioned notion of constraints in structuralism (not to be confused with constraints in the physics' sense).
More precisely, an equi\-valence constraint \cite[II.2]{\BMS} identifies all structures in $\mp^\gr$ which are equivalent under diffeomorphism symmetry \cite{Bartelborth:1993vi}.
In the discrete quantum gravity approaches, diffeomorphism invariance is a central notion and implemented more explicitly such that these types of constraints will not be needed in their reconstruction.

Another important property of GR which might be overlooked on first sight in the theory explication is \emph{background independence}.
While in any other physics theory space or spacetime appear as a background structure, in GR the geometry described by the metric $g_{\mu\nu}$ on a given smooth spacetime manifold $\stm$ is subject to the dynamics described by the field equations \eqref{eeq}.
%There is only the condition that $(\stm,)$ is some Lorentzian geometry.

\

All the five approaches to be discussed in the following share a modest strategy for combining GR with quantum theory: 
they aim to define a quantum theory of gravity using some well established framework of quantum theory which is specified by observables, states and dynamics capturing the content of GR, that is, which describe the geometric degrees of freedom of GR and unveil their dynamics in a semiclassical limit. 
Moreover, in all approaches GR's crucial property of background independence is emphasized, in some cases extended even further in the sense that also aspects of the manifold $\stm$ are considered as dynamical.

The description of $\ia$, \ie the physical systems a theory is meant to be applied to, is a little subtle in the case of quantum gravity. 
Usually, a physics theory has a very precise set of $\ia$ which are the concrete experiments and observations the theory is meant to apply to, up to a certain degree of accuracy.
For quantum gravity there are in our days only ideas and plans what possible phenomena the theory might be relevant for, \eg signatures of early universe cosmology in the cosmic microwave background,  or theoretically expected phenomena related to black holes such as Hawking radiation and their statistical physics \cite{Burgess:2004tw}, among others \cite{AmelinoCamelia:2013ct}.
All approaches to quantum gravity in principle intend to apply to all of these ideas;
but 
% none does in a precise sense, for the mere reason that 
strictly speaking
there are no explicit experimental or observational data which clearly indicate a need for quantum-gravitational explanation. %\cdo{reference?} 
%is no fully connected link between experimental data and a fundamental quantum gravity theory so far. 
%\footnote{Quantum cosmological applications might be closest...
 With these caveats,  I will sketch under the label of $\ia$ in the following only examples of \emph{possible} applications which an approach already has some more detailed results on.

%==============================================================

\subsection{Simplicial quantum gravity I: Quantum Regge calculus}\label{sec:qrc}

Simplicial quantum gravity can be taken as a generic term for theories which are based on triangulations $\T$, that is simplicial decompositions, of the Riemannian spacetime $\std$-manifold $\stm$ and which, more particularly, use an action as introduced by Regge \cite{Regge:1961ct}%
\footnote{Apart from the original Regge action as a function of edge lengths which I focus on here, there are various classically equivalent formulations 
(first order, area-angle variables etc. \cite{Barrett:1994ba,Barrett:1999fa,Makela:1994hm,Dittrich:2008hg,Caselle:1989cd,Gionti:2005gi}; see appendix \ref{sec:classical-expressions}). 
Accordingly, approaches to quantum gravity related to those formulations could also be seen as versions of simplicial quantum gravity.
}:
\[\label{regge-action}
S_{\textrm{Regge}} [l_{ij}^2;\T] = \frac1 {8\pi \gn} \sum_{h\in\Tp{\std-2}} \left(V_h%^{(\std-2)}
 \delta_h +   \Lambda V_h^{(\std)} \right)
\]
This is simply a discretization of the Einstein-Hilbert action of GR depending on a simplicial decomposition $\T$ as well as length variables $l_{ij}$ assigned to all edges $(ij)\in\Tp1$. 
Taking all $\std$-simplices as flat inside and on their bounding $(\std-1)$-simplices, this defines a piecewise flat manifold.
The Regge action is then a sum over $(\std-2)$-simplices $h\in\Tp{\std-2}$ with volumes $V_h$, in terms of which the spacetime measure is given by $\std$-volumes $V_h^{(\std)}$. %(cf. \eqref{dvolume} for an explicit definition).
This yields directly the term for the cosmological constant $\Lambda$.
The discretization of the Ricci scalar part of the action is the reason for favouring $(\std-2)$-simplices to sum over since curvature is concentrated on the these (thus often called \emph{hinges}).
It is expressed in terms of deficit angles 
\[
\delta_h = \sum_{\s > h} (2\pi - \theta_{h,\s})
\] 
where $\theta_{h,\s}$ is the angle between the two $(\std-1)$-simplices in the $\std$-simplex $\s$ meeting at the hinge $h$.

All the quantities $V_h$, $V_h^{(\std)}$ and $\delta_h$ in the action \eqref{regge-action} are understood as functions of the squared edge length variables $l_{ij}^2$ on the triangulation $\T$. 
A particular advantage of these variables is that %they diffeomorphism invariant from the start, \ie
there is no redundant coordinate dependence.

%==============================================================

\ %\subsubsection*{Quantum Regge calculus}\label{qrc}

The idea of \emph{Quantum Regge calculus} is to understand the variables $l_{ij}^2$ on an appropriate triangulation as the relevant geometric degrees of freedom and use the Regge action to define dynamics in terms of a path integral \cite{\qrcW}:

\tdefin[]{\label{qrc-theory}
Quantum Regge calculus (QRC)
\begin{itemize}[leftmargin=1.5cm]
\item [$\mp^\qrc:$] A structure of QRC consists of piecewise flat Riemannian geometries based on a  (sufficiently fine%
%\footnote{In the full structuralist framework the accuracy of a model is another crucial ingredient which is formally worked out in detail \cite{\BMS}.}
)
fixed triangulation $\T$ of a spacetime 4-manifold $\stm$ specified by edge length variables $l_{ij}^2$ for all $(ij)\in\Tp1$,
as well as a measure $\mu(l_{ij}^2)$ on the space of edge length configurations.
Accordingly, states and observables are functions of these variables on 3-dimensional slices of $\T$.

\item [$\ma^\qrc:$] Dynamics is defined by the path integral 
\[\label{qrc-path-integral}
Z_\qrc (\T) = \int \d\mu(l_{ij}^2) \, \e^{-\frac1 \h S_\regge[l_{ij}^2;\T]}\;,
\]
\ie the expectation value of an observable $O(l_{ij}^2)$ is evaluated by insertion in the path integral.

\item [$\mpp^\qrc:$] %Since the edge-length variables \cdo{are geodesics}, they are in principle QRC-non-theoretical, at least approximately on a sufficiently fine triangulation $\T$. \cdo{well, in one interpretation, the choice of a triangulation is a choice of a subset of all the possible geodesic paths in spacetime that one can decide to measure. as such it would be non-theoretical, like the edge lengths, I think. This is in particular true for classical applications in astrophysics and GR.}
% \cjt{indeed, this is a quantum theory!!}
As usual in quantum theory, only correlations and expectation values of observables are non-theoretical with respect to QRC.
The triangulation $\T$ and edge length configurations $\{l_{ij}^2\}$ thereon are QRC-theoretical.

\item [$\ia^\qrc:$] There is a range of applications carried out to a certain degree. 
Among others, these include analytic calculations of the graviton propagator in the weak field limit via perturbations around a highly symmetric fixed background \cite{Rocek:1984de}, various numerical simulations to explore the phase structure of the theory \cite{Hamber:1992jo,Hamber:2000ew,Riedler:1999bq}, and cosmological models \cite{Collins:1973hw}.
\end{itemize}
}
The precise form of the measure is a central issue in QRC and an on-going topic of debate \cite{Williams:2007up}.
For example, a class of models in QRC is specified by 
\[\label{qrc-measure}
\d\mu(l_{ij}^2) = \prod_{\s\in\Tp4} (V_\s)^\alpha \prod_{(ij)\in\Tp1} \d l_{ij}^2 \;\Theta(l_{ij}^2)	
\]
where $V_\s$ are 4-simplex volumes and $\Theta$ is a step function implementing all possible triangle inequalities and their higher dimensional generalizations. 
The rational for this measure is that it is the discrete version of the DeWitt measure (where $\alpha \in\R$ is an additional parameter) \cite{Hamber:1999cf}.
%==============================================================

\subsection{Simplicial quantum gravity II: Causal dynamical triangulations}\label{sec:cdt}

A complementary way to define a quantum partition function based on the Regge action is to sum over a class of triangulations with fixed edge lengths $l_{ij}^2=a^2$, thus coined \emph{dynamical triangulations} \cite{\cdtAJL}.
To include the causal structure of Lorentzian spacetimes, triangulations of foliated spacetime are chosen and a distinction is made between the fixed length of time-like and space-like edges, $\at$ and $\as$, related by a parameter $\alpha$ such that $\at^2 = -\alpha \as^2$. 
It is then argued for an analytic continuation in $\alpha$ to obtain the Lorentzian theory from a Euclidean version in terms of a ``Wick rotation" %(for $\alpha>7/12$) 
\cite{Ambjorn:2001kb}
\[\label{cdt-wick}
\i S_{\textrm{Regge}}[\at^2=-\alpha \as^2, \as^2;\T] \mapsto S_{\textrm{Regge}}[\at^2=\alpha\as^2,\as^2;\T] \;.
\]
The main result of CDT is then that, contrary to purely Euclidean dynamical triangulations, there is a regime in the phase space of parameters $\gn, \Lambda, \alpha$ which can be understood as describing extended continuum geometries according to their properties as captured by various observables \cite{Ambjorn:2005fh}.

The theory of CDT, as a candidate for quantum gravity, can thus be explicated in the following way:
\tdefin[]{\label{cdt-theory}
Causal dynamical triangulations (CDT)
\begin{itemize}[leftmargin=1.5cm]
\item [$\mp^\cdt:$] A structure of CDT consists of an ensemble of abstract triangulations $\T$ with fixed space-like and time-like edge lengths $\as^2$ and $\at^2 = -\alpha \as^2$ of a rigidly foliated spacetime $\stm=[0,1]\times\sm$ with fixed spatial topology $\sm$. % (usually $\sm = S^3$). 
Three-dimensional slices in the triangulations are understood as spatial states \cite{Ambjorn:2012tj}.

\item [$\ma^\cdt:$] Correlations between spatial states are defined by a path integral over all interpolating triangulations, defined by the partition function 
\[\label{cdt-state-sum}
Z_\cdt %= \left\qbra 1\right\qket 
=\lim_{\underset{\np4\ra\infty}{\as\ra0}}\sum_{\T_{\np4}}\frac{1}{\sym {\T_{\np4}}} \e^{-\frac1 \h S_{\textrm{Regge}}[\at^2=\alpha\as^2,\as^2;\T_{\np4}]}\;,
\]
Nevertheless, according to the calculations (Monte-Carlo simulations) explicitly done, for a given accuracy an ensemble of triangulations of a large enough fixed size (number of 4-simplices $\np4$) is sufficient \cite{Ambjorn:2012vc}.

\item [$\mpp^\cdt:$] In this sense, the triangulations $\T$ are clearly CDT-theoretical. %Even on the theoretical level, they are considered as mere technical tools and all observables are defined in the limit $\np4\ra\infty$.
Non-theoretical quantities with respect to CDT are geometric observables such as the volume or effective dimensions of spatial slices or the volume of the whole spacetime.

\item [$\ia^\cdt:$] The continuum phase of the CDT state sum \eqref{cdt-state-sum} shows an evolution of spatial slices which can be interpreted cosmologically as a de-Sitter universe \cite{Ambjorn:2005fh}.
\end{itemize}
}

\

Note again that CDT is different from QRC since the dynamical degrees of freedom are triangulations of fixed edge lengths while in QRC it is precisely the other way round with the triangulation being fixed and the edge lengths dynamical. 

One could understand simplicial quantum gravity even in a more general sense where both, triangulations and edge length variables, are dynamical; furthermore, there are also extensions of Regge calculus to other variables such as areas and angles \cite{Barrett:1994ba,Barrett:1999fa,Makela:1994hm,Dittrich:2008hg}, among others \cite{Caselle:1989cd,Gionti:2005gi}.
Nevertheless, in a more specific sense QRC and CDT are understood as explicated here and I will stick to these particular theory concepts.

%==============================================================

\subsection{Loop quantum gravity}\label{sec:lqg}

Loop quantum gravity, understood in the traditional sense \cite{Rovelli:2004wb,\lqgT}, is a canonical (Dirac) quantization of GR as a gauge theory.
On a spacetime manifold of fixed topology $\stm=\sm\times\R$ an ADM splitting \cite{Arnowitt:1962wr} allows for a formulation of GR as a constrained Hamiltonian system.
On each spatial slice $\sm$, the constraints take a particularly convenient form in a formulation in terms of the Ashtekar-Barbero $SU(2)$ connection $A^i_a = \omega^i_a + \bi K^i_a$, a linear combination of gauge-fixed spin connection $\omega^i_a$ and extrinsic curvature $K^i_a$ governed by the Barbero-Immirzi parameter $\bi$.
Its conjugate momentum is  $E_i^a = \sqrt{\det e}\; e^a_i$, the densitized version of the inverse triad $e^a_i$, %(with $q$ the spatial metric).
according to the Poisson algebra
\[
\left\{E_i^a(x),A_b^j(y)\right\} = 8\pi\bi\gn \delta^a_b \delta^j_i \delta(x,y) \;.
% \quad \lora \quad \left[\widehat{E_i^a(x)} , \widehat{A_b^\j(y)} \right] = -8\pi\i  \bi \lpl^2 \delta^a_b \delta^j_i \delta(x,y)\;,
\]
%where $\lpl=\sqrt{\gn/\h}$ is the Planck length. 
%In these variables there are constraints corresponding to gauge and diffeomorphism invariance, as well as the Hamiltonian constraint.

On these grounds a sensible quantum theory is obtained by the canonical quantization description turning the Poisson algebra for the conjugate pair $(A^i_a,E_j^b)$ into an algebra of quantum operators. 
However, before quantization one extends the connection variables to the space $\Acal$ of generalized connections which associate with any curve $\curv\In\sm$ the parallel transport
\[
h_\curv[A] = \Pcal \e^{-\int_\curv A} \in SU(2)\;,
\]
and thus with any embedded closed graph $\bge\In\sm$ a set of $h_\eb[A]$ for each  $\eb\in\bge^{[1]}$, where $\bge^{[1]}$ is the set of graph edges.
Then, one can define wave functions $\qs=\qs_{\bge,f}$ on the generalized space of connections with support on such graphs $\bge$ in terms of functions $f:SU(2)^{|\bge^{[1]}|}\ra\C$ as 
\[
\qs[A] = f(h_{\eb}[A])\;,
\]
which are called cylindrical functions (with respect to the edge curves $\eb$).
On a single graph $\bge$, gauge invariant wave functions are thus simply obtained by an average over $SU(2)$ for each vertex $\vb\in\bge^{[0]}$ at which the gauge group acts on the parallel transports.
With the natural gauge invariant inner product 
\[
\qbra \qs_{\bge,f} | \qs'_{\bge,f'} \qket = \int \prod_{\eb\in\bge^{[1]} } \d h_\eb \;f^*(h_\eb) f'(h_\eb)\;
\]
in terms of the Haar measure $\d h$ on $SU(2)$, these form a Hilbert space 
\[
\hs_\bge=L^2(SU(2)^{|\bge^{[1]}|}/SU(2)^{|\bge^{[0]}|})
\]
upon completion with respect to the corresponding norm.

This construction can be extended to the space of functions cylindrical with respect to any graph $\bge\In\sm$ yielding the kinematical Hilbert space $\hkin^\lqg = L^2(\overline{\Acal},\d\mu_\textsc{al})$ upon completion, 
based on the Ashtekar-Lewandowski measure $\mu_\textsc{al}$ which is furthermore invariant under diffeomorphisms on $\sm$.
%Alternatively, the Hilbert space can be defined as a projective limit with respect to the subgraph relation `$\le$',
%\[\label{lqg-state-space}
%\hkin^\lqg = \lim_{\bge\ra\infty} \bigoplus_{\bge'\le\bge} \hs_{\bge'}.
%\]
%
%of $L^2$ spaces $\hs_\bge= L^2(G^{|\bge^{[1]}|}/G^{|\bge^{[0]}|}$).
%
%inducing the inner product
%\[
%\qbra \qs | \qs' \qket = \int_\Acal \d\mu_\textsc{al}[A] \; \qs^*[A] \qs'[A]\;.
%\]
%This construction is based on holonomies $h_\eb[A] = \Pcal\exp(-\int_\eb A)$ along edges $\eb$ of embedded graphs $\bge\In\sm$
%such that wave functions in an explicit form are functions $\qs[A] = \qs_\bge(h_\eb[A])$ of edge  holonomies on a given embedded graph.
%Thereby, explicit functions $\qs_\bge,\qs_{\bge'}$ on diffeomorphism-equivalent graphs $\bge\sim\bge'$ are representations of the same wave function $\qs$.
In fact, it turns out that under certain reasonable assumptions,  there is a unique representation with a gauge- and diffeomorphism invariant cyclic state of the corresponding holonomy-flux algebra $\alg_{h,E}$,
\[
\left\{{h_\eb[A]},{h_{\eb'}[A]}\right\} = 0 \, , \quad
\left\{{E_i(\surface)},{h_\eb[A]}\right\} \propto h_{\eb_1}[A] \tau_i h_{\eb_2}[A] \, , \quad %8\pi\bi\gn
\left\{{E_i(\surface)},{E_j(\surface)}\right\} \propto {\epsilon_{ij}}^k {E_k(\surface)}
\]
of holonomies $h_\eb[A]$ and fluxes $E(\surface)$, \ie a smeared version of densitized triads obtained from integrating over embedded surfaces $\surface\In\sm$,
which corresponds to the Hilbert space $ L^2(\overline{\Acal}, \d\mu_\textsc{al})$ \cite{Lewandowski:2006ev}
(where in the commutation relation between $E(\surface)$ and $h_\eb[A]$, the Pauli matrix $\tau_i$ splits the parallel transport $h_\eb$ on the edge $\eb$ at the point of its intersection with the surface $\surface$ into a part $h_{\eb_1}$ and $h_{\eb_2}$).

%States in the physical Hilbert space are those annihilated by the quantum Hamiltonian constraint $\widehat{C(N)}$. 

LQG can thus be explicated in the following way:
\tdefin[]{\label{lgq-theory}
Loop quantum gravity (LQG)
\begin{itemize}[leftmargin=1.5cm]
\item [$\mp^\lqg:$] A LQG structure consists of the holonomy-flux algebra $\alg_{h,E}$ based on a spacetime manifold $\R\times\sm$ with fixed spatial topology $\sm$ and the class of embedded graphs $\bge$ and surfaces $\surface$.
It furthermore contains the representation on the Hilbert space $\hkin^\lqg = L^2(\overline{\Acal}, \d\mu_\textsc{al})$ of generalized $G=SU(2)$ connections.

%\cjt{Finally, there are further observables, like for example area operators %$\widehat{A(\surface)}$ 
%for surfaces $\surface\In\sm$ and volume operators.}

\item [$\ma^\lqg:$]  Physical states have to be determined via a projection $\prjct_\phys : \hkin^\lqg \ra \hphys^\lqg$ such that correlations are $\qbra \qs | \qs'\qket_\phys = \qbra \qs| \prjct_\phys \qs'\qket$.
There are various proposals how to define $\prjct_\phys$ in terms of the Hamiltonian constraint $\widehat C$. 
One idea is to find a mathematical definition for the formal expression \cite{Rovelli:2004wb}
\[\label{lqg-projector}
\prjct_\phys = \int \D N \e^{\i \int \d x^3 N_x \widehat{C_x} }\;.
\]
Another strategy (corresponding to a slightly altered constraint algebra) 
 is to consider directly the constraint 
\[\label{master-constraint}
\widehat{\mathbf{M}} = \int_\sm \d^3 x \frac{\widehat{C_x}^2}{\sqrt{\det(q_x)}}
\]
consisting of an average over all the infinitely many constraints in space points $x\in\Sigma$, thus called the ``master" constraint \cite{Thiemann:2006cx,Thiemann:2006ir,Thiemann:2007un}.
\item [$\mpp^\lqg:$] %As usual in quantum theory,
 The quantities that are non-theoretical with respect to the theory are the spectra of observables. 
Here, important examples are area and volume operators which have discrete spectra
\cite{Rovelli:1995gq}.%\
\footnote{But note that, strictly speaking, their usual construction \cite{Rovelli:1995gq,Ashtekar:1997bn,Ashtekar:1997we}, is not yet gauge invariant, \ie the operators do not commute with the gauge constraint. Thus it is not obvious whether a gauge-invariant version would have discrete spectrum as well \cite{Dittrich:2009gi}.
}
They form a maximal set of commuting operators. The area operators are diagonalized by spin-network states $|\bge,\rep{\eb},\intw{\vb}\qket$ labelled by $G$-representations $\rep{\eb}$ on the graph edges and corresponding intertwiners $\intw{\vb}$ at the graph vertices.
% Further non-theoretical quantities would then be conditional probabilities between their eigen states

\item [$\ia^\lqg:$] There is hardly any application of the full theory (given that explicit nontrivial solutions of the dynamics are not known yet). On the other hand, whole research fields have developed for loop-quantized cosmology (LQC)\cite{Bojowald:2008wn,Bojowald:2011dn,Ashtekar:2011fh,Banerjee:2012fn,Ashtekar:2013fk} and black holes \cite{BarberoG:2011ur,BarberoG:2015ws}.  They are based on the quantization scheme of LQG applied to symmetry-reduced sectors of GR. 
There are a few exceptions, for example an attempt to identify a black hole horizon in the full theory \cite{Sahlmann:2011ch}.
\end{itemize}
}

%==============================================================

\subsection{Spin-foam models}\label{sec:sfm}

The theory of spin-foam models is a covariant approach to quantum gravity usually (but not necessarily%
\footnote{
There are also attempts to derive spin-foam models without reference to BF theory (\eg directly from the Palatini-Holst action \cite{Baratin:2012gc}).
})
based on a classically  equivalent gauge-theory formulation as a constrained topological theory \cite{Baez:2000kp,Perez:2003wk,Rovelli:2011tk,Perez:2013uz}.
Topological quantum field theories \cite{Atiyah:1988id} are of interest for quantum gravity since they are naturally background independent, \ie observables do not depend on a (fixed) metric structure of the underlying manifold.
%Furthermore, there are various powerful techniques known how to define and analyse such quantum field theories \cite{}
With the Plebanski action \cite{Plebanski:1977zz} and variants of it there are classically equivalent formulations of GR as a constrained topological BF theory \cite{Blau:1991ji}.
A strategy of spin-foam models is thus to use lattice gauge theory techniques to extend the results of quantum BF theory to gravity (see \sec{lgt-gravity} for a detailed discussion).

The fact that models of such a theory are called ``spin-foam" models reflects two of their essential features.
One is that the path integral over connections of the gauge theory can be transformed into a sum over representation (``spin") labels (though a formulation in terms of group or algebra elements is equally well possible).
The other feature is that the definition of the quantum partition function relies on a cellular decomposition $|\cp|$. However, all amplitudes depend only on the 2-skeleton of the dual $\cps$ to the underlying combinatorial complex $\cp$ eventually (\cf \sec{combinatorial-complexes}). Since such a 2-complex consists only of vertices, edges and faces, one could also call it a ``foam".

While the use of a cellular decomposition $|\cp|$ is very natural in topological QFT as the complex $\cp$ captures the  topology and thus all relevant information of the manifold, this is no longer true for the constrained, gravitational models.
It is an open question how one should remove this dependence, most prominent proposals being a refinement description or a sum over cellular decompositions (see \sec{gft}).
In the following theory explication I will thus leave this question open, indicating both possibilities:
% refer to the concept of a spin foam model where one appropriate (\eg fine enough) complex $\cp$ is sufficient:

%The theory which spin foam models are models of (in the sense of specialized instances) can be explicated in the following way.

\newcommand{\cfg}{{\chi}}

\tdefin[]{\label{sf-theory}
Spin-foam (SF) theory
\begin{itemize}[leftmargin=1.5cm]
\item [$\mp^\sfm:$] 
A structure of SF consists of a cellular decomposition $|\cp|$ of a topological spacetime manifold $\stm$ with combinatorial complex $\cp$ (or a class of such complexes) and an ensemble of configurations $\cfg$ of group elements of a Lie group $G$, or equivalently of representations or Lie algebra elements of $G$, associated to faces and edges of $\cps$.
A Hilbert space $\hkin^\sfm$ of states is given by functions of the respective data on the boundary $\bs\cp$ (square integrable with respect to an appropriate measure).

\item [$\ma^\sfm:$]  A physical inner product defining correlations between the states $\qs,\qs'\in\hkin$ is given by a path integral over all configurations on $\cp\setminus\bs\cp$, determined by a specification of amplitudes $A_f,A_e,A_v$ associated with faces, edges and vertices of the dual complex $\cps$ which depend on the configuration variables of neighbouring faces $f'$ and edges $e'$ such that
\[\label{sf-path-integral}
\qbra \qs | \qs' \qket_\phys = Z^\cp_\sfm(\qs,\qs')
% = \cjt{\underset{\sigma:\partial\sigma=s\cup s'}{\int e^{-S(\sigma)/\h}} }
= \sum_\cfg %\frac1{Z^\cp_\sfm} \sum_{\rep f, \intw e} \prod_{f\in\cpsp2} 
\prod_{f\in\cpsp2} A_f \prod_{e\in\cpsp1}A_e  \prod_{v\in\cpsp0} A_v \;. %(\rep f, \intw e)
\]
(If the theory is defined by a summing description, the correlations are defined as a sum over all complexes in the class compatible with the boundary states,
\[\label{sf-path-integral-sum}
\qbra \qs | \qs' \qket_\phys = Z_\sfm(\qs,\qs')
= \sum_\cp w_\cp \sum_\cfg \prod_{f\in\cpsp2} A_f \prod_{e\in\cpsp1}A_e  \prod_{v\in\cpsp0} A_v \;,
\]
determined additionally by the specification of a set of weights $w_\cp$.)

\item [$\mpp^\sfm:$] Again, only expectation values of observables (which are defined as insertion into the path integral) are non-theoretical quantities. 
Common examples are Wilson loops, and more generally gauge invariant functions with support on a graph in $\cp$ (though there may be further restrictions according to diffeomorphism invariance).

\item [$\ia^\sfm:$] So far, most explicit calculations are done for very small complexes (number of internal vertices of order one). 
As far as one can trust such investigations, there are results for correlations of metric disturbances (``graviton propagation") \cite{Rovelli:2006dy,Bianchi:2006ka,Livine:2006dm,Bianchi:2009kc} as well as applications to cosmology \cite{Bianchi:2010ej,Ashtekar:2010eh,Rovelli:2010kz,Bianchi:2011bd,Kisielowski:2013du}.
\end{itemize}
}
Note that even if the theory is not defined by a sum over complexes \eqref{sf-path-integral-sum}, a cellular decomposition $|\cp|$ is only specified for a particular structure of SF theory. 
According to structuralism, a relation between the single structures that are meant to apply to the same measurement (but possibly with different levels of accuracy), as expected for example from a refinement description, would have to be specified by the mentioned model-theoretic constraint relations.

%Since one can define special cases in which the kinematical degrees of freedom are the same as in LQG as explicated in \sec{lqg}, SF models are often considered as a covariant version of LQG \cite{\lqgR,Perez:2013uz}. Still, the logical reconstruction show that they are distinct theories in the first place.

%==============================================================

\subsection{Group field theory}\label{sec:gft-explication}

A different theory framework for a quantum gravity candidate exists with group field theory \cite{\gftFO,Krajewski:2012wm,Baratin:2012ge,Oriti:2009ur,Oriti:2014wf}.
In the first place, a GFT is simply a standard quantum field theory with a field $\phi$ defined on several $\copies$ copies of a Lie group $G$.
More specifically, a restriction to a certain class of nonlocal interactions is furthermore understood as a defining property (for a detailed introduction see \sec{gft} below).

The interpretation of GFT as a quantum gravity theory is very different to standard QFT on Minkowski (or another fixed) spacetime.
The group manifold $G^{\times\copies}$ is not seen as a spacetime background; % on which the field is excited; 
rather, the field excitations $\phi(g)$ have to be understood as the excitation of a combination of $\copies$ group variables $g=(g_1\dots g_\copies)$.
%\ca{As a proposal for quantum gravity} 
These are interpreted as geometric data
% local Lorentz-group transformations
%\cjt{ holonomies of the gravitational field}
associated to a building block of space. 
Then, the nonlocality of interaction kernels allows to understand the interaction vertices accordingly as atoms of spacetime.
Feynman diagrams in the perturbative expansion of the path integral can therefore be seen as discrete spacetimes built from atoms. Indeed they are equivalent to combinatorial manifolds (explained in detail chapter \ref{ch:discrete-spacetime}, in particular \rem{simplicial}).

In the light of this interpretation, GFT can be seen as a generalization of matrix models \cite{Francesco:1995ih}.
The perturbative expansion of matrix models has the structure of 2-dimensional combinatorial manifolds and they are related to gravitational theories in two dimensions.
GFT is a generalization of these in a two-fold way.
First, as a higher-dimensional extension since any $\copies>2$ variables are possible.
Second, as a generalization from purely combinatorial (or equilateral) manifolds to discrete geometries as the field group arguments can be related to metric degrees of freedom while, accordingly, a matrix is merely a function of two integers.
%For the second point, the difference is actually not that big because for a compact group $G$ a transformation of the field into representation space renders it a function of discrete labels as well (explained further in \sec{dynamical-relations} in the discussion of the relation of GFT and CDT).

The logical structure of GFT is nevertheless similar to that of standard QFT:
\tdefin[]{\label{gft-theory}
Group field theory (GFT)
\begin{itemize}[leftmargin=1.5cm]
\item [$\mp^\gft:$] 
A structure of GFT consists of an algebra of field observables $O[\phi]$ built from a field $\phi: G^{\times\copies}\ra\R$ (or $\C$), where $G$ is a Lie group, and of a representation on its Fock space $\Fcal_\gft$ of field excitations $|\qs\qket = O_\qs[\phi] \gftvr$ from the GFT ``no-space" vacuum $\gftvr$.

\item [$\ma^\gft:$]  Propagation and interaction of field excitations are determined via a kinetic term $\Kbb$ and interaction terms $\Vbb_i$ in the GFT action
\[
S_\gft[\phi] = \frac12\int [\d g]\; \phi(g_1)\;\Kbb(g_1, g_2)\;\phi(g_2) + \sum_{i\in I}\lambda_i \int [\d g]\;\Vbb_i\big(\{g_j\}_{J_i}\big)\;\prod_{j\in J_i}\phi(g_{j})\;,
\]
with index sets $I,J_i$.
This defines correlations and observables as insertions into the path integral for that action,
\[\label{gft-path-integral}
\qbra \qs | \qs' \qket = \qbra O_{\qs,\qs'}\qket
= \frac{1}{Z_{\gft}}\int \Dcal\phi\; O_{\qs,\qs'}[\phi]\;e^{-S[\phi]}
= \sum_{\fdiagram} \frac{\lambda_i^{c_i}}{\sym(\fdiagram)} Z_\gft^\fdiagram(\qs,\qs')\;,
\]
where the perturbative expansion is a sum over diagrams $\fdiagram$ with order of automorphism group $\sym(\fdiagram)$.

\item [$\mpp^\gft:$] As in ordinary QFT, the non-theoretical quantities of GFT are expectation values of observables. There is obviously no direct access to the field states nor the spacetime diagrams occuring in the perturbative sum.

\item [$\ia^\gft:$] Recently, there are first results how to obtain phenomenological models from GFT, in particular based on coherent, condensate states whose effective dynamics have a cosmological interpretation \cite{\gftcondensate}.
\end{itemize}
}

\

In this section I have sketched the logical structure of five proposals for a theory of quantum gravity: quantum Regge calculus, causal dynamical triangulations, loop quantum gravity, SF models and group field theory. 
For a systematic treatment I have used a simplified scheme based on the structuralist view on scientific theories.
This analysis sets the stage for the investigation of their intertheoretical relations and the question of theory convergence in the next \sec{relations}.
It also serves as an introduction into the theory space relevant for the remainder of this thesis more generally.

%===========================================================================

\section{Relations between the approaches \label{sec:relations}}

Now that the theories and their logical structure are introduced, I will discuss various relations between them.
This will set the ground to tackle then the question of a convergence process.
To this end, one has to distinguish  conceptual relations and relations on the dynamical level.
The former are relations between the elements in the class $\mp$ of a theory, thus they concern merely the conceptual setup of the theories.
The latter apply to the actual, dynamical structures in the class $\ma$ and allow thus to compare the (theoretical) content of the theories.

Furthermore, an important preliminary step is to show that the five theories are indeed mutually distinct. I will make this clear comparing some specific elements of the theories, their geometric degrees of freedom and their background structure in \sec{differences}.
Then I will discuss conceptual relations in \sec{conceptual-relations} and dynamical relations in \sec{dynamical-relations}.
Based on these results I can then discuss in \sec{crystallization} an interpretation of these relations.
I will suggest that they are better understood as an instance of a crystallization process than a convergence.

\subsection{Major differences \label{sec:differences}}

From the logical reconstruction sketches of QRC, CDT, LQG, SF models and GFT in \sec{qg-theories} one can already see that there are many important differences rendering the five theories distinct.
Nevertheless, in practice there is rather a continuum between these theories nowadays.
Therefore, some researcher in the field might still question the distinctness on the grounds of their own conception of the theory they are investigating. 
Thus, it is useful to argue a bit more explicitly that the five theories, as explicated above, are indeed essentially different to each other.

To demonstrate the theories' distinctness I will focus here on two aspects: Their geo\-metric degrees of freedom and their background structure.
There might be other, more formal or mathematical, aspects which are important distinctions in a logical reconstruction but which are not as important from a physical point of view.
%From this perspective 
There is for example an important distinction between canonical versus covariant quantizations and it is a major challenge in itself to demonstrate the precise relations between such different theories \cite{Curiel:2009vo}.
But from the usual physicist's perspective these are just different formulations which are anyway expected to be equivalent, if only the formal details are put the right way. 
Furthermore, such an aspect is  not specific to the theories considered here but already an interesting conceptual and mathematical topic for simpler, well established theories.
A further advantage of choosing degrees of freedom and background structure as examples to illustrate the theories' distinctness is that these are at the same time points of relation and partial convergence to be explained in the next section.

\subsubsection*{Differences in background structure}

An important distinction between candidate theories of quantum gravity concerns their background structure.
As mentioned earlier, all the approaches considered here have in common that they take as a motivation and starting point the background independence of GR.
In the classical case of GR this means that the spacetime metric is fully dynamical, in contrast for example to theories of metric disturbances around a fixed background metric.
But this does not mean that there is no background structure at all, as for example a smooth manifold is needed to define a metric on.
Thus, it is an important question how much of such background structure is fixed in a quantum theory of gravity. 

Quite in general, one can identify roughly the following hierarchy in the spacetime structure of GR (\cf \eg \cite{Isham:1994db}): 
\begin{enumerate}[labelindent=\parindent, leftmargin=*, label=(\roman*), align=left]
\item a minimal structure is a set of events, \ie spacetime points; 
\item a topology provides a notion of neighbourhood and relation between points;
\item a topological manifold of dimension $\std$ provides a local equivalence to $\R^\std$ in the sense of homeomorphisms;
\item extension to local diffeomorphism equivalence with $\R^\std$ yields a smooth manifold;
\item finally, a Lorentzian (metric) structure allows to measure distances and volumes.
\end{enumerate}
Though conceptually natural from the point of view of Lorentzian geometries, this hierarchical ordering is not mandatory: most properties do not necessarily presuppose all structure of the respective lower level.
For example, a metric structure or a causal structure can be defined directly for sets; there are even notions of a manifold on a purely combinatorial level (as I will discuss in detail in \sec{combinatorial-complexes}).
Thus, there is no total ordering of spacetime theories with respect to background structure but rather a partial ordering \cite{Isham:1994db,Huggett:2013uu}.

In these terms, the five theories under investigation (\sec{qg-theories}) differ with respect to their background structure in the following way:
\begin{itemize}

\item LQG, in the traditional setting chosen here for explication, is (by motivation) very close to GR: on the grounds of the assumption of global hyperbolicity, \ie a causal structure, there is a foliated smooth spacetime manifold $\stm=\R\times\sm$ with fixed spatial topology $\Sigma$.

\item In CDT, a similar background manifold $\stm=\R\times\sm$ is present, but only in the topological (iii) and not in the smooth (iv) sense. 
A special notion of causal structure is defined directly on triangulations of that manifold.

\item QRC is a bit more liberal on the spacetime topology since the fixed spacetime manifold $\stm$ is not necessarily of the form $\R\times\sm$ in the first place.
On the other hand, a QRC structure fixes not only the topological manifold but also an explicit triangulation $\T$ thereof. While the triangulation is fixed in a particular instance, the theory as the set of such structures allows for any triangulations, though.
% Causal structure dynamical, like metric structure

\item A SF structure, in the explication chosen above (\sec{sfm}), is similar to QRC in containing a fixed cellular decomposition $|\cp|$ of the spacetime $\stm$ in one formulation \eqref{sf-path-integral}, but it differs already there as $\cp$ is not restricted to be simplicial (which indeed is necessary in QRC).
In the summing description \eqref{sf-path-integral-sum}, even a particular structure relies already on a whole class of such complexes.

\item GFT is by far the most radical proposal with respect to a background structure: there is none at all.
Only the combinatorial structure of the spatial building blocks and spacetime atoms is fixed in terms of the number $\copies$ of arguments of the group field and the combinatorial structure of the interaction vertices, respectively.
Nevertheless, these structures specify effectively a class of complexes given by the Feynman diagrams in the perturbative series.
Particular dynamics may further result in a dominant behaviour of a subclass of complexes in a given regime.
\end{itemize}

This shows clearly that the five theories are distinct already at the conceptual level of structures in $\mp$, merely because of the difference in fixed background spacetime structure. 

%\subsubsection*{Differences in (expected) relation to GR}

Related to the issue of background structure, there is a possible further diversification of the theories concerning their (expected) relation to the classical smooth spacetime geometry of GR.
All the five approaches share that their fundamental degrees of freedom are based on discrete, combinatorial objects in some way.
%(This is even the case in LQG, though not obvious on first sight, since a wave function $\qs=\qs_\bge$ of a kinematical state $\qs\in\hkin^\lqg$ is defined on an embedded graph $\bge\In\sm$.)
It is then an essential, mostly open issue how such discrete quantum geometries relate to continuum geometries. 
Possibilities are that the graphs or complexes are either mere technical or approximation tools or indeed fundamental objects. Furthermore, in the first case an appropriate refinement limit might still be needed for a proper definition while, in both cases, a further summing prescription might result in (effective) continuum geometries, possibly through a phase transition.

In the discrete quantum gravity approaches discussed here, there are various proposals how a continuum limit can be defined.
The idea that discreteness is only a technical or approximation tool is already apparent in the explication of CDT and, on the kinematical level, in LQG. 
On the contrary, the GFT explication suggests that quantum geometries have to be understood as fundamentally discrete in GFT.
In QRC and SF this issue is usually addressed less explicitly which is why I have done so also in their explication.
In particular in these cases, but also in the others, there are currently various opinions on how to tackle the relation to continuum geometries. % \cite{}.
Would one decide that a theory should be explicated with one or the other strategy, if different for different theories, this would present a further distinction between them. 
%(Though, in the structuralist reconstruction, further concepts like the mentioned constraints would be needed for an appropriate treatment.)

%Important also for the conceptual a-priori question if there is some minimal spacetime structure necessary to obtain smooth, classical spacetime in the continuum and semi-classical limit \cite{Huggett:2013uu}

\subsubsection*{Differences in geometric degrees of freedom}

Complementary to the issue of fixed background structure is the question in terms of what kind of quantities the geometric degrees of freedom are described in a quantum theory of gravity.
Obviously, the less background structure is fixed the more spacetime geometry structure should be dynamical. 
However, the question of degrees of freedom goes far beyond this simple relation. 
In particular, quantum theories based on different, but classically equivalent, degrees of freedom are in general distinct. 
This is important since none of the considered approaches is based on the metric degrees of freedom of the standard formulation of GR (theory explication \ref{theory:gr}).
The differences between the theories are thus the following:
\begin{enumerate}
\item In QRC, geometric degrees of freedom are captured by the edge length variables $l_{ij}^2$ on the given triangulation $\T$. 
The advantage of such simplicial geometries is that they are manifestly coordinate-invariant.

\item Though also based on simplicial geometries, CDT differs in that the edge length are all fixed but the triangulation is not. Degrees of freedom are thus triangulations of a given topology which are equilateral (up to the factor $\alpha$ between time-like and space-like edges).

\item On the contrary, LQG (as well as SFs and GFT) are based on a description of geometry in terms of a gauge field. 
For the possibility of the canonical formulation of LQG it is crucial that this is the Ashtekar-Barbero connection $A^i_a = \omega^i_a + \bi K^i_a$. %, conjugate to the densitized triad $E_i^a$.
Wave functions are defined on the generalized space of this connection on smooth spatial slices $\sm$, \ie on the holonomies $h_\eb[A]$ on edges $\eb\in\bge$ of  embedded graphs $\bge\In\sm$.

\item The degrees of freedom in SF theory are group elements which may be interpreted as parallel transports $h_e[\A]$ of a gauge connection $\A$ on the dual edges $e\in\cpsp1$ of a spacetime decomposition $|\cp|$. %together with fluxes 
Models can be chosen to parallel LQG degrees of freedom by choosing $\A$ compatible with the Ashtekar-Barbero connection. 
Still, this presupposes that the general differences in background structure are overcome by choosing a spacetime topology such that it is possible to relate spatial slices in $|\cp|$ with embedded graphs $\bge$ as part of LQG.
Thus, in general, degrees of freedom in SF theory are different from LQG.

\item In GFT, degrees of freedom are excitations of the group field which are interpreted as parallel transports on a spatial building block. 
Similar to the SF case, it is possible to define models which can be related to LQG, as well as to SF models. The presupposition is here that the GFT model is chosen such that boundary states and Feynman diagrams in the perturbative expansion can be related to the graphs and complexes in the respective LQG or SF model.
But even then the degrees of freedom are still different because of the genuine QFT setting of GFT.
\end{enumerate}
Thus, I have demonstrated that all the five theories differ in an essential way also with respect to their degrees of freedom.

%==============================================================

\subsection{Conceptual relations \label{sec:conceptual-relations}}

%Relations on the "kinematical" level: common principles, frameworks, concepts, variables...

Having made clear that the five quantum gravity proposals explicated in \sec{qg-theories} are indeed distinct theories in the first place, I will now discuss some of their intertheoretical relations relevant to the question whether there is a convergence in these theories.
To this end, I focus in this section on conceptual relations. % \ie relation between the structures in the class $\mp$ of each theory.

%\subsubsection*{Conceptual similarities}

A very general relation concerns the common motivation of all five approaches; as mentioned in the beginning of the chapter, they have in common a focus on (metric) background independence  and a conservative conception of quantum theory.
A natural result is that they basically agree in a general setting on the empirical level, \ie with respect to structures in $\mpp$:
the theories are expected to define quantum correlations between states of geometry on a 3-dimensional boundary as well as observables of such geometries, either on the boundary or possibly also in the spacetime bulk.
This is even more explicit in all theories apart from LQG which use a path integral formulation, summing over an ensemble of possible intermediate quantum geometries of spacetime.

A further very general similarity is that in all five approaches discrete structures such as graphs and complexes play a central role.
As already discussed in the last section, these provide mostly the spatial or spacetime background structure which quantum geometries are defined on, though in CDT the triangulations are dynamical; in GFT,  combinatorial complexes occur as Feynman diagrams in the perturbative expansion.

The classes of complexes specific to each theory can be compared with respect to various criteria:
\begin{itemize}
\item Dimension: though all theories aim to describe spacetime of $\std=4$ dimensions, this dimension is not always manifest in their discrete structures. 
In LQG, only embedded graphs, \ie one-dimensional complexes, occur in the definition of spatial ($\sd=\std-1=3$) states.
Similarly, in SF models only the two-dimensional substructure of complexes is relevant for spacetime amplitudes. Accordingly, states defined on their boundary are essentially based on graphs as well.
In GFT the situation is also similar as the structure of Feynman diagrams is most naturally captured by stranded diagrams which are equivalent to the two-dimensional SF structures (\cf \rem{stranded}).
In QRC and CDT the triangulations already have the full structure of $\std=4$ dimensions.

\item Combinatorial structure: as mentioned earlier, QRC and CDT rely crucially on the simplicial structure of triangulations where faces are triangles, 3-cells are tetrahedra, and 4-cells are 4-simplices. 
While this is also true for common models of SFs and GFT, it is not a necessity in these theories and complexes can be of more general type, \eg polyhedral complexes (defined in \sec{polyhedral}).
On the lower dimensional substructure this is reflected in the valency of dual vertices; for example, a vertex can only be interpreted as dual to a tetrahedron if it has valency of four incident edges.
Accordingly, a group field with the interpretation of exciting a tetrahedron geometry needs to have four arguments.
In LQG embedded graphs of arbitrary valency occur.

\item Analytic/topological structure: a complex is a purely combinatorial object in the first place (\sec{combinatorial-complexes}) and indeed this is all the structure explicit in GFT Feynman diagrams. 
In SFs, the cellular decomposition complexes have furthermore a topological-manifold structure induced from a background manifold $\stm$, \ie their cells are homeomorphic to Euclidean space.
In QRC and CDT, moreover it is  assumed that triangulations have a local flat metric structure.
Finally, the graphs in LQG inherit a full (semi-)analytic structure through the embedding into spatial slices.
\end{itemize}

This comparison does not only show that the discrete structures of the theories are related in many ways. It also unveils which details in the definition of discrete structure could be further specified in a particular model of one theory to match a particular model of another theory.
To illustrate this point with an example, take coloured group field theory \cite{Gurau:2011dw,Gurau:2012hl} which is a particular model of GFT. Therein, an additional $(\std+1)$-colouring of the field has the effect that Feynman diagrams correspond to simplicial manifolds, and thus to triangulations of topological manifolds. Therefore, the coloured-GFT model agrees with (models of) SFs, QRC and CDT with respect to dimensionality and combinatorial structure.
%The possibility of such equivalence relations on the conceptual level is certainly a condition for any type of convergence.  -- no, could be just approximative embedding on empirical level

\

%\subsubsection*{Partial convergence on the conceptual level}
 
Concerning combinatorial structure, one could go even a step further and identify some kind of partial convergence already at the conceptual level.
There are many cases where one finds that some additional specifications for the objects in the structures in $\mp$ improve the properties of a model, either of the kinematics, dynamics or the empirical content. 
In the quantum gravity approaches there is in general a tendency towards less analytic, and also less topological background structure (related to stronger notions of background independence, in LQG motivated by diffeomorphism invariance), but more combinatorial structure (mostly needed for better behaved dynamics).
I will give a few explicit examples of such a partial convergence relations:
\begin{itemize}
\item The original, most general kinematical Hilbert space $\hkin^\lqg$ of LQG is not separable. Nevertheless, one can specialize to LQG models by a minimal extension of the notion of  diffeomorphisms resulting in a seperable Hilbert space \cite{Fairbairn:2004db}. Such generalized diffeomorphism have to be smooth only up to a finite number of points in their domain. 
Then, the resulting effective class of embedded graphs $\bge\In\sm$ is basically insensitive to analytic structure.
% further reasons to extend to $\sd$-complexes?

\item Quite in general it is often observed that predictions (correlations, observables) are only sensitive to the combinatorial (as opposed to analytic) structure of complexes underlying quantum geometries. 
These combinatorics already cover the relevant topological information without the need for the complexes to be realized as decompositions of topological manifolds, \ie locally Euclidean. 
This is certainly true for GFT and SF models, also argued for in LQG \cite{\lqgR}, and even in CDT the triangulations are often taken to be ``abstract" triangulations \cite{\cdtAJL}.
If quantum space and spacetime cannot be resolved effectively below a certain scale, there is no need to introduce continuous structures in the first place, but combinatorial manifolds are sufficient.
\end{itemize}
On the other hand, higher dimensional combinatorial structures seem to be necessary for well-defined dynamics:
\begin{itemize}
\item A rather trivial remark concerns the relation between the combinatorics of spatial and spacetime geometry: if spatial states are defined on $\m$-dimensional complexes (which are not a fixed background structure), one expects spacetime dynamics defining correlations of these states to involve some $(\m+1)$-dimensional structures.
For example in LQG, where states have support on embedded graphs, one expects quite generally that an explicit definition of the Hamiltonian evolution should demand at least a 2-complex structure. %Indeed standard implementations of the Hamiltonian constraint are graph changing \cite{\lqgT,Rovelli:2004wb}.
This relates LQG with the other approaches with respect to spacetime combinatorics.

\item Furthermore there are reasons why combinatorial structures of all $\std$ spacetime dimensions are needed. 
In \cite{Bonzom:2012gw,Bonzom:2012gwa,Bonzom:2012tg} it has been shown that certain divergences in the path integral of SF and GFT models are rooted in the combinatorial (cohomology) structure of the 2-complexes that amplitudes are based on. They turn out to be much better behaved when the 2-complex is the 2-skeleton of a $\std$-complex.
This explains also the success of coloured GFT models \cite{Gurau:2011dw,Gurau:2012hl}.
\end{itemize}
In this sense one can speak of a partial convergence of spacetime structures in LQG, SFs and GFT towards purely combinatorial $\std$-manifolds.
This matches also with QRC and CDT.

Note that for this purpose it is not necessary to change the basic theories as explicated in \sec{qg-theories} as long as they cover specializations with the desired properties. 
Typically it is exactly this inner theory structure of a basic ``core" theory together with its particular models which allows both for a wide range of applicability and precise descriptions.

The result of the current discussion of a preference for purely combinatorial structures is not only relevant for the question of convergence in this chapter; it also constitutes the motivation for the definition of discrete quantum geometries in this thesis in general. 
For this reason, and because in the quantum gravity literature this aspect is nowhere made explicit (to the best of my knowledge), I will dedicate chapter \ref{ch:discrete-spacetime} to develop more precisely the relevant properties and classes of combinatorial complexes.

In this section I have focused on discrete spacetime structure as an example of conceptual relations between the quantum gravity approaches. 
A similarly rich net of such relations could be discussed for other components of the theories, for example concerning relations of their geometric degrees of freedom.
For the present purpose the discussed relations should suffice to make the point and I will rather go on with the discussion of dynamical relations.

%==============================================================

\subsection{Relations on the dynamical level \label{sec:dynamical-relations}}

Relations and comparability between the conceptual frameworks (structures $\mp$) of theories are only the first step to compare theories; what one is usually really interested in are the relations of their theoretical ($\ma$) and empirical ($\mpp$) content, \ie how their predictions compare.
Most interesting are thereby equivalence or embedding (reduction) relations which are usually not expected to be exact but should hold approximately, to a given degree of accuracy.
From this point of view I will review four of the most important relations between the five theories under consideration which together already connect all of them.

\subsubsection*{QRC and SF models}

There is a very well known relation between QRC and SF models originally motivated by the asymptotics of the $6j$ symbol of $SU(2)$ representation theory for large spins $\rep a$,
\[\label{6j-asymptotics}
\sixj123456 \sim \frac1{\sqrt{12\pi V_t}}  \frac1 2 \left(\e^{\i\sum_a \rep a \theta_a +\pi/4} + \e^{-\i\sum_a \rep a \theta_a +\pi/4} \right)\;
\]
where $\theta_a$ are the dihedral angles in a tetrahedron at edges with length $l(\rep a) = \h(\rep a +1/2)$ and $V_t$ is its volume \cite{Ponzano:1968wi,Williams:2007up} (\cf \sec{bf} for details).
Thus, the so-called \emph{Ponzano-Regge model} \cite{Ponzano:1968wi} defined by the state sum 
\[\label{pr-model}
Z_\pr(\T) = \sum_{\rep{ij}}  \prod_{(ij)\in\Tp1} (-1)^{c(\rep{ij})} (2\rep{ij} + 1)  \prod_{t\in\Tp3} \{6j\}_t
\]
for 3-dimensional triangulations $\T$ (with $c(\rep{ij})$ some function of the spins) 
can be approximated in the asymptotic regime by the integral \cite{Williams:2007up}
\[\label{pr-asymptotics}
 \int \prod_{(ij)\in\Tp1} \left[ \d \rep{ij} (2\rep{ij} +1)\right]  \prod_{t\in\Tp3} V_t^{-1/2} \left(\e^{\i S_\regge[l_{ij}^2=l(\rep{ij})^2;\T] } + \e^{-\i S_\regge[l_{ij}^2=l(\rep{ij})^2;\T]} \right) \;.
\]
Up to the double occurrence of the action phase weight $\exp(\pm S_\regge)$ with different signs, this integral is precisely of the form of a QRC path integral \eqref{qrc-path-integral} as explicated in \sec{qrc}, specified by the measure \eqref{qrc-measure} with $\alpha=-1/2$. The step function $\Theta(l(\rep{ij})^2)$ implementing triangle inequalities is automatically taken care of by the Clebsch-Gordan conditions implicit in the $6j$ symbol.
Moreover, there are arguments that the two action phases with different signs can be identified under parity symmetry.
Then the Ponzano-Regge model is approximately equivalent to a model of QRC in the regime of large spins $\rep{ij}$.
This regime is usually interpreted as the semiclassical limit of the theory since it is equivalent to $\h\ra\infty$ for fixed $l(\rep{ij}) = \h(\rep{ij} +1/2)$.

On the other hand, the Ponzano-Regge model is clearly a SF model according to theory explication \ref{sf-theory}. More precisely, it is a model in $\std=3$ dimensions specified by a restriction to cellular decompositions of simplicial type, \ie triangulations $|\cp| = \T$, and amplitudes
\[
A_f(\rep{ij}) = (-1)^{c(\rep{ij})} (2\rep{ij} + 1) \, ,\quad A_e(\rep{ij}) = 1 \, , \quad A_v(\rep{ij}) = \{6j\}
\]
where one should remember that the SF amplitudes were defined on the dual complex $\cps$ such that each face $f$ is dual to an edge $(ij)$ and each vertex $v$ to a tetrahedron $t$. 
In this sense, the 3-dimensional model of QRC defined by the measure \eqref{qrc-measure} with $\alpha=-1/2$ is approximatively equivalent to the Ponzano-Regge SF model.%
\footnote{
%\cdo{maybe at level of amplitudes, but what about kinematical states?}
Certainly, one might wish to establish this approximative equivalence also with respect to further components of the theory such as the kinematical state space.
However, note that from a physical point of view an equivalence on the level of empirical structures $\mpp$, \ie approximative identities for observables, would be perfectly sufficient.
}

In the spirit of the Ponzano-Regge model case, the approximate equivalence of QRC and SF theory can be extended to a larger set of models of the theory, in particular to 4-dimensional quantum gravity models \cite{Barrett:1999wu,Barrett:2011wc,Barrett:2009ci,Barrett:2011bb}.
In fact, taking the interpretation of the regime of approximation as a semi-classical limit seriously, this equivalence has been playing an important role in the development of SF models.
If the amplitude of a given SF model asymptotes to the exponential of the Regge action, this is often considered as a check for the approximate embedding of GR into SF models, which is a desideratum for any proposal for quantum gravity.
One has to note though that the full relation of a quantum gravity theory to GR should involve not only a semi-classical but also a continuum limit.

\subsubsection*{SF models and LQG dynamics}

SF models cannot only be related to QRC but also to LQG. 
In fact, one of the main motivations of research on SF models as a proposal for quantum gravity is as a covariant formulation of LQG dynamics. 
With an optimistic attitude to this relation, particular SF models are even referred to as ``covariant LQG" nowadays \cite{Rovelli:2011tk,Rovelli:2014vg}.
With a less optimistic attitude one might question the possibility of a strict relation of LQG to currently known SF models (though LQG dynamics might still turn out to be of the SF type, \ie provide a new SF model )\cite{Thiemann:2014fn}.
%In the current setting, as mentioned before, the most interesting intertheoretic relations  are those between the observable structure of theories, \ie the classes $\mpp$, and not mathematically strict relations on the theoretical level (of classes $\ma$).
Here I will discuss briefly how LQG and SFs indeed can be related in this sense.

First, one has to match the theories' kinematical Hilbert spaces to be able to compare then the respective projections onto the physical Hilbert subspaces.
%The dynamics cannot be  compared directly since LQG is formulated as a canonical theory while SF dynamics are defined through a path integral.
In the discussion of the distinctness of SF degrees of freedom in \sec{differences}, I have already indicated that under various further specifications (in particular on the LQG side) the kinematical Hilbert spaces of SFs and LQG could match in the case of particular models.
While this is rather straightforward for the algebraic components, \ie the particular gauge group and connection, the possibility of such choice of models is more restricted with respect to the discreteness structures (as discussed in \sec{conceptual-relations}).
Since the specification of a SF model cannot relax the condition (in the present theory explication \ref{sf-theory}) that amplitudes are based on cellular decompositions $|\cp|$ %of a 4-manifold 
on whose boundary $|\bs\cp|$ kinematical states have support on, one has to specify a model of LQG where kinematical states are restricted to embedded graphs $\bge$ which agree with such boundaries $|\bs\cp|$.

Having the kinematical Hilbert spaces matched, it is possible to show for the simplicial model in $\std=3$ dimensions that the projector $\pi_\phys$ \eqref{lqg-projector} can be expanded explicitly %(in unphysical coordinate time) 
resulting in a spacetime triangulation \cite{Noui:2005js,Perez:2013uz}; this LQG model matches exactly the Ponzano-Regge SF model \eqref{pr-model} discussed above.
The case of $\std=3$ spacetime dimensions is particularly convenient since the Hamiltonian constraint $\widehat{C}$ is simply the curvature of the gauge connection.
But similar explicit constructions are also possible in $\std=4$ dimensions \cite{Alesci:2008jg,Alesci:2012cu,Alesci:2013cd,Thiemann:2014fn}.

% also ideas for a path integral of the master constraint \cite{Han:2010jt} -- of the form of a SF model?

\subsubsection*{GFT as completion of SF models}

A particular strong relation exists between SFs and GFT where basically for any SF model one can find a corresponding GFT model. 
This relation is based on the observation that the amplitudes $Z_\gft^\fdiagram(\qs,\qs')$ in the perturbative expansion of the GFT path integral \eqref{gft-path-integral} in Feynman diagrams $\fdiagram$ have exactly the structure of a SF amplitude $Z^\cp_\sfm(\qs,\qs')$ \eqref{sf-path-integral} if $\fdiagram$ is equivalent to the dual 2-skeleton of $\cp$, gauge groups agree, and the kinetic and interaction terms $\Kbb$ and $\Vbb$ in the GFT model are chosen such that GFT amplitudes match the local amplitudes $A_f,A_e,A_v$ of the SF model. 
I will explain the relation of $\fdiagram$ and $\cp$ in detail in \sec{molecules} where the common combinatorial structures will be called ``spin-foam molecules''.
The matching of amplitudes will be discussed in more detail in \sec{molecule-formulation}.

In the explication of a SF structure with a single cellular decomposition \eqref{sf-path-integral}, this relation can be seen as an approximate embedding relation in the following sense.
Taking GFT seriously as a perturbative QFT, in particular assuming that all coupling constants $\lambda_i$ in the GFT action are small, the correlation function between to states $\qs,\qs'$ has at its lowest order in coupling constants a single amplitude $Z_\gft^\fdiagram(\qs,\qs')$ which is at the same time a correlation function of a SF model. 
Thus, the SF correlation approximates the GFT correlation. 
On the level of theory models this means that the SF model is approximately embedded into the GFT model.
%In this sense GFT is sometimes understood as a completion of SF models.

In the explication of SF models in terms of a summing description \eqref{sf-path-integral-sum}, the embedding relation of SF models into GFT has a different connotation.
%Contrary to the present theory explication \ref{sf-theory}, a sum over complexes is sometimes understood as part of the definition of SF dynamics in the first place. 
If the class of complexes in a particular SF structure matches the class of complexes generated by a particular GFT model, the weights $w_\cp$ as derived from the GFT perturbative series can be used to specify further the SF model.
Thus, GFT can be understood as a completion in the sense of specifying these weights.
In any case, the relation between SF and GFT amplitudes gives rise to an embedding relation of SF models into GFT.

\subsubsection*{Dynamical triangulations, tensor models and GFT}

An obvious similarity between perturbative GFT and CDT is that both give a path integral description for a sum over complexes, though quite different ones on first sight. 
This manifests itself both in the classes of complexes summed over and the amplitudes associated with each complex.
In CDT the complexes are simplicial and of fixed topology (allowing even for a specific foliation) and the amplitudes are given by the exponential of the Regge action $S_\regge$.
On the other hand, GFT Feynman diagrams correspond most generally to 2-complexes and their amplitudes result from integration of inverse kinetic and vertex kernels $\Kbb^{-1}$ and $\Vbb$, as usual in QFT.

This comparison spells out that GFT provides a more general framework which could cover CDT.
Leaving aside the causality aspect of CDT for the moment, this is indeed possible in an approximative sense.
To this end one has to specify first the GFT gauge group $G$ such that it generates trivial (equilateral) discrete geometries. This is easily obtained choosing, for example, $G=U(1)$ and trivial kinetic and interaction kernels and defining the path integral as depending on a sharp cutoff $\size$ in the representation space such that the full sum is recovered in the limit $\size\ra\infty$. 
This turns the group field $\phi(g_1,\dots,g_\copies)$ effectively into a tensor $\phi_{m_1\dots m_\copies}$ which is why such models are also called tensor models \cite{Gurau:2012hl}.
Thus, tensor models generate a sum over simplicial complexes which, in the absence of any further geometric data, can be considered as equilateral triangulations of some topological manifolds, with various topologies.

The crucial result is now that certain tensor models have a dominant contribution of triangulations of fixed spherical topology in their large-$\size$ expansion with Regge-action weights \cite{Gurau:2011aq,Bonzom:2011cs,Gurau:2012ek,Gurau:2012hl}. 
In this regime, such a GFT model is thus approximately equivalent to a model of dynamical triangulations. 
The ensemble of triangulations in the GFT model does not contain all spherical triangulations, but the same ensemble can be specified in a model of dynamical triangulations. 
Furthermore, the effective behaviour which is that of branched polymers is rather insensitive to the difference in the precise ensemble of triangulations.
Indeed one can check that characteristic quantities agree on both sides, such as for example the spectral and Hausdorff dimension \cite{Gurau:2013th}.

A relation of such a tensor model of GFT with CDT is possible in various ways.
First of all, the branched polymer regime of the tensor model can already be related to a branched polymer regime of CDT on the empirical level. 
In CDT one is observing three phases in parameter space  one of which is a branched polymer phase like in dynamical triangulations \cite{Ambjorn:2005fh,Ambjorn:2012vc}. Concerning observables, these phases basically agree.
Thus the described tensor model of GFT is also approximately equivalent to CDT in that regime of parameters on the empirical level.
On the other hand, current work in progress indicates that it is also possible to specify further tensor models such as to implement directly the CDT notion of causality.
 Approximate equivalence to CDT is then expected even on the theoretical level.

%===========================================================================

\subsection{Discrete quantum gravity as a crystallization process \label{sec:crystallization}}% \label{sec:epistemology}}

Having discussed in detail the various relations between the five theories as explicated in \sec{qg-theories}, I can now address the question in what sense one can speak of a convergence process in discrete quantum gravity.
I will argue that the better concept to describe the net of theories is that of a crystallization process.
Still, this has potential relevance for the epistemic status of these theories.
 
%==============================================================

%\subsection{No strict convergence \label{sec:no-convergence}}
\

Given the theory relations explained in the last \sec{dynamical-relations} one can clearly motivate an (oversimplified) idea of convergence:
when QRC and SF theory are approximatively equivalent and SF models can be understand as LQG dynamics, they basically seem to agree. 
Since SF models can be embedded into GFT and CDT might turn out as a special case of GFT as well, everything seems to indicate a convergence into a single theory, possibly described by the framework of GFT.

This picture of convergence is certainly imprecise and oversimplified for several reasons.
One reason concerns the generality of theories and in particular the distinction between the core, or basic framework, of a theory and its particular models. 
Without going into a detailed explication of the concept of theory convergence, or theory reduction, convergence is usually understood on the most general level of the theory. 
More precisely, in a case of convergence a theory T$_1$ which reduces to a theory T$_2$ is covered in full generality by T$_2$, usually in a particular regime or as a particular model of T$_2$.
For example, there is a convergence from Newton's theory of gravity to GR in the sense that Newton's gravitational-force law, in full generality, reduces approximately to GR in the special case of a weak gravitational field.
Such a relation cannot be identified clearly in any of the cases discussed. 
% that's a point to think more in detail about!
% One might consider QRC -> SF as a convergence in the sense of embedding.
Only, the embedding of SF models into GFT might be considered as most similar to such a reduction since for any SF model there is a corresponding GFT model.
% ...

%empirical consequences depend on the particular (SF, GFT,...) model! These would decide on the right theory.}
Another desideratum for convergence regards the accuracy of theories which is related to the very general challenge of absence of empirical guiding in quantum gravity.
The asymmetry of the convergence relation between two theories manifest itself not only in the level of generality but also in the fact that, even in the applications where the predictions of the reduced theory T$_1$ agree with those of T$_2$ up to some accuracy, T$_2$ is still expected to provide further corrections at a level of higher accuracy.
In the above example, even if Newtonian gravity and GR both describe most of celestial mechanics equivalently in some approximation, one needs GR to account for the perihelion of Mercury, for example.
For quantum gravity we do not have any observations which could differentiate between proposals. Thus, in the cases of approximate equivalences such as for example between certain QRC and SF models in a semi-classical regime, there is no way to judge in which way a potential reduction would be directed.
Moreover, even in cases of embedding relations such as for SF models into GFT, it is not clear whether higher order terms in the GFT couplings turn out to be indeed ``correction" terms to the lower order ones, \ie provide more accurate values to observables and correlations, or whether a simple SF model on a single complex already provides the more precise description.
Thus, observational data would be needed to settle this issue.

\

%==============================================================

%\subsection{Crystallization process \label{sec:crystallization}}

The idea of convergence understood in a broader sense, \ie the intuition that there is some kind of evolution process in the quantum gravity theories under investigation towards a single theory, 
can still be made more precise.
To this end I propose to consider it as a particular instance of theory crystallization \cite{Moulines:2010hi,Moulines:2013jn,Moulines:2013fy}:
\begin{conjecture}
There is a %(theoretical) 
crystallization process in progress in the quantum gravity theories described in theory explication \ref{qrc-theory}--\ref{gft-theory}. %,cdt-theory,lqg-theory,sf-theory,
\end{conjecture}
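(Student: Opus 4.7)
The plan is to proceed in three movements, mirroring the structuralist methodology already adopted in \sec{qg-theories} and \sec{relations}. First, I would recall the precise characterization of a crystallization process as introduced in \cite{Moulines:2010hi,Moulines:2013jn,Moulines:2013fy}: a diachronic pattern in which several initially distinct basic theory-elements develop, over time, an increasingly dense web of intertheoretic links at the level of specialized theory-elements (models), without either merging into a single theory-net or being reduced one into another in the strict sense. The task is then to exhibit the five basic theory-elements $\textsc{t}\in\{\qrc,\cdt,\lqg,\sfm,\gft\}$ together with their model specializations and to verify the defining features of crystallization against the catalogue of relations assembled in \sec{conceptual-relations} and \sec{dynamical-relations}.

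Second, I would argue that the distinctness criterion needed for a genuine crystallization (as opposed to a trivial case of several labels for a single theory) is already met: \sec{differences} shows that $\mp^{\textsc{t}_1}\neq\mp^{\textsc{t}_2}$ for any two of the five approaches, since their conceptual frameworks differ essentially in the amount of fixed background structure and in the choice of geometric degrees of freedom. In particular, the partial ordering of background structures (i)--(v) places the five theories at strictly different positions, which rules out an identification of their basic theory-elements. This establishes that what is to be shown is a nontrivial net of intertheoretical links and not a disguised identity.

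Third, and this is the substantive step, I would go through the four families of dynamical relations catalogued in \sec{dynamical-relations} and check that each is a link between \emph{specializations} (models) rather than between basic theory-elements, and that each is only partial (approximate, regime-dependent, or embedding rather than equivalence). Concretely: (a) the Ponzano--Regge/QRC correspondence via \eqref{6j-asymptotics}--\eqref{pr-asymptotics} is an approximate equivalence of one simplicial SF model with one particular QRC measure \eqref{qrc-measure} in the large-spin regime; (b) the LQG$\leftrightarrow$SF link through the expansion of $\prjct_\phys$ in \eqref{lqg-projector} is again model-by-model, restricted to boundary-compatible graphs $\bge\cong|\bs\cp|$; (c) the SF$\hookrightarrow$GFT embedding, although systematic, is an embedding of Feynman amplitudes $Z_\gft^\fdiagram(\qs,\qs')$ at a fixed order in the couplings $\lambda_i$ in \eqref{gft-path-integral} and thus again tied to the specialized choices of $\Kbb,\Vbb_i$ and of gauge group; (d) the CDT$\leftrightarrow$tensor-model link holds only in the branched-polymer phase common to both. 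Collecting these observations, the links form a connected but incomplete web whose nodes are model specializations; no single theory-element absorbs the others in full generality, and no pair of basic theory-elements becomes equivalent. This is precisely the structural signature of crystallization in the sense of Moulines.

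The main obstacle I expect is neither in (a) nor in the bookkeeping of (b)--(d), but in the first movement: pinning down a sufficiently sharp version of the Moulines notion of crystallization so that the structuralist criteria can be applied unambiguously to empirically underdetermined theories such as those at hand. In particular, the usual distinguishing mark between crystallization and convergence appeals to higher-accuracy corrections that decide the direction of a reduction, and such corrections are, in quantum gravity, not (yet) accessible to observation (as emphasised in the discussion of $\ia^{\textsc{t}}$). One therefore has to argue that the asymmetry of the links established in step three is already \emph{formal} enough---in the partial-ordering sense of generality of models and in the one-sided embedding of amplitudes---to secure the classification as crystallization without invoking empirical discrimination. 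Once that conceptual point is granted, the remaining verification is essentially a rereading of \sec{dynamical-relations} through the structuralist lens.
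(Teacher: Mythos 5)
Your proposal follows essentially the same route as the paper: it verifies the Moulines-style criteria by citing the distinctness of the basic theory-elements established in \sec{differences} and the model-level (approximate, regime-dependent) equivalences and embeddings catalogued in \sec{dynamical-relations}, and it correctly identifies that the links live at the level of specializations rather than of the theory cores, which is exactly what separates crystallization from convergence here.

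One point where you diverge from the paper's own treatment: you build ``no eventual merger'' into the definition of crystallization itself, whereas the definition adopted in the text includes as its fourth criterion (d) that the process \emph{concludes} with the establishment of a single basic theory-element. The paper's move is to observe that (d) is presently not satisfied and to therefore weaken the claim to a crystallization process \emph{in progress}, characterized by criteria (a)--(c) alone --- this is precisely why the statement is phrased as a conjecture rather than a theorem. Your version quietly absorbs this caveat into the definition, which makes the argument look more complete than it is; you should instead state explicitly that the terminal criterion is deferred. Your closing worry about empirical underdetermination matches the paper's discussion of why no direction of reduction can be fixed, and the paper resolves it the same way you suggest: agreement of models (criterion (c)) does not require specifying which of two approximately equivalent models is the more accurate one, so the classification as crystallization survives without observational input.
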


The proposal is conjectural insofar as I can only sketch the concept of crystallization process and its application to the quantum gravity theories under consideration here.
In its formal definition \cite{Moulines:2013jn}, successfully applied to the case of thermodynamics \cite{Moulines:2013fy},  the concept relies in particular on the finer notion of an empirical theory as a tree-like net consisting of a basic theory element and its specializations, the models in physics' jargon.
While I have explicated the basic theory elements semiformally, I have described relevant models only informally in the discussion of the theories' relations.

Rephrasing it in these terms a $\emph{crystallization process}$ is characterized by the following four properties \cite{Moulines:2013jn}:
\begin{itemize}
\item [(a)] There are different competing theories during one and the same period of the process which share some, but not all, of their structures' components.

\item [(b)] The theories (basic theory elements) are essentially different, \ie the structures in $\mp$, $\ma$ and $\mpp$ of each % successive - why the diachronic emphasis here?
theory differ, though they share some components.

\item [(c)] Some of their models (specializations) are equal or nearly equal, in particular they share some physical systems $\ia$ they are intended to be applied to, as well as some theoretical concepts.

\item [(d)] The process of crystallization concludes with the establishment of one full theory (in particular only one basic theory-element).

\end{itemize}

Apparently, the last criterion (d) is presently not met in the case of the quantum gravity theories. 
For this reason I suggest a crystallization process \emph{in progress}, meant as a generalization characterized only by properties (a) -- (c).
Let me now argue that these are indeed met and come back to point (d) in the end.

The fulfilment of the general criterion (a) is rather obvious for the quantum gravity theories.
In the way I have explicated QRC, CDT, LQG, SF models and GFT they are different theories, as further argued in \sec{differences}.
As proposals for a quantum theory of gravity they are furthermore direct competitors.
This manifests itself in particular in their structures $\mpp$ which contain observables and correlations of 3-dimensional geometries.

To make the point (b) clear, I have argued in \sec{differences} in detail how their structures in $\mp$ are distinct to each other. I have explicitly discussed the differences in the respective spacetime background structure and concerning the gravitational degrees of freedom thereon. 
Dynamics do not alter these components such that classes $\ma$ are essentially different as well. Moreover, in each case dynamics are implemented in a distinct way.
However, the comparison in \sec{conceptual-relations}, in particular of the kind of discrete structures, has shown that some of their components agree, like for example the fixed topological background.

Finally, the intertheoretical relations discussed in \sec{dynamical-relations} support the statement (c). 
There, I have presented (approximate) equivalence and embedding relation between models of each theory. 
These are connecting the theories mostly on the level of $\ma$ and $\mpp$.
Then, the relations extend easily to concrete intended applications $\ia$ 
which are basically the same for all of them as explained in the beginning of \sec{qg-theories};
They are all meant to explain early universe cosmology as well as black hole physics.

Note that criterion (c) makes the crucial difference to a notion of convergence.
For theory crystallization it is enough if certain models agree (which furthermore share the same physical interpretation). 
Furthermore, the notion of crystallization does not imply any asymmetric relations between the crystallizing theories. In particular, it is not necessary to specify which of two approximately equivalent models provides the more accurate description of phenomena which is not possible for theories of quantum gravity at the present state of experiment.

\

The interesting question is now whether in the case of quantum gravity theories one can make any claims about the missing point (d).
Quite in general, if one finds furthermore any correlations or connecting schemes between the criterion of agreement of models (c) with criterion (d) this might allow to make some general statements about a potential crystallized theory even if the process is yet in progress.

In the present case one could make an educated guess about the details of the eventual, crystallized theory on the basis of the present relations:
\begin{itemize}
\item The discussion in \sec{conceptual-relations} has shown that purely combinatorial structures suffice as the discrete structures quantum states and spacetime histories have support on.
\item The dynamical relations between LQG, SF models and GFT presuppose a description of gravitational degrees of freedom based on parallel transports of the Ashtekar-Barbero connection, \ie corresponding to the extension of the Plebanski action by a Holst term in the classical action.
\item Relations between GFT and SF models indicate a preference for dynamics in\-volving a sum over a class of complexes.
% \item other directions: causality
\end{itemize}
Indeed, these findings serve as a partial motivation for the choice of theoretical framework in this thesis.
In the light of a lack of more precise hints for a unique crystallized quantum theory of gravity, the investigation motivates also to keep the setting general and applicable to all theories in the net if possible, as done in the phenomenological investigations in \sec{dimensional-flow}.

\

From a philosophical perspective, the probably most interesting aspect of the identification of a crystallization process in discrete quantum gravity approaches is the question of epistemological significance of such a process: 
does the fact that these theories might be evolving into a single theory already increase their epistemological status? 
This question is of high relevance because the most obvious criteria of theory evaluation such as predictivity do not apply to quantum gravity where clear observational evidence is missing.

One can certainly argue that a complete process of theory crystallization provides an additional value in epistemological justification of the theory.
Apart form standard criteria for theory evaluation such as a theory's \emph{systematization force}, determined by the number of successful applications, and its \emph{informativeness}, or predicitivity, accomplished for example in terms of restrictive dynamical laws and high accuracy, there is also the third important criterion of \emph{uniformity} which can be explicated as indivisibility of a theory in a precise sense \cite{Bartelborth:2002us,Bartelborth:1996uj}. 
This criterion is particularly relevant for a coherence-theoretical account of epistemological justification \cite{BonJour:1985tu}.
From this perspective, the partial intertheoretical relations defining the notion of theory crystallization provide an important improvement of theories, even if it is not the case that one theory is fully embedded in another which evidently raises the value according to the first two criteria (as is the case in theory convergence).
The remaining question in the particular case of quantum gravity theories, however, is whether the process will in fact find an end in terms of a successful crystallized theory.

\ 

In this section I have discussed the relation of the five quantum gravity proposals from various perspectives. 
In \sec{differences} I have made explicit that the theories are indeed distinct, explaining in detail their differences with respect to background structure and geometric degrees of freedom.

Against this background I have argued then in \sec{conceptual-relations} that, nevertheless, the objects in their conceptual setting are related in many ways. In particular they can be matched in many cases by further specializations corresponding to particular models of a theory which I have illustrated with concrete examples focusing on the kind of discrete spacetime structure present in the different theories.

This sets the stage for the analysis of relations on the dynamical level in \sec{dynamical-relations}.
Concentrating on four bilateral relations I have argued that there are QRC models and SF models which approximately agree in a semi-classical regime, that LQG models specified by a particular construction of the Hamiltonian constraint might match with SF models, that for every SF model there is a GFT model defining a sum over a class of complexes weighted by that SF amplitude, and, finally, that certain ensembles of dynamical triangulations can be generated by specific GFT models, also known as tensor models.

There are certainly more such relations that one could mention here.
For example, particular models of LQG can be reformulated as second quantized theories in terms of GFT  \cite{Oriti:2013vv}, giving rise to an embedding relation similar to that of quantum mechanics into quantum field theory.
However, for the purpose of the present investigation the relations discussed are sufficient as they already connect all of the five theories.

Finally, in \sec{crystallization}, I have argued that the theory relations satisfy neither the criteria of generality nor of asymmetry in accuracy essential for a convergence relation between theories.
Therefore, it turned out to be more appropriate to regard the net of quantum gravity theories as an instance of a crystallization process.

\chapter[Discrete space(time)]{Discrete space(time): combinatorial and differential structure of complexes \label{ch:discrete-spacetime}}

%Chapter on the mathematical structure of discrete space(-time): Discussion of combinatorics of generalized spin foams/ GFT diagrams and their boundary states as well as definition of exterior calculus on these. 

%What kind of complexes?
%
%What layers of structure? Necessary to which extend?
%
%What level of heuristics/preciseness in "discretization"? 
%
%Relevance for matter
%
%Important clarifying work for the field.
%As such it clarifies relations between definitions: exemplification of the crystallization structure.

An essential property of all the approaches to quantum gravity discussed in the last chapter is that space and spacetime involve some sort of discreteness in the sense that quantum states and histories have support on certain kinds of complexes.
As discussed in \sec{conceptual-relations}, the precise structure of complexes may differ with respect to various properties such as dimension, simplicial structure in contrast to more general cellular or polyhedral one, as well as topological, piecewise linear or even analytic embedding structure.

While in most approaches these complexes are indeed still based locally on continuum structures, either through an embedding or a definition of cells as (homeomorphic to) pieces of Euclidean space, 
some approaches demand a more radical, purely combinatorial perspective in which cells have no inner structure at all.
The strongest case in this direction is made by GFT where degrees of freedom are necessarily fundamentally discrete in this sense because there simply is no background structure at all and the Feynman diagrams are in fact purely combinatorial objects.
%Complexes in GFT necessarily purely combinatorial: as for the notion of abstract simplicial complex, but extended to polytopes%
%\footnote{in the spirit of e.g. \cite{Smerlak:2011vt}; but note that the class of abstract polytopes is richer than the one of combinatorial structure induced by polytopes. e.g. the 4-dimensional hendecachoron (11-cell) \cite{McMullen:2009ff}}

Here I would like to propose that a similar, purely combinatorial account is possible in all other approaches as well.
%\cdo{ !
%in LQG the discrete nature of the combinatorial degrees of freedom is put into question by the actual construction of the Hilbert space and the imposition of cylindrical equivalence relations. The discreteness claim there is resting rather on the discreteness of the spectra and the existence of an area gap.
%}
Even in proposals which are understood as continuum theories in an essential way, as for example LQG, the resulting definitions and structure might eventually be recast in a purely combinatorial way.
The major obstacle seems to be that the notions of complexes mostly used in the quantum gravity literature are taken from well established algebraic-topological definitions and an appropriate combinatorial equivalent is simply not known.
For simplicial complexes there is in fact a well known purely combinatorial definition \cite{Kozlov:2008wc}, but even in this case it is employed in the quantum gravity literature very rarely (for example in \cite{Gurau:2010iu}).

The aim of this chapter is thus to introduce a purely combinatorial account for any structure of complexes as they occur in the quantum gravity literature.
To this end, I will extend the combinatorial definition of abstract simplicial complexes to polyhedral complexes and generalize these further to include also loops.
Moreover, I will specify the conditions under which topological information in the sense of homology can be obtained from such combinatorial structures.
While many aspects of such a purely combinatorial formulation can already be found in Reidemeister's original axiomatic approach to homology \cite{Reidemeister:1938vf}, the definition of abstract polyhedral complexes as I introduce them is to the best of my knowledge new.

This sets the stage to further understand the structure of GFT Feynman diagrams as subdivisions of generalized polyhedral complexes.
Based on \cite{\ORT}, I will give a systematic and exhaustive presentation of the molecular structure that GFT and SF amplitudes have support on and explain how this relates to simplicial, polyhedral and generalized complexes. 
Showing how arbitrary such spin-foam molecule complexes can be constructed from a simplicial subclass, this provides also the basis for the generalization of GFT presented in chapter \ref{ch:dqg}.

If combinatorial complexes are a serious proposal for spacetime structure, it should also be possible to equip them with a differential structure in order to be able to define physical objects such as fields thereon.
Given a homology structure this is indeed possible.
I will introduce a precise definition for fields on such combinatorial complexes and the appropriate kind of discrete exterior calculus.
With a further assignment of geometric data such as volumes to all cells, this extents even to a calculus on discrete geometries.
This is a straightforward extension of my work in \cite{\COTa} where such a calculus based on \cite{Desbrun:2005ug,Grady:2010wb} has been presented for discrete geometries of a simplicial type.
Furthermore, it is of relevance for the investigation of effective-dimension observables of discrete quantum geometries in chapter \ref{ch:dimensions} providing the definition of a Laplacian in that context.

%[
%So far there are only rather heuristic discretization schemes the gravitational degrees of freedom on complexes, and some proposals for matter thereon.
%For the notion of spectral dimension of discrete geometries a more precise definition is needed. 
%This is obtained by generalization of discrete exterior calculus developed for (non-abstract) simplicial complexes.
%]

\

This chapter is structured into three sections.
Section \ref{sec:combinatorial-complexes} introduces the notion of a combinatorial complex and various subclasses of these. 
In particular, I will present \emph{abstract polyhedral complexes} as a natural generalization from abstract simplices to abstract polytopes \cite{McMullen:2009ff}. 
Then I define \emph{generalized polyhedral complexes} based on the graded structure of simplicial subdivisions which may furthermore contain all kinds of loops,
and discuss the orientability of all these complexes.

Section \ref{sec:molecules} focuses on the structure of combinatorial complexes as they enter the stage in the SF models and group field theories.
Along the lines of \cite{\ORT} the molecular structure of complexes built from atomic parts, coined \emph{spin-foam molecules}, is systematically discussed. 
Especially, I present constructive proofs for the facts that any such spin-foam molecule can be obtained from regular loopless atoms and, even more, that for any boundary graph a molecule consisting of simplicial atoms exists.
This is the basis for the new group field theories to be presented in \sec{dw-gft}.

Finally, \sec{calculus} introduces a discrete exterior calculus on orientable combinatorial complexes.
The space of cochains provides a natural concept of exterior forms and an inner product is induced by a notion of Hodge duality based on the dual complex together with a geometric interpretation, \ie an assignment of volumes to cells.
I introduce a useful bra-ket notation for the resulting $L^2$ space of exterior forms.
The chain-complex structure and Stokes theorem provide then a natural definition of an exterior differential.
As the basis for the investigation of effective dimensions (chapter \ref{ch:dimensions}), I will finally discuss in detail the resulting Laplacian and its properties.

This whole chapter is written in a mathematics fashion, in contrast to the rest of the thesis, acknowledging the technical nature of the topic.
Section \ref{sec:molecules} and subsection \ref{sec:polyhedral} are based on the publication \cite{\ORT} and
section \ref{sec:calculus} on the publication \cite{\COTa}.

%===========================================================================

\section{Combinatorial Complexes}\label{sec:combinatorial-complexes}

%"A group or system of different things that are linked in a close or complicated way" (Oxford dictionary)
%"[ein] geschlossenes Ganzes, dessen Teile vielf\"altig verkn\"upft sind" (Duden)

The concepts of complexes usually used in the quantum gravity literature (more or less explicitly) are those which have proven most natural and useful in algebraic topology \cite{Hatcher:2002ut}.
These are either simplicial complexes based on $\p$-simplices as smallest convex subsets in $\R^\p$, thus relying in particular on a piecewise linear structure (at least homomorphically).
Or, far more generally, they are CW complexes, also referred to as cell complexes, which are obtained inductively gluing $\p$-balls along $(\p-1)$-balls in the disjoint union of their boundaries.

Neither of these two notions of complex is appropriate for GFT where the generated spacetimes are necessarily of fundamentally discrete type.
While the combinatorial structure of simplices is too restrictive for some versions of discrete quantum gravity, CW complexes are far too general. 
Moreover, both are defined in an explicit topological manner and are relying on metric spaces, either $\R^\p$ or balls and spheres. But a GFT provides only purely combinatorial structures for spacetime.
These facts motivate to propose a purely combinatorial class of complexes to be identified as the appropriate spacetime structure in discrete quantum gravity.
This is the topic of this section, extending \cite{\COTa} and \cite{\ORT}.

For the case of simplicial complexes there is already a well known combinatorial definition (\sec{simplicial}). 
In \sec{complex}, I recapitulate the very general combinatorial concept of a complex and its various features \cite{Reidemeister:1938vf} which seems to be sunken into oblivion despite providing the most appropriate mathematical framework for discrete space in quantum gravity.
In this setting, I propose the concept of \emph{polyhedral complexes} in \sec{polyhedral} which is based on the notion of abstract polytopes \cite{McMullen:2009ff,Danzer:1982dp}. 
In \sec{subdivision}, I discuss then the graded structure of simplicial subdivisions of combinatorial complexes. This structure allows to reconstruct the original complex from a simplicial one. But the result is not a polyhedral complex in every case, giving rise to the notion of \emph{generalized polyhedral complexes} which include various types of loops.
Thus, in absence of further restrictions, the most obvious draw-back of this is that they prevent orientability of the complex, and thus a chain-complex structure, which is the topic of the final \sec{orientation}.

%==============================================================

\subsection{Abstract simplicial complexes}\label{sec:simplicial}

For simplicial complexes there is already a well known combinatorial definition. They are therefore an illustrative and paradigmatic example of a combinatorial complex to start with. 
Originally, \eg in \cite{Reidemeister:1938vf}, they are defined as complexes consisting of $\p$-dimensional cells which have boundary structure isomorphic to that of simplices in $\R^\p$. 
This rather implicit definition can be made explicit in a combinatorial way \cite{Kozlov:2008wc}:
\begin{defin}[{simplicial complex}]
\label{def:simplicial-complex}
A \emph{finite abstract simplicial complex} $\simc$ is a finite set of vertices $\simcp{0}=\{v_{1},v_{2},\dots ,v_{\np0}\}$ together with a collection (multiset) of ordered subsets of $\simcp{0}$ such that
\begin{description}
\item[(SC)\label{SC}] for every $\sigma\in \simc$ and $\sigma'\In\sigma$ also $\sigma'\in \simc$.%
\footnote{
Note that the empty set is a subset of any set. Thus it is considered as the unique (-1)-simplex, $\simcp{-1}=\{\emptyset\}\ne\emptyset$.
}
\end{description} 
A subset $\sigma\in \simc$ is called a \emph{simplex}. Each simplex has a dimension defined by its cardinality, $\dm(\sigma)=|\sigma|-1$.
\end{defin}

From a more general perspective on complexes, the essential structure of abstract simplicial complexes is their partial ordering \cite{Kozlov:2008wc}:
\begin{defin}[{poset representation}]
For a finite abstract simplicial complex $\simc$ the \emph{face poset} $\mathcal{F}(\simc)$ is the partially ordered set (poset) whose elements consist of all nonempty
simplices of $\simc$ and whose partial-ordering relation is the inclusion relation on the set of simplices.
\end{defin}
It will turn out in the following that partially ordered sets are the appropriate mathematical setting to extend from simplicial to  polyhedral complexes in a combinatorial way.

%==============================================================

\subsection{A general notion of complex}\label{sec:complex}

%The broadest notion of complex as introduced by Reidemeister is 
%\cjt{brief review how its related to the usual piecewise linear complexes common in SF models.?}
%
%Though what is called "combinatorial topology" is related to what is called "piecewise linear" topology nowadays.

While its hard to find a definition of \lq complex\rq\ in a most general sense in more recent textbooks and literature, it has been defined in the early days \cite{Reidemeister:1938vf}.
In this section I recall the definitions of \cite{Reidemeister:1938vf} which are necessary for the introduction of abstract polyhedral complexes and their generalizations in the following sections.

\defin[\bf{complex}]\label{def:complex}{
A \emph{complex}%
\footnote{There is a slight modification to the original definition here in that anti-symmetry of the ordering relation $<$ is not demanded in the original \cite{Reidemeister:1938vf}.}
$(\cp,\dm,\le)$ is a set $\cp$ of elements $c$, called \emph{cells}, to which one assigns a natural number $\dm(c)\in\{0,1,2,\dots\}$ and a partial ordering $\le$ on $\cp$,   called \emph{bounding}, with the property:
\begin{description}
%%%\item \cjt{is $x<y$ but $d(x)>d(y)$ possible? What's the difference to graded posets?}
\item[(CP)] If $c>c''$ and $\dm(c)-\dm(c'')>1$, then there is a cell $c'$ such that $c>c'>c''$.
\end{description}
In the following  a complex is often just denoted $\cp$, leaving $\dm$ and $\le$ implicit.
}

\begin{remark}[{Hasse diagrams}]
A good way to visualize posets are Hasse diagrams. 
These are graphs drawn on the plane where vertices represent the poset elements and edges the transitivity reduced ordering relations, \ie there is an edge for every two elements $c>c'$ for which there is no $c''$ in the poset such that $c>c''>c'$.  
A canonical way to draw a Hasse diagram for a complex $\cp$ is to use its graded structure  $\cp=\bigcup_{\p} \cpp{\p}$ induced by the dimension map $\dm$ and set all $\p$-cells at the same height in the plane, usually bottom to top with increasing dimension. (An example appears further below, figure. \ref{fig:hasse}).
\end{remark}

%\begin{figure}
%\begin{center}
%%\tikzsetnextfilename{hasse}
%%\input{figures/hasse.tikz} 
%\caption{Arbitrary complex}
%\label{fig:hasse}
%\end{center}
%\end{figure}

Already for the most general notion of complex one can introduce the concept of a dual complex. 
A \emph{dual poset} is the poset with inverted partial order and it is straightforward to define a dual dimension map for the dual of a complex such that it is a complex again:
\begin{defin}[\bf{dual complex}]
\label{def:dual-complex}
Let $(\cp,\dm,\le)$ be a complex with all cells $c\in\cp$ of finite dimension $0\le\dm(c)\le\m$. 
The $\m$-dual complex $\star(\cp,\dm,\le) = (\cps,\dm^\star,\le^\star)$ given by a bijective map $\star : \cp \ra  \cps$ is defined for all $c,c'\in \cp$ by the further properties
%\begin{itemize}
\begin{eqnarray}
\dm^\star(\star c)&:=&\m-\dm(c)  \\%\quad \textrm{and} \quad 
\star \cl & \le^\star & \star \cl' \quad \textrm{ if and only if } \cl\ge \cl' .
\end{eqnarray}
%\end{itemize}

It is straightforward to show that the condition (CP) is satisfied for $\cps$.
Moreover, the $\star$ map is unique up to isomorphisms preserving the structure of the complex.

Note that, while $\star \cp = \cps$, the grading is inverted: $\star \cpp{\p} = \cpsp{\m-\p}$ for every $\p$.
\end{defin}
Of particular interest is the $\m$-dual for a complex $\cp$ with cells of maximal dimension $\m$. 
Then one can call $\cps$ just the \emph{dual} complex.
In principle, though, one can consider the $\m$-dual also for a complex $\cp$ with $\cpp{\m}=\emptyset$. Then it is necessary to state explicitly that $\cps$ is meant to be the $\m$-dual.
For example in LQG and SF models, main objects are faces dual to graph edges or to 2-complex faces. These have to be understood as the $\sd$- or $\std$-dual respectively, even though higher cells are missing in these complexes.

\

Already for combinatorial complexes one can define a natural notion of a boundary which allows further to specify various classes of subcomplexes:

\begin{remark}[\bf{subcomplexes: boundary, bundle, section}]
For a subset of a complex $\cp$ the axiom (CP) is not fulfilled in general. Instead subcomplexes can be obtained in the following manner \cite{Reidemeister:1938vf}:

The \emph{boundary} of a cell $\cl\in\cp$, defined as the set of all bounding cells of $\cl$ in $\cp$
\[\label{cellboundary}
\bs\cl := \{\cl'\in\cp \mid \cl'<\cl \},
\]
is a complex, as well as its \emph{hull} $\ol c =\{c\}\cup\bs c$.

The other way round, also the \emph{bundle} $\bsi\cl$ (translated from German ``B\"uschel'', not ``B\"undel", in \cite{Reidemeister:1938vf})  of a cell consisting of all the cells in $\cp$ which $c$ is bounding, \ie
\[\label{cellbundle}
\bsi\cl := \{\cl'\in\cp \mid \cl'>\cl \},
\]
is a complex.
I have chosen the notation \lq $\bsi$\rq~  since obviously the boundary of the dual cell $\star\cl$ in $\cps$ is the dual of $\bsi\cl$, \ie $\bs(\star \cl) = \star(\bsi \cl)$.
%For finite complexes this is clear since $\bsi$ is equivalent to the boundary map on the dual complex.

A subset of cells $\cp'\In\cp$ is \emph{closed} if for every cell $\cl\in\cp'$ also $\bs\cl\in\cp'$. %The closure of $\cp'$ is denoted $\clos{\cp'}$.
A closed subset is a complex, also called \emph{closed subcomplex}. 
Sums and intersections of closed subcomplexes are complexes  \cite{Reidemeister:1938vf}.

Note that the definitions of boundary and bundle are independent of the dimension map $\dm$.
Thus they can be understood on any poset.

In this sense, the \emph{section} of two elements of a poset $c,c'\in P$ (sometimes also called \emph{interval}) which is defined as \cite{McMullen:2009ff,Danzer:1982dp}
\[\label{section}
c/c' := \{c''\in P \mid c \le c'' \le c'\},
\]
is the intersection of their boundary and bundle in $P$ respectively together with $c,c'$ themselves,
\[
c/c' = (\bs c \cap \bsi c')\cup\{c,c'\}.
\]
If $P$ is a complex, each section $c/c' \In P$ is obviously a complex too.
Moreover, the dual of a section $\star(c/c') = \star c'/\star c \In P^\star$ is a section again.
\end{remark}

The bounding relation of combinatorial complexes permits further a natural notion of neighbours as incident cells:

\begin{defin}[\bf{paths, chains, boundary matrix}]
\label{def:paths}

Two cells $c,c'$ are \emph{incident} if either $c\le c'$ or $c\ge c'$.

A \emph{path} is a sequence 
\[
\label{path}
W = c^1c^2\dots c^k
\]
in which each successive pair of cells is incident and has dimension difference of one.

A %\emph{chain} \cjt{check if this cannot be confused with chain complex chains!} %or 
\emph{boundary sequence}
is a totally ordered sequence of poset elements.
%For complexes another (not equivalent) definition is as a path where the dimension is increasing in every step \cite{Reidemeister:1938vf}.
All boundary information can be encoded in boundary matrices 
\[
\bsc{\clp\p^i}{\clp{\p-1}^j}=\bsc{i}{j}^\p=1\quad\text{or}\quad 0
\]
depending on whether the $(\p-1)$-cell $\clp{\p-1}^j$ is bounding the $\p$-cell $\clp\p^i$ or not.
This yields an \emph{incidence number} for a path $W = c^1c^2\dots c^k$
\[\label{incidence-number}
\epsilon_{W} = \prod_{i=1,\cdots,k-1} \bsc{c^i}{ c^{i+1}} =1\quad\text{or}\quad 0 \;.
\]
according to whether start and end cell are incident or not.
\end{defin}

In quantum gravity, sometimes only a partial structure of a spacetime complex is specified, as in the case of the 2-complexes that SF and GFT amplitudes have support on.
The full dimensionality of a complex is specified if it has the following properties:

\begin{defin}[\bf{dimension, pure, flags, bounding polynomial}]
A complex $\cp$ is \emph{$\m$-dimensional} if there are no cells of higher dimension than $\m$ and each cell of lower dimension is incident to at least one $\m$-dimensional cell \cite{Reidemeister:1938vf}.
%\footnote{alternative: $\m$-dimensional if cells of largest dimension are $\m$-dimensional. Property that each cell is incident to an $\m$-cell is then called \emph{homogeneity}. 
%}

A complex $\cp$ is \emph{pure} if each cell of non-zero dimension is incident with at least one 0-dimensional cell.
If $\cp$ is pure, it is favourable to include also a unique cell $\clp{-1}$ of dimension $\dm(\clp{-1})=-1$ incident to all other cells.

%\cjt{In the following all complexes under consideration will be pure and I will follow this convention.}
%If a complex $\cp$ is $\m$-dimensional and pure it is also called \emph{(dimensional) homogeneous}.

In a pure complex each boundary sequence is a partial boundary sequence of a maximal boundary sequence consisting of $\m+2$ cells (including $\clp{-1}$).
A maximal boundary sequence is also called a \emph{flag} \cite{McMullen:2009ff,Danzer:1982dp}.

A pure complex is determined by the incidence numbers of its flags (if not pure, then the maximal boundary sequences) which can be encoded in the \emph{boundary} polynomial \cite{Reidemeister:1938vf}
\[
b(c)=\sum \bsc{\clp0^{i_0} \clp1^{i_1} \dots \clp\m^{i_\m}}{}\; \clp0^{i_0} \clp1^{i_1} \dots \clp\m^{i_\m}
\]
\end{defin}

\begin{remark}[{vertex representation}]
\label{rem:vertexrep}
The cells of a pure, countable complex $\cp=\bigcup_{\p} \cpp{\p}$, can be represented by a collection of (ordered) sets in the following way.
The 0-cells are labelled in an arbitrary way by natural numbers, $\cpp{0}=\{v_1,v_2,...\}$.
Then, every cell $c$ is represented by the ordered set $(v_{i_{1}},v_{i_{2}},\dots)$ consisting of all vertices $v_{i_j}\le c$.% 
\footnote{Note that different cells might have the same vertex set, which is the reason why this representation is a collection, i.e. a multiset. To distinguish explicitly, an extra label is needed.

Furthermore, the vertex representation is meaningful only if for every cell $\cl$ with $\dm(\cl)>0$ there are at least two $(\dm(\cl)-1)$-cells bounding it. 
Otherwise, cells of different dimension have the same vertex set.
Abstract simplicial complexes and polyhedral complexes (introduced below, \dref{polyhedral-complex}) meet this requirement.
}
%In particular, the least face $f_{-1}$ is represented by $\emptyset$.

Obviously, every simplicial complex is pure and the vertex representation of its face poset is just the simplicial complex itself.
%For a polytope, the crucial difference to a simplex is that its $\p$-face sets are not necessarily of cardinality $p+1$, and in particular (C1) does not hold.
\end{remark}

\begin{remark}[\bf{connectedness}]
A complex is \emph{connected} if for any two cells $\cl^1, \cl^k$ there is a path $W = c^1c^2\dots c^k$ (not including $c_{-1}$).

If a complex is not connected, it is the sum of connected complexes, called its \emph{components}.

A complex is \emph{\p-dimensionally connected} if for every two $\p$-cells there is a path consisting of $\p$-cells and $(\p-1)$-cells connecting the two. 
\end{remark}

%Reidemeister addresses the question which further conditions  specify a class of complexes to each of which there is an isomorphic decomposition complex of a polyhedron (piecewise linear piece of space).
%[
%...
%]

Already for combinatorial complexes one can apply a notion of a manifold \cite{Reidemeister:1938vf,Seifert:1980uo}: 

\begin{defin}[\bf{pseudo-manifold, boundary}]
\label{def:manifold}
A complex $\cp$ is a \emph{combinatorial $\m$-dimensional pseudo-manifold} if and only if it is
\begin{description}
\item[(M1)] $\m$-dimensional and pure %\cjt{this still allows hyper-edges...},
\item[(M2)] $\m$-dimensionally connected and
\item[(M3)] \emph{non-branching}, \ie each $(\m-1)$-cell is incident to at most two $\m$-cells in $\cp$.
\end{description}
In general, for a complex $\cp$, an $(\m-1)$-cell incident to exactly one $\m$-cell is called a \emph{boundary cell} and the \emph{boundary} $\bs \cp$ is the closure of the set of boundary $(\m-1)$-cells.
Accordingly, a complex $\cp$ is \emph{closed} if and only if it has empty boundary.
\end{defin}

\begin{remark}[\bf{dual of pseudo-manifolds}]\label{rem:dual-manifold}
For later purposes it is helpful to point out the meaning of properties (M1) - (M3) of a combinatorial pseudo-$\m$-manifold $\cm$ for its dual complex $\cms$:
\end{remark}
\begin{itemize}[leftmargin=1.2cm]
\item[(M1$^\star$)] $\cms$ is pure (since $\cm$ is $\m$-dimensional) and it is $\m$-dimensional (since $\cm$ is pure).
\item[(M2$^\star$)] $\cms$ is $1$-dimensionally connected.
\item[(M3$^\star$)] All 1-cells $\cl\in\cmsp{1}$ are either edges or half-edges. An \emph{edge} is a 1-cell with two incident vertices and a \emph{half-edge} a 1-cell with one incident vertex. 
\end{itemize}
Note that (M3$^{\star}$), on the contrary, also states that the dual of a branching $\m$-dimensional complex has 1-cells incident to arbitrarily many vertices. 
These are not edges but hyper-edges \cite{Berge:1989wn}.

The half-edges in $\cps$ dual to boundary $(\m-1)$-cells in $\cp$ imply that there is no equivalent of the boundary $\bs\cp\In\cp$ in $\cps$. 
Instead, one has to define the \emph{dual boundary} $\dbs$ for a dual complex explicitly as the dual of the boundary of $\cp$,
\[\label{dualboundary}
\dbs\cps := \star(\bs\cp).
\]
Accordingly, the definition of the dual closure 
\[\label{dualclosure}
\clos\cps := \cps \cup \dbs\cps,
\]
includes bounding relations between boundary and bulk cells which are induced by those of the primal complex $\cp$.

%==============================================================

\subsection{Abstract polyhedral complexes}\label{sec:polyhedral}

\begin{figure}
\begin{center}
\tikzsetnextfilename{hasse}
%n-polygon%
\begin{tikzpicture}
\node (empty)	[h, circle]				{$\emptyset$};
\node (3)		[above of=empty]	{\dots};
\node (2)		[h,left of=3]		{2};
\node (1)		[h,left of=2]		{1};
\node (4)		[h,right of=3]		{$n$-1};
\node (5)		[h,right of=4, circle,inner sep=1pt]	{$n$};
\node (12)		[hv,above of=1 ]		{12};
\node (23)		[hv,right of=12 ]		{23};
\node (34)		[right of=23 ]		{\dots};
\node (45)		[hv,right of=34 ]		{$n$-1$\;n$};
\node (51)		[hv,right of=45 ]		{$1n$};
\node (poly)	[he,above of=34]		{123...$n$};
\foreach \from/\to in {12/2,23/3,34/4,45/5,51/1}{
	\draw (poly) %[->]
	 -- (\from);
	\draw (\from)% [->]
	 -- (\to);
	\draw (\to) %[->]
	 -- (empty);
	}
\foreach \from/\to in {12/1,23/2,34/3,45/4,51/5}{
	\draw (\from) %[<-]
	 -- (\to);
	}
% pyramid
\begin{scope}[xshift=5cm]
\node (0)		[h, circle]					{$\emptyset$};
\node (a)		[h,above of=0,xshift=-15mm]	{$1$};
\node (b)		[h,right of=a]				{$2$};
\node (c)		[h,right of=b]				{$3$};
\node (d)		[h,right of=c]				{$4$};
\node (ab)		[h,above of=a]				{$12$};
\node (bc)		[h,right of=ab]				{$23$};
\node (cd)		[h,right of=bc]				{$34$};
\node (ad)		[h,right of=cd]				{$41$};
\node (f)		[h,above of=ab, xshift=15mm]	{$1234$};
\node (x)		[h,above of=0, xshift=40mm]	{$5$};
\node (ax)		[h,above of=x, xshift=-15mm]	{$15$};
\node (bx)		[h,right of=ax]				{$25$};
\node (cx)		[h,right of=bx]				{$35$};
\node (dx)		[h,right of=cx]				{$45$};
\node (abx)	[h,above of=ax]				{$125$};
\node (bcx)	[h,right of=abx]				{$235$};
\node (cdx)	[h,right of=bcx]				{$345$};
\node (adx)	[h,right of=cdx]				{$145$};
\node (p)		[h,above of=abx,xshift=15mm] 	{$12345$};
\foreach \from/\to in 
{a/0,b/0,c/0,d/0,x/0,
ab/b,bc/c,cd/d,ad/a,
f/ab,f/bc,f/cd,f/ad,
ax/x,bx/x,cx/x,dx/x,
abx/bx,bcx/cx,cdx/dx,adx/ax,
p/abx,p/bcx,p/cdx,p/adx,
abx/ab,bcx/bc,cdx/cd,adx/ad}
    \draw (\from) %[->]
     -- (\to);
\foreach \from/\to in 
{ab/a,bc/b,cd/c,ad/d,
abx/ax,bcx/bx,cdx/cx,adx/dx,
ax/a,bx/b,cx/c,dx/d,
p/f}
    \draw (\from) %[<-]
     -- (\to);
\end{scope}
\end{tikzpicture} 
\caption{Hasse diagram of a complex which captures the combinatorics of the $\copies$-polygon (left) and a complex with the structure of a pyramid (right), both in the vertex representation (remark \ref{rem:vertexrep}). 
%Arrows refer to an orientation as explained in \ref{sec:orient}
}
\label{fig:hasse}
\end{center}
\end{figure}
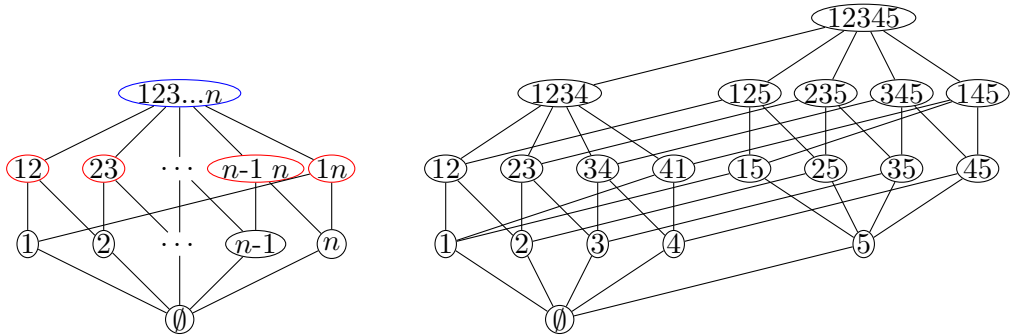

%\cjt{somewhere reference to Reidemeisters axiomatics in comparison to polyhedral complexes?}

It is possible to define polyhedral complexes combinatorially as collections of abstract polytopes in the same spirit as simplicial complexes are collections of simplices.
%To the best of my knowledge, this has not been considered in the literature so far.
Technically, the essential difference between the simplicial and the polyhedral case is the defining condition (SC) for simplicial complexes (\dref{simplicial-complex}) which guarantees that cells are indeed simplices with the full simplicial boundary structure. 
While this is given naturally in terms of subsets of vertex sets there, for polytopes the boundary cell structure has to be spelled out explicitly in terms of the partial-ordering relation.
Fortunately, there exists a purely combinatorial definition of abstract polytopes  as sections with some extra properties \cite{McMullen:2009ff,Danzer:1982dp}.
I will introduce these properties first.
Then I will define and discuss the new concept of polyhedral complexes.

\begin{remark}[\bf{properties of sections}]\label{rem:sections}
A section $f/f'\In P$ in a poset $P$ has two important properties:
\begin{description}
\item[(P0)] It contains a (unique) least and greatest face.
\item[(P1)] Each flag consists of $\m+2$ faces for some $\m\in\N$ 
\end{description}
Clearly, the least face in $f/f'$ is $f'$ and the greatest face is $f$. 
Thus, the flags define a notion of \emph{dimension} for sections in a poset as $\dm(f/f')=\m$.

Now, these two properties can be used to define an interesting class of posets in its own.
Let $P$ be a poset obeying (P0) and (P1) and denote $f_{-1}$ and $f_{\m}$ the least and greatest face.
For each face $f\in P$ a dimension is induced by $\dm(f):=\dm(f/f_{-1})$ turning $P$ into a complex since (P1) yields the defining property of complexes (CP) in definition \ref{def:complex}.
The complex $P$ is furthermore pure and $\m$-dimensional because of (P1).

Faces different from $f_{-1}$ and $f_{\m}$ are called \emph{proper} faces of $P$. A complex of this kind is \emph{connected} if for every two faces there is a path of proper faces connecting them.
\end{remark}

Now one can state the definition of an abstract polytope  \cite{McMullen:2009ff,Danzer:1982dp}:
\begin{defin}
\label{def:Polytope}
An \emph{abstract $\m$-polytope}, i.e. an abstract polytope of finite dimension $\m\ge -1$, is a poset $(P,<)$ obeying (P0), (P1) and:
\begin{description}
\item[(P2)] $P$ is \emph{strongly connected}, \ie every section of $P$ is connected.
\item[(P3)] All one-dimensional sections of $P$ are \emph{diamond-shaped}, \ie  for every $f''$ bounding $f$ with $\dm(f)-\dm(f'')=2$, there are exactly two faces $f'$ such that $f>f'>f''$.
% that is, for every $\p=0,1,...\m-1$, if $f$ and $f''$ are incident faces of $P$ of dimension $\p-1$ and $\p+1$, then there are exactly two $\p$-faces $f'$ in $P$ such that $f<f'<f''$.
\end{description}
\end{defin}

\begin{example}[{low dimensional polytopes}]
\label{rem:polyexamples}
Up to $\m=2$ there is a very manageable amount of abstract polytopes:
\begin{itemize}
\item Every 0-polytope is a single vertex, having the form $P=\{\emptyset, v\}$ with $\emptyset<v$.
\item Because of (P3), every 1-polytope consists of a single edge, $P=\{\emptyset, v_1, v_2, e\}$ with $\emptyset<v_i<e, i=1,2$.
\item Every finite 2-polytope is a polygon \cite{McMullen:2009ff} of the form shown in \fig{hasse}.%\footnote{There is only one infinite 2-polytope, the so called apeiron \cite{McMullen:2009ff}.}
\end{itemize}
\end{example}

\begin{remark}[{duality}]
Abstract polytopes have a natural notion of a dual in terms of inversion of the partial order. 
Finite graded structure, connectedness and diamond shape of 1-sections guarantee that the dual poset is in fact an abstract polytope as well \cite{McMullen:2009ff}.
An example is given in \fig{hassedual}.
\end{remark}

\begin{figure}
\begin{center}
\tikzsetnextfilename{hassedual}
\begin{tikzpicture}[every node=rectangle]
\node (0)		[h, circle]					{$\emptyset$};
\node (a)		[hv, below of=0,xshift=15mm]	{$1$};
\node (b)		[hv, left of=a]				{$2$};
\node (c)		[hv, left of=b]				{$3$};
\node (d)		[hv, left of=c]				{$4$};
\node (ab)		[he,below of=a]				{$12$};
\node (bc)		[he,left of=ab]				{$23$};
\node (cd)		[he,left of=bc]				{$34$};
\node (ad)		[he,left of=cd]				{$41$};
\node (f)		[h,below of=ab, xshift=-15mm]	{$1234$};
\node (x)		[hv, below of=0, xshift=-40mm]	{$5$};
\node (ax)		[he,below of=x, xshift=15mm]	{$15$};
\node (bx)		[he,left of=ax]				{$25$};
\node (cx)		[he,left of=bx]				{$35$};
\node (dx)		[he,left of=cx]				{$45$};
\node (abx)	[h,below of=ax]				{$125$};
\node (bcx)	[h,left of=abx]				{$235$};
\node (cdx)	[h,left of=bcx]				{$345$};
\node (adx)	[h,left of=cdx]				{$145$};
\node (p)		[h,below of=abx,xshift=-15mm] 	{$12345$};
\foreach \from/\to in 
{a/0,b/0,c/0,d/0,x/0,
ab/b,bc/c,cd/d,ad/a,
f/ab,f/bc,f/cd,f/ad,
ax/x,bx/x,cx/x,dx/x,
abx/bx,bcx/cx,cdx/dx,adx/ax,
p/abx,p/bcx,p/cdx,p/adx,
abx/ab,bcx/bc,cdx/cd,adx/ad}
    \draw (\from) -- (\to);
\foreach \from/\to in 
{ab/a,bc/b,cd/c,ad/d,
abx/ax,bcx/bx,cdx/cx,adx/dx,
ax/a,bx/b,cx/c,dx/d,
p/f}
    \draw (\from) -- (\to);
\end{tikzpicture}
\caption{Hasse diagram of the dual of the pyramid complex in \fig{hasse} which can itself be represented as a pyramid. %Moreover this labeling corresponds to the pyramid vertex in \fig{pyramid}.
}
\label{fig:hassedual}
\end{center}
\end{figure}
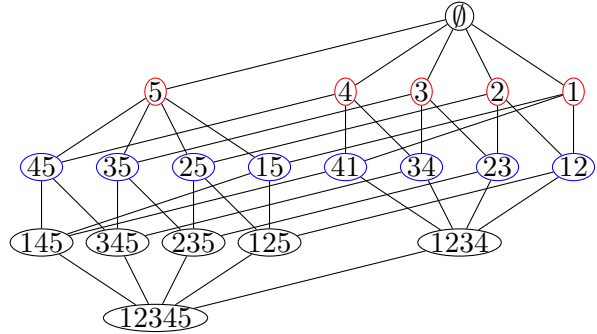

An abstract polyhedral complex can now be defined as a collection of abstract polytopes in analogy to abstract simplicial complexes as collections of abstract simplices:

\begin{defin}[{polyhedral complex}]
\label{def:polyhedral-complex}
An \emph{abstract polyhedral complex} $\polyc$ is a poset with the properties that
\begin{description}
\item[(P0')] $\polyc$ contains a least face, denoted $f_{-1}$, and
\item[(PC)] for every face $f\in\polyc$ the section $f/f_{-1}\In\polyc$ is an abstract polytope.
\end{description}
Obviously $\polyc$ is indeed a complex because it consists of sections which are already proven to be complexes (remark \ref{rem:sections}). 
\end{defin}

Since an abstract simplicial complex is a collection of abstract simplices and abstract simplices are abstract polytopes, it is obvious that abstract simplicial complexes are a special case of abstract polyhedral complexes.
In particular, in both cases complexes are pure but not necessarily $\m$-dimensional, \ie the largest cell a vertex is bounding can be of different dimension for different vertices in the complex.

\subsection{Simplicial subdivisions}\label{sec:subdivision}

It is straightforward to notice that (the face poset) of every simplicial complex is a polyhedral complex.
More interestingly, a simplicial subdivision%
\footnote{Even in the context of combinatorial topology this is often called barycentric subdivision \cite{Kozlov:2008wc}, even though there is no notion of centre in the abstract setting. For this reason, and to highlight that it is a subdivision into abstract simplices, I prefer to call it ``simplicial subdivision".}
of polyhedral complexes can be defined purely combinatorial as well.
I propose a definition analogous to the case of simplicial complexes \cite{Kozlov:2008wc}, by defining vertices for every (proper) face and simplices for every boundary sequence, effectively subdividing all polytopes into simplices.
%To treat boundaries of combinatorial manifolds in an appropriate way, I introduce one modification, identifying the subdividing vertices of each boundary $(\m-1)$-cell with the subdividing vertex of the single $\m$-cell it is bounding.
Eventually, there is no reason to restrict even to polyhedral complexes:

\begin{defin}[{simplicial subdivision}]
\label{def:subdivision}
The \emph{simplicial subdivision} of a finite complex $\cp$ is the simplicial complex generated by all its boundary sequences, 
\[\label{subdivision}
\ssub\cp := \left\{ (c_1,c_2,\dots,c_t) | c_1>c_2>\dots>c_t, c_i\in\cp, t\ge1 \right\}.
\]
%Here $c,c'\in\cp$ are equivalent, $c\sim c'$, if and only if $\cp$ is a $\m$-dimensional pseudo-manifold and either $c=c'$ or $c\in\cpp{\m-1}$ and $c\in\cpp{\m}$ is the unique $\m$-cell such that $c<c'$. Then, cells in $\cp/\sim$ are bounding $[c]=[c']$ if they are bounding either of the two $c,c'$ (which is equivalent to the case that they are bounding $c'$).
The subdivision $\ssub\cp$ is indeed a simplicial complex since in a poset each subset of a boundary sequence is a boundary sequence of the poset as well, proving (SC).
\end{defin}
 
\begin{remark}[{grading of subdivisions}]
The vertex set of a simplicial decomposition $\subc=\ssub\cp$ has an $(\m+1)$-grading 
$\subcp0 = \bigcup_{\p=0}^\m \subcp0_\p$ 
induced by the original finite complex $\cp$ (with cells of dimension up to $\m$) according to $\subcp0_\p \cong \cpp\p$.
Similarly, all simplex sets $\subcp\p$ of higher dimension $\p$ have an $\binom{\m+1}{\p+1}$ grading into levels $\subcp\p_{q_0\dots q_\p} \cong \cpp{q_0}\times\dots\times\cpp{q_\p}$ (\fig{subdivision}).
(Combining this grading with the graded structure of the dimension map $\dm$, the simplicial subdivision $\subc$ thus is equipped with a $\sum_{\p=0}^\m \binom{\m+1}{\p+1} = 2^\m$ grading.)

Let us call such a grading on a simplicial $\m$-complex a \emph{$\ssub$-grading}.

\begin{figure}
\begin{center}
\tikzsetnextfilename{subdivision}
\begin{tikzpicture}
\draw [|->] (4,0) -- node[label=above:$\ssub$] {} (5,0);

\begin{scope}[scale=.4]
\node [c]		(a)	at (0,0)		{};
\node [c]		(b)	at (8,0)		{};
\node [c]		(c)	at (3,6)		{};
\node [c]		(d)	at (-3,3)		{};
\node [c]		(e)	at (-6,-2)		{};
\node [c]		(f)	at (2,-5)		{};
\node [c]		(g)	at (9,-5)		{};
\path [f]%line width=0.01pt,dotted,fill=black, fill opacity=.05]	
(b) -- (c) -- (d) -- (e) -- (f) -- (g) -- cycle;
\foreach \i/\j in {a/b,a/c,a/d,a/e,a/f,b/c,c/d,d/e,e/f,f/g,g/b,f/b}
\draw [eb] %line width=1.6pt] 	
(\i) node [vh]{}	-- (\j) node [vh]{};
\end{scope}

\begin{scope}[xshift=8cm,scale=.4]
\node [c]		(a)	at (0,0)		{};
\node [c]		(b)	at (8,0)		{};
\node [c]		(c)	at (3,6)		{};
\node [c]		(d)	at (-3,3)		{};
\node [c]		(e)	at (-6,-2)		{};
\node [c]		(f)	at (2,-5)		{};
\node [c]		(g)	at (9,-5)		{};
\foreach \i/\j in {a/b,a/c,a/d,a/e,a/f,b/c,c/d,d/e,e/f,f/g,g/b,f/b}{
\draw [eb] 	(\i) node [vh]{}	-- (\j) node [vh]{};
}
\node [v]	(abc)		at (3.7,2)		{};
\node [v]	(acd)		at (0,3)		{};
\node [v]	(ade)		at (-3,.33)		{};
\node [v]	(aef)		at (-1.33,-2.33)	{};
\node [v]	(abf)		at (3.33,-1.66)	{};
\node [v]	(bfg)		at (6.33,-3.33)	{};
\node [vb]	(ab)		at (4,0)		{};
\node [vb]	(ac)		at (1.5,3)		{};
\node [vb]	(ad)		at (-1.5,1.5)	{};
\node [vb]	(ae)		at (-3,-1)		{};
\node [vb]	(af)		at (1,-2.5)		{};
\node [vb]	(bc)		at (5.5,3)		{};
\node [vb]	(cd)		at (0,4.5)		{};
\node [vb]	(de)		at (-4.5,.5)		{};
\node [vb]	(ef)		at (-2,-3.5)		{};
\node [vb]	(fg)		at (5.5,-5)		{};
\node [vb]	(bf)		at (5,-2.5)		{};
\node [vb]	(bg)		at (8.5,-2.5)	{};
\foreach \i/\j/\k in {a/b/c,a/c/d,a/d/e,a/e/f,a/b/f,b/f/g}{
\path	[f] 	(\i)	-- (\j) -- (\k) -- cycle;
\draw [eh] 	(\i) 	-- (\i\j\k);
\draw [eh] 	(\j)	-- (\i\j\k);
\draw [eh] 	(\k) 	-- (\i\j\k);
\draw [e] 	(\i\j) 	-- (\i\j\k);
\draw [e] 	(\j\k)	-- (\i\j\k);
\draw [e] 	(\i\k) 	-- (\i\j\k);
}
\end{scope}
\end{tikzpicture}
\caption{Simplicial subdivision $\subc = \ssub\cp$ of a polyhedral 2-complex $\cp$.
The grading of $\subc$ is visualized by colours: Blue is used for elements of $\subcp0_0=\cpp0$, red for $\subcp0_1$ and black for $\subcp0_2$. The $\binom{3}{1}=3$ kinds of edges are green for $\subcp1_{01}$, red for $\subcp1_{12}$ and dotted for $\subcp1_{02}$.
There is only one kind of triangles, $\subcp2_{012}$.
}
\label{fig:subdivision}
\end{center}
\end{figure}

This structure allows to define also the inverse $\issub$ to the subdivision map $\ssub$. 
Let $\subc$ be a simplicial $\m$-complex with a grading as described. 
Then
\[\label{inversesubdivision}
\issub\subc \equiv \bigcup_{\p=0}^\m \left(\issub\subc\right)^{[\p]} 
:= \bigcup_{\p=0}^\m \subcp0_\p 
\]
and the bounding relation $<$ on $\ssub\subc$ is defined for all $c\in\subcp0_i$, $c'\in\subcp0_j$ by
\[\label{inversesubdivisionrelation}
c<c'\quad \textrm{ if and only if } (c,c')\in\subcp1 \textrm{ and } i<j .
\]
It is straightforward to check that $\issub$ is indeed the inverse to the subdivision map $\ssub$.

A geometrical intuition behind this formal definition (at least in the case of combinatorial pseudo-manifolds) is to understand the $\p$-cells $\cl\in\left(\issub\subc\right)^{[\p]}$ as reconstructed from gluing together the $\p$-simplices which the original cell is divided into.
That is, every $\cl \in \subcp0_\p$ can be related to the``gluing" (e.g. the union) of all $\p$-simplices $(\clp0,\dots,\clp\p)\in\subcp\p_{01\dots\p}$ containing the subdivision vertex $\cl=\clp\p$.
\end{remark}

\begin{remark}[{subdivision and the dual}]
\label{rem:subdivision}
The geometric intuition behind simplicial subdivisions provide also a complementary understanding of  duality of complexes.
Since the definition of the subdivision \eqref{subdivision} is invariant under inverting the bounding relation $<$, the subdivision of a complex $\cp$ and its dual $\cps = \star\cp$ coincide,
\[
\ssub\cp = \ssub\cps .
\]
It is thus also straightforward to define a dual version $\issubs$ of the inverse of the subdivision map by merely setting $(\issubs\subc)^{[\p]}:=\subcp0_{\m-\p}$ %for all $\p=0,1,\dots\m$ 
(instead of \eqref{inversesubdivision}) and inverting the bounding relation in \eqref{inversesubdivisionrelation} accordingly (\ie $c<c'$ for $i>j$).

Obviously this is equivalent to the definition of the dual (\dref{dual-complex}), that is
\[\label{dualfromsub}
\cps = \issubs \ssub \cp .
\]

More interesting is the geometric intuition behind it. Similarly to the case of $\issub$, one can understand the dual cells 
$\cs \in (\issubs\subc)^{[\p]}$ as constructed by ``gluing" all $\p$-simplices $(\clp{\m-\p},\dots,\clp\m)\in\subcp\p_{\m-\p\dots\m}$ containing that $\cs = \clp{\m-\p}$.
In this sense, \eqref{dualfromsub} indeed tells us how to construct the cells of a dual complex from the subdividing simplices (\fig{dualfromsub}).
\end{remark}

\begin{figure}
\begin{center}
\tikzsetnextfilename{dualfromsub}
\begin{tikzpicture}
\draw [|->] (4.5,0) -- node[label=above:$\issubs$] {} (5.5,0);
\draw [|->] (11.7,0) -- node[label=above:closure] {} (12.8,0);
%\begin{scope}[scale=.4]
%\node [c]		(a)	at (0,0)		{};
%\node [c]		(b)	at (8,0)		{};
%\node [c]		(c)	at (3,6)		{};
%\node [c]		(d)	at (-3,3)		{};
%\node [c]		(e)	at (-6,-2)		{};
%\node [c]		(f)	at (2,-5)		{};
%\node [c]		(g)	at (7,-5)		{};
%\foreach \i/\j in {a/b,a/c,a/d,a/e,a/f,b/c,c/d,d/e,e/f,f/g,g/b,f/b}{
%\draw [eb] 	(\i) node [vh]{}	-- (\j) node [vh]{};
%}
%\node [v]	(abc)		at (3.7,2)		{};
%\node [v]	(acd)		at (0,3)		{};
%\node [v]	(ade)		at (-3,.33)		{};
%\node [v]	(aef)		at (-1.33,-2.33)	{};
%\node [v]	(abf)		at (3.33,-1.66)	{};
%\node [v]	(bfg)		at (5.66,-3.33)	{};
%\node [vb]	(ab)		at (4,0)		{};
%\node [vb]	(ac)		at (1.5,3)		{};
%\node [vb]	(ad)		at (-1.5,1.5)	{};
%\node [vb]	(ae)		at (-3,-1)		{};
%\node [vb]	(af)		at (1,-2.5)		{};
%\node [vb]	(bc)		at (5.5,3)		{};
%\node [vb]	(cd)		at (0,4.5)		{};
%\node [vb]	(de)		at (-4.5,.5)		{};
%\node [vb]	(ef)		at (-2,-3.5)		{};
%\node [vb]	(fg)		at (4.5,-5)		{};
%\node [vb]	(bf)		at (5,-2.5)		{};
%\node [vb]	(bg)		at (7.5,-2.5)	{};
%\foreach \i/\j/\k in {a/b/c,a/c/d,a/d/e,a/e/f,a/b/f,b/f/g}{
%\path	[f] 	(\i)	-- (\j) -- (\k) -- cycle;
%\draw [eh] 	(\i) 	-- (\i\j\k);
%\draw [eh] 	(\j)	-- (\i\j\k);
%\draw [eh] 	(\k) 	-- (\i\j\k);
%\draw [e] 	(\i\j) 	-- (\i\j\k);
%\draw [e] 	(\j\k)	-- (\i\j\k);
%\draw [e] 	(\i\k) 	-- (\i\j\k);
%}
%\end{scope}

\begin{scope}[xshift=8cm,scale=.4]
\node [c]		(a)	at (0,0)		{};
\node [c]		(b)	at (8,0)		{};
\node [c]		(c)	at (3,6)		{};
\node [c]		(d)	at (-3,3)		{};
\node [c]		(e)	at (-6,-2)		{};
\node [c]		(f)	at (2,-5)		{};
\node [c]		(g)	at (9,-5)		{};
\path [f]	(b) -- (c) -- (d) -- (e) -- (f) -- (g) -- cycle;
%\foreach \i/\j in {%a/b,a/c,a/d,a/e,a/f,f/b,
%b/c,c/d,d/e,e/f,f/g,g/b}{
%\draw [eb] 	(\i) node [vh]{}	-- (\j) node [vh]{};
%}
\node [v]	(abc)		at (3.7,2)		{};
\node [v]	(acd)		at (0,3)		{};
\node [v]	(ade)		at (-3,.33)		{};
\node [v]	(aef)		at (-1.33,-2.33)	{};
\node [v]	(abf)		at (3.33,-1.66)	{};
\node [v]	(bfg)		at (6.33,-3.33)	{};
%\node [vb]	(ab)		at (4,0)		{};
%\node [vb]	(ac)		at (1.5,3)		{};
%\node [vb]	(ad)		at (-1.5,1.5)	{};
%\node [vb]	(ae)		at (-3,-1)		{};
%\node [vb]	(af)		at (1,-2.5)		{};
\node [c]	(bc)		at (5.5,3)		{};
\node [c]	(cd)		at (0,4.5)		{};
\node [c]	(de)		at (-4.5,.5)		{};
\node [c]	(ef)		at (-2,-3.5)		{};
\node [c]	(fg)		at (5.5,-5)		{};
\node [c]	(bf)		at (5,-2.5)		{};
\node [c]	(bg)		at (8.5,-2.5)	{};
%\path [f]	(bc) -- (cd) -- (de) -- (ef) -- (fg) -- (bg) -- cycle;
\foreach \i/\j/\k in {a/b/c,a/c/d,a/d/e,a/e/f}{
\draw [e]	(\j\k)	-- (\i\j\k);
}
\path [e] 	(abc) -- (acd) -- (ade) -- (aef) -- (abf) -- (abc);
\draw [e]	(abf)	-- (bfg);
\draw [e]	(fg)	-- (bfg);
\draw [e]	(bg)	-- (bfg);
\end{scope}

\begin{scope}[xshift=15.2cm,scale=.4]
\node [c]		(a)	at (0,0)		{};
\node [c]		(b)	at (8,0)		{};
\node [c]		(c)	at (3,6)		{};
\node [c]		(d)	at (-3,3)		{};
\node [c]		(e)	at (-6,-2)		{};
\node [c]		(f)	at (2,-5)		{};
\node [c]		(g)	at (9,-5)		{};
%\path [f]	(b) -- (c) -- (d) -- (e) -- (f) -- (g) -- cycle;
%\path [eb]	(b) node [vh]{}-- (c) node [vh]{}-- (d) node [vh]{}-- (e) node [vh]{}-- (f) node [vh]{}-- (g) node [vh]{}-- cycle;
%\foreach \i/\j in {a/b,a/c,a/d,a/e,a/f,b/c,c/d,d/e,e/f,f/g,g/b,f/b}{
%\draw [eb] 	(\i) node [vh]{}	-- (\j) node [vh]{};
%}
%\node [vb]	(ab)		at (4,0)		{};
%\node [vb]	(ac)		at (1.5,3)		{};
%\node [vb]	(ad)		at (-1.5,1.5)	{};
%\node [vb]	(ae)		at (-3,-1)		{};
%\node [vb]	(af)		at (1,-2.5)		{};
\node [c]	(bc)		at (5.5,3)		{};
\node [c]	(cd)		at (0,4.5)		{};
\node [c]	(de)		at (-4.5,.5)		{};
\node [c]	(ef)		at (-2,-3.5)		{};
\node [c]	(fg)		at (5.5,-5)		{};
\node [c]	(bf)		at (5,-2.5)		{};
\node [c]	(gb)		at (8.5,-2.5)	{};
\path [f]	(f) \foreach \i/\j in {f/g,g/b,b/c,c/d,d/e,e/f}{ .. controls (\i) .. (\i\j)} -- cycle;
\node [v]	(abc)		at (3.7,2)		{};
\node [v]	(acd)		at (0,3)		{};
\node [v]	(ade)		at (-3,.33)		{};
\node [v]	(aef)		at (-1.33,-2.33)	{};
\node [v]	(abf)		at (3.33,-1.66)	{};
\node [v]	(bfg)		at (6.33,-3.33)	{};
\foreach \i/\j/\k in {a/b/c,a/c/d,a/d/e,a/e/f}{
\draw [e]	(\j\k)	-- (\i\j\k);
}
\path [e] 	(abc) -- (acd) -- (ade) -- (aef) -- (abf) -- (abc);
\draw [e]	(abf)	-- (bfg);
\draw [e]	(fg)	-- (bfg);
\draw [e]	(gb)	-- (bfg);
\foreach \i/\j/\k in {b/c/d,c/d/e,d/e/f,e/f/g,f/g/b,g/b/c}{
\draw [eb] 	(\i\j) node [vb]{}	.. controls (\j) .. (\j\k) node [vb]{};
}
\end{scope}

\end{tikzpicture}
\caption{The dual $\cps$ to the complex of \fig{subdivision} obtained from its subdivision via $\issubs$ (colouring as before) and the dual closure \eqref{dualclosure}.
The vertices are $\cpsp0 = \subcp0_2$, 
edges are obtained from (joining all edges in $\subcp1_{12}$ incident to a given vertex in) $\subcp0_1$ and faces from (joining all triangles in $\subcp2$ incident to a given vertex in) $\subcp0_0$. 
%As discussed in remark \ref{dualmanifold}, 
The edges $e=\star\cl$ dual to boundary cells $\cl\in(\bs\cp)^{[1]}$ are half-edges. %, as well as higher dimensional faces accordingly.
Thus, the complex $\cps$ becomes polyhedral only under closure.
}
\label{fig:dualfromsub}
\end{center}
\end{figure}

%If $\cp$ is a combinatorial manifold with boundary, the dual $\cps=\star\cp$ necessarily contains half-edges. Thus $\cps$ cannot be a polyhedral manifold. 
%But this is a shortcoming only on first side which can be overcome by generalizing the notion of boundary of a complex (\dref{manifold} to the boundary of its simplicial subdivision:
%\begin{defin}[{generalized boundary}]
%\label{def:boundary}
%The \emph{generalized boundary} of a complex $\cp$ is
%\[
%\bsg \cp := \bs (\ssub \cp).
%\]
%The generalized closure is furthermore equipped with the boundary relations induced from $\ssub\cp$.
%\end{defin}

%This definition guarantees that the dual, or more precisely, the generalized closure of the dual of a polyhedral complex with boundary is again a polyhedral complex (\fig{dualfromsub}).

\begin{remark}[{$\issub$-inverse of $\ssub$-gradings}]
\label{rem:issub-inverse}
  It is important to note that for a $\ssub$-graded simplicial complex $\subc$ the inverse $\issub\subc$ (and thus $\overline{\issubs\subc}$, by duality) is not a polyhedral complex, in general.
  The main reason for this is the following: the simplices in $\subc$ joined into a cell of $\cl\in\issub\subc$ may have common boundary cells (in the vertex representation $\cl$ is a multi-set). 
  This violates the defining condition (P3).
  An example is given in \fig{2torus}.

  In SF models and GFT, a common instance of this is the case of a shared boundary $(\std-1)$-cells, resulting in a dual edge starting and ending at the same vertex. %(\fig{2sphere}).
  As a consequence there are divergences in the amplitudes called wrapping singularities \cite{\lostGS}.
  Still, in the most general SF and GFT amplitudes such non-polyhedral complexes $\issub\subc$ occur.
  
  It is not possible to explicitly define these complexes directly in the purely combinatorial framework, \ie without reference to an underlying $\ssub$-graded complex.%
  \footnote{In algebraic topology it is standard to generalize simplicial complexes to include such special cases, the so-called ``$\Delta$-complexes" \cite{Hatcher:2002ut}.}
  (which is demanded by GFTs). 
  The reason is that on the level of posets there is no way to distinguish, for example, between half-edges (1-cells incident to exactly one vertex) and loops (1-cells incident to this vertex twice).
  However, this information is contained in the $\ssub$-graded simplicial complex in terms of which $\issub\subc$ is defined (again \fig{2torus}).
\end{remark}

\begin{figure}
\begin{center}
\tikzsetnextfilename{2torus}
\begin{tikzpicture}
\draw [|->] (3.5,1.5) -- node[label=above:$\issubs$] {} (4.5,1.5);
\begin{scope}[xshift=.5cm]
\node [c,label=left:$\vh$]		(1)	at (0,0)		{};
\node [c,label=right:$\vh$]		(2)	at (3,0)		{};
\node [c,label=right:$\vh$]		(3)	at (1.5,2.6)	{};
\node [c,label=left:$\vh$]		(4)	at (-1.5,2.6)	{};
\path [f]	(1) -- (2) -- (3) -- (4) -- cycle;
\path [eb]	(1) -- (2) -- (3) -- (1) -- (4) -- (3);
\node [vb,label=below:$\vb_1$]	(a)	at (1.5,0)		{};
\node [vb,label=right:$\vb_2$]	(b)	at (2.25,1.3)	{};
\node [vb,label=above:$\vb_1$]	(c)	at (0,2.6)		{};
\node [vb,label=left:$\vb_2$]	(d)	at (-.75,1.3)	{};
\node [vb,label=above:$\vb_3$]	(e)	at (.75,1.3)	{};
\node [v,label=above:$v_1$]	(A)	at (1.5,.87)	{};
\node [v,label=below:$v_2$]	(B)	at (0,1.73)		{};
\foreach \i/\j in {A/a,A/b,A/e,B/c,B/d,B/e}{
\draw [e]	(\i) -- (\j);
}
\foreach \i/\j in {A/1,A/2,A/3,B/3,B/4,B/1}{
\draw [eh]	(\i) -- (\j) node[vh]{};
}
\end{scope}

\begin{scope}[xshift=6.5cm]
\node [c,label=left:$\vh$]		(1)	at (0,0)		{};
\node [c,label=right:$\vh$]		(2)	at (3,0)		{};
\node [c,label=right:$\vh$]		(3)	at (1.5,2.6)	{};
\node [c,label=left:$\vh$]		(4)	at (-1.5,2.6)	{};
\path [f]	(1) -- (2) -- (3) -- (4) -- cycle;
\path [eb]	(1) node[vh]{} -- (2) node[vh]{} -- (3) node[vh]{} -- (1) node[vh]{} -- (4) node[vh]{} -- (3);
\node [c,label=below:$e_1$]	(a)	at (1.5,0)		{};
\node [c,label=right:$e_2$]	(b)	at (2.25,1.3)	{};
\node [c,label=above:$e_1$]	(c)	at (0,2.6)		{};
\node [c,label=left:$e_2$]		(d)	at (-.75,1.3)	{};
\node [c,label=above:$e_3$]	(e)	at (.75,1.3)	{};
\node [c,label=above:$f_1$]	(A)	at (1.5,.87)	{};
\node [c,label=below:$f_2$]	(B)	at (0,1.73)		{};
\end{scope}

\begin{scope}[xshift=12cm,scale=.1]
\node (0)		[h, circle]					{$\emptyset$};
\node (1)		[he,above of=0, yshift=-3mm]	{$\vh$};
\node (b)		[hv,above of=1, yshift=-3mm]	{$\vb_2$};
\node (a)		[hv,left of= b]				{$\vb_1$};
\node (e)		[hv,right of=b]				{$\vb_3$};
\node (A)		[h,above of=b, xshift=-5mm,yshift=-3mm]	{$v_1$};
\node (B)		[h,right of=A]				{$v_2$};
\node (T)		[h,above of=b, yshift=4mm]	{$T^2$};
\foreach \from/\to in 
{1/0,1/b,1/a,1/e,A/a,A/b,A/e,B/a,B/b,B/e,T/A,T/B}
    \draw (\from)     -- (\to);
\end{scope}

\end{tikzpicture}
\caption{A $\ssub$-graded simplicial complex $\subc$ for which $\issub\subc$ is not a polyhedral complex since its 1-cells are loops, \ie they are incident to only one vertex (violating (P3)), but twice. The latter information is not contained in $\issub\subc\cong\subcp0$ as such, as can be seen in its Hasse diagram (on the right), but is only induced by its construction in terms of $\subc$.
%That is, the higher cell structure of $\subcp\p$ ($\p>0$) is essential.
}
\label{fig:2torus}
\end{center}
\end{figure}
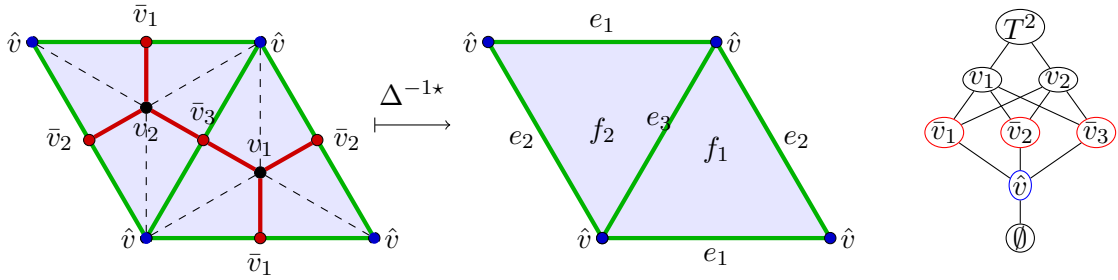

This remark therefore motivates to define a generalization of polyhedral complexes including loops and their higher dimensional analogues:
\begin{defin}[{generalized polyhedral complex}]
\label{def:generalized-polyhedral}
  A generalized polyhedral complex $\gpolyc$ is a complex for which there is a non-branching %(avoid hyper-edges!)}, 
  $\ssub$-graded simplicial complex $\subc$ such that $\gpolyc = \issub\subc$.
  
  In the vertex representation of a $\gpolyc$, multiple bounding relations are made explicit in the following way. 
  A $\p$-cell $\cl\in\gpolycp\p$ is represented by a multiset containing each vertex $v$ as many times as there are $\p$-simplices in $\subc$ containing both $\cl$ and $v$. 
  Thus, in general, the vertex representation of a generalized polyhedral complex is not only a multiset but it also contains multisets.
\end{defin}

%\cjt{How does it compare to loosening (P4)?}

\begin{remark}[{dual 2-complexes}]
The minimal structure needed in quantum gravity (in particular SF dynamics) are the cells up to dimension two of the dual complex of spacetime dimension $\m=\std$ (the so-called dual 2-skeleton), together with 1-dimensional (graph) structure of the boundary. 
This information is conveniently encoded in the vertex sets of the simplicial subdivision $\subc$.
In the following I will use the notation
\[\label{vertex-sets}
  \begin{array}{lll}
    \V = \subcp0_\m, &  \Vb = \subcp0_{\m-1}, & \Vh = \subcp0_{\m-2} \\
    \E = \subcp1_{\m-1,\m-2}, & \Eb = \subcp1_{\m,\m-1}, & \Eh = \subcp1_{\m,\m-2}
  \end{array}
\]
(as already anticipated in \fig{2torus}).
\end{remark}
 
%==============================================================

\subsection{Orientation} 
\label{sec:orientation}

For later purposes, in particular the possibility to define an exterior calculus (\sec{calculus}), it will be necessary to have the notion of homology defined for combinatorial complexes. 
This is related to orientability of the cells in the complex.
It can be defined in a standard way and I will follow again \cite{Reidemeister:1938vf} to introduce the relevant notions in this section.

\begin{defin}[orientation]
\label{def:orientation}
In a pure complex $\cp$, an \emph{orientation} of a cell $\cl\in\cp$ is an assignment 
\[\label{orientation}
\sgn (\lc\clp0\dots\clp{\p-1}\clp{\p} ) = \pm 1
\]
to all its flags $\lc\clp0\dots\clp{\p-1}\clp{\p}$ in $\cp$
such that any two %neighbouring flags, \ie 
flags differing in exactly one $\q$-cell $\clp\q\ne\clp\q'$ have opposite sign:
\[\label{orientation-property}
\sgn (\lc\clp0\dots\clp\q\dots\clp{\p-1}\clp{\p}) = - \sgn (\lc\clp0\dots\clp\q'\dots\clp{\p-1}\clp{\p}) .
\]
An \emph{orientation} of a pure complex $\cp$ is an assignment \eqref{orientation} to the flags of all its cells obeying \eqref{orientation-property}.
\end{defin}

Note that a necessary condition for the orientation property \eqref{orientation-property} is that every one-dimensional section $\clp{\p+1}/\clp{\p-1}$ contains at most two $\p$-cells. 
If there are exactly two for all one-dimensional sections, \ie all cells are abstract polytopes, there is an important equivalent property \cite{Reidemeister:1938vf}:
\begin{proposition}[oriented boundary matrices]
\label{prop:orientation}
A polyhedral complex $\polyc$ is orientable if and only if there is a system of \emph{oriented boundary matrices}, \ie 
\[
\bc{\clp\p}{\clp{\p-1}} = \pm \bsc{\clp\p}{\clp{\p-1}}
\]
such that for all one-dimensional sections $\clp{\p+1}/\clp{\p-1}\In\polyc$
\[\label{oriented-matrix-property}
\sum_{\clp{\p+1}>\clp\p>\clp{\p-1} } \bc{\clp{\p+1}}{\clp{\p}} \,\bc{\clp\p}{\clp{\p-1}} = 0 .
\] 
\end{proposition}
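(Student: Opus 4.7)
The plan is to prove both implications by directly translating between the flag sign function $\sgn$ and the signs $\bc{\clp\p}{\clp{\p-1}}=\pm\bsc{\clp\p}{\clp{\p-1}}$. The key combinatorial fact throughout is that in a polyhedral complex every one-dimensional section $\clp{\p+1}/\clp{\p-1}$ contains exactly two intermediate $\p$-cells by the diamond property (P3) of Definition \ref{def:Polytope}. Thus the sum in \eqref{oriented-matrix-property} has exactly two terms, and the chain condition says they are negatives of one another, which is exactly the pattern produced by the orientation identity \eqref{orientation-property}.

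For the forward direction, assume $\polyc$ is oriented by $\sgn$. For each incident pair $\clp\p>\clp{\p-1}$ I propose
\[
\bc{\clp\p}{\clp{\p-1}} := \sgn(\lc\clp0\dots\clp{\p-2}\clp{\p-1}\clp\p)\cdot\sgn(\lc\clp0\dots\clp{\p-2}\clp{\p-1}),
\]
for any completion by a boundary sequence $\clp0<\dots<\clp{\p-2}<\clp{\p-1}$. Independence of the choice is immediate, since replacing any $\clp q$ with $q\le\p-2$ simultaneously flips both factors by \eqref{orientation-property}, leaving the product invariant. To verify the chain identity at a section $\clp{\p+1}/\clp{\p-1}$ with intermediate cells $\clp\p^1,\clp\p^2$, I expand $\bc{\clp{\p+1}}{\clp\p^i}\bc{\clp\p^i}{\clp{\p-1}}$ for $i=1,2$ using a common lower sequence; the middle sign $\sgn(\lc\dots\clp{\p-1}\clp\p^i)$ then appears squared and drops out, leaving $\sgn(\lc\dots\clp{\p-1}\clp\p^i\clp{\p+1})$ multiplied by $\sgn(\lc\dots\clp{\p-1})$. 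The two remaining flags of $\clp{\p+1}$ differ in exactly one cell, so by \eqref{orientation-property} they are opposite and the sum vanishes.

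For the converse, assume oriented boundary matrices satisfying \eqref{oriented-matrix-property} are given. I would define a sign on each flag by the telescoping product
\[
\sgn(\lc\clp0\clp1\dots\clp\p) := \bc{\clp0}{\lc}\,\bc{\clp1}{\clp0}\cdots\bc{\clp\p}{\clp{\p-1}},
\]
with the convention $\clp{-1}:=\lc$ and $\bc{\clp0}{\lc}=\pm 1$ belonging to the data. Replacing a single internal cell $\clp q$ by the other element of the section $\clp{q+1}/\clp{q-1}$, for $0\le q\le\p-1$, alters only the two adjacent factors $\bc{\clp{q+1}}{\clp q}\bc{\clp q}{\clp{q-1}}$; the chain identity \eqref{oriented-matrix-property} at that one-dimensional section, having exactly two summands by (P3), forces the new two-factor product to be the negative of the old, so $\sgn$ flips and \eqref{orientation-property} is verified.

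The main obstacle is ensuring well-posedness at both ends---that the boundary coefficient depends only on the pair of cells rather than the auxiliary sequence below, and that the flag sign is consistent under successive single-cell replacements---together with the observation that the two constructions are mutually inverse up to the global choice of $\bc{\clp0}{\lc}$. Both reductions hinge only on the diamond property (P3), which is exactly what turns the orientation identity on flags and the chain condition on coefficients into mirror images of each other.
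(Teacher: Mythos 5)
Your proof is correct and follows essentially the same route as the paper's: the forward direction defines $\bc{\cl}{\cl'}$ as the product of the signs of a flag ending at $\cl'$ and its extension by $\cl$ (with the same well-definedness argument), and the converse defines the flag sign as the telescoping product of the $\bc{}{}$'s down the flag, with the diamond property (P3) doing the work in both directions exactly as in the paper.
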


\begin{proof}
The proof follows \cite[\S13.3]{Reidemeister:1938vf}:
let $\clp{\p+1}/\clp{\p-1} \In \polyc$ be a one-dimensional section in an oriented polyhedral complex
and define for any $\bsc{\cl}{\cl'}\ne 0$ 
\[
\bc\cl{\cl'}:= \sgn(\lc\clp0\dots\cl') \sgn(\lc\clp0\dots\cl' \cl) .
\]
This is well-defined since any change of the sequence $\lc \clp0 \dots \cl'$ yields the same sign change in both terms.
Then 
$\bc{\clp{\p+1}}{\clp{\p}} \,\bc{\clp\p}{\clp{\p-1}} = \sgn(\lc\clp0\dots\clp{\p-1}) \sgn(\lc\clp0\dots \clp\p \clp{\p+1})$
and, because of (P3), the sum in \eqref{oriented-matrix-property} runs over exactly two $\p$-cells $\clp\p,\clp\p'\in\clp{\p+1}/\clp{\p-1}$ with opposite signs in the second term because of \eqref{orientation-property}. 
This proves \eqref{oriented-matrix-property}.

The other way round, let $\cp$ be a pure complex with a system of oriented boundary matrices $\{\bc\cl{\cl'}\}$.
Assign to all flags $\lc\clp0\dots\clp\p$ of all cells $\clp\p\in\polyc$
\[
\sgn(\lc\clp0\dots\clp\p):= \prod_{\q=0}^\p \bc{\clp\q}{\clp{\q-1}} .
\]
Then \eqref{oriented-matrix-property} yields that any two flags differing in exactly one $\q$-cell $\clp\q\ne\clp\q'$ have relative sign 
$ \bc{\clp{\q+1}}{\clp\q} \bc{\clp\q}{\clp{\q-1}} / \bc{\clp{\q+1}}{\clp\q'} \bc{\clp\q'}{\clp{\q-1}} = -1 $.
\end{proof}

%\cjt{doesn't (P2) then already yield orientability for any polyhedral complex? \ie, are there abstract polytopes whose boundary can only be oriented as a non-orientable manifold?}

\begin{example}[orientation of simplicial complexes]
  Every abstract finite simplicial complex $\simc$ is orientable. 
  The canonical way \cite{Kozlov:2008wc} to define a system of oriented boundary matrices is to use the ordering of each $\p$-simplex vertex set $(i_1 i_2 \dots i_\p) \equiv (v_{i_1},v_{i_2},\dots,v_{i_\p})\in\simcp\p$ to define
  \[
  \bc{(i_1 i_2 \dots i_\p)}{(i_1 \dots \hat{\iota}_\q \dots i_\p)} := (-1)^\q
  \]
  where $(i_1 \dots \hat{\iota}_\q \dots i_\p)$ is the $(\p-1)$-simplex consisting of all vertices except for $v_{i_\q}$.
  See \fig{oriented-simplex} for an example (where orientations are visualized in the Hasse diagram using directed edges). 
  \begin{figure}
    \begin{center}
      \tikzsetnextfilename{oriented-simplex}
      %oriented tetrahedron%
\begin{tikzpicture}[shorten >=3pt,shorten <=3pt]

\node (ab)		[h]						{$12$};
\node (ac)		[h,left of=ab]				{$13$};
\node (ad)		[h,left of=ac]				{$14$};
\node (bc)		[h,left of=ad]				{$23$};
\node (bd)		[h,left of=bc]				{$24$};
\node (cd)		[h,left of=bd]				{$34$};

\node (a)		[h,below of=ab,xshift=-5mm]	{$1$};
\node (b)		[h,below of=ad,xshift=2mm]	{$2$};
\node (c)		[h,below of=bd,xshift=7mm]	{$3$};
\node (d)		[h,below of=cd,xshift=5mm]	{$4$};

\node (abc)	[h,above of=ab,xshift=-5mm]	{$123$};
\node (abd)	[h,above of=ad,xshift=2mm]	{$124$};
\node (acd)	[h,above of=bd,xshift=7mm]	{$134$};
\node (bcd)	[h,above of=cd,xshift=5mm]	{$234$};

\node (p)		[h,above of=acd,xshift=8mm] 	{$1234$};
\node (0)		[h,circle, below of=c,xshift=8mm]{$\emptyset$};

\foreach \from/\to in 
{ab/a,ac/a,ad/a,bc/b,bd/b,cd/c,
abc/ad,acd/ad,abd/ad,bcd/bd,
p/abc,p/acd}
\draw (\from) [->, line width=.8pt] -- (\to);

\foreach \from/\to in 
{a/0,b/0,c/0,d/0,
ab/b,ac/c,ad/d,bc/c,bd/d,cd/d,
abc/ab,acd/ac,abd/ab,bcd/bc,abc/bc,acd/cd,abd/bd,bcd/cd,
p/abd,p/bcd}
    \draw (\from) [<-,line width=.8pt] -- (\to);

\end{tikzpicture}
      \caption{Canonical orientation on the abstract tetrahedron $(1234)$ with arrow down (up) for positive (negative) coefficient in the oriented boundary matrix.}
      \label{fig:oriented-simplex}
    \end{center}
  \end{figure}
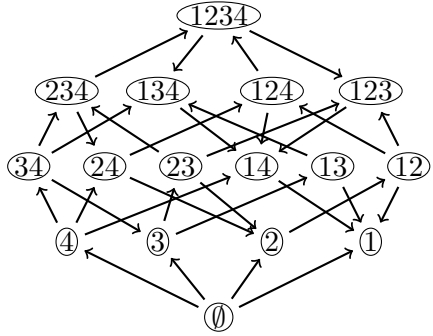
\end{example}

\begin{remark}[orientable manifolds]
  Note that orientability of a complex is a local concept. 
  For combinatorial (pseudo-)manifolds it is still a weaker property than the global orientability of the manifold.
  For example, all kinds of non-orientable smooth surfaces are triangulable with the underlying combinatorial structure of abstract simplicial complexes (which are orientable, as shown above).
  Thus, even if all $\m$-cells in an $\m$-dimensional pseudo-manifold $\cm$ are orientable, it is not yet guaranteed that they give a consistent global orientation to $\cm$.
  
  This can be made explicit by formally adding to an $\m$-dimensional pseudo-manifold $\cm$ a unique greatest $(\m+1)$-dimensional element $\gc$ incident to all $\m$-cells $\clp\m\in\cm$.
  Thus, the whole manifold is the section $\gc/\lc = \cm$. Then one can define:
  \begin{description}
    \item A combinatorial pseudo-manifold $\cm = \gc/\lc$ is \emph{orientable} if $\gc$ is orientable. 
  \end{description}
%  \cjt{An example of a non-orientable simplicial manifold is given in \fig{would be nice to have}}
\end{remark}

A system of oriented boundary matrices is the basis for chain complexes and homology \cite{Reidemeister:1938vf,Hatcher:2002ut}:
\begin{defin}[{chain complex}]
\label{def:chain-complex}
  Given a complex $\cp$, \emph{$\p$-chains} are formal linear combinations of $\p$-cells in $\cp$  with integer coefficients, \ie elements of the free Abelian group over $\cpp\p$, denoted as $C_\p(\cp)\equiv C(\cp,\Z)$.
  For each $\p$, a boundary map $\bmp\p: C_\p(\cp) \lora C_{\p-1}(\cp)$ is defined by its action on each $\p$-cell
  \[
  \bmp\p(\clp\p) := \sum_{\clp{\p-1}\in\cpp{\p-1}} \bc{\clp\p}{\clp{\p-1}} \clp{\p-1} .
  \]
  If $\cp$ is pure and finite and the coefficients $\bc{\clp\p}{\clp{\p-1}}$ of the boundary map are a system of oriented boundary matrices, it follows that
  \[
  \bmp\p \bmp{\p+1} = 0 .
  \]
  Such a sequence of Abelian groups $C(\cp)=\bigcup_{\p}C_\p(\cp)$ obviously is a complex as well.
  Together with the boundary maps
  \[
  0 \os{\bmp{\m+1}}{\lora}  C_\m(\cp) \os{\bmp{\m}}{\lora}  C_{\m-1}(\cp) \os{\bmp{\m-1}}{\lora} \dots \os{\bmp2}{\lora}  C_1(\cp) \os{\bmp1}{\lora}  C_0(\cp) \os{\bmp0}{\lora}  C_{-1} (\cp) = \Z\,\lc
  \]
  it is called the \emph{chain complex} $(C(\cp),\bm)$.
  The well-defined quotients $H_\p:= \text{Ker}\,\bmp\p / \text{Im}\,\bmp{\p+1}$ are the \emph{homology groups}.
\end{defin}

%\cjt{include remark that polyhedral structure is necessary for a chain-complex?}

%\cjt{particular statement about orientability of 2-complexes?}

\

To sum up this \sec{combinatorial-complexes}, starting from the well known combinatorial definition of abstract simplicial complexes I have recalled the most general definition of a complex as a partially ordered set with a bounding relation and a notion of dimension.
I have shown how polyhedral complexes can be defined in this purely combinatorial setting and how this class can be further generalized on the basis of subdivision complexes. 
Finally, I have introduced the proper notion of orientability of complexes and explained its relation to homology.

\newpage

%===========================================================================

\section{Complexes from atoms}
\label{sec:molecules}

%\cdo{maybe at some point you could briefly mention/discuss the possibility that spin-foam amplitudes may not admit any factorization in elementary atoms. Although I cannot come up with an explicit example, I also cannot come up with a proof that this is impossible in principle.}

The combinatorial complexes which spin-foam models are based on have a molecular structure in the sense that they can be constructed from atomic building blocks on which their amplitudes have support.
Most basic models are derived on a cellular decomposition of a smooth spacetime manifold and their amplitudes turn out to be based on the 2-skeleton of the dual complex \cite{Baez:2000kp,Perez:2013uz}.
From a genuine state sum perspective, emphasizing the intermediate spin-network states summed over, the $\std$-cell structure of the primal complex is reflected on the dual in terms of the atomic parts in the subdivision, as has been noted and explained in detail in \cite{\KLP}.
This becomes obvious also in various reformulations \cite{Bahr:2013ek} of the models.

The molecular structure is most explicit as well as necessary in the GFT formulation.
This is induced by the way variables and local amplitudes in each spin-foam amplitude in the perturbative expansion of a GFT state sum appear.
Moreover, since a GFT is independent of any a-priori manifold structure, molecules necessarily become the primary spacetime structure, defining the type of combinatorial complexes which are generated. 

The forthcoming section, based on \cite{\ORT}, includes a self-contained description of these structures, one that increases its utility within the group field theory framework.  
Thereby, a particular aim is to show how the most general molecules can be obtained from rather special atoms.
On the one hand, any spin-foam molecule can be obtained from only regular loopless atoms. 
On the other hand, for any boundary graph there are spin-foam molecules which are derived using only one type of atom, the simplicial atom.

Given the technical nature of this section, a synopsis of the various structures involved might help with orientation.

\begin{minipage}{0.65\textwidth}
\begin{displaymath} \nonumber
  \xymatrix{
   &\bps \ar[d] \\ 
   \bgs \ar[r]^{\bisec} &\bbgs \ar@/^/[r]^{\bulk}  &\sfas \ar@/^/[l]^{\bs} \ar@{.>}[rr]&&\sfrs\ar@/_{3pc}/@{.>}[lll]_{\bs}\\
   &\\
   \bgst_\rl \ar[r]\ar[uu]^{\pi_\rl}  &\bbgst_\rl \ar@/^/[r]\ar[uu]  &\sfast_\rl \ar@/^/[l]\ar[uu]\ar@{.>}[r] &\sfrst_\rl\ar[r]&\sfrst_{\copies,\lnb}\ar[uu]^{\Pi_{\copies,\lnb}}\ar[dd]_{\sdec}&\\
   &\bpst_{\copies}\ar[u] \ar[d] & & & &\\
 \bgst_\rs\ar[r]^{\widetilde{\bisec}}&
    \bbgst_\rs\ar@/^/[r]^{\widetilde{\bulk}}& \sfast_\rs\ar@/^/[l]^{\bst}\ar@{.>}[r]&\sfrst_\rs\ar[r]& \sfrst_{\copies,\snb}\ar@/^/@{.>}[uulll]^{\bst}\ar@/_{2pc}/[uuuu]_{\Pi_{\copies,\snb}}
 }
\end{displaymath}
\end{minipage}
\begin{minipage}{0.35\textwidth}\
\small
\begin{tabular}{cl}
 $\bgs$ 	& boundary graphs \\
$\bbgs$	& bisected graphs \\
$\sfas$ 	& spin-foam atoms \\
$\bps$	& boundary patches \\
$\sfrs$	& spin-foam molecules \\
\\
$\sim$	& labelled \\
$\copies$	& $\copies$-regular \\
$\loopless$& loopless \\
$\simplicial$& simplicial \\ 
$\textsc{nb}$& non-branching\\
\\
$\bisec$	& bisection map\\
$\bulk$	& bulk map \\
$\bs$	& boundary map\\
$\pi,\Pi$	& projection maps from \\
		& labelled to unlabelled\\
$D$		& decomposition map \\ %from\\ & loopless to simplicial
\end{tabular}
\end{minipage}
  
For the definition of spin-foam molecules, one starts with a set of boundary graphs $\bgs$ that provide support for %the 
LQG states. 
For a graph $\bg\in\bgs$, one arrives at the corresponding bisected boundary graph  $\bbg= \bisec(\bg)\in\bbgs$ by bisecting each of its edges. The graph $\bbg$ can be augmented to %arrive at the corresponding 
a 2-dimensional spin-foam atom $\sfa = \bulk(\bbg) \in\sfas$.  This spin-foam atom $\sfa$ is the simplest spin-foam structure with $\bbg$ as a boundary: $\bbg= \bs\sfa$.  
Moreover, the bisected boundary graph $\bbg$ can be decomposed into boundary patches $\bp\in\bps$.  
The boundary patches are important because it is along these patches that atoms are bonded to form composite structures, known as spin-foam molecules $\sfrs$. 
The boundary of these molecules are (generically a collection of) graphs in $\bbgs$. Moreover, the molecules are the objects generated in the perturbative expansion of the group field theory. 

From the GFT  perspective, however,  one looks for as concise a way as possible to generate such structures.  
It turns out that the complexity of the GFT  generating function can be substantially reduced by considering $\copies$-regular, loopless ($\loopless$) graphs $\bgs_\rl$. 
%while loopless means that the terminus of any edge does not coincide with its source. 
For this set of objects, one can then follow an analogous procedure to generate $\bbgs_\rl$, $\sfas_\rl$ and $\sfrs_\rl$.    
%\cdo{the distinction between real and virtual links is only useful for this specific GFT construction,  but it is not essential for the actual definition of complexes etc. Maybe this should be clarified, by separating the introduction of virtual links from the rest of the definitions.}
From these, arbitrary molecules $\sfrs$ can be obtained %from atoms in $\sfast_\rl$ in the following way.
in terms of an additional labelling ($\sim$), distinguishing \emph{real} from \emph{virtual} edges.
Each graph in $\bgs$ can then be represented by a class of labelled graphs in $\bgst_\rl$ in terms of a surjection $\pi_\rl:\bgst_\rl\ra\bgs$.
This map can be extended to labelled bisected graphs $\bbgst_\rl$ and atoms $\sfast_\rl$ but not the labelled molecules $\sfrst_\rl$. 
However, one can identify a subset $\sfrst_{\copies,\lnb}\In \sfrst_\rl$, for which $\pi_\rl$ can be extended to a surjection $\Pi_{\copies,\lnb}:\sfrst_{\copies,\lnb}\ra\sfrs$.  Thus, every molecule in $\sfrs$ is represented by a class of molecules in $\sfrst_{\copies,\lnb}$.

To obtain molecules with arbitrary boundary graphs $\bgs$ the strategy is the following.
%The key is then that the patches making up any graph in $\bbgst_\rl$ come from a finite set of patches $\bpst_{\copies}$, called $\copies$-patches. 
%Using these patches 
One can pick out a finite subset of simplicial $\copies$-graphs $\bgst_\rs\In \bgst_\rl$,
that are based on the complete graph over $\copies+1$ vertices,
inducing $\bbgst_\rs$, $\sfast_\rs$ and $\sfrst_\rs$ follow as before. 
While $\bgst_\rs$, $\bbgst_\rs$ and $\sfast_\rs$ are finite sets, the set of simplicial spin-foam molecules $\sfrst_\rs$ is infinite and contains a subset $\sfrst_{\copies,\snb}$ whose elements reduce properly to molecules in $\sfrs$.
But the set $\sfrst_{\copies, \snb}$ does not cover $\sfrs$ itself through some surjection. %, but maps onto a subset.  
To cover all of $\sfrs$, one needs $\sfrst_{\copies,\lnb}$. 
%Having said that
Still, all possible boundary graphs are covered since \textit{i}) there is a decomposition map $\sdec:\sfrst_{\copies,\lnb}\lora\sfrst_{\copies,\snb}$ and \textit{ii}) every graph or collection of graphs from $\bbgst_\rl$ arises as the boundary of some molecule in $\sfrst_{\copies,\snb}$. 
As a result, $\sfrst_{\copies,\snb}$ is sufficient to support a spin-foam dynamics for arbitrary kinematical LQG states. 

\

The forthcoming construction is separated into three parts, accordingly.
The first, \sec{atoms}, catalogues the basic building blocks or atoms, along with the set of possible bonds that may arise between pairs of atoms.  
These structures are drawn directly from those used in loop quantum gravity. 
%Both the set of atoms and the set of their bonds are very large and inspire an attempt to find smaller subsets that still probe the whole space of graphical structures in some precisely defined sense which is explained and proven in the fifth and sixth part. 
In \sec{regular-loopless} the way these molecules are obtained from regular loopless atoms is detailed.
Then, in \sec{simplicial-structures} it is proven that any boundary graph arises as the boundary constructed from labelled simplicial atoms.

After all these technicalities I will finally discuss the relation of the 2-dimensional spin-foam molecules to combinatorial pseudo-manifolds of spacetime dimension $\std$.
%Finally \cjt{we} will close this section emphasizing that the whole construction can be equivalently carried out in the language of stranded diagrams which is the usual one used in the GFT literature and is totally equivalent to the more LQG oriented language of boundary graphs and spin-foam atoms used in this work.
%
%\
%
%I will follow the logic and concept of  \cite{\ORT}, but embed it in the broader conceptual framework of combinatorial complexes laid out in the previous \sec{combinatorial-complexes}.

%==============================================================

\subsection{Spin-foam atoms and molecules}\label{sec:atoms} 

This part focusses on defining the structure underlying LQG and SF models, in particular the procedure by which atoms bond to form composite structures:
\begin{displaymath} \nonumber
  \xymatrix{
   &\bps \ar[d] \\ 
   %\textsc{unlabelled} &
   \bgs \ar[r]^{\bisec} &\bbgs \ar@/^/[r]^{\bulk}  &\sfas \ar@/^/[l]^{\bs} \ar@{.>}[r]&\sfrs\ar@/_{3pc}/@{.>}[ll]_{\bs} 
   }
\end{displaymath}

\begin{defin}[{bisected/boundary graph}]
  \label{def:atomicbg}
  A \emph{boundary graph} is a double $(\V_\bg,\E_\bg)$, where $\V_\bg$ is the set of vertices and $\E_\bg$ is the multiset of edges which are unordered pairs $(\vb_1 \vb_2)\in\V_\bg\times\V_\bg$,
  subject to the condition that the graph is connected. 

  A \emph{bisected boundary graph} is a double, $(\V_{\bbg}, \E_{\bbg})$, constituting a bipartite graph with vertex partition $\V_{\bbg}=\Vb\cup\Vh$, such that the vertices $\vh\in\Vh$ are bivalent. 
\end{defin}
The set of boundary graphs is denoted by $\bgs$. 
Indeed this is just the set of connected multigraphs.
The set of bisected boundary graphs is denoted by $\bbgs$.  

\begin{remark}
  Boundary graphs may contain multi-edges (multiple edges joining two vertices), loops (edges whose two vertices coincide) and even 1-valent vertices (vertices with only one incident edge). Thus, $\bgs$ constitutes a very large set.  
  However, such graphs arise within loop quantum gravity and can be incorporated within the group field theory framework. 
  From this perspective they are the natural objects to start with.
  
  The graphs are called boundary graphs because they are meant to be the part of a spacetime boundary which quantum states have support on.
  Indeed they are defined to have the generic structure of the (dual) 1-skeleton of generalized polyhedral complexes  (\dref{generalized-polyhedral}). % and, thus, are generalized polyhedral 1-complexes themselves.
  For this reason it is necessary to define all objects in the vertex representation.
\end{remark}

\begin{proposition}
  \label{prop:bisectedcor}
  There is a bijection $\bisec: \bgs \lora \bbgs$.
\end{proposition}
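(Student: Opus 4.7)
The plan is to construct the bijection $\bisec$ explicitly together with its inverse, and then verify that the two maps land in the correct sets and are mutually inverse.

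First I would define $\bisec$ by the obvious bisection procedure. Given a boundary graph $\bg = (\V_\bg, \E_\bg) \in \bgs$, set $\Vb := \V_\bg$ and introduce, for each edge $\eb \in \E_\bg$, a fresh bivalent vertex $\vh_\eb$; let $\Vh := \{\vh_\eb \mid \eb \in \E_\bg\}$ and $\V_\bbg := \Vb \cup \Vh$. For every edge $\eb = (\vb_1 \vb_2) \in \E_\bg$ (including loops, where $\vb_1 = \vb_2$), include the pair $(\vb_1 \vh_\eb), (\vh_\eb \vb_2)$ in $\E_\bbg$. Then verify that $\bisec(\bg) := (\V_\bbg, \E_\bbg)$ is bipartite with respect to $(\Vb, \Vh)$ (every edge runs between the two classes by construction), that every $\vh \in \Vh$ is bivalent (exactly the two edges associated with its defining $\eb$), and that connectedness is preserved (any $\Vb$-path in $\bg$ lifts to a path in $\bisec(\bg)$ passing through the intermediate $\vh$'s). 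Hence $\bisec(\bg) \in \bbgs$.

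Second I would define the candidate inverse $\bisec^{-1}: \bbgs \lora \bgs$ by contracting each bivalent vertex. Given $\bbg = (\Vb \cup \Vh, \E_\bbg) \in \bbgs$, set $\V_\bg := \Vb$; for each $\vh \in \Vh$, its two incident edges $(\vb_1 \vh)$ and $(\vh \vb_2)$ (with $\vb_1, \vb_2 \in \Vb$ by bipartiteness, possibly equal) contribute a single edge $(\vb_1 \vb_2)$ to $\E_\bg$. This is well-defined as $\vh$ is bivalent, and connectedness of $\bbg$ immediately gives connectedness of the resulting multigraph on $\Vb$: any path in $\bbg$ between two $\Vb$-vertices traverses alternating $\Vb$- and $\Vh$-vertices (bipartiteness), and contracting the $\Vh$-vertices yields a path in the image. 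Thus $\bisec^{-1}(\bbg) \in \bgs$.

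Finally I would check the two compositions. For $\bisec^{-1} \circ \bisec = \mathrm{id}_\bgs$, the $\vh_\eb$ introduced for an edge $\eb$ carries exactly the two edges $(\vb_1 \vh_\eb), (\vh_\eb \vb_2)$ which contract back to $\eb$; this gives edgewise and vertexwise identity. For $\bisec \circ \bisec^{-1} = \mathrm{id}_\bbgs$, each bivalent $\vh$ in $\bbg$ yields a single edge upon contraction, and re-bisecting reinserts the same bivalent vertex (up to canonical identification with $\vh$) together with its two edges.

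The one point requiring care — and the main potential obstacle — is the handling of loops: a loop $(\vb\vb) \in \E_\bg$ must produce a pair of parallel edges $(\vb\vh_\eb), (\vh_\eb\vb)$ in $\bisec(\bg)$, which is permitted because $\E_\bbg$ is treated as a multiset; conversely, a bivalent $\vh$ whose two incident edges share the same $\Vb$-endpoint contracts to a loop. Both maps are compatible with multiplicities of edges, so the bijection survives in the multi-edge / loop case.
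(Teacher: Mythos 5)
Your proof is correct and follows essentially the same route as the paper's: define $\bisec$ by inserting a bivalent vertex on each edge, observe the construction is injective, and exhibit surjectivity via the explicit contraction inverse. Your additional care with loops and multi-edges is consistent with the paper's multiset conventions and does not change the argument.
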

\begin{proof}
The \emph{bisection map} $\bisec$ is the subdivision map $\ssub$ defined explicitly on the level of vertex represented complexes, thereby appropriately  treating loops:
given a boundary graph $\bg\in \bgs$, $\bisec$ acts on each edge $(\vb_1\vb_2)\in\E_\bg$, replacing it by a pair of edges $(\vb_1\vh),(\vb_2\vh)$, where $\vh$ is a newly created bivalent vertex effectively bisecting the original edge. 
Thus, $\bisec$ maps
\begin{description}
    \item $\V_\bg\lora  \Vb\cup\Vh$, where $\Vb=\V_\bg$ and  $\Vh$ is the set of vertices bisecting the edges of $\bg$, and
    \item $\E_\bg\lora  \bigcup_{e\in\E_\bg} \{(\vb_1 \vh),(\vb_2 \vh) \mid e =  (\vb_1\vb_2)\}$, the multiset of newly bisected edges.%\footnote{\label{fn:self}Note that a loop $\eb = (\vb\vb)\in\Eb$ is replaced by the multiset of edges $\{(\vb\vh),(\vb\vh)\}$ and thus $\E_{\bbg}$ is a multiset.} 
  \end{description}
  This clearly results in an element of $\bbgs$ and the constructive nature of the map assures its injectivity. 

  Given a graph $\bbg\in\bbgs$, removing the vertex subset $\Vh$ and replacing the edge pair $(\vb_1\vh),(\vb_2\vh)$ by $(\vb_1\vb_2)$ results in an element $\bg\in\bgs$ such that $\bisec(\bg) = \bbg$. Thus, $\bisec$ is surjective.  
\end{proof}
  
  A graph $\bg\in\bgs$ and its bisected counterpart $\bbg\in\bbgs$ are presented in \fig{boundary}. 

\begin{figure}[htb]
  \centering
  %\vspace{3cm}
  \tikzsetnextfilename{boundary}
  % a generic atomic boundary graph undergoing bisection

\begin{tikzpicture}[scale=1.3]

\draw [|->] (2,0) -- node[label=above:$\bisec$]{} (3,0);

% pyramid vertex boundary graph
\begin{scope}
\draw [eb] (1.5,1) circle (0.25cm);
\node [vb]		(a)	at (0,-1)		{};
\node [vb]		(b)	at (1.25,1)		{}; 
\node [vb]		(c)	at (0.5,0.5) 	{}; 
\node [vb]		(d)	at (-0.5,0.5)	{}; 
\node [vb]		(e)	at (-1.25,1)	{}; 
\node [vb]		(f)	at (1,-1)		{}; 
\foreach \i/\j in {a/b,a/c,a/d,a/e,b/c,b/e,c/d,d/e,a/f}{
 \draw [eb] (\i) -- (\j);
  }
\end{scope}

% pyramid vertex boundary graph bisected
\begin{scope}[xshift=4.5cm]
\draw [eb] (1.5,1) circle (0.25cm);
\node [vb]		(a)	at (0,-1)		{};
\node [vb]		(b)	at (1.25,1)		{}; 
\node [vb]		(c)	at (0.5,0.5) 	{}; 
\node [vb]		(d)	at (-0.5,0.5)	{}; 
\node [vb]		(e)	at (-1.25,1)	{}; 
\node [vb]		(f)	at (1,-1)		{}; 
\node [vh]		(g)	at (1.75,1)		{}; 
\foreach \i/\j in {a/b,a/c,a/d,a/e,b/c,b/e,c/d,d/e,a/f}{
 \draw [eb] (\i) -- node[vh] {} (\j);
  }
\end{scope}

\end{tikzpicture}
  \caption{\label{fig:boundary} A boundary graph $\bg$ and its bisected counterpart $\bbg$.}
\end{figure}

%\begin{remark}
%  The bipartite property of the graphs $\bbg\in\bbgs$ means that the pairs $(\vb\vh)\in\E_{\bbg}$ are ordered and thus, $\bbg$ is quite naturally a directed graph. 
%\end{remark}

\begin{defin}[{spin-foam atom}]
  \label{def:sf-atom}
  A \emph{spin-foam atom} is a triple, $\sfa = (\V_{\sfa},\E_{\sfa},\mathcal{F}_{\sfa})$,  of vertices, edges and faces. It is constructed from the pair $(\bbg, \bulk)$, where 
  $\bbg  \in{\bbgs}$ and $\bulk$ is a \emph{bulk map} sending $\bbg = (\V_\bbg,\E_\bbg) = (\Vb\cup\Vh,\Eb)$ to
  \begin{description}
    \item $\V_{\sfa} = \V\cup\Vb\cup\Vh$, where $\V = \{v\}$  is the one-element set of a single \emph{bulk} vertex $v$,
    \item $\E_\sfa = \E \cup \Eb \cup \Eh$, where $\E = \bigcup_{\vb\in\Vb} \{ (v \vb) \}$ and $\Eh = \bigcup_{\vh\in\Vh} \{ (v \vh ) \}$,
    \item $\mathcal{F}_{\sfa} = \bigcup_{\vh\in\Vh}\{(v\bar v\hat v) \mid (\bar v\hat v)\in \E \}$. %, where $(v\vb\vh)$ is the prescription for a face in terms of the three vertices on its boundary.
  \end{description}
\end{defin}

  The set of spin-foam atoms is denoted by $\sfas$. 
  By definition they are $\ssub$-graded simplicial 2-complexes (cf. equation \eqref{vertex-sets}).
  They are called ``atoms" since they contain only a single bulk vertex. 
  For this reason, spin-foam atoms are uniquely specified by their boundary $\bbg = \bs \sfa$, and vice-versa.
  Indeed, $\bulk:\bbgs\lora\sfas$ is bijective by construction and $\bs\sfa = \bulk^{-1}(\sfa)$.

%\begin{remark}[{boundary map}]
%  \label{rem:bdycor}
%  By construction $\bulk:\bbgs\lora\sfas$ is a bijection.  Moreover, one may define a \emph{boundary map} $\bs:\sfas\lora\bbgs$, such that for $\sfa$ constructed from $(\bbg,\bulk)$, this map is defined as $\bs(\sfa) = \bulk^{-1}(\sfa) = \bbg$. 
%%  To say that one is given a spin-foam atom $\sfa$ is to be given a triple  $(\V,\E,\mathcal{F})$ that possess a finer structure outlined in the definition above.  So given a spin-foam atom, one may construct its boundary graph by simply deleting its bulk subsets. As a result the following proposition holds. 
%\end{remark}
Thus, as a result of the bijective property of the maps $\bulk$ and $\bisec$, the following holds:

\fbox{
\begin{minipage}[c][][c]{0.95\textwidth}
\begin{proposition}
  \label{prop:atoms-graphs}
  The set $\sfas$ of spin-foam atoms is catalogued precisely by the set $\bgs$ of boundary graphs. 
\end{proposition}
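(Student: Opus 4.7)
The plan is to simply compose the two bijections established just above the proposition. By Proposition \ref{prop:bisectedcor}, the bisection map $\bisec : \bgs \lora \bbgs$ is a bijection. Immediately after Definition \ref{def:sf-atom}, it is noted that the bulk map $\bulk : \bbgs \lora \sfas$ is bijective by construction, with inverse the boundary map $\bs$. Hence the composition $\bulk \circ \bisec : \bgs \lora \sfas$ is a bijection, which is exactly the content of the claim that $\sfas$ is ``catalogued precisely'' by $\bgs$.

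Concretely, I would structure the proof in two lines. First, recall that $\bisec$ was shown to be bijective by exhibiting an explicit inverse which deletes the bivalent vertices $\Vh$ and re-identifies the pairs of bisected edges into single edges. Second, observe that the bulk map $\bulk$ takes $\bbg = (\Vb \cup \Vh, \Eb) \in \bbgs$ to the atom $\sfa = (\V \cup \Vb \cup \Vh, \E \cup \Eb \cup \Eh, \mathcal{F}_\sfa)$, whose boundary (obtained by deleting the unique bulk vertex $v$ together with the incident edges in $\E \cup \Eh$ and the faces in $\mathcal{F}_\sfa$) recovers precisely $\bbg$; this gives the two-sided inverse $\bs = \bulk^{-1}$. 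Composition then yields the bijection
\[
\bulk \circ \bisec : \bgs \lora \sfas, \qquad (\bulk \circ \bisec)^{-1} = \bisec^{-1} \circ \bs.
\]

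There is no real obstacle here: the proposition is a direct corollary of the two bijections already proved, and the only thing worth emphasising is that the atom is uniquely determined by its bisected boundary graph because it contains by definition a single bulk vertex (so no further combinatorial data need be specified to reconstruct $\sfa$ from $\bs \sfa$). I would therefore keep the proof to essentially one or two sentences, without repeating the constructive details of $\bisec^{-1}$ or $\bulk^{-1}$.
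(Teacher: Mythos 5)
Your proof is correct and follows exactly the paper's own reasoning: the proposition is stated there as an immediate consequence of the bijectivity of $\bisec$ (proposition \ref{prop:bisectedcor}) and of $\bulk$ (noted right after definition \ref{def:sf-atom}, with $\bs = \bulk^{-1}$), composed to give $\bulk\circ\bisec:\bgs\lora\sfas$. Your added remark that the single bulk vertex is what makes the atom reconstructible from its boundary is also the paper's stated justification for why $\bulk$ is bijective.
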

\end{minipage}
}

An illustrative example of such a structure is presented in \fig{atom}.

\begin{figure}[htb]
  \centering
  %\vspace{3cm}
  \tikzsetnextfilename{apyr}
  \begin{tikzpicture}

\draw [|->] (-4,0.3) -- node[label=above:$\bulk$] {} (-3,0.3);
\draw [<-|] (-4,-.3) -- node[label=below:$\bs$] {} (-3,-.3);

% pyramid vertex boundary graph
\begin{scope}[xshift=-6.5cm, scale=1.5]
\node [vb]		(a)	at (0,-1)		{};
\node [vb]		(b)	at (1.25,1)		{}; 
\node [vb]		(c)	at (0.5,0.5) 	{}; 
\node [vb]		(d)	at (-0.5,0.5)	{}; 
\node [vb]		(e)	at (-1.25,1)	{}; 
\foreach \i/\j in {a/b,a/c,a/d,a/e,b/c,b/e,c/d,d/e}{
 \draw [eb] (\i) -- node[vh] {} (\j);
  }
\end{scope}

% ---- pyramid vertex 2-foam
\begin{scope}[yshift=.5cm, scale=2]
\node [c]		(v)	at (0,-.12)		{};
\node [c]		(1)	at (-0.56,0.32) 	{}; 
\node [c]		(2)	at (-0.17,-0.07)	{}; 
\node [c]		(3)	at (0.6,0) 		{}; 
\node [c]		(4)	at (.14,.37)	{}; 
\node [c]		(5)	at (0,-.6)		{};
\node [c]		(12)	at (-.72,.32)	{};
\node [c]		(23)	at (.44,.09)	{};
\node [c]		(34)	at (.57,.52)	{};
\node [c]		(14)	at (-.3,.61)		{};
\node [c]		(15)	at (-1,-.44)		{};
\node [c]		(25)	at (-.28,-.97)	{};
\node [c]		(35)	at (1.02,-.76)	{};
\node [c]		(45)	at (.28,-.23)	{};
\foreach \i/\j in {1/2,1/4,2/3,3/4,1/5,2/5,3/5,4/5}{
 \path	[f] 	(\i) -- (\i\j) -- (\j) -- (v) -- cycle;
 }
 \foreach \i in {1,2,3,4,5}{
  \draw [e] (\i) -- (v);
  }
\foreach \i/\j in {1/2,1/4,2/3,3/4,1/5,2/5,3/5,4/5}{
 \draw 	[eh]	(v)		-- (\i\j);
 \draw	[eb] 	(\i) node[vb] {} -- (\i\j) node[vh] {};
 \draw 	[eb] 	(\j) node[vb] {} -- (\i\j) node[vh] {};
 }
\draw [e] (3) node[vb] {} -- (v) node[v] {};
%
%\draw (-0.58,-0.14)-- (0,1.37);
%\draw (-0.58,-0.14)-- (1.15,-0.32);
%\draw (-0.58,-0.14)-- (-1.45,-0.74);
%\draw (0.89,-1.2)-- (1.15,-0.32);
%\draw (1.15,-0.32)-- (0,1.37);
%\draw(0,1.37)--  (0.89,-1.2);
%\draw (-1.45,-0.74)-- (0.89,-1.2);
%\draw(0.89,-1.2)-- (1.15,-0.32);
%\draw (-1.45,-0.74)-- (0,1.37);
%\draw(0,1.37)--  (0.89,-1.2);
%\draw (0.89,-1.2)-- (-1.45,-0.74);
\end{scope}

\end{tikzpicture}
  \caption{\label{fig:atom} A spin-foam atom and its (bisected) boundary graph (same colouring as in \fig{subdivision}, according to equation \eqref{vertex-sets}.}
\end{figure}

%\comment{As shown in section \ref{xxx}, the atoms may be endowed with a 2-(chain-)complex structure while, their boundaries, are 1-sub-complexes.}

%\begin{remark}
%  \label{rem:retraction}
%  A neat alternative to the above construction is given in \cite{\KLP}.  One embeds the graph $\bbg\in\bbgs$ in the bounding 3-sphere of a 4-ball. One performs a radial deformation retraction of this ball to a point, denoted by $v\in \V$.  This retraction restricts to the graph, where one denotes the path traced out by the vertex $\vb\in\Vb$ and $\vh\in\Vh$ as edges $(v\vb),(v\vh)\in \E$ respectively, while the surface traced out by an edge in $(\bar v\hat v)\in \E_{\bbg}$ is interpreted as a face $f=(v\bar v \hat v)\in \mathcal{F}_{\sfa}$.  
%  In contrast, the definition given earlier was chosen to be purely combinatorial.
%\end{remark}

%%%%%%%%%%%%%%%%%%%%%%%%%%%%%%%%%%%%%%%%%
\

In the spirit of such a constructive way to define 2-complexes, it will be useful to go even one step further and identify the structure of graphs as built from 1-vertex parts:
\begin{defin}[{boundary patch}]
  A \emph{boundary patch} is a double $\bp = %\bp_{\vb} = 
  (\V_{\bp},\E_{\bp})$, where
  \begin{description}
    \item $\V_{\bp} = \{\vb\} \cup  \Vhat_{\bp}$, $\Vhat_{\bp}\ne\emptyset$,
    \item $\E_{\bp} = \{(\vb\vh) \mid \vh\in\Vhat_{\bp}\}$ is a multiset of edges where each $(\vb\vh)$ occurs %at least once and 
    at most twice.
  \end{description} 
\end{defin}
The set of boundary patches is denoted by $\bps$.

\begin{remark}
  Boundary patches are useful since they arise as the doubles $\bp_{\vb}(\bbg) = (\V_{\vb}, \E_{\vb})$, formed as the closure of the star of $\vb\in\Vb$, within $\bbg\in\bbgs$.
  Thus,
  \[
  \V_{\vb} = \{\vb\} \cup  \{\vh\mid(\vb\vh)\in\E_{\bbg}\} \quad\text{and}\quad
  \E_{\vb} = \{(\vb\vh) \mid \vh\in\V_{\bbg}\}.
  \]
  In words, a boundary patch $\bp_{\vb}(\bbg)$ is a graph containing $\vb$ itself, all bisected boundary edges containing $\vb$, %(the result of the star operation),  
  as well as the endpoints of these edges. % (the result of the closure operation). 
  A simple example is depicted in \fig{star}. 

  \begin{figure}[htb]
  \centering
  \tikzsetnextfilename{patch}
  \begin{tikzpicture}[scale=1.5]
  \draw [eb] (0.75,0.5) circle (.25cm);
  \node [vb]		(vb)	at (.5,.5) 	 	{}; 
  \node [vh]		(1)	at (0,.5)		{};
  \node [vh]		(2)	at (.25,1)		{}; 
  \node [vh]		(3)	at (.25,0)		{}; 
  \node [vh]		(4)	at (1,.5)		{};
  \foreach \i/\j in {vb/1,vb/2,vb/3}
  \draw [eb] (\i) -- (\j);
  \end{tikzpicture}
  %\vspace{3cm}
  \caption{\label{fig:star} A boundary patch.}
\end{figure}
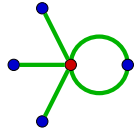

  Furthermore, when considering the boundary graphs as dual 1-skeletons of the boundary $\bs\cp$ of a spacetime $\m$-complex $\cp$, a patch is the part of the graph dual to a single $(\m-1)$-cell in $\bs\cp$.

\end{remark}

\begin{remark}[{generators}]
  \label{rem:generator}
  For some subset of patches, $\bps'\In \bps$,  the set of graphs \emph{generated} by $\bps'$, denoted $\sigma(\bps')$, is the set of all possible graphs that are composed only of patches from $\bps'$.
\end{remark}

Then, it is apparent that for every bisected boundary graph $\bbg\in\bbgs$  there is some subset $\bps'\In\bps$ generating $\bbg$ from which it directly follows that:
\nopagebreak
\begin{proposition}\label{prop:generator}
$\bbgs = \sigma(\bps)$.
\end{proposition}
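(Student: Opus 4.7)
The plan is to establish the two inclusions $\sigma(\bps) \subseteq \bbgs$ and $\bbgs \subseteq \sigma(\bps)$ separately, with the second being the substantive direction.

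For the inclusion $\bbgs \subseteq \sigma(\bps)$, I would proceed constructively. Given any $\bbg = (\Vb\cup\Vh, \Eb) \in \bbgs$, consider the collection of patches obtained as stars,
\[
\bps(\bbg) := \{\bp_{\vb}(\bbg) \mid \vb \in \Vb\} \subseteq \bps,
\]
where each $\bp_{\vb}(\bbg)$ is defined as in the remark following the definition of boundary patches. The key combinatorial observation is that, because $\bbg$ is bipartite with partition $\Vb \cup \Vh$, every edge $(\vb\vh) \in \E_{\bbg}$ has exactly one endpoint in $\Vb$ and hence belongs to exactly one patch in $\bps(\bbg)$, namely $\bp_{\vb}(\bbg)$. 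Thus the edge sets of the patches in $\bps(\bbg)$ partition $\E_{\bbg}$. Furthermore, since each $\vh\in\Vh$ is bivalent in $\bbg$, it is shared by at most two patches in $\bps(\bbg)$ (one, if both incident edges share the same $\vb$-endpoint; two otherwise). Performing the composition operation by identifying precisely those shared $\vh$-vertices across the different patches then recovers $\bbg$ on the nose, showing $\bbg \in \sigma(\bps(\bbg)) \subseteq \sigma(\bps)$.

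For the reverse inclusion $\sigma(\bps) \subseteq \bbgs$, I would check that any graph assembled from patches satisfies the defining properties of a bisected boundary graph. Each patch is itself bipartite with a single $\Vb$-vertex and only bivalent-or-less $\Vh$-vertices, and the composition operation (identifying $\vh$-vertices across patches) does not merge $\Vb$-vertices with $\Vh$-vertices and respects the multiplicity-at-most-two condition. Hence the resulting graph is bipartite with all $\Vh$-vertices of valence at most two; assuming the composition is carried out so that every internal $\vh$ is identified exactly once (glued into a bivalent vertex) one obtains an element of $\bbgs$. Connectedness of the composite graph is imposed by requiring that the gluing pattern of patches be itself connected, as is implicit in the notion $\sigma(\bps)$.

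The main potential obstacle is not a deep one but a bookkeeping issue: making precise the composition operation that underlies the notation $\sigma(\bps)$, so that the identification of $\vh$-vertices used in the forward direction matches the generic composition used to define $\sigma$. Once this is fixed, both inclusions follow immediately from the bipartite/bivalent structure, and no further combinatorial argument is needed.
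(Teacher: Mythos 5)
Your proof is correct and follows the same route the paper takes (the paper merely asserts the result as ``apparent'' after noting that every $\bbg$ decomposes into its star-patches $\bp_{\vb}(\bbg)$, which is exactly your forward inclusion, while your reverse inclusion spells out the equally routine check that patch composition always lands in $\bbgs$). The only caveat you raise --- that the composition operation underlying $\sigma$ must be pinned down so that every $\vh$ ends up bivalent --- is a genuine imprecision in the paper's own definition rather than a gap in your argument.
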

\begin{remark}[{bondable}]
  \label{rem:bondable}
  Two patches, $\bp_{\vb_1}(\bbg_1)$ and $\bp_{\vb_2}(\bbg_2)$,  whether or not $\bbg_1$ and $\bbg_2$ are distinct,  
  are said to be \emph{bondable}, if $|\V_{\vb_1}|= |\V_{\vb_2}|$ and $|\E_{\vb_1}|= |\E_{\vb_2}|$ (and thus, they have the same number of loops).  
\end{remark}

\begin{defin}[{bonding map}]
  \label{def:bonding}
  A \emph{bonding map}, $\gm:\bp_{\vb_1}(\bbg_1)\lora \bp_{\vb_2}(\bbg_2)$, is a map identifying, elementwise, two bondable patches such that
  \begin{equation}
    \vb_1    \mapsto    \vb_2\,,\quad\quad
    \V_{\vb_1} \setminus \{\vb_1\}   \lora  \V_{\vb_2} \setminus \{\vb_2\}\,,\quad\quad
  \E_{\vb_1}  \lora  \E_{\vb_2}
  \end{equation}
  and with the compatibility condition that for each identified pair of vertices $\vh_1\mapsto \vh_2 $ %\in\V_{\vb_1}\in\V_{\vb_{2}}$
  the corresponding pair of edges is identified accordingly, \ie $(\vb_1\vh_1) \mapsto (\vb_2\vh_2)$. %$e_1 = (\vb_1\vh_1)\in\E_{\vb_1}\lora e_2 = (\vb_2\vh_2)\in\E_{\vb_2}$. 
  
  More particular, if $\bbg_1\ne\bbg_2$, the map $\gm$ is also called a \emph{proper} bonding map. 
  For $\bbg_1 = \bbg_2$, on the contrary, it is called a \emph{self}-bonding map. 
\end{defin}
A simple example is illustrated in \fig{bonding}, introducing a graphical notation for bonding maps. 
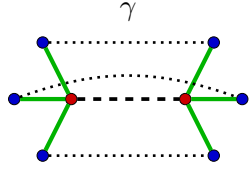
\begin{figure}[htb]
  \centering
  %\vspace{3cm}
  \tikzsetnextfilename{bonding}
 \begin{tikzpicture}[scale=1.5]
\node [vb]		(a)	at (-0.5,0)		{}; 
\node [vh]		(1)	at (-.75,.5)		{};
\node [vh]		(2)	at (-.75,-.5)	{}; 
\node [vh]		(3)	at (-1,0)		{}; 
\node [vb]		(b)	at (0.5,0)		{}; 
\node [vh]		(4)	at (.75,.5)		{};
\node [vh]		(5)	at (.75,-.5)		{}; 
\node [vh]		(6)	at (1,0)		{}; 
\foreach \i/\j in {a/1,a/2,a/3,b/4,b/5,b/6}
 \draw [eb] (\i) -- (\j);
\path	(1) 	edge [bh] node[label=above:$\gm$] {} (4)
	(2)	edge [bh]		 		(5)
	(3)	edge [bh, bend left=20]	(6)
	(a)	edge [bb]				(b);
\end{tikzpicture}
  \caption{\label{fig:bonding} A bonding map $\gm$ identifying two bondable patches.} 
\end{figure}

\begin{remark}
  The compatibility condition ensures that loops are bonded to loops. In principle, slightly more general gluing maps can be incorporated within the group field theory framework, corresponding to loop edges bonding to non-loop edges.  However, these gluings are absent from the LGQ and SF theories.  Thus, there is no motivation to include them here.
%\end{remark}
%
%\begin{remark}

  Certainly, for two bondable patches, there are many bonding maps that satisfy the compatibility condition.   However, all may be obtained from a given one by applying compatible permutations to the sets $\V_{\vb_1}$ and $\E_{\vb_1}$.   
\end{remark}

\begin{defin}[{spin-foam molecule}]
  \label{def:molecule}
  $\quad$ A \emph{spin-foam molecule} is a triple $\sfr = (\V_{\sfr}, \E_{\sfr}, \mathcal{F}_{\sfr})$  constructed from a collection of spin-foam atoms quotiented by a set of bonding maps.
\end{defin}

\begin{remark}[{bonding example}]
  \label{rem:bonding-example}
  It is worth considering the simple example of two spin-foam atoms $\sfa_1$ and $\sfa_2$, with respective bisected boundary graphs $\bbg_1 = \bs\sfa_1$ and $\bbg_2 = \bs\sfa_2$ and two bondable patches $\bp_{\vb_1}(\bbg_1)$ and $\bp_{\vb_2}(\bbg_2)$. Quotienting the pair $\sfa_1$, $\sfa_2$ by a bonding map $\gm:\bp_{\vb_1}(\bbg_1)\lora \bp_{\vb_2}(\bbg_2)$ results in a spin-foam molecule $\sfr  \equiv \sharp_\gm\,\{\sfa_1,\sfa_2\}$ consisting of 
  \begin{equation}  
    \V_{\sfr} = \sharp_\gm\,\{\V_{\sfa_1},\V_{\sfa_2}\}\;,
     \quad\quad
     \E_{\sfr}=\sharp_\gm\,\{\E_{\sfa_1},\E_{\sfa_2}\}\;,
     \quad\quad
     \mathcal{F}_{\sfr}=\sharp_\gm\,\{\mathcal{F}_{\sfa_1},\mathcal{F}_{\sfa_2}\}\;,
  \end{equation}
  where $\sharp_\gm$ denotes the union of the relevant sets \emph{after} the identification of the elements of $\V_{\vb_1}\In \V_{\sfa_1}$ and $\E_{\vb_1}\In \E_{\sfa_1}$ with those of $\V_{\vb_2}\In  \V_{\sfa_2}$ and $\E_{\vb_2}\In \E_{\sfa_2}$.
  Thus, the patch %$\bp_{\vb}(\bbg) \equiv 
  $\bp_{\vb_1}(\bbg_1) \cong \bp_{\vb_2}(\bbg_2)$ along which the atoms are bonded is part of the resulting molecule. 
  This is because a molecule is still a $\ssub$-graded simplicial 2-complex.
  Only, by construction, edges in the patch are now incident to two faces and therefore internal, \ie not on the boundary $\bs\sfr$.
  Under the action of $\issubs$ they are removed.
  
  Note that the bonding construction thus guarantees conditions (M2$^\star$) and (M3$^\star$) (remark \ref{rem:dual-manifold}) in the resulting 2-complex. 
  This sets the stage for an interpretation of molecules as dual to combinatorial manifolds (\sec{std-complexes}).
  
  An instance of the above example is presented in \fig{molecule}. 
\end{remark}

\begin{figure}[htb]
  \centering
  \tikzsetnextfilename{molecule}
\begin{tikzpicture}[scale=1.8]
\path (.6,0) edge [bb]	(1.5,0);

\node [c]		(v)	at (0,-.12)		{};
\node [c]		(6)	at (-0.56,0.32) 	{}; 
\node [c]		(7)	at (-0.17,-0.07)	{}; 
\node [c]		(3)	at (0.6,0) 		{}; 
\node [c]		(8)	at (.14,.37)	{}; 
\node [c]		(5)	at (0,-.6)		{};
\node [c]		(67)	at (-.72,.32)	{};
\node [c]		(37)	at (.44,.1)		{};
\node [c]		(38)	at (.7,.5)		{};
\node [c]		(68)	at (-.3,.61)		{};
\node [c]		(56)	at (-1,-.44)		{};
\node [c]		(57)	at (-.28,-.97)	{};
\node [c]		(35)	at (.8,-.7)		{};
\node [c]		(58)	at (.28,-.23)	{};
\foreach \i/\j in {6/7,6/8,3/7,3/8,5/6,5/7,3/5,5/8}{
 \path	[f] 	(\i) -- (\i\j) -- (\j) -- (v) -- cycle;
 }
 \foreach \i in {6,7,3,8,5}{
  \draw [e] (\i) -- (v);
  }
\foreach \i/\j in {6/7,6/8,3/7,3/8,5/6,5/7,3/5,5/8}{
 \draw 	[eh]	(v)	-- (\i\j);
 \draw	[eb] 	(\i) node[vb] {} -- (\i\j) node[vh] {};
 \draw 	[eb] 	(\j) node[vb] {} -- (\i\j) node[vh] {};
 }
\draw [e] (3) node[vb] {} -- (v) node[v] {};

\begin{scope}[xshift=2cm]
\node [c]		(v)	at (0,-.15)		{};
\node [c]		(1)	at (.5,0)	 	{}; 
\node [c]		(2)	at (.12,.2)		{}; 
\node [c]		(3)	at (-.5,0)		{}; 
\node [c]		(4)	at (0,-.6)		{};
\node [c]  		(12)	at (.65,.5)		{};
\node [c]	 	(13)	at (-.1,.1)		{};
\node [c]		(14)	at (.7,-.8)	{};
\node [c]		(23)	at (-.65,.5)		{};
\node [c] 		(24)	at (.3,-.3)		{};
\node [c]		(34)	at (-.6,-.7)		{};
\foreach \i/\j in {1/2,1/3,1/4,2/3,2/4,3/4}{
 \path	[f] 	(\i) -- (\i\j) -- (\j) -- (v) -- cycle;
 }
 \foreach \i in {1,2,3,4}{
  \draw [e] (\i) -- (v);
  }
\foreach \i/\j in {2/4,1/2,1/3,1/4,2/3,3/4}{
 \draw 	[eh]	(v)		-- (\i\j);
 \draw	[eb] 	(\i) node[vb] {} -- (\i\j) node[vh] {};
 \draw 	[eb] 	(\j) node[vb] {} -- (\i\j) node[vh] {};
 }
\draw [e] (1) node[vb] {} -- (v) node[v] {};
\end{scope}

%bonding
\path	(38)	edge [bh] node[label=above:$\gamma$] {} (23)
	(37)	edge [bh]		(13)
	(35)	edge [bh] 		(34);
%\path	(.65,.25)	edge [bh, bend left] node[label=above:$\gamma$] {} (1.38,.25)
%	(.52,.05)	edge [bh, bend left=20]	(1.7,.05)
%	(.7,-.35)	edge [bh, bend left=20] 		(1.45,-.35)
%	(.6,0)		edge [bb]				(1.5,0);

\draw [|->] (3,0) -- node[label=above:$\sharp_{\gamma}$] {} (3.7,0);

\begin{scope}[xshift=5cm]
\node [c]		(w)	at (0,-.12)		{};
\node [c]		(6)	at (-0.56,0.32) 	{}; 
\node [c]		(7)	at (-0.17,-0.07)	{}; 
\node [c]		(8)	at (.14,.37)	{}; 
\node [c]		(5)	at (0,-.6)		{};
\node [c]		(67)	at (-.72,.32)	{};
\node [c]		(37)	at (.44,.1)		{};
\node [c]	 	(13)	at (.44,.1)		{};
\node [c]		(38)	at (.7,.5)		{};
\node [c]		(23)	at (.7,.5)		{};
\node [c]		(68)	at (-.3,.61)		{};
\node [c]		(56)	at (-1,-.44)		{};
\node [c]		(57)	at (-.28,-.97)	{};
\node [c]		(35)	at (.5,-.7)		{};
\node [c]		(34)	at (.5,-.7)		{};
\node [c]		(58)	at (.28,-.23)	{};
\begin{scope}[xshift=1cm]
\node [c]		(v)	at (0,-.15)		{};
\node [c]		(1)	at (.5,0)	 	{}; 
\node [c]		(2)	at (.12,.2)		{}; 
\node [c]		(3)	at (-.5,-.13)	{}; 
\node [c]		(4)	at (0,-.6)		{};
\node [c]  		(12)	at (.65,.5)		{};
\node [c]		(14)	at (.7,-.8)		{};
\node [c] 		(24)	at (.3,-.3)		{};
\end{scope}
\foreach \i/\j in {6/7,6/8,3/7,3/8,5/6,5/7,3/5,5/8}{
 \path	[f] 	(\i) -- (\i\j) -- (\j) -- (w) -- cycle;
 }
 \foreach \i in {6,7,3,8,5}{
  \draw [e] (\i) -- (w);
  }
\foreach \i/\j in {6/7,6/8,3/7,3/8,5/6,5/7,3/5,5/8}{
 \draw 	[eh]	(w)	-- (\i\j);
 \draw	[eb] 	(\i) node[vb] {} -- (\i\j) node[vh] {};
 \draw 	[eb] 	(\j) node[vb] {} -- (\i\j) node[vh] {};
 }
\draw [e] (3) node[vb] {} -- (w) node[v] {};

\foreach \i/\j in {1/2,1/3,1/4,2/3,2/4,3/4}{
 \path	[f] 	(\i) -- (\i\j) -- (\j) -- (v) -- cycle;
 }
 \foreach \i in {1,2,3,4}{
  \draw [e] (\i) -- (v);
  }
\foreach \i/\j in {2/4,1/2,1/3,1/4,2/3,3/4}{
 \draw 	[eh]	(v)		-- (\i\j);
 \draw	[eb] 	(\i) node[vb] {} -- (\i\j) node[vh] {};
 \draw 	[eb] 	(\j) node[vb] {} -- (\i\j) node[vh] {};
 }
\draw [e] (1) node[vb] {} -- (v) node[v] {};

\end{scope}

\end{tikzpicture} 
\caption{\label{fig:molecule} The bonding $\sharp_{\gm}$ of two atoms along an identification of patches $\gm$.}
\end{figure}
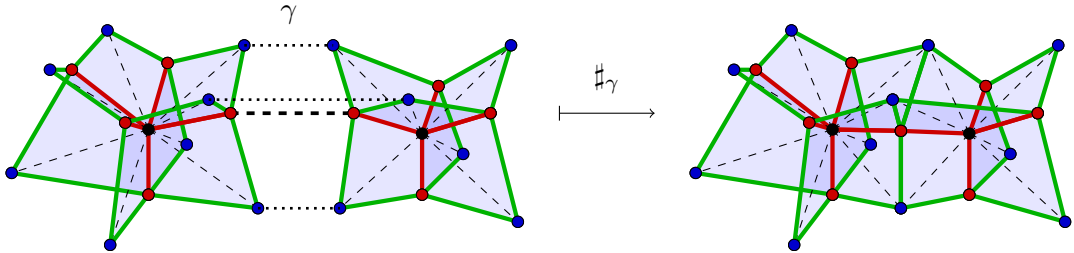

%\begin{remark}[{molecule boundary}]
%  \label{rem:moleculeboundary}
%  The boundary map $\bs$ can be extended to the spin-foam molecule $\sfr = \sharp_{\{\gm\}_I}\,\{\sfa\}_{J}$, where $I,J$ are index sets.  $\bs(\sfr)$ is identified as the subset of constituent boundary graphs, $\cup_{j\in J}\bs(\sfa_j)$ formed from the edges that remain unbonded, along with their vertices. In symbols:
%  \begin{equation}
%    \label{eq:moleculeboundary}
%    \E_{\bs(\sfr)} = \bigcup_{j\in J}\E_{\bs(\sfa_j)} - \bigcup_{i\in I}\E_{\gm_{i}}\;,\quad\quad
%    \V_{\bs(\sfr)} = \{ \vb,\vh: (\vb\vh)\in\E_{\bs(\sfr)}\}\;. 
%  \end{equation}
%  where $\gm:\bp_{\vb_{i_1}}(\bbg_{i_1})\lora \bp_{\vb_{i_2}}(\bbg_{i_2})$ and $\E_{\gm_i} = \E_{\vb_{i_1}}\cup\E_{\vb_{i_2}}$.
%  In general, $\bs(\sfr)$ need not be connected, but it will be the disjoint union of some set of bisected boundary graphs.  Moreover, these boundary graphs will very rarely coincide with the boundary graphs associated to any of the constituent atoms.
%
%If a spin-foam molecule $\sfr$ has a non-vanishing boundary $\bs(\sfr)\ne\emptyset$, one might also term it as a \emph{spin-foam radical}.  
%
%On the other hand, if $\bs(\sfr)=\emptyset$, $\sfr$ can be called a \emph{saturated} or \emph{closed} spin-foam molecule.
%\end{remark}

%\subsection*{Stranded diagrams}
%\label{ssec:stranded}

\begin{remark}[{Stranded representation}]
\label{rem:stranded}
  In matrix models, tensor models and GFT, stranded diagrams are a common alternative representation of molecules.
 
%One might wonder at this stage how the structures above match the usual stranded graph description utilized in group field theory. It emerges that stranded graphs can easily incorporate the information pertaining to generic spin-foam atoms and molecules, as well as virtual and simplicial  structures.  Moreover, stranded diagrams provide a more succinct graphical representation for molecular spin-foams.  With this aim in mind, \cjt{we} provide here a dictionary between the two descriptions.
%\begin{defin}
  A \emph{stranded atom} is the double, $\sda = (\coils, \reroutings)$, such that
  \begin{description}
    \item $\coils$ is a set of vertices together with a partition into subsets $\V_i$ called \emph{coils}. This set $\coils$ has an even number of elements. % and coils are denoted by $c$.  
    \item $\reroutings$ is the set of \emph{reroutings}, where a rerouting is an edge, referred to quite frequently as a \emph{strand}, joining a pair of {\bf distinct} vertices in $\coils$.  
    This set of reroutings saturates the set of vertices,  in the sense that each vertex is an endpoint of exactly one strand.  
    A rerouting joining a pair of vertices in the same coil is called \emph{retracing}.
  \end{description}
%\end{defin}
%The set of stranded atoms is denoted $\mathfrak{S}$. 

%\begin{remark}
%  One must take note of a particular type of rerouting, known as a \emph{retracing}.  This refers to a strand joining two vertices in the same coil.  One will see in moment that a retracing corresponds to a loop in the associated boundary graph.
%\end{remark}

  A bijection between stranded atoms and spin-foam atoms is constructed in the following way.
%\begin{remark}
%  \label{rem:explosion}
  Consider a spin-foam atom $\sfa=(\V_\sfa,\E_\sfa,\F_\sfa)\in\sfas$.  As was shown in proposition \ref{prop:atoms-graphs}, it is completely determined by its boundary graph $\bg = (\V_\bg, \E_\bg)\in\bgs$. From $\bg$, one constructs a stranded graph $\sda=(\coils,\reroutings)$ by \lq\lq exploding\rq\rq\ the vertices $\vb\in\V_\bg$. More precisely, for each edge $\eb = (\vb_1\vb_2)\in\E_\bg$, one creates two vertices in $\coils$ (one for each endpoint) and a strand in $\reroutings$ joining them. The subset of vertices in $\coils$ created from a given endpoint vertex in $\V_\bg$ constitutes a coil. 

  The reverse operation is equally simple. Given a stranded atom $\sda$, one constructs a boundary graph $\bg$ by identifying the vertices within each coil. 

  These operations are clearly inversely related and are illustrated for a simple example in \fig{explosion}.
 
  \begin{figure}[htb]
    \centering
  %  \vspace{3cm}
     \tikzsetnextfilename{strandedvertex}
      \begin{tikzpicture}[scale=1.3]

\draw [<->] (2,0)-- (3,0);
% pyramid vertex boundary graph
\begin{scope}

\node [vb]		(a)	at (0,-1)		{};
\node [vb]		(b)	at (1.25,1)		{}; 
\node [vb]		(c)	at (0.5,0.5) 	{}; 
\node [vb]		(d)	at (-0.5,0.5)	{}; 
\node [vb]		(e)	at (-1.25,1)	{}; 
\foreach \i/\j in {a/b,a/c,a/d,a/e,b/c,b/e,c/d,d/e}{
 \draw [eb] (\i) -- node[vh] {} (\j);
  }
\end{scope}

% ---- stranded pyramid vertex 
\begin{scope}[xshift=5cm]
\foreach \i in {0,60,120,180}{
\draw [cs, rotate=\i] (1.1,-.3) rectangle (.9,.3);
}
\draw [cs]	 (-.4,-1.1) rectangle (.4,-.9);
\node [vs]		(14)	at (-.35,.95)	{};
\node [vs]		(15)	at (-.5,.86)		{};
\node [vs]		(12)	at (-.65,.77)	{};
\node [vs]		(21)	at (-1,.2)		{};
\node [vs]		(23)	at (-1,0)		{};
\node [vs]		(25)	at (-1,-.2)		{};
\node [vs]		(52)	at (-.3,.-1)		{};
\node [vs]		(51)	at (-.1,-1)		{};
\node [vs]		(54)	at (.1,-1)		{};
\node [vs]		(53)	at (.3,-1)		{};
\node [vs]		(34)	at (1,.2)		{};
\node [vs]		(32)	at (1,0)		{};
\node [vs]		(35)	at (1,-.2)		{};
\node [vs]		(41)	at (.35,.95)	{};
\node [vs]		(45)	at (.5,.86)		{};
\node [vs]		(43)	at (.65,.77)	{};
\path
\foreach \i/\j in {14/41,21/12,43/34}{
  (\i) edge [es,bend right=70] (\j)
  }
\foreach \i/\j in {52/25,35/53}{
  (\i) edge [es,bend right=50] (\j)
  }
\foreach \i/\j in {51/15,45/54}{
  (\i) edge [es,bend right=16] (\j)
  }
(23) edge [es] (32);
\end{scope}
\end{tikzpicture}
      \caption{\label{fig:explosion} An example of the relation between (bisected) boundary graphs and stranded diagrams. While faces of atoms (and molecules) are in 1-to-1 correspondence to bisection vertices in the graph description, in the stranded diagrams they are uniquely represented by the strands.
            %\comment{EDIT: no bisections in diagram on left}
            }
  \end{figure}
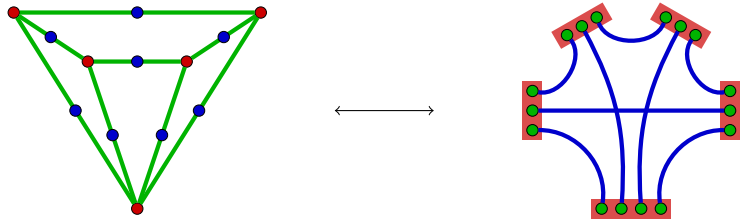
  
%\end{remark}
%From the remark \ref{rem:explosion}, the following holds:
%\begin{proposition}
%  There exists a bijection between the set of spin-foam atoms $\sfas$ and the set of stranded atoms $\mathfrak{S}$.
%\end{proposition}
  Equivalently to spin-foam molecules, one can then bond stranded atoms to form stranded molecules. 
%\begin{remark}[{stranded counterparts}]
  The stranded counterparts of the relevant objects are the following:
  \begin{description}
    \item A \emph{stranded patch} is a coil $\V_i\In \coils$ along with retracings within that coil. 
    Two stranded patches are \emph{bondable} if they have the same number of vertices and the same number of retracings.  Knowledge of the retracing are necessary to capture the loop information of a boundary patch. 
    \item A \emph{stranded bonding map} identifies the vertices within two bondable stranded patches, with the compatibility condition that the vertices associated to a retracing in one patch are identified with the vertices associated to a retracing in the other. This is illustrated in \fig{StrandedBonding}.	 
  \begin{figure}[htb]
    \centering
    \tikzsetnextfilename{strandedbonding}
    \begin{tikzpicture}[scale=1.5]
  %propagator
  \draw [<->] (1.5,0) -- (2,0);
%  \node [vb, label=180:1]	(1)	at (-.5,0) 	{};
%  \node [vb, label=0:2]	(2)	at (1,0)	{}
%    edge [eb, bend right=70]	node[vh, label=90:	$(12)^1$] {}	(1)
%    edge [eb]				node[vh, label=90:	$(12)^2$] {}	(1)
%    edge [eb, bend left=70]		node[vh, label=90:	$(12)^3$] {}	(1);
\node [vb]		(a)	at (-0.5,0)		{}; 
\node [vh]		(1)	at (-.75,.5)		{};
\node [vh]		(2)	at (-.75,-.5)	{}; 
\node [vh]		(3)	at (-1,0)		{}; 
\node [vb]		(b)	at (0.5,0)		{}; 
\node [vh]		(4)	at (.75,.5)		{};
\node [vh]		(5)	at (.75,-.5)		{}; 
\node [vh]		(6)	at (1,0)		{}; 
\foreach \i/\j in {a/1,a/2,a/3,b/4,b/5,b/6}{
 \draw [eb] (\i) --  (\j);
  }
\path	(1) 	edge [bh] 				(4)
	(2)	edge [bh]		 		(5)
	(3)	edge [bh, bend right=20]	(6)
	(a)	edge [bb]				(b);
  \begin{scope}[xshift=3cm]
  \foreach \i in {0,180}{
    \draw [cs, rotate=\i] (.6,-.3) rectangle (.4,.3);
    }
  \node [vs]	(11)	at (-.5,.2)	{};
  \node [vs]	(12)	at (-.5,0)	{};
  \node [vs]	(13)	at (-.5,-.2)	{};    
  \node [vs]	(21)	at (.5,.2)	{};
  \node [vs	]	(22)	at (.5,0)	{};
  \node [vs]	(23)	at (.5,-.2)	{};  
  \path
  \foreach \i in {1,2,3}{
    (1\i) edge [bh]  (2\i)
    };
  \end{scope}
  \end{tikzpicture}		
    \caption{Stranded bonding map.}
    \label{fig:StrandedBonding}
  \end{figure}
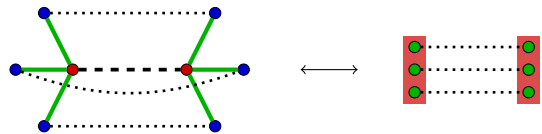
  \item A \emph{stranded molecule} is a set of stranded atoms quotiented by a set of stranded bonding maps, as drawn in \fig{StrandedMolecule}. 
  \begin{figure}[htb]
    \centering
    \tikzsetnextfilename{strandedmolecule}
    \begin{tikzpicture}[scale=1.3]

\draw [|->] (4.5,0) -- node[label=above:$\sharp_{\gamma}$] {} (5.5,0);

% ---- stranded pyramid vertex 
\begin{scope}
\foreach \i in {0,60,120,180}{
\draw [cs, rotate=\i] (1.1,-.3) rectangle (.9,.3);
}
\draw [cs]	 (-.4,-1.1) rectangle (.4,-.9);
\node [vs]		(14)	at (-.35,.95)	{};
\node [vs]		(15)	at (-.5,.86)		{};
\node [vs]		(12)	at (-.65,.77)	{};
\node [vs]		(21)	at (-1,.2)		{};
\node [vs]		(23)	at (-1,0)		{};
\node [vs]		(25)	at (-1,-.2)		{};
\node [vs]		(52)	at (-.3,.-1)		{};
\node [vs]		(51)	at (-.1,-1)		{};
\node [vs]		(54)	at (.1,-1)		{};
\node [vs]		(53)	at (.3,-1)		{};
\node [vs]		(34)	at (1,.2)		{};
\node [vs]		(32)	at (1,0)		{};
\node [vs]		(35)	at (1,-.2)		{};
\node [vs]		(41)	at (.35,.95)	{};
\node [vs]		(45)	at (.5,.86)		{};
\node [vs]		(43)	at (.65,.77)	{};
\path
\foreach \i/\j in {14/41,21/12,43/34}{
  (\i) edge [es,bend right=70] (\j)
  }
\foreach \i/\j in {52/25,35/53}{
  (\i) edge [es,bend right=50] (\j)
  }
\foreach \i/\j in {51/15,45/54}{
  (\i) edge [es,bend right=16] (\j)
  }
(23) edge [es] (32);
\end{scope}

%stranded bonding
\begin{scope}[xshift=1.5cm]
%  \foreach \i in {0,180}{
%    \draw [cs, rotate=\i] (.6,-.3) rectangle (.4,.3);
%    }
  \node [vs]	(11)	at (-.5,.2)	{};
  \node [vs]	(12)	at (-.5,0)	{};
  \node [vs]	(13)	at (-.5,-.2)	{};    
  \node [vs]	(21)	at (.5,.2)	{};
  \node [vs	]	(22)	at (.5,0)	{};
  \node [vs]	(23)	at (.5,-.2)	{};  
  \path
  \foreach \i in {1,2,3}{
    (1\i) edge [bh]  (2\i)
    };
   \node at (0,.5) {$\gamma$} ;
\end{scope}

%stranded tetrahedron
\begin{scope}[xshift=3cm] 
  \foreach \i in {0,90,180,270}{
    \draw [cs, rotate=\i] (1.1,-.3) rectangle (.9,.3);
    }
  \node [vs]	(12)	at (1,.2)	{};
  \node [vs]	(13)	at (1,0)	{};
  \node [vs]	(14)	at (1,-.2)	{};    
  \node [vs]	(21)	at (.2,1)	{};
  \node [vs]	(24)	at (0,1)	{};
  \node [vs]	(23)	at (-.2,1)	{};    
  \node [vs]	(32)	at (-1,.2)	{};
  \node [vs]	(31)	at (-1,0)	{};
  \node [vs]	(34)	at (-1,-.2)	{};    
  \node [vs]	(41)	at (.2,-1)	{};
  \node [vs]	(42)	at (0,-1)	{};
  \node [vs]	(43)	at (-.2,-1)	{};    
  \path
  (13) edge [es] 	 (31)
  (24) edge [es]  (42)  
  \foreach \i/\j in {14/41,21/12,32/23,43/34}{
  (\i) edge [es,bend right=50]  (\j)
  };
  \end{scope}
  
  % ---- stranded pyramid vertex 
\begin{scope}[xshift=7cm]
\foreach \i in {60,120,180}{
\draw [cs, rotate=\i] (1.1,-.3) rectangle (.9,.3);
}
\draw [cs]	 (-.4,-1.1) rectangle (.4,-.9);
\node [vs]		(14)	at (-.35,.95)	{};
\node [vs]		(15)	at (-.5,.86)		{};
\node [vs]		(12)	at (-.65,.77)	{};
\node [vs]		(21)	at (-1,.2)		{};
\node [vs]		(23)	at (-1,0)		{};
\node [vs]		(25)	at (-1,-.2)		{};
\node [vs]		(52)	at (-.3,.-1)		{};
\node [vs]		(51)	at (-.1,-1)		{};
\node [vs]		(54)	at (.1,-1)		{};
\node [vs]		(53)	at (.3,-1)		{};
\node [vs]		(34)	at (1,.2)		{};
\node [vs]		(32)	at (1,0)		{};
\node [vs]		(35)	at (1,-.2)		{};
\node [vs]		(41)	at (.35,.95)	{};
\node [vs]		(45)	at (.5,.86)		{};
\node [vs]		(43)	at (.65,.77)	{};
\path
\foreach \i/\j in {14/41,21/12,43/34}{
  (\i) edge [es,bend right=70] (\j)
  }
\foreach \i/\j in {52/25,35/53}{
  (\i) edge [es,bend right=50] (\j)
  }
\foreach \i/\j in {51/15,45/54}{
  (\i) edge [es,bend right=16] (\j)
  }
(23) edge [es] (32);
\end{scope}

%stranded tetrahedron
\begin{scope}[xshift=9cm] 
  \foreach \i in {0,90,180,270}{
    \draw [cs, rotate=\i] (1.1,-.3) rectangle (.9,.3);
    }
  \node [vs]	(12)	at (1,.2)	{};
  \node [vs]	(13)	at (1,0)	{};
  \node [vs]	(14)	at (1,-.2)	{};    
  \node [vs]	(21)	at (.2,1)	{};
  \node [vs]	(24)	at (0,1)	{};
  \node [vs]	(23)	at (-.2,1)	{};    
  \node [vs]	(32)	at (-1,.2)	{};
  \node [vs]	(31)	at (-1,0)	{};
  \node [vs]	(34)	at (-1,-.2)	{};    
  \node [vs]	(41)	at (.2,-1)	{};
  \node [vs]	(42)	at (0,-1)	{};
  \node [vs]	(43)	at (-.2,-1)	{};    
  \path
  (13) edge [es] 	 (31)
  (24) edge [es]  (42)  
  \foreach \i/\j in {14/41,21/12,32/23,43/34}{
  (\i) edge [es,bend right=50]  (\j)
  };
  \end{scope}

\end{tikzpicture}
    \caption{Stranded representation of the bonding $\sharp_{\gm}$ in \fig{molecule}.}
    \label{fig:StrandedMolecule}
   \end{figure}
 \end{description}
%  One can translate the concepts such as labelled, loopless, simplicial to the stranded diagram realization. This is left to the interested reader since these structures are not extensively used in the remaining sections.  Having said that, \cjt{we} should also mention that stranded graphs are a natural and powerful tool in the GFT  formalism.
  One particular advantage of stranded diagrams as compared to bondings of boundary graphs is that the full internal bonding structure, including the ordering of bondings of faces along patches, is represented in these diagrams in terms of the strands. This is not explicit in bondings of boundary graphs.
  
  All the additional structure to spin-foam molecules defined and discussed in the remainder of this section has a straightforward equivalent in the language of stranded molecules.

\end{remark}

%==============================================================

\subsection{Molecules from regular, loopless graphs \label{sec:regular-loopless}}

%\subsection{Part 3: specifying to loopless, regular and simplicial structures}
%\label{sssec:special}

%There are few obvious restrictions one can have on graphs, atoms and molecules which will become important later.
%These are loopless and regular structures as well as the restriction to a single type of spin-foam atom which \cjt{we} shall call simplicial.
%All of them mirror exactly the structure of the most general case. For example,  loopless structures are related in the following way:
%\begin{displaymath}
%  \xymatrix{
%  % & &\bps_{\loopless} \ar[d] \\ 
%   \textsc{loopless} &\bgs_{\loopless} \ar[r]^{\bisec} &\bbgs_{\loopless} \ar@/^/[r]^{\bulk}  &\sfas_{\loopless} \ar@/^/[l]^{\bs} \ar@{.>}[r]&\sfrs_{\loopless}\ar@/_{3pc}/@{.>}[ll]_{\bs} 
%   }
%  \end{displaymath}
  
  The set of spin-foam atoms $\mathfrak{A}$ is efficiently catalogued by their boundary graphs $\bgs$. However, this is a large collection of objects and thus motivates one to seek out sub-atomic building blocks that are more concisely presented but can nevertheless resemble all of $\bgs$.  

This search is divided into two stages.  This first stage examines the boundary graphs in terms of their constituent boundary patches. The set of such patches is very large. 
Thus, the first stage will focus on manufacturing a manageable set of patches, with which, nonetheless, one may encode all the boundary graphs in $\bgs$.  

Having accomplished this, the next stage examines the boundary graphs from the perspective of generating them by bonding boundary graphs from a more manageable set. 
 This part focusses on defining a projection $\pi$ which relates labelled graphs to unlabelled ones by contracting and deleting the virtual edges, as well as its restriction to the labelled, $\copies$-regular, loopless structures,
$\pi_\rl$, $\Pi_\rl$ and $\Pi_{\copies,\lnb}$, which can be shown to still map surjectively to arbitrary graphs and molecules:
\begin{displaymath} \nonumber
  \xymatrix{
   %\textsc{unlabelled} &
   \bgs \ar[rr]^{\bisec}& &\bbgs \ar@/^/[r]^{\bulk}  &\sfas \ar@/^/%@{.>}
   [l]^{\bs} \ar@{.>}[rr]&&\sfrs \\
   &\\
   %\textsc{labelled} &  
   \bgst_\rl \ar[rr]^{\widetilde{\bisec}} \ar[uu]^{\pi_\rl} & &\bbgst_\rl  \ar@/^/[r]^{\widetilde{\bulk}}\ar[uu]  
   &\sfast_\rl \ar@/^/[l]^{\bst}\ar[uu]_{\Pi_\rl}\ar@{.>}[r] &\sfrst_\rl\ar[r] &\sfrst_{\copies,\lnb}\ar[uu]^{\Pi_{\copies,\lnb}}
 %  && \bpst_\rs\ar[ur]  & & & &\\
 }
  \end{displaymath}

\begin{defin}[{loopless structures}]
\label{def:loopless}
Loopless structures are specified by:

A \emph{loopless boundary graph}, $\bg\in\bgs_{\loopless}$, is a $\bg=(\V_\bg,\E_\bg)\in\bgs$ without edges from any vertex $\vb\in\V_\bg$ to itself, that is for every $\vb\in\V_\bg$:  $(\vb\vb)\not\in\E_\bg$. 

Their images under the bisection map $\bisec$ and thereafter the bulk map $\bulk$ straightforwardly define \emph{loopless bisected boundary graphs} $\bbgs_{\loopless}$ and \emph{loopless atoms} $\sfas_{\loopless}$, respectively.

For a graph in $\bbgs_{\loopless}$, all of its patches are obviously loopless. In fact, the \emph{loopless patches} are uniquely specified by $\copies$, the number of edges. 
Therefore, one can call it a \emph{\copies-patch}, $\bp_\copies$, such that $\bps_{\loopless}=\bigcup_{\copies=1}^{\infty}\{\bp_{\copies}\}$. 
Moreover, $\bbgs_{\loopless} = \sigma(\bps_{\loopless})$, the loopless graphs are generated by loopless patches. 

Finally, spin-foam molecules constructed from collections of loopless atoms are called \emph{loopless spin-foam molecules} $\sfrs_{\loopless}$.
\end{defin}

Even though the complex $\clos{\issubs\sfr}$ of a loopless molecule $\sfr\in\sfrs_\loopless$, by definition, contains no loops on the boundary $\bs\sfr$, self-bondings contribute internal loops rendering $\clos{\issubs\sfr}$ still a generalized polyhedral complex.
Only loopless molecules constructed exclusively by {\bf proper} bondings (\dref{bonding}) are indeed subdivisions of polyhedral complexes (in the strict sense of \dref{polyhedral-complex}).

Another important restriction concerns the valency of boundary graph vertices: 
\begin{defin}[{$\copies$-regular structures}]
  A \emph{$\copies$-regular boundary graph} $\bg\in\bgs_{\copies}$ is a double $\bg=(\V_\bg,\E_\bg)\in\bgs$, for which every vertex $\vb\in\V_\bg$ is $\copies$-valent, \ie incident to exactly $\copies$ edges in $\E_\bg$. 
 % In other words, there are exactly $\copies$ edges $(\vb\vh)\in\E_\bg$ containing $\vb$. 
  Analogous to definition \ref{def:loopless}, the notion of their bisected counterparts $\bbgs_{\copies}$, the related \emph{$\copies$-regular atoms} $\sfas_{\copies}$, as well as $\copies$-regular molecules $\sfrs_{\copies}$, is straightforward.
\end{defin}
\begin{remark}[{$\copies$-regular and loopless}]
	Combining these restrictions, one arrives at much simpler sets of graphs $\bbgs_\rl$, atoms $\sfas_\rl$ and molecules $\sfrs_\rl$. In particular, $\bbgs_\rl = \sigma(\bp_\copies)$, \ie a single patch generates the whole set. 
Since the structure of a GFT  field is determined by a patch, these structures will play a role in single field GFTs, explained in detail in \sec{gft}.
\end{remark}

Nevertheless, the simplest GFT  is not only defined in terms of one field, but also only one interaction term of simplicial type. This motivates the following definition:

\begin{defin}[{$\copies$-simplicial molecules}]
\label{def:simplicial}
The set of \emph{$\copies$-simplicial molecules} $\sfrs_\rs$ consists of all molecules, which are bondings of the single spin-foam atom $\sfa_\rs$ obtained from the complete graph with $\copies +1$ vertices $\bg_\rs = K_{\copies+1}$ (\fig{complete}),
\[
\sfa_\rs:=\bulk(\bbg_\rs):=\bulk(\bisec(\bg_\rs)).%:=\bulk(\bisec(K_{\copies+1})).
\]
\end{defin}

%A complete graph is displayed in \fig{complete}.
\begin{figure}%[htb]
  \centering
    \tikzsetnextfilename{complete}
\begin{tikzpicture}[scale=1.5]
\node [vb]		(a)	at (0,0.82)		{};
\node [vb]		(b)	at (-1,0) 		{}; 
\node [vb]		(c)	at (-0.64,-1) 	{}; 
\node [vb]		(d)	at (0.64,-1)	{}; 
\node [vb]		(e)	at (1,0)		{}; 
\foreach \i in {a,b,c,d,e}{
  \foreach \j in {a,b,c,d,e}{
    \draw [eb] (\i) -- (\j);
    }
  }
\end{tikzpicture}
  \caption{\label{fig:complete} The complete graph over $\copies+1$ vertices ($\copies = 4$).}
\end{figure}

%\begin{remark}[{clarification on the notion \lq simplicial\rq}]
%  It must be emphasized that the special class of $\copies$-simplicial molecules $\sfrs_\rs\In\sfrs_\rl\In\sfrs_{\loopless}$, like all other loopless  molecules, are polyhedral 2-complexes.
%\cjt{we} call them simplicial because each spin-foam atom in itself can be canonically understood as the dual 2-skeleton of a $\copies$-simplex (cf. \fig{hassedual} and \ref{fig:dualtet}).
%\cjt{But this can be done only locally, since it has been proven in \cite{xxx} that not every simplicial spin-foam molecule (referred to as \gft-gluing therein) can be assigned a simplicial complex, for which the molecule arises as the dual 2-skeleton.}
%\end{remark}

\begin{remark}
  The construction presented here is effectively very similar to the \emph{operator spin network} approach devised in \cite{\KKL,\KLP}, which in turn is based upon the language of \emph{operator spin-foams} \cite{Bahr:2011ey,Bahr:2012iu}.

  For clarity, it is worth setting up a small dictionary between the two descriptions. To begin, loopless boundary patches  correspond to \emph{squids}. Then \emph{squid graphs} are defined as gluings of such patches where gluing vertices of a patch to itself is allowed. 
  Thus, these are what I call bisected boundary graphs. Here, the definition of patches including loops in general is necessary from a GFT perspective.
  Moreover, the set of squid graphs considered in \cite{\KLP} corresponds to that subset of boundary graphs $\bg\in\bgs$ without 1-valent vertices $\vb\in\V_\bg$. However, this is a choice and is easily generalized. 
%Moreover, one will see later that 1-valent vertices are somewhat trivial within a pure gravity theory and only become meaningful when matter is coupled. 

Squid graphs encode \emph{1-vertex spin foams} %through a retraction, which was mentioned above in remark \ref{rem:retraction} (in \cite{\KLP} also a more combinatorial definition is given), 
just as boundary graphs encode spin-foam atoms.  
After that,  1-vertex spin foams are glued together by identifying pairs of squids, just like boundary patches are bonded during the construction of spin-foam molecules.

\end{remark}

%%%%%%%%%%%%%%%%%%%%%%%%%%%%%%%%%%%%%%%%%

%\subsection{Part 4: labelled structures}
%\label{sssec:virtual}

\begin{defin}[{labelled structures}]
\label{def:labelled}
  A \emph{labelled boundary graph}, $\bgt$ is a boundary graph augmented with a label for each edge drawn from the set $\{real,virtual\}$. 

The set of such graphs is denoted by $\bgst$ and is much larger than the set $\bgs$, since for a graph $\bg = (\V_\bg,\E_\bg)\in\bgs$, there are $2^{|\E_\bg|}$ labelled counterparts in $\bgst$.

  There are some trivial generalizations:
  \begin{description}
    \item \emph{Labelled bisected boundary graphs}, denoted by $\bbgt\in\bbgst$, are obtained using a bisection map $\widetilde{\bisec}$ that maintains edge labelling. 
    Thus, if $(\vb_1\vb_2)\in\bgt$ is a real (virtual) edge, then $\{\vh,(\vb_1\vh),(\vb_2\vh)\}\In \bbgt=\widetilde{\bisec}(\bgt)$ is a real (resp.\ virtual) subset, where $\vh$ is the bisecting vertex.
    \item \emph{Labelled spin-foam atoms}, denoted by $\sfat\in\sfast$, are obtained using a bulk map $\widetilde{\bulk}$, such that if $(\vb \vh)$ is real  (virtual), then so is $(v\vb \vh)$.
    In other words,  the faces inherit their label from the boundary $\bs\sfat=\bbgt$.
    \item \emph{Labelled boundary patches}, denoted by $\bpt\in\bpst$, are bonded pairwise using bonding maps $\widetilde{\gm}$ that ensure real (virtual) elements bonded to real (resp.\ virtual) elements.
    \item \emph{Labelled spin-foam molecules} $\sfrt\in\sfrst$ follow immediately with these bonding maps.
  \end{description}
\end{defin}

%%%%%%%%%%%%%%%%%%%%%%%%%%%%%%%%%%%%%%%%%

%\subsection*{Part 5: molecules from labelled, $\copies$-regular, loopless structures}
%\label{sssec:labelledloopless}

    One can naturally identify the unlabelled boundary graphs $\bgs$ with the subset of labelled graphs that possess only real edges $\bgst_{real}\In \bgst$. However, one would like to go further and utilize the unlabelled graphs to mark classes of labelled graphs.  From another aspect, one would think of this class of labelled graphs as encoding an underlying (unlabelled) subgraph $\bg\in\bgs$. 

    To uncover this structure, one defines certain moves on the set of labelled graphs:
\begin{defin}[{reduction moves}]
  \label{def:moves}
  Given a graph $\tilde{\bg}\in\bgst$, there are two moves that reduce the virtual edges of the graph:

  \vspace{-0.2cm}

  \begin{description}
    \item[-] given two vertices, $\vb_1$ and $\vb_2$, such that $(\vb_1\vb_2)$ is a virtual edge of $\tilde{\bg}$, a \emph{contraction move}, removes this virtual edge and identifies the vertices $\vb_1$ and $\vb_2$;
    \item[-] given a vertex $\vb$ such that $(\vb\vb)$ is a virtual loop, a \emph{deletion move} is simply the removal of this edge.
  \end{description}

  \vspace{-0.2cm}

  These inspire two counter moves:

  \vspace{-0.2cm}

  \begin{description}
    \item[-] given a vertex $\vb$, an \emph{expansion move} partitions the edges, incident at $\vb$, into two subsets. In each subset, $\vb$ is replaced by two new vertices $\vb_1$ and $\vb_2$, respectively, and a virtual edge $(\vb_1\vb_2)$ is added to the graph.\footnote{There is subtlety for loops, in that both ends are incident at $\vb$ and may (or may not) be separated by the partition.} 
    \item[-] given a vertex $\vb$, a \emph{creation move} adds a virtual loop to the graph at $\vb$.
  \end{description}
  These moves are illustrated in \fig{moves}.

  \begin{figure}%[htb]
\centering
 \tikzsetnextfilename{moves}
      %reduction and expansion moves
\begin{tikzpicture}[scale=1.5]

\draw [eb, dotted] (0,.25) circle (0.25cm);
\node [vb]  (v) 	at (0,0) {};
\draw [eb]  (v) -- (-.7,0);
\draw [eb]  (v) -- (-.5,-.5);
\draw [eb]  (v) -- (.5,-.5);
\draw [eb]  (v) -- (.7,0);

\draw [|->] (1.5,0) -- node[label=above:$\pi$] {} (2,0);

\begin{scope}[xshift=3.5cm]
\node [vb]  (v) 	at (0,0) {};
\draw [eb]  (v) -- (-.7,0);
\draw [eb]  (v) -- (-.5,-.5);
\draw [eb]  (v) -- (.5,-.5);
\draw [eb]  (v) -- (.7,0);
\end{scope}

\begin{scope}[xshift=0cm,yshift=1.5cm]
\node [vb]  (v) 	at (0,0) {};
\node [vb]  (v2) 	at (.5,0) {};
\draw [eb, dotted] (v) -- (v2);
\draw [eb]  (v) -- (-.7,0);
\draw [eb]  (v) -- (-.5,-.5);
\draw [eb]  (v) -- (-.5,.5);
\draw [eb]  (v2) -- (1,-.5);
\draw [eb]  (v2) -- (1,.5);

\draw [|->] (1.5,0) -- node[label=above:$\pi$] {} (2,0);
\end{scope}

\begin{scope}[xshift=3.5cm,yshift=1.5cm]
\node [vb]  (v) 	at (0,0) {};
\draw [eb]  (v) -- (-.7,0);
\draw [eb]  (v) -- (-.5,-.5);
\draw [eb]  (v) -- (.5,-.5);
\draw [eb]  (v) -- (-.5,.5);
\draw [eb]  (v) -- (.5,.5);
\end{scope}

\end{tikzpicture}
\caption{\label{fig:moves} Contraction/expansion and deletion/creation moves.}
\end{figure}
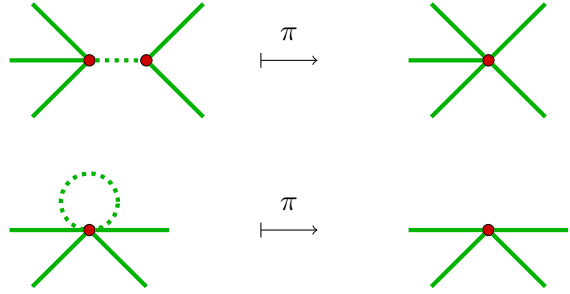

\end{defin}
\begin{remark}[{projector}]
  \label{rem:projector}
This allows one to define a projection $\pi:\bgst\lora\bgs$, which captures the complete removal of virtual edges through contraction and deletion. It is well-defined, in the sense that contraction and deletion eventually map to an element of $\bgs$ (that is, the graph remains connected) and the element $\bg\in\bgs$ acquired from $\bgt\in\bgst$ is independent of the sequence of contraction and deletion moves used to reduce the graph. In turn, this means that the $\pi^{-1}(\bg)$ partition $\bgst$ into classes. 
\end{remark}

%\begin{remark}
  %\label{rem:surjectors}
  In fact, one is interested only in the $\copies$-regular ($\copies>2$), loopless subset $\bgst_\rl$.
One denotes the restriction of $\pi$ to these subsets as $\pi_\rl$. Note that the $\pi_\rl$ are no longer projections, since $\pi_\rl(\bgt)$ with $\bgt\in\bgst_\copies$ need no longer be $\copies$-valent and might contain loops. 
%\end{remark}

\begin{proposition}[{surjections}]
  \label{prop:surjections}
  The maps $\pi_\rl$ have the following properties:
  \begin{description}
    \item If $\copies$ is odd, the map $\pi_\rl:\bgst_\rl \lora \bgs$ is surjective.
    \item If $\copies$ is even, the map $\pi_\rl:\bgst_\rl \lora \bgs_{even}\In \bgs$ is surjective, where $\bgs_{even}$ is the subset of boundary graphs with only even-valent vertices. 
  \end{description}
\end{proposition}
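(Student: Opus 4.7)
The plan is to separate the proposition into a necessity direction, showing that for even $\copies$ the image of $\pi_\rl$ lies inside $\bgs_{even}$, and a constructive sufficiency direction exhibiting, for each admissible $\bg$, an explicit preimage in $\bgst_\rl$.

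For necessity, I would consider an arbitrary $\bgt\in \bgst_\rl$ with $\bg = \pi_\rl(\bgt)$ and analyse the structure of the fibre over a chosen $\vb\in\V_\bg$. The vertex $\vb$ arises from an ``orbit'' $O_\vb\In\V_{\bgt}$ of vertices that are successively identified by contraction moves. Since $\bgt$ is loopless no deletion moves apply, and a virtual edge between vertices of two distinct orbits would merge those orbits under contraction, so every virtual edge of $\bgt$ lies inside a single orbit; moreover the internal virtual edges of $O_\vb$ must form a connected spanning subgraph of $O_\vb$. Writing $e_\vb$ for the number of such internal virtual edges, summing the $\copies$-regularity condition over $O_\vb$ and subtracting twice $e_\vb$ (for endpoints of internal virtual edges) gives $k_\vb = \copies |O_\vb| - 2 e_\vb$, whence $k_\vb \equiv \copies |O_\vb| \pmod 2$. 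When $\copies$ is even this forces $k_\vb$ to be even.

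For sufficiency I would construct $\bgt$ explicitly from $\bg$ by replacing each $\vb\in\V_\bg$ of valency $k_\vb$ with an orbit of $s_\vb$ new vertices joined by virtual edges, and redistributing the $k_\vb$ real half-edges at $\vb$ among them. The size $s_\vb$ is chosen minimally subject to three conditions: $\copies s_\vb \ge k_\vb$ (enough room for the real edges), $\copies s_\vb \equiv k_\vb \pmod 2$ (forced by the necessity calculation and consistent with the parity hypothesis on $\bg$), and $\copies s_\vb - k_\vb \ge 2(s_\vb - 1)$ (enough virtual edges for a spanning connected multigraph on $O_\vb$). Any real loop at $\vb$ is placed with its two endpoints on distinct orbit vertices, so no real loop appears in $\bgt$; the orbit's virtual subgraph is taken loopless with $(\copies s_\vb - k_\vb)/2$ edges, which is always possible provided $s_\vb \ge 2$ whenever padding virtual edges are required. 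Applying the contraction moves of \dref{moves} then collapses each orbit to its original vertex and recovers $\bg$, while connectedness of $\bgt$ follows from that of $\bg$ together with the spanning property within each orbit.

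The main obstacle I anticipate is verifying that the three constraints on $s_\vb$ can be met simultaneously for all admissible $(k_\vb,\copies)$ with $\copies\ge 3$ and, in parallel, that a connected loopless multigraph of the prescribed degree sequence exists inside each orbit. Both are elementary: the arithmetic reduces to the bound $s_\vb \ge \lceil (k_\vb-2)/(\copies-2)\rceil$ with a possible parity adjustment, while the graph realisation is handled by a Hakimi-type existence argument for multigraphs with a given degree sequence. Once these details are settled the two cases of the proposition follow, the parity hypothesis on $\bg$ being vacuous for odd $\copies$ and equivalent to membership in $\bgs_{even}$ for even $\copies$.
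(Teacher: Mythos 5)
Your argument is correct in outline but takes a genuinely different route from the paper's. The paper proceeds in two stages: it first proves surjectivity of the restriction of $\pi$ to the $\copies$-regular but possibly looped graphs $\bgst_\copies$, by expanding an $l$-valent vertex into a chain of $\copies$-valent vertices joined by virtual edges and padding low-valent vertices with virtual double edges or virtual loops; it then passes from $\bgst_\copies$ to the loopless set $\bgst_\rl$ by a separate gadget, the $1$-$\copies$-move, which inserts a copy of the complete graph $\bg_\rs$ at each loop-bearing vertex and thereby absorbs up to $\lfloor\copies/2\rfloor$ loops per vertex. You instead build a loopless preimage in a single step, replacing each vertex by an ``orbit'' whose virtual part is a connected spanning loopless multigraph with a prescribed degree sequence, and you reduce existence to a Hakimi-type realization argument. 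Your version is more uniform and makes the parity obstruction for even $\copies$ completely explicit via the count $k_{\vb}=\copies|O_{\vb}|-2e_{\vb}$ (the paper dispatches the necessity direction with the one-line observation that contraction and deletion preserve evenness of valency), but it buys this at the cost of a nontrivial realizability check that the paper's explicit moves avoid. One point you should tighten: your three constraints on $s_{\vb}$ do not by themselves force $s_{\vb}\ge 2$ when $\vb$ carries a real loop and $k_{\vb}=\copies$; in that case $s_{\vb}=1$ satisfies all three but leaves the real loop intact, so you must add the condition that $s_{\vb}\ge 2$ whenever $\vb$ has a loop (and then, for odd $\copies$, parity pushes you to $s_{\vb}\ge 3$). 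You should also record that when $s_{\vb}\ge 2$ the real half-edges must be distributed with at most $\copies-1$ per orbit vertex so that every orbit vertex retains positive virtual degree; this follows from your third inequality, but it is part of the realizability check rather than automatic.
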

\begin{proof}
  Let us first prove the statements for the restriction of $\pi$ to $\bgst_\copies$, starting with the lowest nontrivial values of $\copies$. 
  For $\copies=3$, consider an $l$-valent vertex ($l>3$) in a graph $\bg\in\bgs$.  
  Such a vertex can be expanded into a sequence of 3-valent vertices joined by a string of virtual edges.  
  A 2-valent vertex in $\bg$ can be expanded inserting a virtual double edge. %, one first creates a virtual loop and then expand the resulting 4-valent vertex.
  Finally, for a 1-valent vertex, one simply creates a virtual loop. See \fig{3decomp} for an illustration of these three cases processes.
  \begin{figure}[htb]
    \centering
    \tikzsetnextfilename{3val}
    $\quad\quad$% a generic atomic boundary graph undergoing bisection

\begin{tikzpicture}[scale=1.5]

\draw [eb] (0.7,0.0)-- (0.0,0.0);
\draw [eb] (-0.7,0.0)-- (0.0,0.0);
\draw [eb] (0.0,0.0)-- (-0.5,-0.5);
\draw [eb] (0.5,-0.5)-- (0.0,0.0);
\draw [eb] (0.0,0.25) circle (0.25cm);

\draw [fill=red] (0.0,0.0) circle (1.5pt);

\draw [<-|] (1.5,0) -- node[label=above:$\pi_3$] {} (2,0);

\begin{scope}[xshift=4cm]

\draw [dotted, eb] (-0.5,0.0)-- (0.0,0.0);
\draw [dotted, eb] (1.0,0.0)-- (0.0,0.0);
\draw [eb] (-1.0,-0.5)-- (-0.5,0.0);
\draw [eb] (-0.5,0.0)-- (-1.2,0.0);
\draw [eb] (1.7,0.0)-- (1.0,0.0);
\draw [eb] (1.5,-0.5)-- (1.0,0.0);
\draw [shift={(0.25,0.1875)},eb]  plot[domain=-0.64:3.785,variable=\t]({1.0*0.3125*cos(\t r)+-0.0*0.3125*sin(\t r)},{0.0*0.3125*cos(\t r)+1.0*0.3125*sin(\t r)});
\begin{scriptsize}
\draw [fill=red] (-0.5,0.0) circle (1.5pt);
\draw [fill=red] (0.0,0.0) circle (1.5pt);
\draw [fill=red] (1.0,0.0) circle (1.5pt);
\draw [fill=red] (0.5,0.0) circle (1.5pt);
\draw [fill=red] (-0.5,0.0) circle (1.5pt);
\draw [fill=red] (1.0,0.0) circle (1.5pt);
\end{scriptsize}
\end{scope}
%%%%%%%%%%%%%%%%%%%%%%%%%%%%%
\begin{scope}[xshift=0cm,yshift=-1.5cm]
\draw [eb] (0.7,0.0)-- (0.0,0.0);
\draw [eb] (-0.7,0.0)-- (0.0,0.0);
\draw [fill=red] (0.0,0.0) circle (1.5pt);

\draw [<-|] (1.5,0) -- node[label=above:$\pi_3$] {} (2,0);
\end{scope}

%\begin{scope}[xshift=3.5cm,yshift=-1.5cm]
%\draw [eb] (0.7,0.0)-- (0.0,0.0);
%\draw [eb] (-0.7,0.0)-- (0.0,0.0);
%\draw [dotted, eb] (0.0,0.25) circle (0.25cm);
%\draw [fill=red] (0.0,0.0) circle (1.5pt);
%
%\draw [->, line width=1.6pt] (1.5,0)-- (2,0);
%\end{scope}

\begin{scope}[xshift=3.5cm,yshift=-1.5cm]
\draw [eb] (-0.7,0.0)-- (0.0,0.0);
\draw [dotted, eb] (0.5,0.0)-- (0.0,0.0);
\draw [eb] (1.2,0.0)-- (0.5,0.0);
\draw [dotted, shift={(0.25,0.1875)},eb]  plot[domain=-0.64:3.785,variable=\t]({1.0*0.3125*cos(\t r)+-0.0*0.3125*sin(\t r)},{0.0*0.3125*cos(\t r)+1.0*0.3125*sin(\t r)});
\draw [fill=red] (0.5,0.0) circle (1.5pt);
\draw [fill=red] (0.0,0.0) circle (1.5pt);
\end{scope}

%%%%%%%%%%%%%%%%%%%%%%%%%%%%%%%%
\begin{scope}[xshift=0cm,yshift=-3cm]
\draw [eb] (-0.7,0.0)-- (0.0,0.0);
\draw [fill=red] (0.0,0.0) circle (1.5pt);
\draw [<-|] (1.5,0) -- node[label=above:$\pi_3$] {} (2,0);
\end{scope}

\begin{scope}[xshift=3.5cm,yshift=-3cm]
\draw [eb] (-0.7,0.0)-- (0.0,0.0);
\draw [dotted, eb] (0.25,0) circle (0.25cm);
\draw [fill=red] (0.0,0.0) circle (1.5pt);
\end{scope}

\end{tikzpicture}
    \caption{\label{fig:3decomp} The expansion and creation moves to arrive at a 3-valent graph.}
  \end{figure}
  
  For $\copies$ even, note that $\pi_\copies$ maps into $\bgs_{even}$ since contraction and deletion both preserve the evenness of the vertex valency.  
  Specializing to the case of $\copies=4$, consider a graph $\bg\in\bgs_{even}$.  Once again, examining an $l$-valent vertex in $\bg$ ($l$ even), such a vertex can be expanded into a sequence of 4-valent vertices joined by a string of virtual edges.  
  For a 2-valent vertex, one may simply add a virtual loop. See \fig{4decomp} for an illustration. 
  \begin{figure}[htb]
    \centering
    \tikzsetnextfilename{4val}
    % a generic atomic boundary graph undergoing bisection

\begin{tikzpicture}[scale=1.5]

\draw [eb] (0.7,0.0)-- (0.0,0.0);
\draw [eb] (-0.7,0.0)-- (0.0,0.0);
\draw [eb] (0.0,0.0)-- (-0.5,-0.5);
\draw [eb] (0.5,-0.5)-- (0.0,0.0);
\draw [eb] (0.0,0.25) circle (0.25cm);

\draw [fill=red] (0.0,0.0) circle (1.5pt);

\draw [<-|] (1.5,0) --  node[label=above:$\pi_{4}$] {} (2,0);

\begin{scope}[xshift=3.5cm]

\draw [dotted, eb] (0.5,0.0)-- (0.0,0.0);
\draw [eb] (-0.7,0.0)-- (0.0,0.0);
\draw [eb] (0.0,0.0)-- (-0.5,-0.5);
\draw [eb] (1.0,-0.5)-- (0.5,0.0);
\draw [eb] (0.5,0.0)-- (1.2,0.0);
\draw [shift={(0.25,0.1875)},eb]  plot[domain=-0.64:3.785,variable=\t]({1.0*0.3125*cos(\t r)+-0.0*0.3125*sin(\t r)},{0.0*0.3125*cos(\t r)+1.0*0.3125*sin(\t r)});
\begin{scriptsize}
\draw [fill=red] (0.0,0.0) circle (1.5pt);
\draw [fill=red] (0.5,0.0) circle (1.5pt);
\end{scriptsize}
\end{scope}

\begin{scope}[xshift=0cm,yshift=-1.5cm]
\draw [eb] (0.7,0.0)-- (0.0,0.0);
\draw [eb] (-0.7,0.0)-- (0.0,0.0);
\draw [fill=red] (0.0,0.0) circle (1.5pt);

\draw [<-|] (1.5,0) -- node[label=above:$\pi_{4}$] {} (2,0);
\end{scope}

\begin{scope}[xshift=3.5cm,yshift=-1.5cm]
\draw [eb] (0.7,0.0)-- (0.0,0.0);
\draw [eb] (-0.7,0.0)-- (0.0,0.0);
\draw [dotted, eb] (0.0,0.25) circle (0.25cm);
\draw [fill=red] (0.0,0.0) circle (1.5pt);

\end{scope}

\end{tikzpicture}
    \caption{\label{fig:4decomp} The expansion and creation moves to arrive at a 4-valent graph.}
  \end{figure}
  To generalize to arbitrary $\copies$ odd (even), then one needs only to create $(\copies-3)/2$ (resp.\ $(\copies-4)/2$) virtual loops at each vertex.

  The proof of the generalization of the statement from $\bgst_\copies$ to $\bgst_\rl$ goes as follows. 
  There exists a sequence of expansion and creation moves that effect a \emph{1-\copies-move}, \ie, for a $\copies$-valent graph vertex, the insertion of the complete graph $\bg_\rs$ (the $\copies=4$ example is depicted in \fig{onefour}).
  Consider a boundary graph $\bgt\in\bgst_\copies$.
  Applying a 1-$\copies$-move to a vertex incident to (up to $\lfloor \copies/2\rfloor$) loops removes all of them.  
  This process can be iterated until all loops in $\bgt$ are removed.
  \begin{figure}[htb]
    \centering
    %\vspace{3cm}
     \tikzsetnextfilename{14move}
      % a generic atomic boundary graph undergoing bisection

\begin{tikzpicture}[scale=1.5]

\draw [eb] (0.0,0.0)-- (-0.5,-0.5);
\draw [eb] (0.5,-0.5)-- (0.0,0.0);
\draw [eb] (0.0,0.25) circle (0.25cm);
\draw [fill=red] (0.0,0.0) circle (1.5pt);

\draw [<-|] (1.5,0) --  node[label=above:$\pi_{4,\loopless}$] {} (2,0);

\begin{scope}[xshift=3.5cm,yshift=.25cm]

\draw [dotted, eb] (0.5,0.0)-- (0.0,0.0);
\draw [dotted, eb] (0.5,-0.5)-- (0.0,-0.5);
\draw [dotted, eb] (0,-0.5) -- (0.0,0.0);
\draw [dotted, eb] (0.5,0.0)-- (0.5,-0.5);
\draw [dotted, eb] (0.5,-0.5)-- (0.0,0.0);
\draw [dotted, eb] (0.5,0.0)-- (0.0,-0.5);

\draw [eb] (0.0,-0.5)-- (-0.5,-1);
\draw [eb] (1.0,-1)-- (0.5,-0.5);
\draw [shift={(0.25,0.1875)},eb]  plot[domain=-0.64:3.785,variable=\t]({1.0*0.3125*cos(\t r)+-0.0*0.3125*sin(\t r)},{0.0*0.3125*cos(\t r)+1.0*0.3125*sin(\t r)});
\begin{scriptsize}
\draw [fill=red] (0.0,0.0) circle (1.5pt);
\draw [fill=red] (0.5,0.0) circle (1.5pt);
\draw [fill=red] (0.0,-0.5) circle (1.5pt);
\draw [fill=red] (0.5,-0.5) circle (1.5pt);
\end{scriptsize}
\end{scope}

\end{tikzpicture}
   \caption{\label{fig:onefour} Use of a 1-$\copies$-move on an $\copies$-valent vertex with loop to create a loopless graph ($\copies=4$ in the example)}
  \end{figure}
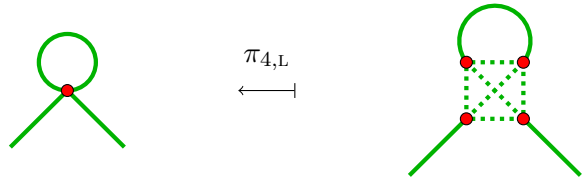
\end{proof}

\begin{remark}
  In effect, one has encoded the unlabelled graphs in $\bgs$ in terms of labelled $\copies$-regular loopless graphs in $\bgst_\rl$. The surjectivity result above implies that for $\copies$ odd (even), each graph $\bg\in\bgs$ (resp.\ $\bgs_{even}$) labels a class $\pi_\rl^{-1}(\bg)$ of graphs in $\bgst_\rl$. 
 % where the members of the class are related by combinations of the four graph moves laid out in definition \ref{def:moves} that maintain the $\copies$-valency.\footnote{Note that this is a rather trivial statement, in the sense that given two graphs in $g_1,g_2\in\psi_{n}^{-1}(\bg)$, one may choose the combination of moves to include those that take $\bg_1 \rightarrow \bg\rightarrow \bg_2$. No effort is made here to define a small subset of what might be termed as fundmental moves that are ergodic on the classes.} 
\end{remark}

\begin{remark}[{atomic reduction}]
\label{rem:atomicred}
  There is an obvious and natural extension of the contraction/expansion and deletion/creation moves, defined for $\bgt\in\bgst$ in definition \ref{def:moves}, to labelled spin-foam atoms $\sfat\in\sfast$:
  \begin{description}
    \item  {On a virtual edge $(\vb_1\vb_2)\in \bgt$ the contraction move translates to 
      \item \textit{i}) the deletion of the virtual subset
     $\{\vh$, $(\vb_1\vh)$, $(\vb_2\vh)$, $(v\vh)$,  $(v\vb_1\vh)$, $(v\vb_2\vh)\}\In \sfat$, and
      \item\textit{ii}) the identifications $\vb_1=\vb_2$ and   $(v\vb_1)= (v\vb_2)$.  
    }
    \item {On a virtual loop $(\vb\vb)\in\bgt$ the deletion move translates to the deletion of the virtual subset
     $\{\vh, (\vb\vh), (\vb\vh), (v\vh), (v\vb\vh), (v\vb\vh)\}\In \sfat$. 
    }
  \end{description}
  The expansion and creation moves are similarly extended.
  For an example see \fig{atommoves}.  
  \begin{figure}%[htb]
    \centering
     \tikzsetnextfilename{apyr2}
     \begin{tikzpicture}[scale=2]

\draw [|->] (1.5,0) --  node[label=above:$\Pi_{3,\loopless}$] {} (2,0);

% ---- triangulated pyramid vertex 2-foam
\begin{scope}
\node [c]		(v)	at (0,-.12)		{};
\node [c]		(1)	at (-0.56,0.32) 	{}; 
\node [c]		(2)	at (-0.17,-0.07)	{}; 
\node [c]		(3)	at (0.6,0) 		{}; 
\node [c]		(4)	at (.14,.37)	{}; 
\node [c]		(5a)	at (-.1,-.6)		{};
\node [c]		(5b)	at (.3,-.5)		{};
\node [c]		(12)	at (-.72,.32)	{};
\node [c]		(23)	at (.44,.09)	{};
\node [c]		(34)	at (.57,.52)	{};
\node [c]		(14)	at (-.3,.61)		{};
\node [c]		(15a)	at (-1,-.44)		{};
\node [c]		(25a)	at (-.28,-.97)	{};
\node [c]		(35b)	at (1.02,-.76)	{};
\node [c]		(45b)	at (.28,-.23)	{};
\node [c]		(5a5b)at (.1,-.55)	{};
\foreach \i/\j in {1/2,1/4,2/3,3/4,1/5a,2/5a,3/5b,4/5b,5a/5b}{
 \path	[f] 	(\i) -- (\i\j) -- (\j) -- (v) -- cycle;
 }
 \foreach \i in {1,2,3,4,5a,5b}{
  \draw	[e] 	(\i) -- (v);
  }
\foreach \i/\j in {1/2,1/4,2/3,3/4,1/5a,2/5a,3/5b,4/5b}{
 \draw 	[eh]	(v)		--	 (\i\j);
 \draw	[eb] 	(\i) node[vb] {} -- (\i\j) node[vh] {};
 \draw 	[eb] 	(\j) node[vb] {} -- (\i\j) node[vh] {};
 }
\draw 	[eh]	(v)		 	 -- (5a5b);
\draw[dotted,eb](5a) node[vb] {} -- (5a5b) node[vh] {};
\draw[dotted,eb](5b) node[vb] {} -- (5a5b) node[vh] {};
\draw 	[e] 	(3) node[vb] {} -- (v) node[v] {};
\end{scope}
% ---- pyramid vertex 2-foam
\begin{scope}[xshift=3.5cm]
\node [c]		(v)	at (0,-.12)		{};
\node [c]		(1)	at (-0.56,0.32) 	{}; 
\node [c]		(2)	at (-0.17,-0.07)	{}; 
\node [c]		(3)	at (0.6,0) 		{}; 
\node [c]		(4)	at (.14,.37)	{}; 
\node [c]		(5)	at (0,-.6)		{};
\node [c]		(12)	at (-.72,.32)	{};
\node [c]		(23)	at (.44,.09)	{};
\node [c]		(34)	at (.57,.52)	{};
\node [c]		(14)	at (-.3,.61)		{};
\node [c]		(15)	at (-1,-.44)		{};
\node [c]		(25)	at (-.28,-.97)	{};
\node [c]		(35)	at (1.02,-.76)	{};
\node [c]		(45)	at (.28,-.23)	{};
\foreach \i/\j in {1/2,1/4,2/3,3/4,1/5,2/5,3/5,4/5}{
 \path	[f] 	(\i) -- (\i\j) -- (\j) -- (v) -- cycle;
 }
 \foreach \i in {1,2,3,4,5}{
  \draw [e] (\i) -- (v);
  }
\foreach \i/\j in {1/2,1/4,2/3,3/4,1/5,2/5,3/5,4/5}{
 \draw 	[eh]	(v)		-- 	(\i\j);
 \draw	[eb] 	(\i) node[vb] {} -- (\i\j) node[vh] {};
 \draw 	[eb] 	(\j) node[vb] {} -- (\i\j) node[vh] {};
 }
\draw [e] (3) node[vb] {} -- (v) node[v] {};
\end{scope}

\end{tikzpicture}
    \caption{\label{fig:atommoves} A contraction move on an atom.}
  \end{figure}
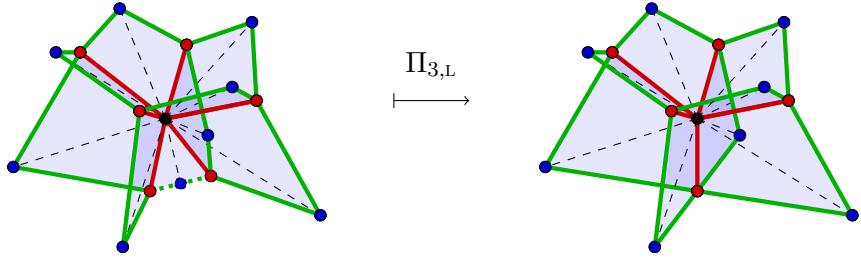

  Quite trivially, one may extend the map $\pi_\rl$ of proposition \ref{prop:surjections} to $\sfat\in\sfast_\rl$. 
  This map 
  \[
  \Pi_\rl := \bulk \circ\bisec\circ\pi_\rl\circ\widetilde{\bisec}^{-1}\circ\widetilde{\bulk}^{-1}:\sfast_\rl\lora\sfas
  \]
  is surjective.  
  Thus, each $\sfa\in\sfas$ marks a non-trivial class $\Pi_\rl^{-1}(\sfa)\in\sfast_\rl$.
\end{remark}

\begin{remark}[{molecule reduction}]
  \label{rem:moleculered}
  While the bonding of atoms in $\sfast$ just follows \dref{labelled},  the reduction of a labelled spin-foam molecule possesses certain subtleties. %\footnote{Indeed, these subtleties arise within the full set of labelled spin-foam molecules $\sfrst$.} 
  Within a spin-foam molecule, two scenarios arise for a virtual vertex $\vh\in\sfrt$:
  \begin{description}
    \item[$\vh\notin\bst\sfrt$:]{
      Consider a virtual vertex $\vh$ with $\ks\in\N$ virtual edges and $2\ks$ virtual faces incident at $\vh$, here denoted by $(\vb_1\vh)$, ..., $(\vb_\ks\vh)$ and $(v_{12}\vb_1\vh),(v_{12}\vb_2\vh)$, ..., $(v_{\ks 1}\vb_\ks \vh)$, $(v_{\ks1}\vb_1\vh)$, respectively. 
      Following the rules laid out in remark \ref{rem:atomicred}, a contraction move applied to that virtual substructure 
      \begin{enumerate}
        \item [\textit{i})] deletes $\{\vh\}$, as well as all edges and faces incident at $\vh$ and 
        \item [\textit{ii})] identifies $\vb\equiv\vb_i$ and pairwise $(v_{ii+1}\vb) \equiv (v_{ii+1}\vb_i) = (v_{ii+1}\vb_{i+1})$, for all $i\in\{1,\dots,\ks\}$.
      \end{enumerate}
      }
      This contraction only behaves well when $\ks = 2$, that is, there are two virtual edges of type $(\vb\vh)$ incident at $\vh$ (\fig{moleculecontraction}). 
      As illustrated in \fig{moleculecontraction2}, for other values of $\ks$, the resulting structure does not lie within $\sfrst$ and therefore ultimately, it lies  outside $\sfrs$; the reason is that in a $\sfr\in\sfrs$ there are precisely two edges of type $e=(v\vb)\in\E$ incident at each vertex $\vb\notin\bs\sfr$ while in the reduction of a $\sfrt\in\sfrst$ in general there occur any $\ks\ge 2$ edges at a vertex $\vb\notin\bst\sfrt$.
   
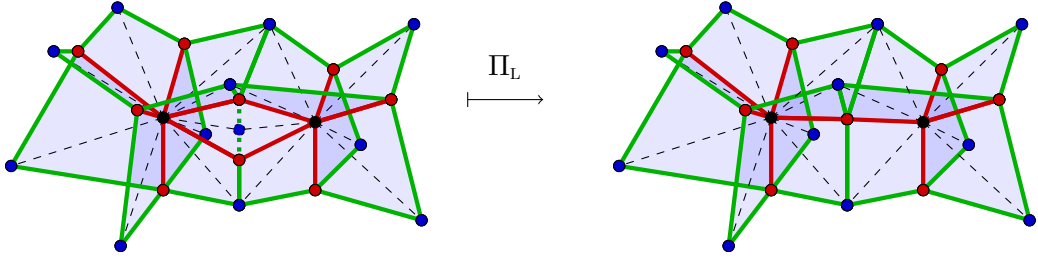
\begin{figure}
    \centering
        \tikzsetnextfilename{moleculecontraction}
        \begin{tikzpicture}[scale=2]

\draw [|->] (2,0) --  node[label=above:$\Pi_{\loopless}$] {} (2.5,0);

\node [c]		(w)	at (0,-.12)		{};
\node [c]		(6)	at (-0.56,0.32) 	{}; 
\node [c]		(7)	at (-0.17,-0.07)	{}; 
\node [c]		(8)	at (.14,.37)	{}; 
\node [c]		(5)	at (0,-.6)		{};
\node [c]		(67)	at (-.72,.32)	{};
\node [c]		(3a7)	at (.44,.1)		{};
\node [c]	 	(13a)	at (.44,.1)		{};
\node [c]		(3a8)	at (.7,.5)		{};
\node [c]		(23a)	at (.7,.5)		{};
\node [c]		(68)	at (-.3,.61)		{};
\node [c]		(56)	at (-1,-.44)		{};
\node [c]		(57)	at (-.28,-.97)	{};
\node [c]		(3b5)	at (.5,-.7)		{};
\node [c]		(3b4)	at (.5,-.7)		{};
\node [c]		(58)	at (.28,-.23)	{};
\begin{scope}[xshift=1cm]
\node [c]		(v)	at (0,-.15)		{};
\node [c]		(1)	at (.5,0)	 	{}; 
\node [c]		(2)	at (.12,.2)		{}; 
\node [c]		(3a)	at (-.5,0)		{}; 
\node [c]		(3b)	at (-.5,-.4)		{};
\node [c]		(3a3b)at (-.5,-.2)	{}; 
\node [c]		(4)	at (0,-.6)		{};
\node [c]  		(12)	at (.65,.5)		{};
\node [c]		(14)	at (.7,-.8)		{};
\node [c] 		(24)	at (.3,-.3)		{};
\end{scope}

\path [eb,dotted] (3a) -- node[vh] {} (3b);

\foreach \i/\j in {6/7,6/8,3a/7,3a/8,5/6,5/7,3b/5,5/8,3a/3b}{
 \path	[f] 	(\i) -- (\i\j) -- (\j) -- (w) -- cycle;
 \draw 	[eh]	(w) -- (\i\j);
 }
 \foreach \i in {6,7,3a,3b,8,5}{
  \draw [e] (\i) -- (w);
  }
\foreach \i/\j in {6/7,6/8,3a/7,3a/8,5/6,5/7,3b/5,5/8}{
 \draw	[eb] 	(\i) node[vb] {} -- (\i\j) node[vh] {};
 \draw 	[eb] 	(\j) node[vb] {} -- (\i\j) node[vh] {};
 }
\draw [e] (3a) node[vb] {} -- (w) node[v] {};
\draw [e] (3b) node[vb] {} -- (w) node[v] {};

\foreach \i/\j in {1/2,1/3a,1/4,2/3a,2/4,3b/4,3a/3b}{
 \path	[f] 	(\i) -- (\i\j) -- (\j) -- (v) -- cycle;
 \draw 	[eh]	(v) -- (\i\j);
 }
 \foreach \i in {1,2,3a,3b,4}{
  \draw [e] (\i) -- (v);
  }
\foreach \i/\j in {1/2,1/3a,1/4,2/3a,2/4,3b/4}{
 \draw	[eb] 	(\i) node[vb] {} -- (\i\j) node[vh] {};
 \draw 	[eb] 	(\j) node[vb] {} -- (\i\j) node[vh] {};
 }
\draw [e] (1) node[vb] {} -- (v) node[v] {};

\begin{scope}[xshift=4cm]
\node [c]		(w)	at (0,-.12)		{};
\node [c]		(6)	at (-0.56,0.32) 	{}; 
\node [c]		(7)	at (-0.17,-0.07)	{}; 
\node [c]		(8)	at (.14,.37)	{}; 
\node [c]		(5)	at (0,-.6)		{};
\node [c]		(67)	at (-.72,.32)	{};
\node [c]		(37)	at (.44,.1)		{};
\node [c]	 	(13)	at (.44,.1)		{};
\node [c]		(38)	at (.7,.5)		{};
\node [c]		(23)	at (.7,.5)		{};
\node [c]		(68)	at (-.3,.61)		{};
\node [c]		(56)	at (-1,-.44)		{};
\node [c]		(57)	at (-.28,-.97)	{};
\node [c]		(35)	at (.5,-.7)		{};
\node [c]		(34)	at (.5,-.7)		{};
\node [c]		(58)	at (.28,-.23)	{};
\begin{scope}[xshift=1cm]
\node [c]		(v)	at (0,-.15)		{};
\node [c]		(1)	at (.5,0)	 	{}; 
\node [c]		(2)	at (.12,.2)		{}; 
\node [c]		(3)	at (-.5,-.13)	{}; 
\node [c]		(4)	at (0,-.6)		{};
\node [c]  		(12)	at (.65,.5)		{};
\node [c]		(14)	at (.7,-.8)		{};
\node [c] 		(24)	at (.3,-.3)		{};
\end{scope}
\foreach \i/\j in {6/7,6/8,3/7,3/8,5/6,5/7,3/5,5/8}{
 \path	[f] 	(\i) -- (\i\j) -- (\j) -- (w) -- cycle;
 }
 \foreach \i in {6,7,3,8,5}{
  \draw [e] (\i) -- (w);
  }
\foreach \i/\j in {6/7,6/8,3/7,3/8,5/6,5/7,3/5,5/8}{
 \draw 	[eh]	(w)	-- (\i\j);
 \draw	[eb] 	(\i) node[vb] {} -- (\i\j) node[vh] {};
 \draw 	[eb] 	(\j) node[vb] {} -- (\i\j) node[vh] {};
 }
\draw [e] (3) node[vb] {} -- (w) node[v] {};

\foreach \i/\j in {1/2,1/3,1/4,2/3,2/4,3/4}{
 \path	[f] 	(\i) -- (\i\j) -- (\j) -- (v) -- cycle;
 }
 \foreach \i in {1,2,3,4}{
  \draw [e] (\i) -- (v);
  }
\foreach \i/\j in {2/4,1/2,1/3,1/4,2/3,3/4}{
 \draw 	[eh]	(v)		-- (\i\j);
 \draw	[eb] 	(\i) node[vb] {} -- (\i\j) node[vh] {};
 \draw 	[eb] 	(\j) node[vb] {} -- (\i\j) node[vh] {};
 }
\draw [e] (1) node[vb] {} -- (v) node[v] {};

\end{scope}

\end{tikzpicture}
    \caption{\label{fig:moleculecontraction} Contraction move with respect to a vertex $\vh$ incident to two virtual edges in a molecule.}
 \end{figure} 
      
  \begin{figure}[htb]
    \centering
      \tikzsetnextfilename{moleculecontraction2}
      \begin{tikzpicture}

\draw [|->] (2,0) --  node[label=above:$\sharp_{\{\gamma_1,\gamma_2,\gamma_3\}}$] {} (3.5,0);
\draw [|->] (6.5,0) --  node[label=above:$\Pi_{\loopless}$] {} (7.5,0);

\begin{scope}[xshift=.5cm,yshift=.4cm]
\node [c]		(fu)	at (0,0)		{};
\node [c]		(u)	at (1,.58)	 	{};  
\node [c]		(uv)	at (0,.58)		{};
\node [c]		(uw)	at (.5,-.3)		{};
\end{scope}
\begin{scope}[xshift=-.5cm,yshift=.4cm]
\node [c]		(fv)	at (0,0)		{};
\node [c]		(v)	at (-1,.58)		{};
\node [c]		(vu)	at (0,.58)		{};
\node [c]	 	(vw)	at (-.5,-.3)		{};
\end{scope}
\begin{scope}[yshift=-.5cm]
\node [c]		(fw)	at (0,0)		{};
\node [c]		(w)	at (0,-1.16)	{}; 
\node [c]		(wu)	at (.5,-.3)		{};
\node [c]		(wv)	at (-.5,-.3)		{};
\end{scope}
\foreach \i/\j/\k in {u/v/w,v/w/u,w/u/v}{
\path		[f] 	(\i) 	-- (\i\j) -- (f\i) -- (\i\k) -- cycle;
\draw[dotted,eb](f\i) 	-- (\i\j);
\draw[dotted,eb](f\i) 	-- (\i\k);
\path		(\j\k) 	edge [bb]  (\k\j)
		(f\i)	edge [bh] 	(f\j);
\draw	[e]	(\i)	-- (\i\j) node[vb]{};
\draw	[e]	(\i)  node[v]{} 	-- (\i\k) node[vb]{};
\draw	[eh]	(\i) 	-- (f\i) node[vh]{};
}

\begin{scope}[xshift=5cm]
\node [c]		(fu)	at (0,0)		{};
\node [c]		(u)	at (1,.58)	 	{};  
\node [c]		(uv)	at (0,.58)		{};
\node [c]		(uw)	at (.5,-.3)		{};
\node [c]		(fv)	at (0,0)		{};
\node [c]		(v)	at (-1,.58)		{};
\node [c]		(vu)	at (0,.58)		{};
\node [c]	 	(vw)	at (-.5,-.3)		{};
\node [c]		(fw)	at (0,0)		{};
\node [c]		(w)	at (0,-1.16)	{}; 
\node [c]		(wu)	at (.5,-.3)		{};
\node [c]		(wv)	at (-.5,-.3)		{};
\foreach \i/\j/\k in {u/v/w,v/w/u,w/u/v}{
\path		[f] 	(\i) 	-- (\i\j) -- (f\i) -- (\i\k) -- cycle;
\draw[dotted,eb](f\i) 	-- (\i\j);
\draw[dotted,eb](f\i) 	-- (\i\k);
\draw	[e]	(\i)	-- (\i\j) node[vb]{};
\draw	[e]	(\i)  node[v]{} 	-- (\i\k) node[vb]{};
\draw	[eh]	(\i) 	-- (f\i) node[vh]{};
 }
\end{scope}

\begin{scope}[xshift=9cm]
\node [vb]		(f)	at (0,0)		{};
\node [v]		(u)	at (1,.58)	 	{};  
\node [v]		(v)	at (-1,.58)		{};
\node [v]		(w)	at (0,-1.16)	{}; 
\foreach \i in {u,v,w}{
\draw [e] (f) -- (\i);
}
\end{scope}
\end{tikzpicture}
      \caption{Contraction move with respect to a vertex $\vh$ adjacent to three virtual edges as consequence of three bondings. The contraction identifies three boundary vertices and the resulting vertex is incident to three bulk edges. 
      This is not possible in a molecule $\sfr\in\sfrs$.}
      \label{fig:moleculecontraction2}
  \end{figure}
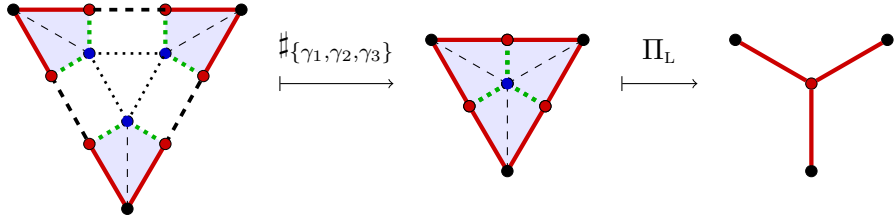

%Moreover, the above condition ensures good behaviour under deletion moves as well.
If $\vh$ corresponds to an internal loop edge, $\ks=2$ follows automatically since both half-edges incident to $\vh$ are part of the same patch along which it is bonded.

    \item[$\vh\in\bst\sfrt$:]  In this case, a similar argument reveals the necessity for precisely one virtual edge of type $(\vb\vh)$ incident at $\vh$ to obtain a molecule $\sfr\in\sfrs$ upon reduction.
  \end{description}
  
  As already noted in remark \ref{rem:bonding-example}, an important motivation for the molecule property which is essential here (\ie that each $\vb$ is incident to at most two edges $(v\vb)$ in $\sfr$) is the resulting manifold property (M3$^\star$) for $\issubs\sfr$.
  Restricting to $\sfrt\in\sfrst_\rl$ with $\ks\le2$ exactly guarantees that, for the reduced molecule $\sfr = \Pi_\rl \sfrt$, 1-cells in $\issubs\sfr$ are indeed edges (or half-edges).
  Thus the complex is the dual of a non-branching complex. 
  If $\ks>2$, on the other hand, the resulting complex would be dual to a branching one which cannot be a pseudo-manifold.
\end{remark}
 
\begin{remark}[{non-branching molecules}]
\label{rem:dually} 
  According to \rem{moleculered}, one is not interested in the whole of $\sfrst_\rl$, 
  but rather in the subset that possesses vertices $\vh\in\Vh$ with at most two %(resp.\ precisely two)
  virtual edges incident at a vertex $\vh$,
  because only these correspond to non-branching complexes under reduction $\Pi_\rl$.
  Thus, this set is denoted by $\sfrst_{\copies,\lnb}$. % and $\widetilde{\mathfrak{M}}_{\copies,\lnb}$. 
  %The reason for this nomenclature will become clear in \sec{gft}. 
  Fortunately, the expansion and creation moves act each time on a single vertex $\vb$, so that one may define a surjective map $\Pi_{\copies,\lnb}:\sfrst_{\copies,\lnb}\lora\sfrs$.  In words, 
  \begin{description}
  \item each spin-foam molecule $\sfr\in\sfrs$ is represented in $\sfrst_{\copies,\lnb}$.
  \end{description}
\end{remark}

%Then, it is rather obvious that:
%
%\fbox{
%  \begin{minipage}[c]{0.97\textwidth}
\begin{remark}%{proposition}
  \label{prop:generated}
  Anticipating the GFT  application, it should be emphasized that the whole construction is based only on a single kind of labelled patches,  the $\copies$-patch.
In the  labelled case this is not unique but there are $2^\copies$ $\copies$-patches and one denotes their set as $\bpst_\copies$.
Thus  $\bbgst_\rl = \sigma(\bpst_\copies)$.
\end{remark}%{proposition}
%\end{minipage}
%}

%%%%%%%%%%%%%%%%%%%%%%%%%%%%%%%%%%%%%%%%%

\subsection{Molecules from simplicial structures}
\label{sec:simplicial-structures} 

Finally, one can show that it is even possible to use only molecules obtained from bonding labelled atoms of simplicial type to recover unlabelled molecules with any graph in $\bgs$ as boundary in terms of reduction:
%\begin{displaymath}
%  \xymatrix{
%    \textsc{unlabelled} & & &\bbgs &&&\sfrs\ar@/_/@{.>}[lll]_{\bs} \\
%   &\\
%   \textsc{labelled} &  & &\bbgst_\rl \ar[uu]  & &&\\
%   && \bpst_{n%,\simplicial
%  }\ar[ur] \ar[dr] & & & &\\
%   \textsc{%primitive
%   simplicial} &\bgst_\rs\ar[rr]^{\widetilde{\bisec}}&& \bbgst_\rs\ar@/^/[r]^{\widetilde{\bulk}}& \sfast_\rs\ar@/^/[l]^{\bst}\ar@{.>}[r]&\sfrst_\rs\ar[r] & \sfrst_{\copies,\snb}\ar@/^/@{.>}[uulll]^{\bst}\ar@/_{2pc}/[uuuu]_{\Pi_{\copies,\snb}}
% }
%  \end{displaymath}
\begin{displaymath} \nonumber
  \xymatrix{
   \bgs \ar[r]^{\bisec} &\bbgs &&&\sfrs\ar@/_/@{.>}[lll]_{\bs}\\
   \bgst_\rl \ar[r]^{\widetilde{\bisec}} & \bbgst_\rl \ar[u] & & &\sfrst_{\copies,\lnb}\ar[dd]_{\sdec}&\\
   &\bpst_{\copies}\ar[u] \ar[d] & & & &\\
 \bgst_\rs\ar[r]^{\widetilde{\bisec}}&
    \bbgst_\rs\ar@/^/[r]^{\widetilde{\bulk}}& \sfast_\rs\ar@/^/[l]^{\bst}\ar@{.>}[r]&\sfrst_\rs\ar[r]& \sfrst_{\copies,\snb}\ar@/^/@{.>}[uulll]^{\bst}\ar@/_{2pc}/[uuu]_{\Pi_{\copies,\snb}}
 }
\end{displaymath}

In proposition \ref{prop:surjections}, it was shown that all boundary graphs can be encoded in terms of labelled, $\copies$-regular, loopless graphs.  Moreover, from the spin-foam point of view these graphs occur as the boundaries of labelled spin-foam atoms $\sfast_\rl$ \emph{and} labelled spin-foam molecules $\sfrst_{\copies,\lnb}$ (see remarks \ref{rem:moleculered} and \ref{rem:dually}).  However, one would also like to show that all possible boundary graphs arise as the boundary of molecules composed of atoms drawn from a small finite set of types. 

This can be achieved using the labelled version of simplicial graphs and atoms.
\begin{remark}[{labelled $\copies$-simplicial structures}]
Due to the label on each edge, there are $2^{(\copies+1)(\copies+2)/2}$ \emph{labelled $\copies$-simplicial boundary graphs}, denoted $\bgst_\rs$. 

Through the maps $\widetilde{\bisec}$ and $\widetilde{\bulk}$, defined in \dref{labelled}, one can rather easily obtain the \emph{labelled bisected $\copies$-simplicial graphs} $\bbgst_\rs$ and \emph{labelled $\copies$-simplicial atoms} $\sfast_\rs$, respectively. 

Furthermore, label-preserving bonding maps $\widetilde{\gm}$ give rise to \emph{labelled $\copies$-simplicial molecules} $\sfrst_\rs$, and their subclass $\sfrst_{\copies,\snb}$ according to remark \ref{rem:dually}.
\end{remark}

\begin{remark}[{atoms from patches}]
  \label{rem:atoms-from-patches}
  One can use a $\copies$-patch $\bpt_\copies\in\bpst_\copies$ as the foundation for a bisected $\copies$-simplicial graph $\bbgt\in\bbgst_\rs$ in the following manner:  
  \begin{description} 
    \item  A $\copies$-patch consists of a single $\copies$-valent vertex $\vb$,  1-valent vertices $\vh^i$ with $i\in I_{\vb}$ a $\copies$-element index set, and labelled edges $(\vb\vh^i)$.  
    \item For each $i$, one creates a new vertex $\vb^i$, along with an edge $(\vb^i\vh^i)$ with the same label as $(\vb\vh^i)$. 
    \item For each pair of new vertices $\vb^i$ and $\vb^j$ with $i\neq j$, one creates a new vertex $\vh^{ij}$, along with a pair of {\bf real} 
    %\cjt{what was the reason for this? For 2.41 is should rather be virtual, I think...}
    edges $(\vb^i\vh^{ij})$ and $(\vb^j\vh^{ij})$.
      %comprising of the sets:
      %\begin{equation}
      %  \Vb = \{\vb\}\cup \bigcup_{i=1}^n\{\vb_i\}\;, \quad\quad
      %  \Vh = \bigcup_{i = 1}^n\{\vh^i\}\cup\bigcup_{\substack{i,j = 1\\ i\neq j}}^n\{\vh^{ij}}\;, \quad\quad
      %  \E = \{(\vb\vh^
      %\end{equation}
  \end{description}
  The result is a $\copies$-simplicial graph 
  %In a moment, it will be useful to distinguish the constructed $\copies$-simplicial graph by 
  denoted $\bbgt_{\vb}$.
  Also, $\bpt(\vb) \equiv \bpt_{ \vb}(\bbgt_{\vb})$ denotes the original $\copies$-patch, and the new patches are $\bpt_{\vb_i}(\bbgt_{\vb_i})$ for $i\in I$.
\end{remark}

The aim is summarized in the statement:

\fbox{
  \begin{minipage}[c][][c]{0.97\textwidth}
\begin{proposition}
  \label{prop:all-graphs}
  Every graph in $\bgst_\rl$ arises as the boundary graph of a non-branching molecule composed of simplicial $\copies$-atoms.  
\end{proposition}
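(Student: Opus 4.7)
My approach is a constructive, vertex-by-vertex assembly that realizes any $\bgt \in \bgst_\rl$ as the boundary of a non-branching simplicial molecule. The central mechanism is provided by remark \ref{rem:atoms-from-patches}: every $\copies$-patch can be embedded as one of the $\copies + 1$ patches of a labelled simplicial $\copies$-atom, with the cost of introducing $\copies$ auxiliary patches whose non-inherited edges are all real. Thus, for each vertex $\vb$ of $\bgt$ there is a canonical simplicial atom $\sfat(\vb) \in \sfast_\rs$ whose primary patch equals $\bpt(\vb) \subset \widetilde{\bisec}(\bgt) = \bbgt$, and the task reduces to bonding these atoms so that the auxiliary patches become internal while the primary patches reconstitute $\bbgt$ on the boundary.

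The steps I would carry out are: (i) instantiate one simplicial atom $\sfat(\vb)$ per vertex $\vb$ of $\bgt$, with primary patch $\bpt(\vb)$ and auxiliary patches $\bpt_{\vb^i}$ for $i = 1,\dots,\copies$, labels inherited as in remark \ref{rem:atoms-from-patches}; (ii) for each edge $e=(\vb\vb')$ of $\bgt$, with local indices $i_e$ at $\vb$ and $i'_e$ at $\vb'$, arrange a bonding that identifies the vertex $\vh^{i_e}$ (shared by $\bpt(\vb)$ and $\bpt_{\vb^{i_e}}$) with $\vh^{i'_e}$ (shared by $\bpt(\vb')$ and $\bpt_{\vb'^{i'_e}}$), which is label-compatible since both edges $(\vb^{i_e}\vh^{i_e})$ and $(\vb'^{i'_e}\vh^{i'_e})$ inherit the label of $e$; after the bonding these vertices coalesce into a single bisection vertex, shared by the two primary patches on the boundary, which reproduces the bisected edge $e$ of $\bbgt$; (iii) verify the resulting molecule lies in $\sfrst_{\copies,\snb}$: each boundary bisection vertex is incident to exactly two edges $(\vb\vh)$ (by looplessness of $\bgt$), and internally each $\vh$ carries at most two $(\vb\vh)$ edges by construction of the bondings.

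The main obstacle lies in step (ii): directly bonding $\bpt_{\vb^{i_e}}$ with $\bpt_{\vb'^{i'_e}}$ forces $\copies - 1$ further identifications of auxiliary vertices $\vh^{i_e j} \sim \vh^{i'_e j'}$, which remain exposed on the boundary through the other auxiliary patches $\bpt_{\vb^j}$ and $\bpt_{\vb'^{j'}}$, producing structure outside of $\bbgt$. The resolution I would pursue is to insert intermediate simplicial atoms dedicated to absorbing the unwanted auxiliary content: rather than a single direct bonding per edge of $\bgt$, one routes through a short chain of simplicial atoms whose $\copies + 1$ patches mop up the residual auxiliary patches. The principal technical work is then to show that this insertion can always be carried out in a way that (a) is label-compatible, leveraging the fully real auxiliary edges of simplicial atoms, (b) terminates after a finite number of additional atoms, and (c) preserves the non-branching condition. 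An induction on the total number of unbonded auxiliary patches — which each simplicial-atom insertion strictly reduces — closes the argument.
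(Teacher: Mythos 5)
Your steps (i)--(ii), up to the point where you announce an ``obstacle'', coincide with the paper's construction: one simplicial atom per vertex of $\bgt$, built from its patch as in remark \ref{rem:atoms-from-patches}, and, for each edge of $\bgt$, a bonding of the two corresponding \emph{auxiliary} patches that identifies $\vh_i^{j}$ with $\vh_j^{i}$ and pairs the remaining $\copies-1$ bisecting vertices arbitrarily (label-compatibility is automatic, since all the pair edges $(\vb^i\vh^{ij})$ are real and the edges $(\vb^i\vh^i)$ on both sides inherit the label of the same edge of $\bgt$). The obstacle you then raise is spurious. Because $\bgt$ is $\copies$-regular and loopless, each of the $\copies$ auxiliary patches of each atom corresponds to exactly one edge-end of $\bgt$ and is therefore bonded exactly once when \emph{all} edges have been processed. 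A residual vertex $\vh^{jk}$ lies only in the two auxiliary patches $\bpt_{\vb^{j}}$ and $\bpt_{\vb^{k}}$; once both are bonded, both of its incident edges belong to two faces each and are internal, so $\vh^{jk}$ does not appear in $\bst\sfrt$. The only unbonded patches of the finished molecule are the primary ones $\bpt(\vb)$, whose union is precisely $\bbgt$. Your worry comes from inspecting an intermediate stage of the gluing rather than the completed molecule; the arbitrary pairing of the $\copies-1$ residual vertices affects only which internal real faces are glued together, not the boundary.

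Moreover, the repair you sketch would not close the argument even if the obstacle were real: your induction measure --- ``the total number of unbonded auxiliary patches, which each simplicial-atom insertion strictly reduces'' --- is false, since inserting a fresh simplicial $\copies$-atom and bonding one of its patches removes one unbonded patch while exposing $\copies$ new ones, so the count \emph{grows} by $\copies-1$ per insertion. No intermediate atoms are needed. What remains to finish the proof is instead the non-branching claim, which you treat too loosely in step (iii): the condition defining $\sfrst_{\copies,\snb}$ constrains only the \emph{virtual} edges at each $\vh$, and the correct observation (the paper's closing line) is that all structure added beyond the original patches is real, so the virtual content of the molecule is exhausted by that of $\bbgt$ itself.
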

\end{minipage}
}
\begin{proof}
  The basic argument is fairly straightforward and goes as follows: given a graph $\bgt\in\bgst_\rl$, one bisects it and thereafter cuts it into its constituent patches; one uses remark \ref{rem:atoms-from-patches} to construct a $\copies$-simplicial atom from each patch; one supplements this set of atoms with bonding maps that yield a molecule with $\bgt$ as boundary.  
The procedure is also sketched in \fig{decomposition}.
\begin{figure}[htp]
    \centering
        \tikzsetnextfilename{decomposition}
        \begin{tikzpicture}[scale=1.5]
\node [vb,label=left:$\vb_i$]	(1)	at (0,-.2)	{};
\node [vh,label=above:$\vh_{ij}$](12)	at (1,-.1)	{};
\node [vh]					(13)	at (.4,.4)	{};
\node [vh]					(14)	at (.5,-.5)	{};
\node [vb,label=right:$\vb_j$]	(2)	at (2,0)	{};
\node [vh]					(21)	at (1,-.1)	{};
\node [vh]					(23)	at (1.6,.5)	{};
\node [vh]					(24)	at (1.5,-.4)	{};
\foreach \i/\j in {1/2,1/3,1/4,2/1,2/3,2/4}{
    \draw [eb] (\i) -- (\i\j);
    }
\draw [->] (2.8,0) -- (3.5,0);

\begin{scope}[xshift=5cm,yshift=-.2cm]
  \node [vb,label=left:$\vb_i$]	(3)	at (-1.1,-.2){};
  \node [vb]				(4)	at (0,-.8)	{};
  \node [vb,label=above:$\vb_i^j$](1)	at (1,0) 	{};
  \node [vb]				(2)	at (.2,1)	{};
  \foreach \i/\j in {1/2,2/4,3/4,2/3,1/3,1/4}{
    \draw [eb] (\i) -- node[vh] {} (\j);
    }
\end{scope}

\begin{scope}[xshift=8cm]
  \node [vb,label=above:$\vb_j^i$](5)	at (-1.1,-.2){};
  \node [vb]				(4)	at (0,-.8)	{};
  \node [vb,label=right:$\vb_j$]	(6)	at (1,0) 	{};
  \node [vb]				(2)	at (.2,1)	{};
  \foreach \i/\j in {6/2,2/4,5/4,2/5,6/5,6/4}{
    \draw [eb] (\i) -- node[vh] {} (\j);
    }
\end{scope}

\path		(1) 	edge [bb] (5) 	(5,-.3) %node[label=100:$\vh_i^j$] {} 
			edge [bh, bend right=15] node[label=below:$\gamma_{ij}$] {} (8,-.1);
%\node  [label=100:$\vh_j^i$]  at (8,-.1)
\node at (4.9,-.05) {$\vh_i^j$};
\node at (7.9,.15) {$\vh_j^i$};

\end{tikzpicture}
    \caption{\label{fig:decomposition} Decomposition of an atom with boundary graph $\bgt\in\bgst_\rl$ into simplicial atoms, sketched for the patches of two connected vertices in $\bgt$ and $\copies=3$.}
 \end{figure}
 
 More precisely, consider a labelled, loopless, $\copies$-regular graph $\bgt\in\bgst_\rl$. %, with $\bgt = (\Vb,\Eb)$.
% \begin{description}
   %\item[{\it index:}] 
   It is useful to index the vertex set by $\vb_i$ with $i\in\{1,\dots,|\V_{\bgt} |\}$. 
   This induces an index for the edges; an edge joining $\vb_i$ to $\vb_j$ is indexed by $e_{ij}^{(a)}$, where a non-trivial index $(a)$ arises should multiple edges join the two vertices. 
  %\item[{\it bisect:}] 
  The graph $\bgt$ has a bisected counterpart $\widetilde\bisec(\bgt) = \bbgt = (\V_{\bbgt}, \E_{\bbgt})$.  
  The vertex set is $\V_{\bbgt} = \Vb\cup\Vh$, where $\Vb = \V_{\bgt}$ and $\Vh$ is the set of bisecting vertices.
  A vertex in $\Vh$ is indexed by $\vh_{ij}^{(a)}$ if it bisects the edge $e_{ij}^{(a)}$ of $\bgt$.
   
%\item[{\it cut:}]
  The boundary patches in $\bbgt$ are $\bpt_{\vb_i}(\bbgt)$ with $i\in\{1,\dots, |\Vb|\}$. The patch $\bpt_{\vb_i}(\bbgt)$ is comprised of the vertex $\vb_i$, the  $\copies$ vertices $\vh_{ij}^{(a)}$ and $\copies$ edges $(\vb_i\vh_{ij}^{(a)})$. 
  The indices of type $j(a)$, attached to the $\copies$ elements $\vh_{ij}^{(a)}$, form a $\copies$-element index set $I_{\vb_i}$. 
  %Each patch $\bpt(\bbgt,\vb_i)$ is a  $\copies$-patch and
  Each bisecting vertex $\vh_{ij}^{(a)}\in\Vh$ is shared by precisely two patches.
  
  Now one cuts the graph along each bisecting vertex and considers each patch in isolation.
  This cutting procedure sends each $\bpt_{\vb_i}(\bbgt) \lora\bpt(\vb_i)$, where $\bpt(\vb_i)$ is a %simplicial 
   $\copies$-patch
 comprising of a vertex $\vb_i$, $\copies$ vertices $\vh_{i}^{j(a)}$ and $\copies$ edges $(\vb_i\vh_{i}^{j(a)})$.
 Thus, after cutting, a bisecting vertex $\vh_{ij}^{(a)}$ is represented by $\vh_{i}^{j(a)}$ in $\bpt(\vb_i)$ and $\vh_{j}^{i(a)}$ in $\bpt(\vb_j)$. 
%\item[{\it atoms:}]
  For each patch $\bpt(\vb_i)$, the $\copies$ superscript indices $j(a)$ are the indexing set $I_{\vb_i}$. %defined a moment ago.   
  Thus, one may use remark \ref{rem:atoms-from-patches} to construct, from $\bpt(\vb_i)$, a simplicial  $\copies$-graph $\bbgt_{\vb_i}$  and there after a simplicial  $\copies$-atom $\sfat_{\vb_i}$.

  Through this process, one obtains a set of simplicial  $\copies$-atoms, $\sfat_{\vb_i}$ with $i\in\{1,\dots,|\Vb|\}$.  This set is denoted by $\sfast_{\Vb}$, since the atoms are in one-to-one correspondence with the vertices $\Vb$ of $\bbgt$.   They will be used to form a spin-foam molecule whose bisected boundary graph is $\bbgt$.
 
%\item[{\it bonding maps:}] 

  For each pair $\vb_{i}^{j(a)}\in\bbgt_{\vb_i}$, $\vb_j^{i(a)}\in\bbgt_{\vb_j}$, define a bonding map
  \begin{eqnarray}
\gm_{ij}^{(a)}:\bpt_{\vb_{i}^{j(a)}}(\bbgt_{\vb_i}) & \lora & \bpt_{ \vb_j^{i(a)}}(\bbgt_{\vb_j})\\
\vb_i^{j(a)} & \lora & \vb_j^{i(a)} \\
\vh_i^{j(a)} & \lora & \vh_j^{i(a)}
\end{eqnarray}
while the remaining $\copies-1$ vertices in each patch are paired in an arbitrary way:\footnote{As an aside, the bonding maps are specified only up to permutations of these $\copies-1$ vertex pairings, leading to ${\copies-1 \choose 2}$ choices for each bonding map. However, the resulting spin-foam molecules possess the same boundary.}
\begin{eqnarray}
  \left\{ \vh_i^{j(a)k(b)} : k(b)\in I_{\vb_i} \setminus \{j(a)\}\right\} & \lora & \left\{ \vh_j^{i(a)l(c)} : l(c)\in I_{\vb_j} \setminus \{i(a)\}\right\}\;.
\end{eqnarray}
The set of bonding maps is denoted $\Gamma_{\Vh}$, since the maps are in one-to-one correspondence with the bisecting vertices $\Vh$ of  $\bbgt$.

Then, in the molecule $\sfrt = \sharp_{\Gamma_{\Vhat}}\sfat_{\Vbar}$, the only patches that remain unbonded are the original $\bpt(\vb_i)$ for $i\in\{1,\dots,|\Vb|\}$.  Moreover, after relabeling the identified vertices $\vh_{ij}^{(a)} \equiv \vh_{i}^{j(a)} = \vh_{j}^{i(a)}$ %, one has truly come full circle:  
the boundary of $\sfrt$ %, which may be extracted using remark \ref{rem:moleculeboundary},  
satisfies the relation $\bst \sfrt = \bbgt$.
%\item[{\it non-branching:}]
  From remark \ref{rem:atoms-from-patches}, one notices that all edges added in the construction are {\bf real}.  Thus, the molecule $\sfrt\in\sfrst_{\copies,\snb}$.
%\end{description}
\end{proof}

Proposition \ref{prop:all-graphs} has the following consequence:  
\begin{corollary}[{molecule decomposition}]
\label{rem:refmolreduction}
There is a decomposition map $\sdec:\sfrst_{\copies,\lnb}\lora\sfrst_{\copies,\snb}$. %(remark \ref{rem:dually}) to the simplicial  molecule subclass that still yields a projection map $\Pi_{\copies,\snb}:\sfrst_{\copies,\snb}\lora\sfrs$. Equivalently, each unlabelled molecule $\sfr\in\sfrs$ is represented even by a molecule obtained from simplicial  atoms, $\sfrt\in\sfrst_{\copies,\snb}$.
\end{corollary}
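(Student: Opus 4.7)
The plan is to define $\sdec$ constructively by applying the atom-wise simplicial decomposition of proposition \ref{prop:all-graphs} to every atom of a molecule in $\sfrst_{\copies,\lnb}$, and then showing that the original bondings naturally descend to bondings of the resulting simplicial atoms, so that the output lies in $\sfrst_{\copies,\snb}$.

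First, I would take an arbitrary molecule $\sfrt = \sharp_{\Gamma}\{\sfat_I\} \in \sfrst_{\copies,\lnb}$, so that it is built from labelled, loopless, $\copies$-regular atoms $\sfat_I$ bonded along a family of bonding maps $\Gamma$. For each such atom $\sfat_I$ the boundary graph $\bst \sfat_I \in \bbgst_\rl$ can be cut into its constituent $\copies$-patches $\bpt(\vb_i^I)$ as in the proof of proposition \ref{prop:all-graphs}. Replacing each patch $\bpt(\vb_i^I)$ by the simplicial $\copies$-atom $\sfat_{\vb_i^I}$ obtained from it via remark \ref{rem:atoms-from-patches}, and using the internal bonding maps $\Gamma_{\Vhat_I}$ of that construction, I obtain a simplicial submolecule $\sfrt_I := \sharp_{\Gamma_{\Vhat_I}}\{\sfat_{\vb_i^I}\}$ whose boundary is $\bst \sfat_I$.

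Next, for each original bonding map $\gm \in \Gamma$ identifying two patches $\bpt_{\vb^{I}}(\bst \sfat_I)$ and $\bpt_{\vb^{J}}(\bst \sfat_J)$ in the given molecule $\sfrt$, I reuse $\gm$ directly as a bonding of the corresponding unbonded boundary patches $\bpt(\vb^I) \subset \bst \sfrt_I$ and $\bpt(\vb^J) \subset \bst \sfrt_J$ of the simplicial submolecules. Because the cutting procedure preserves the labels on each half-edge of a patch (all newly introduced edges in remark \ref{rem:atoms-from-patches} are real and the outgoing edges inherit their labels from the original patch), these maps satisfy the compatibility condition of \dref{bonding} and thus are admissible. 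Defining
\[
\sdec(\sfrt) \;:=\; \sharp_{\Gamma}\{\sfrt_I\}_{I} \;=\; \sharp_{\Gamma \cup \bigcup_I \Gamma_{\Vhat_I}}\{\sfat_{\vb_i^I}\}_{I,i}\, ,
\]
one obtains a molecule built entirely from labelled simplicial $\copies$-atoms, i.e.\ an element of $\sfrst_\rs$.

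Finally, I need to verify that $\sdec(\sfrt)$ actually lies in $\sfrst_{\copies,\snb}$, i.e.\ satisfies the non-branching condition. Internally to each $\sfrt_I$, the bisecting vertices $\vh$ introduced by remark \ref{rem:atoms-from-patches} are incident to exactly two virtual edges by construction, since they bisect a single virtual edge joining two vertices of type $\vb_i^{j(a)}$. Along the external bondings $\gm \in \Gamma$, the bisecting vertices inherit their incidence structure from $\sfrt$, and since $\sfrt \in \sfrst_{\copies,\lnb}$ by hypothesis each such $\vh$ was already incident to at most two virtual edges. Thus the non-branching property is preserved globally. The main obstacle in this plan is precisely this last compatibility check: one must keep careful track of how the cutting-and-re-bonding procedure interacts with the virtual-edge incidence at each $\vh$, and ensure in particular that self-bondings (loops) and boundary vertices ($\vh \in \bst \sfrt$) are handled in accordance with the two scenarios laid out in remark \ref{rem:moleculered}. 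Once this bookkeeping is done, $\sdec$ is well-defined as a map $\sfrst_{\copies,\lnb} \lora \sfrst_{\copies,\snb}$.
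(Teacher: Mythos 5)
Your proposal is correct and follows essentially the same route as the paper, whose proof simply invokes proposition \ref{prop:all-graphs} atom by atom; you have merely spelled out the cutting, re-bonding and non-branching bookkeeping that the paper leaves implicit. One small slip: the bisecting vertices $\vh^{ij}$ created in remark \ref{rem:atoms-from-patches} are incident to two \emph{real} edges rather than virtual ones, so they satisfy the non-branching condition trivially (zero incident virtual edges) and not because they carry ``exactly two virtual edges''; this does not affect the validity of the argument.
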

\begin{proof}
  Consider $\sfrt\in\sfrst_{\copies,\lnb}$. 	
  By proposition \ref{prop:all-graphs}, one can decompose each of its atoms, leading to the image of the molecule $\sfrt$ itself under decomposition map $\sdec$. 
\end{proof}

There is an important limitation.  
\begin{proposition}\label{prop:limitation}
  The projection $\Pi_{\copies,\snb}:\sfrst_{\copies,\snb}\lora\sfrs$ is {\bf not} surjective.
\end{proposition}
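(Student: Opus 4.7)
The plan is to exhibit an explicit molecule $\sfr\in\sfrs$ lying outside the image of $\Pi_{\copies,\snb}$. The key observation is that $\Pi_{\copies,\snb}$ rigidly preserves the number of bulk vertices of the underlying molecule, while each simplicial atom carries a fixed-size boundary; combined, these facts forbid single-atom molecules with boundary graphs that are too large from lying in the image.

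First, I would establish preservation of bulk vertices. Inspecting the contraction and deletion moves of \dref{moves}, lifted to labelled atoms in \rem{atomicred} and to molecules in \rem{moleculered}, one sees that only virtual vertices $\vh\in\Vh$, along with their incident virtual edges and virtual faces, are ever removed, and only boundary vertices $\vb_i\in\Vb$ are ever identified pairwise. The bulk vertices $v\in\V$ are never deleted and never fused with one another (in the notation of \rem{moleculered}, distinct bulk vertices $v_{i\,i+1}$ and $v_{i+1\,i+2}$ survive the contraction as distinct vertices; only the edges $(v_{i\,i+1}\vb_i)$ and $(v_{i\,i+1}\vb_{i+1})$ are identified). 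Since bonding maps likewise touch only boundary structure, for every $\sfrt\in\sfrst_{\copies,\snb}$ the number of bulk vertices of $\Pi_{\copies,\snb}(\sfrt)$ equals the number of simplicial atoms composing $\sfrt$. In particular, reduction acts atomwise: each simplicial atom in $\sfrt$ gives rise to a single atom in the reduced molecule.

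Next, I would observe that every simplicial atom reduces to an atom whose boundary graph has at most $\copies+1$ vertices in $\Vb$. Indeed, by \dref{simplicial} each $\sfat\in\sfast_\rs$ satisfies $\bst\sfat=\widetilde{\bisec}(\bgt)$ for some labelling $\bgt$ of the complete graph $\bg_\rs=K_{\copies+1}$, which carries exactly $\copies+1$ vertices in $\Vb$. Contractions can only identify boundary vertices, and deletions leave the vertex set unchanged, so the $\Vb$-count of the reduced boundary graph is at most $\copies+1$.

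The counterexample is then immediate. Pick any $\sfa\in\sfas$ whose (unbisected) boundary graph $\bg$ has strictly more than $\copies+1$ vertices; such atoms exist by \prop{atoms-graphs}, since $\bgs$ consists of \emph{all} connected multigraphs. View $\sfa$ as a single-atom molecule $\sfr\in\sfrs$ with exactly one bulk vertex. Any preimage $\sfrt\in\sfrst_{\copies,\snb}$ of $\sfr$ under $\Pi_{\copies,\snb}$ must therefore consist of a single simplicial atom, whose reduction would then give an atom with at most $\copies+1$ boundary vertices -- contradicting the choice of $\sfr$. Hence $\sfr$ is not in the image of $\Pi_{\copies,\snb}$, and the map fails to be surjective. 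I expect the only delicate point to be a careful verification of the atomwise character of the reduction (in particular that virtual edges and faces created by bondings between two simplicial atoms are cleanly deleted without ever inducing an identification between their bulk vertices), which follows directly from the bookkeeping already present in \rem{moleculered}.
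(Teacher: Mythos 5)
Your argument is correct, and it is in fact tighter than what the paper itself provides. The paper's proof of this proposition is explicitly only ``a sketch of the reasoning'': it observes that decomposing a generic atom of $\sfast_\rl$ into simplicial atoms (as in the proof of \prop{all-graphs}) unavoidably introduces \emph{real} edges and vertices, and that since $\Pi_{\copies,\snb}$ only contracts and deletes \emph{virtual} structure, this added real structure survives reduction, so one cannot land back on the original molecule. That is a heuristic about why the natural candidate preimages fail, not a verification that no preimage exists. Your route makes the same underlying phenomenon precise by extracting two invariants: (i) reduction never deletes or fuses bulk vertices (only the virtual $\vh$'s and their incident cells are removed, and only the $\vb$'s are identified, as the bookkeeping in \rem{moleculered} confirms), so $|\V|$ of the image equals the number of simplicial atoms in the preimage; and (ii) a single simplicial atom has exactly $\copies+1$ vertices in $\Vb$, a count that reduction can only decrease. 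A one-atom molecule $\sfr\in\sfrs$ whose boundary graph has more than $\copies+1$ vertices then manifestly has no preimage. What your approach buys is an actual proof with an explicit family of counterexamples; what the paper's sketch buys is the sharper qualitative message (which motivates the subsequent constructions) that the obstruction is precisely the \emph{real} scaffolding forced by simplicial decomposition --- your bulk-vertex and $\Vb$-counts are exactly the shadow of that real structure. The only point worth stating explicitly in a write-up is the one you already flag: that the quotient by bonding maps touches only boundary patches, so the atom-count/bulk-vertex correspondence passes through both bonding and reduction intact.
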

A sketch of the reasoning is as follows.
  Consider a generic $\sfr\in\sfrs$ and let $\sfrt$ be a representative in the class $\Pi_{\copies,\lnb}^{-1}(\sfr)$. Then, $\sfrt$ consists of bonded spin-foam atoms drawn from the set $\sfast_\rl$. According to proposition \ref{prop:all-graphs}, every atom $\sfat\in\sfast_\rl$ has a decomposition into simplicial atoms of $\sfast_\rs$.  Just like in the decomposition utilized in \ref{prop:all-graphs},  it is possible to show that any decomposition requires one to add {\bf real} structures in order to maintain the integrity of the boundary graph under reduction.  However, if one adds in real structures, then one does not arrive back to the original atom/molecule after reduction, since reduction just amounts to contraction and deletion of virtual structures.

%%%%%%%%%%%%%%%%%%%%%%%%%%%%%%%%%%%%%%%%%

%==============================================================

\subsection{From molecules to $\std$-dimensional complexes}
\label{sec:std-complexes}

%\subsection*{Enhancing with higher-dimensional information}
%\label{sec:std-complexes}

Since spin-foam molecules are meant to act as spacetime structures in quantum gravity state sums, it is important to understand their relation to $\std$-dimensional manifolds.
Molecules are 2-complexes and  do not posses any higher dimensional information in the first place.
To extend molecules to complexes of spacetime dimension $\std$, some further information has to be specified. 
Another important question is then, whether the resulting $\std$-complexes are (orientable) pseudo-manifolds.
In this section, these issues are discussed first for the special case of simplicial molecules and then for regular, loopless and more general molecules.

\begin{remark}[{generalized simplicial pseudo-$\std$-manifolds}]  
\label{rem:simplicial}
  %In the first approach, one notes that spin-foam atoms $\sfast_{\std,\simplicial}$ form the dual 2-skeleton to a $\std$-simplex. Thus, at the atomic level, the $\std$-dimensional structure can be defined by hand \emph{once} at the outset. 
  The single $\copies$-regular, simplicial atom $\sfa_\rs$ has a natural extension to the dual of an abstract $\copies$-simplex (\fig{simplicial}).
  Thus, $\copies$-regular simplicial molecules $\sfrs_\rs$ can be understood as bondings of $\copies$-simplices which yields $\std$-complexes for $\copies=\std$. 
%There are two common ways to use this fact to obtain a spacetime structure.%
%\footnote{The two paths mentioned above are the ones most often used in the quantum gravity literature. 
%From the mathematical perspective there is an interesting third way, detailed in \cite{Denicola:2010fa}. Therein the authors extend embedded discrete structures to include topological data that encode the underlying $\std$-manifold as a branched cover.
%}
\begin{figure}
  \centering
    \tikzsetnextfilename{simplicial}
    %simplicial 3-complex %
\begin{tikzpicture}
\begin{scope}
\node (ab)		[he]						{$12$};
\node (ac)		[he,right of=ab]				{$13$};
\node (ad)		[he,right of=ac]				{$14$};
\node (bc)		[he,right of=ad]				{$23$};
\node (bd)		[he,right of=bc]				{$24$};
\node (cd)		[he,right of=bd]				{$34$};

\node (a)		[hv,above of=ab,xshift=5mm]	{$1$};
\node (b)		[hv,above of=ad,xshift=-2mm]	{$2$};
\node (c)		[hv,above of=bd,xshift=-7mm]	{$3$};
\node (d)		[hv,above of=cd,xshift=-5mm]	{$4$};

\node (abc)	[h,below of=ab,xshift=5mm]	{$123$};
\node (abd)	[h,below of=ad,xshift=-2mm]	{$124$};
\node (acd)	[h,below of=bd,xshift=-7mm]	{$134$};
\node (bcd)	[h,below of=cd,xshift=-5mm]	{$234$};

\node (p)		[h,below of=acd,xshift=-8mm] 	{$1234$};
\node (0)		[h,inner sep=1pt,circle, above of=c,xshift=-8mm]{$v$};

\foreach \from/\to in 
{ab/a,ac/a,ad/a,bc/b,bd/b,cd/c,
abc/ad,acd/ad,abd/ad,bcd/bd,
p/abc,p/acd}
\draw (\from) -- (\to);

\foreach \from/\to in 
{a/0,b/0,c/0,d/0,
ab/b,ac/c,ad/d,bc/c,bd/d,cd/d,
abc/ab,acd/ac,abd/ab,bcd/bc,abc/bc,acd/cd,abd/bd,bcd/cd,
p/abd,p/bcd}
    \draw (\from) -- (\to);
\end{scope}

%\begin{scope}[xshift=-7cm, scale=1.5]
%   \node [vb, label=90:3]	(3)	at (-1.1,-.2){};
%  \node [vb, label=0:4]	(4)	at (0,-.8)	{};
%  \node [vb, label=90:1]	(1)	at (1,0) 	{};
%  \node [vb, label=-0:2]	(2)	at (.2,1)	{};
%  \foreach \i/\j in {1/2,2/4,3/4,2/3,1/4}{
%    \draw [eb] (\i) -- node[vh, label=0:(\i\j)] {} (\j);
%    }
%  \draw [eb] (1) -- node[vh, label=120:(13)] {} (3);
%\end{scope}

\begin{scope}[xshift=-4.cm, scale=1.5]
\node [c,label=-45:$v$]		(v)	at (0,-.19)		{};
\node [c,label=0:$\vb_1$]		(1)	at (.5,0)	 	{}; 
\node [c,label=90:$\vb_2$]	(2)	at (-.12,.12)	{}; 
\node [c,label=180:$\vb_3$]	(3)	at (-.32,-.14)	{}; 
\node [c,label=-90:$\vb_4$]	(4)	at (0,-.84)		{};
\node [c,label=0:$\vh_{12}$]	(12)	at (.55,.55)	{};
\node [c%,label=90:(13)
				]		(13)	at (.21,0)		{};
\node [c,label=0:$\vh_{14}$]	(14)	at (.77,-.83)	{};
\node [c,label=135:$\vh_{23}$]	(23)	at (-.73,.31)	{};
\node [c%,label=-90:(24)
				] 		(24)	at (-.17,-.51)	{};
\node [c,label=200:$\vh_{34}$]	(34)	at (-.5,-1.06)	{};
\foreach \i/\j in {1/2,1/3,1/4,2/3,2/4,3/4}{
 \path	[f] 	(\i) -- (\i\j) -- (\j) -- (v) -- cycle;
 }
 \foreach \i in {1,2,3,4}{
  \draw [e] (\i) -- (v);
  }
\foreach \i/\j in {1/2,1/3,1/4,2/3,2/4,3/4}{
 \draw 	[eh]	(v)		-- 	(\i\j);
 \draw	[eb] 	(\i) node[vb] {} -- (\i\j) node[vh] {};
 \draw 	[eb] 	(\j) node[vb] {} -- (\i\j) node[vh] {};
 }
\draw [e] (3)	node[vb] {} 				-- (v) node[v] {};
\draw (0,1.35)	node[label=right:{\small 123}] {}	-- (.43,-1.37);
\draw (0,1.35)							-- (1.11,-.27);
\draw (0,1.35)							-- (-1.44,-.73);
\draw (.43,-1.37) node[label=right:{\small134}] {}-- (1.11,-.27);
\draw (-1.44,-.73) node[label=below:{\small234}] {}-- (.43,-1.37);
\draw (1.11,-.27) node[label=right:{\small124}] {}-- (-1.44,-.73);
\end{scope}

\end{tikzpicture}	
    \caption{Tetrahedron from the the 3-simplicial atom $\sfa_{3,\simplicial}$, in a geometric realization (left) and its combinatorics (Hasse diagram, right).
    %Note that this is not directly the dual of the tetrahedron induced by the
    The construction is based on the canonical extension of complete graphs $\bg_\rs = \bisec^{-1} (\bs \sfa_\rs)$ to $\copies$-simplices and their self-duality.
    }
\label{fig:simplicial}
\end{figure}
 The bonding of atoms along patches is constructed exactly in this way as the $\std$-patches correspond to $(\std-1)$-simplices along which the $\std$-simplices are bonded. 
% More precisely, the $\std$-regular simplicial graphs implicitly encode the $(\std-1)$-dimensional boundary of a $\std$-simplex, while the simplicial  $\std$-patches are enhanced to $(\std-1)$-simplices.  

  Even more generally, given such an extension of all single atoms $\{\sfa_i\}\In\sfas$ in a molecule $\sfr = \sharp_{\{\gm_j\}}\{\sfa_i\}$ to a primal $\std$-polytope each, the associated complex $\cm$ consisting of these $\std$-polytopes with bondings induced by the maps $\gm_j$ is a combinatorial pseudo-$\std$-manifold.
  This can be easily seen in the following way.
  Each $\std$-polytope is pure and $\std$-dimensional and there is no cell in $\cm$ which does not belong to a polytope associated to an atom. Thus $\cm$ itself is pure and $\std$-dimensional.
  The pairwise bonding construction then guarantees $\std$-dimensional connectedness (M2) and non-branching (M3) because the dual properties (M2$^\star$), (M3$^\star$) already apply to the molecule $\sfr$ (remark \ref{rem:bonding-example}).
  
%  The tricky issue, of course, comes when one bonds simplicial  $\std$-patches. These bonding maps should be augmented to identify $(\std-1)$-dimensional information.  With many subtleties, these enhanced bonding maps can be defined \emph{once} at the start and applied mechanically throughout the bonding process.  
%  However, the spin-foam molecules, reconstructed in the manner, will generically encode $\std$-dimensional objects that are very ill-behaved from a topological viewpoint \cite{\lostGS}. 

  Still, the combinatorial pseudo-$\std$-manifold obtained from bonding $\std$-simplices according to the structure of a $\std$-simplicial molecule is not an abstract simplicial complex in general \cite{Gurau:2010iu}. 
  The most obvious reason is that identification of vertices under bonding may result in simplices given by multisets (\eg  when bonding $(\std-1)$-simplices on the boundary of the same $\std$-simplex).
  While the set of simplices $\simc$ in an abstract simplicial complex (\dref{simplicial-complex}) may be a multiset, the simplex sets $\s\in\simc$ themselves are not allowed to. 
  
  The deeper reason for this shortcoming, as explained in remark \ref{rem:issub-inverse}, is that such bondings effect loop edges (and higher dimensional equivalents). %effects a $(\std-1)$-cell incident to a single $\std$-cell, but not on the boundary of the complex, thus corresponding to a loop-edge in the dual complex. 
  These are violating property (P3) of polyhedral complexes.
  A further consequence is then that no chain-complex structure is possible.
  
  Therefore, the pseudo-$\std$-manifolds obtained from a $\std$-dimensional enhancement of $\std$-regular simplicial atoms are only generalized polyhedral complexes (in the sense of definition \ref{def:generalized-polyhedral}).
  Since their building blocks are simplices, one could also call them \emph{generalized simplicial complexes}.%
  \footnote{Note that this does not imply that they cannot be given a topological meaning \cite{Smerlak:2011ea} since they do correspond to the so-called ``$\Delta$-complexes" well known in algebraic topology \cite{Hatcher:2002ut}.}
\end{remark}

\begin{remark}[{Coloured simplicial molecules}]
\label{rem:coloured}
  A strategy to overcome the shortcomings of plain simplicial molecules $\sfrs_\drs$ as spacetime manifolds
  %A second approach, 
  that has gained a lot of traction in recent years is based on a $(\std+1)$-colouring of the patches in the underlying simplicial atom $\sfa_\drs$ \cite{Gurau:2011dw,Gurau:2010iu,Gurau:2012hl}.
%  upon so-called \emph{$\std$-coloured graphs} \cite{Bonzom:2012bg}. Of course, this means defining yet another set of boundary graphs, with yet more labels, their associated spin-foam atoms, bonding maps and so on. However, the definitions are like those given above, so \cjt{we} concentrate on their properties. 
  Assigning a colour $i=0,1,...,\std$ to every boundary vertex $\vb\in\Vb$ in $\sfa_\drs$ (and, thus, to every patch) induces a unique ${\std+1 \choose\p+1}$-colouring of the $\p$-simplices in the associated $\std$-simplex which $\sfa_\drs$ is the dual of (see again \fig{simplicial}, reading the labels as colour indices).
  Colour preserving bonding of these atoms then results in molecules which are dual to abstract simplicial complexes in the strict sense because vertices in a simplex have different colour and thus cannot be identified \cite{Gurau:2011dw}.
  
%  Consider the set of labelled loopless $\std$-regular boundary graphs $\bgst_\drl$. Look for the subset that are \emph{$\std$-colourable}, in the sense that one may assign to each edge another label drawn from the set $\{1,\dots, \std\}$, such that the $\std$ edges of each simplicial  $\std$-patch have distinct colour s. This subset is called $\bgst_{\std,coloured}$.  It emerges that the simplicial  $(\std+1)$-graphs lie in this subset \emph{and} they generate, when coloured and accompanied by bonding maps that conserve edge colour , the whole of $\bgst_{\std,coloured}$. 
%remarkably, this colour  information ensures that one can reconstruct an abstract simplicial pseudo-manifold \cite{Gurau:2010iu}.  While not all graphs in $\bgst_\drl$ are $\std$-colourable,  the $\std$-dimensional topologies encoded by such spin-foam molecules are much better behaved than those reconstructed using the first approach. 
\end{remark}

\begin{remark}[{$\std$-manifolds from polytopes}]
  For an atom $\sfa\in\sfas$ which is not simplicial it is neither obvious whether there exist a $\std$-polytope with the corresponding graph $\bg=\bisec^{-1}(\bs\sfa)$ as its dual boundary graph, nor whether this is unique. 
  Certainly, only loopless atoms $\sfa_\loopless \in \sfas_\loopless$ are candidates. Any loop in the boundary graph would violate the defining property (P3) for polytopes.
  
  Even for a subset of $\sfas_\loopless$ with a unique interpretation as polytopes, the set of molecules generated from these atoms is not equivalent to a set of polyhedral $\std$-complexes.
  Along the same general argument carried out in remark \ref{rem:simplicial} in the simplicial case, the local polytope interpretation provides the molecules with the structure of combinatorial pseudo-$\std$-manifolds; but various possible kinds of loops again violate (P3) such that the complexes are not polyhedral complexes in the strict sense of definition \ref{def:polyhedral-complex} but only generalized polyhedral complexes (\dref{generalized-polyhedral}).
  
  Unfortunately, there is no colouring similar to the simplicial case at hand in general  for improving this situation.
  Still, such a colouring can be used to enhance a subclass of $\std$-regular loopless molecules $\sfas_\drl$ to polyhedral pseudo-$\std$-manifolds \cite{Bonzom:2012bg}.
  The essential idea is to use the possibility to construct simplicial atoms from regular, loopless patches (\rem{atoms-from-patches}) to obtain a subdivision of a regular loopless atom.
  If this subdivision, possibly after a further finite number of subdivision moves, is itself a $(\std+1)$-colourable simplicial molecule (as described in remark \ref{rem:coloured}), the atom has an interpretation as a ($(\std+1)$-coloured) simplicial complex, and, removing the subdivision, as a coloured abstract polytope with simplicial boundary.
   This boundary, and thus the (dual) graph $\bg_\drl$, can even be shown to be $\std$-colourable \cite{Bonzom:2012bg,\ORT}.
   Finally, colour-preserving bondings of such atoms result in molecules equivalent to abstract polyhedral complexes in the strict sense (for the same reasons as in the plain simplicial case, remark \ref{rem:coloured}).
\end{remark}

Using the reduction mechanism from labelled regular molecules to arbitrary ones, one could in principle attempt to make a more ambitious statement. 
By showing the existence, for $\std$ odd (even), of a surjective map %$\pi_{\std,coloured}:\bgst_{\std,coloured}\lora
from labelled $\std$-regular $\std$-coloured boundary graphs to $\bgs$ (resp.\ $\bgs_{even}$), 
 one could conjecture the following:
\begin{conjecture}
$\std$-coloured graphs capture all of $\bgs$  ($\bgs_{even}$).  
\end{conjecture}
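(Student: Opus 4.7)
The plan is to strengthen the construction behind Proposition \ref{prop:surjections} so that the $\std$-regular loopless representative produced from any $\bg\in\bgs$ (respectively $\bgs_{even}$) admits, in addition, a proper $\std$-edge-colouring. The guiding combinatorial input is the classical fact that the complete graph $K_{n}$ has chromatic index $n-1$ if and only if $n$ is even. In particular, $K_{\std+1}$, which is the boundary graph $\bg_\rs$ of the simplicial atom $\sfa_\rs$, is $\std$-edge-colourable precisely when $\std$ is odd, the $\std$ colour classes being the $\std$ perfect matchings. This parity matches exactly the parity restriction of Proposition \ref{prop:surjections}, suggesting that the simplicial decomposition of Proposition \ref{prop:all-graphs} is the natural vehicle for the argument.

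Given $\bg \in \bgs$ with $\std$ odd, the first step is to invoke Proposition \ref{prop:all-graphs} to realise $\bg$ (up to the reductions of \rem{moleculered}) as the boundary graph of a non-branching simplicial molecule $\sfrt\in\sfrst_{\std,\snb}$.  Fix a canonical $\std$-edge-colouring of $K_{\std+1}$ once and for all, and distribute copies of it to the boundary $K_{\std+1}$ of each simplicial atom in $\sfrt$.  The boundary of the full molecule is then obtained by identifying the $\std$-patches of atoms pairwise through bonding maps, and each such $\std$-patch carries $\std$ edges of pairwise distinct colours.  To produce a globally proper $\std$-edge-colouring of the resulting graph it suffices to choose, for each bonding, the underlying bijection so that edges of equal colour are identified.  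One has two sources of freedom for this: the $(\std-1)!$ permutations available within each bonding map (\dref{bonding}), and the colour-preserving automorphism group of $(K_{\std+1},\chi)$, which acts transitively on the $\std$ colour classes via $S_{\std+1}$.

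The main obstacle is precisely consistency of these local choices across the whole molecule: propagating a colouring along a path in the ``atom graph'' (vertices = simplicial atoms, edges = bondings) is straightforward, but closing a cycle produces a monodromy element in $S_\std$ that must be forced to be trivial. I expect this to be the hard step, and I would address it by exploiting the redundancy of the representation. Given a bonding that fails colour-matching, one can locally alter the representative in its $\pi_\rl^{-1}$-class by inserting a virtual double edge or loop, which (by Proposition \ref{prop:all-graphs}) can itself be resolved into a short chain of simplicial atoms producing a prescribed permutation in $S_\std$ between adjacent atom colourings. Choosing these insertions along a spanning tree of the atom graph reduces the problem to killing the monodromy on a basis of independent cycles, which should succeed because the insertion operation generates all of $S_\std$ as the cycle runs. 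Phrased abstractly, one wants to show the triviality of a Čech-type $1$-cocycle valued in $S_\std$, using the existence of enough local refinements.

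For $\std$ even the same strategy applies after replacing the atomic building block, since $K_{\std+1}$ is then class $2$. Natural candidates whose boundary graph is $\std$-regular and $\std$-edge-colourable are $K_{\std+2}$ minus a perfect matching, or more generally any bipartite $\std$-regular graph (which is automatically class $1$ by König's theorem). The restriction of the image to $\bgs_{even}$ is forced, exactly as in Proposition \ref{prop:surjections}, by the fact that contraction and deletion preserve the parity of vertex valencies. The monodromy argument sketched above carries over with the obvious modifications, the colour-permutation group and atom symmetries changing but the structural part of the reasoning remaining identical.
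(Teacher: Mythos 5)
The paper does not prove this statement: it is explicitly left as a conjecture, accompanied only by the remark that a proof would require exhibiting a $\std$-colourable representative in every class $\pi_\drl^{-1}(\bg)\In\bgst_\drl$. Your argument therefore has to stand on its own, and its central step is circular. A proper $\std$-edge-colouring of the atom $K_{\std+1}$ built from the patch of a vertex $\vb_i$ restricts, at $\vb_i$, to an arbitrary bijection from the $\std$ edges of $\bgt$ incident there to the colour set; demanding that every bonding $\gm_{ij}^{(a)}$ be colour-preserving is then \emph{exactly} the demand that these bijections agree on each shared edge, i.e.\ that the representative $\bgt$ itself is properly $\std$-edge-colourable. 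The simplicial decomposition of Proposition \ref{prop:all-graphs} thus buys nothing, and the entire content of the proof is deferred to the ``monodromy-killing'' step, which you assert rather than establish. That step is also misframed: each edge imposes a single matching condition on the two vertex-bijections, not a transition permutation, so the obstruction to class-$1$-ness of a $\std$-regular multigraph is not a group-valued \v{C}ech $1$-cocycle whose triviality could be argued cycle by cycle (the Petersen graph, cubic and bridgeless yet not $3$-edge-colourable, is not detected by any such monodromy).

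More decisively, the strategy cannot work in the stated generality for $\std$ odd, because the conjecture as you interpret it fails for any $\bg$ containing a bridge --- and $\bgs$ explicitly contains such graphs (any graph with a $1$-valent vertex, for instance). If $e$ is a bridge of $\bg$ separating its vertices into $A\sqcup B$, then in \emph{every} representative $\bgt\in\pi_\drl^{-1}(\bg)$ the virtual edges lie inside the vertex clusters while the real edges join clusters according to $\E_\bg$, so the real edge corresponding to $e$ is the unique edge between the preimages of $A$ and $B$: every representative is bridged. But a $\std$-regular multigraph with a bridge admits no proper $\std$-edge-colouring: each colour class must be a perfect matching; if the bridge carries colour $1$ and $H$ is a component of $\bgt-e$, colour $1$ matches all of $H$ except one vertex (so $|V(H)|$ is odd) while colour $2$ matches all of $H$ (so $|V(H)|$ is even). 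Hence for bridged $\bg$ no element of $\pi_\drl^{-1}(\bg)$ is $\std$-colourable. Any viable argument must at least restrict to $2$-edge-connected boundary graphs (for $\std$ even this obstruction is vacuous, since all degrees in $\bgs_{even}$ are even and such graphs are automatically bridgeless), and even then a genuine argument is needed that the freedom in the expansion moves always avoids the remaining class-$2$ obstructions.
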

In order to prove this statement it is necessary to show that in every class $\pi_\drl^{-1}(\bg) \In  \bgst_\drl$, there is a graph that is $\std$-colourable.

The benefit would be that in this way one could, for arbitrary molecules $\sfr$, specify the subclass whose molecules allow for a subdivision into the colourable subclass of $\sfrst_{\copies,\snb}$.
Thus, all these molecules would have a well-behaved topological structure as polyhedral pseudo-$\std$-manifolds. 

\

The purpose of this \sec{molecules} has been a constructive definition of the 2-complex structure underlying SF and GFT amplitudes unveiling their molecular structure.
A focus has been on the way arbitrary such molecules can be obtained from simpler subclasses in terms of a distinction of cells in real and virtual ones. This allows in particular to construct any molecules from $\copies$-regular atoms, and to obtain any closed graph as the boundary graph of a molecule constructed from simplicial atoms.
Finally, I have discussed the conditions under which these 2-complexes can be extended to complexes of larger dimension, allowing for a spacetime interpretation.

%===========================================================================

\section{Calculus on combinatorial complexes}\label{sec:calculus}

\renewcommand{\clp}{\cl}%_\p}

After the detailed investigation of the combinatorial spacetime structures relevant for quantum gravity, the topic of this section which is based on \cite{\COTa,Thurigen:2015ba} is the definition of fields (that is, most generally, exterior forms) thereon.
Furthermore, I introduce a discrete calculus in this combinatorial setting, generalizing \cite{Desbrun:2005ug}.
This is not only relevant for a proper derivation of discrete quantum gravity models from established continuum formulations of GR. 
The main aim in the context of this thesis is rather to provide an adequate formalism entailing a definition of a Laplacian on a discrete geometry to lay the ground for the definition of effective dimension observables on quantum geometries which are based on the Laplacian's spectrum (\sec{dimension-definition}).

I start in \sec{forms} by defining $\p$-form fields on a complex as cochains and discuss their geometric meaning. 
For a geometric realization of a complex as a cellular decomposition, geometry is induced from the ambient space.
Equally well, relevant geometric data, \ie volumes of cells, can simply be assigned to a combinatorial complex which I call a \emph{geometric interpretation}.
A geometric interpretation is the prerequisite for a definition of Hodge duality in terms of the dual complex.

In \sec{inner-product}, I use Hodge duality to define an inner product on $\p$-forms analogously to (continuum) differential geometry, turning the space of $\p$-forms into an $L^2$-space. 
This gives rise to a convenient bra-ket formalism for $\p$-form fields.

Section \ref{sec:laplacian} then introduces the exterior differential, induced from taking Stokes theorem as a definition. 
With both a derivative and Hodge duality at hand, the definition of the Laplacian is then straightforward.
Finally, I discuss at length the properties of the Laplacian acting on scalar fields on the dual complex, which is the case relevant for the effective dimension observables.
The geometric interpretation of the dual plays a particular role here and I argue that a barycentric interpretation is most appropriate from a physical, field-theory point of view.

\newpage
%==============================================================

\subsection{Exterior forms and Hodge duality on complexes}
\label{sec:forms}

One can identify a natural concept of \emph{discrete forms} in terms of the dual of chains on complexes in the linear-algebraic sense \cite{Desbrun:2005ug,Albeverio:1990ii, Adams:1996ul, Teixeira:2013ee, Grady:2010wb}.
In contrast to the cited literature where typically cellular decompositions of some ambient space are the starting point,
here the definitions are set in the purely combinatorial context. 
This involves in particular that, where necessary, an orientation is not induced from ambient space but defined in the sense of definition \ref{def:orientation}.
%%This does not pose any difficulties for the definitions except for a more careful treatment of the dual complex (the 4th duality) demanding the pseudo-manifold structure. Moreover geometry is not induced but has to be defined on its on in terms of the volumes.

\begin{defin}[{$\p$-forms}]
\label{def:p-forms}
On a finite combinatorial complex $\cp$, for $\p\ge0$, the space of $\C$- (or $\R$-)valued \emph{\p-forms} $\Omega^\p(\cp)$ is defined as the space of $\p$-cochains $C^\p(\cp):=C^\p(\cp,\C)$ 
(or $C^\p(\cp,\R)$)
%The space of $\p$-cochains 
which is the dual space $C^\p(\cp,\C)\equiv \textrm{Hom}(C_\p(\cp),\C)$ to the space of $\p$-chains $C_\p(\cp)$ (definition \ref{def:chain-complex}).
%:=C_\p(\cp,\C)=\{\sum_{\cl\in \cpp\p}\chain_\cl \cl \mid \chain_\cl \in \C \}$ \cjt{generated as a free Abelian group by the cells of the complex.}

Using a bra-ket notation, I write cells $\clp\in\cpp\p$ understood as basis elements of $C_\p(\cp)$  as $|\clp\ket$.
Accordingly, $\bra \clp |$ denotes the dual basis elements in $C^\p(\cp)$, induced by the pairing
\[\label{braket1}
\bra \clp|\clp' \ket = \Nc\delta_{\clp,\clp'}
\]
%where $\Nc$ is some normalization factor to be fixed later.
%
%Formal linear combinations of $\p$-cells generate the finite vector space of \emph{$\p$-chains} $C_{\p}(\cp,\C)$ (which \cjt{we} take over $\C$)
%Using a ket-vector notation, a generic chain $\chain\in C_\p(\cp,\C)$ is expandedx in this basis 
%\[
%|\chain\ket
%=\sum{\cl\in \cpp\p}\chain_\cl | \cl\ket
%%=\us{\cl\in \cpp\p}{\sum}\left\bra \cl | \chain\right\ket | \cl\ket\,.
%\]
%Linear forms on chains $C^\p(\cp,\C):=\textrm{Hom}(C_\p(\cp,\C),\C)$ are called \emph{$\p$-cochains}.
%As they can be expanded in the dual basis $\{\bra\clp|\}_{\clp\in\cpp\p}$,
%defined by the pairing $\left\bra \clp|\clp'\right\ket =\delta_{\clp,\clp'}$,
%the cochain dual to $\chain$ can be written as the bra 
%\[
%%\bra\cochain | \equiv 
%\bra \chain |
%= \sum_{\clp\in \cpp\p} \chain_{\clp}^* \bra\clp |
%= \sum_{\clp\in \cpp\p} \bra \chain | \clp \ket \bra\clp|\,.
%\]
Thus, a $\p$-form field $\phi \in \Omega^\p(\cp)$ has an expansion in the cochain basis as
\[
\bra \phi |
%= \sum_{\clp\in \cpp\p} \phi_{\clp} \bra\clp |
= \sum_{\clp\in \cpp\p} %\Nc^{-1}
 \bra \phi | \clp \ket \bra\clp|\,.
\]
\end{defin}

\begin{remark}[{Interpretation and comparison to the continuum}]
The meaning of defining fields on a complex as cochains becomes clear when comparing with the continuum \cite{Desbrun:2005ug,Grady:2010wb}.
Take, for example, a finite triangulation $\T$ of a Riemannian manifold $(\mf,g)$ which is a geometric realization $\T=|\simc|$ of an abstract simplicial complex $\simc$ (\dref{simplicial-complex}).
Then, $\p$-cochains can be naturally interpreted as discretized $\p$-forms $\phi\in\Omega^{\p}(\cp)\cong C^{\p}(\cp)$ by smearing the continuous form $\phi_{\rm cont}\in\Omega^{\p}(\mf)$ over $\p$-surfaces $\surface\In|\simc|\In \mf$ represented by chains $|\surface\ket=\sum_i %V_{\clp^{i}}
|\clp^{i}\ket\in C_{\p}(\simc)$ in the triangulation: 
\[\label{surface-integral}
  \phi(\surface):= \bra \phi|\surface \ket 
  = \sum_i \bra \phi|\clp^{i} \ket
  = \sum_i V_{\clp^{i}} \phi_{\clp^i}
  = \sum_i \int_{|\clp^{i}|}\phi_{\rm cont}
  = \int_{\surface}\phi_{\rm cont}\,,
\]
where $V_{\clp}$ denotes the $\p$-volume of the realization $|\clp|$ of $\clp$ in $|\simc|$ and $\phi_{\clp}$ denotes the (averaged) field value of $\phi_{\rm cont}$ over $\clp$.
In particular, for %$the surface $\surface_{\clp}=|c|$ of 
a single $\p$-simplex $\clp$ represented by $|\clp \ket $ % = V_{\clp} |\clp\ket$ 
one has
\begin{equation}
  \phi({\clp}) = \bra \phi | \clp \ket 
  = V_{\clp} \phi_{\clp}
  = \int_{|\clp|}\phi_{\rm cont}.
\label{eq:primal-smearing}
\end{equation}
Therefore, the %dimensionless 
coefficients $\bra\phi|\clp\ket $  %= \Nc\phi_\cl$
carry the interpretation of integrated field values, while the coefficients $\phi_{\clp}$ are the physical discrete field components.% (carrying dimension).%
\footnote{
  For this reason, in \cite{\COTa} I had chosen the convention $\phi_{\clp} = \bra \phi | \clp \ket = \phi(\clp)/V_{\clp}$ instead of \eqref{surface-integral} where the volume factors are then part of the co/chain basis. 
  There is also a third convention \cite{Itzykson:1983vc}, used in random lattice field theory \cite{Christ:1982kr,Christ:1982hv,Christ:1982bn}, where only the Hodge dual fields are densities carrying the whole $\m$-volume.
 } 
Obviously, this requires an embedding of the abstract simplicial complex into the continuum manifold in terms of a geometric realization.

However, note that, even though motivated by discretization, \dref{p-forms} works perfectly well for finite combinatorial complexes $\cp$. 
%\cjt{we} just take 
%\begin{equation}
%\phi_{\clp}:=\bra\phi|\clp\ket\label{eq:DiscreteForm}
%\end{equation}
%as the definition of position coefficients of a $\p$-form $\bra\phi|$. 
A geometric interpretation in terms of $\p$-volumes $V_{\clp}$ as induced by the ambient space $\mf$ in the case of triangulations is not needed at this stage, as long as one is only interested in the forms $\bra\phi|$ themselves and not in quantities $\phi_{\clp}$. % with an interpretation 
\end{remark}

%So far, \cjt{we} have presented a formalism for fields on abstract discrete
%spaces without using any geometric information neither associated directly with the simplicial complex nor derived from an original continuum pseudo-manifold being discretized. However, a geometric interpretation for the elements of the simplicial complex is needed to define the inner product.
%
%In the first place, \cjt{we} understand an assignment of geometry to a finite
%simplicial pseudo-manifold $\cp$ as an assignment of $\p$-volumes $V_{\clp}$,
%dual $(\m-\p)$-volumes $V_{\csp{\m-\p}}$ and support $\m$-volumes
%$\Vc$ to all the simplices $\clp$. If $\cp$ has a geometric realization $|K|$ in terms of a (topological) simplicial complex over a metric space, these volumes can be induced from this realization. 
%%%%
%In the fundamentally discrete setting of approaches to quantum gravity, on the other hand, these volumes have to be defined as functions of the geometric variables in each approach. To this end, in section 5 \cjt{we} will understand the simplices as locally flat and assign the volumes according to the functions of geometric variables one obtains in the case of a geometric realization on flat space. Therefore \cjt{we} now discuss this case in detail.
%%%%

Nevertheless, even a combinatorial complex without any realization as a decomposition of some smooth space can be given a geometric meaning:
\begin{defin}[{geometric interpretation}]
For $\left(\cp,\dm,\ge \right)$ a combinatorial complex, a \emph{geometric interpretation} is an assignment of a $\dm(c)$-volume $V_\cl\in\C$ to every cell $\cl\in\cp$
(where all vertices $\cl\in\cpp0$ are assigned trivial volumes $V_\cl=1$).
It is called a \emph{proper} geometric interpretation if $V_\cl\in\R^+ \setminus\{0\}$ %is real and strictly positive
for every $\cl\in\cp$.
\end{defin}

\begin{remark}
Note that for a geometric realization $|\cp|$ the set of volumes $\{V_\cl\}_{\cl\in\cp}$ is usually not independent. 
For example, if a complex $\simc$ has a realization as a piecewise linear simplicial complex $|\simc|$, there are various sets of variables which determine all volumes, such as the set of edge lengths, or in $\m=4$ dimensions the set of areas and dihedral angles (examples relevant in quantum gravity are discussed in appendix \ref{sec:classical-expressions}).
Thus, a combinatorial complex together with such variables has an induced geometric interpretation.
\end{remark}

\

To define an inner product of $\p$-forms similar to the continuum case, 
%\cjt{In order to be able to define the scalar product for $\p$-forms also for these discrete forms, 
a discrete version of Hodge duality is necessary.
Most approaches to a discrete calculus \cite{Albeverio:1990ii, Adams:1996ul, Sen:2000cr, Teixeira:2013ee}
use the Whitney embedding map to define the Hodge dual. 
But this is not available for purely combinatorial complexes since there is no ambient space 
from which such structure can be induced.
Alternatively, in \cite{Desbrun:2005ug} a definition of the Hodge dual is given solely in terms of a dual complex (but still in a setting of embedded complexes).
Based on the general notion of a dual complex for any combinatorial complex (\dref{dual-complex}) this definition can be generalized:
%\cjt{we} can take advantage of such a fourth duality also in \cjt{our} case of abstract simplicial complex under the further requirement of pseudo-manifold properties.
%

%\begin{enumerate}
%\item[{4)}] A simplicial pseudo $\m$-manifold $\cp$ has a combinatorial
%dual complex $\cps$ consisting of $(\m-\p)$-cells $\star\clp$ (which \cjt{we} also denote as $\csp{\m-\p}$) dual to the primal $\p$-simplices $\clp$, with orientation induced from the orientation of $\cp$ and cellular structure induced by the adjacency relations of $\cp$. The latter means that $\star\sigma\In\star\sigma'$ if, and only if, $\sigma'\In\sigma$.%
%\footnote{The fact that every $\p$-simplex contains $C_{\q+1}^{\p+1}={\p+1\choose \q+1}$
%$q$-simplices translates into the condition for the dual complex
%to have $N_{\p | \q}={\m+1-\q \choose \p-\q}={\m+1-\q \choose \m+1-\p}$ $\p$-cells
%with a given $\q$-cell as a face. This property can be used as an
%iterative check for constructing such dual complexes from regular
%graphs \cite{Bombelli:2009hg}.
%} 
%Then $\cps$ can be given as a multiset over its vertex set, too.

\begin{defin}[{Hodge duality}]\label{def:hodge-duality}
Let $\cp$ be a finite $\m$-dimensional %pure
complex with %maximal cell dimension $\m=\max_{\cl\in\cp}\{\dm(\cl)\}$ and 
geo\-metric interpretations $\{V_{\clp}\}$ and $\{V_{\star\clp}\}$ for the primal complex $\cp$ as well as the dual complex $\cps=\star\cp$.
For each $\p=0,1,\dots,\m$, the \emph{Hodge star operator} $\ast$ is an isomorphism 
$\Omega^\p(\cp)\cong C^\p(\cp) \lora \Omega ^{\m-\p}(\cps) \cong C^{\m-\p}(\cps)$ between $\p$-forms on $\cp$ to $(\m-\p)$-forms on $\cps$ induced by the $\star$-duality, defined by its action on field components as the complex conjugate of the $\star$-dual form,
\[
\phi_{\clp} \os\ast\mapsto  (*\phi)_{\star\clp} := \phi^*_{\clp} .
\]
\end{defin}

\begin{remark}
%With hindsight, i
In the bra-ket notation,  chains on the dual complex are notated as bras $\bra \star\cl|$ while their cochains are notated as kets $| \star\cl \ket$.
Thus, the Hodge-dual forms are also kets,
\ba
\label{bra-hodge-ket}
& \bra \phi | & = \sum_{\clp\in\cpp\p} V_{\clp} \phi_{\clp} \bra \clp | 
 = \sum_{\clp\in\cpp\p} \bra\phi | \clp \ket  \bra \clp | 
 \\ % \quad\os{\ast}{\lora}\quad 
&\downarrow \ast & \\
|\phi\ket := & |\ast\phi\ket & = \sum_{\clp\in\cpp\p} V_{\star\clp} \phi_{\clp}^* |\star\clp\ket
 %\equiv \sum_{\clp\in\cpp\p} \bra\star\clp |\phi \ket |\star\clp\ket\,.
 \equiv \sum_{\cs\in\cpsp{\m-\p}} \bra \cs |\phi \ket |\cs\ket\,.
\ea
This implies for the dimensionless coefficients
\[
\frac1{V_{\clp}}\bra \phi | \clp \ket \os\ast\lora \frac1{V_{\star\clp}} \bra \star\clp | \phi \ket^*
\]
which is the reason for the necessity of a geometric interpretation in the definition.
The appearance of volumes is the equivalent to the metric factor $\sqrt g$ necessary in the continuum analogue of the Hodge dual on a Riemannian manifold.
%
%This duality between a primal simplicial and a dual cell complex
%induces a new type of dual chains: For each $\p$-chain $\chain\in C_\p(\cp)$ there is a $(\m-\p)$-chain
%$\star\chain \in C_{\m-\p}(\cps)$ on the dual complex. 
%This is possible because each primal chain basis element (simplex) has a unique dual basis element. Using Dirac notation  also for this duality, this reads 
%\begin{equation}
%|\chain\ket=\sum_{\clp\in\cpp\p} \chain_{\clp} |\clp\ket\quad\overset{\star}{\lolra}\quad
%\bra \star\chain |=\sum_{\clp\in\cpp\p} \chain_{\clp}^* \bra\star\clp|\,.
%\label{eq:Duality4dualchains}
%\end{equation}
%

Thus, one can take Hodge duality as two different perspectives to look
at the same discrete field $\phi$: either as a $\p$-form $\bra\phi|$ on the primal
complex or an $(\m-\p)$-form $|\phi\ket$ on the dual complex.

Note that due to the relative orientations of the complexes the duality holds only up to a sign.
The dual of a dual cell as a basis element of $C_\p(\cp)$ has a sign 
$
| \star\star \clp \ket = (-1)^{\p(\m-\p)} | \clp \ket
$
and thus
\cite{Desbrun:2005ug,%[p. 8]; With a strong focus on the orientation properties \cite{
Mattiussi:1997jp,Teixeira:2013ee,Teixeira:1999hv} % the dual complex is even called the `twisted' complex):
\[\label{eq:sign}
| \ast\ast \phi \ket = (-1)^{\p(\m-\p)} | \phi \ket .
\]

\end{remark}

%Analogously the duality also holds between primal and dual cochains. 
%%\end{enumerate}
%Since the Hodge dual $(\m-\p)$-form cannot live on $\p$-simplices
%but only on $(\m-\p)$-cells, \cjt{we} can regard the discrete Hodge dual of a
%$\p$-form $\phi\in\Omega^\p(\cp)$ as its dual in the sense of this
%fourth duality: 
%\[
%\ast\phi:=\star\phi\in\ast\Omega^\p(\cp)\cong\Omega^{\m-\p}(\cps)\cong C^{\m-\p}(\cps)\,.
%\]
%From the above dualities, at the level of coefficients the defining condition for the
%Hodge duality is the equality of the averaged field values:\footnote{This is analogous to  what is done in \cite{Desbrun:2005ug}. The details of the construction differ, however, since in that work the convention \eqref{otco} is used which includes $\p$-volumes.}
%\begin{equation}
%(\ast\phi)_{\star\clp}:=\phi_{\clp}^{*}\,.\label{eq:HodgeDualDiscrete}
%\end{equation}
%With the bra-ket convention
%\[
%\left\bra \star\clp|\phi\right\ket =\ast\phi_{\star\clp}
%\]
%this can be equivalently expressed as
%\[
%\left\bra \star\clp|\phi\right\ket=\left\bra\phi|\clp\right\ket^{*}.
%\]

\begin{remark}[{Geometric interpretation of the dual}]
\label{rem:geometric-dual}
In the case of a realization of the complex $\cp$ as a cellular decomposition $|\cp|$ of a Riemannian manifold $(\mf,g)$, 
one might still view a dual field $\ast\phi$ as smearing the Hodge-dual $\ast\phi_\textrm{cont}$ of a continuous field $\phi_\textrm{cont}$, now smeared over dual cells $\star\clp\in\cps$:
\begin{equation}
  \label{dual-integration}
  \ast\phi(\surface_{\star\clp}) = V_{\star\clp}(\ast\phi)_{\star\clp}
  = \int_{|\star\clp |}\ast\phi_{\rm cont}
\end{equation}

%\subsection{Geometric interpretation of abstract complexes}\label{sec:geometric}

where $V_{\star\clp}$ are the volumes of dual cells.
Since a cellular decomposition $|\cp|$ of $(\mf,g)$ does not yet uniquely determine a dual cellular decomposition $|\cps|$, there are various choices for the definition of the dual cell volumes $V_{\star\clp}$.
%While the primary volumes can be taken directly
%from a geometric realization, dual and support volumes depend on how the
%dual complex is realized, \ie how it is concretely constructed from (or embedded into) the primal complex.
%

In the case of a simplicial complex $\cp=\simc$, the most common choices in the literature are circumcentric  \cite{Desbrun:2005ug} and barycentric \cite{Albeverio:1990ii, Adams:1996ul, Teixeira:2013ee} constructions of $|\cps|$ (\fig{circum-bary}).
\begin{figure}
  \begin{centering}
%a)
%\includegraphics[scale=0.3]{figurespdf/2dVoro}
%b)
%\includegraphics[scale=0.3]{figurespdf/2dBary} 
    \tikzsetnextfilename{circum-bary}
    \tikzstyle{eb} =	[draw=jgreen,line width=.6pt]

\begin{tikzpicture}
\begin{scope}[scale=.4]
\node  			at (-6,6)		{a)};
\node [c]		(a)	at (0,0)		{};
\node [c]		(b)	at (8,0)		{};
\node [c]		(c)	at (3,6)		{};
\node [c]		(d)	at (-3,3)		{};
\node [c]		(e)	at (-6,-2)		{};
\node [c]		(f)	at (2,-5)		{};
\node [c]		(g)	at (9,-5)		{};
\foreach \i/\j in {a/b,a/c,a/d,a/e,a/f,b/c,c/d,d/e,e/f,f/g,g/b,f/b}{
\draw [eb] 	(\i) 	-- (\j);
}
\node [v]	(abc)		at (4,1.75)		{};
\node [v]	(acd)		at (0.5,3.5)	{};
\node [v]	(ade)		at (-3.25,-.25)	{};
\node [v]	(aef)		at (-2.1,-3.7)	{};
\node [v]	(abf)		at (4,-1.3)	{};
\node [v]	(bfg)		at (5.5,-3.1)	{};
\node [c]	(ab)		at (4,0)		{};
\node [c]	(ac)		at (1.5,3)		{};
\node [c]	(ad)		at (-1.5,1.5)	{};
\node [c]	(ae)		at (-3,-1)		{};
\node [c]	(af)		at (1,-2.5)		{};
\node [c]	(bc)		at (5.5,3)		{};
\node [c]	(cd)		at (0,4.5)		{};
\node [c]	(de)		at (-4.5,.5)		{};
\node [c]	(ef)		at (-2,-3.5)		{};
\node [c]	(fg)		at (5.5,-5)		{};
\node [c]	(bf)		at (5,-2.5)		{};
\node [c]	(bg)		at (8.5,-2.5)	{};
\foreach \i/\j/\k in {a/b/c,a/c/d,a/d/e,a/e/f,a/b/f,b/f/g}{
\path	[f] 	(\i)	-- (\j) -- (\k) -- cycle;
%\draw [eh] 	(\i) 	-- (\i\j\k);
%\draw [eh] 	(\j)	-- (\i\j\k);
%\draw [eh] 	(\k) 	-- (\i\j\k);
\draw [e] 	(\i\j) 	-- (\i\j\k);
\draw [e] 	(\j\k)	-- (\i\j\k);
\draw [e] 	(\i\k) 	-- (\i\j\k);
}
\end{scope}

\begin{scope}[xshift=8cm,scale=.4]
\node  			at (-6,6)		{b)};
\node [c]		(a)	at (0,0)		{};
\node [c]		(b)	at (8,0)		{};
\node [c]		(c)	at (3,6)		{};
\node [c]		(d)	at (-3,3)		{};
\node [c]		(e)	at (-6,-2)		{};
\node [c]		(f)	at (2,-5)		{};
\node [c]		(g)	at (9,-5)		{};
\foreach \i/\j in {a/b,a/c,a/d,a/e,a/f,b/c,c/d,d/e,e/f,f/g,g/b,f/b}{
%\draw [eb] 	(\i) node [vh]{}	-- (\j) node [vh]{};
\draw [eb] 	(\i) 	-- (\j);
}
\node [v]	(abc)		at (3.7,2)		{};
\node [v]	(acd)		at (0,3)		{};
\node [v]	(ade)		at (-3,.33)		{};
\node [v]	(aef)		at (-1.33,-2.33)	{};
\node [v]	(abf)		at (3.33,-1.66)	{};
\node [v]	(bfg)		at (6.33,-3.33)	{};
\node [c]	(ab)		at (4,0)		{};
\node [c]	(ac)		at (1.5,3)		{};
\node [c]	(ad)		at (-1.5,1.5)	{};
\node [c]	(ae)		at (-3,-1)		{};
\node [c]	(af)		at (1,-2.5)		{};
\node [c]	(bc)		at (5.5,3)		{};
\node [c]	(cd)		at (0,4.5)		{};
\node [c]	(de)		at (-4.5,.5)		{};
\node [c]	(ef)		at (-2,-3.5)		{};
\node [c]	(fg)		at (5.5,-5)		{};
\node [c]	(bf)		at (5,-2.5)		{};
\node [c]	(bg)		at (8.5,-2.5)	{};
\foreach \i/\j/\k in {a/b/c,a/c/d,a/d/e,a/e/f,a/b/f,b/f/g}{
\path	[f] 	(\i)	-- (\j) -- (\k) -- cycle;
%\draw [eh] 	(\i) 	-- (\i\j\k);
%\draw [eh] 	(\j)	-- (\i\j\k);
%\draw [eh] 	(\k) 	-- (\i\j\k);
\draw [e] 	(\i\j) 	-- (\i\j\k);
\draw [e] 	(\j\k)	-- (\i\j\k);
\draw [e] 	(\i\k) 	-- (\i\j\k);
}
\end{scope}
\end{tikzpicture}

\tikzstyle{eb} =	[draw=jgreen,line width=1.6pt]	
\par\end{centering}
\caption{Circumcentric (a) and barycentric (b) geometric realization of the dual to the same simplicial 2-complex (the one of \fig{subdivision}, with the same colouring).%; for the purpose of illustration dual edges are dashed  and one dual 2-cell is highlighted.
}
\label{fig:circum-bary}
\end{figure}
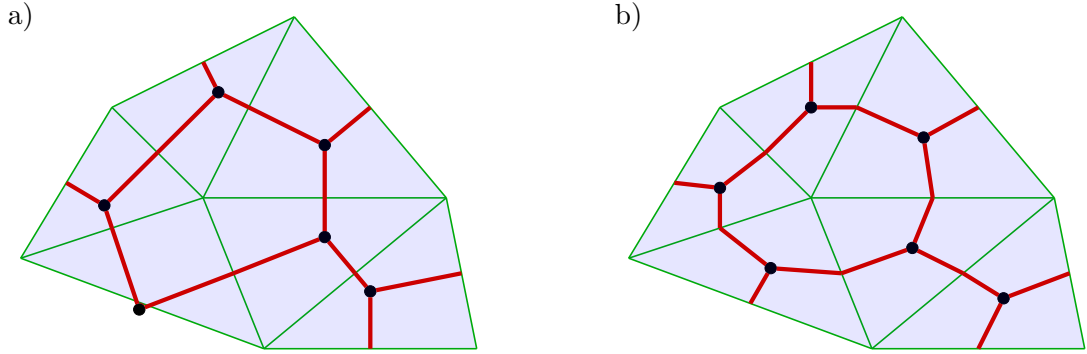
The \emph{barycentric} dual is defined by a realization of the vertices of the subdivision $\ssub\cp$ as barycenters of the related simplex.
Then $|\cps|$ follows from geodesically connecting these according to the procedure laid out in remark \ref{rem:subdivision}.
%a barycentric subdivision of all simplices, assumed to be flat in their interior. The barycenters of $\m$-simplices define
%the dual points, and metrically connecting them iteratively to the
%barycenters of lower simplices defines the realization of the higher
%cells.
For constructing the \emph{circumcentric} dual one chooses the circumcenters of the $\m$-simplices in $|\simc|$ as dual vertices and builds up higher dimensional cells %(\ie with $p\geq 1$)
geodesically connecting them according to the combinatorics of $\cps$. %induced from the primal complex; sub-cells of the dual complex are then automatically identified as well.
%In the case of a geometric realization in terms of 
%\cjt{In the circumcentric case, do circumcenter of lower dimensional cells lie on dual edges etc.?}
\end{remark}

\begin{remark}[{Circumcentric dual and Voronoi decomposition}]
If a geo\-metric realization $|\simc|$ is a Delaunay triangulation, the circumcentric dual complex is a Voronoi decomposition. 
A \emph{Delaunay triangulation} is obtained by constructing $\m$-sim\-plices  from a set of points in a metric space such that no point is in the interior of the circumsphere of any $\m$-simplex. From the same set of points, an $\m$-cell of a \emph{Voronoi decomposition} associated with some point $x$ is constructed as the set of points closer to $x$ than to any other in the set.

For this reason the circumcentric dual is often also called Voronoi dual. But this is meaningful only for Delaunay triangulations. 
For an arbitrary triangulation the circumcentric dual complex and the Voronoi decomposition with respect to the vertex set of the triangulation are different. 
In fact, the Voronoi decomposition does not have the structure of a dual complex for triangulations which are not Delaunay. 
This is particularly important in a fundamentally discrete setting (like in GFT), where the simplicial pseudo-manifold is constructed as a spin-foam molecule (\ie a bonding of $\m$-simplices) and the geometry of each simplex is defined independently of its neighbours.   
The difference is further detailed in the discussion of the Laplacian in \sec{laplacian} and in \fig{circum} below.
\end{remark} 

%===========================================================================

\subsection{Inner product and bra-ket formulation}\label{sec:inner-product}

With Hodge duality on complexes defined, one can now equip the space $\Omega^\p(\cp)$ of discrete $\p$-forms with an $L^2$-space structure as common in the continuum.
On a Riemannian $\m$-manifold $(\mf,g)$, 
%there is Hodge duality which maps $\p$-forms $\phi\in\Omega^\p(M)$ to
%$\left(\m-\p\right)$-forms $\ast\phi\in\Omega^{\m-\p}(M)$ \cite{Nakahara:2003vq},
%\[
%\phi=\phi_{i_{1}\dots i_\p}\d x^{i_{1}}\wedge\dots \wedge\d x^{i_\p}\lolra\ast\phi=(\ast\phi)_{i_{\p+1}\dots i_{d}}\d x^{i_{p+1}}\wedge\dots \wedge\d x^{i_{d}}\,,
%\]
%with coefficients 
%\[
%\phi_{i_{1}\dots i_\p}\quad\lolra\quad(\ast\phi)_{i_{p+1}\dots i_{d}}=\frac{\sqrt{g}}{\left(\m-\p\right)!}\epsilon_{i_{1}\dots i_{d}}g^{i_{1}j_{1}}\dots g^{i_\p j_\p}\phi_{i_{1}\dots i_\p}\,.
%\]
%In general it is a duality only up to a sign, $\ast\ast\phi=(-1)^{p(\m-\p)}\phi$. (For Lorentzian manifolds there is an extra minus sign; this fact would be an important guiding line for extending the discrete formalism consistently to Lorentzian geometries.) 
Hodge duality induces an inner product of $\p$-forms $\phi,\psi\in\Omega^\p(\mf)$ %is again an integration over position manifold 
by pairing $\phi$ with the dual form $\ast\psi$, 
\begin{equation}
\left( \phi,\psi\right) =\int_\mf\phi\wedge\ast\psi
= \int_\mf \phi_{i_{1}\dots i_\p} (\ast\psi)_{i_{p+1}\dots i_{d}}\sqrt{g}\,\d x^{i_{1}}\wedge\dots \d x^{i_{\m}}\,.\label{eq:ScalarProductForms}
\end{equation}
This defines an $L^2$-space of forms $L^2\Omega^\p(\mf)$ \cite{Rosenberg:1997to}. 
%The crucial point to notice is that this natural inner product, compatible with the tensorial ($\p$-form) nature of the fields, involves an averaging (smearing) over the base manifold. Because of this tensorial structure, a simple-minded notion of a position basis is not viable, as any perfectly localized field would not be well-defined an element of the above space. Only smeared fields are. \cjt{we} will see how the discrete setting provides a natural notion of smearing, which in turn allows to define an analogue of a position basis even for fields.
%\cjt{With a slight modification to include the position measure properly,} 
An extension of the bra-ket $(\cdot | \cdot)$ between chains and cochains  %, Eq. \eqref{braket1}, 
will serve the same purpose in the discrete setting of combinatorial complexes.

%With a well-defined meaning given to bras and kets of discrete fields,
%only a slight modification to the forth duality (\ref{eq:Duality4dualchains})
%is needed in order to have the geometric $L^{2}$ inner product
%on the simplicial complex $\cp$ analogous to the continuum case (again analogous to \cite{Desbrun:2005ug} but different in details of convention): 

\begin{defin}[{inner product on $\Omega^\p(\cp)$}]
Let $\cp$ be a finite pure $\m$-dimensional complex. %with maximal cell dimension $\m=\max_{\cl\in\cp}\{\dm(\cl)\}$, together with an assignment of $\m$-volumes $\Vc\in\C$ to all $\clp \in\cpp\p$ for some $0\le\p\le\m$.
An inner product $\Omega^\p(\cp) \times \Omega^\p(\cp) \lora \C$ is then defined by
\[
\bra\phi,\psi\ket:= 
 \sum_{\clp\in\cpp\p} \Vc \bra\phi|\clp\ket \bra\star\clp|\psi\ket
%= \sum_{\clp\in\cpp\p} \Vc \Nc \phi_{\clp} \Ncs \psi_{\star\clp}^*
\]
%where \cjt{we} took a position-space measure into account in terms of the $\m$-volumes
%$\Vc$ associated with the pairs of primal and dual $\p$-simplices. 
%
%If the $\m$-volumes are proper volumes $\Vc\in\R^+\setminus\{0\}$, this inner product is positive definite.
%For the inner product to be well-defined, the space of $\p$-form fields does not have to be constrained further. 
With this inner product, the field space $\Omega^\p(\cp)\cong\Omega^{\m-\p}(\cps)$ is the $L^2$ space of $\p$-forms since its dimension %of the space of $\p$-form fields is
is the number of $\p$-cells in the finite complex $\cp$, 
\[\label{dimo}
\dim\Omega^\p(\cp)=\dim\Omega^{\m-\p}(\cps)={\rm Card}(\cpp\p)<\infty\,.
\]
\end{defin}

\begin{remark}[{bra-ket formalism}]
One can furthermore define a pairing between $\p$-chains on the finite complex $\cp$ and $(\m-\p)$-chains on its dual $\cps$ for the basis elements as %using a normalization in terms of the volumes $\Vc$ as
\[\label{braket2}
\bra\star\clp|\clp'\ket:=\delta_{\clp,\clp'}\,. %\frac{1}{\Vc}
\]
Because the space of chains is of finite dimension, the completeness relation
\[\label{completeness}
\sum_{\clp\in\cpp\p} \Vc|\clp\ket\bra\star\clp|=\id
\]
follows directly. 
Together with the usual pairing of chains and cochains \eqref{braket1} such a resolution of the identity yields the inner product as a pairing of primal and dual cochains
\[
\bra\phi | \psi\ket 
\os{\eqref{braket2}} =  \bra\phi|\sum_{\clp\in\cpp\p} \Vc |\clp\ket \bra\star\clp|\psi\ket
= \sum_{\clp\in\cpp\p} \Vc \phi_{\clp} \psi_{\star\clp}^*.
\]
%While the $\p$-volumes in (\ref{eq:primal-smearing}) and (\ref{eq:HodgeDualDiscrete}) are not needed to define the field space, the position measure $\Vc$ is crucial and it is at this stage where a geometric interpretation is needed.

%===========================================================================

%\subsection{The bra-ket formalism}

To define a formalism with unique types of bras and kets, one can go one step further (beyond \cite{Desbrun:2005ug}) and notationally identify primal chains with dual cochains and dual chains with primal cochains, \ie for all $\cl\in\cp$ 
\begin{equation}
|\clp\ket\equiv|\star\clp\ket\,,\qquad \bra\clp|\equiv\bra\star\clp|\,.
\end{equation}
%Fixing the normalization in the pairing of chains and cochains, Eq. \eqref{braket1}, to $\Nc^{-1} = \Vc$ 
Then one can write orthonomality and completeness relations
\[
\bra \clp|\clp' \ket = \delta_{\clp,\clp'}  %\frac{1}{\Vc}
\quad , \quad
\sum_{\clp\in\cpp\p} \Vc|\clp\ket\bra\clp|=\id
\]
with a four-fold meaning (for the action of chains on cochains on $\cp$ as well as on $\cp$ \eqref{braket1}, and for the pairing of primal with dual chains as well as cochains \eqref{braket2}).

With this identification, a notation of Hodge duality  (\dref{hodge-duality}) can be defined on the level of coefficients as
\begin{equation}
\left\bra \ast\phi|\clp\right\ket :=\left\bra \star\clp|\phi\right\ket \equiv \left\bra \clp|\phi\right\ket =\left\bra \phi|\clp\right\ket ^{*}.\label{eq:HodgeDuality}
\end{equation}
%In the case of ket fields one has to be careful in defining such a notation because of the sign (\ref{eq:sign}):
%\begin{equation}
%\left\bra \clp|\ast\phi\right\ket :=\left\bra \phi|\star\star\clp\right\ket = (-1)^{p(\m-\p)}\left\bra \phi|\clp\right\ket =(-1)^{p(\m-\p)}\left\bra \clp|\phi\right\ket ^{*}.\label{eq:HodgeDuality2}
%\end{equation}
%In this way the sign factor in the duality of complexes induces consistently the usual sign factor in the Hodge duality,
%\[
%\left\bra \ast\ast\phi|\clp\right\ket =
%\left\bra \phi|\star\star\clp\right\ket =
%(-1)^{p(\m-\p)}\left\bra \phi|\clp\right\ket .
%\]

The following commutative diagram shows the identifications and dualities
by which the discrete $L^{2}$ position function space is defined:
\[
\xymatrix{\Omega^\p(\cp)\ar@{<->}[d]^{\cong}\ar@{<->}[r]^{\ast} & \Omega^{\m-\p}(\cps)\ar@{<->}[d]^{\cong}\\
C^\p(\cp)\ar@{<->}[rd]^{\sim}\ar@{<->}[r]^{\star}\ar@{<->}[d]^{\equiv} & C^{\m-\p}(\cps)\ar@{<->}[d]^{\equiv}\\
C_{\m-\p}(\cps)\ar@{<->}[ru]_{\ \ \ \sim}\ar@{<->}[r]_{\star} & C_\p(\cp)
}
\]
In the case of triangulations,  all the maps are already well known \cite{Desbrun:2005ug,Grady:2010wb} except for the last identification denoted as `$\equiv$', which makes it possible to have a Dirac position space notation.
Furthermore, I have generalized the formalism from simplicial decompositions (\ie triangulations) to arbitrary finite pure $\m$-dimensional combinatorial complexes.
%$\p$-fields on simplicial pseudo-manifolds with an assigned set of geometric data.

%Note that, since by the last identification the pairing of the chain-cochain
%duality is modified, too, the fields finally have bra and ket component
%expansion 
%\[
%\bra\phi|=\underset{\clp\in K}{\sum}\Vc\phi_{\clp}\bra\clp|\quad\overset{*}{\longleftrightarrow}\quad|\phi\ket=\underset{\clp\in K}{\sum}\Vc\phi_{\clp}^{*}|\clp\ket\,.
%\]
%%
%As an example, a field living on the $\m$-simplices represented by chains $|\sigma_d \ket\in C_d(\cp)$, i.e. a primal $\m$-form field $\phi\in\Omega^d(\cp)$ has an expansion in terms of the cochain basis elements $\bra\sigma_d|\in C^d(\cp)$ with an explicit volume measure $V_{\sigma_d}^{(d)}=V_{\sigma_d}$, $\bra\phi|=\sum V_{\sigma_d} \phi_{\sigma_d}\bra\sigma_d|$ while its Hodge dual is a scalar on the dual vertices represented by chains $\bra\star\sigma_d|=\bra\hat\sigma_0|\in C_0(\ast K)$, identified with primal cochains $\bra\hat\sigma_0| \equiv \bra\star\sigma_d|$, that is a dual 0-form $\star\phi\in\Omega^0(\cps)$, having an expansion with with a trivial vertex volume measure in dual cochains $\bra\hat\sigma_0|\in C^0(\cps)$ which can be identified with primal $\m$-chains, $|\phi\ket=\sum\phi_{\sigma_d}^*|\sigma_d\ket$.

\end{remark}

%==============================================================
\renewcommand{\clp}[1]{c_{#1}}

\subsection{Exterior derivative and the Laplacian}\label{sec:laplacian}

One can easily introduce the exterior differential operator on discrete forms on a finite pure complex with chain complex $(C(\cp),\bm)$ using Stokes theorem as a definition \cite{Desbrun:2005ug,Grady:2010wb}.
For the integration of the differential of a form $\phi\in\Omega^{\p-1}(\cp)$ over one orientable $\p$-cell $\clp\p$ in a complex $\cp$ with realization as the cellular decomposition $|\cp|$ of a pseudo-manifold, %with corresponding complex $\cp$, 
the theorem states that 
\[
  \d\phi({\clp\p}) = \int_{|\clp\p|}\d\phi_{\text cont}=\int_{|\partial\clp\p|}\phi_{\text cont}=\phi({\bm\clp\p})\,.
\]
Thus, the differential operator is induced by %a modified version of 
the coboundary operator, which is the operator adjoined to the boundary operator $\bm$ with respect to the duality between chains and cochains. 
%It is modified because in the convention chosen here \cjt{we} have to explicitly keep track of the volume factors. In the math convention \cite{Desbrun:2005ug}, on the other hand, the differential is exactly the coboundary operator.

\begin{defin}[{differential}]\label{def:differential}
  On a finite pure complex $\cp$ with chain complex $(C(\cp),\bm)$ 
%with a geometric interpretation in terms of volumes $\{V_\cl\}_{\cl\in\cp}$, 
  the \emph{differential} is a linear map $\d: \Omega^{p-1}(\cp) \lora \Omega^\p(\cp)$ defined by its action on single cells
  \begin{equation}\label{differential}
    \d\phi(\clp\p) 
    %= V_{\clp\p}\left\bra \d\phi|\clp\p\right\ket 
    :=\phi(\bm\clp\p) = \sum_{\clp{\p-1}<\clp\p} \bc{\clp\p}{\clp{\p-1}} \phi ( \clp{\p-1} )
  \end{equation}
  or, equivalently, in the bra-ket notation
  \[
  \bra \d\phi | \clp\p \ket := \bra \phi | \bm \clp\p \ket 
  = \sum_{\clp{\p-1}<\clp\p} \bc{\clp\p}{\clp{\p-1}} \bra \phi | \clp{\p-1} \ket
  \]
  The sign factors $\bc{\clp\p}{\clp{\p-1}}$ take into account the orientation of the faces $\clp{\p-1}$ relatively to the cell $\clp\p$ (proposition \ref{prop:orientation}).

%via the permutation of their vertices, 
%\[
%\mbox{sgn}(\clp{p-1},\clp\p):=\sgn(i_{1}\dots\hat{i}_{j}\dots i_\p)\,\sgn(i_{1}\dots i_\p)\,.
%\]
%Similarly, the differential on dual forms $\phi\in\Omega^{\m-\p-1}(\cps)\cong\Omega^{p+1}(\cp)$ can be defined as 
%%the modified coboundary operator on the dual complex
%\[
%V_{\csp{\m-\p}}\left\bra \csp{\m-\p}|\d\phi\right\ket :=\underset{\csp{\m-\p-1}\in\partial\csp{\m-\p}}{\sum}\sgn(\csp{d-(p+1)},\csp{\m-\p})\,V_{\csp{d-(p+1)}}\left\bra \csp{d-(p+1)}|\phi\right\ket
%\]
%or equivalently 
%\[
%V_{\star\clp\p}\left\bra \clp\p|\d\phi\right\ket :=\underset{\clp{p+1};\clp\p\in\partial\clp{p+1}}{\sum}\mbox{sgn}(\clp{p+1},\clp\p)\,V_{\star\clp{p+1}}\left\bra \clp{p+1}|\phi\right\ket\,.
%\]
\end{defin}

\begin{remark}
  One can check that the differential on the dual complex is indeed the adjoint to the differential on the primal one, $\bra \d\phi|\psi \ket = \bra \phi|\d\psi \ket$. 
  More precisely, if one does not write the inner product directly as a pairing of a bra and a ket but as a bilinear form on either $\Omega^\p(\cp)$ or $\Omega^{\m-\p}(\cps)$, the adjoint operator of the differential $\da$ as usual is
\[
\da:=(-1)^{\m(p+1)+1}\ast\d\ast\,,
\]
taking into account the sign of multiple Hodge operations \cite{Desbrun:2005ug}.
\end{remark}

%==============================================================

%\subsection{The Laplacian and its properties on dual scalar fields}\label{sec:laplacian}

Using the above notions of discrete differential and codifferential, one can now simply define the discrete Laplacian using the standard definition of the Hodge-Laplace-Beltrami operator in the well-known form \cite{Rosenberg:1997to}:

\begin{defin}[{Laplace operator}]
Under the assumptions of \dref{differential}, the Laplacian is defined as % a map ...
\[\label{Delta}
\boxd{
%\Delta_\p:=
\Delta=\da\d+\d\da\,.}
\]
\end{defin}
%which has now a well-defined meaning on arbitrary $\p$-forms on a simplicial pseudo-manifold. 

\begin{example}[{dual scalar Laplacian}]
  For later purpose (chapter \ref{ch:dimensions}), the action of the Laplacian on dual scalar fields $\phi\in\Omega^{0}(\cps)\cong\Omega^{d}(\cp)$,
  or equivalently, on fields living on $\m$-cells on a closed combinatorial pseudo-manifold $\cp$ is of particular interest:%
  \footnote{In the Regge calculus literature, a Laplacian of the same form is derived for a primal scalar field (\ie a scalar field living on the vertices of the primal simplicial complex) in the circumcentric case \cite{Hamber:2009wl}. Then the dual Laplacian $\Delta $ is guessed to have exactly the form \eqref{scalar-laplacian}.}
  \begin{eqnarray}
    \label{scalar-laplacian}
    (-\Delta \phi )_v & = & - \bra v | (-1)^{\m(1+1)+1} \ast\d\ast\d\phi \ket \nonumber \\
    & = & (-1)^{\m(\m-\m)} \frac{1}{V_{\clp{\m}}} \bra \d\ast\d\phi|\clp{\m} \ket \nonumber \\
    & = & \frac{1}{V_{\clp{\m}}} \sum_{\clp{\m-1}<\clp{\m}} \bc{\clp\m}{\clp{\m-1}}  \bra \ast\d\phi|\clp{\m-1} \ket \nonumber \\
    & = & \frac{1}{V_{\clp{\m}}} \sum_{\clp{\m-1}<\clp{\m}} \bc{\clp\m}{\clp{\m-1}} \frac{V_{\clp{\m-1}}}{V_{\csp{1}}} \bra \csp{1}|\d\phi \ket \nonumber \\
    & = & \frac{1}{V_{\clp{\m}}} \sum_{\clp{\m-1}<\clp{\m}} \bc{\clp\m}{\clp{\m-1}} \frac{V_{\clp{\m-1}}}{V_{\csp{1}}} \sum_{\csp0 <^\star \csp{1}} \bcs{\csp1}{\csp0} \bra \csp0|\phi \ket \nonumber \\
    & = & \frac1{V_{\star v}} \sum_{v' \sim v} \frac{V_{\star(vv')}}{V_{vv'}}  (\phi_v -\phi_{v'} )\,.
  \end{eqnarray}
  In the first line, the vanishing of $\da\propto\ast\d\ast$ on 0-forms is used, while in the next four lines the differential and Hodge star operator are applied one after the other. 
  In the second line, $\clp\m$ denotes the dual cell $\star v$ and, in the fourth line, $\csp1$ denotes  the dual cells $\star\clp{\m-1}$ accordingly.  
  The last line is just a reordering of terms.
  The dual volumes $V_{\csp1} = V_{vv'}$ in the denominator are the lengths of the dual edges $\csp1 =(vv') = \star(\clp\m \clp\m')$ between the vertex $v$ where the dual Laplacian is evaluated at and its neighbours $v'\sim v$. 
  An alternative, more intuitive notation for these is therefore $\dl_{\clp\m \clp\m'} \equiv \dl_{vv'}\equiv  V_{vv'}$.
%   \cjt{we} write them as $\dl_{\sigma\sigma'}=V_{\star\cl(\sigma\cap\sigma'\right)}$ (suppressing from now on the dimension index in $\sigma=\clp{d}$). 
  
% !!!
\renewcommand{\cl}{v}  
  
  The action of the Laplacian on a scalar field ket %, expanded in the basis
  \begin{eqnarray}
    - | \Delta \phi\ket & = & - \sum_{\cl} |\cl\ket \bra \cl | \Delta \phi \ket 
     \os{\eqref{scalar-laplacian}}{=} \sum_\cl |\cl\ket \frac1{V_{\star v}} \sum_{\cl'\sim\cl} \frac{V_{\star(\cl\cl')}}{\dl_{\cl\cl'}} \left[ \bra \cl | \phi \ket - \bra \cl' | \phi \ket \right] \nonumber\\
    & = & \left[ \sum_\cl V^{-1}_{\star v} \left(\sum_{\cl'\sim\cl} w_{\cl\cl'} \right)| \cl \ket \bra \cl | \right] |\phi\ket - \left[\sum_\cl V^{-1}_{\star v} \sum_{\cl'\sim\cl} w_{\cl\cl'} |\cl\ket\bra\cl'| \right] |\phi\ket \nonumber\\
    & \equiv &  D|\phi\ket-A|\phi\ket\,.\label{DAeq}
  \end{eqnarray}
  is of the general type of a graph Laplace matrix \cite{Chung:1997tk}: up to the inverse volume factor $V^{-1}_{\star v}$, on the 1-skeleton graph of the dual complex it is a difference of an off-diagonal adjacency matrix $A$ in terms of weights
\[\label{weights}
w_{\cl\cl'}:=\frac{V_{\star(\cl\cl')}}{\dl_{\cl\cl'}}
\]
and a diagonal degree matrix $D$ with entries 
$
%D_{\cl\cl}=
\sum_{\cl'\sim\cl}w_{\cl\cl'}%\label{degree}.
$.
Thus, in the trivial case of constant volumes over the complex, \eg equilateral triangulations, the Laplacian is just proportional to the combinatorial graph Laplacian of the dual 1-skeleton of the complex. 

%The Laplace matrix position elements $\cl(-\Delta \phi )_{\cl}$, on the other hand, come with an additional inverse volume and are
%\[
%\frac{w_{\cl\cl'}}{V_{\cl}}\label{lapelements}.
%\]
\end{example}

% !!!
\renewcommand{\cl}{v}  

By definition, such discrete (graph) Laplacians obey three desirable properties \cite{Chung:1997tk,Wardetzky:2008kk}: 

\begin{remark}[{properties of the scalar Laplacian}]
\label{rem:laplacian-properties}
  The Laplacian \eqref{scalar-laplacian} has the following properties:
  \begin{enumerate}
    \item[1.] \emph{Null condition}: $(\Delta \phi )=0$ if, and only if, $\phi$ is constant. 
    This is obvious because $\Delta \phi$ is the difference of position values of $\phi$. 
    A zero mode in the spectrum of $\Delta$ reflects the fact that $\cp$ corresponds to a closed pseudo-manifold. 
    \item[2.] \emph{Self-adjointness}: The Laplace operator is self-adjoint with respect to the inner product 
    \[
    \bra \phi|\Delta\psi \ket =\bra \Delta\phi|\psi \ket\,.
    \]
    This is reflected by the symmetry of the weights $w_{\cl\cl'}$.
    %, though at the level of position coefficients $\cl(\Delta\phi )_{\sigma}$, eq.\ (\ref{scalar-laplacian}), the inverse-volume factor $V_{\sigma}^{-1}$ spoils this symmetry. 
    \item[3.] \emph{Locality}: The action of $\Delta$ at any given position, $(\Delta\phi )_\cl$, is affected only by field values $\phi_{\cl'}$ at neighbouring positions, \ie vertices $\cl'$ incident to $\cl$. %'\sim\cl$.
    In discrete calculus, this comes directly from the definition of the Laplacian as a second-order differential operator. 
  \end{enumerate}
  In the case of a cellular decomposition $|\cp|$ of a pseudo-manifold $\mf$, a further natural condition which is built into the formalism from the start (by the definition of differentials via Stokes theorem) is the 
  \begin{enumerate}
    \item[4.] Convergence to the continuum Laplacian under refinement of triangulations. 
  \end{enumerate}
  To see this, consider a region $\Omega\in M$ large compared to the scale $a\sim V_{\clp\p} ^{1/\p}$ of cells $\clp\p\in \cp$, in which the function $\phi$ and its derivatives do not vary strongly. 
  Since products %Using eq.\ (\ref{eq:VolProduct}),
  $V_{\star(\cl\cl')} \dl_{\cl\cl'}/\m$ %\approx \m\cdot V_{\cl\cap\cl'}^{(d)}$
  provide a local $\m$-volume measure, one has
  \begin{eqnarray}
    \underset{\cl\in\Omega}{\sum}V_{\cl}(-\Delta \phi)_{\cl} =\underset{\cl\in\Omega}{\sum}\underset{\cl'\sim\cl}{\sum}\frac{V_{\star(\cl\cl')}}{\dl_{\cl\cl'}} (\phi_{\cl}-\phi_{\cl'} )
    %\approx d\underset{\cl\in\Omega}{\sum}\underset{\cl'\sim\cl}{\sum}V_{\cl\cap\cl'}^{(d)}\frac{\phi_{\cl}-\phi_{\cl'}}{{\dl_{\cl\cl'}}^{2}}\nonumber\\
 \approx2\m\ \mbox{Vol}(\Omega)\underset{\csp{1}\in\Omega}{\sum}\frac{\phi_{\cl}-\phi_{\cl'}}{a^{2}}\,.
  \end{eqnarray}
  Summing over all the dual edges $\csp{1}\in\Omega$ gives effectively a rotationally invariant expression. 
  In particular, it is an average over hypercubic lattices and the difference term can readily be identified as the Laplacian in the continuum limit, just like in standard lattice field theory with hypercubic lattice size $a$. 
  Because $\phi_{\cl+ae_{\mu}}\underset{a\ra0}{\lora}\phi_{\cl}+a\cl(\partial^{\mu}\phi )_{\cl}e_{\mu}+\mathcal{O}(a^{2})$, the difference term gives 
  \ba \label{continuum-limit}
    \sum^{2\m}_{\cl'} \frac{\phi_{\cl}-\phi_{\cl'}}{a^{2}} 
    & = & -\overset{\m}{\underset{\mu=1}{\sum}}\frac{1}{a} \left(\frac{\phi_{\cl+a e_{\mu}}-\phi_{\cl}}{a}-\frac{\phi_{\cl}-\phi_{\cl-a e_{\mu}}}{a} \right) \nonumber\\
    &\underset{a\ra0}{\lora} &  -\overset{\m}{\underset{\mu=1}{\sum}}\frac{(\partial_{\mu}\phi )_{\cl}-(\partial_{\mu}\phi )_{\cl-ae_{\mu}}}{a}e^{\mu}\approx-\overset{d}{\underset{\mu=1}{\sum}}(\partial^{\mu}\partial_{\mu}\phi )_{\cl}\,.
  \ea
\end{remark}

\begin{remark}[{circumcentric vs. barycentric dual}]
  Despite the validity of the above properties, one has to expect that it is not possible to preserve all the features of the continuum Laplacian in the discrete setting. 
  This is expected on general grounds and has been shown for example in the case of two-dimensional triangulations \cite{Wardetzky:2008kk}. 
  As a result, the definition of a discrete counterpart of the continuum Laplacian cannot be unique. 
  In the present case, it is therefore natural to wonder which properties of the continuum Laplacian are not preserved by the discrete scalar Laplacian $\Delta$ \eqref{scalar-laplacian}.
  
  The answer turns out to depend also on the specific choice of the geometry of the dual complex, that is, on the choice of its geometric realization as compared to the primal complex. 
  The two distinguishing features are linear precision and positivity. 
  \begin{enumerate}
    \item[5.] \emph{Linear precision}: On a piecewise linear (``straight-line" \cite{Wardetzky:2008kk}) polyhedral decomposition $|\polyc|$ of flat space $\mf \In \R^\m$, it holds that $(\Delta \phi)_\cl = 0$ if $\phi$ is a linear function $\phi(x^{\mu})=a+\sum_{i=1}^d a_{\mu}x^{\mu}$ in Cartesian coordinates $x^{\mu}$. 
    By linearity, this is equivalent to a vanishing Laplacian $(\Delta x)_{\cl}=0$ of the coordinate
function $x$ (considered as a bunch of scalars $x^{\mu}$).
  \end{enumerate}
  Linear precision holds for circumcentric dual geometries, in which case the dual lengths are $\dl_{\cl\cl'} = |x_{\hat{\cl}}-x_{\hat{\cl}'} |$ and (with unit face normals $\hat{n}_{\cl\cl'}=$ $\frac{x_\cl - x_{\cl'}}{|x_\cl-x_{\cl'} |}$) 
  \[
    (\Delta x)_\cl \propto \sum_{\cl'\sim\cl} \frac{V_{\star(\cl\cl')}}{\dl_{\cl\cl'}} l(x_\cl - x_{\cl'} )
    = \sum_{\cl'\sim\cl}  V_{\star(\cl\cl')} \hat{n}_{\cl\cl'} = 0
  \]
  is true because these are exactly the closure relations for the polyhedral cell $\star\cl$ dual to the vertex $\cl$.
  This property fails, on the other hand, in the barycentric case. 
  One can  understand this  heuristically by noting that generically $\dl_{\cl\cl'}\ne |x_{\hat{\cl}}-x_{\hat{\cl}'} |$ in any dimension for the barycentric dual edges, such that $(\Delta x)_{\cl}$ reduces to a sum over normals of a set of modified faces, which cannot be expected to close, in general. 

  The second property is 
  \begin{enumerate}
    \item[6.] \emph{Positivity} of the weights: $w_{\cl\cl'} > 0$ for all edges $(\cl\cl')\in\cpsp1$. 
    It is also called Markov property \cite{Kigami:2001wk} and is directly related to Osterwalder-Schrader positivity. 
    The latter is crucial for a Euclidean quantum field theory to yield unitarity in the corresponding Lorentzian theory after Wick rotation \cite{Osterwalder:1973hq}. 
  \end{enumerate}
  Positivity holds if all the volumes in the weights are positive. 
  This is generically true for barycentric duals. 
  For circumcentric duals the situation is less general. 
  Positivity does hold for circumcentric duals of regular complexes (where the circumcenters lie in the simplices).
  However, this is not the case for irregular circumcentric duals. 
  When a circumcenter does not lie inside the simplex, the part of the dual length associated with this simplex is negative such that in some cases the sum of the two parts is negative (see \fig{circum}), inducing negative Laplace matrix weights. 
  \begin{figure}[htb]
    \begin{centering}
     \tikzsetnextfilename{circum}
     \tikzstyle{eb} =	[draw=jgreen,line width=.6pt]

\begin{tikzpicture}
\begin{scope}[scale=.4]
\node  						at (-6,6)		{a)};
\node [c,label=45:$ 2$]		(a)	at (0,0)		{};
\node [c,label=right:$ 1$]		(b)	at (8,0)		{};
\node [c]					(c)	at (3,6)		{};
\node [c]					(d)	at (-3,3)		{};
\node [c]					(e)	at (-6,-2)		{};
\node [c,label=below:$ 3$]		(f)	at (2,-5)		{};
\node [c,label=right:$ 4$]		(g)	at (9,-5)		{};
%\path	[f] 	(b)	-- (f) -- (a) -- cycle;
%\path	[f] 	(b)	-- (f) -- (a) -- cycle;
\foreach \i/\j in {a/b,a/c,a/d,a/e,a/f,b/c,c/d,d/e,e/f,f/g,g/b,f/b}{
%\draw [eb] 	(\i) node [vh]{}	-- (\j) node [vh]{};
\draw [eb] 	(\i) 	-- (\j) ;
}
\node [v]	(abc)		at (4,1.75)		{};
\node [v]	(acd)		at (0.5,3.5)	{};
\node [v]	(ade)		at (-3.25,-.25)	{};
\node [v]	(aef)		at (-2.1,-3.7)	{};
\node [v,label=10:$v_{123}$]	(abf)		at (4,-1.3)		{};
\node [v,label=-45:$v_{134}$]	(bfg)		at (5.5,-3.1)	{};
\node [c]	(ab)		at (4,0)		{};
\node [c]	(ac)		at (1.5,3)		{};
\node [c]	(ad)		at (-1.5,1.5)	{};
\node [c]	(ae)		at (-3,-1)		{};
\node [c]	(af)		at (1,-2.5)		{};
\node [c]	(bc)		at (5.5,3)		{};
\node [c]	(cd)		at (0,4.5)		{};
\node [c]	(de)		at (-4.5,.5)		{};
\node [c]	(ef)		at (-2,-3.5)		{};
\node [c]	(fg)		at (5.5,-5)		{};
\node [vb,label=left:$\vb_{13}$](bf)	at (5,-2.5)		{};
\node [c]	(bg)		at (8.5,-2.5)	{};
\foreach \i/\j/\k in {a/b/c,a/c/d,a/d/e,a/e/f,a/b/f,b/f/g}{
\path	[f] 	(\i)	-- (\j) -- (\k) -- cycle;
\draw [e] 	(\i\j) 	-- (\i\j\k);
\draw [e] 	(\j\k)	-- (\i\j\k);
\draw [e] 	(\i\k) 	-- (\i\j\k);
}
\end{scope}

\begin{scope}[xshift=8cm,scale=.4]
\node  			at (-6,6)		{b)};
\node [c]		(a)	at (0,0)		{};
\node [c,label=right:$ 1$]		(b)	at (7,2)		{};
\node [c]		(c)	at (3,6)		{};
\node [c]		(d)	at (-3,3)		{};
\node [c]		(e)	at (-6,-2)		{};
\node [c,label=below:$ 3$]		(f)	at (.5,-4)		{};
\node [c]		(g)	at (9,-5)		{};
%\path	[f] 	(b)	-- (f) -- (a) -- cycle;
%\path	[f] 	(b)	-- (f) -- (a) -- cycle;
\foreach \i/\j in {a/b,a/c,a/d,a/e,a/f,b/c,c/d,d/e,e/f,f/g,g/b,f/b}{
\draw [eb] 	(\i) 	-- (\j) ;
}
\node [v]	(abc)		at (3.2,2.2)	{};
\node [v]	(acd)		at (0.5,3.5)	{};
\node [v]	(ade)		at (-3.25,-.25)	{};
\node [v]	(aef)		at (-2.55,-2.35)	{};
\node [v,label=10:$v_{123}$]	(abf)		at (4.2,-1.5)	{};
\node [v,label=-45:$v_{134}$]	(bfg)		at (5,-2.4)		{};
\node [c]	(ab)		at (3.5,1)		{};
\node [c]	(ac)		at (1.5,3)		{};
\node [c]	(ad)		at (-1.5,1.5)	{};
\node [c]	(ae)		at (-3,-1)		{};
\node [c]	(af)		at (.25,-2)		{};
\node [c]	(bc)		at (5,4)		{};
\node [c]	(cd)		at (0,4.5)		{};
\node [c]	(de)		at (-4.5,.5)		{};
\node [c]	(ef)		at (-2.75,-3)	{};
\node [c]	(fg)		at (4.75,-4.5)	{};
%\node [c]	(bf)		at (4.6,-1.95)	{};	% actually (3.75,-1)
\node [vb,label=left:$\vb_{13}$]	(bf)  at  (3.75,-1)	{};
\node [c]	(bg)		at (8,-1.5)		{};
\foreach \i/\j/\k in {a/b/c,a/c/d,a/d/e,a/e/f,a/b/f,b/f/g}{
\path	[f] 	(\i)	-- (\j) -- (\k) -- cycle;
\draw [e] 	(\i\j) 	-- (\i\j\k);
\draw [e] 	(\j\k)	-- (\i\j\k);
\draw [e] 	(\i\k) 	-- (\i\j\k);
}
\node [v]	(abf)		at (4.2,-1.5)	{};
\end{scope}

\begin{scope}[yshift=-6cm,scale=.4]
\node  			at (-6,6)		{c)};
\node [c]		(a)	at (0,0)		{};
\node [c]		(b)	at (7,2)		{};
\node [c]		(c)	at (3,6)		{};
\node [c]		(d)	at (-3,3)		{};
\node [c]		(e)	at (-6,-2)		{};
\node [c]		(f)	at (.5,-4)		{};
\node [c,label=right:$ 4$]		(g)	at (7.5,-1.5)	{};
%\path	[f] 	(b)	-- (f) -- (a) -- cycle;
%\path	[f] 	(b)	-- (f) -- (a) -- cycle;
\foreach \i/\j in {a/b,a/c,a/d,a/e,a/f,b/c,c/d,d/e,e/f,f/g,g/b,f/b}{
\draw [eb] 	(\i) 	-- (\j) ;
}
\node [v]	(abc)		at (3.2,2.2)	{};
\node [v]	(acd)		at (0.5,3.5)	{};
\node [v]	(ade)		at (-3.25,-.25)	{};
\node [v]	(aef)		at (-2.55,-2.35)	{};
\node [v,label=right:$v_{123}$]	(abf)		at (4.2,-1.5)	{};
\node [v,label=left:$v_{134}$]	(bfg)		at (3.1,-.3)		{};
\node [c]	(ab)		at (3.5,1)		{};
\node [c]	(ac)		at (1.5,3)		{};
\node [c]	(ad)		at (-1.5,1.5)	{};
\node [c]	(ae)		at (-3,-1)		{};
\node [c]	(af)		at (.25,-2)		{};
\node [c]	(bc)		at (5,4)		{};
\node [c]	(cd)		at (0,4.5)		{};
\node [c]	(de)		at (-4.5,.5)		{};
\node [c]	(ef)		at (-2.75,-3)	{};
\node [c]	(fg)		at (4,-2.75)	{};
\node [vb,label=left:$$] (bf) 	at  (3.75,-1)	{};
\node [c]	(bg)		at (7.25,.25)		{};
\foreach \i/\j/\k in {a/b/c,a/c/d,a/d/e,a/e/f,a/b/f,b/f/g}{
\path	[f] 	(\i)	-- (\j) -- (\k) -- cycle;
\draw [e] 	(\i\j) 	-- (\i\j\k);
\draw [e] 	(\j\k)	-- (\i\j\k);
\draw [e] 	(\i\k) 	-- (\i\j\k);
}
\end{scope}

\begin{scope}[xshift=8cm,yshift=-6cm,scale=.4]
\node  			at (-6,6)		{d)};
\node [c]		(a)	at (0,0)		{};
\node [c]		(b)	at (7,2)		{};
\node [c]		(c)	at (3,6)		{};
\node [c]		(d)	at (-3,3)		{};
\node [c]		(e)	at (-6,-2)		{};
\node [c]		(f)	at (.5,-4)		{};
\node [c]		(g)	at (7.5,-1.5)	{};
%\path	[f] 	(b)	-- (f) -- (a) -- cycle;
%\path	[f] 	(b)	-- (f) -- (a) -- cycle;
\foreach \i/\j in {a/b,a/c,a/d,a/e,a/f,b/c,c/d,d/e,e/f,f/g,g/b,a/g}{
\draw [eb] 	(\i) 	-- (\j) ;
}
\node [v]	(abc)		at (3.2,2.2)	{};
\node [v]	(acd)		at (0.5,3.5)	{};
\node [v]	(ade)		at (-3.25,-.25)	{};
\node [v]	(aef)		at (-2.55,-2.35)	{};
%\node [v,label=right:$v_1$](abf)		at (4.2,-1.5)	{};
%\node [v,label=left:$v_2$]	(bfg)		at (3.1,-.3)		{};
\node [v,label=45:$v_{124}$] 		(agb)		at (3.85,-.24)	{};
\node [v,label=-135:$v_{234}$]	(agf)		at (3.6,-1.6)	{};
\node [c]	(ab)		at (3.5,1)		{};
\node [c]	(ac)		at (1.5,3)		{};
\node [c]	(ad)		at (-1.5,1.5)	{};
\node [c]	(ae)		at (-3,-1)		{};
\node [c]	(af)		at (.25,-2)		{};
\node [c]	(bc)		at (5,4)		{};
\node [c]	(cd)		at (0,4.5)		{};
\node [c]	(de)		at (-4.5,.5)		{};
\node [c]	(ef)		at (-2.75,-3)	{};
\node [c]	(gf)		at (4,-2.75)	{};
%\node [vb,label=left:$$] (bf) 	at  (3.75,-1)	{};
\node [c,label=left:$$] (ag) 	at  (3.75,-.75)	{};
\node [c]	(gb)		at (7.25,.25)		{};
\foreach \i/\j/\k in {a/b/c,a/c/d,a/d/e,a/e/f,a/g/b,a/g/f}{
\path	[f] 	(\i)	-- (\j) -- (\k) -- cycle;
\draw [e] 	(\i\j) 	-- (\i\j\k);
\draw [e] 	(\j\k)	-- (\i\j\k);
\draw [e] 	(\i\k) 	-- (\i\j\k);
}

\end{scope}
\end{tikzpicture}

\tikzstyle{eb} =	[draw=jgreen,line width=1.6pt]	
     \end{centering}
     \caption{Examples for the circumcentric dual edge in a simplicial 2-complex: 
     In the first picture (a), the length of the dual edge $(v_{123}v_{134})$ (being the sum of the distance of both vertices to $\vb_{13}$) is positive. 
     Shifting the vertices 1 and 3 (b), the dual vertex $v_{123}$ lies outside the triangle $(123)$. Its distance to $\vb_{13}$ therefore is negative, but the dual length  of $(v_{123}v_{134})$ is still positive. 
     Shifting vertex 4 (c), both $v_{123}$ and $v_{134}$  lie outside their simplices such that the dual length is negative. 
     Exactly when this happens the triangulation cannot be Delaunay because the circumcenter is closer to the neighbour than to its own triangle. 
     The forth picture (d) shows the Delaunay triangulation for the primal points of (c), where the triangles $(123)$ and $(124)$ originally considered, and thus their dual vertices, are replaced with $(124)$ and $(234)$.}
     \label{fig:circum}
  \end{figure}
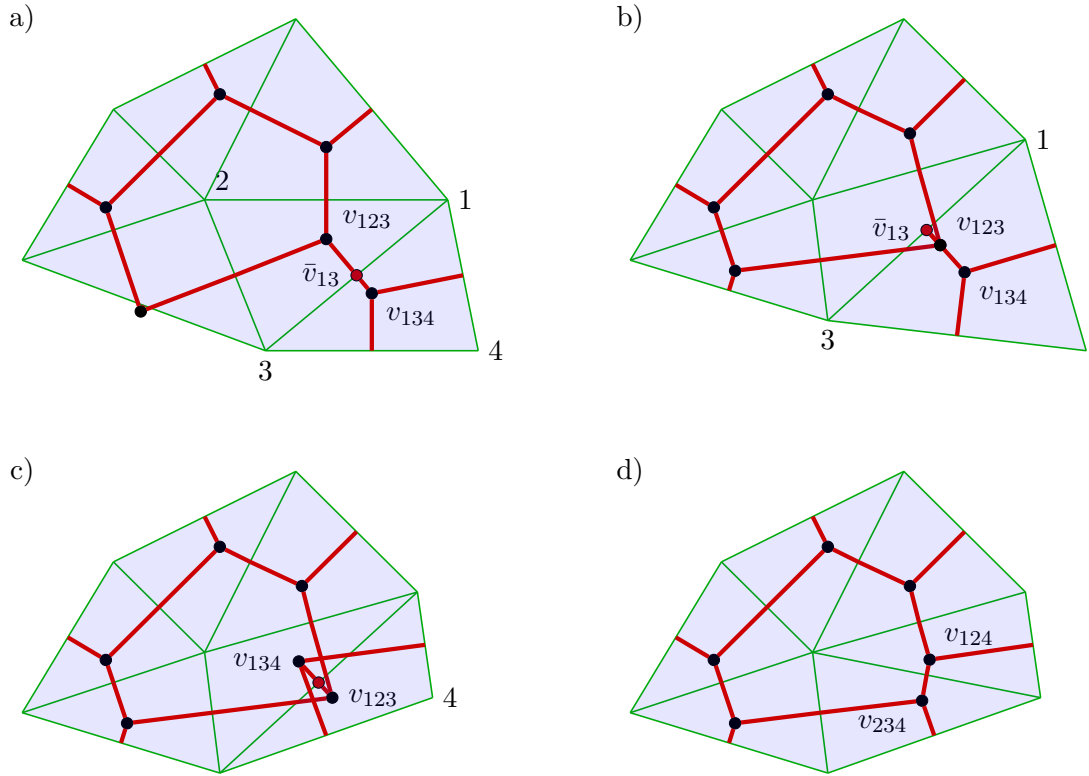
\end{remark}
  
Therefore, as anticipated, the choice of geometry of the dual complex is crucial, yielding different properties for the discrete Laplacian. 
In quantum gravity, in particular in the investigation of the spectral dimension, the barycentric dual is to be preferred:

\begin{remark}[{physics choice for the Laplacian}]
  Quite in general, the null-condition, symmetry and positivity are required properties for any Laplacian.
  They are even taken as {\it the} defining properties in fractal spectral theory \cite{Kigami:2001wk} (see appendix \ref{sec:classical-expressions}). 
  On the contrary, from the physics' perspective it could be expected on general grounds that standard locality and linear precision might be violated. 
  
  Furthermore, the relation of the notion of locality for $\Delta $ in the combinatorial context and the continuum is not immediate. 
  Indications of a breakdown of standard locality actually exist in several approaches to quantum gravity (\eg \cite{Giddings:2001fq,Giddings:2004bc,Giddings:2006cn,Calcagni:2013jx}). 
In fractional calculus, which can be used as an effective description of fractal and other anomalous spacetimes, the Laplacian may be composed by fractional integro-differential operators, which are non-local (by the dependence on non-neighbouring points) \cite{Calcagni:2012rm,Calcagni:2012kj,Calcagni:2011sz,Calcagni:2012vd}.
  
  Linear precision is not needed either, because the combinatorial manifolds considered are not flat in general and its only relevance is as an asymptotic property in the continuum limit to flat spaces. 
  But as argued, this is already fulfilled up to higher-order corrections. 
  A reason why this works despite the lack of linear precision is that the average difference between circumcentric and barycentric dual lengths is only of higher order in the scale of refined cellular decompositions.
  Thus, as far as quantum gravity is concerned, linear precision is not a necessary property since it does not seem reasonable to enforce properties of the continuum flat-space Laplacian exactly in the discrete theory.  
  Also, fractional spacetimes are a continuum example where this property is viola\-ted, in all self-adjoint Laplacians (even in the second-order one, due to the presence of a measure weight to the right of the derivatives) \cite{Calcagni:2012rm,Calcagni:2012zj}.
  
  As for why positivity should then be satisfied, instead, the reasons are the following. 
  One reason is simply by exclusion: when linear precision can be dropped, it makes sense to try to enforce as many as possible of the other properties. 
  A second reason is that all quantum gravity approaches under consideration are phrased as standard quantum theories of discrete spacetime, as discussed in detail in \sec{conceptual-relations}.
%Positivity crucial for elliptic, bounded operator
%Approaches: clear in all spatial contexts (LQG). Spacetime approaches: only applies for Euclidean versions (SF, CDT, ), then important for fields on it. For Lorentzian one would have to define d'Alambertian in the first place
  %The present discrete-calculus formalism is applicable only to their Euclidean versions which indeed are the ones best understood. 
  Though reflection positivity in a Euclidean formulation has not been yet directly related to unitarity in this context, one still expects such relation to exist, even if it is not realized by a simple Wick rotation. 
 % Moreover, it is not clear at all if there are mechanisms for recovering positivity in a continuum limit, if this is not enforced in the discrete operator. 
  Therefore, it seems preferable to maintain it in the definition of the discrete theory. 
 % \cdo{discuss briefly the obsrtuction to the Lorentzian generalization of this construction. why can't one define a Laplace-Beltrami or a Dalambertian operator along the same lines?}
  
  %Diffusion/random walk perspective: positivity mandatory for probability
  The main reason in the context of this thesis concerns the application of the discrete Laplacian operator %, in a quantum gravity context, is the investigation of the geometric properties of states and histories in quantum gravity models, by means of the 
  as the central quantity to define the spectral and walk dimension (chapter \ref{ch:dimensions}).
  %Indeed, this has been a major field of research in several discrete quantum gravity approaches, like dynamical triangulations \cite{\cdtfractal,Benedetti:2009bi} and tensor models \cite{Gurau:2012hl}, and, more recently, spin-foams and loop quantum gravity \cite{\snfractal,Caravelli:2009td,Magliaro:2009wa}. 
  The calculation of these observables uses the discrete Laplacian operator for defining a test diffusion process taking place on the discrete structures defining quantum gravity states and histories. 
  Positivity of the Laplacian is then a necessary requirement for a  properly defined diffusion process and thus for a sensible notion of observables. 
\end{remark}

\

With this \sec{calculus} I have concluded the discussion of combinatorial complexes, showing how discrete calculus can be defined in this context and for discrete geometries consisting in an additional assignment of geometric data to the complex cells.
I have explained how the dual complex gives rise to a definition of Hodge duality for $\p$-form fields which are  understood as cochains on the complex. 
Given an exact sequence of boundary maps $\bm$ on the space of chains it is then possible to use Stokes theorem to define a differential structure on discrete $\p$-forms. 
Finally I have discussed the properties of the resulting scalar Laplacian and argued that a geometric interpretation using the barycentric dual complex should be chosen to maintain positivity of the Laplacian,  needed for a meaningful definition of spectral and walk dimension.

\renewcommand{\cl}{c}

\chapter{Group field theory for all discrete quantum geometries \label{ch:dqg}}

The purpose of this chapter is a detailed presentation of GFT as a completion of SF models.
In particular I will introduce group field theories, \ie models of GFT, which allow for boundary states based on arbitrary boundary graphs and are thus compatible with LQG and SF models, further strengthening the relation between the theories and thus contributing to the theory crystallization discussed in \sec{crystallization}.
This is based on the publication \cite{\ORT}.

As discussed in \sec{relations} already to some detail, there are various differences between LQG, SF models and GFT in the first place,  concerning both kinematics and dynamics.
Nevertheless, SF models and GFT are rather closely related in the sense that for any SF model there is GFT model which generates the SF amplitudes in the perturbative expansion of the GFT path integral.
So far this has been true up to one subtlety: while in both approaches there is a main focus on the simplicial setting, \ie on the definition of amplitudes based on triangulations, SF models have recently been extended to a more general class of complexes \cite{\KKL}; so far, in GFT the question of such an extension has not been addressed, up to some brief remark on the possibility in principle \cite{Reisenberger:2001hd}.
In this chapter I will show how to close this gap and define GFT models for the extended SF models.

The challenge to generalize GFT to cover arbitrary combinatorial structures is even more pressing when viewed from the perspective of LQG.
Due to the setting of LQG with kinematical states defined in terms of embedded graphs $\bge\In\sm$ coming from curves in a smooth spatial background manifold $\sm$, these graphs allow for arbitrary combinatorics, \ie their combinatorial structure is given by the most general closed graphs $\bg\in\bgs$, in particular with vertices of arbitrary valence. 
In a simplicial setting, on the other hand, states are defined on the dual boundary graph of a simplicial complex of spacetime dimension $\std$ which is thus  $\std$-regular, \ie all vertices are incident to $\std$ edges.
For compatibility with LQG, the KKL-extension of SF models \cite{\KKL} is constructed precisely on the basis of a space of boundary states with support on arbitrary graphs $\bg\in\bgs$. 

The aim here is thus to define a GFT  framework that can accommodate, both kinematically and dynamically,  all the states and histories that one might expect to appear in SF models and which are compatible with LQG.
The combinatorial structures introduced in the preceding chapter set the stage for such a generalization. 

There are two ways how GFT can be extended towards this goal.
The first proposal constitutes a very formal (and thereby somewhat trivial) generalization of GFT to a formalism based on an infinite number of fields. 
Such a multi-field GFT  generates series catalogued by arbitrary spin-foam molecules, such that arbitrary graphs label quantum states. 
It is a direct counterpart of the KKL-extension of gravitational SF models. 
The main point in defining this theory is showing the absence of any fundamental obstruction to accommodating arbitrary combinatorial structures. 
However, such a field theory is not expected to be useful since the sum over complexes seems hard to be tamed.
 
More interesting and much more manageable is a second construction. % a \gft\ formalism that generates arbitrary 2--complexes %in terms of a decomposition into 2--complex building blocks which can be interpreted as the dual 2--skeleta to $D$-simplices. 
The standard simplicial GFT contains an interaction based on the simplicial spin-foam atom which can be interpreted as the dual 2-skeleton of a $D$--simplex. 
As shown in \sec{simplicial-structures}, the simplicial atom equipped with a labelling of real and virtual cells is sufficient to construct a general class of molecules covering the whole class of closed  graphs on their boundary
Thus, simplicial GFT is sufficient to generate 2--complexes with arbitrary boundary graphs.
The challenge, however, is to assign correct amplitudes.  
This is solved by a mild extension consisting in an augmentation of the data set over which the field is defined, providing control over the combinatorial structures generated by the theory. 
Such a mechanism is known as dual-weighting \cite{DiFrancesco:1992cn,Kazakov:1996et,Kazakov:1996fq,Benedetti:2012ed}  and permits to tune the theory to a regime in which the perturbative sum is catalogued by appropriately weighted arbitrary spin-foam molecules.
%Meanwhile, the perturbative series generated by this theory are catalogued by cell complexes constructed from the usual \lq\lq simplicial building blocks\rq\rq. However, the dual--weighting allows one to tune to a sector of the theory, wherein one may interpret the amplitudes, in a precise fashion, as gravitational spin-foam amplitudes weighting more general 2-complexes. 
%Of course, simplicial--decomposable complexes, also appear effectively in the standard GFT formalism. However, thanks to the extra tensor indices and \cjt{our} improved construction based on them, \cjt{we} can work with generalized complexes also at the level of the quantum spin-foam amplitudes associated to them. 
This allows to give an explicit GFT formulation of the KKL-extension of gravitational SF models; but even more it allows to propose a new class of models incorporating similar constraints which are arguably better motivated from the geometric point of view. % also defined for arbitrary  complexes.

\

The plan of this chapter is the following.
In order to motivate GFT as a completion of (gravitational) SF models, 
I will start in \sec{lgt-gravity} with a review of SF theory as a proposal for quantum gravity, \ie as a discrete gauge theory providing a path integral based on a variation of quantum BF theory.
Thereby, I will highlight in particular how the combinatorial structures as introduced in \sec{molecules} arise in this framework and show that the theory can be defined on such purely combinatorial structure from the beginning.
%[from the start on general combinatorics ,
%first discrete BF,
%including gravitational "simplicity" constraints,
%discussion of continuum question;
%finally GFT description for summing over complexes,
%...]

In \sec{gft} I will then introduce GFT, in particular with respect to its definition as a quantum field theory generating SF amplitudes.
While GFT Feynman diagrams are usually considered as stranded diagrams, I will recast the GFT formalism in a way showing explicitly the spin-foam molecule structure encoded.
This renders the first generalization strategy of a multi-field GFT a straightforward task.
%This will clarify how the graphs supporting LQG  states, as well as the complexes supporting spin-foam amplitudes, appear in this context. 
%Next, I will outline the class of GFT  models that are standard in the literature. These are based on a single field and generate series catalogued by a specific subset of the unlabelled spin-foam molecules $\sfrs$.  Via the interpretation given in section \ref{sec:std-complexes}, these are associated to $\copies$-dimensional simplicial structures.
%Finally, I shall generalize the GFT  framework to incorporate broader classes of models. 
%There are two main avenues to follow:
%One can stick with unlabelled structures but attempt to directly generate (larger subsets of) $\sfrs$. In this context, the first generalization is effected simply by broadening the type of interaction terms in the theory while keeping a single field. Such models are already common in the GFT  literature. \cite{Bonzom:2012bg,\gftrenorm,\COR}
%The second generalization involves passing from a single-field to multi-field group field theory.  
In this manner, any spin-foam molecule can be generated, albeit in a rather formal manner, with an infinite set of GFT  fields.  
    
In \sec{dw-gft} I will thus 
move over to labelled structures, which permit a much simpler class of GFTs, based on a single GFT  field over a larger data domain. 
The technique of a dual weighting, standard in tensor models, allows then to generate dynamically the class of non-branching simplicial spin-foam molecules $\sfrst_{\copies,\snb}$.  
 Drawing upon the results of \sec{simplicial-structures}, these can be related to a very general class of molecules in $\sfrs$ covering arbitrary boundary graphs. % , at least at the combinatorial level.  
Finally I will show how the GFT propagator and interactions can be devised such that the resulting models generate weights for the molecules in $\sfrs$ and that effectively assign to them the amplitudes of 4-dimensional quantum gravity SF theory. 

Sections \ref{sec:gft} and \ref{sec:dw-gft} are based on the publication \cite{\ORT}.

%===========================================================================

\renewcommand{\rep}[1]{{\rho_{#1}}}

\section[Quantum gravity as discrete gauge theory]
{Quantum gravity as discrete gauge theory
\label{sec:lgt-gravity}}

The essential idea of SF models is to obtain a GR path integral \eqref{gr-path-integral} from the classically equivalent Plebanski-Holst formulation on a discrete manifold instead of a smooth one, extending the well-understood example of quantum BF theory.
The Plebanski-Holst action
\[\label{plebanski-holst-action} 
S_{\textsc{ph}}[\A,B] = S_{\bft} [\A,B]  + S_{\text{simp}}[\A,B]  + S_{\text{Holst}} [\A,B]  
\]
consist of three parts depending most generally on a connection 1-form $\A$ for some group $G$ and a $(\std-2)$-form $B$ valued in the Lie algebra $\la$ of $G$.
The first two parts, the BF action $S_\bft$ together with the constraint term $S_{\text{simp}}$, define the Plebanski action
\[\label{plebanski-action} 
S_{\text{Pl}}[\A,B] = S_{\bft} [\A,B]  + S_{\text{simp}}[\A,B]  
= \frac1 {2\gnb} \int_\stm  \epsilon_{_{IJKL}} \B{IJ} \wedge F^{^{KL}}[\A]
+ \frac1\gnb \int_\stm \lambda_{_{IJKL}} \B{IJ} \wedge \B{KL} 
\]
where $\gnb = 16\pi \gn$ and $G$ is further specified to the local symmetry group of GR such that one can represent $B$ and the curvature $F[\A]$ as antisymmetric tensors on a local frame according to   $\Lambda^2(\R^4)\cong \la = \text{Lie}(G)$.
The Holst action
\[\label{holst-term}
S_{\text{Holst}} [\A,B]  =  \frac1 {\gnb\bi} \int_\stm \B{IJ} \wedge F_{_{IJ}}[\A]
\]
provides a further topological term which allows to relate the phase space of the theory to loop quantum gravity and which turns out to improve the quantum dynamics.

%gauge theory formulation of gravity for Lorentz group $SO(1,3)$, though here mostly Euclidean groups ($SO(4)$ etc.),

In this section I will review how a path integral $Z(\cp)$ of the Plebanski-Holst theory discretized on a combinatorial manifold $\cp$ reduces to a SF state sum 
\ba
\label{ph-sum}
&& Z(\cp\,) = \int_\cp \D\A \D B \, \e^{\i S_{\textsc{ph}} [\A,B]} \\
\label{sf-sum}
&\lora& Z(\sfr) = \int [\d\sfls] \prod_{\vh\in\Vhat} \sfo_{\vh}(\sfls_{\vh}) \prod_{\vb\in\Vbar} \sfo_{\vb}(\sfls_{\vb}) \prod_{v\in\V}\sfo_v(\sfls_v) \;.
\ea
on the spin-foam molecule $\sfr = \sfr_\cp$ (\dref{molecule}) which is the subdivision of the 2-skeleton of the dual $\cps$ (\cf theory explication \ref{sf-theory}).

%running over variables...
%and local amplitudes $\sfo_{v/\vb/\vh}$ associated to the vertices (and edges/faces) in the molecule. ...]

The presentation  will be along the following lines.
First, I will give the derivation for the discrete topological (BF) path integral in the way common for SF models, but formulated in the more precise language of discrete exterior calculus %on arbitrary combinatorial manifolds 
(\ref{sec:bf}).
Then, in \sec{molecule-formulation}, I will reformulate this path integral revealing the underlying molecular structure of a SF state sum.
Finally, in \sec{gravitational}, I discuss the inclusion of the gravitational part, that is the application of simplicity constraints on the quantum state sum.

%==============================================================

\subsection{Discrete BF theory}
\label{sec:bf}

%As a covariant path integral for the action of GR, discrete quantum gravity is essentially based on quantization methods of BF theory. \cdo{though BF theory can be quantized in different ways too}
The definition of SF models relies essentially on a quantization method for topological BF theory based on a cellular decomposition of the underlying spacetime manifold $\stm$, \ie a definition of the BF partition function
%
%[Ultimately, one considers the quantum theory to be prior to the classical theory in its semiclassical limit.
%In the case of quantum gravity, a quantization procedure motivates why in quantum theory degrees of freedom and its dynamics are considered to be gravitational.
%
%BF theory is a topological theory since only flat connections but no local degrees of freedom occur in the dynamics.
%
%[
%The easiest way to understand this is a formal integration of the path integral,
\[\label{bf-partition-function}
Z_\bft(\stm) = \int \D \A \D B \;\e^{\i\int_\stm tr(B\wedge F[\A])} \;.
%= \int \D\A \; \delta(F[\A])
\]
In the following I will demonstrate how this partition function can be defined directly for combinatorial manifolds. 

%The state sum runs over flat connections only which are all gauge equivalent.
%(all solutions are equivalent with respect to either local gauge or global topological gauge transformations)
%In this sense it depends only on the manifold, in particular its topology.
%Equivalently it can be considered on a triangulation $\T$ or any more general cellular decomposition of $\stm$.
%]
According to the discrete exterior calculus for fields on complexes (\sec{calculus}), let us consider a combinatorial $\std$-dimensional manifold $\cp$.
%[
%Instead of a simplicial complex $\simc$ with realization $\T$ or a more general cellular decomposition, one can directly define the state sum as a functional of arbitrary combinatorial $\std$-manifolds.
Then, the $B$ field takes values on $(\std-2)$-cells $\clp{\std-2} \in \cpp{\std-2}$, 
\[\label{discreteB}
B_x \mapsto B_{\clp{\std-2}} = \bra B | \clp{\std-2} \ket .
\]
The wedge product $B\wedge F$ is commonly understood in the discrete such that the 2-form $F$ lives on 2-cells $f=\csp2 = \star\clp{\std-2}$ in $\cps$, the dual complex%
\footnote{In contrast to this, attempts to define a wedge product in the mathematical literature of discrete exterior calculus \cite{Desbrun:2005ug} are build on the cup product of algebraic topology.
In that case all terms in the product are exterior forms on the primal complex.},
\[\label{discreteF}
F_x \mapsto \bra f | F \ket = \bra \clp{\std-2} | F \ket
\]
such that one obtains a fundamentally discrete form of the BF action as a functional of combinatorial $\std$-manifolds $\cp$:
\[
\int_\stm \tr(B\wedge F) \mapsto
 \sum_{\clp{\std-2} \in \cp} \tr \left[ \bra B | \clp{\std-2} \ket \bra \clp{\std-2} | F \ket \right]
% = \sum_{\clp{\std-2} \in \cp} \tr \left[ B_{\clp{\std-2}} F_{\star\clp{\std-2}} \right] .
= \sum_{f\in\cpsp2} \tr\left[\bra B | f \ket \bra  f | F \ket \right] .
\]
This defines the BF partition function \eqref{bf-partition-function} more explicitly specifying the formal integration measures $\D\A$ and $\D B$.
%To obtain the discrete curvature $(f|F)$ on dual faces $f\in\cpsp2$,
For consistency with the definition of its curvature $F$ \eqref{discreteF},
the connection $\A$ is defined as a discrete 1-form on dual edges $e=\csp1\in\cpsp1$ %(labelled by $\vb\in\Vb$) 
such that (with the slight abuse of notation $B_\cl = \bra B | \cl \ket$ etc., in contrast to \eqref{eq:primal-smearing})
\begin{eqnarray}
Z_{\bft}(\cp) 
&=& \prod_{e\in\cpsp{1}} \int \d\A_{e} \prod_{f\in\cpsp{2}}
\left(\int \d B_{f}\, \e^{ \i \, \tr(B_{f} F_{f}[\A_{e}])}\right) \,.
\end{eqnarray}
The field $B$  can formally be integrated out
\[
Z_{\bft}(\cp) = \prod_{e\in\cpsp{1}} \int \d\A_{e} \prod_{f\in\cpsp{2}} \dla( F_{f}[\A_{e} ])
%Z_\bft (\sfr) = \prod_{\vb\in\Vb} \int \d\A_\vb \prod_{\vh\in\Vh} \delta ( F_\vh [\A_\vb ])
\]
showing that the integration is effectively only over flat connections, implemented here by the delta function $\dla$ on the Lie algebra $\la$.

%\cjt{in practice the most used discretization of the curvature is in terms of the holonomy, which is what appears directly in the exponent. (in the flux representation, for example, this corresponds directly to a precise choice of quantization map for the fluxes and star-product).}
%\cdo{but the Duflo map makes the log of the holonomy appear there, i.e. the curvature.}

For the combinatorial $\std$-manifold $\cp$ a change of variables from connections to holonomies proves especially practical.
%"curvature is the differential of the holonomy action at the identity of the holonomy group"
On a smooth manifold $\stm$, a holonomy is the path ordered exponentials of the connection integrated along a closed curve $\gamma$ starting and ending at a point $x\in\stm$,
\[
H_x [\gamma,\A] = \Pcal \e^{-\oint_\gamma \A} .
\]
At each point $x$ these form a subgroup of the gauge group $G$.
In these terms, the curvature at $x$ is the differential of holonomy at identity (according to the Ambrose-Singer theorem \cite{Nakahara:2003vx}).
That is, infinitesimally, \ie for curves $\gamma$ along which the connection $\A$ is varying slowly,
\[
H_x [\gamma,\A] \approx \id - F_x [\A] .
\]
%On $\cps$, as well as more generally on $\sfr$, t
In the discrete analogue, the smallest closed curves on the dual complex $\cps$ are those around faces $f\in\cpsp2$ %labelled by $\vh$ 
such that the delta function for curvature translates into a delta function $\dg$ on the group $G$
\[\label{deltaH}
\dla( F_{f}[\A ]) \lora \dg(H_{f}[\A])
\]
for the discrete version of holonomy
\[\label{curvature-holonomy}
H_{f}[\A] := H_v [\bs f,\A]
= \Pcal_v \e^{-\sum_{e < f} \bc{f}{e} \A_{e} }
= \Pcal_v \prod_{e < f} \left(\e^{- \A_{e} }\right)^{\bc{f}{e}}
\]
where the discrete path-ordered product  $\Pcal_v$ is simply determined by a start vertex $v<f$, though the delta function \eqref{deltaH} is independent of this starting point. %$v\in\cpsp0$ of the holonomy.
This motivates the change to group variables on edges $e\in\cpsp1$
\[
h_{e} =  \e^{- \A_{e} }\in \G
% \e^{- \A_\vb } \equiv
\]
which are called \emph{holonomy} variables as well in quantum gravity. 
The state sum running over these degrees of freedom is thus
\[\label{BF-group}
\boxd{
Z_{\bft}(\cp) = \prod_{e\in\cpsp{1}} \int \d h_{e} \prod_{f\in\cpsp{2}}  \dg \left( {\prod}_{e < f} {h_{e}}^{\bc{f}{e}} \right)
}
\]
where $\d h_e$ is the Haar measure on $G$ and the path ordering in the product $\prod_{e<f}$ of group elements %in the delta function
is left implicit here and in the following.
This state sum has the standard form of a lattice gauge theory \cite{Oeckl:2005wg} with particular face weight $w_f = \dg$.

The crucial idea for further simplification of this state sum is to transform into the representation space of the group $G$. %, leading to the denotation as \emph{spin} foams.
According to the Peter-Weyl theorem, square integrable functions on a compact topological group can be decomposed into a direct sum of its irreducible unitary representations $\rep{}$.
This works in particular  for the delta function
\[
\dg(h) = \sum_\rep{} \dj{\rep{}}\, \tr_{\rep{}} (D^\rep{}(h) )
\]
where $\dj{\rep{}}$ is the dimension of the irreducible unitary representation $\rep{}$ and $D^{\rep{}}(h)$ denote its representation matrices.
Since the matrices $D^{\rep{}}(h)$ are furthermore group homomorphisms, they commute with group multiplication such that
\[\label{delta-expansion}
 \dg \left( {\prod}_{e < f} h_{e} \right)	
%\dg(H_{f}) = \sum_\rep \dj_\rep \tr(D^\rep(H_{f})
= \sum_{\rep{f}} \dj{\rep{f}} \, \tr_{\rep{f}} \left[D^{\rep{f}} \left({\prod}_{e} h_{e} \right) \right]
= \sum_{\rep{f}} \dj{\rep{f}} \, \tr_{\rep{f}} \left[ \left({\prod}_{e} D^{\rep{f}}(h_{e}) \right) \right]
\]
where here and in the following the relative orientations $\bc{f}{e}$ are left implicit.
This factorization (inside the traces) allows to separate the path integral into a product of single integrals
\[\label{projector}
P_{e} (\{\rep{f}\}):= \int \d h_{e} \prod_{f > e} D^{\rep{f}}(h_{e})
\]
which are projectors on the $G$-invariant subspace of the tensor product of representation due to the group averaging. %, thus intertwining between these tensor products of representations.
This simplifies the state sum to
\[\label{bf-rep}
Z_{\bft}(\cp) = \sum_{\{\rep{f}\}} \prod_{f\in\cpsp{2}} \dj{\rep{f}} \, \tr_{\rep{f}} \prod_{e\in \cpsp1} P_{e}(\{\rep{f}\})\;.
\]

Finally, an expansion of the projectors $P_e$ in the intertwiner basis of the respective tensor product of representations results in a factorization of the face traces $\tr_{\rep f}$ which leads to a SF state sum \eqref{sf-sum} with local amplitudes. 
To illustrate this %the meaning of the SF state sum 
and explain the relevance of its combinatorial and algebraic ingredients 
I discuss as explicit examples briefly the standard case of simplicial BF theory in three and four dimensions.

\subsubsection*{Simplicial BF theory in three dimensions}

In $\std = 3$ dimensions the SF state sum of BF theory with group $G=SO(3)$ is particularly simple.
On a simplicial 3-complex $\cp$ the edges $e\in\cps$ are dual to triangles.
Therefore the corresponding integral \eqref{projector} contains three representations for adjacent dual faces $f = 1,2,3$ and can be evaluated using $3j$ symbols
\[
%P_{e}(\rep1,\rep2,\rep3) =
\int \d h_{e}  D^{\rep1}_{m_1 n_1}(h_{e})D^{\rep2}_{m_2 n_2}(h_{e})D^{\rep3}_{m_3 n_3}(h_{e})
 =  \threej 1 2 3 m \threej 1 2 3 n \,.
\]
In this case $P_{e}$ is obviously a projector and the $SO(3)$-invariant subspace of the triple tensor product is one-dimensional.
%\cjt{in SU(2) issue of complex conjugates, consequence for flatness etc.}
Moreover one observes a factorization of the integral into two parts, one for each vertex of the edge.
Thus, the traces in the state sum reduce to pairings of the two representations of a face adjacent to the same vertex. 

On the simplicial complex this results in a combination of four $3j$ symbols according to the structure of the simplicial boundary graph $\bbg_{3,\simplicial} = \bs\sfa_{3,\simplicial}$ (\dref{simplicial}, see \fig{dualtet}) which is the $6j$ symbol \cite{NIST}:
\begin{figure}%[htb]
  \centering
  \tikzsetnextfilename{dualtet}
  \begin{tikzpicture}
% \draw [<->] (1.2,0) -- (1.7,0);
% \draw [<->] (-1.2,0) -- (-1.7,0);

  %\node at (0,2) {$ \sixj{12}{13}{14}{34}{24}{23} =  $};

  \begin{scope}[xshift=0cm,scale = 1.8]
  \node [vb, label=left:3]%$\threej{13}{23}{34}m$]	
  (3)	at (-1.1,-.2){};
  \node [vb, label=-90:4]%$\threej{14}{24}{34}m$]	
  (4)	at (0,-.8)	{};
  \node [vb, label=right:1]%$\threej{12}{13}{14}m$]	
  (1)	at (1,0) 	{};
  \node [vb, label=90:2]%$\threej{12}{23}{24}m$]	
  (2)	at (.2,1)	{};
  \foreach \i/\j in {1/2,2/4,2/3,1/4}{
    \draw [eb] (\i) -- node[vh, label=0:$\rep{\i\j}$] {} (\j); %$\sum_{m_{\i\j}}$] {} (\j);
    }
  \foreach \i/\j in {1/3,3/4}{
    \draw [eb] (\i) -- node[vh, label=below:$\rep{\i\j}$] {} (\j); %\sum_{m_{\i\j}}$] {} (\j);
    }
  \end{scope}

%\begin{scope}
%\node [c]					(v)	at (0,-.19)		{};
%\node [c,label=0:1]			(1)	at (.5,0)	 	{}; 
%\node [c,label=90:2]			(2)	at (-.12,.12)	{}; 
%\node [c,label=180:3]		(3)	at (-.32,-.14)	{}; 
%\node [c,label=-90:4]			(4)	at (0,-.84)		{};
%\node [c,label=0:(12)]		(12)	at (.55,.55)	{};
%\node [c,label=90:(13)]		(13)	at (.21,0)		{};
%\node [c,label=0:(14)]		(14)	at (.77,-.83)	{};
%\node [c,label=135:(23)]		(23)	at (-.73,.31)	{};
%\node [c%,label=-90:(24)
%				] 		(24)	at (-.17,-.51)	{};
%\node [c,label=200:(34)]		(34)	at (-.5,-1.06)	{};
%\foreach \i/\j in {1/2,1/3,1/4,2/3,2/4,3/4}{
% \path	[f] 	(\i) -- (\i\j) -- (\j) -- (v) -- cycle;
% }
% \foreach \i in {1,2,3,4}{
%  \draw [e] (\i) -- (v);
%  }
%\foreach \i/\j in {1/2,1/3,1/4,2/3,2/4,3/4}{
% \draw 	[eh]	(v)		-- 	(\i\j);
% \draw	[eb] 	(\i) node[vb] {} -- (\i\j) node[vh] {};
% \draw 	[eb] 	(\j) node[vb] {} -- (\i\j) node[vh] {};
% }
%\draw [e] (3) node[vb] {} -- (v) node[v] {};
%
%\node [c,label=-90:(24)] 		(24)	at (-.17,-.51)	{};
%%\draw (0,1.35)		-- (.43,-1.37);
%%\draw (0,1.35)		-- (1.11,-.27);
%%\draw (0,1.35)		-- (-1.44,-.73);
%%\draw (.43,-1.37)	-- (1.11,-.27);
%%\draw (.43,-1.37)	-- (-1.44,-.73);
%%\draw (1.11,-.27)	-- (-1.44,-.73);
%\end{scope}

\begin{scope}[xshift=-6.cm,scale=1.5]
\node [c]					(v)	at (0,-.19)		{};
\node [c,label=0:1]			(1)	at (.5,0)	 	{}; 
\node [c,label=90:2]			(2)	at (-.12,.12)	{}; 
\node [c,label=180:3]		(3)	at (-.32,-.14)	{}; 
\node [c,label=-90:4]			(4)	at (0,-.84)		{};
\node [c,label=0:(12)]		(12)	at (.55,.55)	{};
\node [c%,label=90:(13)
				]		(13)	at (.21,0)		{};
\node [c,label=0:(14)]		(14)	at (.77,-.83)	{};
\node [c,label=135:(23)]		(23)	at (-.73,.31)	{};
\node [c%,label=-90:(24)
				] 		(24)	at (-.17,-.51)	{};
\node [c,label=200:(34)]		(34)	at (-.5,-1.06)	{};
\foreach \i/\j in {1/2,1/3,1/4,2/3,2/4,3/4}{
 \path	[f] 	(\i) -- (\i\j) -- (\j) -- (v) -- cycle;
 }
 \foreach \i in {1,2,3,4}{
  \draw [e] (\i) -- (v);
  }
\foreach \i/\j in {1/2,1/3,1/4,2/3,2/4,3/4}{
 \draw 	[eh]	(v)		-- 	(\i\j);
 \draw	[eb] 	(\i) node[vb] {} -- (\i\j) node[vh] {};
 \draw 	[eb] 	(\j) node[vb] {} -- (\i\j) node[vh] {};
 }
\draw [e] (3) node[vb] {} -- (v) node[v] {};
\draw (0,1.35)		-- (.43,-1.37);
\draw (0,1.35)		-- (1.11,-.27);
\draw (0,1.35)		-- (-1.44,-.73);
\draw (.43,-1.37)	-- (1.11,-.27);
\draw (.43,-1.37)	-- (-1.44,-.73);
\draw (1.11,-.27)	-- (-1.44,-.73);
\end{scope}

\end{tikzpicture}
\caption{\label{fig:dualtet} Left: a tetrahedron with the 2-skeleton of its dual which is the spin-foam atom $\sfa_{3,\simplicial}$. 
Right: the corresponding boundary graph $\bbg_{3,\simplicial}=\bs\sfa_{3,\simplicial}$ with face representation $\rep{ij}$ associated with edges on this boundary graph.}
\end{figure}
\ba\label{6j}
\sixj{12}{13}{14}{34}{24}{23} 
& = & \sum_{m_{ij}} (-1)^{\rep{34}+\rep{24}+\rep{23}+m_{34}+m_{24}+m_{23}}  \threej{12}{13}{14}m \\
& & \times \threej{12}{23}{24}m \threej{13}{23}{34}m \threej{14}{24}{34}m \nonumber \;.
\ea
Upon these modifications the state sum simplifies to
\[\label{BF-3d}
Z_{\bft}(\cp) = \sum_{\{\rep f \}} \prod_{f \in\cpsp{2}} (-1)^{c(\rep f)} \dj{\rep{f}} \prod_{v\in\cpsp0} \{6j \}_v
\]
where $\{ 6j \}$ refers to the above $6j$ symbol \eqref{6j} and $c(\rep f)$ is some linear combination of representation labels.
This partition function is explicitly of the form of a SF state sum $\eqref{sf-sum}$.
It is the partition function of the  Ponzano-Regge model \cite{Ponzano:1968wi} for $G=SO(3)$, already discussed in the form \eqref{pr-model} in \sec{dynamical-relations} from the point of view of quantum Regge calculus.
In particular, it is the state sum of Euclidean quantum gravity in $\std=3$ since the BF action is classically equivalent to GR, the only change being the choice of ``Euclidean" gauge group $SO(3)$  instead of the 3-dimensional Lorentz group $SO(2,1)$.

\subsubsection*{Simplicial BF theory in four dimensions}

Algebraically, GR-related BF theory in $\std=4$ does not differ substantially from the $\std = 3$ case when one chooses $G=Spin(1,3)\cong SL(2,\C)$ or $Spin(4)\cong SU(2)\times SU(2)$ which leads to classically equivalent formulations in the Plebanski formulation \cite{Baez:2000kp}.
For a simple model, I will thus stick to $G=SO(3)\cong SU(2)$ here.
% decompose into a double of $SO(3)$ representations.

Combinatorially, on the other hand, already for the simplest case of an abstract simplicial 4-complex $\cp$, the intertwiners are not trivial but there is an infinite basis to be summed over.
Group averaging four $SO(3)$ representation matrices yields
\ba
%\cjt{P_{e}(\rep1,\rep2,\rep3,\rep4) } =
\int \d h_{e}  D^{\rep1}_{m_1 n_1}(h_{e})D^{\rep2}_{m_2 n_2}(h_{e}) D^{\rep3}_{m_3 n_3}(h_{e}) D^{\rep4}_{m_4 n_4}(h_{e}) &  \\ \nonumber
=  \sum_{\rep e}  \sum_{m_e,n_e}  \threej 1 2 e m \threej 3 4 e m & \threej 1 2 e n  \threej 3 4 e n \;.
\ea
Reordering of face traces $\tr_{\rep f}$ in \eqref{bf-rep} into contractions along the dual boundary graphs $\bbg_{4,\simplicial}$ of the 4-simplices in the simplicial complex (\fig{15j}) leads to a contraction of ten $3j$ symbols, two for each of the five edge projectors $P_e$.
Such a contraction at a dual vertex $v$ is a $15j$ symbol $\{15j\}_v$ as it depends on fifteen representations, ten for the faces and five for the edge intertwiners. 
Accordingly, the resulting partition function
\[\label{BF-4d}
Z_{\bft}(\cp) = \sum_{\{\rep f \}} \sum_{\{\rep e \}} \prod_{f \in\cpsp{2}} \dj{\rep{f}} \prod_{v\in\cpsp0} \{15j \}_v
\]
has a similarly compact form as in $\std=3$ dimensions.

\begin{figure}%[htb]
  \centering
  \tikzsetnextfilename{15j}
  \begin{tikzpicture}[scale=2.5]
    \node [vb,label=90:$\rep1$]	(1)	at (0,0.82)		{};
    \node [vb,label=180:$\rep2$]	(2)	at (-1,.1) 		{}; 
    \node [vb,label=-90:$\rep3$]	(3)	at (-0.64,-1) 	{}; 
    \node [vb,label=-90:$\rep4$]	(4)	at (0.64,-1)	{}; 
    \node [vb,label=0:$\rep5$]	(5)	at (1,.1)		{}; 
    \node at (.3,-.6) 	{$\rep{35}$};
    \foreach \i/\j in {2/5,3/4,1/2,1/3,1/5,2/3,2/4,4/5,1/4}
      \draw [eb] (\i) -- node [auto] 	{$\rep{\i\j}$} 	(\j);
    \foreach \i/\j in {2/5,3/4,3/5,1/2,1/3,1/5,2/3,2/4,4/5,1/4}
      \draw [eb] (\i) -- node [vh] 	{}			(\j);
  \end{tikzpicture}
  \caption{The fifteen representations associated with the dual boundary graph $\bbg_{4,\simplicial}$ of a combinatorial 4-simplex.}
  \label{fig:15j}
\end{figure}
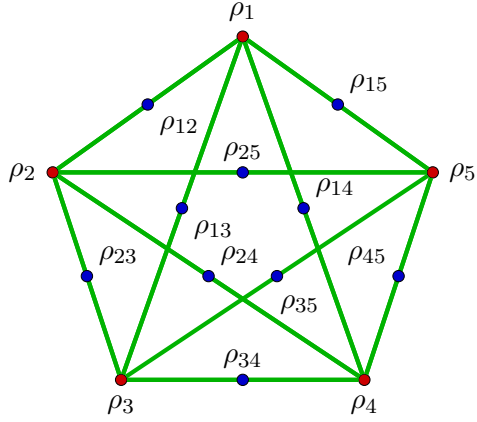

\

Apart from the concise form of the partition function there is no need for a restriction to simplicial complexes for SF models.
The same modification of the BF path integral \eqref{bf-rep} into a state sum with vertex amplitudes \eqref{BF-3d}, \eqref{BF-4d} are possible for any polyhedral complex, and even more, fore any (dual) 2-complex.
For an arbitrary number of representations, the projector \eqref{projector} evaluates to a sum over $3j$ symbols associated with either one of the edge ends. 
These symbols are contracted at each vertex according to its combinatorial type which is just the spin-foam atom as introduced in \sec{atoms}. 
Thus, a generic BF partition function actually depends only on the spin-foam molecule $\sfr = \sfr_\cp$ of a combinatorial $\std$-complex, which is the subdivision of its dual 2-skeleton.
Moreover, for this reason there is a straightforward generalization of the partition function to arbitrary SF molecules $\sfr\in\sfrs$.  %as this is the only combinatorial structure it is based on.
In the next section, I will present a formulation of SF models which shows the spin-foam-molecule structure of the partition function more explicitly from the beginning.

\subsection{Spin-foam molecule formulation}
\label{sec:molecule-formulation}

%\cdo{maybe at some point you could briefly mention/discuss the possibility that spin foam amplitudes may not admit any factorization in elementary atoms. Although I cannot come up with an explicit example, I also cannot come up with a proof that this is impossible in principle.}

It is illuminating to reformulate the state sum's construction emphasizing the underlying spin-foam-molecule structure (cf. \cite{\KKL}).
This can be done in a constructive way according to the constructive nature of their combinatorial structure as introduced in \sec{atoms}.

Take as a starting point the Hilbert space of gauge-invariant boundary states of the present discrete gauge theory
\[\label{state-space}
\hs = \bigoplus_{\bg\in\bgs} \hs_\bg \quad \textrm{ where } \hs_\bg = L^2(G^{|\E_\bg|} / G^{|\V_\bg|})\;.
\]
As mentioned before (\sec{dynamical-relations}), this space is very similar to the kinematical state space %\eqref{lqg-state-space} 
in LQG with the main difference that it is a simple direct sum of Hilbert spaces (instead of a projective limit, or the union of spaces related by cylindrical consistency) over abstract graphs (without any embedding into a background manifold $\sm$).

A complete and orthogonal basis of \eqref{state-space} is given by spin-network states \cite{Rovelli:2004wb}.
A \emph{spin network} $(\bg,\rep{\eb},\intw{\vb})$ is defined by assigning $\G$-representations $\rep{\eb}$ to all edges $\eb$ and intertwiners $\intw{\vb}$ to all vertices $\vb$ in a closed graph $\bg\in\bgs$ (equipped with some orientation $\bc{\eb}{\vb}$ such that each $\intw{\vb}$ is a G-invariant map from representation spaces on edges $\eb$ with $\bc{\eb}{\vb}=-1$ to those with $\bc{\eb}{\vb}=+1$). 

Contracting all the intertwiners according to the graph structure yields an amplitude
\[\label{sn-amplitude1}
\sfo(\bg,\rep{\eb},\intw{\vb}) := \tr_{\rep{\eb}} \bigotimes_{\vb\in\bg} \intw{\vb}\;.
\]
This spin network amplitude is invariant under change of graph orientation.
Quite generally, observables in the theory are orientation-independent, though orientability of graphs and 2-complexes is necessary.

Spin networks can be generated from intertwiners like (bisected) boundary graphs  are generated from patches ($\bbgs = \sigma(\bps)$, prop. \ref{prop:generator}).
Each intertwiner $\intw{\vb}\equiv\intw{\bp_\vb}$ inherits the combinatorial structure of a patch $\bp_\vb$ when considered as  mapping between representations $\rep{\vb\vh}$ on half edges $(\vb\vh)\in\bp_\vb$. 
A set of intertwiners $\{\intw{\bp_\vb}\}$ generates then a spin network $(\bbg,\rep{\vb\vh},\intw{\bp_\vb})$ %on a boundary graph $\bg\mapsto\bbg=(\Vb\cup\Vh,\Eb)$ 
if $\bbg=\sigma(\{\bp_\vb\})$ and if representations are pairwise identical, $\rep{\vb_1\vh}=\rep{\vb_2\vh}\equiv\rep{\vh}$, along all bisecting vertices $\vh\in\bbg$:
\begin{center}
\tikzsetnextfilename{delta1}
\begin{tikzpicture}
\draw [eb] 	(-2,0) node [vb,label=below:$\vb_1$] {}  --  node[auto] {$\rep{\vb_1\vh}$} 
			(0,0) node[vh,label=below:$\vh$] {} -- node[auto] {$\rep{\vb_2\vh}$}  
			(2,0) node[vb,label=below:$\vb_2$] {};
\draw [->] 	(3,0) -- node[auto] {%$\sfo_v$
} (5,0);
\draw [eb] 	(6,0) node [vb,label=below:$\vb_1$] {}  --  
			(8,0) node[vh,label=below:$\vh$,label=above:$\rep{\vh}$] {} -- 
			(10,0) node[vb,label=below:$\vb_2$] {};
\end{tikzpicture}
\end{center}

This construction is most explicit in the second quantized reformulation of LQG in terms of GFT where one-particle field excitations are exactly such patches labelled with algebraic data, and identifications of data are effected by appropriate superposition states \cite{Oriti:2013vv,Kittel:2014vo}.
In this notation, the spin network amplitude \eqref{sn-amplitude1} reads
\[\label{sn-amplitude2}
\sfo(\bbg,\rep{\vb\vh},\intw{\bp_\vb}) = \tr_{\rep{\vh}} \bigotimes_{\vb\in\bbg} \intw{\bp_\vb}\;.
\]

\

Spin-foam amplitudes are then constructed from spin-network amplitudes as the underlying spin-foam molecules are obtained from bonding atoms. 
The spin network amplitude \eqref{sn-amplitude2} is precisely the generic form of a vertex amplitude in the BF state sum, attached to the spin-foam atom $\sfa=\bulk(\bbg)$ via the bulk map $\bulk$ (definition \ref{def:sf-atom}). 
To distinguish them as such, all variables may further be labelled by the atom's vertex $v$, 
%thus
%\[
%\bulk : (\bbg,\rep{\vh},\intw{\bp_\vb}) \mapsto (\sfa,\rep{v\vb\vh},\intw{\bp_{v\vb}})
%\]
\ie representations $\rep{v\vb\vh}$ and intertwiners $\intw{\bp_{v\vb}}=\intw{\bp_\vb (\bbg)}$, as well as the atom and graph $\sfa_v = \bulk(\bbg_v)$ itself.
The $\{6j\}$ and $\{15j\}$ symbols in \eqref{BF-3d} and \eqref{BF-4d} are just the special cases of simplicial, \ie complete, graphs $\bbg_{3,\simplicial}$ and $\bbg_{4,\simplicial}$, %respectively.
\[\label{simplicial-sn-amplitudes}
\{6j\}_v %\equiv \{6j\}(\rep{v\vb\vh}) 
= \sfo(\bbg_{3,\simplicial},\rep{v\vb\vh}) \quad , \quad 
\{15j\}_v %\equiv \{15j\} (\rep{v\vb\vh},\intw{\bp_{v\vb}}) 
= \sfo(\bbg_{4,\simplicial},\rep{v\vb\vh},\intw{\bp_{v\vb}}) \;.
\]

A set of spin networks $\{(\bbg_v,\rep{v\vb\vh},\intw{\bp_{v\vb}})\}_{v\in V}$ %labelled by their corresponding atom's vertices $v\in V$ 
defines then a \emph{spin-foam} $(\sfr,\rep{v\vb\vh},\intw{\bp_{v\vb}})$ %(also called \emph{spin foam configuration} more specifically)
if there are bonding maps $\gm_j$ such that $\sfr = \sharp_{\{\gm_j\}} \{\bbg_v\}$ and if intertwiners are compatible with these bondings depending on the particular model. 
For BF theory the compatibility condition requires 
\[\label{BF-compatibility}
\intw{\bp_{v_1\vb}}=\intw{\bp_{v_2\vb}}\equiv\intw{\vb} \quad \textrm{ for every bonding } 
\gm:\bp_{v_1\vb}\ra\bp_{v_2\vb} \text{ in } \{\gm_j\} \;:
\]
\begin{center}
\tikzsetnextfilename{delta2}
\begin{tikzpicture}[scale=1.5]
\node [vb,label=above:$\vb$]		(a)	at (-0.5,0)		{}; 
\node [vh,label=above:$\vh$]		(1)	at (-.75,.5)		{};
\node [vh]		(2)	at (-.75,-.5)	{}; 
\node [vh,label=above:$\intw{\bp_{v_1\vb}}$]		(3)	at (-1,0)		{}; 
\node [vb,label=above:$\vb$]		(b)	at (0.5,0)		{}; 
\node [vh,label=above:$\vh$]		(4)	at (.75,.5)		{};
\node [vh]		(5)	at (.75,-.5)		{}; 
\node [vh,label=above:$\intw{\bp_{v_2\vb}}$]		(6)	at (1,0)		{}; 
\foreach \i/\j in {a/1,a/2,a/3,b/4,b/5,b/6}{
 \draw [eb] (\i) --  (\j);
  }
\path	(1) 	edge [bh] 				(4)
	(2)	edge [bh]		 		(5)
	(3)	edge [bh, bend right=20]	(6)
	(a)	edge [bb]				(b);
\draw [->] 	(2,0) -- node[auto] {%$\sfo_\vb$
} (3,0);	
\begin{scope}[xshift=5cm]
\node [vb,label=-80:$\vb$,,label=80:$\intw{\bp_{\vb}}$]		(a)	at (-0.5,0)		{}; 
\node [vh,label=above:$\vh$]		(1)	at (-.75,.5)		{};
\node [vh]		(2)	at (-.75,-.5)	{}; 
\node [vh]		(3)	at (-1,0)		{}; 
\foreach \i/\j in {a/1,a/2,a/3}{
 \draw [eb] (\i) --  (\j);
  }
\end{scope}
\end{tikzpicture}
\end{center}
Note that this implies in particular that all representations on the patches are identified pairwise $\rep{v_1\vb\vh}=\rep{v_2\vb\vh}\equiv\rep{\vb\vh}$ according to the bonding $\gm$.
With the identifications of representations on each spin network this results in a single representation $\rep{\vh}$ for each $\vh\in\Vh$, in analogy to \eqref{bf-rep}.
Thus, one can define the \emph{BF spin-foam amplitude} for a BF spin foam $(\sfr,\rep{v\vb\vh},\intw{\bp_{v\vb}})$ to be
\[ 
\sfo_\bft(\sfr,\rep{v\vb\vh},\intw{\bp_{v\vb}}) = \prod_{\vh\in\Vhat} \dj{\rep\vh} \prod_{v\in V} \sfo(\bbg_v,\rep{v\vb\vh},\intw{\bp_{v\vb}}) \;
\]
and the BF partition function is simply the sum of these amplitudes over identified re\-presentations $\rep{\vh}$ and intertwiners $\intw{\vb}$.
Since the partition function in this form is explicitly a sum over internal spin networks, it is a ``state sum" in a particularly meaningful way.%
\footnote{Partition functions for topological field theories and SF models are usually called ``state sums" in the first place simply because they are considered as statistical partition functions where then states correspond to \emph{spacetime} configurations; the present formulation adds another meaning since the path integral here is really a sum over internal $\sd$-dimensional (``\emph{spatial}") states.
}

\

It is useful to render the model-dependent compatibility conditions \eqref{BF-compatibility} explicit in the state sum, thus summing over generic, model-independent spin foams without any a-priori relations between the basic variables $\rep{v\vb\vh}$ and $\intw{\bp_{v\vb}}$.
Since these conditions relate variables on patches along bondings, they are formulated in terms of edge amplitudes $\sfo_\vb(\intw{\bp_{v_1\vb}},\intw{\bp_{v_2\vb}})$.
In the case of BF theory these are simply delta functions 
\[
\sfo_{\vb,\bft}(\intw{\bp_{v_1\vb}},\intw{\bp_{v_2\vb}}) = \delta(\intw{\bp_{v_1\vb}},\intw{\bp_{v_2\vb}})
\]
which identify at the same time the representations which the intertwiners  are defined on, as well as the particular intertwiner maps themselves.
The generalization of \eqref{bf-rep} to arbitrary spin-foam molecules is then
%One can further localize the variables $\rep f$ to variables $j_{v\vb\vh}$ on triangles of the subdivision $\ssub\cp$, and thus the corresponding spin foam molecule $\sfr$.
%Then (cf. \dref{sf-atom} for notation) 
\[
Z_{\bft}(\sfr) = \sum_{\{\rep{v\vb\vh}\}} \sum_{\{\intw{\bp_{v\vb}}\}}
\left( \prod_{\vh\in\Vhat} \dj{\rep\vh} \right)
\left( \prod_{\vb\in\Vbar}\delta(\intw{\bp_{v_1\vb}},\intw{\bp_{v_2\vb}})
\right)	%\prod_{(\vb\vh)\in\Eb} \delta_{\rep{v_i\vb\vh},\rep{v_j\vb\vh}} \right) 
\left( \prod_{v\in\V}		\sfo(\bbg_v,\rep{v\vb\vh},\intw{\bp_{v\vb}}) % \prod_{(v\vh)\in\Eh} \delta_{\rep{v\vb_i\vh},\rep{v\vb_j\vh}} 
\right) 
\]
where $\rep\vh$ in the face amplitude denotes the resulting variable after identification of all $\rep{v\vb\vh}$ at $\vh$.

An equivalent description of these models is the operator spin foam formulation \cite{Bahr:2011ey}.
%directly generalizes the BF path integral in group representation variables \eqref{bf-rep}.
There, a spin foam is an assignment of irreducible $\G$-representations to faces $f\in\cpsp2$ inducing $\G$-invariant intertwiner Hilbert spaces on edges $e\in\cpsp1$, together with a class of operators $P_e$ on these spaces.
These operators should be understood exactly as the maps instantiating the compatibility conditions \eqref{BF-compatibility} along bondings.
Again, this formalism extends \eqref{bf-rep} by allowing for operators $P_e$ different from the BF projectors \eqref{projector} in which intertwiners are simply identified.
%Otherwise, the path integral \eqref{bf-rep} is not changed.
In this way it allows relaxations of the flatness condition.

\subsubsection*{Alternative formulations and variables}

There are various similar formulations of SF models using slightly different variables each, but which are all based on the spin-foam molecule structure.
In the above derivation it is the change from holonomy to representation variables \eqref{delta-expansion} which effects the transition from edge variables to face variables, generalized further to variables on the boundary of spin-foam atoms.
Similar localization to variables on the atoms are equally well possible in holonomy variables (and also in $B$ field Lie algebra variables \cite{Baratin:2010in,Baratin:2012br}).
This is not only interesting for graviational models 
but in particular relevant for the field-theoretical description in terms of GFT as well as for compatibility with lattice gauge theory tools such as of coarse graining methods \cite{Dittrich:2012he,Dittrich:2012ba,Bahr:2013ek}.

%\begin{figure}
%  \begin{center}
%    \tikzsetnextfilename{holonomies}
%    \include{figures/holonomies.tikz}
%    \caption{
%    }
%    \label{fig:holonomies}
%  \end{center}
%\end{figure}

%The operator spin foam formulation \cite{Bahr:2011ey} directly generalizes the BF path integral in group representation variables \eqref{bf-rep}.
%A spin foam is thus an assignment of irreducible $\G$-representations to faces $f\in\cpsp2$ inducing $\G$-invariant intertwiner Hilbert spaces on edges $e\in\cpsp1$, together with a class of operators $P_e$ on these spaces.
%This formalism extends \eqref{bf-rep} by allowing for different operators $P_e$ than the BF projectors \eqref{projector}.
%Otherwise, the path integral \eqref{bf-rep} is not changed.

The so called holonomy spin-foam formulation \cite{Bahr:2013ek} takes the BF path integral over holonomy variables \eqref{BF-group} as a starting point and implements relaxations of the flatness condition by introducing additional group variables.
Instead of one holonomy $h_{e}$ per edge $e=(v_1v_2)\in\cpsp1$, there are two holonomies $h_{v_1\vb}$, $h_{v_2\vb}$ for the corresponding half-edges $(v_1\vb), (v_2\vb)$ as well as a group variable $g_{\vb\vh}$ for each incident face with subdivision vertex $\vh$.
While the holonomies are simply doubled in this way, the variables $g_{\vb\vh}$ are controlled by a factorizing edge amplitude $A_\vb(g_{\vb\vh}) = \prod_{(\vb\vh)}E(g_{\vb\vh})$ such that the path integral takes the general form
\[\label{holonomy-sf}
%Z(\cp) = 
Z(\sfr)
=  \int [\d h_{v\vb}]  \int [\d g_{\vb\vh} ]
%\prod_{(v\vb)} \int \d h_{v\vb} \prod_{(\vb\vh)} \int \d g_{\vb\vh} 
%\left[ 
\prod_{(\vb\vh)} E(g_{\vb\vh}) %\right] 
%\left[ 
\prod_\vh \dg \left( \prod_{\vb<\vh} (h_{v_1\vb} g_{\vb\vh} {h_{v_2\vb}}^{-1})^{\bc{\vh}{\vb}} \right) %\right] 
\]
where $\bc{\vh}{\vb}$ is just an alternative notation for the relative face-edge orientations in terms of their corresponding subdivision vertices.
This integral is already genuinely based on the subdivision of the (2-skeleton) of $\cps$, and thus on a molecule $\sfr=\sfr_\cp$.
Integrating out the group variables $g_{\vb\vh}$ yields a discrete gauge theory with effective face weights $w_\vh$ which trivialize to BF theory with $w_\vh=\dg$ for the special case of edge amplitudes with $E=\dg$. 
The holonomy spin-foam formulation is a special case of the operator spin-foam formulation due to the factorizing property of the edge amplitude \cite{Bahr:2013ek}.

From the GFT perspective, the most interesting form which the state sum \eqref{holonomy-sf} can be recast into \cite{Bahr:2013ek}  is one involving only edge and vertex amplitudes.
A change to group variables $g_{v\vb\vh}$ allows to rewrite the state sum as
\[\label{gft-sf}
Z(\sfr)
=  \int [\d g_{v\vb\vh} ]
\prod_{\vb} \Pbb(g_{v_1\vb},g_{v_2\vb}) %\right] 
%\left[ 
\prod_v \Vbb(g_v) %\right] 
\]
where $\Pbb(g_{v_1\vb},g_{v_2\vb})$ is an edge amplitude depending on all $g_{v\vb\vh}$ incident to the edge bisecting vertex $\vb$, thus involving two adjacent vertices $v_1$ and $v_2$, and $\Vbb(g_v) $ is a vertex amplitude depending on all $g_{v\vb\vh}$ incident to the vertex $v$.
Similar to the formulation in representation variables \eqref{bf-rep}, the edge function $\Pbb$ effectively depends only on combinations of variables $g_{v_1\vb\vh}g_{v_2\vb\vh}^{-1}$ and the vertex function $\Vbb$ depends only on combinations $g_{v\vb_1\vh}g_{v\vb_2\vh}^{-1}$. 
All details will be discussed in \sec{gft} where I will explain the field-theoretic generation of SF state sums  \eqref{gft-sf} in terms of GFT.

%==============================================================

\subsection{Gravitational models \label{sec:gravitational}}

%Discussion of simplicity constraints, in particular for the case of labelled simplicial (dually weighted) models, building on SF section in \cite{\ORT}

In 4-dimensional spacetime, GR dynamics are recovered from the BF action when adding simplicity constraints.
These are encoded in the Plebanski action \eqref{plebanski-holst-action} in terms of the Lagrange-multiplier term quadratic in the field $B$.
%There is a class of gravitational spin foam models in 4-dimensions ($D=4$) in the framework of simplicial molecules $\sfrs_{D,\simplicial}$,
There is a range of strategies how the resulting constraints can be applied in the quantum theory, most prominently described by the the models associated with the names Engle-Pereira-Rovelli-Livine (EPRL) \cite{\EPRL}, Freidel-Krasnov (FK) \cite{\FK} and Baratin-Oriti (BO) \cite{\BO}.
Moreover, any of these models is shown to permit in principle an extension to arbitrary spin-foam molecules $\sfrs$  \cite{\KKL}, called its KKL-extension.

For the present purpose, \ie the definition of GFT models for these gravitational SF models, the particular choices in each model are not as relevant as the general form of the simplicity constraints in the resulting SF amplitude.
For this reason I will explain in this section only the relevant generic aspects and illustrate them with one particular example.
%
%In this section, \cjt{we} shall first review gravitational spin foam models, in their generalized context %of their KKL-extension
%, giving explicit details of the EPRL model. \cjt{we} shall show that a multi-field GFT straightforwardly assigns these amplitudes in the perturbative expansion.  \cjt{we} shall then define a dually-weighted GFT model that assigns the same as effective amplitudes in the large-$M$ limit.

The constraint equations of motion for the Lagrange multiplier resulting from the classical smooth Plebanski action \eqref{plebanski-action} are  % 20 equations for 36 components
\[
\B{IJ} \wedge \B{KL} = \bvol\,\epsilon^{^{IJKL}}
\]
with  volume 4-form $\bvol = \frac1{4!} \epsilon_{_{MNOP}} \B{MN} \wedge \B{OP}$.
These are the classical simplicity constraints.
They have four sectors of solutions:
\ba
\B{IJ} & = & \pm {\epsilon^{^{IJ}}}_{_{KL}} \eu K \wedge \eu L \;, \\
\B{IJ} & = & \pm \eu I \wedge \eu J \;.
\ea
Inserting the first solution (with `+' sign) into the Plebanski action \eqref{plebanski-action} yields then the Palatini action of GR. In this sense they are equivalent.

On a simplicial molecule $\sfr \in \sfrs_{4,\simplicial}$,  the simplicity constraints can be implemented locally on each simplicial atom $\sfa_{4,\simplicial} \In \sfr$ using the canonical extension to a 4-simplex (remark \ref{rem:simplicial}). 
Accordingly, the field $\B{}$ lives on triangles $\s_2$ which are in one-to-one correspondence to the bisecting boundary vertices $\vh\in \sfa_{4,\simplicial}$. 
Moreover, corresponding to a Lie-algebra-valued 2-form they have the meaning of area 2-forms in the local frame of the 4-simplex. 
Indeed, the discrete simplicity constraint equations effect that in this sense the discrete version of the field $\B{}$ provides the geometry of a 4-simplex in flat spacetime, often called a ``geometric 4-simplex" \cite{Barrett:1998fp,Barrett:2000fr}. 
That is, the discrete field $B_\vh=(B|\vh)$ \eqref{discreteB} satisfying the discrete simplicity constraints induces a geometric interpretation to each $\sfa_{4,\simplicial} \In \sfr$, and thus to the simplicial molecule $\sfr$ as a generalized simplicial complex (remark \ref{rem:simplicial}) as well.

% Another consequence of a geometric 4-simplex: closure ("metric" compatibility in the continuum)
% Volume constraint follows from diagonal/cross-diagonal + closure.

For SF models, an alternative, slightly stronger version of these discrete quadratic simplicity constraints is typically used: the linear simplicity constraints \cite{Freidel:2008fv,Alexandrov:2008im,Gielen:2010ek} which, applied on each tetrahedron dual to a vertex $\vb$ on the boundary of atoms $\sfa_{4,\simplicial}$, read
\[\label{linear-simplicity}
\sum_{\vh \in\sfa_{4,\simplicial}} \B{IJ}_\vh N_{_J} (\vb) = 0
\]
where ${N_{_J}}(\vb)$ is the normal directing the tetrahedron in internal space.

The addition of the Holst term \eqref{holst-term} in the Plebanski-Holst action \eqref{plebanski-holst-action} results simply in a change to variables $\B{IJ} - \bi \frac1 2 {\epsilon^{^{IJ}}}_{_{KL}} \B{KL}$ (for $\bi>0$) in the linear simplicity constraints \eqref{linear-simplicity}.
These constraints are applied on the states in the state sum as operator equations which leads to a restriction of allowed representations to sum over, or to non-commutative $\delta$-functions in the $B$-field representation.

As a concrete example, consider the Euclidean version with local symmetry group $G=SO(4)$.
The restriction on the representations of $\mathfrak{g} = Lie(G) = \so(4) \cong \su(2)_+\times\su(2)_-$ is then then given by 
 \begin{equation}
      \mathcal{J} = \left\{J\in \textrm{Irrep}(\mathfrak{g}): J = (j_+,j_-) \;\textrm{with}\; \tfrac{j_-}{j_+} = \tfrac{|1-\bi|}{1+\bi}\;\textrm{and} \; j_\pm\in \N/2\right\}\;.
    \end{equation}
which are called $\bi$-simple representations. 
Choosing with hindsight the spin-foam formulation \eqref{gft-sf}, this yields for each pair of %edge $(v_1v_2)$ consisting of 
edges $(v_1\vb),(v_2\vb)$ in the spin-foam molecule $\sfr$ the amplitude
%Consider a vertex $\vb\in\Vbar$ that arises in the molecule after the bonding of two patches $\bp_1\cong\bp_2\equiv\bp$ and denote the two edges in $\E$ incident at $\vb$ by $(v_1\vb)$ and $(v_2\vb)$.  The operator associated to $\vb$, which is usually called the edge operator in the spin foam literature, is defined as follows: 
%\begin{defin}[{edge operator}]
%  The EPRL \emph{edge operator} associated to $\vb$ is:
\begin{equation}
  \label{edge-amplitude} 
  \sfo_{\vb}(g_{\vb}) \equiv \Pbb(g_{v_1\vb}, g_{v_2\vb})
  = \int_{G^{2}} \d h_{v_1\vb}\, \d h_{v_2\vb}\;\prod_{(\vb\vh)} 
  %\left[
   \pbb (g_{v_1\vb\vh},g_{v_2\vb\vh}; h_{v_1\vb},h_{v_2\vb})
   \]
that factorizes across the boundary edges $(\vb\vh)$ incident to the vertex $\vb$ into operators
  \[\label{edge-operator}
  \pbb (g_{v_1\vb\vh},g_{v_2\vb\vh}; h_{v_1\vb},h_{v_2\vb})
=  \sum_{J_{\vh}\in \mathcal{J}} \tr_{J_{\vh}}\left(g_{v_1\vb\vh}\;h_{v_1\vb}^{-1}\; \Sbb_{J_{\vh},N_0} h_{v_2\vb}\; g_{v_2\vb\vh}^{-1}\right) \;.
\]
Here,
%\begin{description}
%	\item[-] $\bp$ is the boundary patch associated to the vertex $\vb$ (it may contain loops);
%  \item[-] 
%  \item[-] 
  $\Sbb_{J,N}$ is the gauge-covariant simplicity operator
\[
\Sbb_{J,N} = \left\{
    \begin{array}{ll}
      \displaystyle \dj{J} \int_{\mathcal{S}_{N}^2}\d \vec{n}\; |j_+ \vec{n}\qket {|j_- \vec n\qket}\, \qbra j_+ \vec n| {\qbra j_- \vec n |}\;,& (\bi > 1)\\[0.4cm]
      \displaystyle \dj{J}\int_{\mathcal{S}_{N}^2}\d \vec{n}\; |j_+ \vec{n}\qket \overline{|j_- \vec n\qket}\, \qbra j_+ \vec n| \overline{\qbra j_- \vec n |}\;,& (\bi < 1)
    \end{array}
    \right.
\]
which is defined in terms of the representation dimension $\dj{J} = (2j_+ +1)(2j_- +1)$ 
and using $\su(2)$ coherent states%
\footnotetext{$\su(2)$ coherent states are defined as
$ |j\vec{n}\qket = n|jj\qket $
where $|jj\qket$ is the highest weight state in the irreducible representation $j$ of $\su(2)$ and $n = exp(\vec{n}\cdot\vec{\sigma})\in \SU(2)$ in terms of generators $\vec{\sigma}$.
}
 $|j_\pm \vec{n}\qket$ \cite{Livine:2007bq} which are summed by integration over $\mathcal{S}^2_{N}$, the unit 2-sphere in the 3-dimensional hypersurface perpendicular to the 4-vector $N$, in particular $N_0 = (1,0,0,0)$.
%Note that the edge operator factorizes across the boundary edges $(\vb\vh)$ incident to its bisecting vertex $\vb$.
%Moreover, these factors are independent of the edge being part of a loop.

%It is also worth elaborating on the gauge-covariance of the simplicity operator, since this is not usually emphasized in the literature.  The operator $\Sbb_{J,N}$ transforms as:
%\cdo{I still believe what \cjt{we} said here is true. however, notice that Alejandro says something very different in his review and his shorter previous paper on the new models.}
%\begin{equation}
%	\Sbb_{J,h\triangleright N} = h\;\Sbb_{J,N}\;h^{-1}\;,
%\end{equation}
%where $h\in\SO(4)$ and $h\triangleright N$ denotes the rotated 4-vector.
%%\end{remark}

%\begin{defin}[{vertex operator}]
Since both the simplicity operator   $\Sbb_{J,N}$ and the gauge averaging are already implemented in the edge amplitude \eqref{edge-amplitude}, the vertex amplitude for this model is simply
  \begin{equation}
    \label{vertex-amplitude}
    \sfo_v(g_v)  = \Vbb_{\bbg}(g_v) =   \prod_{(\vb_1,\vh),(\vb_2,\vh)\in\E_{\bbg} }%\substack{f\in\Fcal_\sfr\\ f\supset v}} 
    \delta(g_{v\vb_1\vh},g_{v\vb_2\vh})\;,
  \end{equation}
  where $\bbg=(\V_{\bbg},\E_{\bbg})$ is the bisected boundary graph associated to the spin-foam atom in $\sfr$ containing $v$. %, while $\vb_1$ and $\vb_2$ are the two vertices in $\Vbar_\bbg$ that share an edge with $\vh$.   
%Notice that this is just the vertex operator of a BF spin foam model. This confirms the general fact that the ingredients of a spin foam amplitudes can be freely shifted from vertex to edges, corresponding to a parallel shift from interaction to kinetic term in the corresponding group field theory formulation.
%%\end{defin}
%\begin{remark}[{\sc eprl} {\bf spin foam amplitude}]
%These operator kernels are the constitutents of a gravitational spin foam amplitude.  
%Their convolution, guided by the connectivity of the spin foam molecule $\sfr$, to which they are assigned, produces the amplitude $Z(\sfr)$ for that molecule.
Packing everything together, one obtains a particular gravitational model
\[  \label{gravitational-sf}
Z(\sfr) = \int [\d g]\prod_{\vb\in\Vbar} \sfo_{\vb}(g_{\vb}) \prod_{v\in\V}\sfo_v(g_v)\;.
\]
%\end{remark}

%\begin{remark}
The relevance of these amplitudes becomes clear when attaching a 4-dimensional re\-ference frame to each vertex $v$, $\vb$ and $\vh$ in each spin-foam atom $\sfa$ in the molecule $\sfr$.
Then, each holonomy $h_{v\vb}$ has the meaning of a parallel transport from $v$ to $\vb$, while the group elements $g_{v\vb\vh}$ are the parallel transports from $\vb$ to $\vh$ on the boundary of each spin-foam atom, hence the $v$-dependence. 
The ordered product of elements $h_{v\vb}$ arising in faces containing $\vh$ constitutes the holonomy representation of the curvature tensor \eqref{curvature-holonomy}.  
Because of the nontrivial edge amplitude \eqref{edge-amplitude} applying the simplicity constraints (via the operator $\Sbb_{J,N}$)  it is obvious that curvature is not restricted to be flat in this model such that geometric degrees of freedom are propagating.
%The group elements transport pre-geometric quantities,  which are encoded in the elements of the representation modules labelled by $J_{\vh}$. At the vertices $\vb$, simplicity constraints are applied to these elements , to ensure the propagation of a geometric subset of information.
% \end{remark}

\

Note that there is a straightforward extension of gravitational SF models to arbitrary spin-foam molecules $\sfr\in\sfrs$.
The restriction to simplicial molecules is only needed to obtain the discrete version \eqref{linear-simplicity} of the simplicity constraints on each simplicial atom corresponding to a 4-simplex. 
The resulting amplitudes \eqref{edge-amplitude}, however, are insensitive to this specification of combinatorics since they factorize over boundary edges. 
Therefore, they are defined for any spin-foam molecule $\sfr\in\sfrs$. 
This is indeed the rational of the KKL-extension for SF models as introduced in \cite{\KKL}.

There is however a caveat: the meaning of the simplicity constraints \eqref{linear-simplicity} is tightly related to the geometry of the 4-simplex as determined by area 2-forms on its boundary. 
When generalizing to other polytopes such a geometric interpretation via areas is nontrivial and a modification of the simplicity constraint, depending on the combinatorial structure of each specific polytope, is expected. 
It is not obvious at all whether simplicial edge amplitudes of the type  \eqref{edge-amplitude} applied to non-simplicial spin-foam molecules encode the information of simplicity constraint in the SF state sum in an appropriate way.

\

In this \sec{lgt-gravity} I have introduced SF models in three steps: first I have  reviewed their derivation as discrete path integrals for a BF-type action of GR. Then I have discussed in detail the origin and relevance of their molecular structure and discussed various equivalent formulations in alternative variables. Finally, I have sketched how simplicity constraints lead to gravitational amplitudes and presented their form with an explicit example.

%==============================================================

%\subsection{Relevance of discreteness}\label{sec:summing-refining}

%===========================================================================

%\section{States and observables of quantum geometry}\label{sec:states}
%
%Discussion of states, their geometry observables; in particular the coherent and superposed states used for the spectral dimension.

%===========================================================================

\section[Generating functions for spin foams]
{Generating functions for spin-foam molecules and amplitudes}
\label{sec:gft}

The crucial issue left open in SF models as such is their triangulation dependence, or more generally, their dependence on the underlying spin-foam molecule. 
While in topological theories, and in particular in BF theory which is the basis for SF models, the partition function depends by definition only on the underlying topology, this is no longer true for gravitational models.
Since one does not observe any effect of a preferred discrete spacetime structure in experiment, a reasonable requirement for a theory of quantum gravity is the smoothness and diffeomorphism invariance of GR in some limit or regime, coined the \emph{continuum limit}.
Thus arises the question how SF models can account for such a continuum limit.

In principle there are two possibilities:  
(a) one can show complex independence also for gravitational models, either directly in an approximative sense %\eqref{Rovelli:2014vg}
 (\ie that a finite (set of) measurements is sufficient to determine observables for a given degree of accuracy) 
 or using some notion of refinement of the complex-dependent amplitudes \cite{Dittrich:2012ba,Dittrich:2014ui}; %: cylindrical consistency, embedding maps, (almost) perfect discretizations
(b) the dependence on complexes is removed by average over a class of them in terms of some summing description. %GFT path integral, search for appropriate states with continuum interpretation

%\cdo{one however could take a more operational interpretation of a triangulation and interpret it as the finite set of measurements one makes of geometric quantities in any experiment. this is probably closer to the interpretation of continuum LQG.}

In this thesis I focus on the second alternative. 
This means, according to the theory explication \ref{sf-theory} in \sec{qg-theories}, that I understand SF models in the summing description \eqref{sf-path-integral-sum}:
%$\begin{defin}[{spin foam model}]
a \emph{spin foam (SF) model} is a model of a quantum theory defined by a molecule independent partition function
\begin{equation}\label{eq:fullsf}
%  Z_\sfm = \sum_{\substack{\sfr\in\sfrs\\[0.1cm] \bs\sfr=\emptyset}} W(\sfr)\, A(\sfr)\;,
  Z = \sum_{\bs\sfr=\emptyset} w_\sfr\, Z(\sfr)\;,
\end{equation}
which is a sum of molecule dependent partition functions $Z(\sfr)$ \eqref{sf-sum} over a subclass of closed spin-foam molecules $\sfr\in\sfrs$ with weights $w_\sfr$.
%\begin{description}
%  \item[-] $\sfrs$ is the set of spin foam molecules  from definition \ref{def:molecule}, or one of its subsets;  
%  \item[-] $A(\sfr)$ is the {\it spin foamamplitude} associated by the model to $\sfr$; and 
%  \item[-] $W(\sfr)$ is the {\it spin foam measure factor} that weights $\sfr$ in the sum over such molecules. 
%\end{description}
%%\end{defin}
%\begin{remark}
%The distinction between $W(\sfr)$ and $A(\sfr)$ might appear quite arbitrary. However, they are distinguished to highlight the fact that within the spin foam formalism, while derivations for certain amplitudes $A(\sfr)$ can be given,  one must prescribe $W(\sfr)$ by hand.  As one might expect, 
As mentioned above (\sec{conceptual-relations}), GFT provides a complete prescription for both $Z(\sfr)$ and $w_\sfr$. 
%\end{remark}

In this section I will review the way GFT provides such a complete SF model description for various subclasses of $\sfr$.
First, I present the usual GFT set-up as common in the literature (\sec{gftmain}). 
In \sec{generating-sf}, I recast the formalism using the combinatorial language of atoms and molecules.
%, with respect to its definition as a quantum field theory generating spin foam molecules.  
This will clarify how the graphs supporting LQG states occur, as well as the complexes supporting spin-foam amplitudes.

Then (\sec{simplicial-gft}) I will present the class of GFT  models that are standard in the literature. 
These are based on a single field and generate series catalogued by a specific subset of the unlabelled spin-foam molecules $\sfrs$.  
Via the interpretation given in \sec{std-complexes}, these are associated to $\copies$-dimensional simplicial structures.
Then, one can attempt to directly generate (larger subsets of) $\sfrs$. In this context, the first generalization is effected simply by broadening the type of interaction terms in the theory while keeping a single field. Such models are already common in the GFT  literature. \cite{Bonzom:2012bg,\gftrenorm,\COR}.

To generate all molecules in $\sfrs$ in the most straightforward fashion, the GFT field space has to be infinitely extended. 
I will make this explicit in \sec{multi-field}, before I will present in the next \sec{dw-gft} a much more elegant and efficient class of GFTs which reproduce the same results in terms of a single group field.

%==============================================================

\subsection{Group field theory}
\label{sec:gftmain}

The general definitions and components of GFTs, as one finds them in the literature \cite{\gftFO,Krajewski:2012wm,Baratin:2012ge,Oriti:2009ur,Oriti:2014wf}, are the following. 
%\begin{defin}[{group field}]
  A \emph{group field} is a function of $\copies\in\N$ copies of a Lie group $G$
  \begin{equation}
    \label{eq:field}
    \phi:G^{\times\copies}\lora\R\;.
  \end{equation}
%\end{defin}
%
%\begin{defin}[{group field theory}]\label{def:gft}
A \emph{group field theory} is a quantum field theory for a group field, defined by a partition function 
\begin{equation}
 \label{eq:partition}
Z_{\gft} = \int \Dcal\phi\; e^{-S[\phi]}\;,
\end{equation}
where $\Dcal\phi$ denotes a (formal) measure on the space of group fields, while the action functional takes the form
\begin{equation}
   S[\phi] = \frac12\int [\d g]\; \phi(g_1)\;\Kbb(g_1, g_2)\;\phi(g_2) + \sum_{i\in I}\lambda_i \int [\d g]\;\Vbb_i\big(\{g_j\}_{J_i}\big)\;\prod_{j\in J_i}\phi(g_{j})\;.
 \label{eq:action}
\end{equation}
$\Kbb$ is the kinetic kernel, $\Vbb_i$ are interaction kernels of a certain combinatorially non-local type,  while $I$ and $J_i$ are finite sets indexing the interactions and the number of fields in the $i$th interaction, respectively.  Meanwhile, $[\d g]$ represents the appropriate number of copies of the measure on $G$ and $\{\lambda_i\}_{I}$ is the set of coupling constants.\footnote{There is an analogous set of actions for complex group fields and of course, one can define models involving several such fields.}  
%\end{defin}

%\begin{remark}[{kinetic kernel}]
  The kinetic kernel is a real function with domain $G^{\times 2n}$ that (in some model dependent manner) pairs arguments according to $(g_{1a},g_{2a})$ with $a \in \{1,\dots, \copies\}$:
  \begin{equation}
    \Kbb(g_1, g_2)  = \Kbb(g_{11},g_{21};\dots;g_{1\copies},g_{2\copies})
  \end{equation}
%\end{remark}

%\begin{remark}[{vertex kernels and combinatorial non-locality}]\label{rem:CombinNonLocal}
The {\it combinatorial non-locality} satisfied by the GFT interaction kernels is a property effected through  pairwise convolution of the field arguments.  It is the main peculiarity of GFTs in contrast to local quantum field theories on spacetime. In more detail, the GFT  interaction kernels do not impose coincidence of the points,  in the group space $G^{\times\copies}$,  at which the interaction fields are evaluated.  Rather, the totality of field arguments from the smaller group space $G$ occurring in a given action term (that is $\copies\times|J|$ for an interaction term with $|J|$ group fields) is partitioned into pairs and the kernels convolve such pairs, \ie
  \begin{equation}
  \Vbb\big(\{g_j\}_J\big)  = \Vbb\big( \{g_{ja}g_{kb}^{-1}\}\big)
  \end{equation}
  where $j,k\in J$, $a,b\in\{1,\dots,\copies\}$ and $(ja, kb)$ is an element of the pairwise partition of the set $J\times\{1,\dots,\copies\}$. 
  The specific combinatorial pattern of such pairings determines the combinatorial structure of the Feynman diagrams of the theory.  It will be one of the main foci in later discussions, %which details such patterns 
both in the standard GFT  models and, later on, in the generalized class of models.
%\end{remark}

%Such a theory can be considered as a specific class of tensor field theories which generalize matrix models, i.e. tensors with indices in a Lie group $G$. 
%On the other hand, the quantum theory has another meaning as a second quantization of LQG  \cite{Oriti:2013vv} and as sum over spin foams \cite{Pietri:2000dt, Reisenberger:2000ue}.
Besides this combinatorial peculiarity, the structure of  GFT is similar to ordinary QFT. %; the main features follow. 
%\begin{defin}[{quantum observables}]  \label{def:QuantumObservables}
  \emph{Quantum observables}, $O[\phi]$,  are functionals of the group field. 
%\end{defin} 
  In particular, the kinetic and interaction terms are quantum observables. Due to their functional form, they motivate interest in a subset of polynomial functionals of the field:
%  \begin{defin}[{\trace\ observables}] 
a \emph{\trace\ observable} is a polynomial functional of the group field that satisfies combinatorial non-locality (since all group elements are traced over pairwise).  Thus, they have the generic form:
\begin{equation}
  \label{eq:obsdef}
  O[\phi] \equiv \int [dg]\; \bbB\big(\{g_j\}_{J}\big)\; \prod_{j\in J} \phi(g_j)\;,\quad\quad\textrm{where}
  \quad\quad \bbB\big(\{g_{j}\}_{J}\big) = \bbB\big(\{g_{ja}g_{kb}^{-1}\}\big)
\end{equation}
and $(ja,kb)$ is an element of the pairwise partition of the set $J\times \{1,\dots,\copies\}$.
%\end{defin}

%\begin{remark}[{estimating observables}]
  One can estimate expectation values of quantum observables using perturbative techniques.  
  %For example, the 
  In this way, the evaluation of an observable $O[\phi]$, expanded with respect to the coupling constants $\{\lambda_i\}_I$, leads to a series of Gaussian integrals evaluated through Wick contraction.  
  The patterns of contractions are catalogued by Feynman diagrams $\fdiagram$
\begin{eqnarray}
 &&\qbra O\qket_{\gft} = \frac{1}{Z_{\gft}}\int \Dcal\phi\; O[\phi]\;e^{-S[\phi]} \nonumber \\
  &&= \frac{1}{Z_{\gft}} \int \Dcal\phi\;O[\phi] \sum_{\{c_i\}_I} \prod_{i\in I}\frac{\lambda_i^{c_i}}{c_i!}\Big[\int [\d g]\Vbb_i (\{g_j\}%_{J_i})
  \prod_{j\in J_i}\phi(g_{j})\Big]^{c_i} \e^{-\frac12\int[\d g]\phi(g_1)\Kbb(g_1,g_2)\phi(g_2)}\nonumber\\
&&= \sum_{\fdiagram} \frac{1}{\sym(\fdiagram)} A(\fdiagram;\{\lambda_i\}_I)\;,
\label{eq:expand}
\end{eqnarray}
where $\sym(\fdiagram)$ are the combinatorial factors related to the automorphism group of the Feynman diagram $\fdiagram$  and $A(\fdiagram;\{\lambda_i\}_I)$ is the weight of $\fdiagram$ in the series. The Feynman amplitudes $A(\fdiagram)$ are constructed by convolving (in group space) propagators $\Pbb = \Kbb^{-1}$ and interaction kernels. 
%In this section, however, the focus lies solely on the combinatorial aspects of the GFT  perturbative expansion. Discussion of specific models is postponed to \sec{sfm}. 
%\end{remark}

%\begin{remark}[{stranded diagrams}]
  The stranded-diagram representation of the Feynman diagrams $\fdiagram$ is immediate. With reference to remark \ref{rem:stranded},  one associates a coil $\coil$ with $\copies$ vertices to each occurrence of the field $\phi(g)$ in the integrand.
  In an interaction term, the fields represent a set of coils $\coils$, while the combinatorial non-locality property of the interaction kernel encodes the set of reroutings $\reroutings$.  Thus, each interaction term represents a stranded atom $\sta = (\coils, \reroutings)$. 
The kinetic term, through its involvement in the Wick contraction, is responsible for the bonding of these stranded atoms. Then, the perturbative expansion is quite clearly catalogued by stranded molecules. 
Through the bijection outlined in remark \ref{rem:stranded} the stranded diagrams map to spin-foam atoms and molecules. % However, \cjt{we} shall describe this in detail from scratch in a moment.
An explicit formulation of GFT in terms of the latter from scratch follows in a moment.
%\end{remark}

\

%\begin{remark}[{quantum geometric interpretation}]
%  In \sec{sfm}, \cjt{we} shall concentrate \cjt{our} attention on the EPRL quantum gravity \gft.
%  However, \cjt{we} provide some
While one can define a GFT for any of the gravitational SF models described in \sec{gravitational}, an alternative interpretation of GFT as a candidate for quantum gravity is to understand the theory as providing models of quantum or random geometry.  
The Feynman diagrams $\fdiagram$ generated by a GFT, being stranded molecules, can already be understood as combinatorial complexes.
Even though these are two-dimensional, they can be extended to $\std$ dimensions (although this enhancement is a subtle issue, see \sec{std-complexes}).
%    about which \cjt{we} have made some comments in section \ref{sec:std-complexes}).  
  
Keeping to $\std$-dimensional language, the group fields correspond to $(\std-1)$-dimensional building blocks of $(\std-1)$-dimensional topological structures, the \trace\ observables. 
In a similar manner, the interaction terms in the action correspond to the $\std$-dimensional building blocks for $\std$-dimensional topological structures cataloguing the terms of the perturbative expansions. 

  Then, the estimation of observables $\qbra O_{1}\dots O_{N}\qket$ via perturbative expansion yields a sum over $\std$-dimensional complexes whose boundaries are precisely the $N$ complexes of $\std-1$ dimensions  encoded by observables. In other words, one is calculating the correlation of the $N$ $(\std-1)$-dimensional complexes. 

The intention of both the data contained in the group $G$ and the kernels (boundary $\bbB$, kinetic $\Kbb$ and interaction $\Vbb$) is to transform all these combinatorial-topological statements above into quantum geometrical ones.  More precisely, using results from LQG and SF  models they may be interpreted as one of the following: the discrete gravitational connection, the discrete fluxes of the conjugate triad or the eigenvalues of fundamental quantum geometric operators like areas and volumes. 
%\end{remark}

%==============================================================

\subsection{Generating spin-foam molecules and amplitudes}
\label{sec:generating-sf}

According to the equivalence to stranded molecules (\ref{rem:stranded}) one can recast this GFT formalism in terms of spin-foam molecules, detailed in \sec{molecules}:  
  \begin{enumerate}
    \item A set of group fields is indexed by a set of patches
      \begin{equation}
	\Phi = \{\phi_{\bp}\}_{\bps}\;,\quad\quad\textrm{where}\quad\quad \phi_{\bp}:G^{\times |\E_{\bp}|} \lora \R\;,
      \end{equation}
      and $\bp = (\{\vb\}\cup\Vhat_{\bp},\E_{\bp})$. 
\item  The set of \trace\ observables is indexed by the set of bisected boundary graphs
  \begin{equation}
    \Ocal = \{O_{\bbg}\}_{\bbgs}\;,\quad\quad\textrm{where}\quad\quad O_{\bbg}[\Phi] = \int [\d g]\; \bbB_{\bbg}\big(\{g_{\vb}\}_{\Vbar}\big)\prod_{\vb\in\Vbar} \phi_{\bp_{\vb}}(g_{\vb})
  \end{equation}
  and $\bbg = (\Vbar\cup\Vhat, \E_{\bbg})$.  The patches of $\bbg$ are in correspondence with vertices of $\Vbar$ and one has that $g_{\vb} = \{g_{\vb\vh}: (\vb\vh)\in\E_{\bbg}\}$. Combinatorial non-locality is realized using the bisecting vertices $\Vhat$. Each such vertex has a pair of incident edges and thus they encode a pairwise partition of the data set $\{g_{\vb}\}_{\Vbar}$. Conversely, a pairwise partition of this data set determines a graph $\bbg$.   
  Thus, the graphs in $\bbgs$ catalogue the combinatorially non-local configurations.
  % To see that a graph $\bbg$ determines a pairwise partition of the field arguments $\{g_{\vb}\}_{\Vbar}$ and vice versa, note that the  

\item[ ] Likewise, the set of vertex interactions is indexed by $\bbgs$
  \begin{equation}
    \lambda_{\bbg}\int [\d g]\; \Vbb_{\bbg}\big(\{g_{\vb}\}_{\Vbar}\big)\prod_{\vb\in\Vbar} \phi_{\bp_{\vb}}(g_{\vb})\;.
  \end{equation}
  As a result of the bijection in proposition \ref{prop:atoms-graphs}, the interaction terms can be interpreted as generating spin-foam atoms $\sfa = \bulk(\bbg)$.

\item[ ]
  The kinetic term, through its role in the Wick contractions occurring in later perturbative expansions,  is responsible for the bonding of patches compatible according to the compatibility condition
$\bp\equiv\bp_1\cong\bp_2$ of definition \ref{def:bonding}.
Thus,
  \begin{equation}
  \label{eq:kinetic}
    \frac12\int[\d g]\; \phi_{\bp}(g_{\vb_1})\;\Kbb_{\bp}(g_{\vb_1}, g_{\vb_2})\;\phi_{\bp}(g_{\vb_2})\;,\quad\quad\textrm{where}\quad \Kbb_{\bp}(g_{\vb_1},g_{\vb_2}) = \Kbb(\{g_{\vb_1\vh}, g_{\vb_2\vh} \})\;
  \end{equation}
  is a function of group elements for each $(\vb_i\vh)\in\E_{\bp}$.
%  The pairing of the edges in the kinetic kernel identifies the edges of the patch and vertex identification follows by the compatibility condition of definition \ref{def:bonding}. 
\item
  Then, the partition sum
  \begin{equation}
    Z = \int \Dcal\Phi \;e^{-S[\Phi]}
  \end{equation}
  defines a generic model specified by an action
\end{enumerate}
  \begin{equation}
  \boxd{
    S[\Phi] = \frac12\int[\d g]\; \phi_{\bp}(g_{\vb_1})\Kbb_{\bp}(g_{\vb_1}, g_{\vb_2})\phi_{\bp}(g_{\vb_2}) + 
    \sum_{\bbg\in\bbgs}\lambda_{\bbg}\int [\d g] \Vbb_{\bbg}\big(\{g_{\vb}\}_{\Vbar}\big)\prod_{\vb\in\Vbar} \phi_{\bp_{\vb}}(g_{\vb})\;.
    }
  \end{equation}
\begin{enumerate}  
\item[4.]
Sums and products of \trace\ observables can be estimated perturbatively, generating series of the type:
\begin{eqnarray}
  \qbra O_{\bbg_1}\dots O_{\bbg_l}\qket &=& \frac{1}{Z}\int \Dcal\Phi\; O_{\bbg_1}[\Phi]\dots O_{\bbg_l}[\Phi]\;e^{-S[\Phi]} \nonumber \\
 % &=& \frac{1}{Z_{\gft}} \int \Dcal\phi\;O[\phi] \sum_{\{c_i\}_I} \prod_{i\in I}\frac{1}{c_i!}\Bigg[ \lambda_i \int [\d g]\;\Vbb_i\Big(\{g_j\}_{J_i}\Big)\;\prod_{j\in J_i}\phi(g_{j})\Bigg]^{c_i} e^{-\frac12\int[\d g]\;\phi(g_1)\;\Kbb(g_1,g_2)\;\phi(g_2)}\nonumber\\
  &=& \sum_{\substack{\sfr\in\sfrs\\[0.1cm] \bs\sfr = \sqcup_{i = 1}^{l}\bbg_{i}}} \frac{1}{\sym(\sfr)} A(\sfr;\{\lambda_{\bbg}\}_{\bbgs})\;.
\end{eqnarray}
Thus, the Feynman diagrams generated by GFTs are actually better characterized as spin-foam molecules. 
\end{enumerate}
Using the above index, one can catalogue the generalized classes of GFT  models that make contact with the set of spin-foam molecules $\sfrs$.

\

%\begin{remark}[{generalization and control}]
  It is worth noting some advantages %motivations for considering such
  of such a generalized concept of  GFT  models: 
%  \begin{itemize}
   % \item
First, as one can see above, there is no technical obstacle whatsoever, within the GFT  formalism,  to passing from a single-field GFT  to a multi-field GFT  (indexed by some set of patches) and new interaction terms (indexed by some set of bisected boundary graphs).  Such choices generate broader classes of spin-foam molecules, as one might wish from an LQG  perspective. 
  
Given the facility with which such generalized GFTs are defined, a real issue is rather to pinpoint some criterion, for selecting one model over another. Other important issues centre on settling  \textit{i}) whether or not one is able to control analytically or numerically the dynamics of such generalized GFTs  and  \textit{ii}) whether or not such control is improved by one choice of combinatorics over another. Indeed, these issues should also be posed from the SF perspective. 

A common choice in the SF and GFT  literature is to restrict to spin-foam atoms and molecules with a $\std$-dimensional simplicial interpretation.  This choice could be motivated as being more  \lq fundamental\rq, in the sense that one can triangulate more general complexes but not vice versa, and as being simpler than other alternatives.

%\item
  Moreover, generalized GFTs already exist in the literature. Indeed, so-called \emph{invariant tensor models}, which are in essence single-field GFTs with a specific subset of generalized interactions \cite{Bonzom:2012bg}, have been the setting for most studies on GFT  renormalization \cite{\gftrenorm,Carrozza:2014ee,Carrozza:2014bh,Carrozza:2014tf} and for analysis using tensor model techniques \cite{Gurau:2012hl}. 

%These models give Feynman diagrams which are dual to 2-skeleta of face-simplicial combinatorial complexes, that is $\std$-complexes whose $\std$-cells have simplicial boundaries but are for the rest arbitrary \cite{uncoloring}. Moreover, such cells are dual to so-called bubbles of the standard simplicial GFTs whose contribution to the Feynman amplitudes is crucial for the scaling and renormalization properties of the same \cite{uncoloring}.  

%\item
Finally, even in models starting with simplicial interactions only, one should expect the quantum dynamics to generate new effective interactions with generalized combinatorics. In turn, these new interaction terms should then be taken into account in the renormalization flow of the simplicial models. Again, the issue is not whether such combinatorial generalizations can be considered, but how one should deal with them in the quantum dynamics of the theory.
%\end{itemize}
%\end{remark}

%==============================================================

\subsection{Simplicial group field theory}\label{sec:simplicial-gft}

It is worth presenting in more detail the special case of GFT based on
$\copies$-simplicial structures $\sfas_\rs$, thus generating molecules in $\sfrs_\rs$. %from section \ref{sssec:special}. 
As I have argued in \sec{std-complexes}, such structures have a simplicial interpretation. They correspond to a particularly simple choice of combinatorics for the GFT  action and represent a class of models that are by far the most used in the quantum-gravity literature. 

%The parameter $\copies$ is set to the dimension $\std$ of the spacetime to be reconstructed via the GFT  dynamics. 
To generate $\std$-dimensional spacetime structures, the group field must depend on $\copies = \std$ copies of the group. 
Thus, simplicial group field theory is defined in the following way :
\begin{enumerate}
  \item The group field corresponds to the unique % unlabelled %primitive
   $\std$-patch $\bp_\std$,
    \begin{equation}
      \phi\equiv\phi_{\bp_\std}:G^\std \lora \R\;.
    \end{equation}
  \item The pairing of field arguments in the interaction kernel is based upon the unique %unlabelled 
  simplicial $\std$-graph $\bbg_\drs$ (\ie the complete graph over $D+1$ vertices)%, which allows to abbreviate notation,
    \begin{equation}
      \label{simplicial-vertex}
      \Vbb_{\bbg_\drs}(g) =  \Vbb(\{g_{ij}{g_{ji}}^{-1}\})\;, \quad\quad \textrm{with}\quad\quad i<j\;.
    \end{equation}
   % Henceforth, when dealing with graphs based upon $K_{D+1}$, the markers 
   where $i,j = 1,\dots,D+1$ index the $D+1$ vertices $\Vbar$ and thus the patches of $\bbg_\drs$. The bisecting vertices are labelled by $(ij)$. %\footnote{The parenthesis signifies that both $(ij)$ and $(ji)$ mark the same bisecting vertex.}  
    The edge joining the vertex $i$ to the vertex $(ij)$ is denoted by $ij$, while the edge joining the vertex $j$ to the vertex $(ij)$ is denoted by $ji$.    

%  \item In the kinetic kernel, the data indices are abbreviated to $g_1\equiv g_{\vb_1}$ and $g_2\equiv g_{\vb_2}$. 

  \item The action is therefore specified by
    \begin{equation}
      \label{eq:SimplicialAction}
S[\phi] = \frac12\int [\d g]\; \phi(g_1)\;\Kbb(g_1, g_2)\;\phi(g_2) 
+\lambda \int [\d g]\; \Vbb_{\bbg_\drs}(g) \prod_{j=1}^{D+1}\phi(g_j).
    \end{equation}
    where data indices are abbreviated to $g_j\equiv g_{\vb_j}$.
    %up to the precise form of the kinetic and interaction kernels. 
    The combinatorics of propagator and simplicial vertex kernel, both as boundary graphs and stranded diagrams,  are illustrated in \fig{simplicialdiagrams} for $\std=3$ (compare also \fig{dualtet}). 
    %Therein is drawn both the bisected boundary graph realization, alongside the usual stranded diagram representation.
    
\item There is a distinguished subclass of \trace\ observables $O_{\bbg}[\phi]$ indexed by  $\std$-regular loopless graphs $\bbg \in\bbgs_\drl$. This stems from the property that each graph in $\bbgs_\drl$ arises as the boundary of some spin-foam molecules in $\sfrs_\drs$, while the boundary of every spin-foam molecule in $\sfrs_\drs$ is a collection of graphs in $\bbgs_\drl$.
\item[] The perturbative expansion of the partition function (the GFT  vacuum expectation value) leads to a series catalogued by closed spin-foam molecules $\sfr\in\sfrs_\drs$, $\bs\sfr = \emptyset$. Meanwhile, the evaluation of a generic observable $O_{\bbg}[\phi]$ leads to a series catalogued by spin-foam molecules with boundary $\bbg$, \ie $\sfr\in\sfrs_\drs$ with $\bs \sfr = \bbg$. 
\end{enumerate} 

\begin{figure}%[htb]
  
  \centering
  \tikzsetnextfilename{simplicialdiagrams}
  \begin{tikzpicture}[scale=1.3]
  %propagator
  \draw [<->] (1.4,0) -- (1.8,0);
%  \node [vb, label=180:1]	(1)	at (-.5,0) 	{};
%  \node [vb, label=0:2]	(2)	at (1,0)	{}
%    edge [eb, bend right=70]	node[vh, label=90:	$(12)^1$] {}	(1)
%    edge [eb]				node[vh, label=90:	$(12)^2$] {}	(1)
%    edge [eb, bend left=70]		node[vh, label=90:	$(12)^3$] {}	(1);
\node [vb, label=135:1]	(a)	at (-0.5,0)		{}; 
\node [vh]		(1)	at (-.75,.5)		{};
\node [vh]		(2)	at (-.75,-.5)	{}; 
\node [vh]		(3)	at (-1,0)		{}; 
\node [vb, label=45:2]	(b)	at (0.5,0)		{}; 
\node [vh]		(4)	at (.75,.5)		{};
\node [vh]		(5)	at (.75,-.5)		{}; 
\node [vh]		(6)	at (1,0)		{}; 
\foreach \i/\j in {a/1,a/2,a/3,b/4,b/5,b/6}{
 \draw [eb] (\i) --  (\j);
  }
\path	(1) 	edge [bh] node[label=above:$\gamma$] {} (4)
	(2)	edge [bh]		 		(5)
	(3)	edge [bh, bend right=20]	(6)
	(a)	edge [bb]				(b);
  \begin{scope}[xshift=2.7cm]
  \foreach \i in {0,180}{
    \draw [cs, rotate=\i] (.6,-.3) rectangle (.4,.3);
    }
  \node [vs]			(11)	at (-.5,.2)	{};
  \node [vs, label=180:1]	(12)	at (-.5,0)	{};
  \node [vs]			(13)	at (-.5,-.2)	{};    
  \node [vs]			(21)	at (.5,.2)	{};
  \node [vs, label=0:2]	(22)	at (.5,0)	{};
  \node [vs]			(23)	at (.5,-.2)	{};  
  \path
  \foreach \i in {1,2,3}{
    (1\i) edge [bh]  (2\i)
    };
  \end{scope}

 %interaction   
  \begin{scope}[xshift=5.5cm]
    \draw [<->] (1.4,0) -- (1.8,0);
  
  \node [vb, label=90:3]	(3)	at (-1.1,-.2){};
  \node [vb, label=0:4]	(4)	at (0,-.8)	{};
  \node [vb, label=90:1]	(1)	at (1,0) 	{};
  \node [vb, label=-0:2]	(2)	at (.2,1)	{};
  \foreach \i/\j in {1/2,2/4,1/4}{
    \draw [eb] (\i) -- node[vh, label=0:(\i\j)] {} (\j);
    }
  \draw [eb] (3) -- node[vh, label=below:(34)] {} (4);  
  \draw [eb] (2) -- node[vh, label=above:(23)] {} (3);
  \draw [eb] (1) -- node[vh] {} (3);%, label=auto:(13)] {} (3);
  \end{scope}  
  \begin{scope}[xshift=8.8cm] 
  \foreach \i in {0,90,180,270}{
    \draw [cs, rotate=\i] (1.1,-.3) rectangle (.9,.3);
    }
  \node [vs]			(12)	at (1,.2)	{};
  \node [vs, label=0:1]	(13)	at (1,0)	{};
  \node [vs]			(14)	at (1,-.2)	{};    
  \node [vs]			(21)	at (.2,1)	{};
  \node [vs, label=90:2]	(24)	at (0,1)	{};
  \node [vs]			(23)	at (-.2,1)	{};    
  \node [vs]		 	(32)	at (-1,.2)	{};
  \node [vs, label=180:3]	(31)	at (-1,0)	{};
  \node [vs]			(34)	at (-1,-.2)	{};    
  \node [vs]			(41)	at (.2,-1)	{};
  \node [vs, label=270:4]	(42)	at (0,-1)	{};
  \node [vs]			(43)	at (-.2,-1)	{};    
  \path
  (13) edge [es]  	 (31) % 	node[label=225:(13)]  {}	 (31)
  (24) edge [es]   	 (42)  %	node[label=45:(24)] 
  \foreach \i/\j in {14/41,21/12,32/23,43/34}{
  (\i) edge [es,bend right=50] node [auto] {(\j)} (\j)
  };
  \end{scope}
  \end{tikzpicture} 
  \caption{
  Equivalent representation of combinatorics of propagator and simplicial interaction for a $D=3$ GFT  in terms of bisected boundary graphs as well as stranded diagrams.}
  \label{fig:simplicialdiagrams}
  
\end{figure}
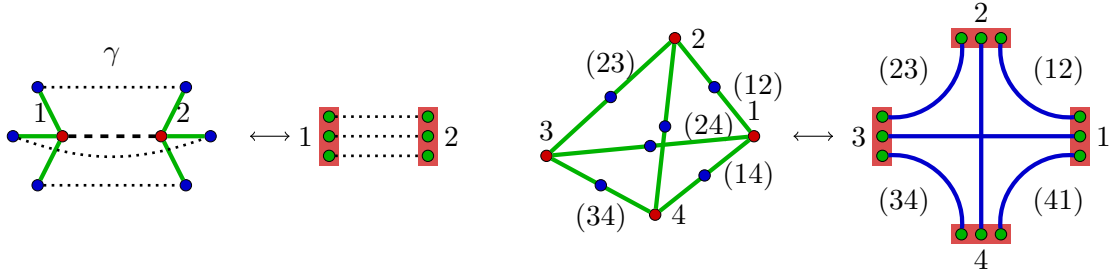

%\begin{remark}
  To translate the points made in \sec{std-complexes} into GFT  language, notice that the spin-foam molecules $\sfrs_\drs$ are interpretable as locally simplicial in $\std$ dimensions.  Thus, the group field corresponds to a $(\std-1)$-simplex,  the interaction term corresponds to a $\std$-simplex, while the kinetic term, through its role in Wick contraction, corresponds to the gluing of $\std$-simplices along shared $(\std-1)$-simplices.

  Note that the GFT  action prescribes only the bonding of the spin-foam atoms along patches. 
  This corresponds to rules for identifying boundary $(\std-1)$- and $(\std-2)$-simplices.  
  % It does not specify uniquely the gluing rules for the full $\std$-dimensional information. 
  The induced full $\std$-dimensional information in general implies identifications of vertices in single $\std$-complexes such that the resulting spin-foam molecule does not correspond to a simplicial pseudo-manifold in the strict sense.
  As mentioned in remark \ref{rem:coloured}, 
  %there are two ways around this limitation.  The first is to add information by hand, which is rather unsatisfactory. The second 
  a way around this limitation is to restrict to $(\std+1)$-coloured simplicial atoms with a colour for each patch in the boundary graph. %the so-called \emph{colored} structures $\sfrs_{\copies,coloured} \In \sfrs_\rs$. 
  %It is much more 
  This is very natural from the GFT  point of view, since for any given (simplicial) GFT  model generating elements of $\sfrs_\drs$, there is an associated model generating the restricted subclass of coloured $\std$-simplicial molecules (see the review \cite{Gurau:2012hl} for details). 
%\end{remark}

%\begin{remark}
It is straightforward to generalize the class of interaction terms to include those based on graphs from the set $\bbgs_\drl$. Since these are composed of unlabelled $\std$-patches, the GFT  remains dependent on a single group field, with the general action
  \begin{equation}
   \label{eq:SimplicialActionGeneralized}
  \boxd{
S[\phi] = \frac12\int [\d g]\; \phi(g_1)\;\Kbb(g_1, g_2)\;\phi(g_2) 
+\sum_{\bbg\in\bbgs_\drl}\lambda_{\bbg} \int [\d g]\; \Vbb_{\bbg}(g) \prod_{\vb\in\Vbar}\phi(g_{\vb})\;
  }
  \end{equation}
  where $\bbg = (\Vbar\cup\Vhat, \E_{\bbg})$.
Understanding the group field $\phi$ once more as a $(\std-1)$-simplex, the spin-foam atoms could still be given the interpretation of encoding $\std$-dimensional building blocks with locally simplicial $(\std-1)$-dimensional boundaries. All spin-foam molecules generated by this GFT  have boundaries in $\bbgs_\drl$.  

  Thus, it is clear that the generalized simplicial GFT \eqref{eq:SimplicialActionGeneralized} is insufficient for the purpose of defining dynamics for all LQG  quantum states with support in the larger space $\bbgs$. 
%\end{remark}
  
%==============================================================

\subsection{Multi-field group field theory}
\label{sec:multi-field}

An obvious strategy for generating series catalogued by (larger subsets of) $\sfrs$ is simply to increase the number of field species entering the model.  Such a scenario was already anticipated at the outset of the GFT approach to SF models \cite{Reisenberger:2000fj,Reisenberger:2001hd}. 
However, from a field-theoretic viewpoint, it is a rather unattractive strategy, since the more one wishes to probe quantum states on arbitrary boundary graphs in $\bbgs$, the larger the number of field species and interaction terms required.  
Thus, the resulting formalism is not easily controlled using QFT methods.  

Having said that, with appropriate kinetic and interaction kernels, a multi-field GFT assigns amplitudes to these broader classes of spin-foam molecules in the same manner as they are defined in the generalized constructions for SF models \cite{\KKL,\KLP}.
As a result, these GFT  models are at the same level of formality.
Multi-field GFTs are illustrated here simply for demonstration of the absence of any  impediment  {\it in principle}  to having such a GFT  formulation for the quantum dynamics of all LQG  states.

Explicitly, a multi-field group field theory is devised in the following manner:
\begin{enumerate}
  \item A subset of patches $\bps_{\sub}\In \bps$ defines a class of group fields %$\Phi_{\sub}\In \Phi$ is indexed by 
    \begin{equation}
      \label{eq:multifield}
      \Phi_{\sub} = \{\phi_{\bp}\}_{\bps_{\sub}}  .
    \end{equation}
  \item
    The bisected boundary graphs $\bbgs_{\sub} = \sigma(\bps_{\sub})\In \bbgs$ generated by $\bps_{\sub}$ induce a distinguished class of \trace\ observables %$\Ocal_{\sub}\In\Ocal$ : 
    \begin{equation}
      \label{eq:multiobservable}
      \Ocal_{\sub}  =\{O_{\bbg}\}_{\bbgs_{\sub}} \In \Ocal 
    \end{equation}
    which, in particular,  can be utilized as interaction terms in the action.

  \item
A class of action functionals is then specified by
\begin{equation}
  \label{eq:multiaction}
  S[\Phi_{\sub}] 
  = \sum_{\bp\in \bps_{\sub}}  \frac1 2 \int [\d g]\; \phi_{\bp}(g_1) \Kbb_{\bp}(g_1,g_2) \phi_{\bp}(g_2) 
  + \sum_{\bbg\in\bbgs_{\sub}}  \lambda_{\bbg} \int [\d g]\;\Vbb_{\bbg}(\{g_{\vb}\}%_{\Vbar})
  \prod_{\vb\in\Vbar}\phi_{\bp}(g_{\vb})
\end{equation}
where $\bbg = (\Vbar\cup\Vhat,\E_{\bbg})$.

  \item 
    In such a theory, the expectation value of an arbitrary product of observables takes the form
\end{enumerate}
\begin{equation}
  \label{eq:multipartition}
  \qbra O_{\bbg_1}\dots O_{\bbg_l}\qket_{\mfgft} 
  \equiv  \int \Dcal\Phi_{\sub}\; O_{\bbg_1}[\Phi]\dots O_{\bbg_l}[\Phi]\; e^{-S[\Phi_{\sub}]}  = \sum_{\substack{\sfr\in\sfrs_{\sub}\\ \bs\sfr   = \sqcup_{i = 1}^l \bbg_i}}  \frac{1}{\sym(\sfr)} A(\sfr;\{\lambda_{\bbg}\}_{\bbgs_{\sub}})\;.
\end{equation}

%\begin{remark}[{higher-dimensional interpretation}]
In this multi-field setting, the natural connection between a class of models and a particular value of spacetime dimension $\std$ is lost. %, the dimension of the reconstructed spacetime.  
Without doubt, it is difficult to identify precisely generalized classes of spin-foam molecules, such that the reconstruction of a $\std$-complex is always possible (and unique). In this non-simplicial setting, the restriction to \emph{coloured} structures is not available (to the best of my knowledge). Moreover, the set of gluing rules that one would need to specify at the outset grows with the generality of the boundary graphs and spin-foam atoms.
%\end{remark}

%\begin{remark}[{3-dimensional example}]\label{rem:3dexample}
As in illustrative example, consider the  following particular multi-field GFT  model %and attempt to provide it with a
which has a natural 3-dimensional interpretation:
  \begin{enumerate}
    \item
      %It is based on unlabelled 
      Choose $\copies$-patches with $3\leq \copies \leq \copies_{\text{max}}$ for some $\copies_{\text{max}}\in\N$:
      \begin{equation}
      \bps_{\sub} = \{\bp_\copies:3\leq \copies\leq\copies_{\text{max}}\}\;.
    \end{equation} 
    Then, the group field $\phi_{\bp_\copies}$ could be viewed as representing a 2-dimensional $\copies$-gon.  
  
  \item 
    A distinguished class of \trace\ observables is indexed by $\bbgs_{\sub} = \sigma(\bps_{\sub})$ and they may be interpreted as surfaces composed of polygons (up to the caveats discussed in \sec{std-complexes} where I have already stressed that reconstructing these surfaces is a subtle topic and extra information must be put in by hand).
    As a specific example, consider the \trace\ observable
    \begin{equation}
\label{eq:pyramid}
O_{\bbg}[\phi_{\bp_3},\phi_{\bp_4}] 
= \int [\d g]\; \bbB_{\bbg}(g)\; \phi_{\bp_4}(g_1)\,\phi_{\bp_3}(g_2)\,\phi_{\bp_3}(g_3)\,\phi_{\bp_3}(g_4)\,\phi_{\bp_3}(g_5) 
\end{equation}
where
\begin{equation}
\bbB_{\bbg}(g) = \bbB(g_{12}g_{21}^{-1},g_{23}g_{32}^{-1},g_{34}g_{43}^{-1},g_{14}g_{41}^{-1},g_{15}g_{51}^{-1},g_{25}g_{52}^{-1},g_{35}g_{53}^{-1},g_{45}g_{54}^{-1})\;.
\end{equation}
As illustrated in \fig{pyramid}, one could associate a pyramid with a square base to the graph $\bbg$. 
%In this case, the spin-foam atom is simply a 3-ball.

\item An action functional of the type given in \eqref{eq:multiaction}, along with the partition function \eqref{eq:multipartition} generate spin-foam molecules that may be interpreted as 3-dimensional objects composed of such building blocks.  
\end{enumerate}

\begin{figure}%[htp]
  
  \centering
  \tikzsetnextfilename{pyramid}
  \begin{tikzpicture}[scale=1.7]
 \draw [<->] (1.2,0) -- (1.7,0);
 \draw [<->] (-1.2,0) -- (-1.7,0);
%pyramid graph
\begin{scope}[xshift=-2.8cm]
\node [vb,label=180:1]	(1)	at (-1.25,1)	{}; 
\node [vb,label=90:2]		(2)	at (-0.5,0.5)	{}; 
\node [vb,label=90:3]		(3)	at (0.5,0.5) 	{}; 
\node [vb,label=0:4]		(4)	at (1.25,1)		{}; 
\node [vb,label=0:5]		(5)	at (0,-1)		{};
\foreach \i/\j in {1/4,1/2,2/3,3/4}{
 \draw [eb] (\i) -- node[vh,label=above:(\i\j)] {}  (\j);
  }
  \foreach \i/\j in {1/5,2/5,3/5,4/5}{
 \draw [eb] (\i) -- node[vh,label=left:(\i\j)] {}  (\j);
  }
\end{scope}

%pyramid SF atom
\begin{scope}
\node [c]					(v)	at (0,-.12)		{};
\node [c,label=88:1]			(1)	at (-0.56,0.32) 	{}; 
\node [c,label=185:2]			(2)	at (-0.17,-0.07)	{}; 
\node [c,label=0:3]			(3)	at (0.6,0) 		{}; 
\node [c,label=90:4]			(4)	at (.14,.37)	{}; 
\node [c,label=-10:5]			(5)	at (0,-.6)		{};
\node [c,label=left:(12)]		(12)	at (-.72,.32)	{};
\node [c,label=above:(23)]		(23)	at (.44,.09)	{};
\node [c,label=0:(34)]		(34)	at (.57,.52)	{};
\node [c,label=5:(14)]		(14)	at (-.3,.61)		{};
\node [c,label=below:(15)]		(15)	at (-1,-.44)		{};
\node [c,label=below:(25)]		(25)	at (-.28,-.97)	{};
\node [c,label=below:(35)]		(35)	at (1.02,-.76)	{};
\node [c,label=-10:(45)]		(45)	at (.28,-.23)	{};
\foreach \i/\j in {1/2,1/4,2/3,3/4,1/5,2/5,3/5,4/5}{
 \path	[f] 	(\i) -- (\i\j) -- (\j) -- (v) -- cycle;
 }
 \foreach \i in {1,2,3,4,5}{
  \draw [e] (\i) -- (v);
  }
\foreach \i/\j in {1/2,1/4,3/4,1/5,2/5,3/5,4/5,2/3}{
 \draw 	[eh]	(v)		-- 	(\i\j);
 \draw	[eb] 	(\i) node[vb] {} -- (\i\j) node[vh] {};
 \draw 	[eb] 	(\j) node[vb] {} -- (\i\j) node[vh] {};
 }
\draw [e] (3) node[vb] {} -- (v) node[v] {};

\end{scope}
% with pyramid
\begin{scope}[xshift=3cm]
\node [c]					(v)	at (0,-.12)		{};
\node [c,label=88:1]			(1)	at (-0.56,0.32) 	{}; 
\node [c,label=185:2]			(2)	at (-0.17,-0.07)	{}; 
\node [c,label=0:3]			(3)	at (0.6,0) 		{}; 
\node [c,label=90:4]			(4)	at (.14,.37)	{}; 
\node [c,label=-10:5]			(5)	at (0,-.6)		{};
\node [c,label=left:(12)]		(12)	at (-.72,.32)	{};
\node [c%,label=above:(23)
				]		(23)	at (.44,.09)	{};
\node [c,label=0:(34)]		(34)	at (.57,.52)	{};
\node [c,label=5:(14)]		(14)	at (-.3,.61)		{};
\node [c,label=below:(15)]		(15)	at (-1,-.44)		{};
\node [c,label=below:(25)]		(25)	at (-.28,-.97)	{};
\node [c,]					(35)	at (1.02,-.76)	{};
\node [c%,label=-10:(45)
				]		(45)	at (.28,-.23)	{};
\foreach \i/\j in {1/2,1/4,2/3,3/4,1/5,2/5,3/5,4/5}{
 \path	[f] 	(\i) -- (\i\j) -- (\j) -- (v) -- cycle;
 }
 \foreach \i in {1,2,3,4,5}{
  \draw [e] (\i) -- (v);
  }
\foreach \i/\j in {1/2,1/4,2/3,3/4,1/5,2/5,3/5,4/5}{
 \draw	[eh]	(v)		-- 	(\i\j);
 \draw	[eb] 	(\i) node[vb] {} -- (\i\j) node[vh] {};
 \draw 	[eb] 	(\j) node[vb] {} -- (\i\j) node[vh] {};
 }
\draw [e] (3) node[vb] {} -- (v) node[v] {};

\draw (-0.58,-0.14)-- (0,1.37);
\draw (-0.58,-0.14)-- (1.15,-0.32);
\draw (-0.58,-0.14)-- (-1.45,-0.74);
\draw (0.89,-1.2)-- (1.15,-0.32);
\draw (1.15,-0.32)-- (0,1.37);
\draw(0,1.37)--  (0.89,-1.2);
\draw (-1.45,-0.74)-- (0.89,-1.2);
\draw(0.89,-1.2)-- (1.15,-0.32);
\draw (-1.45,-0.74)-- (0,1.37);
\draw(0,1.37)--  (0.89,-1.2);
\draw (0.89,-1.2)-- (-1.45,-0.74);
\end{scope}

\end{tikzpicture} 
  \caption{
%    \centering
%     \tikzsetnextfilename{pyramid}
%      \input{figures/pyramid.tikz}
%  \caption{
The pyramid graph $\bbg$, the associated spin-foam atom $\sfa = \bulk(\bbg)$ and finally the pyramid 3-cell constructed around it.}
  \label{fig:pyramid}
  
%  \label{fig:pyramid}
\end{figure}

%\end{remark}

 %\begin{remark}[{group field set $\Phi$}]
  GFT  models based upon proper %(in)finite 
  subsets $\Phi_{\sub}\In  \Phi$ probe only subsets $\bbgs_{\sub}\In \bbgs$ and $\sfrs_{\sub}\In \sfrs$ and thus, only subsets of the LQG  states and SF dynamics. One could consider examining a model based on all of $\Phi$ and all of $\Ocal$. In this manner, one would probe all of $\bbgs$ and $\sfrs$,  as one might expect in the traditional LQG  context. 
 The resulting construction, however, is likely to remain at a formal level. In fact, the multi-field GFT  realization depends upon infinitely many fields and, in order to have non-trivial dynamics for each field, infinitely many interaction terms. This likely renders any field theoretic analysis rather impracticable.

Having said that, with appropriate choices for kinetic and interaction kernels, the multi-field GFT  based upon $\Phi$ and $\Ocal$ generates series probing all the spin-foam molecules of $\sfrs$,  weighted by amplitudes coinciding with the KKL extension of the SF models and propagating LQG  states on graphs in $\bbgs$.
%\end{remark}

\

In this \sec{gft} I have given a detailed introduction to GFT and explicated the structure of the theory's Feynman diagrams. According to their equivalence to spin-foam molecules I have recast the definition of GFT in this language associating fields with boundary patches and interaction vertices with spin-foam atoms. 
After illustrating this structure for the case of simplicial GFT, I have shown how this formulation gives rise to a generalization to a multi-field GFT corresponding to the KKL extension of SF models in a straightforward way.

%=========================================================================== 

\section{Dually weighted group field theories}\label{sec:dw-gft}

 To the extent that GFTs are currently analytically tractable, one is motivated to repackage the structures generated in the above multi-field GFT and devise a class of GFT  models that encode the quantum dynamics of arbitrary LQG  states, while remaining more practically useful. This means managing to encode arbitrary boundary graphs using a single or at least a (small) finite number of GFT  fields and interactions. 
 The key to achieving this result, which is the topic of this section, lies in the use of labelled structures and the dynamical implementation of reduction moves in terms of a dual weighting.

%This section focusses on the labelled structures $\bgst_\rs$, $\bbgst_\rs$, $\sfast_\rs$ and $\sfrst_{\copies,\snb}\In \sfrst_\rs$.  The reason is that while the first three sets of building blocks are finite, the set of dually-weighted molecules $\sfrst_{\copies,\snb}$ is rich enough to encode all of $\sfrs$. Moreover, this translates to a \gft, based on a finite number of fields and interactions, that generates sets of spin-foam molecules large enough to propagate arbitrary LQG  states. 

Thus, I start in \sec{labelled-simplicial-gft} defining the labelled version of GFT with a focus on the simplest case, \ie simplicial GFT. 
Then I introduce in \sec{dual-weighting} the dual weights and discuss in detail the mechanism how virtual structures are effectively projected out in the dynamics.
Finally, in \sec{dw-gravitational}, I discuss how a combination of dual-weighting and gravitational operators defines a theory which yield sum-over-spin-foams dynamics for gravitational models such as the model discussed in \sec{gravitational}.

%==============================================================

\subsection{Labelled simplicial group field theory}
\label{sec:labelled-simplicial-gft}

Utilizing the labelled simplicial structures $\bpst_{\copies}$, $\bbgst_\rs$ and $\sfast_\rs$ to generate spin-foam molecules $\sfrst_\rs$ (defined in \sec{regular-loopless}) is a simple generalization of the simplicial model presented in section \ref{sec:simplicial-gft};
the labelling distinguishes between cells considered as real and cells considered as virtual in each molecule $\sfrt\in\sfrst_\rs$.

\begin{enumerate}
  \item The set of group fields is indexed by the set of labelled $\copies$-patches:
    \begin{equation}
      \Phit = \{\phit_{\bpt}\}_{\bpst_{\copies}}\;,\quad\quad\textrm{where}\quad\quad \phit_{\bpt}:G^{\times |\E_{\bpt}|} \lora \R\;,
    \end{equation}
     % and $\bpt = (\{\vb\}\cup\Vhat_{\bpt},\E_{\bpt})$. 
    Note that this is a finite set of fields: $|\Phit| = |\bpst_{\copies}| = 2^\copies$. Also, $|\E_{\bpt}| = \copies$. 
  \item  The set of \trace\ observables is indexed by the set of labelled $\copies$-regular, loopless graphs $\bbgst_\rl$:
    \begin{equation}
      \Obt = \{\obt_{\bbgt}\}_{\bbgst}\;,\quad\quad\textrm{where}\quad\quad \obt_{\bbgt}[\Phit] = \int [\d g]\; \bbBt_{\bbgt}\big(\{g_{\vb}\}_{\Vbar}\big)\prod_{\vb\in\Vbar} \phi_{\bpt}(g_{\vb})\;,
    \end{equation}
  %and $\bbgt = (\Vbar\cup\Vhat, \E_{\bbgt})$. 
    where $\bbBt_{\bbgt}$ implicitly depends on the edge labels drawn from $\{real,virtual\}$. 
  %A graph $\bbgt$ determines a pairwise patition of the field arguments occurring in the interaction term, and vice versa.    

  \item [] The set of vertex interactions is indexed by labelled simplicial $\copies$-graphs $\bbgst_\rs$.  Since these are all based on the complete graph over $\copies+1$ vertices, one can utilize the vertex labelling of equation \eqref{simplicial-vertex}: 
    \begin{equation}
      \lambda_{\bbgt}\int [\d g]\; \Vbbt_{\bbgt}\big(g\big)\prod_{j = 1}^{\copies+1} \phit_{\bpt}(g_j)
    \end{equation}
    This is a finite set of interactions: $|\bbgst_\rs| = 2^{{\copies+1\choose 2}}$.
    
    Of course, the set of interaction terms can be extended to those indexed by $\bbgst_\rl = \sigma(\bpst_\copies)$, and one should probably expect them to be generated during the renormalization process.  However, the point is that the small set $\bbgst_\rs$ is rich enough to generate spin-foam molecules that could provide non-trivial correlations for all of $\bbgst_\rl$ (proposition \ref{prop:all-graphs}), and so is a well-chosen minimal model to take at the outset. 

    Again, using  the bijection in proposition \ref{prop:atoms-graphs}, the interaction terms can be interpreted as generating spin-foam atoms $\sfat = \bulk(\bbgt)$.

  \item []
    The kinetic term %, through its role in the Wick contractions,  
    is responsible for the bonding of patches just as in the unlabeled case \eqref{eq:kinetic}:  
    \begin{equation}
      \frac12\int[\d g]\; \phit_{\bpt}(g_1)\;\Kbbt_{\bpt}(g_1, g_2)\;\phit_{\bpt}(g_2)\;,\quad\quad\textrm{where}\quad\quad \Kbbt_{\bpt}(g_1,g_2) = \Kbbt(g_{\vb_1}, g_{\vb_2} )\;.
    \end{equation}
    %The pairing of the edges in the kinetic kernel, identifies the edges of the patch and vertex identification follows by the compatibility condition of definition \ref{def:bonding}. 

  \item
    Then, the class of labelled simplicial GFTs is defined via:
    \begin{equation}
      Z_{\sgft} = \int \Dcal\Phit \;e^{-S[\Phit]}
    \end{equation}
    with:
    \begin{equation}
      S[\Phit] = \frac12\sum_{\bpt\in\bpst_\copies} \int[\d g]\;\phit_{\bpt}(g_1)\;\Kbbt_{\bpt}(g_1, g_2)\;\phit_{\bpt}(g_2) + 
      \sum_{\bbgt\in\bbgst_\rs}\lambda_{\bbgt}\int [\d g]\; \Vbbt_{\bbgt}\big(g\big)\prod_{j = 1}^{\copies+1} \phit_{\bpt}(g_j)
    \end{equation}

  \item
    Trace observables can be estimated perturbatively, generating series of the type:
    \begin{eqnarray}
      \qbra \obt_{\bbgt_1}\dots \obt_{\bbgt_l}\qket_{\sgft} &=& \frac{1}{Z}\int \Dcal\Phit\; \obt_{\bbgt_1}[\Phit]\dots\obt_{\bbgt_l}[\Phit]\;e^{-S[\Phit]} \nonumber \\
 % &=& \frac{1}{Z_{\gft}} \int \Dcal\phi\;O[\phi] \sum_{\{c_i\}_I} \prod_{i\in I}\frac{1}{c_i!}\Bigg[ \lambda_i \int [\d g]\;\Vbb_i\Big(\{g_j\}_{J_i}\Big)\;\prod_{j\in J_i}\phi(g_{j})\Bigg]^{c_i} e^{-\frac12\int[\d g]\;\phi(g_1)\;\Kbb(g_1,g_2)\;\phi(g_2)}\nonumber\\
      &=& \sum_{\substack{\sfrt\in\sfrst_\rs\\[0.1cm] \bs\sfrt = \sqcup_{i = 1}^{l}\bbgt_{i}}} \frac{1}{\sym(\sfrt)} A(\sfrt;\{\lambda_{\bbgt}\}_{{\bbgst}_\rs})\;,
    \end{eqnarray}
    %Thus, the Feynman diagrams generated by GFTs are actually better characterized as spin-foam molecules. 
\end{enumerate}

%\begin{remark}[{reducibility}]
  As pointed out in remark \ref{rem:moleculered}, not all molecules in $\sfrst_\rs$ reduce to a molecule in $\sfrs$. It is rather the subset of non-branching molecules $\sfrst_{\copies,\snb}\In \sfrst_\rs$ that possesses this property (\rem{dually}).  As a result, one needs a mechanism at the GFT  level that isolates this subset.  
  This is possible with a dual-weighting mechanism and I will explain it in the next section.
%  \end{remark}

%==============================================================

\subsection{Dual weighting}\label{sec:dual-weighting}

It turns out that employing a simple technique at the field theory level allows one to extract directly the subclass of non-branching structures $\sfrst_{\copies,\snb}\In \sfrst_\rs$. This technique, dubbed \emph{dual-weighting} in the matrix model literature \cite{DiFrancesco:1992cn,Kazakov:1996et,Kazakov:1996fq}, assigns parameterized weights to the vertices $\Vhat$ of the spin-foam atoms $\sfat\in\sfast_\rs$ and, through the bonding mechanism, of the spin-foam molecules $\sfrt\in\sfrst_\rs$.\footnote{The \emph{dual-weighting} denomination stems from the fact that in 2d these vertices are in one-to-one correspondence with the vertices of the dual complex. In that context, these parametrized weights can be interpreted as coupling parameters for dual vertices.} 
These weights can be tuned so that only virtual interior/boundary vertices in $\Vhat$ with precisely two virtual faces/one virtual face incident survive. This is precisely the condition pinpointing the configurations in $\sfrst_{\copies,\snb}$.

The ground for a dual-weighting mechanism on a simplicial GFT is an extension of the elementary data set of the field from $G$ to $G\times\Mcal$, where $\Mcal = \{0,1,\dots, M\}$.
The integer $M$ can be regarded as a free parameter of the theory.
Since these data sets are associated to edges of both patches and boundary graphs, they permit a new encoding of the edge labels $\{real, virtual\}$. The \emph{real} label is encoded as the zero element $0\in\Mcal$, while the \emph{virtual} label is encoded by the non-zero elements $m\in\Mcal \setminus\{0\}$.  
\begin{enumerate}
  \item In this way, one can repackage the $2^{\copies}$ fields $\phit_{\bpt}$ ($\bpt\in\bpst_\rs$) into a single field 
    \begin{equation}
      \phi:(G\times\Mcal)^{\copies}\lora\R
    \end{equation}
    based on the unique \emph{un}labelled $\copies$-patch $\bp_\copies\in\bps_\rs$.
    This stems from the fact that these patches have the same combinatorics, differing only in the choice of labels $\{real, virtual\}$ assigned to their edges.
  
  \item In principle, the \trace\ observables are indexed once again by labelled, $\copies$-regular, loopless graphs $\bbgt\in\bbgst_\rl$. Encoding the labelling as above, one can re-index observables by \emph{un}labelled, $\copies$-regular, loopless graphs $\bbgs_\rl$:
   \begin{equation}
      \Ocal = \{O_{\bbg}[\phi]\}_{\bbgs_\rl}\;,\quad \textrm{where}\quad\quad O_{\bbg}[\phi] = \int [\d g]\sum_{[m]} \bbB_{\bbg}\big(\{g_{\vb}; m_{\vb}\}_{\Vbar}\big)\prod_{\vb\in\Vbar} \phi(g_{\vb};m_{\vb})\;
    \end{equation}
    where the \emph{combinatorial non-locality} extends to the $\Mcal$ variables. In effect, the observable $O_{\bbg}$ incorporates all $2^{|\mathcal{\bbg}|}$ labelled observables $\obt_{\bbgt}$ with support on that graphical structure.

%\cjt{the following might be simplified...}
    However, combinatorial non-locality, in conjunction with this novel label-encoding, places a restriction on $\bbB_{\bbg}(\{g_{\vb};m_{\vb}\}_{\Vbar})$. To detail this, 
one uses the same indexing of vertices and edges as in \eqref{simplicial-vertex} with an extra label $(a)$ to number multi-edges. A simple illustration for a bisected edge between two vertices $i,j\in\Vbar$ looks like:
%    one indexes the vertices in $\Vbar$ by $i\in\{1,\dots,|\Vbar|\}$ and the bisecting vertices in $\Vhat$ by $(ij)(a)$, where $i,j$ indicate the vertices of $\Vbar$ to which it is connected and the extra index $(a)$ indicates that it may not be the only bisecting vertex connected to $i$ and $j$.\footnote{The parenthesis around $ij$ signifies that $(ji)(a)$ is the same vertex as $(ij)(a)$.} Then, the edge joining the vertex marked $i$ to the vertex marked $(ij)(a)$ is denoted by $ij(a)$, while the edge joining the vertex $j$ to the vertex $(ij)(a)$ is denoted by $ji(a)$.  A simple illustration of this subgraph of $\bbg$ looks like:
%    \begin{displaymath}
%      \xymatrix{
%	i\ar@{-}[rr]^{ij(a)}&&(ij)(a)&&j\ar@{-}[ll]_{ji(a)}
%      }
%    \end{displaymath}
\begin{center}
\tikzsetnextfilename{edge}
\tikz \draw [eb]  (-2,0) node [vb,label=left:$i$] {}  --  node[auto] {$ij(a)$} 
			(0,0) node[vh,label=above:$(ij)(a)$] {} -- node[auto] {$ji(a)$}  
			(2,0) node[vb,label=right:$j$] {};
\end{center}
Then the graph $\bbg$ dictates that the boundary kernel has the form
\begin{equation}
  \bbB_{\bbg}(\{g_{\vb};m_{\vb}\}_{\Vbar}) = \bbB(\{g_{ij(a)}g_{ji(a)}^{-1}; m_{\vb}\})\;.
\end{equation}  
For labelled boundary graphs, both edges $ij(a)$ and $ji(a)$ are marked by the same label $\{real, virtual\}$.  This translates to the restriction that $\bbB_{\bbg}(\{g_{\vb};m_{\vb}\}_{\Vbar}) = 0 $ when $m_{ij(a)} = 0$ but $m_{ji(a)}\in\Mcal \setminus\{0\}$ or vice versa. 
The other way round, $\bbB_{\bbg}(\{g_{\vb};m_{\vb}\}_{\Vbar}) \neq 0$ only when both $m_{ij(a)}  = m_{ji(a)} = 0$ or $m_{ij(a)}, m_{ji(a)} \in \Mcal \setminus \{0\}$.    

% \item   
The $2^{{\copies+1\choose 2}}$ interaction terms are indexed by labelled simplicial graphs $\bbgt\in\bbgst_\rs$.  As with the boundary kernels, the labelling can be re-encoded in terms of the new data set.  As a result, one can capture all the interaction terms using the unique \emph{un}labelled, $\copies$-regular, loopless graph $\bbg\in\bbgs_\rs$:
    \begin{equation}
      \lambda\int [\d g]\sum_{[m]} \Vbb_{\bbg}(g; m)\prod_{j = 1}^{\copies+1} \phi(g_j; m_j)\;
      \end{equation}
      where in analogy to \eqref{simplicial-vertex} 
     \begin{equation}
       \Vbb_{\bbg}(g;m) = \Vbb\big(\{g_{ij}g_{ji}^{-1};m_{\vb}\}\big)\;, \quad\quad \textrm{with}\quad\quad i<j\;,
     \end{equation} 
     and the markers $i,j\in\{1,\dots,\copies+1\}$ index the $\copies+1$ vertices $\Vbar\In  \bbg$ and thus the patches of $\bbg$. Meanwhile, the pair $ij$ (with $j\neq i$) indexes the edge joining the vertex $i$ to the bisecting vertex $(ij)$.
Combinatorial non-locality imposes an analogous constraint on this interaction kernel.

    Just as for labelled simplicial GFTs,  the set of interaction terms could be extended to those indexed by $\bbgs_\rl = \sigma(\bps_\rs)$, while still invoking the dual weighting mechanism. In terms of labelled structures, this means that one could isolate $\sfrst_{\copies,\lnb}\In \sfrst_\rl$. 

   \item[] The kinetic term takes the form
     \begin{equation}
       \frac12\int[\d g]\sum_{[m]}\phi(g_1;m_1)\;\Kbb(g_1,g_2;m_1,m_2)\;\phi(g_2;m_2)\;
     \end{equation}
     with kernel
   \begin{equation}
     \Kbb(g_1,g_2;m_1,m_2) = \Kbb\big(g_{\vb_1},g_{\vb_2};m_{\vb_1},m_{\vb_2}\big)\;.
     \end{equation}
    Since the kinetic term is responsible for the bonding of the patches and bonding respects labelling, then $\Kbb \neq 0$ only when both $m_{\vb_1\vh} = m_{\vb_2\vh} = 0$ or both $m_{\vb_1\vh},m_{\vb_2\vh}\in\Mcal \setminus \{0\}$.  
 \end{enumerate}

In order to define dually-weighted GFT, one has to specify the precise form of the $\Mcal$-sector of the various kernels. The $G$-sector will be left unspecified for the moment, dealing with specific cases in \sec{dw-gravitational}.
The crucial ingredient for the $\Z_M$-sector is 
% \begin{defin}[{dual-weighting matrix}]
a \emph{dual-weighting matrix sequence} $\{\Acal_M\}_{M>0}$, defined as a sequence of invertible matrices (where $M$ denotes the size of $\Acal_M$) which satisfy the condition
\begin{equation}
\label{dw-matrix}
\boxd{
  \lim_{M\rightarrow\infty}%\frac1M
  \tr\left( (\Acal_M)^k \right) = \delta_{k,2}.
  }
\end{equation}
%\end{defin}
The trace invariant information contained within an $M\times M$ matrix $\Acal_M$ can be characterized in a number of ways,   perhaps most familiarly through its $M$ eigenvalues, which arise as the roots of the characteristic equation. %$\chi_{\Acal}(t) = 0$, where:
%\begin{equation}
%\chi_{\Acal}(t) = \det(t\Ical - \Acal)\;.
%\end{equation}
%However, a less succinct way to package this information is in the traces of matrix powers: $\tr(\Acal^k)$, where $0\leq k\leq M$. To see this, notice that one can rewrite the characteristic polynomial as:
%\begin{equation}
%  \chi_{\Acal}(t) = \sum_{k = 0}^M (-1)^k\,t^{M-k}\,\tr(\wedge^k \Acal)\;, \quad\quad\textrm{where}\quad\quad \wedge^k\Acal = \frac{1}{k!}
%  \begin{vmatrix} 
%    \tr(\Acal) & k-1 & 0 & \cdots & 0 \\[0.2cm]
%    \tr(\Acal^2) & \tr(\Acal) & k-2 &\cdots & 0 \\
%    \vdots & \vdots & \vdots &  & \vdots \\[0.2cm]
%    \tr(\Acal^{k-1}) & \tr(\Acal^{k-2}) & \tr(\Acal^{k-3}) & \cdots & 1\\[0.2cm]
%    \tr(\Acal^k) & \tr(\Acal^{k-1}) & \tr(\Acal^{k-2}) & \cdots & \tr(\Acal)
%  \end{vmatrix}\;.
%\end{equation}
%The eigenvalues are determined in terms of the traces and vice versa. 
In the large-$M$ limit, it is clear therefore that one may impose an infinite number of conditions on the matrix traces.

For concreteness, consider as a specific sequence $\{\Acal_M\}$ the diagonal matrices 
\[
\left(\Acal_M\right)_{mm'} = (-1)^m M^{-1/2} \delta_{mm'}\;. %\textrm{diag}(\lambda_1,\dots,\lambda_M)$.
\]
%For each $M$, one can impose at most $M$ independent conditions, for example:
%\[
%\sum_{m=1}^M \lambda_m^k = \delta_{k,2}\;.
%\]
%In this case the condition Eq. \eqref{dw-matrix} is satisfied.
%But note that, in general, these conditions need not to be satisfied at finite $M$ but only in the large-$M$ limit.
These fulfill the conditions Eq.\,\eqref{dw-matrix} since for odd $k$ 
\begin{equation}
\tr \left( (\Acal_M)^k \right) = - M^{-k/2} \us{M\ra\infty}\lora 0
\end{equation}
and for even $k$ 
\begin{equation}
\tr \left( (\Acal_M)^k \right) = M^{1-k/2} 
\end{equation}
which equals one for $k=2$ and tends to zero in the large-$M$ limit for $k>2$.
For more uses of the dual-weighting mechanism see \cite{Benedetti:2012ed} and references therein. 
%From now on I will suppress the index $M$ in the dual-weighting matrices $\Acal$.

%\begin{remark}[{dual-weighting mechanism}] \label{rem:DWmechanism}
 The implementation %smooth working
  of the dual-weighting mechanism places certain restrictions on the kinetic, interaction and boundary kernels.
% \begin{itemize}
 %  \item
The kinetic kernel takes the form:
\[\label{dw-kinetic}
\Kbb(g_1,g_2;m_1,m_2) = \Kbbb(g_1,g_2;m_1,m_2)\;\Dbb^{-1}(m_1,m_2)\;,
\]
where $\Kbbb$ is constant across $m_{1j}, m_{2j}\in\Mcal \setminus\{0\}$, for each $j\in\{1,\dots,\copies\}$.  In other words, $\Kbbb$ only depends on whether the edges are \emph{real} or \emph{virtual}.
 Meanwhile, $\Dbb$ factorizes across the edges:
\[\label{dw-matrix2}
   \Dbb(m_1,m_2) = \prod_{j = 1}^\copies \dbb_{m_{1j},m_{2j}}\;,
   \quad\quad\textrm{with}\quad\quad 
   \dbb = \left(
    \begin{array}{c|c}
      1&0\\ \hline
      0&\Acal_M
    \end{array}
    \right)\;.
 \end{equation}
 The condition on $\Kbbb$ means that the value it attains only depends on whether the edges are \emph{real} or \emph{virtual}. 
The zero entries in the $\dbb$-matrix are the manifestation of non-mixing of \emph{real} and \emph{virtual} edges. The dual-weighting matrix $\Acal_M$ is the truly significant player, as it will be responsible for restricting the spin-foam molecules in the large-$M$ limit. 

 %\item
The interaction kernel takes the form
\[\label{dw-interaction}
  \Vbb_{\bbg}(g;m) = \Vbbb_{\bbg}(g;m)\;\Ibb(m)\;
  \quad\textrm{where}\quad 
  \Ibb(m) = \prod_{(ij)} \ibb_{m_{ij},m_{ji}} 
  \quad\textrm{and}\quad
  \ibb =  \left(
    \begin{array}{c|c}
      1&0\\ \hline
      0&\id_M
    \end{array}
    \right)\;
\end{equation}
where $\id_M$ is the $M\times M$ identity matrix and the function $\Vbbb$ only depends on whether the edges are \emph{real} or \emph{virtual}.   
%\item
Accordingly, the boundary kernels take the similar form
\begin{equation}
  \bbB_{\bbg}(\{g_{\vb};m_{\vb}\}_{\Vbar}) = \bbBb_{\bbg}(\{g_{\vb};m_{\vb}\}_{\Vbar})\;\Ibb(\{m_{\vb}\}_{\Vbar})\;
  \quad\textrm{where}\quad
  \Ibb(\{m_{\vb}\}_{\Vbar}) = \prod_{(ij)(a)} \ibb_{m_{ij(a)},m_{ji(a)}} 
\end{equation}
and $\bbBb$ only depends on whether the edges are \emph{real} or \emph{virtual}.
%\end{itemize}

Of course, in true GFT  style, one could shift the dual-weighting matrix to the interaction kernels, that is, swapping $\id_M$ for $\Acal_M$ in the interaction kernel, while simultaneously swapping $\Acal_M$ for $\id_M$ in the kinetic kernel. 
Accordingly, in this realization, the boundary kernels should also contain the dual-weighting matrix to ensure the correct propagation of virtual edges. 
%\end{remark}

Now, back to the definition of the dually-weighted GFTs:
\begin{enumerate}[resume]
\item 
The class of dually weighted GFTs is defined in terms of the partition function
    \begin{equation}
      Z_{\dwgft} = \int \Dcal\phi \;e^{-S[\phi]}
    \end{equation}
based on the action
\ba\label{dw-action}
  S[\phi] &=& \frac12\int[\d g]\sum_{[m]}\phi(g_1;m_1)\;\Kbb(g_1, g_2;m_1,m_2)\;\phi(g_2;m_2) \nonumber\\
  &&+ \lambda\int [\d g]\sum_{[m]} \Vbb_{\bbg}(g;m)\prod_{j =1}^{\copies+1} \phi(g_j;m_j)
\ea

  \item 
   A \trace\ observable can be estimated perturbatively, generating series of the type:
    \begin{eqnarray}
    \label{dw-perturbative}
      \qbra O_{\bbg_1}\dots O_{\bbg_l}\qket_{\dwgft} &=& \frac{1}{Z_{\dwgft}}\int \Dcal\phi\; O_{\bbg_1}[\phi]\dots O_{\bbg_l}[\Phi]\;e^{-S[\phi]} \nonumber \\
 % &=& \frac{1}{Z_{\gft}} \int \Dcal\phi\;O[\phi] \sum_{\{c_i\}_I} \prod_{i\in I}\frac{1}{c_i!}\Bigg[ \lambda_i \int [\d g]\;\Vbb_i\Big(\{g_j\}_{J_i}\Big)\;\prod_{j\in J_i}\phi(g_{j})\Bigg]^{c_i} e^{-\frac12\int[\d g]\;\phi(g_1)\;\Kbb(g_1,g_2)\;\phi(g_2)}\nonumber\\
      &=& \sum_{\substack{\sfrt\in\sfrst_\rs\\[0.1cm] {\bs}\sfrt = \sqcup_{i = 1}^{l}\bbg_{i}}} \frac{1}{\sym(\sfrt)} A(\sfrt;\lambda)\;,
    \end{eqnarray}
The series in the second line can be explicitly catalogued by labelled simplicial $\copies$-molecules $\sfrt\in\sfrst_\rs$ according to the label information encoded in the $\Z_M$ in the integrand of each amplitude.
%This is possible since the label-dependence
%\footnote{For example, one could tweak the coupling constants to depend more sensitively on the reality/virtuality of the various edges of the bisected graph.  In that case, one would really have to catalogue the sum explicitly in terms of the elements of $\sfrst_\rs$.} 
%permits the collation and repackaging of the amplitudes attached to the various labellings of each \emph{un}labelled simplicial $\copies$-molecule $\sfr\in\sfrs_\rs$. 
%In the large-$M$ limit, as one can see in a moment, the label information plays an important role and must be made explicit once more. 
\end{enumerate}

Now, the crucial property of these perturbative series is that only amplitudes attached to the subclass of non-branching molecules survive in the large-$M$ limit:

\fbox{
  \begin{minipage}[c][][c]{0.97\textwidth}
\begin{proposition}[{large-$M$ limit}]
\label{prop:mlimit}
In the large-$M$ limit, the observable expectation values of dually weighted GFT possess perturbative expansions in terms of simplicial $\copies$-molecules within the non-branching subclass $\sfrst_{\copies,\snb}$ (remark \ref{rem:dually})
  \begin{eqnarray}
    \lim_{M\rightarrow\infty} \qbra O_{\bbg_1}\dots O_{\bbg_l}\qket_{\dwgft} &=&  \sum_{\substack{\sfrt\in\sfrst_{\copies,\snb}\\[0.1cm] {\bs}\sfrt = \sqcup_{i = 1}^{l}\bbgt_{i}}} \frac{1}{\sym(\sfrt)} A(\sfrt;\lambda)\;.
    \end{eqnarray}
\end{proposition}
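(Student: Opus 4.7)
The plan is to isolate the $\Mcal$-dependent factor of each Feynman amplitude $A(\sfrt;\lambda)$ in the series \eqref{dw-perturbative} and to show that, in the large-$M$ limit, this factor vanishes unless $\sfrt$ lies in $\sfrst_{\copies,\snb}$. Once the $G$-valued data are integrated (and the corresponding $\Mcal$-independent part $\Kbbb,\Vbbb,\bbBb$ is extracted), the dual-weighting sector of the amplitude factorizes along strands, i.e.\ along the cycles of faces that meet each vertex $\vh\in\Vhat$ in $\sfrt$. This is the same factorization principle that underlies the matrix-model dual-weighting construction of \cite{DiFrancesco:1992cn,Kazakov:1996et,Benedetti:2012ed}.

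The first step is to invert the kinetic kernel \eqref{dw-kinetic}: because $\Dbb$ factorizes over half-edges and $\dbb$ is block-diagonal, the propagator carries a factor $\dbb_{m_{\vb_1\vh},m_{\vb_2\vh}}$ for each bonded half-edge, whereas each vertex kernel \eqref{dw-interaction} carries a factor $\ibb_{m_{ij},m_{ji}}$ for every bisecting vertex. Both $\dbb$ and $\ibb$ leave the real/virtual block decomposition invariant, so the sum over $[m]$ on any given molecule $\sfrt$ reduces on the virtual block to a sum over $\Mcal\setminus\{0\}$-valued indices on the virtual half-edges. I would then show, by following a virtual strand around a vertex $\vh\in\Vhat$ of $\sfrt$, that the combined contribution of the propagators and interaction kernels produces precisely
\[
\tr\bigl((\Acal_M)^{\,\ks_{\vh}}\bigr),
\]
where $\ks_{\vh}$ is the number of virtual edges $(\vb\vh)$ incident at $\vh$ (as defined in remark~\ref{rem:moleculered}). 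The interaction-vertex $\ibb_{m_{ij},m_{ji}}$'s act as identifications of the indices along the patch-bondings, while the propagator $\dbb$'s contribute the $\Acal_M$ factors; the closed-strand combinatorics in the stranded-diagram representation (remark~\ref{rem:stranded}) is exactly what turns the strand into a trace. A small separate calculation handles virtual boundary vertices $\vh\in\bst\sfrt$: the boundary kernel \eqref{dw-interaction} contributes an $\ibb$-factor on the outermost half-edge, so the strand is open and the $\Mcal$-sum produces instead $\tr((\Acal_M)^{\ks_{\vh}-1}\cdot\id_M\cdot\text{(boundary identification)})$, which is non-vanishing iff $\ks_{\vh}=1$.

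Collecting these factors, the amplitude splits as
\[
A(\sfrt;\lambda) \;=\; \overline{A}(\sfrt;\lambda)\;\prod_{\vh\notin\bst\sfrt}\tr\bigl((\Acal_M)^{\ks_{\vh}}\bigr)\;\prod_{\vh\in\bst\sfrt}\bigl[\Mcal\text{-sum with }\ks_{\vh}-1\text{ factors}\bigr],
\]
with $\overline{A}$ depending only on $\Kbbb,\Vbbb,\bbBb$ and on the group data. Invoking the defining property \eqref{dw-matrix} of the dual-weighting sequence, each internal trace tends to $\delta_{\ks_{\vh},2}$ and each boundary factor tends to $\delta_{\ks_{\vh},1}$. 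Comparing with the characterization of $\sfrst_{\copies,\snb}$ in remarks~\ref{rem:moleculered} and~\ref{rem:dually}, these are exactly the non-branching conditions: interior virtual vertices have precisely two incident virtual edges, boundary virtual vertices precisely one. Hence all non-surviving molecules drop out of \eqref{dw-perturbative}, while for $\sfrt\in\sfrst_{\copies,\snb}$ the surviving prefactor $\overline{A}(\sfrt;\lambda)$ is exactly the amplitude one would associate to the reduced molecule $\Pi_{\copies,\snb}(\sfrt)\in\sfrs$.

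The main obstacle is the bookkeeping of step~two: one has to verify carefully that the strands of the stranded-diagram representation at a given vertex $\vh$ really do close up into a single cycle of the claimed length $\ks_{\vh}$, rather than splitting into several smaller cycles (which would alter the exponent in the $\Acal_M$-trace and spoil the $\delta_{k,2}$ selection). This is a purely combinatorial fact, but it requires a careful use of the compatibility condition built into the bonding maps (definition~\ref{def:bonding}) together with the block-diagonal structure of $\dbb$ and $\ibb$ that forbids virtual/real mixing; once these two ingredients are in place, the strand at $\vh$ is forced to be a single cycle whose length coincides with $\ks_{\vh}$, and the argument closes.
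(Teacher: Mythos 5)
Your proposal is correct and follows essentially the same route as the paper: the paper's proof likewise isolates the dual-weighting sector of each amplitude, observes that it factorizes over the vertices $\vh\in\Vhat$ into factors equal to $1$ for real $\vh$ and $\tr\bigl(\prod_{(\vb\vh)}\Acal_M\bigr)$ for virtual $\vh$, and then invokes the defining property \eqref{dw-matrix} to conclude that only molecules with exactly two (internal) or one (boundary) virtual edge per virtual vertex survive, which is the characterization of $\sfrst_{\copies,\snb}$. Your version merely spells out in more detail the strand bookkeeping and the boundary case that the paper's argument leaves implicit.
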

\end{minipage}
}
\begin{proof}
 According to the definition of the amplitudes $A(\sfrt;\lambda)$, with $\sfrt\in\sfrst_\rs$, in the perturbative sum \eqref{dw-perturbative}, the dual-weighting part of the amplitude factorizes across the vertices $\vh\in\Vhat$ and for each vertex it takes one of two values:
  \begin{equation} 
    \begin{array}{cl}
      1 & \textrm{for $\vh$ real.}\\[0.3cm]
   \displaystyle \tr\Big(\prod_{(\vb\vh)}\Acal_M\Big) & \textrm{for $\vh$ virtual.}
  \end{array}
\end{equation}
%More precisely, the amplitude $A(\sfr;\lambda)$ contains contributions from all the labelled counterparts $\sfrt$ of $\sfr$. In such a counterpart, 
%if $\vh$ is real, then the dual-weighting amplitude is unity, while if $\vh$ is virtual, the contribution is the trace of a power of $\Acal$, with one $\Acal$-factor for each edge $(\vb\vh)$ incident at $\vh$. 
Due to the dual-weighting property \eqref{dw-matrix},  the second contribution vanishes in the limit $M\lora\infty$ unless there are precisely two/one edge(s) incident at this internal/boundary vertex $\vh$.  This is exactly the defining property of $\sfrst_{\copies,\snb}$ (remark \ref{rem:dually}).
\end{proof}

%  \begin{remark}[{molecular interpretation}]
Proposition \ref{prop:mlimit} allows to recast the perturbative series generated by the dually weighted GFT in the large-$M$ limit as a series catalogued by more generic molecules %.  This follows directly from the reduction accomplished by
according to the projection map $\Pi_{\copies,\snb}:\sfrst_{\copies,\snb}\lora\sfrs$ (remark \ref{rem:refmolreduction}). From proposition \ref{prop:limitation}, $\Pi_{\copies,\snb}$ does not cover the whole of $\sfrs$, but only a subset $\sfrs' = \Pi_{\copies,\snb}(\sfrst_{\copies,\snb})$.
Thus, the perturbative series can be rewritten as a sum over such generic spin-foam molecules with effective amplitudes
\[\label{dw-effective}
\lim_{M\rightarrow\infty} \qbra O_{\bbg_1}\dots O_{\bbg_l}\qket_{\dwgft} 
= \sum_{\substack{\sfr\in\sfrs'\In \sfrs\\[0.1cm] {\bs}\sfr = \sqcup_{i = 1}^{l}\pi_{\copies,\snb}(\bbgt_{i})}} \frac{1}{\sym^\textbf{eff}(\sfr)} A^\textbf{eff}(\sfr;\lambda)\;.
\]
%Having said that, 
In particular, \emph{every} collection of boundary graphs drawn from $\bbgs$ can be evolved within this dually weighted GFT according to proposition \ref{prop:all-graphs} together with proposition \ref{prop:surjections}.
Should one wish to include also a larger set of effective molecules from $\sfrs$, this could be easily obtained by incorporation of more interactions with support on $\bbgs_\rl = \sigma(\bps_\rs)$. 

In the quantum gravity context, 
this means that dually weighted GFT is effectively a GFT describing physical inner products and correlations of various quantum gravity states with support on arbitrary graphs, which are estimated using perturbative series catalogued by spin-foam molecules of the most general combinatorics.
%\end{remark}

\ 

%\begin{remark}[{3d example revisited}]

To illustrate this dual-weighting mechanism further, consider again the particular \trace\ observable \eqref{eq:pyramid} with a pyramid interpretation, now from the dual-weighting viewpoint.
%  \begin{enumerate}
%    \item 
Since the pyramid is 3-dimensional, the dually weighted GFT has to be defined in terms of the field $\phi:(G\times\Mcal)^3\lora\R$.
%    \item
The pyramid observable %detailed in \eqref{eq:pyramid} 
has %, among its realizations in the dually-weighted model, 
the following realization composed out of six fields according to the graph $\bbg$ which is illustrated in \fig{dw3dgraph} (cf. the corresponding atom, \fig{atommoves}):
\begin{equation}
\label{eq:pyramidAgain}
O_{\bbg}[\phi] = \int [\d g]\sum_{[m]} \bbB_{\bbg}(g;m)\,
\phi(g_1;m_1)\,\phi(g_2;m_2)\,\phi(g_3;m_3)\,\phi(g_4;m_4)\,\phi(g_5;m_5)\;\phi(g_6;m_6)\,.
\end{equation}
The explicit non-local convolution structure according to the graph \fig{dw3dgraph} is
\[
  \bbB_{\bbg}(g;m) = \bbB(g_{12}g_{21}^{-1},g_{23}g_{32}^{-1},g_{34}g_{43}^{-1},g_{14}g_{41}^{-1},g_{15}g_{51}^{-1},g_{25}g_{52}^{-1},g_{36}g_{63}^{-1},g_{46}g_{64}^{-1},g_{56}g_{65}^{-1}; m_{ij}).%\\ m_{12},m_{21},m_{23},m_{32},m_{34},m_{43},m_{14},m_{41},m_{15},m_{51},m_{25},m_{52},m_{36},m_{63},m_{46},m_{64},m_{56},m_{65})\;.
\]
This packages together a number of observables, depending on the $\Mcal$-labelling.
The observable of interest is precisely the configuration where the labels $m_{56}$ and $m_{65}$ are non-zero (indicating a virtual edge), while the rest are zero (indicating real edges).  Upon reduction of this virtual edge, one obtains a graph in $\sfrs$ with one 4-valent patch and four 3-valent patches just as in \eqref{eq:pyramid}. 
In this way, the square base of the observable \eqref{eq:pyramid} is represented in terms of two triangles in the dually-weighted model.

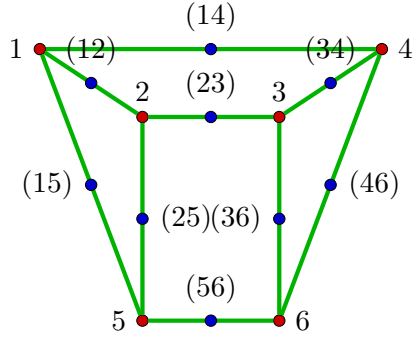
\begin{figure}%[htb]
  \centering
  \tikzsetnextfilename{pyramiddw}
  \begin{tikzpicture}[scale=1.8]

\node [vb,label=180:1]	(1)	at (-1.25,1)	{}; 
\node [vb,label=90:2]		(2)	at (-0.5,0.5)	{}; 
\node [vb,label=90:3]		(3)	at (0.5,0.5) 	{}; 
\node [vb,label=0:4]		(4)	at (1.25,1)		{}; 
\node [vb,label=180:5]	(5)	at (-0.5,-1)		{};
\node [vb,label=0:6]		(6)	at (0.5,-1)		{};
\foreach \i/\j in {1/4,1/2,2/3,3/4}{
 \draw [eb] (\i) -- node[vh,label=above:(\i\j)] {}  (\j);
  }
\foreach \i/\j in {1/5,3/6}{
  \draw [eb] (\i) -- node[vh,label=left:(\i\j)] {}  (\j);
  }
\foreach \i/\j in {2/5,4/6}{
  \draw [eb] (\i) -- node[vh,label=right:(\i\j)] {}  (\j);
  }
\draw [eb] (5) -- node[vh,label=above:(56)] {}  (6);
%
%\draw [->, line width=1.6pt] (2,0) -- (3,0);
%
%%pyramid graph
%\begin{scope}[xshift=5cm]
%\node [vb,label=180:1]	(1)	at (-1.25,1)	{}; 
%\node [vb,label=90:2]		(2)	at (-0.5,0.5)	{}; 
%\node [vb,label=90:3]		(3)	at (0.5,0.5) 	{}; 
%\node [vb,label=0:4]		(4)	at (1.25,1)		{}; 
%\node [vb,label=0:5]		(5)	at (0,-1)		{};
%\foreach \i/\j in {1/2,2/3,3/4,1/4}{
% \draw [eb] (\i) -- node[vh,label=above:(\i\j)] {}  (\j);
%  }
%  \foreach \i/\j in {1/5,2/5,3/5,4/5}{
% \draw [eb] (\i) -- node[vh,label=left:(\i\j)] {}  (\j);
%  }
%\end{scope}

\end{tikzpicture}
  \caption{\label{fig:dw3dgraph}The graph $\bbg$ composed of six unlabelled 3-patches.}
\end{figure}
%\end{enumerate}

%\end{remark}
\

%\begin{remark}[{extensions}]
 The dual-weighting mechanism can be applied to other classes of models.
% \begin{itemize}
%   \item
The construction above works for arbitrary valence $\copies$, and can clearly be extended to multiple GFT  fields, if so wished. 
In models for $\std$-dimensional gravity, one would like the valence of the graphs associated to quantum states to be $\copies=\std$ like in the simplicial case, and for the same reasons. 
One should note, however, that in even dimensions $\std$, only effective nodes of even valency are then obtained after the dynamical contraction of virtual links. This combinatorial restriction has been noticed already in the previous section (proposition \ref{prop:surjections}).
If one wants to generate graphs of truly arbitrary valence, using the same mechanism, one can easily do so by incorporating a single odd-valent field species, endowed with a $\std$-dimensional interpretation. This doubling of fields does not change the general features of the construction. 

%\item

%\end{itemize}
%\end{remark}

%==============================================================

\subsection{Dually weighted gravitational models}
\label{sec:dw-gravitational}

To conclude the presentation of dually weighted GFT, I will show in this section how the amplitudes of gravitational SF models  are obtained from dually weighted simplicial GFT models.
In \sec{gravitational}, I have explained how the imposition of simplicity constraints leads to a special form of local spin-foam amplitudes, illustrated with a particular example. 
Here I will now give the details how the kinetic and vertex kernels in dually weighted GFT can be defined such that these amplitudes emerge effectively in the large-$M$ limit \eqref{dw-effective}.
Accordingly, the group field $\phi$ is defined on $\copies = \std = 4$ group elements in this section.

As mentioned before, GFT amplitudes are convolutions of propagators $\Pbb=\Kbb^{-1}$ and interaction kernels $\Vbb$, thus giving rise to spin-foam amplitudes in the form $\eqref{gft-sf}$.
The way the gravitational part of a dually weighted kinematic kernel $\Kbb$  is defined is thus most explicit on its inverse, \ie the propagator. 
To control the implementation of simplicity constraints along the dual weights, it is defined as 
\[\label{dw-edge-operator}
\Pbb(g_{v_1\vb},g_{v_2\vb};{m_{v_1\vb},m_{v_2\vb}}) =  \int \d h_{v_1\vb}\, \d h_{v_2\vb}\prod_{(\vb\vh)\in\E_{\bp_4}} \pbb(g_{v_1\vb\vh}, g_{v_2\vb\vh}; h_{v_1\vb}, h_{v_2\vb};m_{v_1\vb\vh},m_{v_2\vb\vh})
\]
where the integrand is a product over the edges in the unique unlabelled 4-patch of
\ba\label{eq:extprop}
\pbb(g_{v_1\vb\vh}, g_{v_2\vb\vh}; h_{v_1\vb}, h_{v_2\vb};m_{v_1\vb\vh},m_{v_2\vb\vh})  = 
\pbb^{\real}(g_{v_1\vb\vh}, g_{v_2\vb\vh}; h_{v_1\vb}, h_{v_2\vb})
\dbb^{\real}_{m_{v_1\vb\vh},m_{v_2\vb\vh}}  \nonumber\\ +
\pbb^{\virtual}(g_{v_1\vb\vh}, g_{v_2\vb\vh}; h_{v_1\vb}, h_{v_2\vb})
\dbb^{\virtual}_{m_{v_1\vb\vh},m_{v_2\vb\vh}}
\ea
which allows to define the gravitational edge operator in terms of $\pbb^\real$ solely on the real edges while the virtual operators $\pbb^\virtual$ are left trivial. 
More explicitly, for the example \eqref{edge-operator} this means that
\ba
    \pbb^{\real}(g_{v_1\vb\vh}, g_{v_2\vb\vh}; h_{v_1\vb}, h_{v_2\vb})
 &=& \sum_{J_{\vh}\in \mathcal{J}} \tr_{J_{\vh}}\left(g_{v_1\vb\vh}\;h_{v_1\vb}^{-1}\; \Sbb_{J_{\vh},N_{0}}\; h_{v_2\vb}\; g_{v_2\vb\vh}^{-1}\right)\;,\\
     \pbb^{\virtual}(g_{v_1\vb\vh}, g_{v_2\vb\vh}; h_{v_1\vb}, h_{v_2\vb})
 &=& \delta(g_{v_1\vb\vh}\;h_{v_1\vb}^{-1})\;\delta(h_{v_2\vb}\;g_{v_2\vb\vh}^{-1})\;. \label{virtual-edge-operator}
\ea
Meanwhile, the dual-weighting factors are
  \begin{equation}
    \dbb^{\real}= %(x_{v_1\vb\vh},x_{v_2\vb\vh}) =  % \delta({x_{v_1\vb\vh},0})\;\delta({x_{v_2\vb\vh},0})  = 
    \left(
    \begin{array}{c|c}
      1&0\\ \hline
      0&0
    \end{array}
    \right)
    \quad\quad,\quad\quad
    \dbb^{\virtual} = %(x_{v_1\vb\vh},x_{v_2\vb\vh}) = 
    \left(
    \begin{array}{c|c}
      0&0\\ \hline
      0&\Acal_M
    \end{array}
    \right)\;.
  \end{equation}
%\end{defin}
%\begin{remark}
Notice that dual-weighting factors satisfy $\dbb = \dbb^{\real} + \dbb^{\virtual}$.  
Thus, the \emph{real} and \emph{virtual} labels encode the conditions of the dual-weighting mechanism as in \eqref{dw-matrix2}.  
%Moreover, the real gravitational factor coincides with the original $\pbb$ from Equation \eqref{eq:strandfac}.  
In this way, the simplicity constraints are effectively only applied on real edges via $\pbb^\real$ while
the virtual edge factor $\pbb^\virtual$ decouples the information assigned to the edge in one atom, from that assigned with the other. 
%\end{remark}

%\begin{defin}[{dually-weighted vertex operator}]
The spin-foam vertex operator % $\sfo_v(g_v;m_v)=\Vbb_{\bbg}(g_v,m_v)$ 
given by the GFT interaction kernel has in general an analogous structure
\[\label{dw-vertex-operator}
\Vbb_{\bbg}(g_v,m_v) = \prod_{\vh\in\Vh} \left[ 
\vbb^{\real}(g_{v\vb_1\vh},g_{v\vb_2\vh}) \ibb^{\real}_{m_{v\vb_1\vh},m_{v\vb_2\vh}} + 
\vbb^{\virtual}(g_{v\vb_1\vh},g_{v\vb_2\vh}) \ibb^{\virtual}_{m_{v\vb_1\vh},m_{v\vb_2\vh}}\right]\;
\]
for $\bbg=\bbg_{4,\simplicial}$ the unique 4-simplicial bisected boundary graph.
Since all gravitational information can be encoded in the edge operator the factors are simply
    \begin{equation}
    \vbb^{\real}(g_{v\vb_1\vh},g_{v\vb_2\vh}) = \vbb^{\virtual}(g_{v\vb_1\vh},g_{v\vb_2\vh})=  \delta(g_{v\vb_1\vh},g_{v\vb_2\vh}) \;
    \end{equation}
and
    \begin{equation}
	    \ibb_{real} =
	    \left(
    \begin{array}{c|c}
      1&0\\ \hline
      0&0
    \end{array}
    \right)
    \quad\quad\quad\quad
    \ibb_{virtual} = %(x_{v_1\vb\vh},x_{v_2\vb\vh}) = 
    \left(
    \begin{array}{c|c}
      0&0\\ \hline
      0&\id_M
    \end{array}
    \right)\;.
  \end{equation}
%\end{defin}

%
%\fbox{
%  \begin{minipage}[c][][c]{0.97\textwidth}
%%\begin{defin}[{dually-weighted} {\sc eprl gft}]
%A \emph{dually-weighted EPRL group field theory} is defined by a partition function of the form:
%\begin{equation}
%	Z_{\textsc{dw-}\eprl} = \int [\d\Phi] \; e^{-S_{\textsc{dw-}\eprl}[\Phi]} 
%\end{equation}
%where:
%\ba
%\label{eq:sfdwaction}
% S_{\textsc{dw-}\eprl}[\Phi] 
% & = & \frac12\int[\d g]\sum_{[m]}\phi(g_1;m_1)\;\Kbb(g_1, g_2;m_1,m_2)\;\phi(g_2;m_2) \nonumber\\
% &&	+ \lambda\int [\d g]\sum_{[m]} \Vbb_{\bbg}(g;m)\prod_{j =1}^{\copies+1} \phi(g_j;m_j)
%%
%% = \sum_{\bp\in\mathfrak{P}} \int [\d g]\; \phi_{\bp}(g_{v_1\vb})\; \Pbb_{\bp}(g_{v_1\vb},g_{v_2\vb})\;\phi_{\bp}(g_{v_2\vb}) 
%% + \sum_{\bbg\in \mathfrak{B}} \lambda_{\bbg} \int [\d g]\; \Vbb_{\bbg}(g_v)\;\prod_{\vb\in\Vbar}\phi_{\bp}(g_{\vb})\;,
%\ea
%%\end{defin}
%\end{minipage}
%}

One can confirm now that these operator assignments lead to the correct effective amplitude. 
\begin{proposition}
In the large-$M$ limit, the effective amplitude assigned by a dually-weighted gravitational model defined by the (inverse) kinetic kernel \eqref{dw-edge-operator} and interaction kernel \eqref{dw-vertex-operator} coincides with that of the original gravitational model \eqref{gravitational-sf}.
\end{proposition}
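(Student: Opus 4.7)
The plan is to start from the perturbative expansion of the dually-weighted GFT and track how the gravitational content is carried through the reduction $\Pi_{\copies,\snb}: \sfrst_{\copies,\snb} \lora \sfrs$. First, I apply proposition \ref{prop:mlimit} to the observable expectation value so that, in the large-$M$ limit, the sum reduces to one over non-branching labelled simplicial molecules $\sfrt \in \sfrst_{\copies,\snb}$. For each such $\sfrt$, the kinetic-kernel inversion yields a propagator whose group part factorises into a product over boundary edges of \eqref{eq:extprop}, and the interaction kernel \eqref{dw-vertex-operator} factorises similarly over virtual bisecting vertices. Because of the block-diagonal form of $\dbb^{\real}, \dbb^{\virtual}, \ibb^{\real}, \ibb^{\virtual}$, every edge of $\sfrt$ is consistently marked either \emph{real} or \emph{virtual}, and the group-integral part of the amplitude splits accordingly into a real sub-amplitude and a virtual sub-amplitude.

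The second step is to show that the virtual sub-amplitude acts as a trivial identification map. On each virtual edge, the propagator factor $\pbb^{\virtual}$ in \eqref{virtual-edge-operator} is a pair of delta functions that identifies $g_{v_1\vb\vh}$ with $h_{v_1\vb}$ and $g_{v_2\vb\vh}$ with $h_{v_2\vb}$; combined with the virtual vertex factors $\vbb^{\virtual}$, which are $\delta(g_{v\vb_1\vh},g_{v\vb_2\vh})$, the integration over all virtual group variables trivialises and simply transports the gravitational data from one real boundary of a maximal virtual substructure to the other. Under the action of $\Pi_{\copies,\snb}$, which contracts virtual edges and deletes virtual loops (remark \ref{rem:atomicred}, remark \ref{rem:moleculered}), each such virtual substructure collapses to a single (effective) edge or vertex of the reduced molecule $\sfr = \Pi_{\copies,\snb}(\sfrt) \in \sfrs$. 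Because $\pbb^{\virtual}$ and $\vbb^{\virtual}$ are convolutive identities with respect to group convolution, the result of integrating out the virtual variables is exactly to glue the neighbouring real gravitational operators into one copy of \eqref{edge-amplitude} per effective boundary vertex $\vb \in \Vbar_\sfr$, and one copy of \eqref{vertex-amplitude} per effective bulk vertex $v \in \V_\sfr$.

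The third step is to match the combinatorial structure and the symmetry factors. Here I invoke proposition \ref{prop:all-graphs} and its corollary (the decomposition map $\sdec$): for any $\sfr$ in the image of $\Pi_{\copies,\snb}$, the preimage $\Pi_{\copies,\snb}^{-1}(\sfr)$ is the set of non-branching simplicial decompositions of $\sfr$. After integrating out the virtual variables, every preimage contributes the \emph{same} gravitational amplitude $Z(\sfr)$ of \eqref{gravitational-sf} since this amplitude depends only on the combinatorics of $\sfr$ and the gravitational kernels $\pbb^{\real}, \vbb^{\real}$ attached to it. Re-summing the constant contribution over $\Pi_{\copies,\snb}^{-1}(\sfr)$ together with the automorphism factors $\sym(\sfrt)^{-1}$ then defines the effective weight $\sym^{\textbf{eff}}(\sfr)^{-1}$ appearing in \eqref{dw-effective}, so that the expectation value takes exactly the form of a sum over $\sfr \in \sfrs'$ weighted by $Z(\sfr)$, which is the claimed gravitational SF sum.

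The main obstacle I anticipate is the careful bookkeeping in step two: one has to verify that long virtual chains and coincident virtual faces shared between several bondings really do integrate to convolutive identities in the group, and do not produce spurious gauge-averaging or normalisation factors. This requires using the invariance of the Haar measure, the factorisation of $\pbb$ across boundary edges in \eqref{eq:extprop}, and the fact that, within $\sfrst_{\copies,\snb}$, the valency of virtual bisecting vertices is restricted to $\ks=2$ internally and $\ks=1$ on the boundary (remark \ref{rem:moleculered}), precisely so that each integration over a virtual group element collapses neatly via a single delta function.
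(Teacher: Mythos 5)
Your proposal is correct and follows essentially the same route as the paper's proof: restrict to non-branching molecules via proposition \ref{prop:mlimit}, integrate out the group variables attached to each virtual vertex so that the delta functions in $\pbb^{\virtual}$ and $\vbb^{\virtual}$ collapse to identifications of the holonomies $h_{v\vb}$ across the contracted virtual structure, and recognise the result as the gravitational amplitude \eqref{gravitational-sf} on the reduced molecule $\sfr = \Pi_{\copies,\snb}(\sfrt)$. Your third step on re-summing preimages into effective symmetry factors goes slightly beyond what the paper proves here (it belongs to the surrounding discussion of \eqref{dw-effective}), but it is consistent with it.
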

\begin{proof}
Employing proposition \ref{prop:mlimit}, in the large-$M$ limit, the contributing molecules are restricted to those, for which their virtual vertices $\vh \in \Vh \In \V_{\sfrt}$ lie in precisely four virtual faces $f\in\mathcal{F}_{\sfrt}$. 
The amplitude in the perturbative expansion \eqref{dw-perturbative} is
\[\label{dw-amplitude}
A(\sfrt;\lambda) = \lambda^{|\V|} \int [\d g]\sum_{[m]}\prod_{\vb\in\Vbar} %\Pbb(g_{v_1\vb},g_{v_2\vb};{m_{v_1\vb},m_{v_2\vb}}) %
\Pbb_{\vb}(g_{\vb};m_{\vb}) 
\prod_{v\in\V}\Vbb_\bbg(g_v;m_v)\;,
	\end{equation}
  Then, the key calculation examines the effect of integrating out the variables associated to components in the neighbourhood of this vertex $\vh$. More precisely, in the amplitude $A(\sfrt;\lambda)$ there is for each such vertex a factor 
  \begin{multline}
C_\vh(g_{v_1\vb_{12}\vh}, g_{v_2\vb_{12}\vh}; h_{v_1\vb_{12}}, h_{v_2\vb_{12}}) =
\vbb^{\virtual}(g_{v_1\vb_{12}\vh},g_{v_1\vb_{21}\vh})\;  \pbb^{\virtual}(g_{v_1\vb_{12}\vh}, g_{v_2\vb_{12}\vh}; h_{v_1\vb_{12}}, h_{v_2\vb_{12}})\\
\times \vbb^{\virtual}(g_{v_2\vb_{12}\vh},g_{v_2\vb_{21}\vh})\; \pbb^{\virtual}(g_{v_1\vb_{21}\vh}, g_{v_2\vb_{21}\vh}; h_{v_1\vb_{21}}, h_{v_2\vb_{21}})\;
	  \label{eq:eval}
\end{multline}
whose configuration is illustrated in \fig{contraction}.
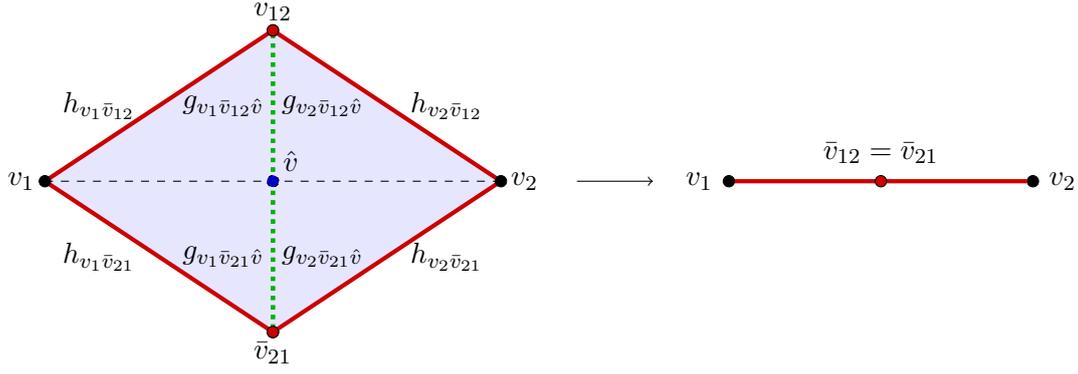
\begin{figure}%[htb]
  \centering
  \tikzsetnextfilename{contraction}
  \begin{tikzpicture}[scale=2]

\begin{scope}
\node [c,label=left:$v_1$]		(1)	at (0,0)		{};
\node [c,label=right:$v_2$]	(2)	at (3,0)	 	{};  
\node [c,label=45:$\vh$]		(vh)	at (1.5,0)		{};
\node [c,label=above:$\vb_{12}$](12)at (1.5,1)		{};
\node [c,label=below:$\vb_{21}$](21)	at (1.5,-1)		{};
\foreach \i/\j/\k in {1/12/21,2/12/21}{
\path		[f] 	(\i) 	--  (\j) -- (vh) --  (\k) -- cycle;
\draw	[eh]	(\i)  	-- (vh);
\draw[dotted,eb](vh) 	-- (\j);
\draw[dotted,eb](vh) node[vh]{}	-- (\k);
 }
\draw [e] 	(1) 			-- node [label=left:$h_{v_{1}\vb_{12}}$,label=right:$g_{v_{1}\vb_{12}\vh}$]{} (12) node[vb]{};
\draw [e] 	(1) node[v]{}	-- node [label=left:$h_{v_{1}\vb_{21}}$,label=right:$g_{v_{1}\vb_{21}\vh}$]{} (21) node[vb]{};
\draw [e] 	(2) 			-- node [label=right:$h_{v_{2}\vb_{12}}$,label=left:$g_{v_{2}\vb_{12}\vh}$]{} (12) node[vb]{};
\draw [e] 	(2) node[v]{}	-- node [label=right:$h_{v_{2}\vb_{21}}$,label=left:$g_{v_{2}\vb_{21}\vh}$]{} (21) node[vb]{};
\end{scope}

\draw [->] (3.5,0) -- (4,0);

\begin{scope}[xshift=4.5cm]
\node [v,label=left:$v_1$]		(1)	at (0,0)		{};
\node [v,label=right:$v_2$]	(2)	at (2,0)	 	{};  
\draw [e] (1) -- node[vb,label=above:${\vb_{12} = \vb_{21}}$] {} (2);
\end{scope}
\end{tikzpicture}
  \caption{\label{fig:contraction} Integrating out a virtual face.}
\end{figure}
%Manipulating the amplitude, one may simplify 
Applying the integrals with respect to the elements the set $g_{\vh}$ contained in the amplitude \eqref{dw-amplitude} these factors \eqref{eq:eval} simplify to
\[\label{eq:evalResult}
\int \d^4 g_{\vh} \; C_\vh(g_{v_1\vb_{12}\vh}, g_{v_2\vb_{12}\vh}; h_{v_1\vb_{12}}, h_{v_2\vb_{12}}) = \delta(h_{v_1\vb_{12}}\;h_{v_1\vb_{21}}^{-1})\;\delta(h_{v_2\vb_{12}}\;h_{v_2\vb_{21}}^{-1})\;.
\]
   This integration can be completed since the elements in $g_{\vh}$ only occur within the four factors \eqref{eq:eval}.
   Now, one is free to use these two $\delta$-functions to integrate out the variables $h_{v_1\vb_{21}}$ and $h_{v_2\vb_{21}}$, setting $h_{v_1\vb_{21}}=h_{v_1\vb_{12}}$ and $h_{v_2\vb_{21}}=h_{v_2\vb_{12}}$ in the remaining factors within \eqref{dw-amplitude} to arrive at the gravitational amplitude \eqref{gravitational-sf} assigned to the molecule $\sfr = \Pi_{\copies,\snb}(\sfrt)$ obtained from $\sfrt$ by molecular reduction along the virtual structure (illustrated in \fig{contraction}):
\[\label{sf-gft-relation}
\lambda^{|\V|} Z(\sfr) = A(\sfrt;\lambda) \;.
\]
Note that the exponent $|\V|$ in the the spin-foam weight $w_\sfr = \lambda^{|\V|}$ still refers to the number of internal bulk vertices $v\in\V$ of the original labelled molecule $\sfrt$.
\end{proof}

\

%\begin{remark}[{imposing greater simplicity}]
While the assignment of trivial edge operators $\pbb^\virtual$ \eqref{virtual-edge-operator} to virtual edges is the way to obtain the KKL-extension \cite{\KKL} of gravitational SF amplitudes,
the dually weighted GFT broadens the framework to allow for other reasonable models.
For example, another tempting proposal is to impose the simplicity constraints on both real \emph{and} virtual structures.
The motivation is that, assuming a polyhedral interpretation is available, the polyhedra corresponding to the states of the model will now be decomposed into \emph{geometric} simplices, the geometricity of each being ensured by the imposition of the simplicity constraints.	
This amounts to altering the propagator \eqref{eq:extprop}, now defining
\ba
\label{eq:strandfinal}
\pbb^{\virtual}(g_{v_1\vb\vh}, g_{v_2\vb\vh}; h_{v_1\vb}, h_{v_2\vb})
& = & \pbb^{\real}(g_{v_1\vb\vh}, g_{v_2\vb\vh}; h_{v_1\vb}, h_{v_2\vb}) \nonumber\\
& =  &\sum_{J_{\vh}\in \mathcal{J}} \tr_{J_{\vh}}\left(g_{v_1\vb\vh}\;h_{v_1\vb}^{-1}\; \Sbb_{J_{\vh},N_{0}}\; h_{v_2\vb}\; g_{v_2\vb\vh}^{-1}\right) .
\ea
This defines, a priori, a \emph{different} SF model, with an expected higher degree of geometricity.	
A motivation for this change stems from the logic that polytopes which are constructed from geometric simplices are likely to be more physically viable than polytopes constructed from simplices that are only partially geometric (in the sense that the simplicity constraints are not imposed on some of their virtual sub-cells). 

It is worth clarifying that the resulting model can be interpreted in two ways, depending on whether the large-$M$ limit is taken or not:
\begin{itemize}[leftmargin=*]
\item The perturbative series are catalogued by molecules in $\sfrst_{4,\simplicial}$. In the large-$M$ limit, the surviving molecules are again those of $\sfrst_{4,\snb}$.   Within this model, reduction leads to effective amplitudes that do not coincide with those assigned by KKL-extended SF models to the generic spin-foam molecules $\sfrs' = \Pi_{4,\snb}(\sfrst_{4,\snb})$.

\item The perturbative series are catalogued by molecules in $\sfrs_{4,\simplicial}$.
Due to the coincidence of the strand factors in \eqref{eq:strandfinal}, the dual-weighting part of the amplitude factorizes completely from the gravitational part, as well as over the vertices $\vh$ yielding a factor
\[\label{eq:factorfinal}
w_\vh = \tr\Big(\prod_{(\vb\vh)}\dbb\Big) + 1 =\tr\Big(\prod_{(\vb\vh)}\Acal\Big)\;,
\]
for each $\vh\in\Vh$, where the product is over those edges $(\vb\vh)$ incident at $\vh$. Thus, without taking the large-$M$ limit, one gets the original \emph{simplicial} gravitational model, with a slight modification of the weights by a factor $w_\vb^{|\Vh|}$. % \eqref{eq:factorfinal} for each vertex $\vh$
%\cdo{ can this be absorbed in the couplings?}
Since the number of vertices $|\V|$ and the number of faces $|\Vh|$ are generically independent for simplicial molecules $\sfrs_{4,\simplicial}$, these factors cannot be simply absorbed in a re-definition of the coupling $\lambda$ which appears with a power of $|\V|$ in \eqref{sf-gft-relation}.
\end{itemize}
%\end{remark}

\

To sum up, the topic of this \sec{dw-gft}, has been the  definition of a class of GFTs admitting boundary states supported on arbitrary closed graphs.
Rather than increasing the number of fields and interactions, I have extended the data set of the usual simplicial GFT. 
Employing these extra arguments to invoke a dual-weighting matrix, the effective dynamical content of arbitrary spin foams and boundary states is then obtained in a limit of the theory.

There are a number of advantages of this strategy as compared with the multi-field extension discussed in \sec{multi-field}.
Quite generally, the underlying simplicial GFT with just one field and one interaction term is much more easily controllable. This is not only relevant from the pragmatic point of view where one aims at explicit calculations of estimates of observables but also from the more mathematical perspective where one seeks for a proper definition of the GFT path integral.

Also, notice that the {\it coloured extension} of the GFT  formalism can be directly applied to the dually-weighted model. %, provided the valence $\copies$ is chosen to be the spacetime dimension $\std$. 
This can be done either by choosing also a simplicial GFT  interaction, which brings one back to the standard simplicial setting, or by choosing as GFT  interactions only the tensor invariant ones. The result of doing so is, in both cases, a set of GFT  Feynman diagrams dual to combinatorial complexes whose full homological structure can be reconstructed from the colour information. 

Furthermore, the class of dually weighted GFTs opens up the theory space.
%\begin{remark}[{model building}]
%      The choice of group field determines that the kinematical state space of the theory is populated by quantum states with support on arbitrary $\copies$-valent labelled graphs. The dual-weighting mechanism is the part of the dynamics that ensures that those states are evolved by spin foam molecules that reduce properly to arbitrary spin-foam molecules. 
The precise choice of GFT  action, incorporating the dual-weighting mechanism, is then a matter of model building. In particular, depending on the choice of interaction kernels, some classes of graphs, present in the kinematical Hilbert space, can be suppressed dynamically. 
%  \begin{itemize}
 %   \item

Therefore, one possible criterion for model building stems from the wish to suppress or enhance specific combinatorial structures. 
Conversely, one may want to start from the simplest set of GFT  interactions that ensures that {\it all} kinematical states participate to the quantum dynamics. 
Simplicial spin-foam atoms, in the context of the dually-weighted theories, satisfy this criterion as proven in this section.

%\item
Another criterion that might determine the choice of GFT  interaction combinatorics emerges from the correspondence between the interaction kernels and the matrix elements of a canonical LQG  projector operator in the Fock representation, emphasized in \cite{Oriti:2013vv}. Prescribing the latter implies a choice for the former. In general though, one should expect there to be infinitely many non-trivial matrix elements, meaning infinitely many GFT  interaction kernels, unless these are restricted by very strong symmetry requirements. As a result, the real quest centres on pinpointing the subsets of interactions that are \emph{physically} relevant at different scales, in particular,  to define the theory in some deep UV or IR regime. In other words, the problem becomes that of \emph{GFT  re\-normalization} \cite{\gftrenorm,\COR}. In fact, one should expect that the renormalization group flow will select a finite set of GFT  interactions to define a renormalizable GFT  model. Moreover, this  dictates which new terms are relevant for the quantum dynamics at different scales. In turn, this prescribes a renormalizable LQG  dynamics.

%===========================================================================
%===========================================================================

\renewcommand{\rep}[1]{{j_{#1}}}

\chapter%[Effective dimensions and fractal structure of quantum geometries]
{Effective dimensions of quantum geometries}\label{ch:dimensions}

In this chapter I am going to address the challenge of relating the discrete quantum geometries in the theory laid out in the last chapter to the continuum geometries one observes. 
To this end I will focus on a special case: 
I will investigate global geometric properties of quantum geometry.
 %Since this is still a considerable challenge 
In particular, based on the publications \cite{\COTb,\COTc}, I will consider here one of the more feasible cases, that is effective dimension observables, more precisely the spectral, walk and Hausdorff dimension.
While the focus will be on the spectral dimension, I will investigate all three dimension in the most interesting case of a class of superposition states with a dimensional flow, in order to be able to also make some more precise statements about a possible fractal structure in quantum gravity.

The spectral dimension of quantum geometries has been a rather hot topic in recent quantum gravity research. It is one of the indicators of topological and geometric properties of quantum spacetime as described by different approaches, as well as a way to check that an effective semi-classical spacetime is obtained in appropriate sectors of the theory.
One of the main goals of any such analysis is to prove that, among all the states and histories that appear in the theory (most of which far from describing any smooth spacetime geometry), configurations which do so are either dominant or the approximate result of averaging over the others. 
A further goal is to study whether the effective dimension of spacetime remains the same at different scales or if it is subject to a dynamical reduction in the ultraviolet regime \cite{Hooft:1993vl,Carlip:2009cy,Calcagni:2009fh}.

In dynamical triangulations \cite{Ambjorn:2010kv,Ambjorn:2012vc}, the spectral dimension has been important as one of the few observables available to classify phases of spacetime ensembles. 
In particular, in CDT in four dimensions it was found that, while very close to the spacetime topological dimension $\std=\sd+1=4$ at large scales, at least in the geometric phase, it takes smaller values at short distances, approaching the value $\Ds\simeq 2$ in the ultraviolet limit \cite{Ambjorn:2005fj,Ambjorn:2005fh}. Similar results are obtained in three dimensions \cite{Benedetti:2009bi}. Two is, of course, an interesting value for the effective dimension in the ultraviolet, because gravity is perturbatively renormalizable in two dimensions. In fact, also in the asymptotic safety scenario \cite{Niedermaier:2006up,Reuter:2012jx} one can find a reduction of the spectral dimension from the topological dimension $\std$ to half of it under the renormalization group flow using analytical methods \cite{Lauscher:2005kn,Reuter:2013ji,Calcagni:2013jx}. Other formalisms manifest similar behaviour. Ho\v{r}ava--Lifshitz gravity has a built-in dynamical dimensional reduction, due to the defining anisotropic scaling \cite{Calcagni:2013jx,Horava:2009ho}. 
Modified dispersion relations in non-commutative spacetimes \cite{Benedetti:2009fo,Alesci:2012jl}, super-renormalizable non-local gravity \cite{Modesto:2012er}, black-hole effective physics \cite{Arzano:2013jj} and other scenarios (including string theory \cite{Calcagni:2014ig}) give rise to a dimensional flow as well.

Apart from the intrinsic interest in this effect, its occurrence in approaches to quantum gravity which are otherwise hard to compare has raised hope that the spectral dimension might help to relate such approaches to one another, or at least to confront their results \cite{Calcagni:2012rm}. Although the spectral dimension is a rather coarse tool in comparison with more refined geometric indicators \cite{Calcagni:2013jx,Calcagni:2013ds}, its determination in full-fledged quantum geometries is still a technically challenging problem.

All the above considerations give good reasons to investigate the spectral dimension also for discrete quantum geometries as they appear in LQG. %loop quantum gravity (LQG) \cite{Rovelli:2004wb,Thiemann:2007wt,Perez:2013uz, Oriti:2013vv, Oriti:2012wt}. 
Using the spectrum of the area operator in LQG to guess an effective dispersion relation for propagating particles, some evidence was found for a dimensional flow of kinematical LQG states \cite{Modesto:2009bc} and their spacetime spin-foam dynamics \cite{Caravelli:2009td,Magliaro:2009wa}. Here I go beyond the arguments used in previous analyses and investigate the spectral dimension of LQG states in detail, at least at the level of kinematical states. 
In particular, I will study the spectral dimension for states defined on large graphs or complexes, and of superpositions over such complexes.

LQG differs from all the above-mentioned approaches in that the degrees of freedom of the theory are discrete geometric data, \ie a combination of holonomies, fluxes or Lorentz group representations \cite{Rovelli:2004wb,Baratin:2011hc} associated with combinatorial structures such as graphs and combinatorial complexes. This combination poses new challenges.
So far, the spectral dimension has been investigated in more or less traditional smooth settings, in terms of either modified differential structures (multi-scale spacetimes \cite{Calcagni:2013ds}) or modified dispersion relations on a smooth geometric manifold. 
Examples are asymptotic safety, where anomalous scaling is a general feature at the non-Gaussian fixed point \cite{\asfractal}; non-commutative spacetimes, where modified dispersion relations are a consequence of a deformed Poincar\'e symmetry \cite{\ncfractal}; or Ho\v{r}ava--Lifshitz gravity, where the Laplacian is multi-scale by construction \cite{\hlfractal}.
Also previous work on the LQG case \cite{Modesto:2009bc} has been limited to an approximation in which the scaling of the Laplacian is first extracted from the area spectrum and then effectively treated as continuous. In other approaches, the setting is purely combinatorial: such is the case of CDT \cite{\cdtfractal,Benedetti:2009bi, Coumbe:2015bq}  %(the dynamical degrees of freedom are equilateral triangulations) 
or graph models \cite{Giasemidis:2012kq,Giasemidis:2012er,Giasemidis:2013wz}.

Therefore, the first important step is a definition of the spectral, walk and Hausdorff dimension on the type of discrete quantum geometries LQG is built on. %\footnote{There are approaches to quantum gravity where spacetime structures are even more peculiar, but where notions of dimension nevertheless still can be defined: In causal sets dimension estimators such as the so called Myrheim--Meyer or the midpoint dimension are used to determine, via the local Minkowski structure, the dimension of the curved spacetimes which causal sets can be faithfully embedded into \cite{Reid:2003jc}.}.
%%%
To this end, the generalization of discrete exterior calculus \cite{Desbrun:2005ug} to be applicable to such discrete geometries described in \sec{calculus} is a crucial precondition.
%, and I have given explicit expressions for the Laplacian (the basic geometric operator the spectral dimension depends on) in terms of the variables relevant for LQG.

%In this chapter, the focus is on the effective dimensions of $\sd$-dimensional, spatial states of geometry. % and `spectral dimension' will always refer to the one associated to spatial dimension $\sd$.
%Also, by \emph{quantum spectral dimension} of a quantum state of geometry \cjt{we} indicate the spatial spectral dimension as derived from the expectation value of the heat kernel trace on the quantum geometry. In the usual interpretation, this heat kernel is the solution of the diffusion equation for a test pointwise particle propagating on the quantum geometry, where the differential variable is an abstract `diffusion time' $\tau$. One can employ a different and perhaps more convincing interpretation \cite{Calcagni:2014ig,Calcagni:2014uv}, where the diffusion equation is a running equation with respect to the resolution $1/\sqrt{\tau}$ at which geometry is probed. \cjt{we} will explain why, in the case of discrete quantum gravity, for a well-defined notion of quantum spectral dimension the quantum expectation value has to be taken at the level of the heat kernel, and not directly of the Laplacian.

The combination of combinatorial discreteness and additional (pre-)geometric data obviously must result in an interplay of their respective effects. To the best of my knowledge, the latter have never been classified in the literature. A major aim of this chapter is to understand this interplay in the case of discrete quantum geometries. %kinematical LQG states.
%Which features come from the underlying discreteness, which are due to the geometric data and how do they interact?
To this end, I will first have to study systematically the topological and geometric effects in the spectral dimension of smooth spheres and tori as well as the discreteness effects of combinatorial complexes with geometric realizations as cubulations or triangulations of these.
This groundwork having been done, I can then compare the effective dimensions of quantum states with those of the discrete geometries they are semi-classical approximations or superpositions of. 
For the spectral dimension, I find as a general tendency that the effect of the underlying combinatorics dominates and that a relatively large size of the base complexes is needed to obtain a geometric regime of spatial dimension at all.

To construct explicit LQG states on large complexes and calculate the corresponding expectation values of global observables is a major and seldom accepted challenge. In the case of the spectral dimension, however, this becomes inevitable. 
As a line of attack, I set up a way to numerically construct large abstract simplicial complexes and define geometries thereon. I follow \cite{Bell:2011wu} to derive their combinatorial properties needed for studying the action of the Laplacian operator.

While the preparatory work on the classical geometries is done in arbitrary dimensions, I restrict the detailed quantum analysis to kinematical states of (2+1)-dimensional Euclidean LQG. There are two reasons for this restriction. 

The first is merely technical. The number of degrees of freedom for given assignments of algebraic data grows with the combinatorial size $\size$ of the considered complexes as $\np0 \sim \size^\sd$. The Laplacian operator defined thereon is an $\np0\times \np0$ matrix, which makes the computational effort more severe as $d$ increases. Thus, for a given size $\size$ one can calculate much larger $\sd=2$ discrete geometries than in $\sd=3$. On the other hand, a sufficient combinatorial size turns out to be crucial for the physical interpretation of the spectral dimension as a notion of positive-definite dimension somewhere close to $\sd$. This limitation could be easily overcome by using a more powerful computational environment for numerical analysis.

The second reason lies in the structure of quantum gravity itself. In 2+1 dimensions, the spin-network basis of LQG simultaneously diagonalizes all edge-length operators, permitting a straightforward definition of the Laplacian and a direct identification of deviations of the quantum spectral dimension from its classical counterpart as quantum corrections. In 3+1 dimensions, one would have to deal further with effects due to the non-commutativity of the full set of geometric operators needed to define the Laplacian, as well as with the role of non-geometric configurations which are present in the standard $SU(2)$-based Hilbert space of the theory. Its construction thus becomes much more involved and it is much harder to be implemented efficiently for numerical calculations. 
While these challenges do not pose any obstacle in principle, I prefer to concentrate here on the more straightforward 2+1 case for detailed computations.
%Once the general influence of quantum geometric effects (discreteness, quantum fluctuations, and so on) on the spectral dimension is understood, one can move on to study the additional complications that are present in the four-dimensional case.

Motivated by the findings in the concrete $(2+1)$-dimensional LQG case, I will then use a few reasonable assumptions to set up a model for superpositions over complexes in arbitrary spatial dimension $\sd$ which is based on analytical results for the single terms.
Surmounting the numerical limitations in this way, I can then calculate much larger superpositions with up to $10^6$ complexes summed over.
Furthermore it becomes possible to investigate a much broader class of states with respect to the functional form of the superposition coefficients.

In this way I find a special class of such superposition states with power function coefficients  characterized by a real-valued parameter $\alpha$ which have a dimensional flow in the spectral dimension towards $\ds=\alpha$ in an intermediate UV regime.
Furthermore, also walk and Hausdorff dimensions can be evaluated.
This allows to single out a subclass of states for which the effective geometry can be considered as fractal in a more precise sense.

%In this paper, \cjt{we} present a special class of superpositions of discrete quantum states characterized by a real-valued parameter $\alpha$. This parameter will control the scale-dependent values taken by the spectral dimension, and therefore the dimensional flow. %that $\ds$ takes at different lengths or energies. 
%These superpositions are over states based on regular complexes corresponding to hypercubic lattices to which a single quantum label is assigned, uniformly to all cells of a certain dimension. Such states occur indeed in the kinematical Hilbert space of the quantum gravity formalisms \cjt{we} just mentioned: LQG, spin-foam models and GFT. 
%Because of the uniform labeling, these superpositions are also similar to the discrete geometries in CDT, although \cjt{we} understand the former not as regularization tools for physically smooth geometries but as fundamentally discrete structures with their own physical interpretation. Contrary to the CDT setting, \cjt{we} interpret the combinatorial structures \cjt{we} superpose as defining quantum gravity states, not histories, and the coefficients in the superpositions to have no immediate dynamical content. However, \cjt{we} point out that this interpretation enters only minimally in the actual calculations and it could be generalized.

The investigation of effective dimensions is quite explorative in nature. 
There are no constraints on the kind of complexes which discrete  quantum geometry states could be defined on. 
Thus, the only guidance may come from classical geometries one might want to approximate semi-classically.
Furthermore, there is no unique way to parametrize the `quantum-ness' of states. 
In fact, to search for genuine quantum effects of discrete quantum geometries, it is not enough to pick one state and look only at the value of its spectral dimension at small scales, as is usually done when some kind of dimensional flow is expected. 
Instead, I compute the spectral dimension for those states considered as ``more quantum'" which are in the first place coherent spin-network states peaked at smaller representation labels and with larger spread and, more generally, states highly randomized in the intrinsic geometry or superposed with respect to various geometric degrees of freedom.
The more interesting quantum effects like a dimensional flow occur in states of high ``quantumness" in the sense that  they are superpositions over a large set of different complexes.

\

The structure of this chapter is as follows.
In section \sec{dimension-definition}, the concepts of spectral, walk and Hausdorff dimension are introduced by their common definitions in a smooth context.
I present then definitions on discrete geometries which is the basis whereupon to define the quantum dimensions. % in general, and explicitly in the case of (2+1)-dimensional LQG, which \cjt{we} review.

As a preparation for quantum geometries, in section \ref{sec:classical} I will discuss first classical geometric and topological features, both analytically and numerically, in particular for the simple examples of spheres and tori.
Similarly I address then features of the effective dimensions on discrete structures, exemplified through various cases.
%These preliminary considerations are carried out also in higher dimensions. This part of the paper classifies phenomena which can appear in all contexts where a discrete spacetime structure is employed, both in analytic and numerical applications; hence, it may be of interest also for the general reader not specialized in quantum gravity.

In section \ref{sec:coherent-states} I present numerical calculations of the spectral dimension of kinematical LQG states in $\std=2+1$ dimensions.
Here the focus is on investigating the relevance, for the spectral dimension, of the geometric data additional to the geometry coming from purely combinatorial structures. 
The spectral dimension of semi-classical states is compared with the classical one to identify quantum corrections, which turn out to be rather small. 
Furthermore, superpositions of coherent states peaked at different spins will be discussed.
%; the spectral dimension of these configurations will be found via a certain averaging procedure.

Complementary to these superpositions in geometry, in section \ref{sec:superpositions} I analyse the spectral dimension of superpositions of states on two classes of torus triangulations. It turns out that the quantum spectral dimension in LQG is more sensitive to the underlying graph structure of states than to the associated geometric data.
In particular I find a dimensional flow for a fairly general class of superposition states over lattices for fixed overall volume which depends on the exponent parameter of their power function superposition coefficients.

Sections \ref{sec:classical} and \ref{sec:coherent-states} are based on the publication \cite{\COTb}, though the subsection \ref{sec:hypercubic} contains to a large extent new material. Section \ref{sec:dimensional-flow} is based on the publication \cite{\COTc}.
The definitions presented in \sec{dimension-definition} are a more detailed and improved version of the ones presented in \cite{\COTb} and \cite{\COTc}, respectively.

%===========================================================================

\section{Discrete and quantum definition of effective dimensions}
\label{sec:dimension-definition}

In this section, I start by introducing the concepts of Hausdorff, spectral and walk dimensions in the usual context of  Riemannian manifolds (\sec{riemannian}). 
Then I show in \sec{discrete-geometries} how these definitions can be transferred to combinatorial complexes. %, \ie abstract finite simplicial complexes with manifold properties. 
This sets the stage to define these quantities in \sec{quantum} on quantum states of geometry with variables supported on such complexes.

%==============================================================

\subsection{Three concepts of effective dimension}
\label{sec:riemannian}

The paradigmatic example of an effective dimension is the Hausdorff dimension $\dh$.
It allows to assign a notion of dimension to a metric space $X$  in terms of the infimum of positive real numbers $d\in\R^+_0$ for which the $d$-dimensional Hausdorff measure of $X$ vanishes (which itself is defined as a covering of $X$ in terms of $d$-dimensional balls).
On these grounds, the more applicable definition of $\dh$, common in the quantum gravity and random geometry literature where a notion of volume is at hand, is directly as the scaling of the volume $V(r) \propto r^{\dh}$ of elements at radius distance up to $r$. %\cjt{average over centre points?}
While the original definition is in the limit of small $r$, it is informative to generalize $\dh$ to a scale dependent function 
\[\label{dh}
d_{H}(r):=\frac{\partial}{\partial\ln r} \ln V(r).
\]

Spectral and walk dimension are similarly defined as scalings of geometric quantities, but they are based on the spectral properties of a space equipped with certain differential operators %(pseudo-) Riemannian manifold $(\mf,g)$ 
instead of mere metric properties.
In more physical terms, they are related to the scaling of the propagator of a massless test particle field or, equivalently, to a diffusion process on that space. %$(\mf,g)$. %some space or spacetime.
This is defined via the heat kernel $K(x,x_0;\tau)$ which describes the diffusion of the particle in fictitious time $\tau$ from point $x_0$ to $x$. 
The heat kernel is the solution of the diffusion equation
\begin{equation}\label{diffusion-equation}
\partial_{\tau}K(x,x_0;\tau)-\Delta_{x}K(x,x_0;\tau)=0\,,%\qquad K(x,y;0)=\delta(x-y)\,,
\end{equation}
where $\Delta$ is the Laplacian with respect to the underlying space.
Since the diffusion constant in front of the Laplacian is set to one here, $\tau$ has dimension of a squared length.
 
Alternatively, equation \eqref{diffusion-equation} can be regarded as a running equation establishing how much one can localize the point particle, placed at some point $x_0$, if one probes the geometry with resolution $1/\sqrt{\tau}$ \cite{Calcagni:2014uv}. 
The length scale $\sqrt{\tau}$ represents the minimal detectable separation between points. 
While in the diffusion interpretation $K$ is the probability to find the particle at point $x$ after diffusing for some time $\tau$ from point $x_0$, in the resolution interpretation $K$ is the probability to see the particle in a neighbourhood of point $x$ of size $\sqrt{\tau}$ if the geometry is probed with resolution $1/\sqrt{\tau}$. 
For infinite resolution ($\tau=0$) and a delta initial condition, the particle is found where it was actually placed (at $x_0$), while the smaller the resolution (\ie larger diffusion time) the wider the region where it can be seen. There is no practical difference between the two interpretations but the latter is more suitable in diffeomorphism-invariant theories where the notion of diffusion time makes no sense (while a Lorentz-invariant scale does).

A formal solution to the problem is provided by
\begin{equation}\label{Kformal}
K(x,x_0;\tau)=\left\bra x_0|\exp(\tau\Delta)|x\right\ket\,, 
\end{equation}
where the initial condition is incorporated as the definition of the inner product $\left\bra \cdot|\cdot\right\ket $ of the function space. %of particle states. 
Usually, the particle is taken to be concentrated at $\tau=0$ in the point $x_0$, but it could start with other distributions. 

The traditional context where this is  defined are (pseudo-) Riemannian $\m$-manifolds $(\mf,g)$. 
The relevant differential operator is then the Beltrami-Laplace operator, or equally the Hodge-Laplace-operator on functions, %(on 0-forms the connection Laplacian and Hodge Laplacian agree) 
\begin{align}
\Delta^{g} & =\Tr\nabla^{2}=g^{\mu\nu}\nabla_{\mu}\nabla_{\nu}=g^{\mu\nu}\left(\partial_{\mu}\partial_{\nu}-\Gamma_{\mu\nu}^{\rho}\partial_{\rho}\right)=\frac{1}{\sqrt{g}}\partial_{\mu}\sqrt{g}g^{\mu\nu}\partial_{\nu}\,.
\label{eq:LaplaceBeltrami}
\end{align}
which is a functional of the metric $g$.
With the initial condition
\begin{equation}
K(x,x_0;0^{+}) = \bra x_0 | x \ket = \frac{1}{\sqrt{g}}\delta(x-x_0)\,,
\end{equation}
the heat kernel is
\begin{equation}
K(x,x_0;\tau)=\frac{\rme^{-\frac{\dist x{x_0}^2}{4\tau}}}{(4\pi\tau){}^{\frac{\m}{2}}},
\label{eq:Kheatg}
\end{equation}
where $\dist x{x_0}$ is the geodesic distance. 
On $(\mf,g)$ there is a well-defined (the so called Seeley-DeWitt) expansion of the heat kernel around the solution
% (\ref{eq:Kheatg})
on a flat manifold $K_{0}(x,x_0;\tau)$ with the metric distance being the Euclidean norm $\dist{x}{x_0}=\sqrt{|x-x_0|^2}$, 
\begin{equation}
K(x,x_0;\tau)=K_{0}(x,x_0;\tau)\underset{k=0}{\overset{\infty}{\sum}}b_{k}(x,x_0)\,\tau^{k},
\end{equation}
where the $b_{k}(x,x_0)$ are computable in terms of geometric invariants \cite{Vassilevich:2003fw}.

The spectral dimension probed by the particle can now be defined as the scaling of the trace of the heat kernel over $\mf$
\begin{equation}\label{heat-trace}
P(\tau) = \Tr_\mf K(x,x_0;\tau) = \int_\mf \d x_0\, \sqrt{g}\, K(x_0,x_0;\tau)\,,
\end{equation}
which is also called \emph{return probability}. 
%In terms of the spectrum of the Laplacian given by eigenvalues $\lambda_{j}$ this is \begin{equation} P(\tau)=\frac{1}{V}\underset{j}{\sum}e^{-\lambda_{j}\tau}.\end{equation}
While in the flat case $P(\tau) = (4\pi\tau)^{-\m/2} V$, where $V=\int_\mf \d x \sqrt{g}$, the general heat-kernel expansion yields a power series 
\begin{equation}
P(\tau)=\frac{1}{(4\pi\tau)^{\frac{\m}{2}}}\overset{\infty}{\underset{k=0}{\sum}}a_{k}\tau^{k}
\end{equation}
where  % $V=\int_M \d^\m x\,\sqrt{g}$ and 
the coefficients $a_{k}=\Tr_\mf\,b_{k}$ consist of geometric invariants. % averaged over the manifold.

Motivated by the flat case with scaling $P(\tau) \propto (\sqrt\tau)^{-\m}$, one defines the \emph{spectral dimension} $\ds$ as the scaling of the heat trace,
\[\label{ds-scaling}
P(\tau) \propto (\sqrt\tau)^{\ds}%\frac{\ds}{2}},
\]
such that $\ds=\m$ for the flat case, while in general there are corrections related to global properties described by the geometric invariants $a_{k}$. 
Like in the case of the Hausdorff dimension, 
one can generalize this implicit definition (\ref{ds-scaling}) which is usually in the first place understood in the small-$\tau$ limit to a function of (diffusion) scale
\begin{equation}\label{ds}
\ds(\tau):=-2\frac{\partial\ln P(\tau)}{\partial\ln\tau}\;.
\end{equation}
By definition (via heat trace $P$, heat kernel $K$ and Laplace operator $\Delta$) the spectral dimension is a functional of the geometry $\left[g\right]$ on
the manifold $\mf$.

%In most of the literature on quantum gravity, equation \eqref{diffusion-equation} is taken on Euclideanized spacetime but, as stressed in the introduction, \cjt{we} will strictly confine \cjt{our} analysis to \emph{spatial} geometries. Consequently, $d=3$ is the case that fits large scale, semiclassical observations.

%\subsubsection*{Walk dimension}

A closely related observable is the \emph{walk dimension} $\dw$ which is also based on the heat kernel.
It is defined via the scaling of the mean square displacement implicitly as 
\[\label{displacement}
\langle X^{2}\rangle _{x_0}(\tau)=\int \d x \, \dist x{x_0}^2 K(x,x_0;\tau)\propto\tau^{2/\dw},
\]
which can be turned into an explicit, scale-dependent definition 
\[\label{dw}
\dw(\tau):=2\left(\frac{\partial\ln\langle X^{2}\rangle _{x_0}(\tau)}{\partial\ln\tau}\right)^{-1}.
\]
%If the space is not translation invariant, one can include an extra space average by integrating over spatial points $y$. {\bf [I'm not sure about this, it could give a trivial result (check the same trick for the Hausdorff dimension)]}
In this work I will always use the scale-dependent definition of each effective dimension. 

While defined on Riemannian manifolds in the first place, spectral and walk dimension are interesting in the context of quantum gravity because their definition resting on the Laplacian can be generalized to other classes of geometries as well.
In particular, I will explain in the following how to define them on discrete quantum geometries. 
To this end, the first step is their definition on combinatorial complexes with a geometric interpretation which is the topic of the next section.

%==============================================================

\subsection{Effective dimension definitions for discrete geometries}
\label{sec:discrete-geometries}

In this section I will now define the effective dimensions %$\dh$, $\ds$ and $\dw$ 
on combinatorial complexes with geometric interpretation. 
In the case of Hausdorff dimensions $\dh$ this is very much straightforward since the only notions needed are volumes and distance.
For the spectral and walk dimension $\ds$ and $\dh$ the discrete exterior calculus introduced in section \sec{calculus} is needed.
Through the dependence of the discrete Laplacian on the geometric volumes, both dimensions eventually depend on these volumes as well, but in a more involved way.
In all cases I understand the points between which to measure distances and on which scalar fields are living as the vertices of the dual complex. %In this way, 
Eventually, all definitions rely combinatorially just on the dual 1-skeleton.

\subsubsection*{Discrete Hausdorff dimension}

For the definition of discrete Hausdorff dimension, consider an $\m$-dimensional complex $\cp$ with dual $\cps$ and an assignment of $\m$-volumes $V_\cl$ to all $\m$-cells $\cl\in\cpp\m$ and edge lengths $\dl_e$ to all $e\in\cpsp1$.
Then, one can define the distance between two vertices $v_a,v_b\in\cpsp0$ in terms of the shortest path (definition \ref{def:paths}, equation \eqref{path}) between them,
\[\label{discrete-distance}
\dist{v_a}{v_b} := \min\{\sum_{e\in \cpsp1 \textrm{ in }W} \dl_e \mid W = v_a e_{a1} v_1 e_{12}\dots v_k e_{kb} v_b \textrm{ 1-dimensional path in } \cps \}  
\]
(If $v_a$ and $v_b$ are not 1-dimensionally connected one sets $\dist{v_a}{v_b} := \infty$.)
Using the primal volume $V_{\star v}$ as the natural $\m$-volume attached to the dual vertices, one arrives at the definition of the volume $\Vv\cp$ of a ball of radius $r$ around a vertex $v_0$ in $\cps$ as
\[\label{discreteV}
\Vv\cp := \sum_{\dist{v}{v_0}\le r} V_{\star v} \;.
\]
The Hausdorff dimension $\dh^\cp(r)$ of $\cp$ is defined through \eqref{dh} using as the volume
\[\label{discreteV2}
\Vr\cp := \frac1{\np\m} \sum_{v_0\in\cpsp0} \Vv\cp
\]
 obtained from averaging over the $\np\m$ centre vertices $v_0\in\cpsp0\cong\cpp\m$.

\subsubsection*{Discrete spectral and walk dimension}
%From a quantum gravity perspective, the spectral dimension is particularly
%interesting because it can be rather easily extended to discrete
%geometries as well. Discrete geometries constitute the degrees of
%freedom of LQG, spin foams and group field theory (and of CDT in the purely combinatorial case, that is when the graph distance is chosen to specify completely the geometry).

%Let us briefly recapitulate the framework of discrete calculus needed for the definition of the spectral dimension on discrete geometries provided in \cite{\COTa}. 
%In the most general context \cjt{we} deal with combinatorial complexes, in particular abstract finite simplicial $\m$-pseudo-manifolds ${\cal K}$,
%with additional (pre-)geometric data, \ie volume variables associated
%with simplices or cells. If ${\cal K}$ has a geometric realization $|{\cal K}|$ on some metric space, it can be thought of as the triangulation of a smooth geometry $(M,g)$. But this needs not be the case.

%The Laplace operator defined in terms of discrete exterior calculus gives a precise meaning to the formal expression of the heat kernel (\ref{Kformal}). 
For spectral and walk dimension all relevant quantities can be defined using the discrete calculus introduced in \sec{calculus} in the following way.
The diffusing scalar test field on $\cp$ is a field $\phi\in\Omega^0(\cps)\cong\Omega^\m(\cp)$, \ie living on dual vertices $v_a,v_b,... \in\cpsp0$ or equivalently on the $\m$-cells $\cl_a,\cl_b,... \in\cpp\m$.
Now, assume there is a geometric interpretation assigning next to the $\m$-volumes $V_a\equiv V_{\cl_a}$ and dual lengths $\dl_{ab}\equiv \dl_{(v_a v_b)}$ also $(\m-1)$-volumes $V_{ab}$ to the common boundary cells of pairs of $\m$-cells $c_a,c_b\in\cpp\m$.
Assume furthermore that there is a boundary map $\bm_\m$, \ie relative orientations between $\m$-cells and $(\m-1)$-cells, or equivalently dual vertices and edges (definition \ref{def:chain-complex}).
Then there is a discrete Laplacian acting on the fields $\phi\in\Omega^0(\cps)$, denoted $\Delta_\cp$ from now on.
According to \eqref{scalar-laplacian} it takes the form
\ba\label{discrete-laplacian}
-(\Delta_\cm \phi)_a &=& \sum_{b\sim a} (\Delta_\cm)_{ab} (\phi_a - \phi_b)\nonumber\\
&=& \frac1{V_a} \sum_{b\sim a} \frac{V_{ab}} {\dl_{ab}} (\phi_a - \phi_b)\,,
\ea
where the sum runs over $\m$-cells $c_b$ adjacent to $c_a$. 
%, $V^{(\sd)}_a$ is the volume of the cell $c_a$, $V^{(\sd-1)}_{ab}$ is the volume of the common bounding $(\sd-1)$-cell and $l^\star_{ab}$ is the length of its dual edge. 

%%%% Expansion of Laplacian discussion
%By definition, this Laplacian has most of the properties of the continuum Laplacian. That is, it obeys the null condition, it is self-adjoint (which relates to the symmetrizability of its matrix) and it is local, \ie the action on a given position depends only on neighbouring positions. 
%In \cite{\COTa} \cjt{we} have also argued for the convergence to the continuum Laplacian under refinement of triangulations.
%Nevertheless, there are properties which are further dependent on the definition of the volumes. 

%Above \cjt{we} have argued that a definition in terms of a barycentric dual interpretation of the volumes should be preferred to a  circumcentric one in the field theoretic context because only the former preserves positivity of the Laplacian (section \ref{laplacian}).
%For this reason, \cjt{we} will adhere to this convention throughout this chapter.

The formal heat kernel solution \eqref{Kformal} gains a well-defined meaning directly in terms of the bra-ket formalism for the field space $\Omega^0(\cps)$ introduced in \sec{inner-product}.
As such it is
\[\label{discrete-heat-kernel}
K_\cp(v_a,v_b;\tau) = \bra v_a | \e^{\tau \Delta_\cp} | v_b \ket \;.  
\]
Accordingly, the heat trace on $\cp$ is given by the sum over dual vertices $\Tr_\cp := \sum_{v\in\cpsp0}$, that is
\[\label{discreteP}
\P\cp = \Tr_\cp K_\cp(v_a,v_b;\tau) = \sum_{v\in\cpsp0} \bra v | \e^{\tau \Delta_\cp} | v \ket .
\]
This defines the spectral dimension $\ds^\cp(\tau)$ of $\cp$ in terms of \eqref{ds}.
Similarly, the mean square displacement on $\cp$ is the sum
\[\label{discreteX}
\Xv\cp = \sum_{v\in\cpsp0} \dist {v}{v_0}^2 K(v,v_0;\tau)
\]
which defines the walk dimension $\dw^\cp(\tau)$ according to \eqref{dw}, possibly after an averaging over centre points $v_0$ like in in the case of Hausdorff dimension \eqref{discreteV2}.

In practice it is often useful to transform into the the basis of eigenfunctions of the Laplacian $\Delta_\cp$ to calculate $\ds^\cp$ or $\dw^\cp$.  
The matrix coefficients $ (\Delta_\cm)_{ab} = V_{ab} /(\dl_{ab}V_a)$ %elements of $\Delta_\cp$ 
are finite and well defined if the volumes therein associated with the complex $\cm$ are finite and non-degenerate, in particular non-vanishing.
Then, the Laplacian is just a linear map in the finite vector space $\Omega^0(\cp)$ and  is diagonalized by its eigenfunctions $e^{\lambda}$ with coefficients $e^\lambda_a = \bra v_a | \lambda \ket$ corresponding to eigenvalues $\lambda$,
\begin{equation}
(-\Delta_\cp e^{\lambda})_a = -\bra v_a | \Delta_\cp | \lambda \ket = \lambda \bra v_a | \lambda \ket = \lambda e_a^{\lambda},
\end{equation}
and these form a complete orthonormal basis defining momentum space. 
In this basis the heat trace on $\cp$ simplifies to
\begin{equation}\label{ht-from-spectrum}
\P\cp = \Tr_\cp \bra \lambda' | \e^{\tau \Delta_\cp} | \lambda \ket 
= \sum_{\lambda \in \spec(\Delta_\cp)}  \e^{-\tau \lambda} 
\end{equation}
(where the label $\lambda$ is meant to run not only over eigenvalues but also over their multiplicities).
% where by convention the eigenvalues $\lambda$ are all positive semi-definite on compact geometries. 
Since $\Delta_\cp$ is symmetrizable (\rem{laplacian-properties}), for real geometric volume coefficients in equation (\ref{discrete-laplacian}) the spectrum $\spec(\Delta_\cp)$ and thus the heat trace are real valued.

%===========================================================================%%%%%%%%%%%%%%%%%%%%%

%\subsection{Applications: momentum transform and heat kernel}
%\label{applications}

\renewcommand{\clp}{c}

While the spectrum of the Laplacian gives a closed expression for $\P\cp$ in many cases, for numerical computations of combinatorially very large complexes it can alternatively be treated as a random walk. 
In this case, the matrix elements $(\Delta_\cp)_{ab}$ of the Laplacian provide local probabilities for jumping from one point $v_a$ to a neighbour $v_b$.
This is the more practical concept used in dynamical triangulations \cite{\cdtfractal,Benedetti:2009bi}, random combs and multi-graphs  \cite{Durhuus:2006bd,Atkin:2011bk,Giasemidis:2012kq,Giasemidis:2012er}.

%==============================================================

\subsection{Effective dimension definitions for discrete quantum geometries}
\label{sec:quantum}

Now the ground is laid for a definition of the dimension concepts on discrete quantum geometries.
Quite generally, by this I mean a Hilbert space $\hs$ of quantum states representing geometric degrees of freedom with an orthonormal, complete basis of states $\rjc$ which are given by a combination of algebraic data $\{j_\cl\}$ and a complex $\cp$ whose cells $\cl\in \cp$ these are based on.
For the states $\rjc$ to represent spatial geometry, the underlying complex $\cp$ is supposed to be $\sd$-dimensional (but not necessarily pure, since all relevant quantities \eqref{discreteV}, \eqref{discreteP} and \eqref{discreteX} depend only on quantities attached to the dual graph).
Obviously, the state space basis of discrete quantum gravity as introduced in chapter \ref{ch:dqg} in terms of spin-network states $\snr$ %\eqref{}
is an example of such states, but I prefer to keep the setting here as general as possible.

The strategy for promoting the relevant quantities to quantum observables is to use their functional dependence on local geometry, \ie the various volumes attached to cells for which a quantum version is either known or can be itself derived from known operators.
Most prominent examples are edge, area and volume operators on spin networks with the $j$'s identifying irreducible representations of $SU(2)$.
In three spacetime dimensions, the spatial ($\sd=2$) spin-network states diagonalize the length operators $\widehat{l}_e$ associated with primal edges $e=\star\eb$ that are dual to the graph edges $\eb \in\bg$
\[\label{length-spectrum}
\widehat{l}_e \snr = l(j_\eb) \snr  \propto \sqrt{ j_\eb(j_\eb+1)+ \csu} \,\snr\;,
\]
{with a free parameter $\csu\in\R$ due to a quantization ambiguity} for the Euclidean theory (as well as for timelike edges in the Lorentzian theory, spacelike ones being instead assigned a continuous positive variable) \cite{Freidel:2003kx,Achour:2014gr}.
In four spacetime dimensions ($\sd=3$), spin-network states have the same spectrum for area operators on primal faces $f=\star\eb$ dual to the graph edges $\eb \in \bg$ \cite{Rovelli:1995gq,DePietri:1996en,Ashtekar:1997bn}
\[\label{area-spectrum}
\widehat{A}_f \snr = A(j_\eb) \snr  \propto \sqrt{ j_\eb(j_\eb+1)+ \csu} \,\snr\;.
\]
In general one can obtain operators for other volumes,
in particular for the dual edge lengths $\dl$ of interest for distances \eqref{discrete-distance} and the Laplacian \eqref{discrete-laplacian}, 
using their classical functional dependence on such preferred quantities.
This can be done in different ways, depending on the geometric variables that are most convenient in the specific quantum geometric context that is chosen (appendix \ref{sec:classical-expressions}).
%In general, the resulting expression will be a complicated function of the quantum labels assigned to the complex, which is however both well-defined and explicitly computable \cite{\COTb}.

Generic quantum geometry states are superpositions over states on different complexes which poses  a particular challenge on defining observables, rarely addressed in the literature so far.\footnote{Recently this issue has been investigated in the group field condensate framework \cite{\gftcondensate}.}
Quantizing discrete geometric quantities results in the first place in operators $\widehat O_\cp$ acting only on states on a given complex $\cp$.
To extend these to operators $\widehat O$ on the full state space $\hs = \bigoplus_\cm \hs_\cm$ it will be necessary to use the family of orthogonal projections 
\[\label{orthogonal-projections}
\pi_\cp : \hs \lora \hs_\cm
\]
in one or another way depending on the specific observable.

\subsubsection*{Quantum Hausdorff dimension}

The quantum Hausdorff dimension relies both on distance and volume observables. 
Assume that dual edge lengths $\dl_e$ and primal $\sd$-volumes $V_{\star v}$ have both well-defined  quantum operator versions $\widehat{\dl_e}$ and $\widehat V_{\star v}$ on each space $\hs_\cp$.
Then, the distance \eqref{discrete-distance} has a straightforward quantum analogue $\widehat{D(v,v_0)}$ in terms of the quantum length operators $\widehat{\dl_e}$ because of its linearity in $\widehat{\dl_e}$.
Thus, the volume $\Vv\cp$ \eqref{discreteV} which is also linear in $V_{\star v}$ can be promoted to a quantum operator acting on states $\rjc \in \hs_\cp$ pointwise for each $r\in\R^+$,
\[\label{quantumV}
\qVv\cp \rjc  := \sum_{\qbra \widehat{D(v,v_0)} \qket_\jc < r  } \widehat V_{\star v} \rjc\,. %\rangle_{\qs,\cm_{v_0}}.
% = \sum_{v_0\in\cm} \sum_{\us{\langle l_{vv_0} \rangle \le R}{v\in\cm}} \langle V_v \rangle_{\qs,\cm}.
\]
The extension to the averaged observable 
\[\label{quantumV2}
\qVr\cp := \frac1{\np\sd} \sum_{v_0\in\cpsp0} \qVv\cp
\]
in analogy to \eqref{discreteV} is then unproblematic due to its linear form.

For the definition of quantum Hausdorff dimension on the full state space $\hs = \bigoplus_\cm \hs_\cm$ an extension of these volume definitions using the orthogonal projections $\pi_\cp$ \eqref{orthogonal-projections} is necessary.
With the operator 
\[\label{quantumV3}
\Vr{} := \sum_\cp \pi_\cp \qVr\cp \pi_\cp \;
\]
the definition of quantum Hausdorff dimension, \ie Hausdorff dimension of a quantum state $|\qs\qket \in \hs$, is
\[\label{qdh}
\boxd{
\dh^\qs := \frac{\partial}{\partial\ln r} \ln \qbra \qVr{} \qket_\qs\;.
}
\]
It is meaningful to take the volume $V(r)$ as the primary quantity to be quantized, instead of the Hausdorff dimension \eqref{dh} itself, since the very idea of the effective dimension concept is that it is scaling of this geometric quantity.

\subsubsection*{Quantum spectral dimension}

In a similar way one can define the quantum spectral dimension, though the nonlinear dependence of the heat kernel \eqref{discrete-heat-kernel} on volumes and lengths makes it a slightly more subtle issue.
Assume that there is a well-defined quantum operator version $\widehat{(\Delta_\cm)}_{ab}$ %$\widehat V_a$, $\widehat V_{ab}$ and $\widehat \dl_{ab}$ of the relevant volumes (last \sec{discrete-geometries}  for each pair of dual vertices $v_a, v_b$ .
on each space $\hs_\cp$ for all the coefficients of the discrete Laplacian $\Delta_\cp$ \eqref{discrete-laplacian}.
With slight abuse of notation this defines a map  $\widehat{\Delta}_\cm$ sending states in the Hilbert space $\mathcal H_\cm$ 
to a linear combination of such states with coefficients that are discrete Laplacians.%
\footnote{
Note that only the coefficients of $\widehat\Delta_\cm$ 
are quantum operators in the usual sense, \ie maps from the Hilbert space $\mathcal{H_\cm}$ to itself.
To promote $\Delta_\cm$ to a proper quantum operator it would be necessary to define it on a coupled Hilbert space of geometry and test fields.
Here it is not necessary to consider quantum states of test fields,
since the relevant object $\widehat {P(\tau)}$ to define the spectral dimension is a functional of pure geometry
%$\ds^{\psi}(\tau)$ of a quantum state of geometry $|\psi\rangle\in\mathcal H$
and as such  it can eventually be defined as a quantum operator in the strict sense.
}

The operators $\widehat{P(\tau)}$ and $\widehat \Delta$  on the full Hilbert space $\mathcal H = \bigoplus_\cm \mathcal H_\cm$ have to be defined in terms of the family of orthogonal projections \eqref{orthogonal-projections}.
In this way, the Laplacian acting on generic quantum states of geometry in $\hs$ is the formal sum
\[\label{quantum-laplacian}
\widehat \Delta := \sum_\cm \pi_\cm \widehat \Delta_\cm \pi_\cm \;.
\]
With the appropriate notion of a trace $\overline\Tr := \sum_\cm \Tr_\cm \pi_\cm$, 
based on the trace $\Tr_\cm$ over discrete field space on $\cm$ introduced above for \eqref{discreteP}, the heat trace is defined pointwise for each $\tau\in\R^+$ as
\[\label{quantumP}
\widehat{P(\tau)} := \overline\Tr\, \e^{\tau \widehat \Delta} \;.
\]
Note that $\widehat{P(\tau)}$ is a map from $\mathcal H$ on itself, and thus a quantum operator in the strict  sense. 
Then, the spectral dimension $\ds^{\psi}(\tau)$ of a geometry state $|\psi\rangle\in\mathcal H$ is the scaling of the expectation value of  $\widehat {P(\tau)}$
\[\label{qds}
\boxd{
\ds^{\psi}(\tau) := -2\frac{\partial}{\partial\ln\tau} \ln\langle \widehat{P(\tau)}\rangle _{\psi} \;.
}
\]
Note that it depends only on pure geometry, since the relevant operators are acting on pure-geometry states.

The evaluation of the quantum spectral dimension simplifies substantially upon a choice of diagonalizing basis.
Choose now quantum labels $\{j_c\}$ such that states $\rjc$ present a diagonalizing basis of (the coefficients) of $\widehat \Delta_\cm$ for each complex $\cp$. 
A generic state $|\qs\qket \in \hs$ is then a superposition $|\qs\qket = \sum_\jc \qsc_\jc \rjc$.
In terms of this expansion the heat trace on this quantum geometry simplifies to 
\begin{eqnarray}
\label{ht-expansion}
\langle \widehat{P(\tau)}\rangle _{\qs} 
& = & \left( \sumint_{j'_c,\cm'} \qsc^*_{j'_c,\cm'} \qbra j',\cm| \right) \!\!\!
\left( \sum_\jc  \qsc_\jc \, \Tr_\cm\, \e^{\tau\widehat{\Delta}_\cm} \rjc \right) \nonumber\\
& = & \sumint_\jc \left|\qsc_\jc \right|^{2} \Tr_\cm\, \e^{\tau \qbra \widehat\Delta_\cm \qket_\jc}
\end{eqnarray}
where in the last step the eigenvector property of the basis $\rjc$ is used to simplify the exponential.
That is, with an appropriate choice of basis the expectation value of the heat trace observable on a state $\qs$ is simply a sum over the heat trace on discrete geometries like \eqref{discreteP}, weighted by the quantum state coefficients $\qsc_\jc$ in that basis.

Among the several possibilities to define a quantum version of the spectral dimension, the one chosen here \eqref{qds} is not only natural but some alternatives are simply not possible for discrete quantum geometries.
In the first place, there are various inequivalent choices which quantity in the definition of the spectral dimension to promote to an operator, apart from the usual issue of operator ordering ambiguities. 
Due to their mutual nonlinear dependence it makes a difference whether the Laplacian, the heat kernel, the heat trace or the spectral dimension itself is taken to be the primary quantity to be promoted to a quantum observable. 
The present choice of the heat trace is natural because the concept of effective dimensions is essentially based on scaling properties, here on the scaling of the heat trace \eqref{ds-scaling}.
But a quantum definition of spectral dimension based on the Laplacian, as common in a smooth setting \cite{Lauscher:2005kn,Reuter:2013ji,Calcagni:2013jx,Horava:2009ho,Benedetti:2009fo,Modesto:2009bc}, is also not possible (even though \eqref{ht-expansion} might suggest the contrary).
The reason is again the structure of state space as a direct sum over complexes.
To derive the spectral dimension it would be necessary to solve the heat kernel equation 
\begin{equation}
\partial_{\tau}K(v_a,v_b;\tau)= (\qbra \widehat{\Delta}\qket _{\qs}K)(v_a,v_b;\tau)\,.
\end{equation}
But this is a well-defined equation only if the quantum state has support on a single complex $\cp$, \ie $|\qs\qket\in\hs_\cp$.
Otherwise, the expectation value of the Laplacian \eqref{quantum-laplacian} would be a sum over discrete Laplacians for which no action on a heat kernel function is defined.
For the same reason a definition of path integral insertion analogue to \eqref{qds} is used  in CDT  \cite{Ambjorn:2005fj,Ambjorn:2005fh,Benedetti:2009bi}.

%Even for the heat trace observable there is a further ambiguity due to an alternative definition which is classically equivalent but different on the quantum level.
%Defining the heat trace in terms of a trace $\frac1 V \Tr$ normed by the overall volume $V$ of a given geometry 
%\begin{equation}
%\tilde{P}(\tau):=\frac{1}{V}P(\tau)%=\frac{1}{V}\TrK(x,x_0;\tau)\label{eq:heattraceNormed}
%\end{equation}
%leads to a classically equivalent definition of the spectral dimension \eqref{ds}, as its scaling is independent of proportionality factors. 
%%This definition is often used for effective continuous spacetimes in quantum gravity, as it gets rid of an infinite volume prefactor in $P$. 
%However, at the quantum level and in the case of superposition states, there is a notable difference. For the volume-normed heat trace (\ref{eq:heattraceNormed}),
%\begin{equation}
%\qbra \widehat{\tilde P(\tau)}\qket_\qs = \underset{s}{\sum}\left|\psi(s)\right|^{2}\qbra s|\widehat{V^{-1}}|s\qket \Tr\,\rme^{\tau\qbra s|\widehat{\Delta}|s\qket}\,,
%%\nonumber
%\end{equation}
%states are weighted by an additional inverse-volume factor $\left\qbra s|V^{-1}|s\right\qket$. 
%Since for quantum geometries this factor might have singular behaviour (for instance, on states of degenerate geometry where the volume operator is not densely defined and has zero as an eigenvalue), \cjt{we} prefer equation \eqref{heat-trace} to define the quantum spectral dimension.

\subsubsection*{Quantum walk dimension}

The quantum definition of walk dimension is similar to the spectral dimension, summing over the product of distance and heat kernel operator instead of only the latter.
That is, under the assumptions for both quantum Hausdorff and walk dimension, one can directly define the mean square displacement operator on $\hs_\cp$ 
\[\label{quantumX}
\qXv\cp := \sum_{v\in\cpsp0} \widehat{\dist {v}{v_0}}^2 \bra v | \e^{\tau\widehat\Delta_\cp} | v_0\ket
\]
and its averaged version
\[
\widehat{\langle X^{2}\rangle}_\cp(\tau) := \sum_{v_0\in\cpsp0} \qXv\cp \;.
\]
The extension of the averaged mean square displacement operator to the whole state space $\hs$ demands again the orthogonal projections $\pi_\cp$ \eqref{orthogonal-projections} such that 
\[
\widehat{\langle X^2 \rangle} (\tau) := \sum_\cp \pi_\cp \widehat{\langle X^{2}\rangle}_\cp(\tau) \pi_\cp\;.
\]
Using this operator, the quantum walk dimension of a state $|\qs\qket \in \hs$ is defined as 
\[\label{qdw}
\boxd{
\dw^\qs(\tau):=2\left(\frac{\partial\ln \qbra \widehat{\langle X^{2}\rangle}(\tau) \qket_\qs}{\partial\ln\tau}\right)^{-1} \;.
}
\]

\

In this \sec{dimension-definition}, I have given and discussed the definition of the Hausdorff, spectral and walk dimension in the smooth, discrete and quantum cases,
thereby setting the stage for its calculation. 
Since the quantum properties of such observables depend on their interplay with classical topological and discreteness effects, in the next section \ref{sec:classical} I will first investigate these effects separately before addressing the quantum case in sections \ref{sec:coherent-states} and \ref{sec:dimensional-flow}.

%===========================================================================

%\section{Review of dimensional flow}\label{flow}
%
%Brief discussion of hints for a scale dependence of dimension:
%\begin{itemize}
%\item CDT
%\item Asymptotic safety
%\item Horava
%\item NCFT, 
%\item LQG/SFs
%\end{itemize}

%===========================================================================

\section{Effective dimensions of smooth and discrete geometries}\label{sec:classical}

In order to understand quantum effects in effective dimension observables of discrete quantum geometries, control over classical effects of the underlying complexes is essential.
Effective dimensions are significant observables with nontrivial properties already for classical geometries. 
There are generic properties stemming from topology such as a decrease  of dimension to zero on large scales, as well as discreteness effects on lattice scales and effects related to the combinatorial structure of complexes more in general.
Thus, it is important for interpretations of quantum effects to have a detailed understanding of these classical effect.

For this reason I present here a systematic overview of effective dimensions for relevant examples of classical smooth and discrete geometries.
I start in \sec{smooth} with smooth geometries, discussing the spectral dimension of spheres and tori.
Then, I investigate discreteness effects in all the three dimensions calculating them for regular lattices in \sec{hypercubic}.
Finally, in \sec{ds-simplicial}, I investigate effects from various classes of combinatorial structures in simplicial complexes, focusing again on the spectral dimension.

%==============================================================

\subsection{Spectral dimension of smooth manifolds\label{sec:smooth}}
%: simple examples for effects of topology and geometry

%In this section, \cjt{we} calculate the spectral dimension of spheres and tori and discuss the role of topology and geometry in these examples. In preparation for the actual LQG calculations (where these topologies are chosen), it is important to have these effects under control.

On a Riemannian manifold $(\stm,g)$, the dominant effects in the behaviour of effective dimensions at large scales are due to the topology of $\stm$ while the details are governed by the geometry $[g]$.
I will exemplify this calculating the spectral dimension of the circle $S^{1}$ and the generalization to the $\m$-torus $T^{\m}$ and the $\m$-sphere $S^{\m}$.
For $T^{\m}$ I even find analytic solutions in closed form for the heat trace.
Similar effects are also expected in the Hausdorff dimension.
%When interested in the geometric information of quantum states encoded in the spectral dimension later it is important to know these effects to subtract them out. 

Qualitatively, a compact topology leads to a fall-off of the spectral dimension $\ds(\tau)$ to zero.
The (diffusion) scale $\tau$ where this happens is related to the geometry (\ie the curvature radii). 
In the light of the heat trace's random walk interpretation, this can be easily understood: the return probability approaches one after diffusion times $\tau$ larger than the circumferences (closed geodesics).
The resolution interpretation is also easy to spell out: when the resolution is lowered to the degree where different points cannot be distinguished at all due to the limited compact geometry of the set, the particle appears trivially localized in the same region, \ie on the whole manifold.

\

The spectral dimension of the circle $S^1$ has an analytic solution.
The spectrum of the Laplacian $-(k/R)^{2}$, $k\in\Z$, depends on the circle's radius $R$, yielding a heat trace
\begin{equation}
\P{S^{1}}=\underset{k\in\Z}{\sum}\rme^{-\left(\frac{k}{R}\right)^{2}\tau}=\theta_{3}\left(0,\,\rme^{-\frac{1}{R^2}\tau}\right)=\theta_{3}\left(0\Big|\,\frac{1}{R^2}\frac{\rmi\tau}{\pi}\right)
\end{equation}
in terms of $\theta_3$, the third theta function%
\footnote{Here and in the following the notation for theta functions is as defined in \cite{NIST}.
}.
While $\ds(\tau)=\m=1$ for $\tau<R^{-2}$, due to the periodicity there is a decrease  to zero for $\tau>R^{-2}$ (\fig{dSsmooth}).
\begin{figure}
\centering
\includegraphics[width=7.5cm]{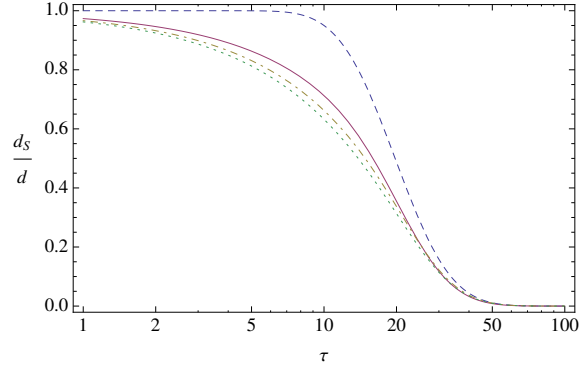}
\caption{Spectral dimension of $S^\m$ rescaled to volume $V_{S^\m}=1$,
for comparison divided by $\m$, for $\m=1,2,3,4$ ($S^1$ dashed line, $S^2$ solid, $S^3$ dot-dashed and $S^4$ dotted). The inverse square of the corresponding radii sets the scale of the topological effect. 
At the same time, it is a comparison with the case of $T^\m$ with radii $R=1/(2\pi)$ as these tori are equivalent to the case of $S^{1}$.
\label{fig:dSsmooth}}
\end{figure}

The geometry defined by the constant curvature radius $R$ governs only the scale at which the topological effect takes place. In that sense, the fall-off of the spectral dimension  can be understood as due to the combination of geometry and topology.

\subsubsection*{The torus \texorpdfstring{$T^\m$}{}}

The torus $T^\m=\left(S^{1}\right)^{\times \m}$ with radii $R_{i}$ generalizes the case of the circle straightforwardly with spectrum
$-\sum_{i=1}^\m k_i/R_i$, % $$-\overset{d}{\underset{}}\frac{k_{i}}{R_{i}}$,
 $\vec{k}\in\Z^\m$, such that
\begin{equation}
\P{T^\m}=\underset{\vec{k}\in\Z^\m}{\sum}\rme^{-\sum\left(\frac{k_{i}}{R_{i}}\right)^{2}\tau}=\theta\left(0 \mid \frac{\rmi\tau}{\pi} \begin{pmatrix}
R_{1}^{-2} &&\\
 & \ddots & \\
 &  & R_{d}^{-2}
\end{pmatrix}\right),
\end{equation}
where $\theta$ is the (multi-dimensional) Riemann $\theta$-function \cite{NIST}.
In the case where all radii are equal the spectral dimension turns out to be just $\m$ times the spectral dimension of $T^1 = S^1$, since 
\begin{equation}\label{torus-circle-relation}
\P{T^\m}\propto\left[\P{S^{1}}\right]^\m.
\end{equation}
In general, for different constant curvature radii $R_{i}$, the geometry affects not only the scale at which the decay starts, but also accounts for various geometric regimes (\fig{dsT2}). 
If the radii are ordered as $R_1\geqslant R_2\geqslant\dots\geqslant R_\m$, the spectral dimension is constantly the topological dimension for $\tau<1/R_1^2$ and zero for $\tau>1/R_\m^2$. 
In intermediate regimes, it can have plateaux at heights corresponding to intermediate integer dimensions, if the radii are of sufficiently different order of magnitude. This can be easily understood: When $k$ radii are much smaller than a given scale, the $\m$-torus effectively appears as a $(\m-k)$-torus for the diffusion process at that scale. 
\begin{figure}
\centering
\includegraphics[width=7.5cm]{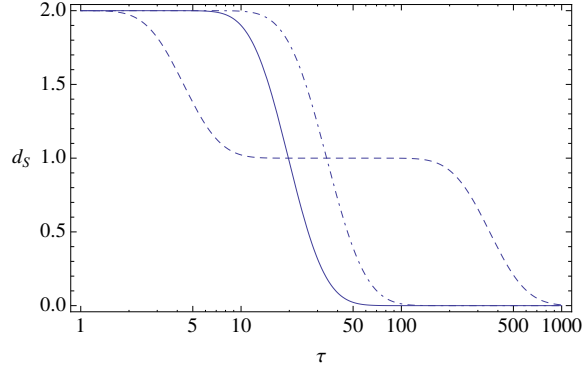}
\caption{Spectral dimension of $T^2$ with various geometries, $(R_1,R_2)=(1,1)$, $(R_1,R_2)=(1,\sqrt{3}/2)$, $(R_1,R_2)=(3,1/3)$ (solid, dash-dotted and dashed curve)}
\label{fig:dsT2}
\end{figure}

\subsubsection*{The sphere \texorpdfstring{$S^\m$}{}}

The spectrum of the Laplacian acting on functions on the sphere $S^{2}\cong SU(2)/U(1)$ of radius $R$ is proportional to the $SU(2)$ Casimir, $-C(j)/R^{2}=-j(j+1)/R^{2}$, for $j\in\N$ with $(2j+1)$-fold degeneracy, leading to a heat trace \cite{Benedetti:2009fo}
\begin{equation}
\P{S^{2}}=\underset{j=0}{\overset{\infty}{\sum}}(2j+1)\,\rme^{-\frac{j(j+1)}{R^{2}}\tau}.
\end{equation}
More generally, for the $\m$-sphere $S^\m$, $\m\geqslant 1$, the spectrum
is given by the eigenvalues $-j(j+d-1)/R^{2}$ with multiplicities such
that
\begin{equation}
\P{S^\m}=1+\underset{j=1}{\overset{\infty}{\sum}}\left[\binom{d+j}{d}-\binom{d+j-2}{d}\right]\,\rme^{-\frac{j(j+d-1)}{R^{2}}\tau}\,.
\end{equation}
The spectral dimension of $S^\m$ has a slower fall-off in comparison with the torus $T^\m$, with $\m\geqslant 2$.
Thus, $\m$-spheres are examples where the topological dimension is only obtained in the limit $\tau \ra 0$.
In contrast with the torus, spheres do not exhibit any multi-scale behaviour since they are governed by only one geometric parameter.

%\

Here I have discussed the generic effects of spherical and toroidal topology and of the geometric curvature parameters. All compact topologies have a spectral dimension going to zero at scales of the order of the curvature radii. 
Below this scale, $\ds$ agrees with the topological dimension in the case of the tori, while for spheres such an accordance holds only in the small-scale (or infinite resolution) limit $\tau\to 0$ (\fig{dSsmooth}). 
Furthermore, in general an $\m$-torus is given by $\m$ geometric parameters setting the intermediate scales at which the torus has an effective lower-dimensional behaviour (\fig{dsT2}).

%==============================================================

\subsection{Effective dimensions of hypercubic lattices\label{sec:hypercubic}}

Now I consider classical effects of manifold discreteness in the effective dimensions. 
To this aim, I calculate the spectral, walk and Hausdorff dimension of regular lattices in this section.
It turns out that there is a generic effect of a fall-off to zero of the spectral dimension below the lattice scale while the precise form of a peak larger than the topological dimension at that scale depends on the structure of the lattice.
On the other hand, the walk dimension does not show any discreteness effects.
The Hausdorff dimension has a fall-off to dimension one.

\subsubsection*{Spectral dimension on hypercubic lattices\label{sub:Hypercubic}}

A simple, purely combinatorial case to start with are finite and infinite hypercubic lattices.
I will denote the finite $\m$-dimensional hypercubic lattice of finite size $\size$ in each direction with torus topology as $\cp=\hl\m$. % and set the lattice spacing $a$ to one.
Its vertex is equivalent to $\m$-tupels of integers, $\hl\m^{[0]}\cong(\Z_\size)^\m$, which I will use to label the vertices.
The infinite hypercubic lattice is then the large-size limit $\hlinf\m$.

The spectral dimension of hypercubic lattices can be derived directly from the spectrum of the Laplacian.
In the one-dimensional case $\hl1$, also known as the cycle graph with $\size$ vertices, the eigenvalues of the Laplacian are \cite{Chung:1997tk}
\begin{equation}\label{distor}
\lambda_{k}=1-\cos\left(\frac{2\pi k}{\size}\right)=2\sin^{2}\left(\frac{\pi k}{\size}\right),\qquad k=1,\dots,\size\,,
\end{equation}
such that
\begin{equation}\label{ht-hl}
\P{\hl1} = \rme^{-\tau}\sum_{k=1}^{\size}\rme^{\tau\cos\left(\frac{2\pi k}{\size}\right)}
\end{equation}
and 
\begin{equation}
\ds^{\hl1}(\tau)=2\tau\left[1-\frac{\sum_k\cos\left(\frac{2\pi k}{\size}\right)\,\rme^{\tau\cos\left(\frac{2\pi k}{\size}\right)}}{\sum_k \rme^{2\tau\cos\left(\frac{2\pi k}{\size}\right)}}\right].
\end{equation}
Using trigonometric relations, a given sum over exponentials of cosines can be further simplified to a sum over hyperbolic cosines; \eg for $\size=8$
\begin{equation}
d_{S}^{\hln1 8}(\tau)=2\tau\left[1-\frac{\sinh\left(2\tau\right)+\sqrt{2}\sinh\left(\sqrt{2}\tau\right)}{1+\cosh\left(2\tau\right)+2\cosh\left(\sqrt{2}\tau\right)}\right].
\end{equation}
Like in the case of smooth tori \eqref{torus-circle-relation}, the heat trace of the $\m$-dimensional lattice $\hl\m$ is simply related to the one-dimensional case as
\begin{equation}
\P{\hl\m} \propto \left( \P{\hl1} \right)^\m \;,
\end{equation}
yielding just a pre-factor of $\m$ for the spectral dimension:
\begin{equation}
\ds^{\hl\m}(\tau)=\m\, \ds^{\hl1}(\tau).
\end{equation}

The salient property of this lattice spectral dimension is a peak at the lattice scale $a$ together with a fall-off to zero below that scale (\fig{dsZn}).
The fact that the dimension is zero below the lattice scale can be easily understood: in the random walk picture, the return probability is constantly zero for diffusion times up to the time needed to walk to a nearest neighbour and back; in the resolution interpretation, if the resolution is sharper then the size of a single cell, a particle cannot be localized at all.
For the peak at the lattice scale with a value of $\ds^{\textrm{max}} \approx 1.22\m$ at $\tau\approx1.70$ a similarly explanation is lacking. For this reason it is considered as a discreteness artefact.

The topological dimension $\m$ is obtained as a plateau for lattices of sufficiently large size $\N$.
Due to the torus topology the finite-lattice spectral dimension has exactly the same decrease  at the scale set by the curvature radii which are in this case directly proportional to $\size$. 
Thus, $\ds^{\hl\m}(\tau) = \m $ only for $1 \ll \tau \ll \size^2$ (\fig{dsZn}).

\begin{figure}
\centering
\includegraphics[width=7.5cm]{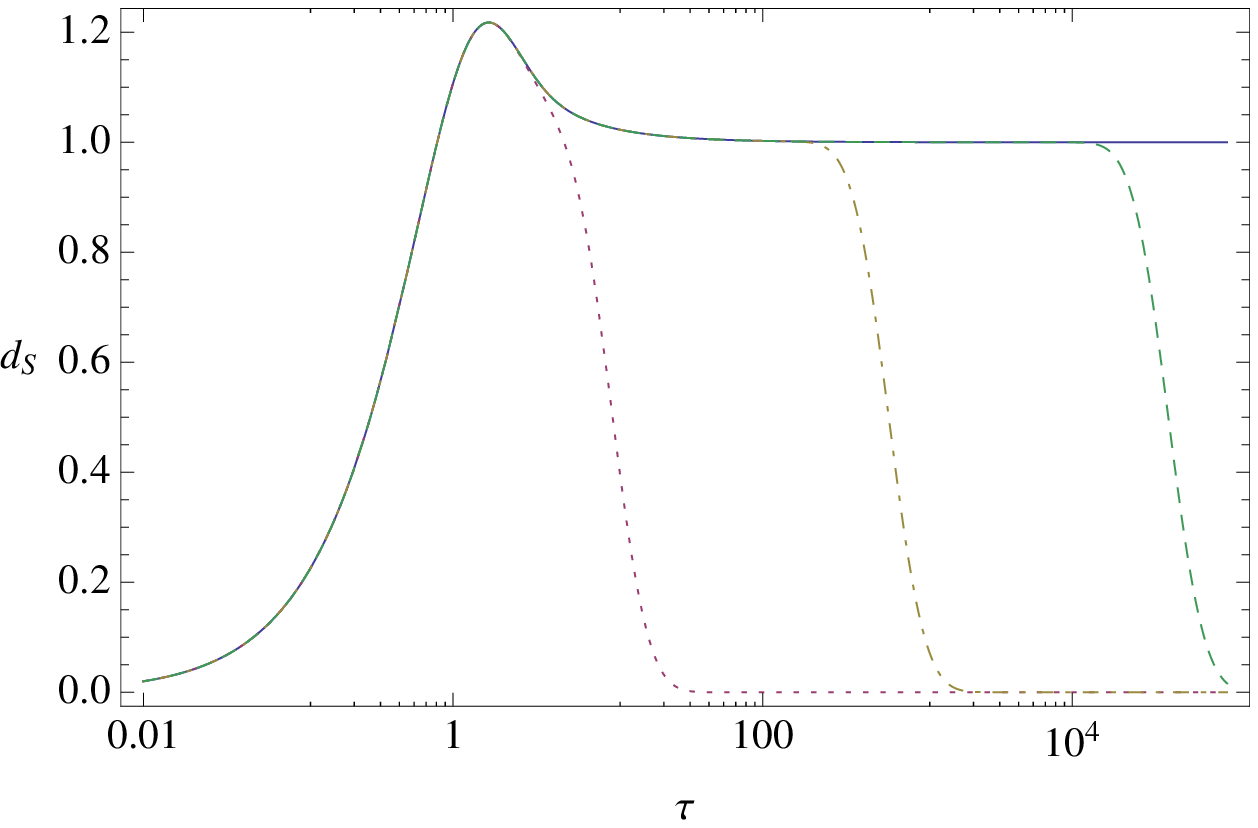}
\caption{The spectral dimension of the infinite one-dimensional lattice $\hlinf1$, compared with the finite $\hln1{8}$ (dotted curve), $\hln{1}{64}$ (dot-dashed) and $\hln{1}{512}$ (dashed).}
\label{fig:dsZn}
\end{figure}

\

These numerical results become more precise in the large-$\size$ limit where I can derive analytic solutions for the spectral dimension.
For the infinite hypercubic lattice $\cp = \hlinf\m$ it can be obtained in the following way.
In the limit $\size\ra\infty$, the discrete spectrum (\ref{distor}) extends to a spectrum parametrized by a continuous parameter $p\in[0,\pi[\In\R$ in the same interval,
\begin{equation}
\lambda_k = 2\sin^{2}\left(\frac{\pi k}{n}\right)\to \Delta(p) = 2\sin^{2}p\,.
\end{equation}
From this spectrum the heat trace can be calculated analytically as
\begin{equation}
\P{\hlinf1} = \Tr_{[0,\pi[}\,\rme^{-\tau2\sin^{2}p}
= \int_{0}^{\pi}\d p\,\rme^{-\tau2\sin^{2}p}=\rme^{-\tau}I_{0}(\tau)\,
\end{equation}
where $I_k$ is the modified Bessel function of the $k$'th kind.
Introducing a lattice spacing $a$ (compare with lattice gauge theory,
\eg \cite{Wiese:2009ub}) compatible with this spectrum,
\begin{equation}
\Delta(p)=2\left(\frac{2}{a}\sin\frac{ap}{2}\right)^{2}\overset{a=2}{=}2\sin^{2}p\,,
\end{equation}
where now $p\in[0,\frac{2\pi}{a}[$, this can even be connected to the continuous case
\begin{equation}
2\left(\frac{2}{a}\sin\frac{ap}{2}\right)^{2}\underset{a\ra0}{\ra}2p^{2}.
\end{equation}

The straightforward extension to $\m$ dimensions is given by
\begin{equation}
\Delta(p)=\underset{i=1}{\overset{d}{\sum}}2\left(\frac{2}{a}\sin\frac{ap_{i}}{2}\right)^{2}\overset{a=2}{=}\underset{i=1}{\overset{d}{\sum}}2\sin^{2}p_{i}
\end{equation}
 such that there is again the same simple relation to the one-dimensional case
\ba
\P{\hlinf\m} & = &  \Tr\; \e^{-\tau \sum_{i=1}^{\m} 2\sin^{2}p_{i}}
= \prod_{i=1}^{\m} \left(\int_{0}^{\pi} \d p_{i}\right)  \prod_{i=1}^{\m} \e^{-\frac{\tau}{\m}2\sin^{2}p_{i}} \\
\label{ht-hlinf}
 & =  &  \left(\e^{-\tau}I_{0}({\tau})\right)^\m 
 = \left( \P{\hlinf1}\right)^\m \;.
\ea
The spectral dimension has then the simple analytic solution 
\begin{equation}\label{ds-hlinf}
\ds^{\hlinf\m}(\tau) = \m\,\ds^{\hlinf1}(\tau) = 2 \m \tau\left[1-\frac{I_{1}(\tau)}{I_{0}(\tau)}\right]\;. 
\end{equation}

Indeed, this analytic solution to the infinite-lattice spectral dimension perfectly agrees with the results for the finite lattices both at the discreteness scale and on the scale of the topological-dimension plateau (\fig{dsZn}).
In particular it confirms the shape of the discreteness artefact at lattice scale.
As expected, the only difference to finite lattices is that the plateau in the infinite-lattice spectral dimension extends to infinity.
This can be shown analytically in terms of the large-$\tau$ asymptotic behaviour of the modified Bessel function $I_0$, that is \cite{Ambjorn:1999in}
\[
\P{\hlinf1} = \rme^{-\tau} I_{0}(\tau) \sim \sqrt\tau^{-1}[1+\mathcal{O}(1/\tau)].
\]
This shows that the infinite-lattice solution \eqref{ht-hlinf} really provides the large-$\size$ limit to the finite case \eqref{ht-hl}.

\subsubsection*{Spectral dimension on other lattices} 
% \cjt{Calculations to technical? Better in appendix?} }

The essential properties of the lattice spectral dimension are independent of the specific kind of the lattice. 
To demonstrate this I present analytic results of $\ds^\cp$ also for the hexagonal lattice $\cp = \hxinf2$ and for a combinatorial octagonal  lattice $\cp = \olinf2$.
The necessary technique to derive such analytic solutions for various lattice combinatorics is provided by generating functions \cite{Wilf:1994to}.

To explain this technique I will start again with the infinite hypercubic lattice, thereby providing an alternative to the above derivation of the solution \eqref{ht-hlinf} and \eqref{ds-hlinf} in terms of the large-$N$ limit of the finite-$N$ spectrum of the Laplacian.
The starting point is now the diffusion equation \eqref{diffusion-equation} with the discrete Laplacian \eqref{discrete-laplacian}. 
For any lattice $\cp$ one can use translational symmetry to simplify the heat trace \eqref{discreteP} to
\[
\P{\cp} = \Tr_\cp K_\cp(v_a,v_b;\tau) = \sum_{v\in\cpsp0} K_\cp(v,v;\tau) = K_\cp(v_0,v_0;\tau)
\]
for an arbitrary chosen lattice origin $v_0\in\cpsp0$. 
Thus it is sufficient to consider the components 
\[
\phi_{j}(\tau) := K(v_j,v_0;\tau)
\]
of the heat kernel.

The general strategy is then first to simplify the equation, redefining the diffusion parameter $\ttau := \cdiff\tau$ to absorbe the effective diffusion constant 
\[\label{diffusion-constant}
\cdiff := 2 (\Delta_\cp)_{ab} = 2 \frac{V_{ab}}{V_a \dl_{ab}}
\]
given by the constant coefficients $(\Delta_\cp)_{ab}$ of the Laplacian \eqref{discrete-laplacian}.
Then, the diffusion has the form
\begin{equation}\label{step1}
\left(\partial_{\ttau} + \frac{k}{2}\right)\phi_{j}(\ttau) = \frac{1}{2} \sum_{l\sim j} \phi_{l}(\ttau)
\end{equation}
where $k$ is the valency of the vertices and the sum runs over nearest neighbours.
In the second step, the diagonal part depending on the valency $k$ can be absorbed by redefining the heat kernel function
\[
\widetilde{\phi}_j(\ttau)=e^{k\ttau/2}\phi_j(\ttau)
\]
such that the diffusion equation simplifies to a pure sum over nearest neighbours
\[\label{step2}
\partial_{\ttau}\tphi_{j}(\ttau) = \sum_{l\sim j} \tphi_{l}(\ttau)\;.
\]
The solution to such equations can be obtained in terms of generating functions which I will explain now in the case of the hypercubic-lattice example.

For the one-dimensional lattice $\hlinf1$ with lattice spacing $a$ the diffusion equation
\begin{equation}\label{lattice1d}
\partial_{\tau}\phi_{j}(\tau) = \left(\Delta\phi\right)_{j}(\tau) 
= -\frac{1}{a}\sum_{+/-}\frac{1}{a}\left[ \phi_{j}(\tau) - \phi_{j\pm1}(\tau) \right]
= \frac{1}{a^{2}} \left[ -2\phi_{j}(\tau) + \phi_{j+1}(\tau) + \phi_{j-1} (\tau) \right]
\end{equation}
%with the usual initial condition $\phi_{j}(0)=\delta_{j0}$ 
simplifies upon the redefinition with $\cdiff=2/a^{2}$ and valency $k=2$, in this case, to
%\[
%\tau\mapsto \cdiff\tau \quad , \quad \partial_{\tau}\mapsto\frac{1}{\cdiff}\partial_{\tau}
%\]
%\begin{equation}\label{lattice1d-1}
%\left(\partial_{\ttau} + \frac{k}{2}\right)\phi_{j}(\ttau) = \frac{1}{2}\left[\phi_{j+1}(\ttau)+\phi_{j-1}(\ttau)\right]\;
%\end{equation}
%where $k$ is the valency of the vertices, in this case $k=2$.
%
%In a second step, the diagonal part depending on the valency $k$ can be absorbed by redefining the heat kernel function
%\[
%\widetilde{\phi}_j(\ttau)=e^{+k\ttau/2}\phi_j(\ttau)
%\]
%such that the diffusion equation simplifies to a pure sum over nearest neighbours, \eg for  the one-dimensional lattice $\hlinf1$ to
\[\label{1dlattice}
\partial_{\ttau}\tphi_{j}(\ttau) = \tphi_{j+1}(\ttau) + \tphi_{j-1}(\ttau)\;.
\]
The solution of this equation can be obtained from the corresponding generating function. 
%For example, 
In the present case of the lattice $\hlinf1$, one can easily check that solutions to the diffusion equation \eqref{1dlattice} are generated by the function
\[\label{generating-1dlattice}
G(u;\ttau)=e^{\frac{1}{2}\left(u+\frac{1}{u}\right)\ttau}=\sum_{j\in\Z}u^{j}\tphi_{j}(\ttau)\;.
\]
This is the generating function of the modified Bessel function $\tphi_{j}(\ttau)=I_{j}(\ttau)$.
The heat kernel solution is therefore
\begin{equation}
K_{\hlinf1}(v_j,v_0;\tau)\equiv \phi_{j}(\tau)=e^{-2\tau/a^{2}}I_{j}({2\tau}/{a^{2}})
\end{equation}
which shows that the precise lattice spacing implicitly assumed before in \eqref{ht-hlinf} is $a=\sqrt2$.

%\subsection{Hypercubic lattice}

The generating function technique allows now to easily identify solutions to the diffusion equation for lattices with other combinatorics. 
The generalization to higher dimensional hypercubic lattices $\hlinf\m$ with vertex valency $k=2\m$, for example, is encoded in the generating function 
\[
G(u_1,...,u_\m;\ttau) = e^{\frac{1}{2}\sum_{i=1}^{\m} \left(u_{i}+\frac{1}{u_{i}}\right)\ttau}
= \sum_{\vec\j \in\Z^\m}\left(\prod_{i=1}^\m u_{i}^{j_i} \right)\tphi_{\vec\j}(\ttau)
\]
where vertices are labelled by $\vec\j = (j_{1}\dots j_{\m})\in\Z^\m$.
The effective diffusion constant $\cdiff=2/a^\m \cdot a^{\m-1}/a=2/a^{2}$ is the same as for $\m=1$.
Since the generating function itself factorizes, it is meaningful to make the factorizing ansatz also for its expansion, $\tphi_{\vec\j}(\ttau)=\prod_{i=1}^\m\tphi_{j_{i}}^{(i)}(\ttau)$.
Then, the generating function  
\[\label{generating-hypercubic}
G(u_1,...,u_\m;\ttau) = \prod_{i=1}^\m G(u_i;\ttau) 
= \prod_{i=1}^\m e^{\frac{1}{2}\left(u_{i}+\frac{1}{u_{i}}\right)\ttau}=\prod_{i=1}^\m\left(\sum_{j_i\in\Z}u_{i}^{j_i}\tphi_{j_{i}}^{(i)}(\ttau)\right)
\]
is just a multiple of the above generating function $G(u;\ttau)$, \eqref{generating-1dlattice}.
Thus, the heat kernel solution is 
\begin{equation}\label{hk-hlinf}
K_{\hlinf\m}(v_{\vec\j},v_0;\tau) \equiv \phi_{\vec\j}(\tau) = \e^{-2\m\tau/a^2}\prod_{i=1}^\m I_{j_{i}}({2\tau}/{a^{2}})\;,
\end{equation}
in agreement with \eqref{ht-hlinf}.

%\subsection{Triangular lattice}

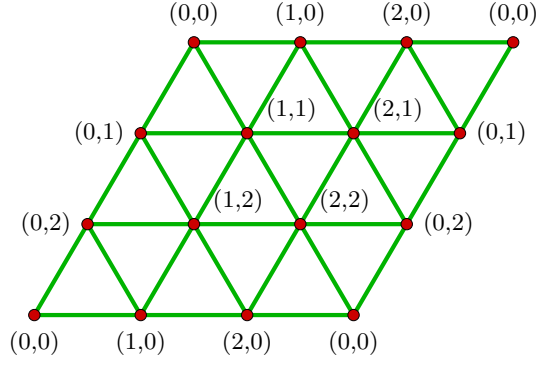
\begin{figure}
\centering
\begin{tikzpicture}[scale=0.7 ]
\draw [eb] (0,0) -- (6,0);
\draw [eb] (1,1.722) -- (7,1.722); 
\draw [eb] (2,3.444) -- (8,3.444); 
\draw [eb] (3,5.166) -- (9,5.166); 
\draw [eb] (0,0) -- (3,5.166); 
\draw [eb] (2,0) -- (5,5.166);
\draw [eb] (4,0) -- (7,5.166);
\draw [eb] (6,0) -- (9,5.166); 
\draw [eb] (2,0) -- (1,1.722); 
\draw [eb] (4,0) -- (2,3.444);
\draw [eb] (6,0) -- (3,5.166); 
\draw [eb] (7,1.722) -- (5,5.166);
\draw [eb] (8,3.444) -- (7,5.166);
\node at (0,0) [vb, label=below:{\footnotesize(0,0)}]{}; 
\node at (2,0) [vb, label=below:{\footnotesize(1,0)}]{}; 
\node at (4,0) [vb, label=below:{\footnotesize(2,0)}]{}; 
\node at (6,0) [vb, label=below:{\footnotesize(0,0)}]{}; 
\node at (1,1.722) [vb, label=left:{\footnotesize(0,2)}]{};  
\node at (3,1.722) [vb, label={[label distance=1pt]5:\footnotesize(1,2)}]{};  
\node at (5,1.722) [vb, label={[label distance=1pt]5:\footnotesize(2,2)}]{};  
\node at (7,1.722) [vb, label=right:{\footnotesize(0,2)}]{}; 
\node at (2,3.444) [vb, label=left:{\footnotesize(0,1)}]{}; 
\node at (4,3.444) [vb, label={[label distance=1pt]5:\footnotesize(1,1)}]{}; 
\node at (6,3.444) [vb, label={[label distance=1pt]5:\footnotesize(2,1)}]{}; 
\node at (8,3.444) [vb, label=right:{\footnotesize(0,1)}]{}; 
\node at (3,5.166) [vb, label=above:{\footnotesize(0,0)}]{}; 
\node at (5,5.166) [vb, label=above:{\footnotesize(1,0)}]{}; 
\node at (7,5.166) [vb, label=above:{\footnotesize(2,0)}]{}; 
\node at (9,5.166) [vb, label=above:{\footnotesize(0,0)}]{};
\end{tikzpicture}
\caption{Smallest abstract simplicial complex with a realization as an equilateraleral triangulation of the torus $T^{2}$, given by %$|\Tp{0}_{2,\size}|=
$\size^2 = 3^2$ vertices.
%, labeled by $p$-adic numbers 0 to 22, as natural and practical on $\Z_p^\m$.
\label{fig:T2triang}}
\end{figure}

Along the same lines one can derive the heat kernel on a 2-dimensional hexagonal lattice $\cp = \hxinf2$.
The heat kernel $K_{\hxinf2}(v_{ij},v_{00};\tau)$ is based on vertices $v_{ij}\in\cpsp0$ of the dual complex which is triangular and has valency $k=6$ (\fig{T2triang}).
The vertex set $\cpsp0$ is still equivalent to $\Z^2$ such that vertices can be labeled by a pair of integers $i,j$.
In terms of the hexagonal lattice spacing $a$ the geometric factor is %$\cdiff =2\frac{2}{\sqrt{3}a^{2}}\frac{a/\sqrt{3}}{a}=\frac{4}{3a^{2}}$ 
$\cdiff = \frac2{3\sqrt3 a/2} \frac a {\sqrt3 a} = \frac4{9a^2}$.
The generating function has the form 
\[
G(u,v;\ttau)= \e^{\frac{1}{2}\left(u+\frac{1}{u}+v+\frac{1}{v}+uv+\frac{1}{uv}\right)\ttau}
= \sum_{i,j\in\Z}u^{i}v^{j}\tphi_{ij}(\ttau)
\]
since on the dual triangular lattice vertices $v_{ij}$ are neighbour to vertices $v_{i+1,j+1}$ on top of the incidence relation of the quandrangular lattice.
This function is known to be the generating function of the two-index modified Bessel function \cite{Dattoli:1994iu}
\[
\tphi_{ij}(\ttau)=I_{ij}^{(+)}(\ttau) :=\sum_{s\in\Z}I_{i-s}(\ttau)I_{j-s}(\ttau)I_{s}(\ttau)\;.
\]
The hexagonal heat kernel solution is therefore
\[
K_{\hxinf2}(v_{ij},v_{00};\tau) = \e^{-k\,\cdiff \tau/ 2} \tphi_{ij}(\cdiff \tau)  = \e^{-4\tau/3a^{2}}I_{ij}^{(+)}({4\tau}/{9a^{2}})\;.
\]

%\subsection{$8$-valent $d=2$ lattice}

Finally, there is a similar solution for the combinatorial lattice $\olinf2$ defined by the property that each face has eight neighbouring faces, thus coined orthogonal lattice.
It is merely combinatorial since there is no tilling of the plane with such a combinatorial pattern. 
Therefore, the geometric factor $\cdiff$ has no derivation from the Euclidean geometry of the plane as a function of lattice spacing $a$ like in the previous cases.

The dual of $\olinf2$ is a quadrangular lattice where vertices are furthermore adjacent to diagonal neighbours, \ie a vertex $v_{ij}$ is also adjacent to $v_{i+1,j+1}$ and $v_{i-1,j-1}$.
Their valency is thus $k=8$ and the heat kernel combinatorics are captured by the generating function
\[
G(u,v;\ttau)=e^{\frac{1}{2}\left(u+\frac{1}{u}+v+\frac{1}{v}+uv+\frac{1}{uv}+\frac{u}{v}+\frac{v}{u}\right)\ttau}=\sum_{i,j\in\Z}u^{i}v^{j}\tphi_{ij}(\ttau)\;.
\]
This is the generating function of another kind generalization of modified Bessel function \cite{Dattoli:1994iu}
\[
\tphi_{ij}(\tau)=I_{ij}^{(+,-)}(\tau) 
:=\sum_{s\in Z}I_{i-s,j-s}^{(+)}(\tau)I_{s}(\tau) 
\equiv\sum_{s,t\in Z}I_{i-t-s}(\tau)I_{j-t-s}(\tau)I_{t}(\tau)I_{s}(\tau)
\]
such that the heat kernel on the octagonal lattice $\olinf2$ is
\[
K_{\olinf2}(v_{ij},v_{00};\tau) %= \e^{-k\,\cdiff \tau/ 2} \tphi_{ij}(\cdiff \tau) 
 = \e^{-8 \cdiff \tau} I_{ij}^{(+,-)}(\cdiff \tau)\;.
\]

\begin{figure}[htb]
\centering
\includegraphics[width=7.5cm]{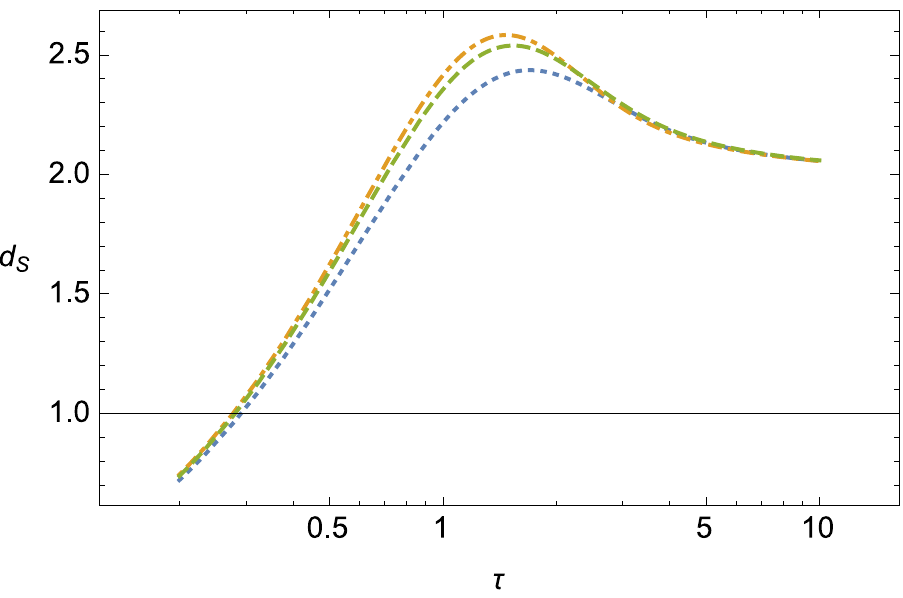}
\caption{Comparison of the spectral dimension of different 2-dimensional lattices: Hypercubic $\hlinf2$ (dotted curve), hexagonal $\hxinf2$ (dashed), and octagonal $\olinf2$ (dot-dashed, with $\cdiff=1$).
For comparison, all lattice spacings on the dual lattices are equally set to $a^\star = \sqrt{2}$.
}
\label{fig:ds-lattices}
\end{figure}

Now, the relevance of these calculations is that the spectral dimension of all the lattices considered differs only marginally. 
The only difference is the specific form of the discreteness artefact at the lattice scale, as shown in \fig{ds-lattices}. 
In all other regimes, the spectral dimension of different lattices perfectly agrees.
These results show that the specific combinatorics of a regular lattice, \ie a complex with translational symmetry, are not relevant for such global observables like the spectral dimension.
One is thus free to choose the lattice which fits best the practical needs in a specific context.

\subsubsection*{Walk dimension on the lattice}

Given the heat kernel \eqref{hk-hlinf} on the hypercubic lattice $\hlinf\m$, it is straightforward to calculate the walk dimension. % along the same lines as the spectral dimension. 
%In the case of the $\sd$-dimensional hypercubic lattice ${\hlinf\sd}$, \cjt{we} can 
Again, translational symmetry reduces the averaged mean square displacement $\X{\hlinf\m}
$ to the mean square displacement $\Xv{\hlinf\m}$ at the arbitrarily chosen origin $v_0$.
This is
\begin{eqnarray}
\X{\hlinf\m} = \Xv{\hlinf\m}
& = & \sum_{\vec\j\in\Z^\m} \dist{v_{\vec\j}}{v_0}^{2}  K_{\hlinf\m}(v_{\vec\j},v_0;\tau) \\
& = & \sum_{\vec\j\in\Z^\m} \left(\sum_{i=1}^\m j_i^{2}\right) \e^{-\tau}\prod_{i=1}^\m I_{j_i}(\tau)\,.
\end{eqnarray}
This can be evaluated using standard relations of the Bessel functions $I_j(\tau)$:
\begin{eqnarray}
\X{\hlinf\m} 
& \propto & \e^{-\m\tau}\sum_{i=1}^\m \sum_{\vec\j\in\Z^\m} j_i^{2}\prod_{i=1}^\m I_{j_i}(\tau)\\
 & = & \e^{-\m\tau}\sum_{i=1}^\m \left[\sum_{j_i\in\Z} j_i^{2}I_{j_i}(\tau)\right] \prod_{l\ne i}^\m\left[\sum_{j_l\in\Z}I_{j_l}(\tau)\right] \\
 & = & \e^{-\m\tau} \m \left[\frac{\tau}{2}\sum_{j\in\Z}I_{j-1}(\tau)+I_{j+1}(\tau)\right]\left(\e^{\tau}\right)^{\m-1} \\
 & = & \m\,\tau\,. \label{Xhlinf}
\end{eqnarray}
Thus, the walk dimension on the lattice is
\begin{eqnarray}
\dw^{{\hlinf\m}}(\tau)=2\,,
\end{eqnarray}
which is the same as in the continuum, \eg on flat space.
The result is thus that there are no discreteness effects in the walk dimension.

%\subsubsection*{Simplicial complexes}
%\label{sub:Triangulations}
%
%
%The type of combinatorial structures that are most relevant for the quantum-gravity applications below are simplicial complexes. 
%\cjt{compare above}
%%In order to avoid having to deal with the difficult problem of constructing quantum-geometry states on infinite spaces, \cjt{we} restrict to simplicial manifolds with compact topology.

\subsubsection*{Hausdorff dimension on the lattice}

For the calculation of Hausdorff dimension on an equilateral lattice $\cp$ it is convenient to split the volume into a sum of shells %$S_{v_0}(r)$
\[
S_{v_0,\cp}(r) := \sum_{\dist v{v_0} = r} V_{\star v}
\]
over points $v\in\cpsp0$ at a given distance $r$ to the centre $v_0$.
In units of the fixed lattice spacing $a^\star$ of the dual complex, the sum over shells is indeed a discrete sum
\[
\Vv\cp = %\sum_{m=0}^r S_{v_0}(m)\;.%
\sum_{j=0}^{\lfloor {r/a^\star} \rfloor } S_{v_0,\cp}(a^\star j)\;
\]
where the floor function $\lfloor\cdot\rfloor$ is needed to allow arbitrary real radius distance $r$.
%On the lattice there a two obvious choices for a distance defining
%this volume, either induced from the $l_{1}$- or $l_{2}$-norm. In
%the $l_{1}$-case the distance between the origin and position $n$
%is given by the norm
%\[
%\left|n\right|_{1}=a\sum_{i=1}^{d}n_{i}
%\]
%which is the graph distance on the lattice while in the Euclidean
%case of $l_{2}$-norm
%\[
%\left|n\right|_{2}=a\sum_{i=1}^{d}n_{i}^{2}.
%\]
%
%
%
%\subsubsection*{The case of graph distance}
%
%In the $l_{1}$-norm

For the $\m$-dimensional hypercubic lattice $\hlinf\m$ the $\m$-volumes are just $V_{\star v} = a^\m$ and $a^\star = a$ since $\hlinf\m$ is dual to itself.
Due to translational symmetry averaging over centre points is again redundant, $\Vr\cp = \Vv\cp$.
Now, the part of a shell at radius distance $r$ belonging to an $\m$-dimensional hyper-quadrant equals the volume of a $(\m-1)$-dimensional hyper-quadrant such that 
\[
\frac{1}{(2a)^\m }S_{\hlinf\m}(r)=\frac{1}{(2a)^{\m-1}}\Vr{\hlinf{\m-1}} \;.
\]
This can be used recursively to find
\[\label{latticeV}
V_{\hlinf\m}(r)= \frac{(2a)^\m}{\m} \frac r a \binom{\frac r a + \m-1}{\m} 
\]
which yields  for the lattice Hausdorff dimension the closed expression
\[\label{dh-lattice}
\dh^{\hlinf\m}(r) = r \sum_{l = 0}^{\m-1}\frac{1}{r + l a} 
= \frac r a \left[\digamma(r/a + d) - \digamma(r /a )\right]
\]
 where $\psi$ is the digamma function here. 
Both in \eqref{latticeV} and \eqref{dh-lattice} I have generalized from the exact discrete dependence on $\lfloor r/a \rfloor$ to the continuous $r/a$ 
%since this is appropriate for small enough lattice scale $a$ which indeed is the case in all applications in this work.
which is necessary for an appropriate notion Hausdorff dimension as scaling of volume.

The Hausdorff dimension of $\m$-dimensional hypercubic lattices is characterized by a transition from topological dimension $\m$ at larger scales towards a constant value of one below the lattice scale.
The regime of topological dimension $\m$ is found for
 \[
\dh^{\hlinf\m}(r) \us{r/a \ra \infty}{\lora} \sum_{l=0}^{\m-1} 1 = \m \;.
 \]
On small scales, on the other hand,
 \[
\dh^{\hlinf\m}(r)  = 1 + \sum_{l=1}^{\m-1}\frac{r}{r + la} \us{r/a \ra 0}{\lora}  1.
  \]
The Hausdorff dimensions for lattices in various dimensions are shown in \fig{dh-lattice}.
\begin{figure}
\begin{centering}
\includegraphics[width=7.5cm]{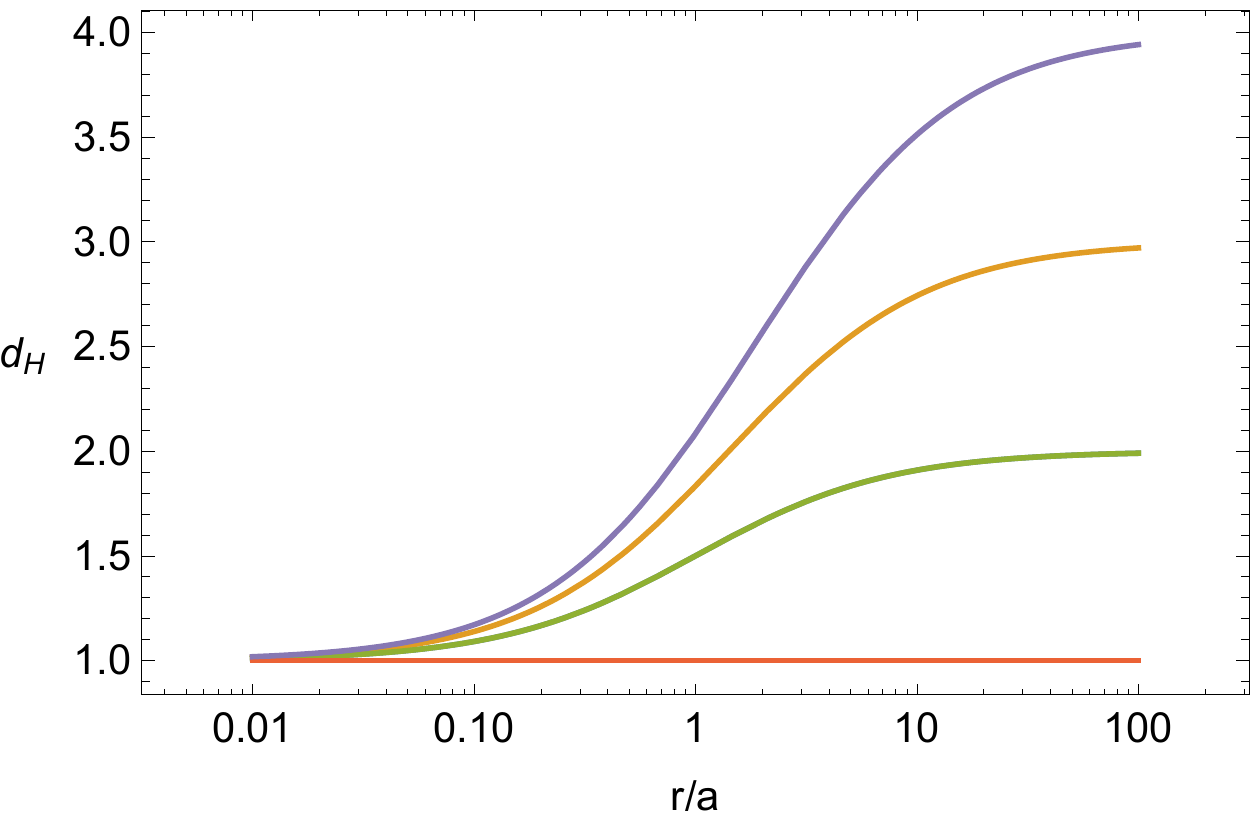}
\caption{Hausdorff dimension $\dh^{\hlinf\m}(r)$ of hypercubic lattices in dimension $\m=1,2,3,4$, easily distinguished by the large-scale regime where they approach the topological dimension $\m$.}
\label{fig:dh-lattice}
\end{centering}
\end{figure}

%==============================================================

\subsection{Spectral dimension on simplicial complexes}
\label{sec:ds-simplicial}

Now that I have analysed in detail the generic discreteness effect in the spectral dimension of complexes, the focus in this section is on specific effects of the combinatorial structure of complexes, in particular complexes without the regularity of lattices.
To this end I will calculate numerically the spectral dimension of various kinds of simplicial complexes: equilateral triangulations of the sphere and torus as well as various global and random subdivisions of the latter.
To carry out the calculations I have implemented an algorithm for the discrete calculus (\sec{calculus}) to compute the Laplacian in each case. 
After diagonalization of the Laplacian, the spectral dimension is then obtained directly from the heat trace in the form \eqref{ht-from-spectrum}.

This investigation is preliminary to the quantum case in a twofold way. 
First, it is important for the distinction of effects of the underlying discrete structure from additional quantum effects. 
Second, from a practical perspective, one can then choose those complexes with smaller and more controllable discreteness features. 
Only when discreteness effects are under control quantum effects of states on complexes can be distinguished.
In the end, regular torus triangulations turn out to be the best candidates for this purpose.

\subsubsection*{Triangulations of $S^{2}$.}

As a first kind of complexes I consider triangulations of the sphere to compare them to the smooth case (\sec{smooth}).
Obvious triangulations of the 2-sphere are the boundaries of the three triangular platonic solids (tetrahedron, octahedron, icosahedron).
These are the only equilateral triangulations of the smooth 2-sphere in terms of simplicial complexes (in the strict sense of \dref{simplicial-complex}).
The results of calculating their spectral dimension are shown in \fig{S2triang}.
The larger the triangulation, the taller the height of the peak. 
In particular, only the spectral dimension of the icosahedral triangulation can be seen as providing a good approximation for the topological dimension $\m=2$, provided one defines it to coincide with the height of the peak (and assuming this would become a plateau for larger complexes, as seems to be the case in feasible calculations).
\begin{figure}
\centering
\includegraphics[width=7.5cm]{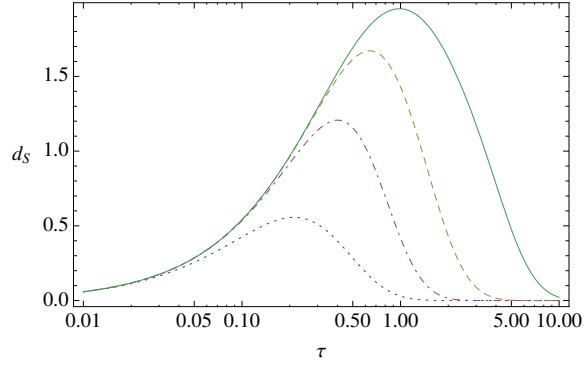}
\caption{$\ds$ of equilateral triangulations of $S^{2}$ in terms of the boundary of the (triangular) platonic solids, icosahedron, octahedron and tetrahedron (solid, dashed and dash-dotted curve) and of the dipole (dotted curve).  \label{fig:S2triang}}
\end{figure}

The extent to which too-small triangulations fail to capture the topological dimension can be seen even more drastically in the case of the triangulation of the $\m$-sphere in terms of just two simplices,  called dipole or (super)melon, 
which is an example of a generalized simplicial complex (\dref{generalized-polyhedral}).
Independently of the dimension $\m$, its heat trace is (see Appendix C in \cite{\COTa})
\begin{equation}
P_{\dipole\m}(\tau) = P_{\dipole\m}(t=\alpha \tau)=1+\rme^{-t},\label{eq:Pdipol}
\end{equation}
where the only geometric factor $\alpha$ can be absorbed into the diffusion parameter. This yields a spectral dimension 
\begin{equation}
\ds^{\dipole\m}(t)=\frac{2t}{1+\rme^t}\,.\label{eq:dSdipol}
\end{equation}
From its derivative 
\begin{equation}
\frac{\d}{\d t}\ds^{\dipole\m}(t)=2\frac{1-\rme(t-1)\,\rme^{t-1}}{(1+\rme^t)^{2}}\,,
\end{equation}
one finds that the maximum is at $t^{\rm max}=W_{0}(1/\rme)+1\approx1.278$
(where $W_{0}$ is the upper branch of the real Lambert $W$-function) and has value
$
\ds^{\rm max}=\ds(t^{\rm max})\approx 0.56\,,
$
 independently of dimension $\m$ and the parameter $\alpha$. Only the position of the maximum is rescaled by $\alpha$.

\subsubsection*{Regular equilateral triangulations of $T^\m$.}

\begin{figure}
\centering
\includegraphics[width=7cm]{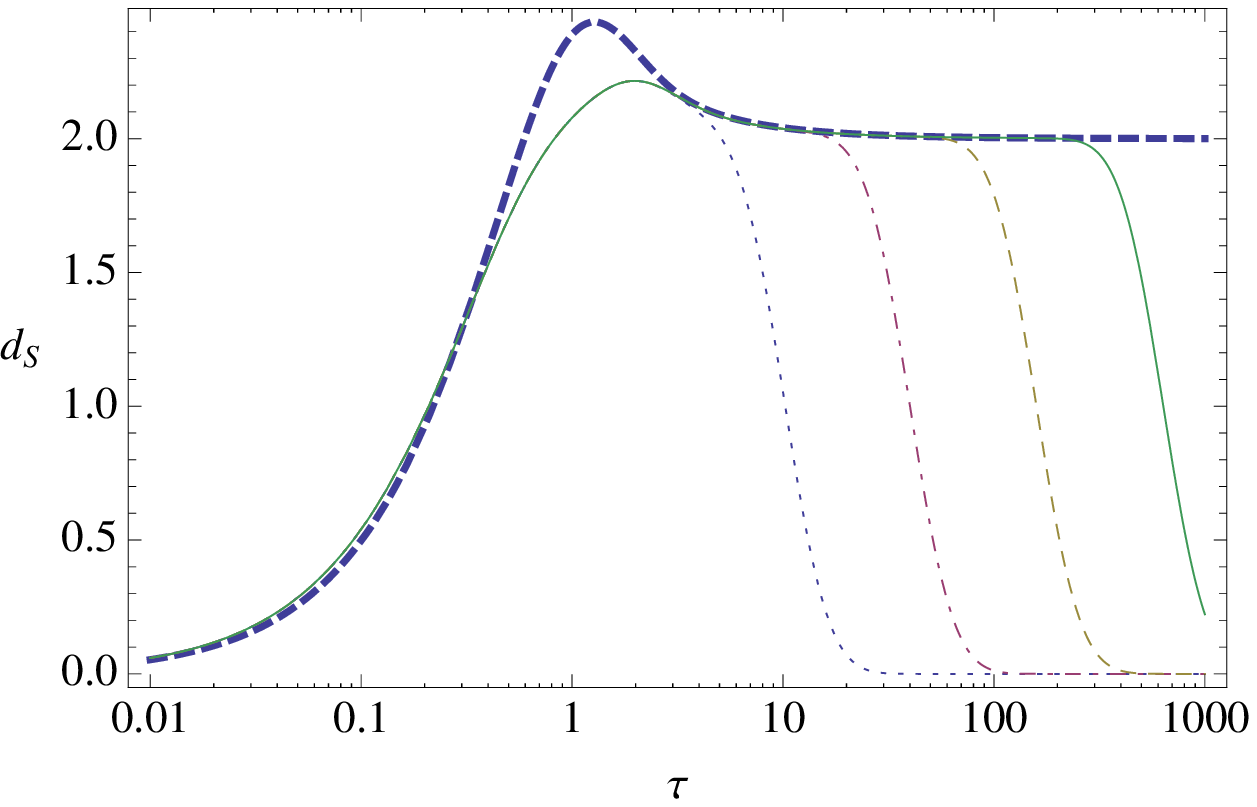}
\includegraphics[width=7cm]{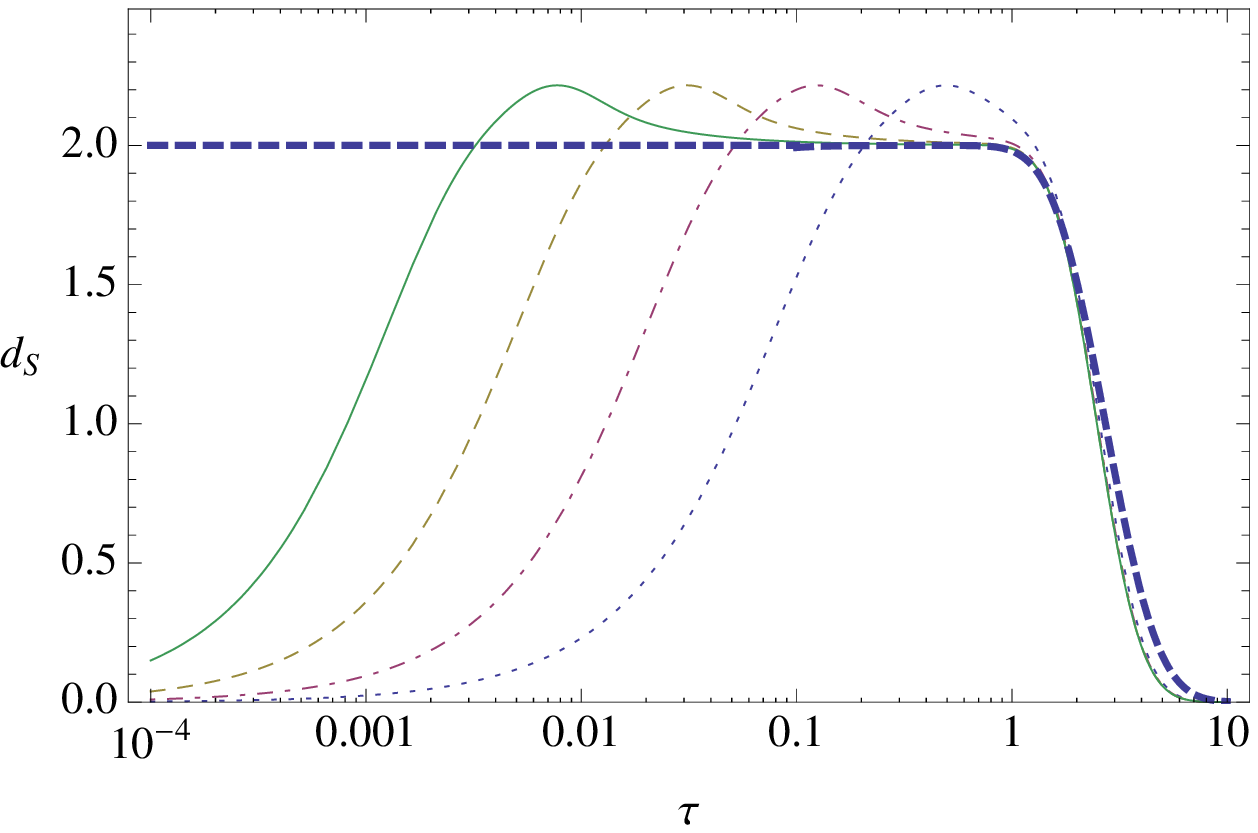}
\caption{Left: (a) unrescaled equilateral $T^2$ triangulation $\T_{2,\size}$ with $\np0 = \size^2 = (3\cdot 2^k)^2$ vertices, $k=0,1,2,3$ from left to right (dotted, dot-dashed, dashed thin, solid curve) indicating a convergence to the topological dimension for large $\tau$ in the $\size\to\infty$ limit, compared with the infinite quadrangular lattice (dashed thick).
Right: (b) rescaled triangulations ($k=0,1,2,3$ from right to left) indicating a convergence to the smooth $T^2$ (dashed thick) for $\size\to\infty$.
\label{fig:dsT2triang}}
\end{figure}

While there are no further, larger equilateral triangulations of $S^{2}$, for the $\m$-torus $T^\m$ there are regular equilateral triangulations $\T_{\m,\size}$ of arbitrary combinatorial size, \ie number of vertices $\np0 = |\Tp{0}_{\m,\size}| = \size^\m$. 
These are obtained from the hypercubic lattices via standard triangulation of each hypercube
\cite{Brehm:2011kp} and they are simplicial complexes (in the strict sense of \dref{simplicial-complex}) for $\size\ge 3$.
In two dimensions, these are triangulations of the flat torus with radii ratio $R_{1}/R_{2}=\sin(\pi/3) = \sqrt{3}/2$ (\fig{T2triang}). They are the dual to the finite, toroidal version of the hexagonal lattice.
Even though they are as such just another kind of lattice, I consider them here in detail because finite simplicial complexes are most appropriate for the quantum states later.
Furthermore, they will be the seeds for various kinds of irregular complexes in this section.

When comparing triangulations of different combinatorial size $|\Tp{0}|$ there are two possibilities: one can either
\begin{enumerate}
\item  [(a)] fix the edge lengths to some scale $a$ such that the geometric size of the triangulations is growing with the combinatorial size, or one can 
\item [(b)] rescale them to $a_\size = a/\size$ according to the combinatorial size to keep the overall geometric size fixed. 
\end{enumerate}
Thus, in the limit $\size\to\infty$ the case (a) gives a triangulated plane $\R^{2}$, while case (b) approximates the smooth flat $T^{2}$ geometry.
Indeed, the calculations in \fig{dsT2triang} (for various finite $\size$) indicate that the spectral dimension of these triangulations capture both limits.
Moreover, numerical calculations of the simplicial case show again that the only difference between lattices of various combinatorics consists in a qualitatively different discretization artefact.

The same analysis can be repeated in higher dimensions. 
For example, in terms of the standard triangulation of the cube, there is an equilateral triangulation $\T_{3,\size}$ of the 3-torus with radii ratios $R_{2}/R_{1}=\sqrt{3}/2$ and $R_{3}/R_{1}\approx0.752$.
The calculations in \fig{dsT3triang} again indicate the correct behaviour in the  infinite size limit.
The discretization effect here is slightly more marked, in that below the relatively small peak there is a small regime of relatively weak slope before the usual steeper fall-off at small $\tau$ sets in. 
\begin{figure}
\centering
\includegraphics[width=7cm]{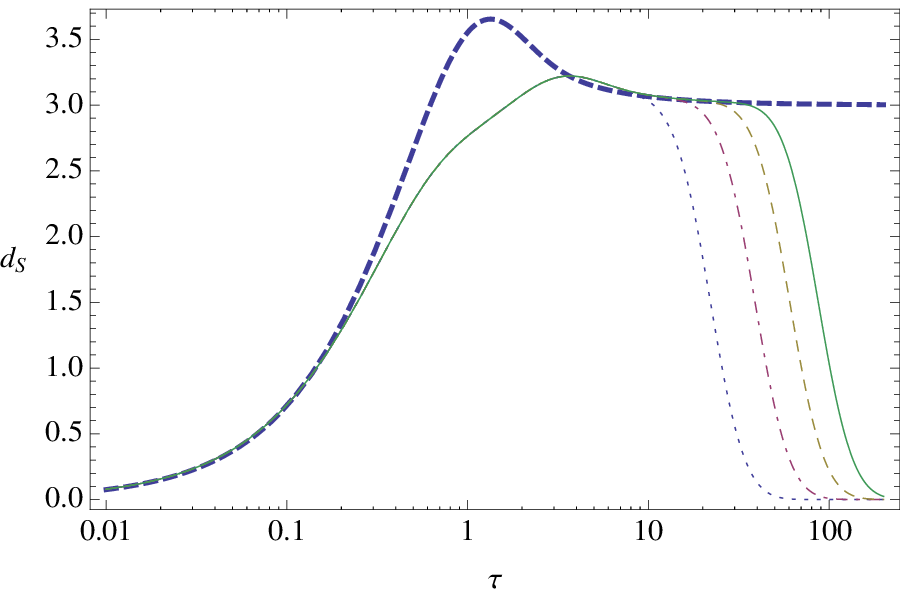}
\includegraphics[width=7cm]{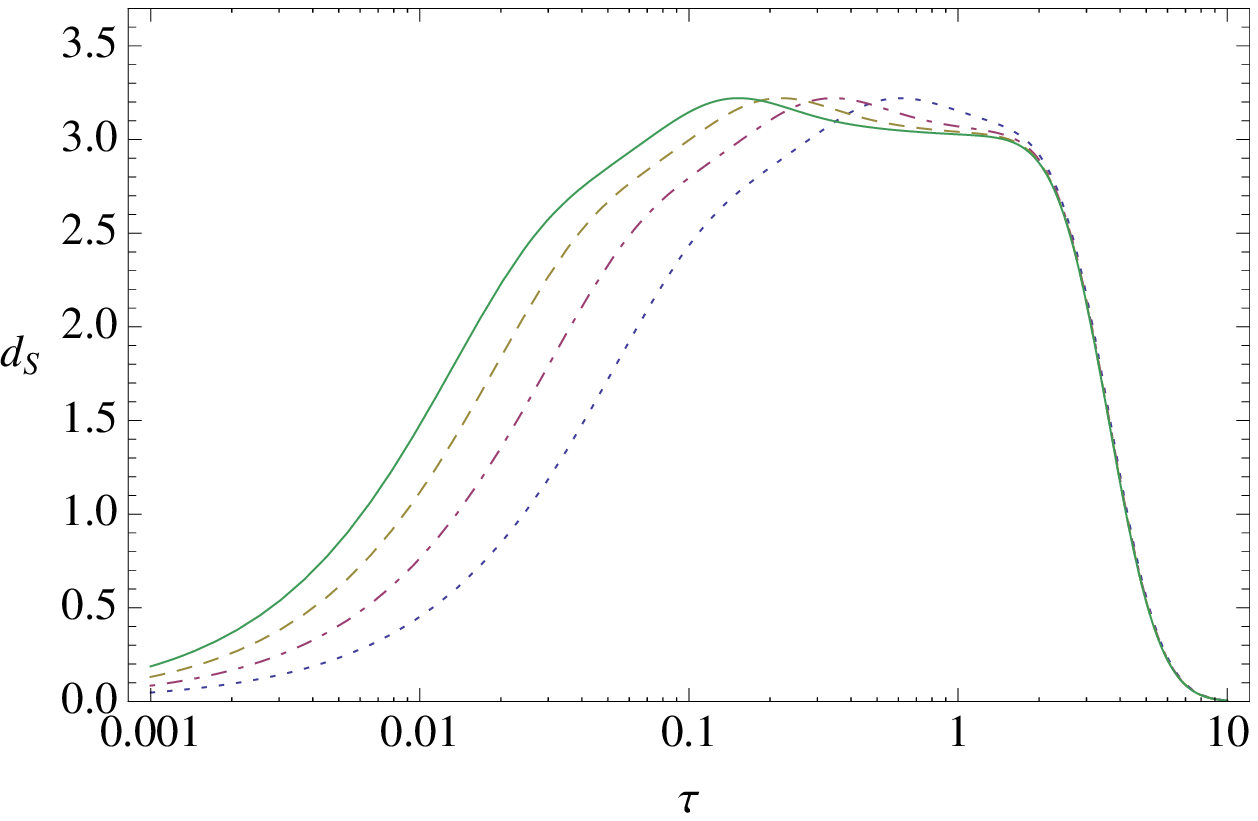}
\caption{Left: (a) equilateral $T^{3}$ triangulation in terms of $\np0 = \size^{3}$ vertices,
$\size=6,8,10,12$ from left to right (dotted, dot-dashed, dashed thin, solid curve), compared with the cubic lattice (dashed thick). Right: (b) rescaled triangulations.\label{fig:dsT3triang}}
\end{figure}

\subsubsection*{Subdivisions of triangulations of $T^\m$.}

An alternative strategy for obtaining combinatorially larger simplicial manifolds of a given topology is to subdivide them. A natural way to do so is the Pachner 1-($\m$+1) move where one $\m$-simplex is subdivided into $\m+1$ simplices by inserting a vertex in the middle of the original simplex and connecting it to all its vertices. 
%On the dual complex this amounts to resolving a vertex into a $\m$-simplex.
%, which for $d=2$ looks similar to the combinatorial part of the proposed action of the LQG Hamiltonian \#\# 

Again, concerning the geometric realization of such a complex, one may either
\begin{enumerate}
\item [(a)] consider it as an equilateral triangulation (although this is not a triangulation of a torus with flat geometry anymore) or 
\item [(b)] rescale the edge lengths such that it is a triangulation of the flat torus.
\end{enumerate}
More precisely, such a rescaling is obtained in the following way.
The vertex $v_0$  inserted by the Pachner 1-($\m$+1) move can be realized as the barycentre of an $\m$-simplex with vertex set $(v_1 v_2 \dots v_{\m+1})$. 
The new edges $(v_0 v_a)$, $a=1,2,...,\m+1$, then need to have squared lengths
\begin{equation}
l_{0a}^{2}=\frac{1}{\left(\m+1\right)^{2}}\left(\m\underset{b}{\sum}l_{ab}^{2}-\underset{(bc)}{\sum}l_{bc}^{2}\right)
\end{equation}
to preserve the flat geometry approximated by the triangulation (see appendix \sec{length-variables}; sums are running over vertices $v_b$ and edges $(v_b v_c)$ of the simplex not containing the vertex $v_a$).

One can obtain new simplicial complexes applying these subdivisions in various ways.
I consider two cases, a global and a random subdivision:
\begin{enumerate}
\item [(i)] First, I subdivide the above $T^{2}$ triangulation $\T_{2,3}$ applying the 1-($\m$+1) move simultaneously on all triangles; 
\item [(ii)] second, I apply it randomly.
\end{enumerate}
Both cases are considered either as equilateral (a) or as rescaled triangulations (b) and the results are the following:

\begin{figure}
\centering
\includegraphics[width=7cm]{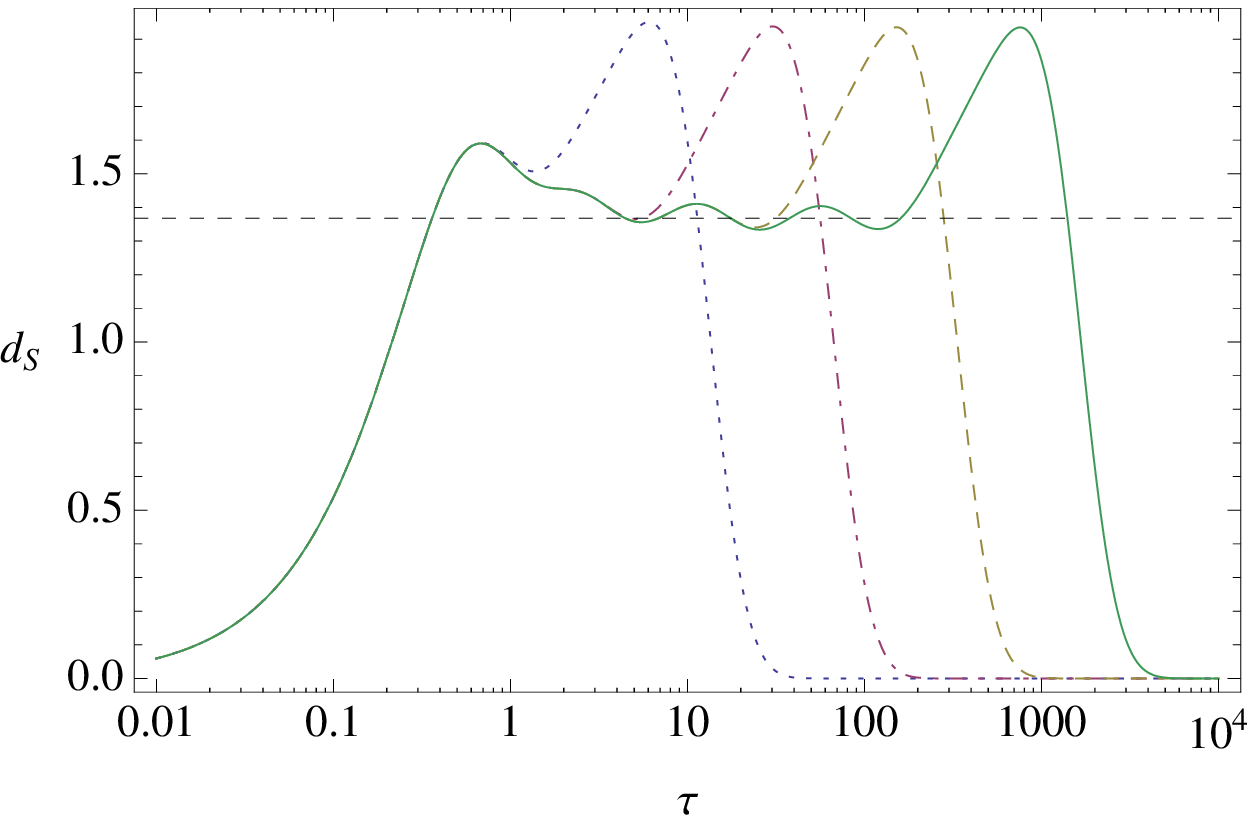}
\includegraphics[width=7cm]{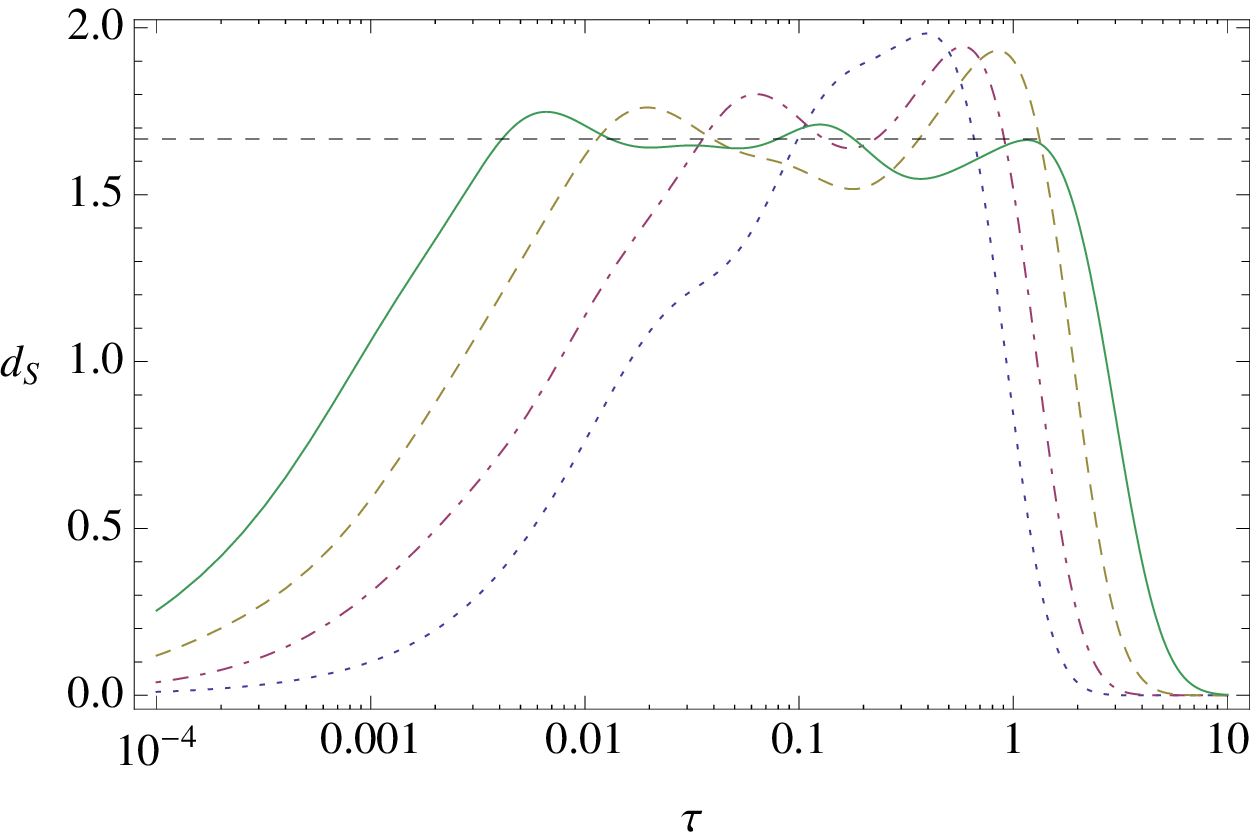}
\caption{Spectral dimension of $k$ global subdivisions of the smallest regular $T^{2}$-triangulation $\T_{2,3}$, for $k=1,2,3,4$ (dotted, dot-dashed, dashed and solid curve).
Left: (a) purely combinatorial complex, with a dashed line at $\ds=3/4$. 
%($k$ increasing from left to right)
Right: (b) complex with rescaled edge lengths triangulating the flat torus, with a dashed line at $d_S=5/3$.}
\label{fig:dsT2subdiv}
\end{figure}

\begin{enumerate}
\item[(i-a)]
In the first case of global subdivisions (\fig{dsT2subdiv}), considered (a) as equilateral triangulations, there is a peak slightly lower than $\m=2$ at the diffusion time scale of the size of the triangulation. 
But at smaller intermediate scales I find small oscillations around a value of about $\ds\approx 3/4$, obtained after integrating over a period. 
It is not surprising that there is a deviation from the topological dimension, since these equilateral triangulations have geometric realizations only in terms of a torus curved at various scales in a specific manner.
Indeed, the approximate value of $\ds\approx 3/4$ suggests that these simplicial complexes provide an instance of branched polymers.
These are an example of discrete geometries found in many cases in (causal) dynamical triangulations \cite{\cdtAJL} and in tensor models \cite{Gurau:2013th}.

\item [(i-b)] What might be more surprising is that also the complex with barycentrically rescaled edges triangulating the flat torus has a spectral-dimension plot quite different from the equilateral triangulations and cubulations considered above. 
There is a more complicated oscillatory behaviour but now at a value around $5/3$. 
Furthermore, the fall-off at small $\tau$ is much less steep. 

The rescaled Pachner subdivisions are thus an example of a triangulation which substantially deviates with respect to the spectral dimension in the corresponding continuum geometry, due to the particular combinatorial structure.
%Since these are still triangulations of the flat torus these effects in the spectral dimension are obviously due to the kind of triangulation used, and more precisely to both the combinatorial structure as well as the geometric data, \ie edge lengths. 
In particular, in the finite case there is no plateau at the value of the topological dimension, nor is it expected that the topological dimension be recovered in the large-size limit.
Thus, an important lesson is that the spectral dimension of a triangulation of a given smooth geometry does depend on the combinatorics of the chosen triangulation.
% with respect to the spectral dimension, to recover the dimension plot of a smooth geometry from a discrete one it is not enough to take the infinite size of the triangulation.

\begin{figure}
\centering
\includegraphics[width=7cm]{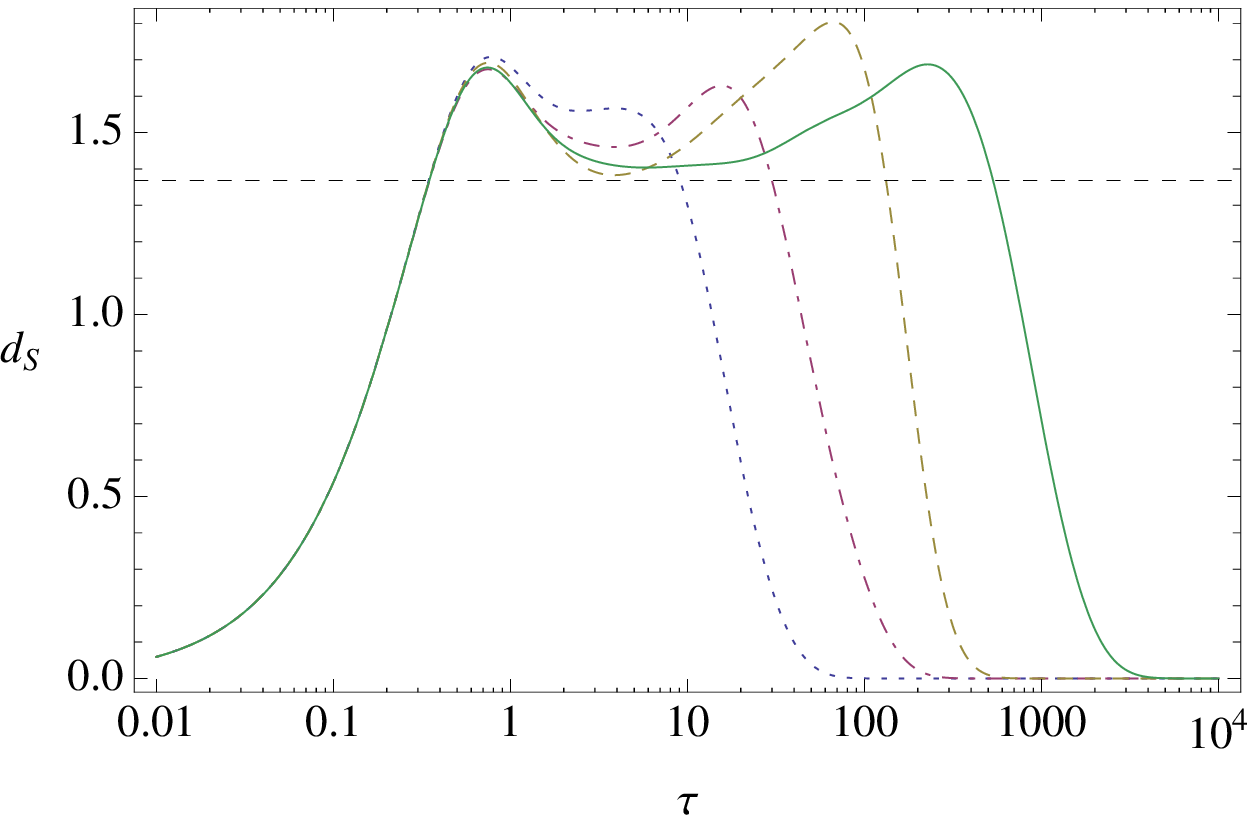}
\includegraphics[width=7cm]{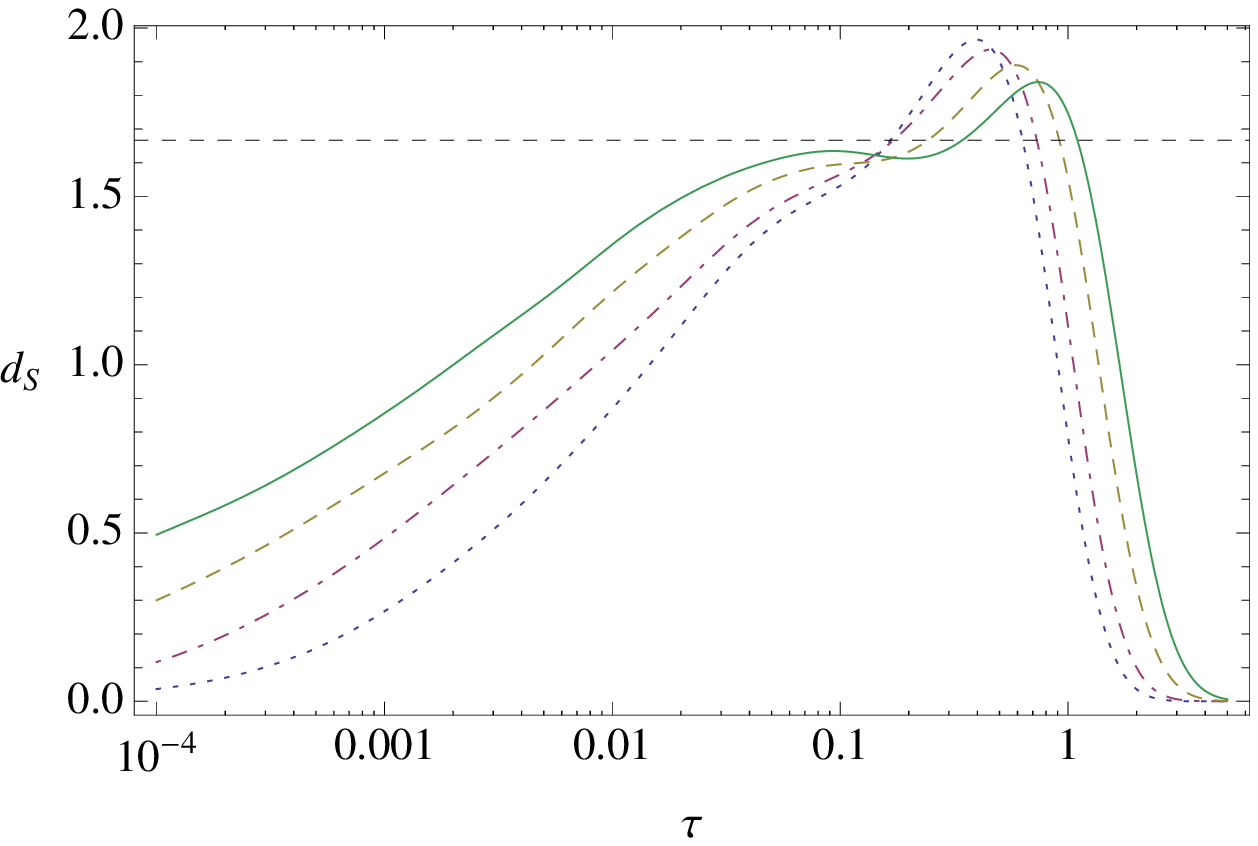}
\caption{Spectral dimension of the simplicial complex obtained from the %$N_{0}=9$ torus
smallest regular $T^{2}$-triangulation $\T_{2,3}$ by $6\times 3^{k}$ single random subdivisions, $k=1,2,3,4$ (dotted, dot-dashed, dashed and solid curve)
% values chosen for comparison with the case of global subdivisions, \fig{dsT2subdiv}),
taken (a) as equilateral (left, with dashed line at $d_S=1.37$) or (b) rescaled triangulations of the flat torus (right, with dashed line at $d_S=3/5$). \label{fig:dsT2subran}}
\end{figure}
\end{enumerate}

In the second case, I construct subdivisions by choosing (with uniform random distribution) one triangle to subdivide, performing the subdivision, and then re-iterating the process. 
The results (\fig{dsT2subran}) hint at some kind of averaging effect. 
\begin{enumerate}
\item[(ii-a)] 
In the equilateral case, the oscillations are washed away and the height of the peak at the volume-size scale seems to be dependent on the particular elements chosen in the random ensemble. 
The closer this choice to a global subdivision, the more pronounced the peak.

\item [(ii-b)]
In the rescaled case, the regime around $5/3$ is much smaller than in the global subdivided case and the fall-off is even less steep. This might be explained by the fact that the random subdivision effectively averages over both the regime corresponding to a plateau and the regime of low-$\tau$ fall-off.
\end{enumerate}

Finally, I consider a very peculiar element in the random ensemble, namely the repeated subdivision around a single vertex of the triangulation.
This is interesting because it shows that the spectral dimension is very much dependent on the combinatorics.
The result is a spectral dimension which could be conjectured to run to $\ds\ra1$ in the large-size limit, as suggested by the calculations at finite sizes (\fig{dsT2frac}). 
Since this property of the example is actually independent of the global structure of the complex, one can have the same result starting with only one triangle.
The only difference is that the spectral dimension goes to infinity for large $\tau$ since there is no zero eigenvalue in the spectrum of the Laplacian due to the boundary of the single triangle (\fig{dsT2frac}). 
\begin{figure}
\centering
\includegraphics[width=7cm]{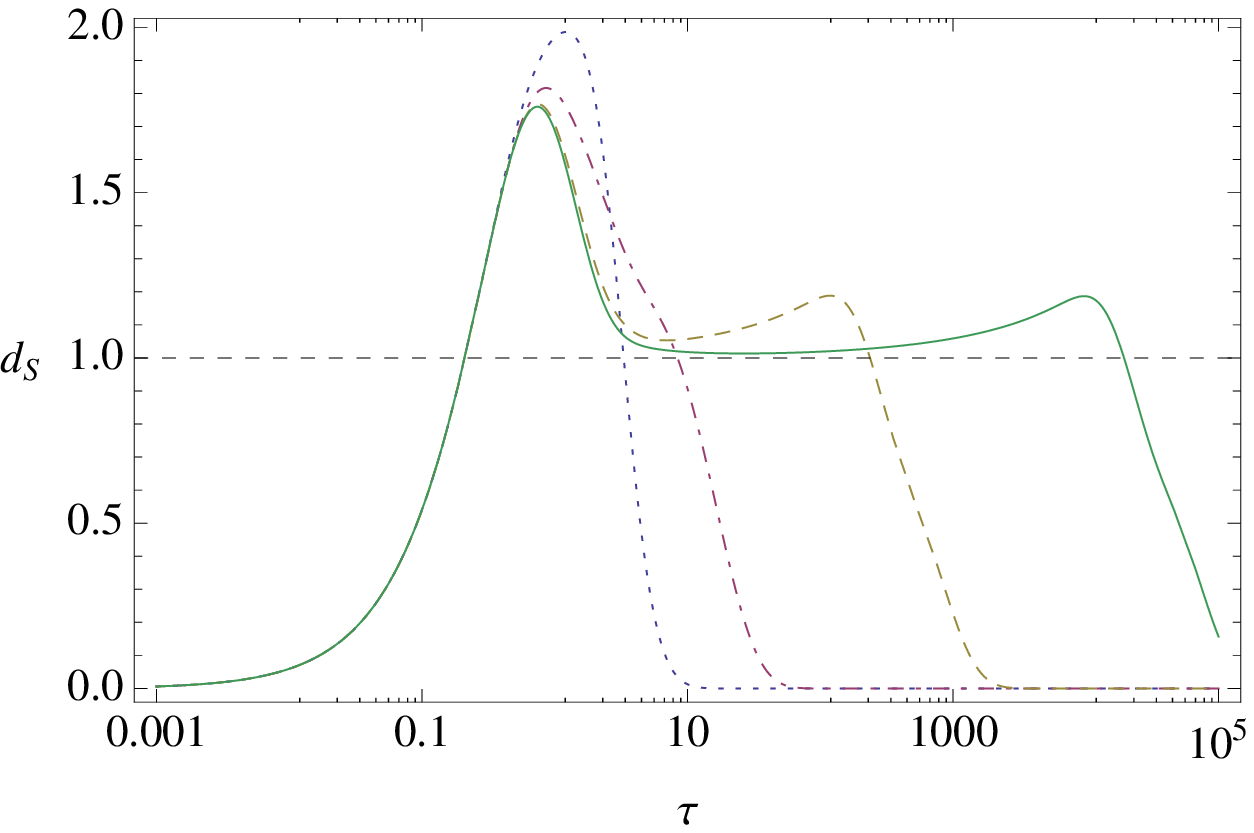}
\includegraphics[width=7cm]{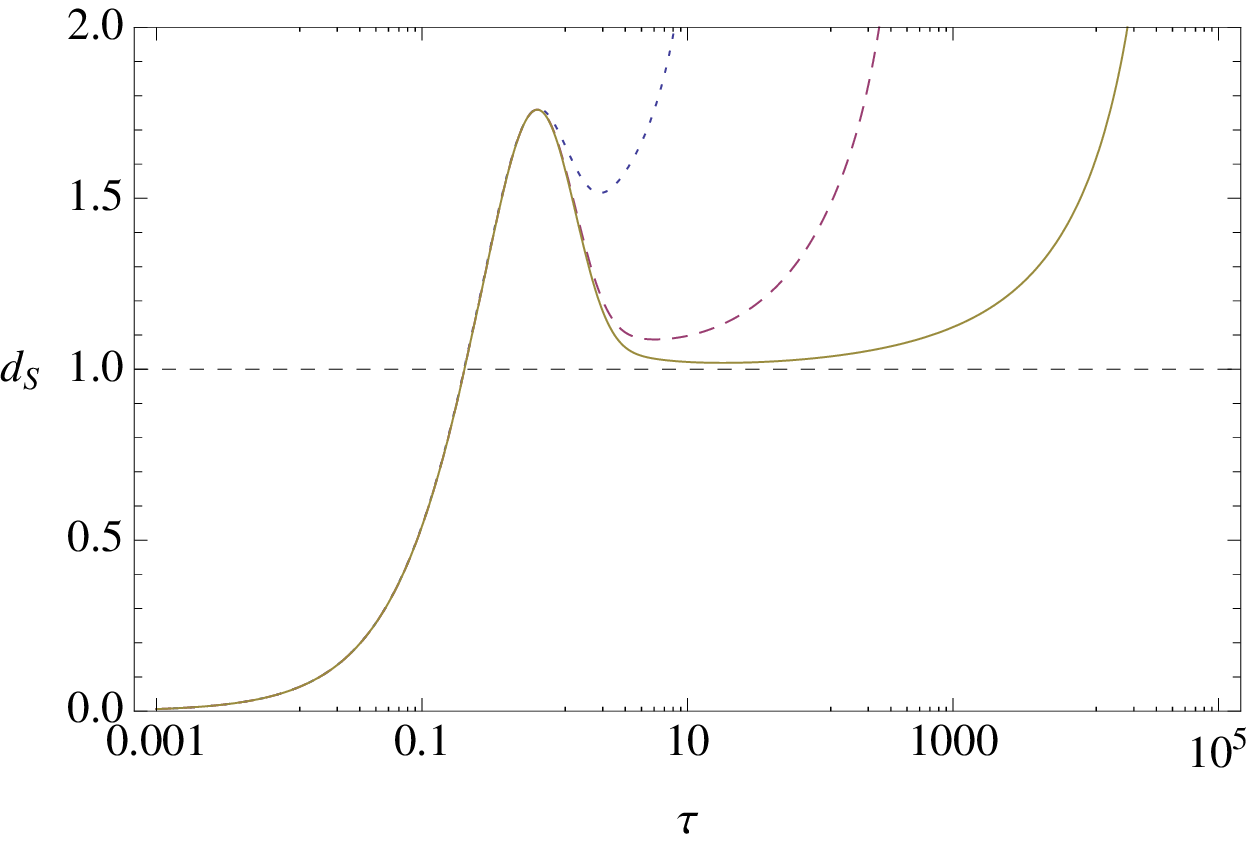}
\caption{Spectral dimension of the simplicial complex obtained by $10^{k}$ single subdivisions, $k=0,1,2,3$, around the same vertex of the torus triangulation $\T_{2,3}$ (left) and of a single triangle (right), both unrescaled.
In both cases, $\ds$ is independent of the global structure in the large complex size limit.
\label{fig:dsT2frac}}
\end{figure}

\

In this section \ref{sec:classical}, I have investigated discreteness effects in the effective dimensions for various classes of complexes. 
For lattices, the walk dimension does not show any discreteness effect and the Hausdorff dimension has a generic effect of a fall-off to $\dh=1$ below the lattice scale. 
In the case of spectral dimension, there are three characteristic features:
\begin{enumerate}
\item[(A)] a zero spectral dimension at scales below the lattice spacing, $\ds\simeq 0$, coming from the fact that the test particle feels the discrete spacing of the lattice: for too-short or infinitesimal times $\delta\tau$, the probe does not have the chance to diffuse from a given node to another;

\item[(B)] a peak larger than the topological dimension at the lattice scale; 

\item[(C)] at larger scales, there is agreement with the smooth spectral dimension in the case of lattices of various combinatorics, such as hypercubulations (\fig{dsZn}), triangulations (\fig{dsT2triang}) as well as hexagonal and octagonal lattices (\fig{ds-lattices}).
In particular, there is a plateau with height close to the topological dimension, the more extended the larger the complex. This cannot be found for spherical equilateral triangulations, which only exist for certain sizes too small to establish a plateau (\fig{S2triang}).  
%The regular lattices can be understood as a way to approximation either of an infinite lattice for fixed lattice spacing or of the smooth case for when the lattice spacing is rescaled. This very much favors such triangulations for the definition of quantum states semiclassical to smooth geometries, to be considered in the next section.
\end{enumerate}

Furthermore, already at the classical level the spectral dimension is very sensitive to the precise structure of the discrete manifolds combinatorics.
This is illustrated exemplarily by triangulations from subdivisions via Pachner 1-($\m$+1)-moves.
These do not reproduce the topological dimension neither in the equilateral nor in the rescaled case (figures \ref{fig:dsT2subdiv} and \ref{fig:dsT2subran}).

%===========================================================================

\section{Spectral dimension on LQG coherent states\label{sec:coherent-states}}

Now the ground is set to calculate the effective dimension of various examples of discrete quantum geometries.
The focus of this section is on the spectral dimension of states of 2+1 Euclidean quantum geometry. 
The reason is that in this case numerical calculations of meaningful examples are feasible without further approximations.
The aim is to identify the quantum corrections to the classical spectral dimension.

States of particular interest  are coherent states, that is states which are peaked both on an intrinsic, two-dimensional spatial geometry as well as on its conjugate extrinsic geometry. 
Quantum corrections of $\ds$ depend on the parameters of the coherent states in this case.
For these states the quantum spectral shows only qualitative and small deviations from the classical case. 
To uncover stronger quantum effects, I investigate also superpositions of coherent states. 
The resulting spectral dimension turns out to be an average of the superposed cases, thereby showing a more distinct quantum behaviour.
To determine the features of this effect more precisely, it turns out that the numerical techniques used in this section are too limited.
To determine these more in detail, I will present a model for which calculations are more tractable in the following section \ref{sec:dimensional-flow}.

The structure of this section is the following.
I will start in \sec{defLQG} with a discussion of the definition of the heat kernel trace in the particular setting of kinematical LQG states in $2+1$ dimensions.
Then I will present the methods and results for the computation of the spectral dimension of coherent states on a given triangulation in that context in section \ref{coherent}.
Finally, I will give the numerical results for superposition of such states, both over geometric data which the states are peaking at (\sec{summing-coherent-states}) as well as over underlying triangulations (\sec{superpositions}).

%==============================================================

\subsection{Heat trace operator on kinematical LQG states
\label{sec:defLQG}}

As introduced in section \sec{molecule-formulation}, LQG states are functions of spatial geometric variables (holonomies, flux variables or representation labels) based on closed graphs $\bg\in\bgs$ which are orientable.
%For a definition of spectral dimension, it is necessary to restrict these graphs to be dual 1-skeletons of discrete pseudo-manifolds $\cm$, to be able to use the discrete Laplacian (\ref{eq:disLap}).
As anticipated and motivated in the introduction to this chapter, I will restrict here to LQG in $\std = 2+1$ dimensions.
%This is not only technically simpler but allows in particular to directly attack the question of the impact of quantum fluctuations on the spectral dimension since all states are geometric. In 3+1 dimensions one would further have to deal with the issue of geometricity. This would demand in particular either a generalization of the simplicial Laplacian eq. \ref{eq:disLap} to the nonsimplicial area and volume variables in the spin representation or to work with its noncommutative flux representation as given in \cite{\COTa} .

Now, the elements of $\hkin = \bigoplus_{\bg\in\bgs} \hs_\bg$ are functions of the holonomies on the edges $\eb\in\E_\bg$ of the graph $\bg$ which is usually considered as the dual 1-skeleton of a complex $\cp$ corresponding to a $\sd=2$ spatial slice.
For simplicity, let me focus furthermore on the subspace of states 
\[\label{hkin3}
\hk3 := \bigoplus_{\bg\in\bgs_3} \hs_\bg
\]
based on 3-regular graphs.
These may be interpreted as dual to triangulations (up to the caveats discussed in \sec{std-complexes}).
In particular, elements $\sn$ in the gauge-invariant spin-network basis have unique trivial intertwiners $\intw{\vb}$, the Clebsch-Gordan symbols.
Thus, they are denoted simply by $(\sms)$.

Quantum geometries in this space can be directly related to discrete geometries in the following sense. 
As explained before \eqref{length-spectrum}, the spin-network basis $\{\sn\}$ diagonalizes the length operators $\widehat{l}_e$ associated to edges $e=\star\eb$ dual to $\eb\in\bg$ with a spectrum given in terms of the $SU(2)$ Casimir operator \cite{Achour:2014gr,Freidel:2003kx},
\[\label{spede}
\ql_e \smr = l(j_\eb) \smr  = \lbi\sqrt{ j_\eb(j_\eb+1)+ \csu} \,\smr\;,
\]
where the scale $\lbi = 8\pi\bi\lpl $ is set by the Planck length and the Barbero-Immirzi parameter and $\csu$ is a constant dependent on the quantization map chosen for the Casimir operator \cite{Freidel:2003kx,Guedes:2013cc}.
%In the following \cjt{we} use the freedom in choosing a constant $c$ setting it to $c=1/4$.
If $\csu>0$, the Clebsch-Gordan conditions for the representations on edges $\eb=1,2,3$
\begin{equation}
\left|j_{1}-j_{2}\right|\leqslant j_{3}\leqslant j_{1}+j_{2},
\end{equation}
implicit in the intertwiners on vertices of $\bg$ yield triangle inequalities for the length expectation values of the associated edges $e=1,2,3$ on the primal complex $\cp$,
\begin{equation}\label{trianineq}
%\qbra \ql_1 \qket_\sms + \qbra \ql_2 \qket_\sms > \qbra \ql_3 \qket_\sms\,.
l(j_1) + l(j_2) > l(j_3)\;.
\end{equation}
This can be seen from the inequality 
\ba
\sqrt{j_{1}(j_{1}+1)+\csu} + \sqrt{j_{2}(j_{2}+1)+\csu} 
& > &\sqrt{\left(j_{1}+j_{2}\right)(j_{1}+j_{2}+1)+\csu}\nonumber\\
& \ge & \sqrt{j_{3}(j_{3}+1)+\csu}.
\ea
Only for $\csu = 0$ there are degenerate spin configurations where the inequality on the lengths is not strict (\eg $(j_1,j_2,j_3) = (0,1,1)$).
Even in this case one could obtain the triangle inequalities by  restricting to those states in $\mathcal{H}_{\rm kin}$ corresponding to non-degenerate configurations. 
This restriction is not expected to be significant for the calculation of the quantum spectral dimension. The results presented in the following are for $\csu = 1/4$, but alternative calculations %of the spectral dimension with different values
 indicate that $\ds$ is not sensitive to such choice (\eg test of the $c=0$ case give very similar results).

On the basis of this relation of simplicial spin networks $(\sms)$ to edge-length geometries, an obvious choice is to take an explicit form of the discrete Laplacian in terms of edge-length variables \eqref{edge-length-laplacian}.
Promoting this expression to the corresponding map $\widehat\Delta_\bg\equiv\widehat\Delta_\cp$,  its action on spin-network states $\smr\in\hs_\bg$ returns a discrete Laplacian $\Delta_\bg(j_\eb)\equiv\Delta_\bg[l(j_\eb)] $,
\[\label{lqg-laplacian}
\widehat{\Delta}_\bg \smr 
= \widehat{\Delta_\bg[l_\eb^2]} \smr
= \Delta_\bg[\ql_e^{2}] \smr
= \Delta_\bg[l(j_\eb)] \smr \,,
\end{equation}
which is a function of representations $\rep{\eb}$ according to its dependence \eqref{edge-length-laplacian} on the lengths $l(j_\eb)$.
While the form of the primal volumes in the Laplacian, \ie triangle areas %$A_{\cl}(l_{i}^{2})$ 
and the edge lengths themselves, is straightforward, 
there is in principle some freedom in choosing the form of the dual edge lengths as functions of primal lengths.
Here I choose a barycentric dual as argued for in \sec{laplacian}.

%\

Finally, one has to check whether the formal expression for the heat-trace quantum observable \eqref{quantumP} in terms of the specific Laplacian \eqref{lqg-laplacian} is well defined in this context, \ie if the heat trace is a self-adjoint operator on $\hk3$ \eqref{hkin3}.
%More precisely, the question is the following: Is the heat trace operator  $\widehat{P(\tau)}=\widehat{\Tr\,\rme^{\tau\Delta}}$ self-adjoint on the kinematical Hilbert space $\mathcal{H}_{\rm kin}$?
Thus, it has to be checked whether for every $(\sms),(\bg',j'_\eb) \in \hk3$ it holds that
\begin{equation}\label{usef}
\sml \widehat{P(\tau)} \bg',j'_\eb \qket \stackrel{?}{=} \qbra\widehat{P(\tau)} \bg,j_\eb | \bg',j'_\eb\qket\,.
\end{equation}
%Self-adjointness of $\widehat{P(\tau)}$ is a necessary precondition for any concept of a spectral dimension of quantum geometry.
With the Laplacian \eqref{lqg-laplacian}, the left-hand side is
\ba
\sml \widehat{P(\tau)} \bg',j'_\eb \qket 
&=& \sml \Tr_{\bg'}\,\rme^{\tau\widehat{\Delta}_{\bg'}} |\bg',j'_\eb \qket
 = \sml \Tr_{\bg'}\,\rme^{\tau\Delta_{\bg'}(j'_\eb)} | \bg',j'_\eb \qket\\
&=& \Tr_{\bg}\,\rme^{\tau\Delta_{\bg}(j_\eb)}\delta_{\bg,\bg'}\delta_{j_\eb,j'_\eb}\,,
\ea
where $\delta_{\bg,\bg'}$ refers to the identity of graphs $\bg\in\bgs$ up to automorphisms. 
The reason is that states defined on distinct graphs are orthogonal according to  the usual inner product of LQG \cite{\lqgT,Kittel:2014vo}. 
The expression is equal to the right-hand side of (\ref{usef}) if, and only if, the spectrum of $P(\tau)$ is real since accordingly
\begin{equation}
\qbra\widehat{P(\tau)} \sms | \bg',j'_\eb \qket
%= \qbra\Tr\,\rme^{\tau\widehat{\Delta}}s_{\bg,j}|s_{\bg',j'}\qket
= [\Tr_\bg\,\rme^{\tau\Delta_\bg(j_\eb)}]^{*}\delta_{\bg,\bg'}\delta_{j_\eb,j'_\eb}\,.
\end{equation}
%This argument works for any classical observables in terms of which the Laplacian can be expressed and which have quantum observables diagonal in some orthonormal basis of $\mathcal{H}_{kin}$. E.g. for holonomy or flux bases if $\Delta$ can be expressed of holonomies of fluxes. 

In general the coefficients $(\Delta_\bg)_{ab}$ are real on geometric states where triangle inequalities (\ref{trianineq}), or equivalently closure constraints (which are geometricity conditions) are satisfied. 
Then, $\Tr_\bg\,\rme^{\tau\Delta_\bg(j_\eb)}$ is real as well.
Hence $\widehat{P(\tau)}$ is a good quantum observable on the kinematical states of 2+1 gravity with the operator ordering in the length operators $\ql_e$ chosen \cite{Freidel:2003kx,Guedes:2013cc} such that $\csu>0$, which is chosen here.
%or, alternatively, for $c=0$, if a restriction is imposed on those states in $\mathcal{H}_{\rm kin}$ corresponding to non-degenerate configurations. \cjt{we} do not expect this to be a significant restriction in the calculation of the quantum spectral dimension. In the following \cjt{we} take $c = 1/4$, although calculations of the spectral dimension with different values indicate that $\ds$ is not so much sensitive to such choice. In fact, \cjt{we} have tested the $c=0$ case with very similar results.

%Furthermore, because of the form of the heat trace as a trace over an operator exponential $\widehat{P(\tau)}=\Tr e^{\tau\widehat{\Delta}}$ which, explicitly, amounts to a sum over exponentials, the heat traceis positive when real. Thus, the heat trace expectation value on these states can indeed be given a probability interpretation as usual in the random walk picture.
%%% - One has further to check if the integral over $\tau$ converges such that the heat trace is indeed normalizable
%\ 

%==============================================================

\subsection{Spectral dimension on coherent spin-network states}\label{coherent}

As a first example, I present the results for spectral dimension calculations of $(2+1)$-dimensional LQG coherent states.
The reason for considering coherent states here is twofold. 
First, as they are semi-classical states peaking at classical geometries, it is interesting to check if their properties are reflected in the spectral dimension. 
Second, if the quantum spectral dimension was comparable with the spectral dimension of the classical geometries peaked at, the difference should be understood as due to quantum corrections, which can be studied in a controlled manner in terms of the parameters of the coherent states.
The results show clear evidence for the first point.
Quantum corrections turn out to be small and can be described qualitatively.

In the LQG literature, group coherent states in the holonomy basis are the usual starting point for the construction of semiclassical, coherent states in $\hkin$.
Coherent states peaked at phase space points $(g,x=0)\in T^*G\cong G\times\mathfrak{g}$ can be obtained from the heat kernel $K_G^{\sigma}$ on the group $G=SU(2)$ with Peter-Weyl expansion:
\begin{equation}
\qs_{(g,0)}^{\sigma}(h)=K_{SU(2)}^{\sigma}(hg^{-1})=\sum_{j} \dj{j}\,\rme^{-\frac{C_{j}}{2 \sigma^2}}\chi^{j}(hg^{-1})\;.
\end{equation}
Here $C_j$ is the Casimir operator, $\chi^j$ refers to the character of $G$ representations labeled by $j$ and $\sigma \in \R^+$ is the spread.
There are two known constructions for a generalization including a non-trivial $\mathfrak{g}$-dependence: 
either by analytic continuation to $\rme^{\rmi x}g\in G^{\C}$ such that \cite{Hall:1994gg,Sahlmann:2001bw,Bahr:2009bc}
\begin{equation}
\qs_{(g,x)}^{\sigma}(h)=K^{\sigma}(hg^{-1}\rme^{-\rmi x})\;,
\end{equation}
or using the flux representation \cite{Oriti:2012kx} where the heat kernel appears to be a ($\kappa$-non-commutative) Gaussian 
\begin{equation}
\tilde{\qs}_{(g,x)}^{\sigma}(y)=\mathcal{F}[\qs_{g}^{\sigma}](x-y)\propto \rme_{\star}^{-\frac{1}{2\kappa^{2}}\frac{(x-y)^{2}}{2\sigma^2}}\star \rme^{\frac{\rmi}{\kappa}|P(g)|(x-y)}.
\end{equation}
Here, $\kappa=\lpl^{-1}$ and plane waves $e_g(x):=\rme^{\frac{\rmi}{\kappa}|P(g)| x}$ use group coordinates $\vec{P}(g) = \sin[\theta(g)] \vec{k}$ in the usual coordinates in which the group element is parametrized as $g = \rme^{\rmi\theta \vec{k} \cdot \vec{\sigma}}$, with $\vec{\sigma}$ the Pauli matrices. 
The notation $\rme_\star$ indicates the non-commutative exponential defined as a power series expansion of $\star$-monomials \cite{Guedes:2013cc}. 
Transforming back to group space, this results in an additional plane wave factor,
\begin{equation}
\qs_{(g,x)}^{\sigma}(h)=\qs_{(g,0)}^{\sigma}(h)\,e_{h}(-x)\,.
\end{equation}

Since the Laplacian \eqref{lqg-laplacian} is diagonal in the spin representation, the spin expansion of these coherent states is needed here.
In both cases, there is a limit (for large enough spins) in which these can be described \cite{Hall:1994gg,Sahlmann:2001bw,Oriti:2012kx,Bianchi:2010fp} as Gaussian-type states.
Thus, a coherent state on a graph $\bg = (\V_\bg,\E_\bg) \in\bgs$ is peaked at spin representation labels of intrinsic geometry $\{J_\eb\}_{\E_\bg}$ and angles of extrinsic geometry $\{K_\eb\}_{\E_\bg}$:
\begin{equation}
|\qs_{\bg}^{J_\eb,K_\eb}\qket=\frac{1}{N_\qs}\sum_{\{j_\eb\}}\csc \smr\,,
\end{equation}
with spin-network basis coefficients 
\begin{equation}
\csc \propto \prod_{\eb\in\E_\bg}\rme^{-\frac{(J_\eb-j_\eb)^{2}}{2\sigma^{2}}+\rmi K_\eb j_\eb}.
\end{equation}
In fact, following \cite{Bianchi:2010fp,Oriti:2012kx}, in the large-$x$ approximation, one finds that the $J_\eb$ can be identified (up to a factor dependent on $\sigma$) with the modulus $x_\eb$ of the fluxes,
%% $d_{J_l} = 2\sigma^2 x$ ??
and the $K_\eb$ are angles in the representation of the group elements $g_\eb$ (in the plane orthogonal to the fluxes $x_\eb$).
%% using above notation $K_l = \theta n_3$
For the details I refer to \cite{Bianchi:2010fp} and \cite{Oriti:2012kx} respectively, since in the following only the intrinsic curvature as captured by the $J_\eb$ will be relevant. Their dependence on $\sigma$ does not play a role for fixed $\sigma$ or small variations of it, with respect to (assumed large) $x_\eb$.

The heat-trace expectation value can then be evaluated as
\ba
\qbra \widehat{P(\tau)}\qket _{\qs_{\bg}^{J_\eb,K_\eb}}
%% = \left\qbra \widehat{\tilde{P}(\tau)}\right\qket _{\psi_{\bg}^{J_\eb,K_\eb}} 
& = &\frac{1}{N_\qs^2} \underset{\{j_{\eb}\}}{\sum}\left|\csc\right|^{2}\qbra \bg,j_{\eb}|\Tr_\bg\,\rme^{\tau\widehat{\Delta}_\bg}|\bg,j_{\eb}\qket \\
& \propto &\underset{\{j_{\eb}\}}{\sum}\left| \prod_{\eb\in\E_\bg}\rme^{-\frac{(J_\eb-j_\eb)^{2}}{2\sigma^{2}}+\rmi K_{\eb}j_{\eb}}\right|^{2}\Tr_\bg\,\rme^{\tau\qbra \widehat{\Delta}_\bg \qket_\sms }\\
& = &\underset{\{j_{\eb}\}}{\sum}\left[\prod_{\eb\in\E_\bg}\rme^{-\frac{(J_\eb-j_\eb)^{2}}{\sigma^{2}}}\right]\Tr_\bg\,\rme^{\tau\Delta_{\bg}(j_{\eb})},\label{eq:htSuperposed}
\ea
from which the spectral dimension is derived according to \eqref{qds}.
As the spatial Laplacian does not depend on the extrinsic curvature, it is natural that also the phase of the coherent state drops out of the expectation value.

%The heat trace and spectral dimension on coherent spin-network states,thus, are effectively Gaussian averages peaked on the intrinsic geometry on $\bg$ given by the length spectra in terms of representations $\{J_{\eb}\}$.
%For coherent spin-network states one would therefore expect the spectral dimension to be approximately equal to the one of the classical triangulation with geometry according to (the spectra) of $\{J_{\eb}\}$.
%\
%Thus, the relevant information in $d_{S}$ of a coherent state should consists in its difference to $d_{S}$ of the underlying simplicial complex. This difference can be seen as the quantum corrections of the semiclassical state.

The interpretation of the semi-classical limit for the spectral dimension in terms of the parameters, \ie the spins ${J_\eb}$, the classical extrinsic geometries ${K_\eb}$, the spread $\sigma$ as well as the graph $\bg$, 
is rather subtle.
%Besides the check that semiclassical states approximate the classical geometry they are peaking on, the actual aim is to analyse quantum corrections. For coherent states on a given complex these are parametrized by two parameters, i.e. the spins $\{J_{\eb}\}$ peaked on as well as the spread $\sigma$. Since the semiclassical limit is for large spins, quantum effects are expected to be stronger for small spins.
As far as the spread is concerned, with respect to the whole coherent state $\{J_{\eb},K_{\eb}\}$ there is a value where Heisenberg inequalities are minimized. %would be the `most semi-classical' case. 
On the other hand, since the spectral dimension on the spatial state is not dependent on the extrinsic curvature, obviously the deviation from the classical spectral dimension vanishes for $\sigma=0$, \ie for states sharply peaked at the intrinsic curvature but totally random in the extrinsic one. 
Nevertheless, these states are highly quantum. 
More interesting are quantum effects of those states randomizing the intrinsic curvature, that is states with large spread $\sigma$.

Furthermore, from the physical interpretation of the geometric spectra with corresponding eigenbasis in terms of the spin representations, the limit $J_\eb\ra\infty$ is often seen as another semi-classical limit. 

Finally, one should not forget the dependence of the states on the underlying graphs. At least with respect to the spectral dimension, classically I have already shown in \sec{ds-simplicial} the important dependence on the combinatorial structure. 
At the level of kinematical states, this dependence is still poorly understood in the literature.

In the following, I consider as spin-network graphs $\bg$ the dual 1-skeletons of the previous finite torus triangulations $\T_{\sd,\size}$ (\fig{T2triang}). 
They are parametrized by the number of nodes $\np0 = |\Tp0_{\sd,\size}| = \size^\sd$ and their spectral dimension converges to the topological spatial dimension $\sd$ if they are large or fine enough, \ie in the limit $\size \ra \infty$ (\sub{ds-simplicial}).
One can therefore interpret the quantum corrections as actual deviations from the topological dimension in such a geometric regime.
Accordingly, I will consider states peaked at all equal $J_\eb=J$ for $\eb\in \E_\bg$. 

%\

For the numerical computations a sampling technique is needed as a direct implementation of equation (\ref{qds}) is unfeasible here. 
The reason is that the number $N_{j_\eb} = \sum_{j_\eb}$ of terms in the quantum sum over representations $\{j_{\eb}\}$, even with cutoffs $j_{\min}$ and $j_{\max}$ 
\begin{equation}
N_{j_\eb} = (j_{\max}-j_{\min})^{\np1}
\end{equation}
grows exponentially fast with the number of edges $\np1 = |\Tp1_{\sd,\size}|$.
I have already shown that only for large classical triangulations (\eg for $T^{2}$ of the order $10^{3}$; see \fig{dsT2triang}) a geometric regime is obtained. 
Furthermore, although the state amplitude for many spin configurations vanishes due to the Clebsch-Gordan conditions implicit in the intertwiners, the resulting effective space of spin configurations is highly non-trivial and, in general, not well understood \cite{Smerlak:2011vt}.

Alternatively, the quantum sum can be approximated by summing over some number of configuration samples $\{j_{\eb}\}$ chosen randomly according to the coefficients $\left|\csc\right|^{2}$ (where the norm $N_\qs$ must be included to give a proper probability density).
If the space of representations is discrete, as for the spins of $SU(2)$, this measure needs to be discrete. Here one can choose the binomial distribution $B$. For large enough $J_{\eb} = J$ and $\sigma$, this is in turn well approximated by the the normal distribution $\mathcal{N}$,
\begin{equation}
B\left(\left\lfloor\frac{2J^{2}}{J-\sigma^{2}}\right\rfloor,\frac{J-\sigma^{2}}{J}\right)\simeq\mathcal{N}\left(J,\frac{\sigma}{\sqrt{2}}\right)\,,\label{eq:BinomDis}
\end{equation}
where the floor function $\lfloor\cdot\rfloor$ is needed since the first argument must be an integer. 
The variable $X(J,\sigma)=B$ is a random field dependent on $J$ and $\sigma$. 
By virtue of the approximation (\ref{eq:BinomDis}), the probability density function associated with it is the Gaussian profile $(\pi\sigma^2)^{-1/2}\,\rme^{-(J-x)^2/\sigma^2}$.

%Using this method, \cjt{we} have evaluated coherent spin networks on the $T^{2}$ triangulations discussed in section \ref{sub:Triangulations}.

To compare $\ds$ for various peaks $J$, it is meaningful to choose a scale such that $a_J=l^{2}(J)$ (the scale of the Laplacian set by the spectra \eqref{spede}) is kept fixed.
This is obtained including a rescaling factor $l^{2}(J)$ such that
\begin{equation}
\qbra \widehat{P(\tau)}\qket _{\qs_\bg^{J_\eb,K_\eb}} \propto \underset{\{j_{\eb}\}}{\sum}\left[\prod_{\eb\in\E_\bg} \rme^{-\frac{(J_\eb-j_\eb)^{2}}{\sigma^{2}}}\right]\Tr\,\rme^{\tau l^{2}(J)\Delta_{\bg}(j_{\eb})}.
\end{equation}
Without this rescaling, as in the classical cases, one observes a shift $\ln\tau\to\ln\tau-2\ln l(J)$ of the $\ds$ plot due to the $\Delta\to\Delta/l^{2}(J)$ scaling of the Laplacian. 
Here I include this rescaling factor in the following calculations, mainly to allow for a more direct comparison of the spectral dimension of states.

\

Starting with the coherent states' dependence on the spin $J$ peaked at, the results are the following.
As expected, the spectral dimension function of these quantum states does not differ much from the classical version: 
for $J_{\eb}=J=\mathcal{O}(10)$ and $\sigma=\mathcal{O}(1)$, the deviation is at most of order $10^{-2}$ (\fig{coherJs}). This is an important consistency check. 
Note, though, that all the known coherent states are peaked at discrete geometries. Therefore, strictly speaking, it is not the spatial topological dimension $\sd$ but the particular spectral-dimension profile of these discrete geometries to be approximated well by the coherent states. 
What can be considered as quantum corrections in the spectral dimension is thus the difference between the quantum spectral dimension and the spectral dimension of discrete geometries.
\begin{figure}
\centering
\includegraphics[width=7cm]{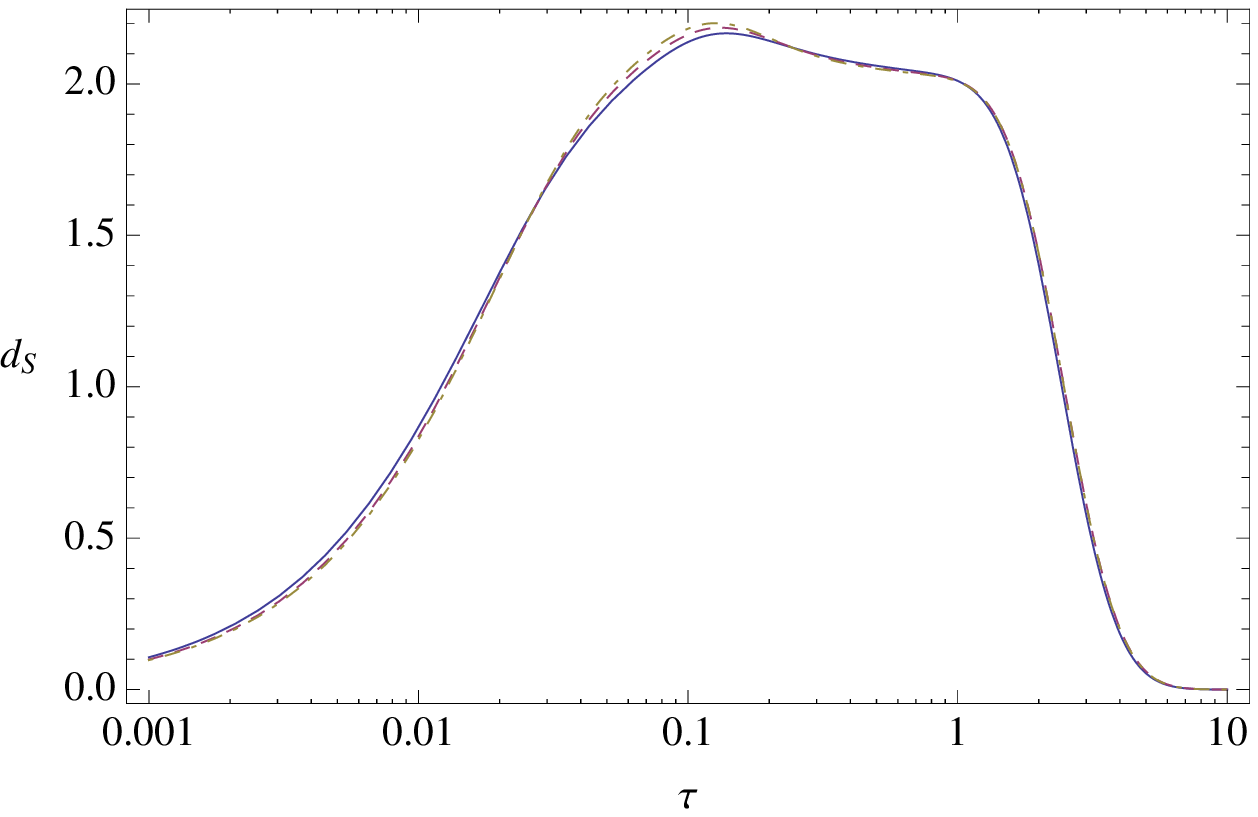}
\includegraphics[width=7.4cm]{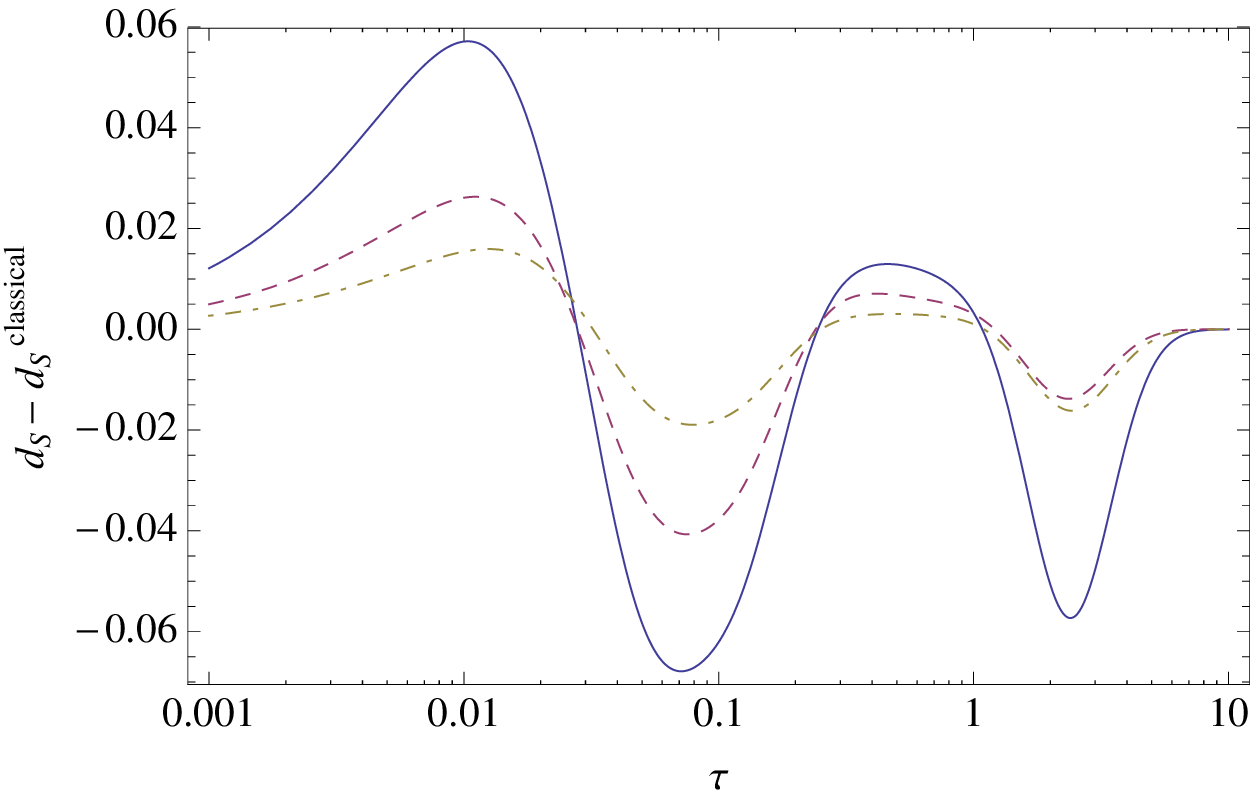}
\caption{Left: $\ds$ for coherent states peaked at $l(J)=J+1/2=16,32,64$ (solid, dashed, dot-dashed curve) on the regular torus triangulation $\T_{2,4}$ (\ie with $\np2 =18\times4^{2}=288$ triangles) with spread $\sigma=\sqrt{J-1/2}$. Right: deviation of $\ds$ from the classical case.
\label{fig:coherJs}}
\end{figure}

Second, I test the dependence of states with spread $\sigma$. 
Since the binomial distribution approximation (\ref{eq:BinomDis}) is only defined for $\sigma^{2}<2J$, it is difficult to probe the regime of larger $\sigma$ within this method. 
Probing the $\sigma$ dependence for $J_{\eb}=10$, I observe increasing quantum corrections, but still of order $\mathcal{O}(10^{-2})$ (\fig{coherSigmas}).

\begin{figure}
\centering
\includegraphics[width=7cm]{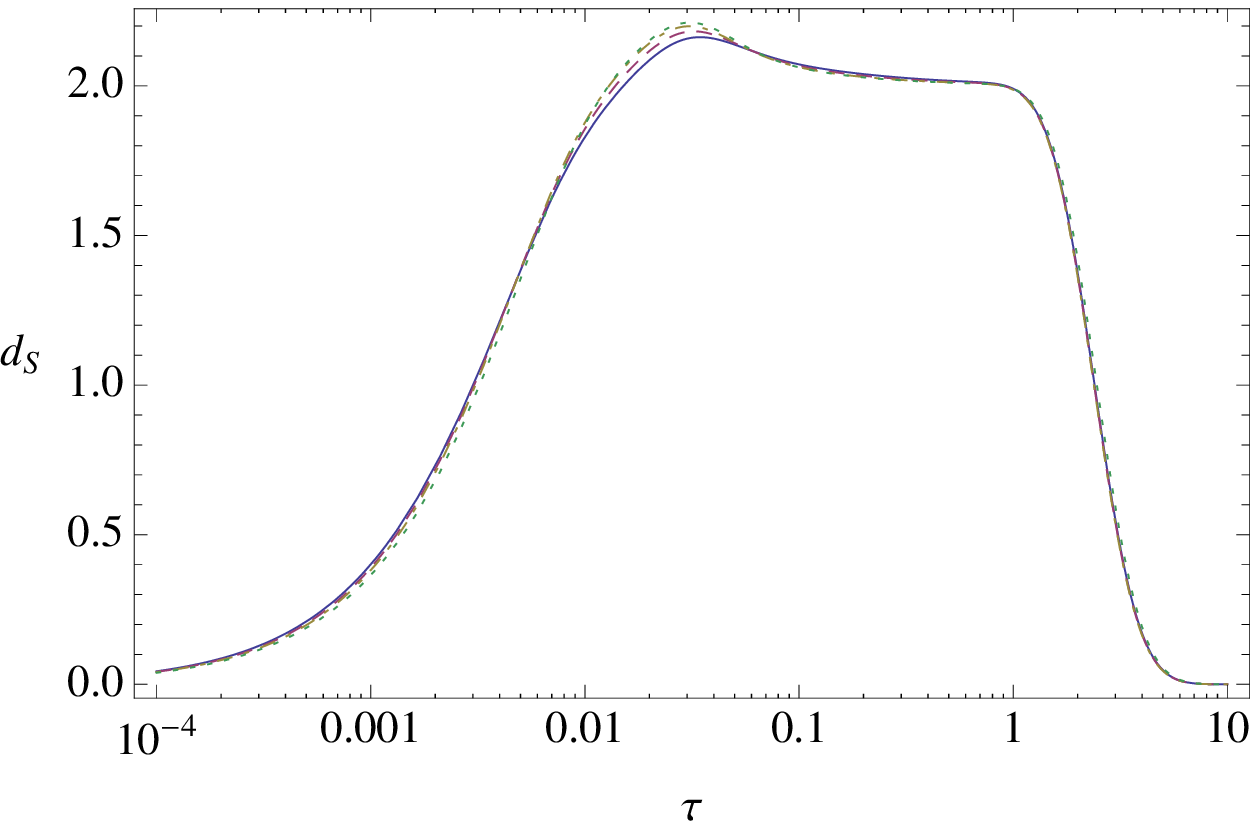}
\includegraphics[width=7.4cm]{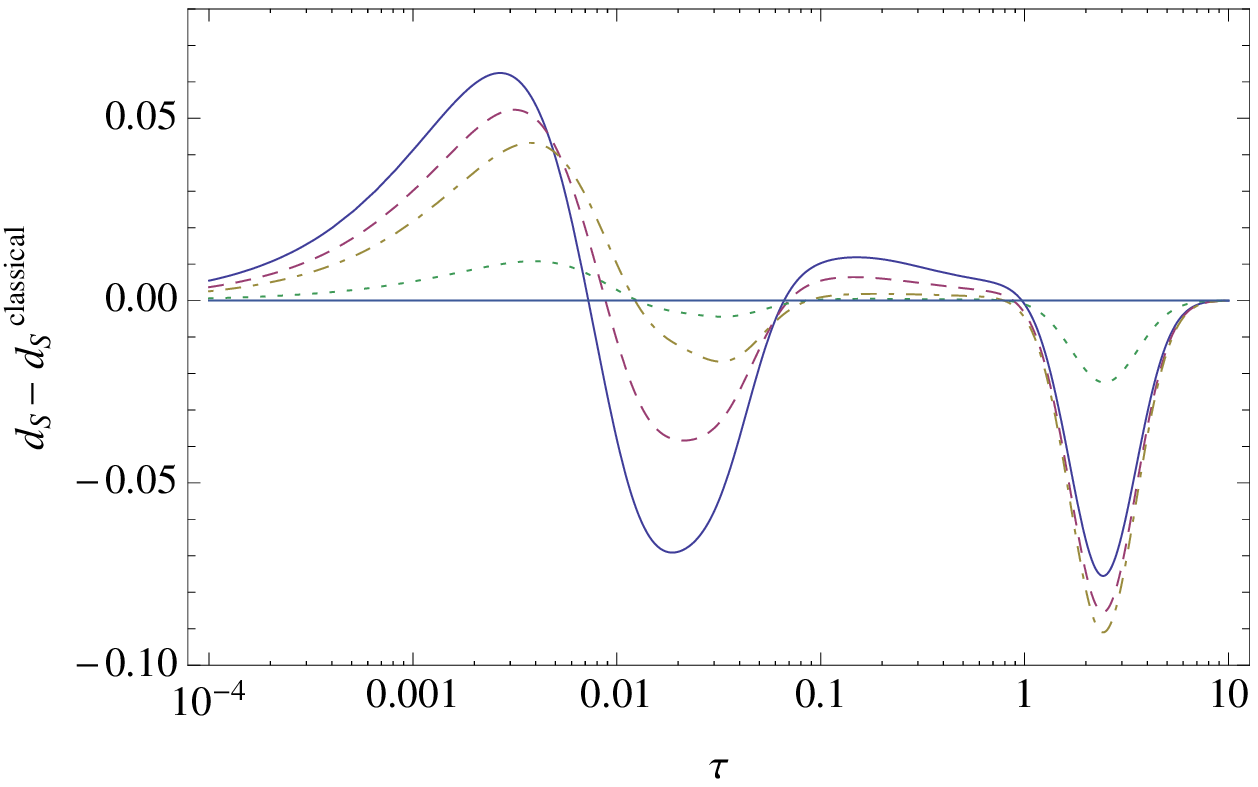}
\caption{Left: $\ds$ for coherent states peaked at $l(J)=J+1/2=16$ with spread $\sigma=1,2,3,\sqrt{15}$ (dotted, dot-dashed, dashed and solid curve) on the regular torus triangulation $\T_{2,8}$ (\ie with $\np2 = 18\times 8^2 = 1152$ triangles).
Right: deviation of $\ds$ from the classical case.\label{fig:coherSigmas}}
\end{figure}

%\

The main challenge when extending the calculations to larger spreads is to deal with the highly non-trivial space of group representations due to the Clebsch-Gordan conditions, as noted before.
However, there is a very straightforward way to define pure states (or their superpositions) by bounding the range of spins to an interval $I=[j_{\rm min},j_{\rm max}]$ such that the Clebsch-Gordan conditions are trivially fulfilled for any combinatorial (\ie simplicial complex) structure of the states. 
In terms of the difference $\Delta J=j_{\rm max}-j_{\rm min}$, one could for instance construct states uniformly randomized over the interval
\begin{equation}
I_{J}=[\tfrac{1}{3}(2J+1),\tfrac{1}{3}(4J-1)]\cap\tfrac{1}{2}\N\,,
\end{equation}
on which any three elements satisfy the Clebsch-Gordan conditions.
As an example, I consider uniformly distributed spins for the same $J$ as in the above cases of coherent states (\fig{dsiid}).
Again, the difference with respect to the classical triangulation randomized around is of order $\mathcal{O}(10^{-2})$.
\begin{figure}
\includegraphics[width=7cm]{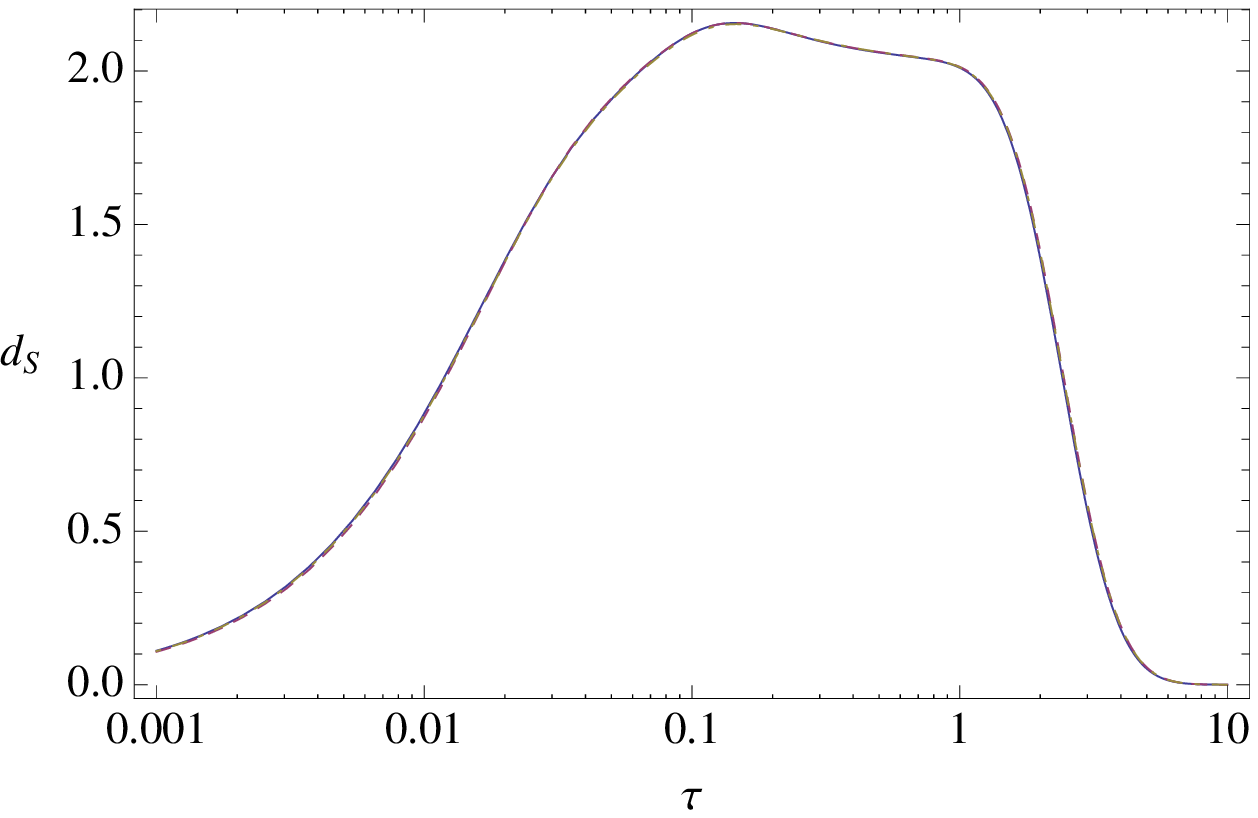}
\includegraphics[width=7.4cm]{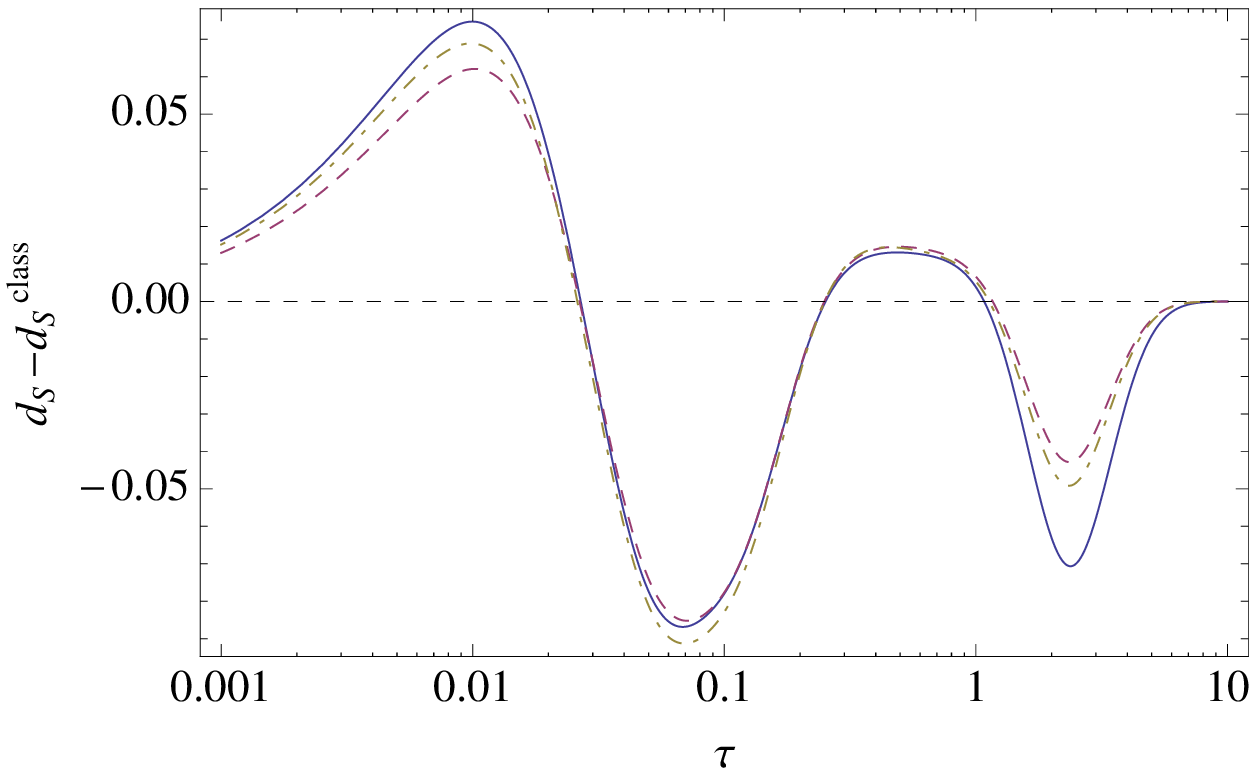}
\caption{Left: $\ds$ for a sum over seven random states with spins in $I_J$ for $J+1/2=16,32,64$ (solid, dashed and dot-dashed curve)  on the triangulation $\T_{2,4}$ ($\np2=288$). % (with $N_{2}=288)$. 
Right: deviation of $\ds$ from the classical case.}
\label{fig:dsiid}
\end{figure}

It is worth noticing that, in all the above examples of quantum states, the difference with the corresponding classical state is rather marginal. 
The key result of the calculations for coherent states on a given regular triangulation is that the quantum corrections are very small, since by varying spins $J$ and spread $\sigma$ one gets deviations from the classical spectral dimension only of order $\sim 10^{-2}\ds$.

\subsection{Summing semi-classical states \label{sec:summing-coherent-states}}

A basic feature of quantum mechanics is the superposition principle. 
The coherent and randomized states already are a typical example in this sense, as they are superpositions in the spin-network basis;
but one can probe the quantum features of spatial geometry also by constructing other kinds of superposition states.
One obvious strategy is to superpose coherent states themselves. There are various ways one could do so.
Possible choices would be sums in the coherent-state labels $J$, $K$ and $\sigma$,
\begin{equation}\label{sup1}
|\bg,\{c_{J_{\eb},K_{\eb},\sigma}\}\qket=\sum_{J_{\eb},K_{\eb},\sigma}c_{J_{\eb},K_{\eb},\sigma}|\psi_{\bg}^{J_\eb,K_\eb}\qket\,,
\end{equation}
or over different complexes and their dual graphs $\bg$:
\begin{equation}\label{sup2}
|\{\qsc_{\bg}\},J_{\eb},K_{\eb},\sigma\qket=\sum_{\bg}\qsc_{\bg}|\psi_{\bg}^{J_\eb,K_\eb}\qket\,.
\end{equation}
In the following, the focus is on superposed states on the same complex \eqref{sup1}. Superpositions over complexes \eqref{sup2} are the topic of the next \sec{superpositions}, and in a more general setting of \sec{dimensional-flow}.

In the case of superpositions on a fixed graph $\bg$ (\ref{sup1}), the expectation value of the heat trace does not simplify to a single sum over the expectation values of squared coefficients. 
Assuming trivial intertwiners,
\begin{align}\label{eq:htCrossterms}
\qbra \widehat{P(\tau)} \qket _{\bg} %,\{c_{J_{\eb},K_{\eb},\sigma}\}}  
&= \sum_{J_\eb,K_\eb,\sigma} \sum_{J'_{\eb},K_{\eb}',\sigma'} \sum_{j_\eb,j'_\eb}  
c_{J_{\eb},K_{\eb},\sigma}^* ({\csc})^* c_{J'_{\eb},K_{\eb}',\sigma'}\cscp \qbra \bg,j_{\eb}|\Tr_\bg\rme^{\tau\widehat{\Delta}_\bg}|\bg,j_{\eb}'\qket 
\nonumber \\
%&= \sum_{J_{\eb},K_{\eb},\sigma}\sum_{J'_{\eb},K_{\eb}',\sigma'} c_{J_{\eb},K_{\eb},\sigma}^*c_{J'_{\eb},K_{\eb}',\sigma'}
%\sum_{j_{\eb}}\prod_{\eb\in\E_\bg} \rme^{-\frac{(J_\eb-j_\eb)^{2}}{2\sigma^{2}} - \rmi K_{\eb}j_{\eb}} \rme^{-\frac{(J'_{\eb}-j_{\eb})^{2}}{2\sigma'^{2}}+\rmi K'_{\eb}j_{\eb}}\Tr_\bg\,\rme^{\tau\Delta_{\bg}(j_{\eb})}\nonumber\\
 &= \sum_{J_{\eb},K_{\eb},\sigma}\sum_{J'_{\eb},K_{\eb}',\sigma'}c_{J_{\eb},K_{\eb},\sigma}^*c_{J'_{\eb},K_{\eb}',\sigma'}
 \sum_{j_{\eb}}\prod_{\eb\in\E_\bg} \rme^{-\frac{(J_\eb-j_\eb)^2}{2\sigma^2} - \frac{(J'_\eb-j_\eb)^2}{2\sigma'^2} - \rmi(K_{\eb}-K'_{\eb})j_{\eb}}\Tr_\bg\rme^{\tau\Delta_{\bg}(j_{\eb})}.
\end{align}
Assuming a fixed extrinsic curvature $K_{\eb}$ there is no phase causing interferences.
Furthermore, for sharply peaked coherent states the cross terms with $J_{\eb}\ne J'_{\eb}$ are suppressed. 
In this case, one would expect that %the `naive' sum disregarding the cross terms 
\begin{equation}
\qbra \widehat{P(\tau)} \qket _{\bg}  \approx \sum_{J_{\eb},\sigma}\underset{\{j_{\eb}\}}{\sum}\left|c_{J_{\eb},\sigma}\right|^{2}\left[\prod_{l\in\bg}\rme^{-\frac{(J_\eb-j_\eb)^{2}}{\sigma^{2}}}\right]\Tr\,\rme^{\tau\Delta_{\bg}(j_{\eb})} \;
\end{equation}
is still a good approximation.

\begin{figure}
\centering{}
\includegraphics[width=7.5cm]{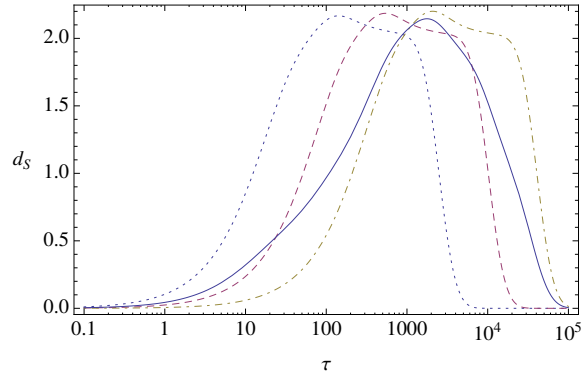}
\caption{$\ds$ for the superposition of coherent states $l(J)^2=J+1/2=16,32,64$ on a regular torus triangulation $\T_{2,4}$ ($\np2=288$) %with unrescaled Laplacian 
compared to the states summed over (dashed curve; \cf \fig{coherJs}).}
\label{fig:SupJs}
\end{figure}

When summing over coherent states peaked at different spins, it makes a crucial difference whether one considers them as peaked at classical geometries of different size or at the same classical geometry obtained by fixing the scale $a=a_J =l(J)$. 
In both cases, the spectral dimension turns out to be an average of the spectral dimensions of the parts summed over. 
However, in the first case (\fig{SupJs}) these individual profiles are shifted with respect to one another such that the superposition has dimension of order one only in the regime where most of them overlap. 
In the second case (\fig{SupJref}), all profiles have features at the same scales, so that the superposition yields a $\ds$ plot close to the classical geometry peaked at; the quantum correction is even less pronounced, \ie smaller than for the individual coherent states (\fig{coherJs}).
%of the order $\mathcal{O}(10^{-2})$.

\begin{figure}
\includegraphics[width=7cm]{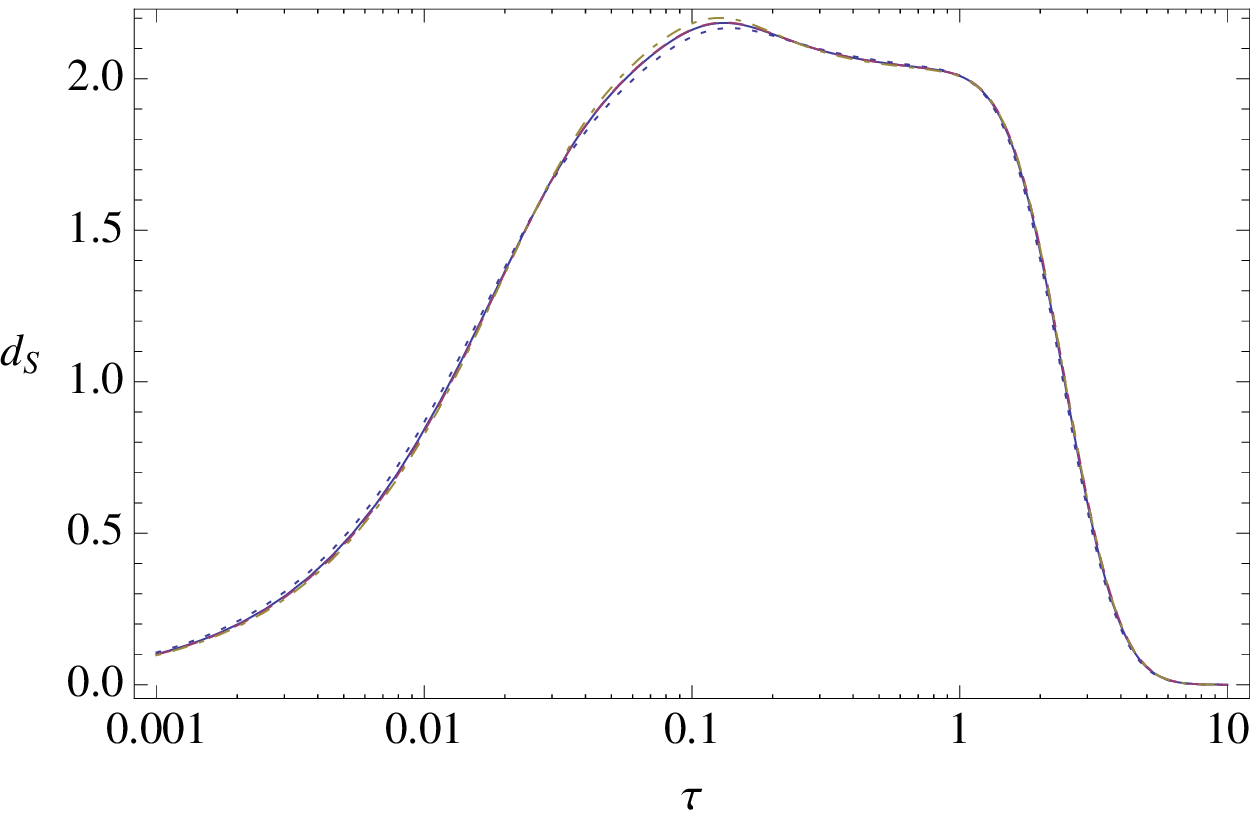}
\includegraphics[width=7.4cm]{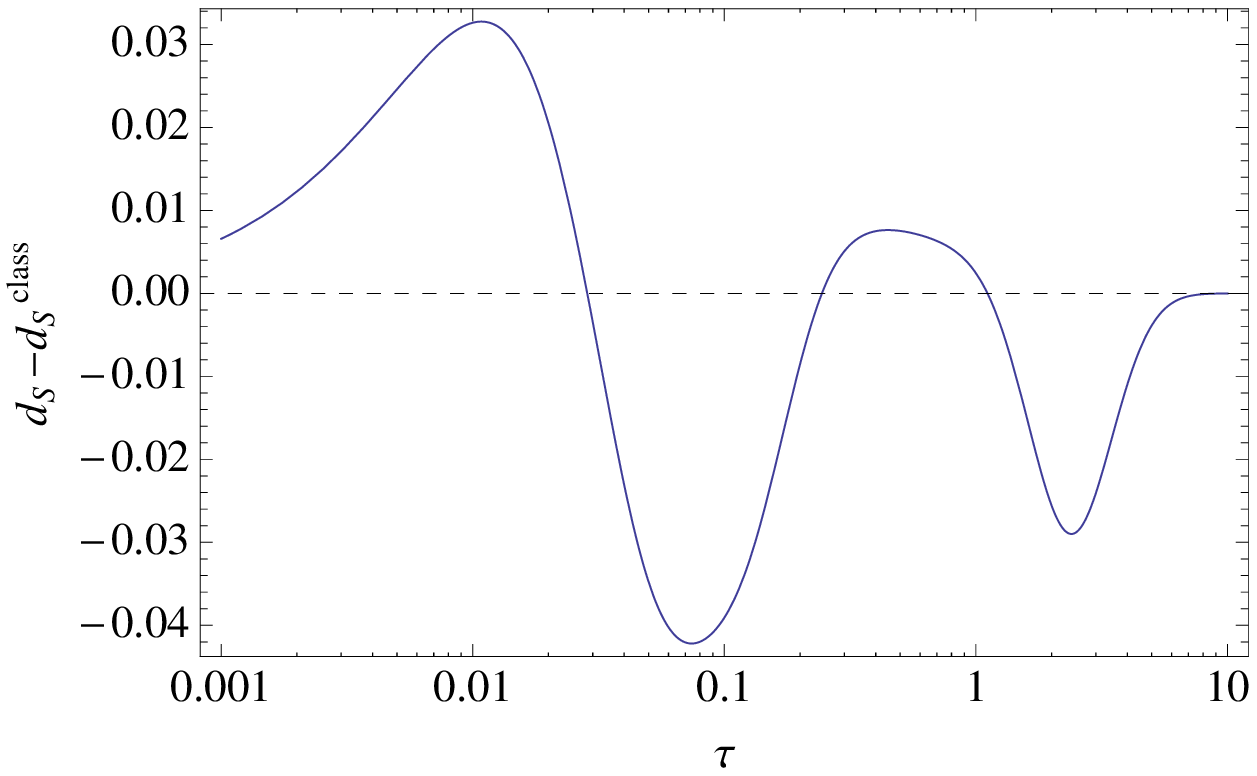}
\caption{Left: $\ds$ for the superposition of coherent states $l(J)^2=J+1/2=16,32,64$ on the  regular torus triangulation $\T_{2,4}$ ($\np2=288$) with rescaled Laplacian compared to the states summed over (dashed curve; see \fig{coherJs}). Right: deviation of $\ds$ from the classical case.}
\label{fig:SupJref}
\end{figure}

%==============================================================

\subsection{Superpositions of complexes}\label{sec:superpositions}

The focus of this section is the spectral dimension of superpositions of states on distinct combinatorial complexes. 
In general, the heat trace expectation value of such states \eqref{sup2} is of the form
\begin{equation}
\qbra \widehat{P(\tau)} \qket \propto \sum_{\bg} \sum_{j_{\eb}} \left|\qsc_{\bg}\right|^{2}\left|\csc\right|^2 \sml \Tr_\bg \rme^{\tau\widehat{\Delta}_\bg} \smr \;.
\label{eq:SupGraphs}
\end{equation}
Comparing with the heat-trace expectation value of the superposition of coherent states on the same complex \eqref{eq:htCrossterms}, the sum here does not contain cross terms since the Hilbert space of states \eqref{hkin3} is a direct sum over graphs.
% A natural choice would be, for example, $w_{\bg}=1/\mbox{sym}\bg$, a symmetry factor for the automorphism group, which may or may not be included. 
The numerical calculations in this section focus on states with trivial superposition coefficients $\qsc_\bg=1$, though I have numerically tested a few other cases as well.
More general classes of states can be tested with an analytically tractable model which I will present in \sec{dimensional-flow}.

%{\bf J: the interesting case of summing many (above 40) lattices seems not to be feasible; summing just a few is not as interesting. Only smaller deviations to the above case summing just equilateral triangulations would be expected, anyhow}  

Furthermore, for reasons of feasibility, here I approximate the coherent states $|\qs_{\bg}^{J_\eb,K_\eb} \qket$ to be summed over by the corresponding classical geometries.
This can be understood as the extreme case of minimal spread $\sigma$, that is, states sharply peaked at the intrinsic curvature but totally randomized in the extrinsic one.
But even more generally, the results in the last two sections have already shown that the effect of quantum fluctuations in coherent states is only of order $10^{-2}$. 
It is thus reasonable to expect that the effect of superposing truly quantum coherent states is reproduced by the superposition of discrete geometries associated with their peak values. 
Indeed, the results will show that effects of superposing graphs are of order higher than $10^{-2}$.
This justifies the approximation of the full sum, which in the case of superposition of graphs is considerably more challenging from a numerical point of view.

In effect, the setting here is very similar to CDT.
There, equilateral triangulations are the only dynamical degrees of freedom. 
The fact that the spectral dimension of the spacetime sum-over-histories is scale dependent in the CDT ensemble \cite{Ambjorn:2005fj,Benedetti:2009bi} is thus a consequence of (and eventually needs to be explained by) the way it is summed over a class of simplicial manifolds.
Although the setting here is restricted to kinematical spatial states and their superpositions, when summing with certain weights over simplicial manifolds one is in a setting quite comparable with CDT, and similar results could be expected. 
The main difference is that, while in CDT there is a precise description for the integration measure given by the exponential of the Regge action of equilateral triangulations, in the context of kinematical states of quantum geometry there is no unique prescription for how these states should be superposed on different complexes, in order to obtain some approximately smooth geometry. Therefore, the present investigation is somewhat explorative and driven mainly by the aim of unveiling generic features of superpositions of complexes.

\

Coming to the results of the spectral dimension calculations, as before it turns out to be crucial whether the superposed complexes are taken as purely combinatorial or as refinements of the same smooth geometry through rescaling the edge lengths (\fig{sumT2}). 
In both cases the result can be described qualitatively as an averaging of the spectral dimension plots of the single states summed over (depending on the weights $\qsc_{\bg}$)
The difference is that in the first case, this effect takes place at larger diffusion scales, in contrast with the small-diffusion-scale regime of the second.
\begin{figure}
\centering{}
\includegraphics[width=7.2cm]{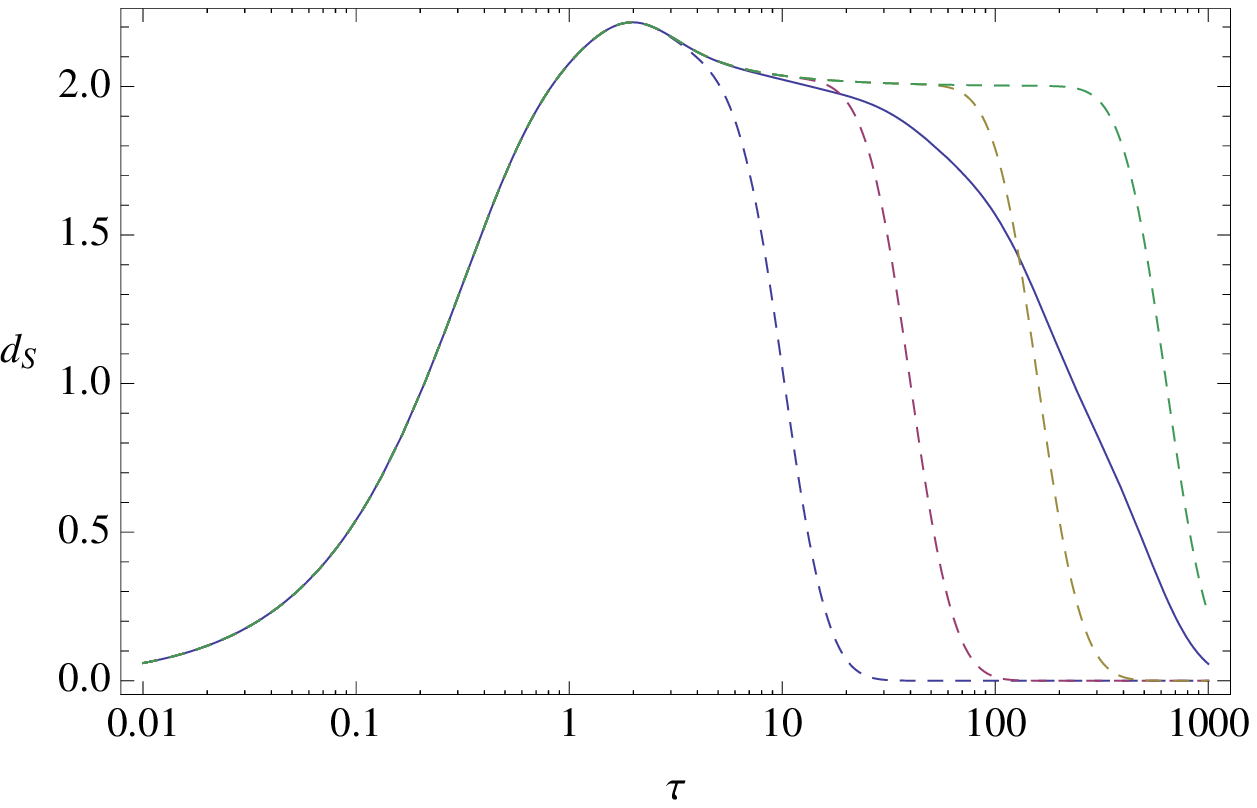}
\includegraphics[width=7cm]{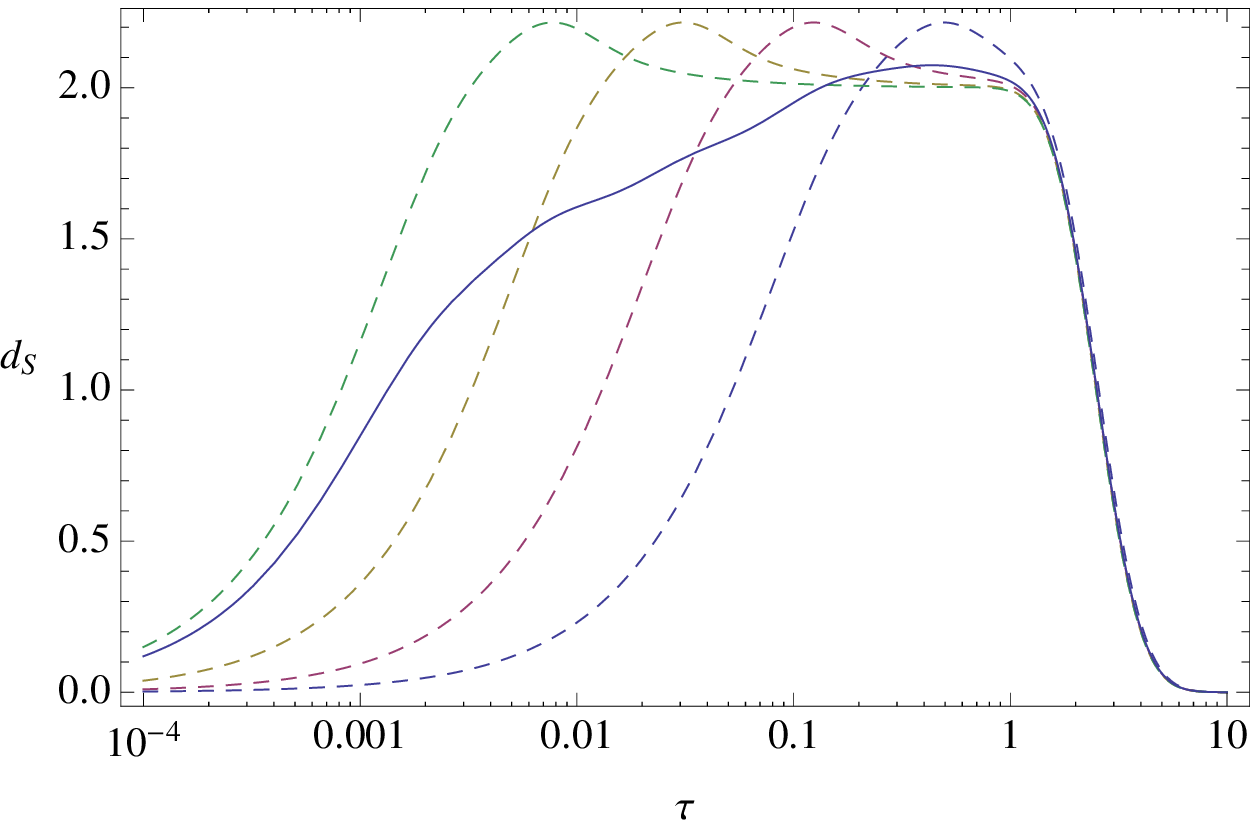}
\caption{Sum over the $T^{2}$ triangulations $\T_{2,\size}$ with $\size=3,6,12,24$ (included as dashed curves for comparison, \cf \fig{dsT2triang}) with trivial coefficients $\qsc_\bg=1$, unrescaled (left) and rescaled (right).
\label{fig:sumT2}}
\end{figure}

The second (rescaling) case shows some more interesting features, in particular when summing over a larger set of complexes.
Since all complexes summed over are triangulations of the same smooth geometry, it can be interpreted as a special case of a semi-classical state incorporating a (kinematical) continuum approximation.
Though in the numerical context here, it can only be implemented up to some finite order, though. 
A first interesting feature is that the discreteness artefact of a peak does not appear for these states.
Calculating not only the superposition of a few rescaled triangulations but of all regular triangulations $\T_{2,\size}$ of the type described in section \sec{ds-simplicial} up to some maximal size $\size_{\max}$ (\fig{sumT2pmax}),
the result is a more extended plateau. 
Thus, from these calculations one would expect the appearance of a plateau at the topological dimension for sums over more and larger triangulations, but without the discretization effect of a peak at the characteristic lattice scale.
Indeed, this numerical result will be confirmed by the far more general analytic model in the next \sec{dimensional-flow}.

\begin{figure}
\centering{}
\includegraphics[width=7.5cm]{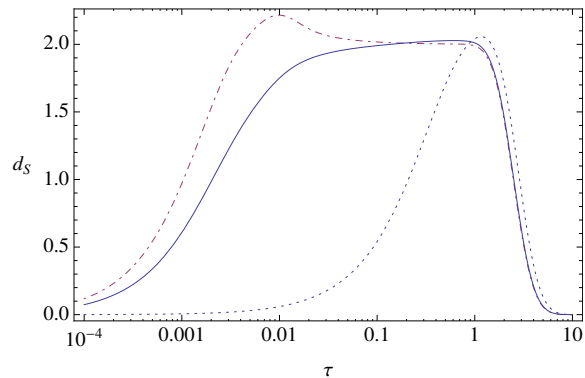}
\caption{Sum over all rescaled regular $T^{2}$ triangulations $\T_{2,\size}$ of size $\size=3,4,\dots,43$ (solid curve) and, for comparison, the individual $\size=3$ and $\size=43$ cases (dotted and dot-dashed curve).
\label{fig:sumT2pmax}}
\end{figure}

In the explorative spirit of this section, the same can finally be done for the randomly subdivided triangulations. The effect of summing is, again, an averaging of the $\ds$ profiles.
Qualitatively, these are quite different from the regular triangulations and one can hardly conjecture any more specific properties for the general case of larger superpositions.
%More precisely \cjt{we} recover the same behaviour of these classical triangulations already discussed before.
% A plateau could be conjectured somewhere around $d\approx 1.5$.
\begin{figure}
\centering{}
\includegraphics[width=7cm]{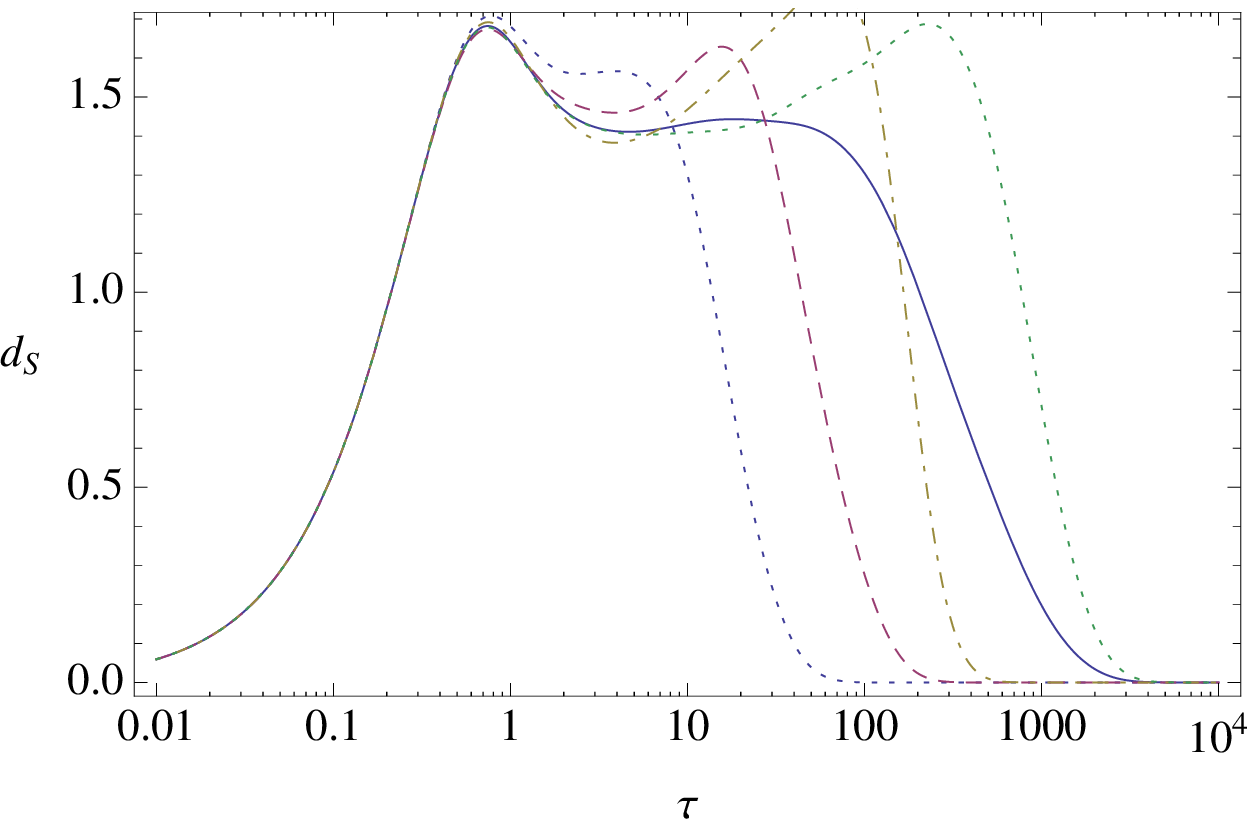}
\includegraphics[width=7cm]{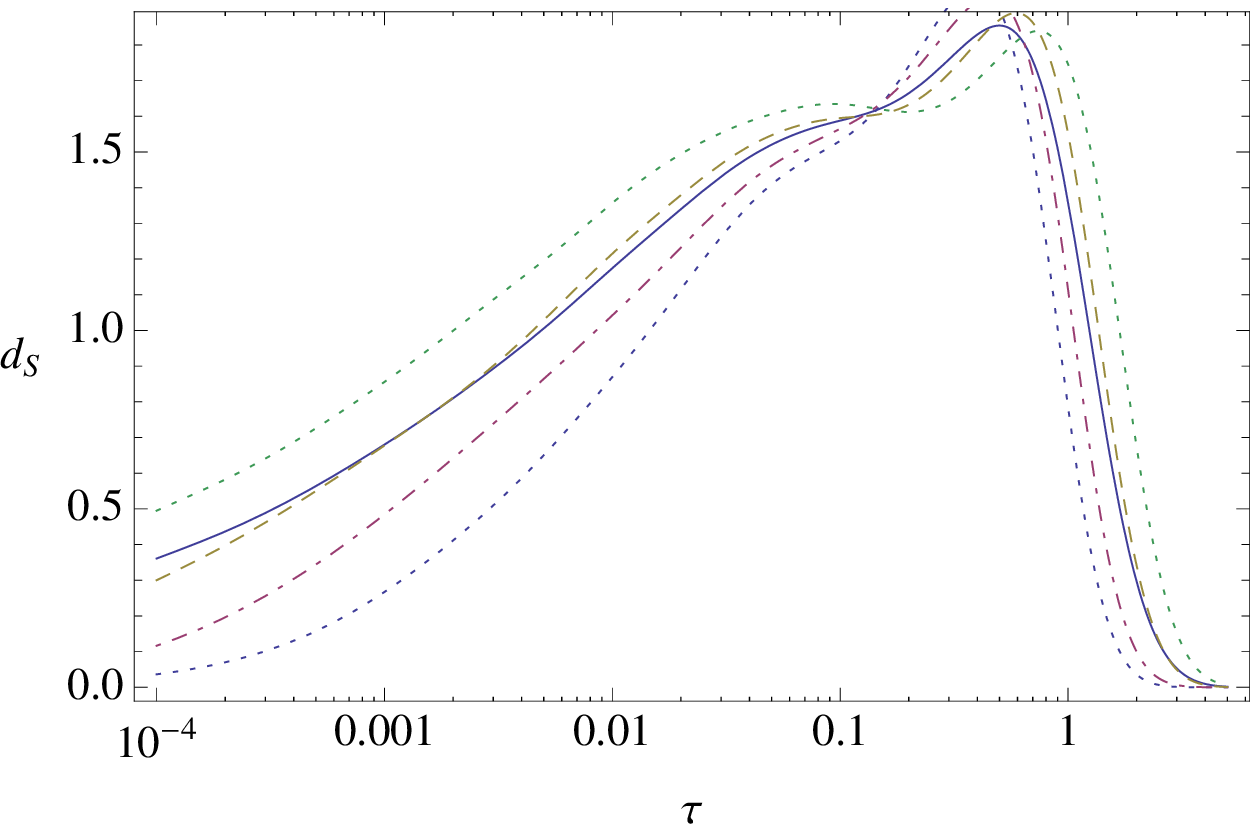}
\caption{Sum over randomly subdivided $T^{2}$ triangulations (dashed line, for comparison; see \fig{dsT2subran}), unrescaled (left) and rescaled (right).
\label{fig:dsT2randomsum}}
\end{figure}

%\subsection{Superposition of coherent states on different triangulations}

\

In this \sec{coherent-states}, I have investigated the spectral dimension of coherent states and superpositions of these on toroidal complexes. 
Quantum corrections turn out to be small in the small-spin regime (\fig{coherJs}), for large spreads (\fig{coherSigmas}) and even for total randomization in a given spin interval (\fig{dsiid}). 
On a fixed complex, superpositions of coherent states peaked at the same geometry smoothen out the quantum deviations even more (\fig{SupJref}). 
Only superpositions of states peaked at different geometries show a distinct behaviour. This happens when the spectral dimension of the individual states is noticeably different from one another (\fig{SupJs}).

On the other hand, the superposition of states on distinct combinatorial complexes results in a more interesting behaviour of the spectral dimension. 
For a superposition of torus triangulations approximating the same smooth geometry but of different combinatorial size, I find that the discreteness artefact of the the peak, appearing in classical cases, disappears (\fig{sumT2}) and a plateau with the topological dimension is obtained (\fig{sumT2pmax}).

Still, there are no hints for an effective smaller dimension at smaller scales in one superposition state. The running of the spectral dimension in the profiles of the figures is, in fact, mainly due to discreteness artefacts ($\ds\to 0$ at small $\tau$) and topological effects ($\ds\to 0$ at large $\tau$).
On the other hand, superpositions of 1-3 Pachner subdivisions have a $\ds$ plateau at a height smaller than the topological dimension (\fig{dsT2randomsum}). 
Summarizing, the averaging effect stemming from superposition states is the only manifestation of additional geometric data in the profile of the spectral dimension of the states investigated here, which would be otherwise reproduced by a classical triangulation.

Nevertheless one has to note that, due to the involved and expensive numerical techniques used for the calculations, the class of states feasible to calculate their spectral dimension is limited in a twofold way.
On the one hand, it has been possible to consider states based on complexes only up to a combinatorial size of $\np2 = \Ocal(10^4)$ cells.
On the other hand, in the superposition over complexes only simple choices of superposition coefficients such $\qsc_\bg=1$ have been chosen.
Both limitations can be overcome by a simple analytic model which covers a much broader class of applications than semi-classical (2+1)-dimensional LQG states. 
Its surprising results are the topic of the next section.

%============================================================================

\section{Dimensional flow in discrete quantum geometries}\label{sec:dimensional-flow}

The aim of this section is to present a special class of superpositions of discrete quantum geometries with a dimensional flow, characterized by a real-valued parameter $\alpha$. 
This parameter controls the scale-dependent values taken by the spectral dimension, and therefore the dimensional flow. 
Such states can be considered as a generalization of the LQG superposition states \eqref{sup1} and \eqref{sup2} of the previous section to arbitrary dimension. 
Motivated by the detailed numerical computations described there, the states considered here are based on a number of assumption such that the analytical results of \sec{classical} can be used.

For simplicity, the superposition states are sums over regular complexes corresponding to hypercubic lattices to which a single quantum label is assigned uniformly to all cells of a certain dimension. 
Such states occur indeed in the Hilbert space $\hkin$ of the discrete quantum gravity theory considered here (chapter \ref{ch:dqg}).
Because of the uniform labelling, these superpositions are also similar to the discrete geometries in CDT, although the complexes are not considered as regularization tools for physically smooth geometries here but as fundamentally discrete structures with their own physical interpretation. Contrary to the CDT setting, I interpret the superposed complexes as defining quantum gravity states, not histories, and the coefficients in the superposition have no immediate dynamical content. However, it is important to point out that this interpretation enters only minimally in the actual calculations and it could be generalized.

Using the known analytic expression \eqref{ht-hlinf} for the spectral dimension of single members in the superposition, I compute numerically superpositions over up to $10^6$ discrete geometries. On these grounds, one finds strong evidence for a dimensional flow, characterized by the parameter $\alpha$.

For these states one can furthermore use the analytic solutions for the walk dimension and Hausdorff dimension presented in \sec{hypercubic} and perform again numerical calculations of superpositions. 
However, these observables do not display any special properties for superpositions as compared to states defined on fixed complexes.

I will start  in section \ref{lattice-superpositions} with a brief discussion of the class of states under consideration, presenting then the setup for the spectral dimension calculation and discuss its result in \sec{spectral-flow}. 
Similarly, the outcomes of the calculation of the Hausdorff and walk dimension will be the topic of \sec{walk-Hausdorff}. 
In \sec{fractal-structure}, I will conclude with a discussion of these results with a particular focus on the question of a fractal structure.

% Description ==================================================================

\subsection{A general class of superposition states\label{lattice-superpositions}}

Let me first explain in detail the construction of superposition states of interest, and the calculation of their spectral, walk and Hausdorff dimension.

%\subsubsection*{A general class of superposition states}

%\emph{Discrete quantum geometries} are quantum states whose expectation values of geometric observables constitute discrete geometries.
%A (classical) discrete geometry is a polyhedral complex attached with $\p$-volumes to all its cells of dimension $\p$.
%Usually not all $\p$-volumes are independent.
%%Thereby some $\p$-volumes might be functions of some other $\p'$-volumes. 
%On a simplicial complex, for example, a set of edge lengths ($\p=1$) determines all higher dimensional volumes %
%\footnote{For well-defined, i.e. strictly positive volumes, the edge length have to fulfill generalized triangulation inequalities.}.
%
%
%In a quantum theory the $\p$-volumes become quantum operators acting on states supported on the complex. 
%Thus, their expectation values of these $\p$-volume operators provide a discrete geometry.

%Note that among the class of global observables, defined independently of a specific complex, there are also purely combinatorial operators. For example the number operator of vertices, edges or higher cells of the complex.

Generalizing the case of spin-network states in the LQG Hilbert space, here I denote as a \emph{discrete quantum geometry} a state $\rjc$ which is given by an assignment of some quantum numbers $j_c$ to a certain subset of cells $c\in\cm$ of a combinatorial complex $\cm$, 
%and based on combinatorial $\sd$-manifolds $\cm$ (that is polyhedral $\sd$-complexes that are furthermore dimensionally homogenous, strongly connected and non-branching \cite{\ORT}), 
diagonalizing volume operators of these cells
\[\label{spectra}
\widehat {V_{c'}^{(\p)} } \rjc \propto l^\p(j_{c'}) \rjc \,.
\]
%where \cjt{we} have adopted natural units.
%\cjt{we} thus interpret these labels as intrinsic metric variables, thus diagonalizing all (commuting) volume operators for cells of the complex dual to the graph on which the quantum states are defined.
Thus, spin-network states in LQG are an example of such states, based on the 1-skeleton of the dual complex $\cm^{\star}$, with the $j$'s identifying irreducible representations of $SU(2)$. 
They diagonalize length operators \eqref{length-spectrum} in $\std=2+1$ and area and volume operators \eqref{area-spectrum} in $\std=3+1$ spacetime dimensions as explained above.
%In three spacetime dimensions, the spatial ($\sd=2$) states in the spin network basis diagonalize the length operators $\ql_e$ associated with all edges $e\in\cm$. 
%Thus, they are labelled by spins $j_e$ on the corresponding dual edges $e^\star \in\cm^\star$.
%The form of the $\ql_e$ spectra is 
%$
%l(j_e) = \sqrt{ j_e(j_e+1)+\csu}
%$, 
%{with a free parameter $C\in\R$ due to a quantization ambiguity} for the Euclidean theory (as well as for timelike edges in the Lorentzian theory, spacelike ones being instead assigned a continuous positive variable) \cite{Freidel:2003kx,Achour:2014gr}.
%In four spacetime dimensions ($\sd=3$), spin-network states have the same spectrum for area operators $\widehat A_f$ on faces $f\in\cm$ such that \cite{DePietri:1996en,Ashtekar:1997bn}
%\[\label{area-spectrum}
%l(j_f) = [j_f(j_f+1)+\csu]^{1/4}\,.
%\]
Generic quantum-geometry states are superpositions of the discrete quantum geometries $\rjc$. %, which indeed form a complete spin-network basis of states of the Hilbert space in LQG.

% Restriction 1
In the following, I will restrict to superpositions with nonzero coefficients only for states $\rj$ %of quantum geometry
labelled by a single quantum number $j_c = j$ for all cells. %$c\in\cm$. %(for some $0\le\p\le\sd$)
Thus, one can consider the individual states $\rj$ as corresponding to equilateral lattices.
%Given this class of quantum states, \cjt{we} then consider 
Generic superpositions are then of the form %There are two kinds of such superpositions over a class of complexes $\cm$ and over the range of quantum labels $j$ (which may be discrete or continuous, but are assumed to be strictly positive)
\[\label{sups}
|\qs\rangle = \sumint_{j,\cm} \qsc_{j,\cm} \rj \,.
\]
In particular I am interested in states constrained to a fixed overall volume $V_0$ which can be interpreted as continuum states:
\[\label{fixedvolume}
|\qs,V_0\rangle = \sumint_{j,\cm} \qsc_{j,\cm} \,\delta(\lj \widehat{V} \rj, V_0) \,\rj \,,
\]
where the delta is a Kronecker delta. 
%Since spectral observables of discrete geometries are very sensitive to the underlying combinatorial structure and \cjt{we} know that the spectral dimension is meaningful only for sufficiently regular complexes \cite{\COTb},

Furthermore I restrict the sum to the hypercubic lattices $\cp=\hl\sd$ of spatial dimension $\sd$ and size $\size$ introduced in \sec{hypercubic}.
%based on the canonical vertex set $\cm_\size^{[0]} :=  (\Z_\size)^\sd$ of $\sd$-tuples of integers modulo $\size$.  % {\bf [explain $N$ in both symbols. Also, it is not clear how the next equation defines the delta]}. 
In this case, the fixed volume condition is explicitly
\[
V_0  = \ljb \widehat V \rjb \propto \size^\sd\, l^\sd(j)\,,
\]
which fixes the lattice size $\size = \size(j)$ for a given $j$.

In general, there are three scales involved in such superposition states, denoted
\[\label{states}
\rsup := \sumint_{j=\jmin}^{\jmax} \qsc_j |j,\hln{\sd}{\size(j)} \rangle\,,
\]
summing (or for a continuous label $j$, integrating) over a finite range from  $\jmin$ to $\jmax$:
A minimal length scale $l(\jmin)$, an intermediate scale $l(\jmax)$ and the overall volume size $V_0^{1/\sd} \propto N(\jmin)l(\jmin) = N(\jmax)l(\jmax)$.
Note that a finite volume $V_0$ bounds also possible cutoffs $\jmax$ (since $N$ is a positive integer).

%\cjt{following 2 paragraphs better in discussion?}

One can also consider the limit of noncompact geometries $N(\jmin) \ra \infty$, where all complexes in the superposition state \eqref{states} converge to the infinite lattice ${\hlinf\sd}$.
%For the choice of hypercubic lattices and for non-compact geometries, the two cases of 
Thus, such a state is technically the same as a superposition on the fixed complex ${\hlinf\sd}$ (similar to \eqref{sup1} in \sec{summing-coherent-states}), although the physical interpretation is different. %and (\ref{fixedvolume}) are technically the same.
%In the first case, the sum is over all geometries on the fixed lattice ${\hlinf\sd} = \lim_{\size\ra\infty}\Z^\sd_\size = \Z^\sd$, 
%while in the second case each complex in the sum over refined lattices for a fixed but infinite volume $V_0$ (understood in the limit $\size\ra\infty$) is, again, the lattice ${\hlinf\sd}$.
Due to the combinatorial simplicity, the results of infinite-size calculations in \sec{classical} can be directly applied to the finite-volume case. 
%Due to the technical equivalence the results apply also to the fixed complex superposition on the infinite lattice ${\hlinf\sd}$.

%Before discussing the calculations of observables, let \cjt{us} briefly expose the reasons for choosing infinite hypercubic lattices.
%First, already for classical discrete geometries the dimension observables have a clear interpretation (as, indeed, dimensions!) only in case of sufficiently regular, lattice-type complexes. The precise structure (whether, \eg, hypercubic or simplicial) does not play a dominant role, on the other hand \cite{\COTb}.
%Second, in an infinite complex there are no boundary or topology effects, thus simplifying the analysis of quantum effects \cite{\COTb}. 
%Third, for infinite lattices analytic solutions to the dimension observables are known for classical geometries, and thus for single states $\rj$ under the assumption \eqref{assumption}. Only in terms of these analytic solutions for single states is it possible to calculate numerically expectation values for very large quantum superpositions, which is exactly the aim in this work.
Having defined the special class of superposition states, I can move on to the evaluation of the geometric observables of interest, \ie the effective dimensions.

% Calculation \ds ===============================================================

\subsection{Dimensional flow in the spectral dimension \label{sec:spectral-flow}}

For the discrete quantum geometries $\rjc$ it is reasonable to assume that they are eigenvectors of $\widehat \Delta_\cm$, based on the definition of these labels \eqref{spectra} and on then general form of $\widehat \Delta_\cm$ \eqref{quantum-laplacian}.
In the case of $(2+1)$-dimensional LQG states I have shown this in detail (\sec{defLQG}).
Then, according to \eqref{ht-expansion}, the heat-trace expectation value of the states \eqref{sups} of interest here is
%operator in the discrete quantum geometries $\rj$ gives
\[\label{eq:dsexpand}
\langle \widehat{P(\tau)}\rangle _{\qs} %& = & %\langle \qs| \widehat{ \Tr\, \e^{\tau\Delta}}|\qs\rangle = 
%\sum_\cm \!\left( \sumint_{j} \qsc^*_{j,\cm} \lj \right) \!\!\!\!\left( \sum_{j'}  \qsc_{j',\cm} \, \Tr_\cm\, \e^{\tau\widehat{\Delta}_\cm} |j',\cm\rangle \right) \nonumber\\
 =  \sumint_{j,\cm} \left|\qsc_{j,\cm} \right|^{2} \Tr_\cm\, \e^{\tau \lj \widehat\Delta_\cm \rj}\,.
\]

% Assumption
One simplifying assumption is however needed in order to proceed with systematic computations on extended complexes. 
Motivated by the results in the specific LQG case in \sec{coherent-states}, assume that the expectation values of the coefficients of the Laplacian $\widehat\Delta_\cm$ scale as
\[\label{assumption}
\lj \widehat\Delta_\cm \rj_{ab} \propto l^{-2}(j)\,(\Delta_\cm)_{ab}\,,
\]
where $\Delta_\cm$ is the combinatorial Laplacian \eqref{discrete-laplacian} on the complex $\cm$.
%(which is just the graph Laplacian of the 1-skeleton of the dual complex $\cm^\star$ \cite{\COTa}).
This assumption is sensible if the Laplacian can be expressed as a function of the volumes \eqref{spectra} which is possible in most quantum gravity cases (\cf appendix \ref{sec:classical-expressions}).
A similar ansatz is, in fact, made in \cite{Modesto:2009bc}, although in that work it is not justified on the basis of a detailed analysis of the underlying graph and on the complete expression for the Laplacian, such as the one presented here \eqref{scalar-laplacian}.

The spectral dimension on the lattice superposition states has now a simple expression.
Under the assumption \eqref{assumption}, the expectation value of the return probability further simplifies to
\[
\langle \widehat{P(\tau)}\rangle _{\qs} \propto \sumint_{j,\cm} \left|\qsc_{j,\cm} \right|^{2} \Tr_\cm\, \e^{\tau l^{-2}(j) \Delta_\cm} .
\]
This expression can be computed most efficiently considering the limit of infinite lattices, for which the analytic solution \eqref{ht-hlinf} for the heat trace is available. 
%In \cite{\COTb}, \cjt{we} showed that the heat trace on ${\hlinf\sd} = \Z^\sd$ is
%\[
%P^{{\hlinf\sd}}(\tau) = \left[\e^\tau I_0(\tau)\right]^\sd,
%\]
%where $I_0$ is the modified Bessel function of the first kind. In the same limit, one can give precise formulae for the contribution to the spectral dimension coming from individual lattices, so that \cjt{we} are in the ideal position to investigate the effect of superpositions of the same. The spectral dimension $\ds^{j, {\hlinf\sd}}$ on a single state $|j,{\hlinf\sd}\rangle$ equals $\sd$ for $\tau\gg l^2(j)$ and vanishes for $\tau \ll l^2(j)$:
%\[\label{dslattice}
%\ds^{j, {\hlinf\sd}}\simeq\left\{\begin{matrix} \sd\,,\qquad \tau\gg l^2(j)\\ 0\,,\qquad \tau\ll l^2(j)\end{matrix}\right.\,.
%\]
%Around the scale $\tau\approx l^2(j)$, there is a peak of approximate height $1.22 \sd$ \cite{\COTb}. \cjt{we} consider these features as discretization artefacts, and \cjt{we} conclude that no real dimensional flow is seen for individual states in the superposition \cite{\COTb}. 
%Therefore, \cjt{we} are prompted to extend the search for quantum geometry states that would show true signs of dimensional flow to superposition states, to which \cjt{we} now move. 
Thus, the spectral dimension of a state $\rsup$, \eqref{states}, in the limit $\size(\jmin)\ra\infty$, is given by the scaling of
\[\label{htsuperposition}
\langle \widehat{P(\tau)} \rangle_\ssup
\propto \sumint^\jmax_{j=\jmin} \left|\qsc_j \right|^2 \left\{\e^{l^{-2}(j) \tau} I_0[l^{-2}(j)\, \tau]\right\}^\sd .
\]

\

For various classes of coefficient functions $\qsc_j$, values of spatial dimension $\sd$ and cutoffs $\jmax$ I have carried out numerical calculations of the spectral dimension using asymptotic power-law spectra 
\[\label{beta}
l(j) \simeq  j^\beta\,,
\]
where $\beta>0$ as usual in LQG, \eg \eqref{length-spectrum} and \eqref{area-spectrum}.
In the examples presented here $\jmin =1$, though calculations with lower cutoffs of the same order (\eg  $\jmin =1/2$) have been computed with similar results.
Notice also that the same finite minimal value for the geometric spectra could be obtained in correspondence with a quantum label $j=0$, for choices of quantization map that give a nonzero value for $\csu$ in \eqref{area-spectrum}. 

A dimensional flow in the spectral dimension is found for a class of coefficients with power-law functions,
\[\label{gamma}
\qsc_j \propto j^\gamma \,.
\]
Depending on the range of values of the parameter
\[\label{alpha}
\alpha := -\frac{2\gamma+1}{\beta}\,,
\] 
the dimensional flow is found for  $0<\alpha<\sd$.
More precisely, I find the following behaviour for the  spectral dimension of the state under consideration in this parameter regime:
\begin{itemize}

\item[(a)] In the IR, \ie for large length scales $\tau\gg l^2(\jmax)$, $\ds(\tau)=\sd$ (\fig{SuperposD}). This is of course a consistency check for the validity of the formalism, since at large scales one recovers the topological dimension of the space the quantum states are supposed to represent. It is however already a nontrivial test, as identifying quantum states with the right semiclassical continuum properties at large scales is no small task in background-independent quantum gravity. 

\begin{figure}
\includegraphics[width=7cm]{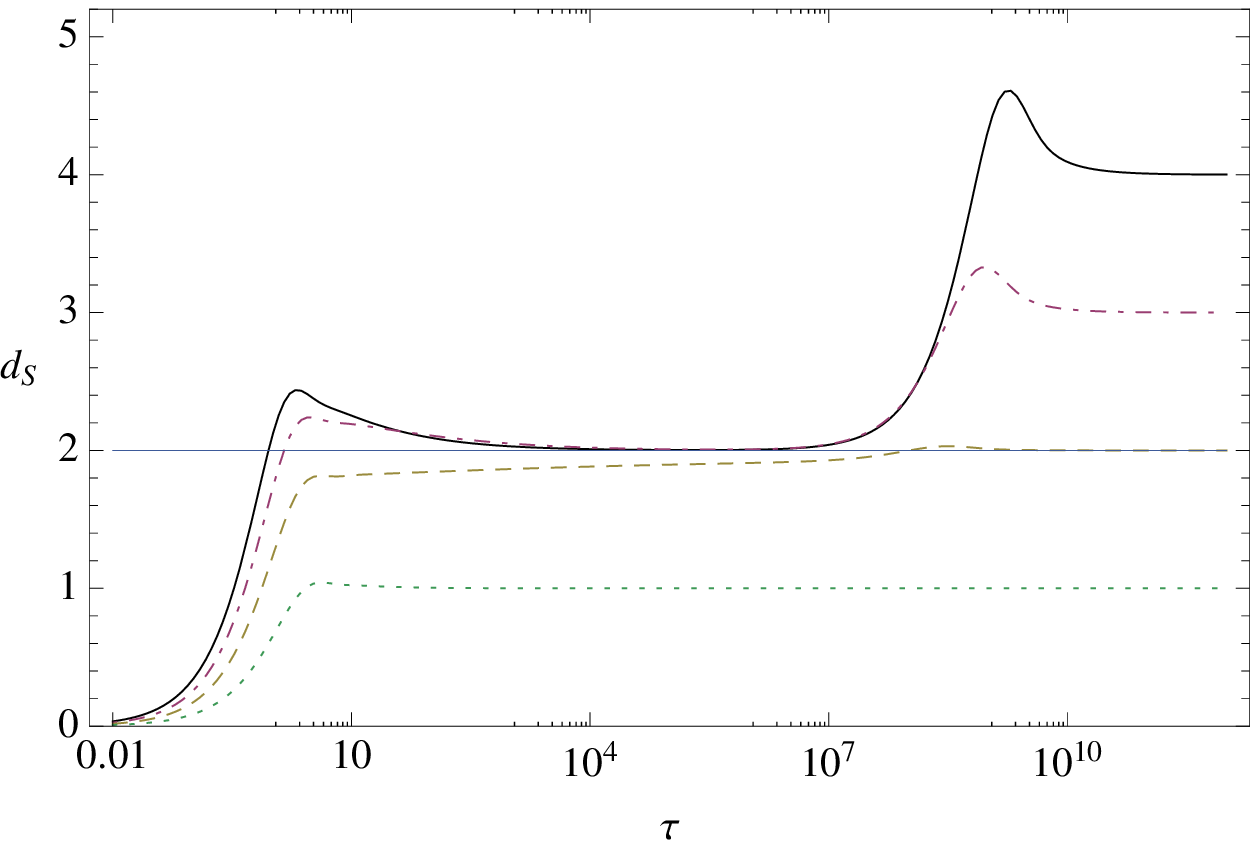}\,\,\,
\includegraphics[width=7.2cm]{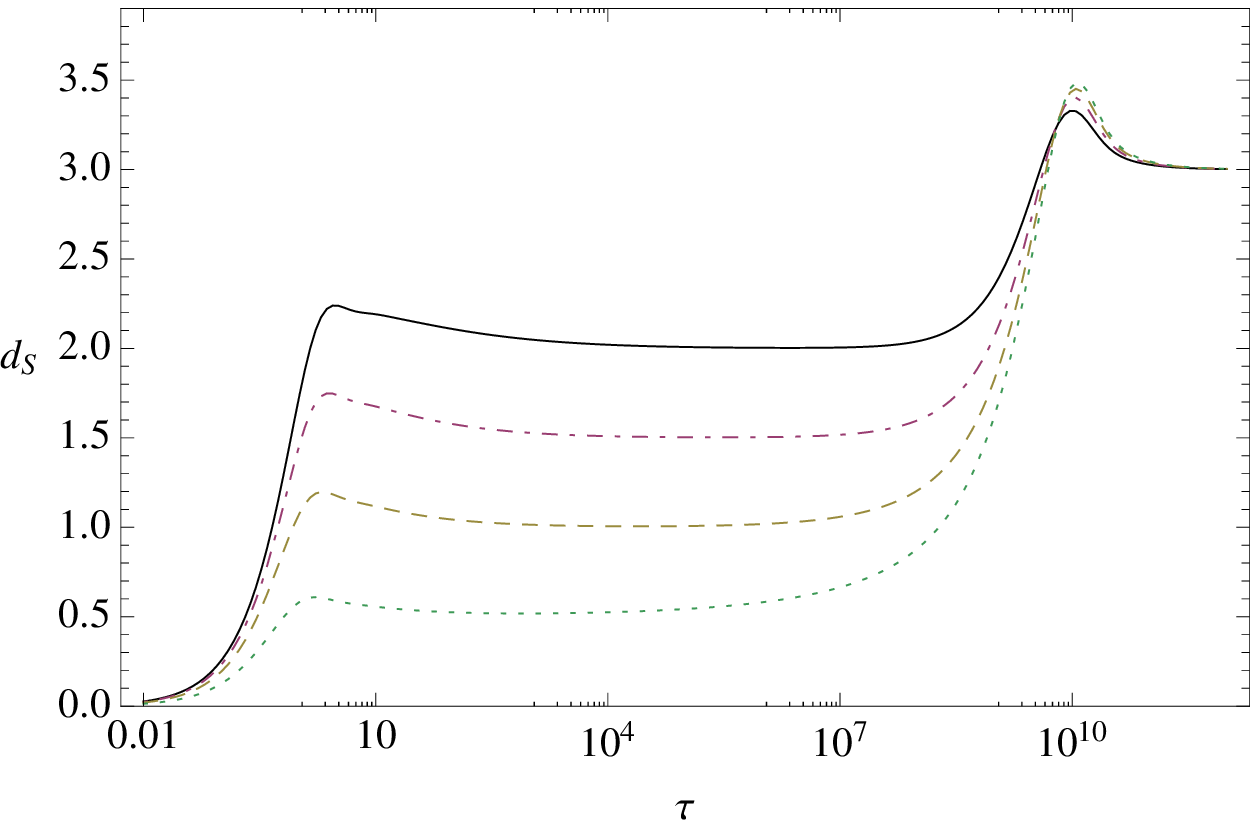}
\caption{Left: spectral dimension of a superposition with $\alpha = 2$ in $\sd=1,2,3,4$ (dotted, dashed, dot-dashed, solid curve) with cutoff $\jmax = 10^4\sd$.\\
Right: spectral dimension of superpositions with $\alpha = 1/2, 1, 3/2, 2$ (dotted, dashed, dot-dashed, solid curve) in $\sd=3$ with cutoff $\jmax = 10^5$.
}
\label{fig:SuperposD}
\end{figure}

\item[(b)] Below the smallest lattice scale, \ie for $\tau\ll l^2(\jmin)$, $\ds(\tau) = 0$.  
This is the usual discreteness effect also found for individual lattices (\cf \sec{classical}), which remains at the Planck scale for discrete spectra induced by holonomies valued in compact groups \cite{Rovelli:1995gq,DePietri:1996en,Ashtekar:1997bn,Ashtekar:1997we,Achour:2014gr}. For noncompact groups, spectra are typically continuous and no volume discreteness effect at Planck scale occurs, as $\jmin \ra 0$ \cite{Freidel:2003kx}.

\item[(c)] Between these scales, there is a plateau with value $\ds(\tau) = \alpha$ (\fig{SuperposD}). This plateau indicates a regime in which the effective dimension is physically smaller than the topological one, and thus a proper dimensional flow. 
In the light of the results in the previous \sec{coherent-states}, which are performed on a particular instance of these quantum states, but without considering large superpositions of lattice structures, this result can be regarded as a genuine quantum effect stemming from the superposition of states $\rj$ with geometric spectra on different scales and based on complexes of different size. 
It is interesting that, at such intermediate scales, the effective dimension depends on the superposition coefficients but is independent from the topological dimension.

%\begin{figure}
%\includegraphics[scale=0.6]{graphs/SuperposA3.eps}
%\caption{Spectral dimension of superposition states with $\alpha = 1/2, 1, 3/2, 2$ (dotted, dashed, dot-dashed, solid curve) in $\sd=3$ with cutoff $\jmax = 10^5$.
%}
%\label{SuperposA}
%\end{figure}

\item[(d)] In particular, for infinite superpositions ($\jmax\to\infty$) this plateau takes the value $\alpha$ and extends indefinitely (\fig{SuperposJ}). Formally, one can express this behaviour by
\[
\Delta\tau\big|_{\ds=\alpha} \underset{\jmax\to\infty}{\longrightarrow} \infty\,.
\]
Notice that this only means that the topological dimension $d$ is obtained further away at large $\tau$. Physically, one never takes the infinite limit in practice: for large spin labels, the plateau is long but has finite extension $\Delta\tau$.

\begin{figure}
\includegraphics[width=7cm]{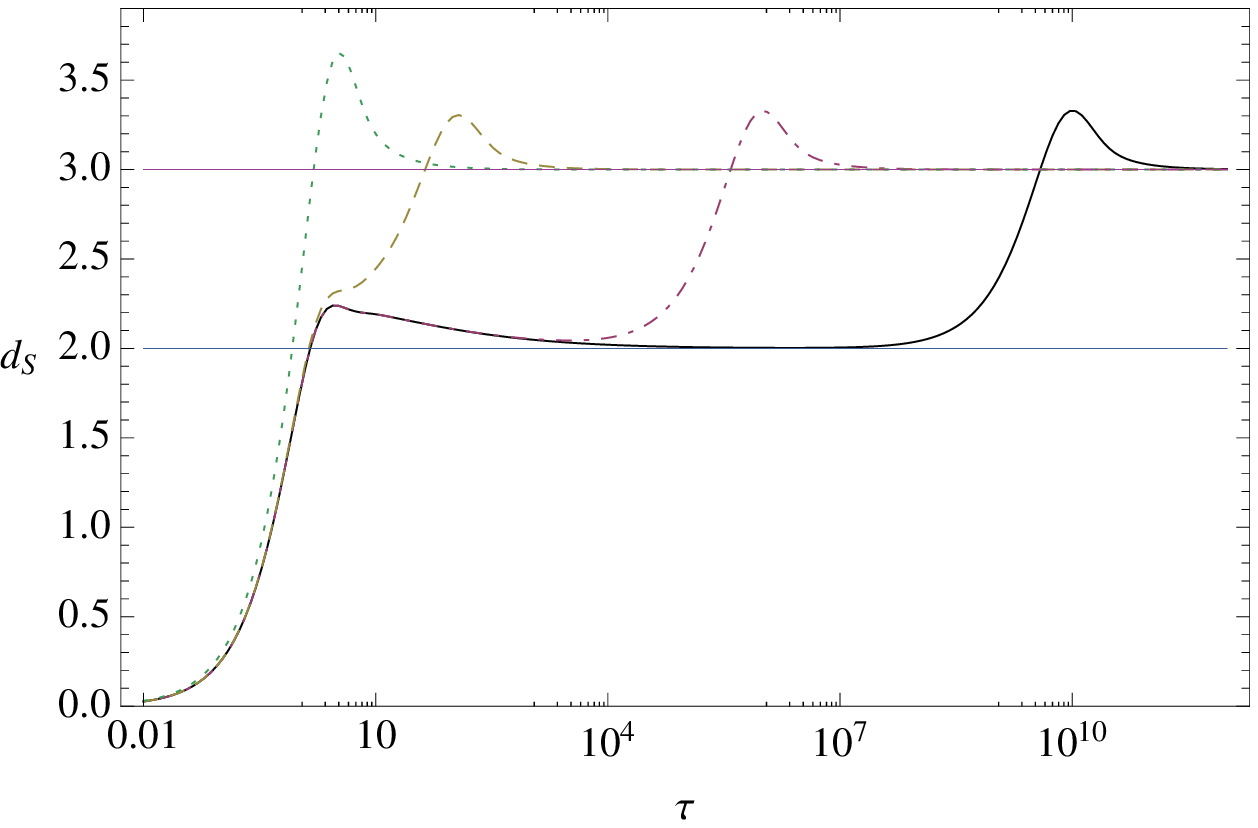}\quad
\includegraphics[width=7cm]{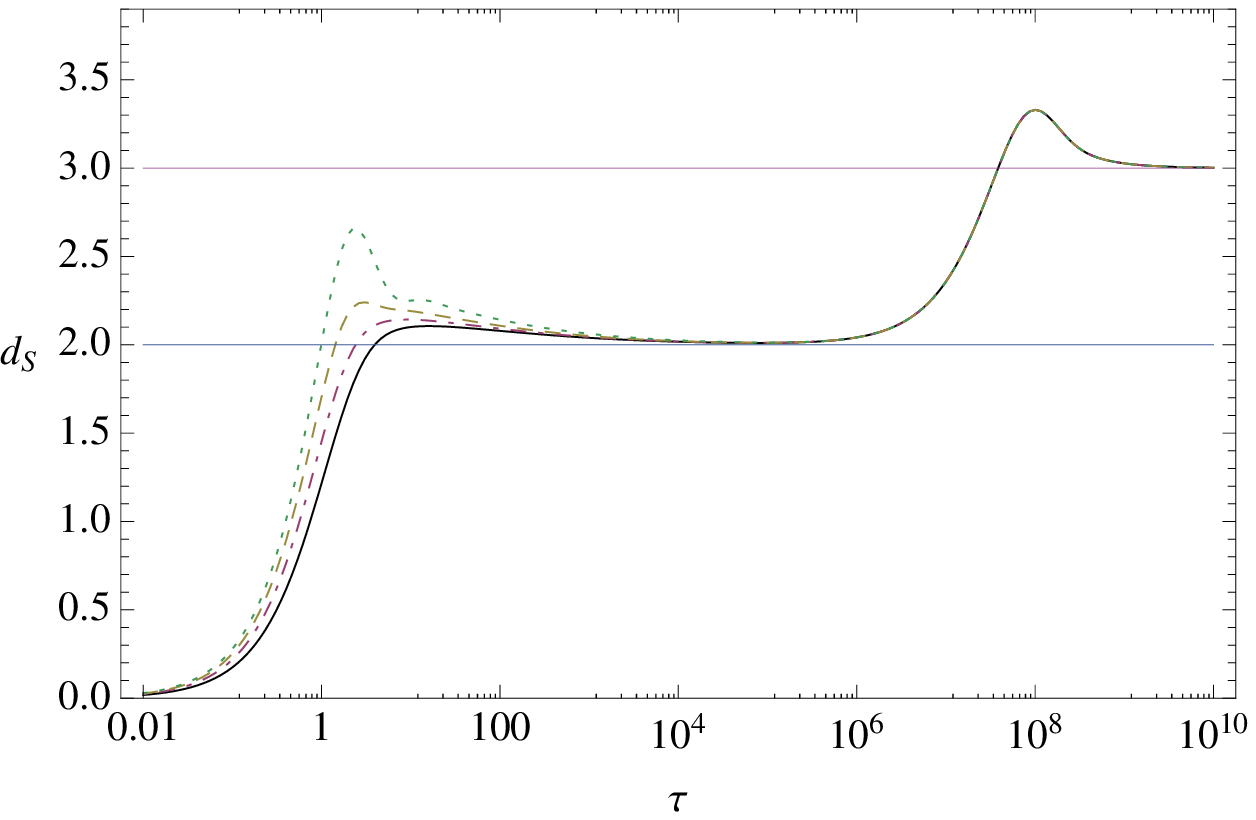}
\caption{Spectral dimension of superpositions with $\alpha = 2$ in $\sd=3$. \\
Left: for cutoffs $\jmax = 1,10,10^3,10^5$  (dotted, dashed, dot-dashed, solid curve).\\
Right: %spectral dimension of a superposition with $\alpha = 2$ in $\sd=3$ 
summing over positive $j\in \frac 1 q \N$ up to $\jmax =  10^4$ for $q=1/2,1,2,10$  (dotted, dashed, dot-dashed, solid curve).
}
\label{fig:SuperposJ}
\end{figure}

\item[(e)] Moreover, these results are independent of the spacing of the quantum labels $j$. 
Summing over $j\in\frac 1 q \N$ for some $q\in\Q$ slightly changes the results only at the scale $l(\jmin)$ (\fig{SuperposJ}).
Therefore, neither the IR nor the UV regime depends on the spacing of the state label $j$. The numerical calculations show, in particular, that this should also be true in the limit $q\ra\infty$, \ie for positive real $j$.
% note that 
\end{itemize}

%\begin{figure}
%\includegraphics[scale=.6]{graphs/SuperposJdiff.eps}
%\caption{Spectral dimension of a superposition with $\alpha = 2$ in $\sd=3$ summing over positive $j\in \frac 1 q \N$ up to $\jmax =  10^4$ for $q=1/2,1,2,10$  (dotted, dashed, dot-dashed, solid curve).
%}
%\label{SuperposDiff}
%\end{figure}

\item For $\alpha<0$, no superposition effect occurs and the profile of the spectral dimension equals approximately the one of the single state $|\jmax,{\hlinf\sd}\rangle$:
\[
\ds^{\ssup}(\tau) \approx \ds^{\jmax,{\hlinf\sd}}(\tau).
\]
This result is numerical, for which, at present, a complete analytical or physical understanding is lacking.
Nevertheless, there is an intuitive explanation for $\ds\approx\alpha$ at sufficiently small scales, in the range $0<\alpha<d$.
On a continuous medium, the case $\alpha<0$ would correspond to an unphysical one with negative dimension. This situation is meaningless both in the conventional diffusion interpretation of the spectral dimension (where the probe would do ``less than not propagating'') and in the resolution interpretation of \cite{Calcagni:2014ig,Calcagni:2014uv}.
In the latter, the return probability $P(\tau)\sim (\sqrt{\tau})^{-\ds}\sim \ell^{-\ds}\sim ({\rm res})^{\ds}$ is the probability to find the probe anywhere when the geometry is probed at scales $\ell$, i.e., with resolution $1/\ell$. 
For positive $\ds$, this probability decreases with the resolution: if $1/\ell$ is too small, there is a chance that the probe is not seen at all. 
On the other hand, a negative $\ds$ implies that the coarser the probe, the greater the chance to find it somewhere. 
In the current case, this pathological behaviour is screened by discreteness effects and $\ds$ is saturated by the lattice with labels $\jmax$. 
%The resolution interpretation coupled with the LQG interpretation of the spin labels helps in explaining Eq.\ (\ref{dslattice}): coarser resolutions can effectively probe only large volumes and the largest volume available for the states (\ref{htsuperposition}) is at the upper cutoff $\jmax$. Under such conditions, it is natural to expect that the lattice structure completely dominates the profile of the spectral dimension. 

For $\alpha>\sd$, no superposition effect occurs and the profile of the spectral dimension equals approximately the one of the single state $|\jmin,{\hlinf\sd}\rangle$,
\[\label{dsmin}
\ds^{\ssup}(\tau) \approx \ds^{\jmin,{\hlinf\sd}}(\tau).
\]
In the continuum limit, $\alpha>d$ would imply a spectral dimension larger than the ambient space. Similarly to the previous case, for an intuitive explanation, one has both the diffusion and the resolution interpretation at hand. In the conventional diffusion interpretation of the spectral dimension, the case $\ds>d$ may be regarded as physical: the probe effectively sees more than $d$ dimensions and tends to superdiffuse. In the resolution interpretation, the probability of finding the probe somewhere grows more steeply than for the normal case (Brownian motion) and probes with large resolution (small scales $\ell$) become even more effective. However, in the present quantum framework there is a limit to which one can probe the microscopic structure of geometry: volume spectra are discrete with minimum eigenvalue determined by $\jmin$. Again, the variation of the spectral dimension is dominated by lattice effects, this time governed by the lower cutoff in the spin labels.

% Discussion on technical aspects of the results
A partial understanding of the results with $0<\alpha<d$, in particular concerning the dependence of the UV value of $\ds$ on the powers $\beta$ and $\gamma$ in \eqref{alpha}, is provided by the following rewriting of the heat trace \eqref{htsuperposition}.
A redefinition of variables $k(j) := l^{-\alpha}(j)$ demands a change of summation-integration measure by
\[
\frac{\d k}{\d j}  = \frac \d {\d j} l^{-\alpha}(j)
=  -\alpha \frac{\d\ln l(j)}{\d j}  l^{-\alpha-1}(j)\,.
\]
In particular, for the power-law spectra \eqref{beta} and the definition of $\alpha$ \eqref{alpha}
\[
\frac{\d k}{\d j} =  -\alpha \beta \, j^{-\alpha \beta -1} \os{\text{\scriptsize \eqref{alpha}}}{=} (2\gamma+1)\, j^{2\gamma}
\]
which is proportional to $|\qsc_j|^2$ for the power-law coefficients \eqref{gamma}.
Thus, the heat trace on these superpositions is a uniformly weighted sum in the $k$-variable, over the range corresponding to \eqref{htsuperposition},
\[\label{ksum}
\langle \widehat P(\tau) \rangle_\ssup  \propto  \sumint_k \left[\e^{-k^{2/\alpha}\tau} I_{0}({ k^{2/\alpha}}\tau)\right]^{d}.
%P_{\Z^{d}}\left(\left(\frac{r}{r_{diff}}\right)^{2\alpha}\tau\right)\\
%& = & \sum_{r=r_{min}}^{r_{max}}e^{-d\left(\frac{r}{r_{diff}}\right)^{2\alpha}\tau}\left\{ I_{0}\left(\left(\frac{r}{r_{diff}}\right)^{2\alpha}\tau\right)\right\} ^{d}
\]
Therefore, genuine dimensional flow comes from a subtle balancing of $\sd$ and $\alpha$ in this expression, while a negative $\alpha$ yields just a dominant $k_{\max} = k(\jmax)$ contribution in the sum. 
Indeed, I have also calculated the spectral dimension directly from \eqref{ksum} for various values of $\sd$, $\alpha$ and summing ranges of integer $k$'s, obtaining qualitatively similar results as discussed above for \eqref{htsuperposition}. As a consequence, dimensional flow has some dependence on the form of the spectrum \eqref{beta} but only in combination with appropriate superposition coefficients.

\ 

Still maintaining the power-law spectrum \eqref{beta} (which is the most reasonable assumption, consistent with known results in LQG and related approaches), I have calculated the spectral dimension for various other classes of coefficient functions. In most cases, there are no surprising results:
(a) For example, exponential coefficients $\qsc_j\propto \e^{bj}$ let either the maximal state $\jmax$ dominate when $a>0$, or the minimal one $\jmin$ when $a<0$. 
(b) Gaussian coefficients, on the other hand, result in a dominance of the $j_0$ at which they are peaked. (c) Trigonometric functions add some oscillations to $\ds^{\jmax,{\hlinf\sd}}$ in the intermediate regime, depending on the relation of the periods to the spacing of $j$ in the sum. In all these cases, therefore, the overall behaviour of the spectral dimension is the same as that found for coefficients given by simple powers. 

More interesting is the case of coefficients which are linear combinations of power functions in $j$. Then one finds, for their asymptotic behaviour $\qsc_j\sim j^\gamma$, the same effect as for power-law coefficients. In particular, if there are several regimes with different approximate scaling $\gamma_1, \gamma_2, \dots$, one obtains plateaux in the spectral dimension plot of different values $\alpha_1, \alpha_2, \dots$ accordingly. An example is shown in \fig{regimes}. This effect coincides, both in its qualitative shape and origin, to the one obtained in the multiscale generalization of the diffusion equation with different powers of the Laplacian \cite{Calcagni:2012rm}. In general, all coefficient functions with an approximate power-law behaviour in some regime give rise to dimensional flow at those scales. Details such as the value of $\jmin$ and the spacing in $j$ are not relevant for the value of the spectral dimension in these intermediate regimes, in agreement with the discussion in \cite{Calcagni:2012rm} on the role of regularization parameters in the profile of $\ds$. The details of regularization schemes are nonphysical and affect only transient regimes in $\ds(\tau)$, not the value of the plateaux.

\begin{figure}
\centering
\includegraphics[scale=0.6]{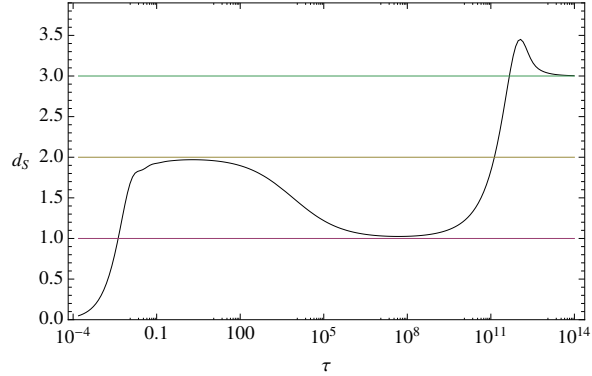}
\caption{Spectral dimension of superpositions with coefficients $|\qsc_j|^2 = j^{-4} + 200 j^{-7}$ summing from $\jmin =1/2$ to $\jmax = 200$ for $\sd=3$ and $\beta = 3$ (to be able to numerically cover enough scales with a feasible number of states in the sum). According to \eqref{alpha}, two different UV regimes with dimension $\ds\approx 2$ and then $\ds\approx 1$ can be observed.
}
\label{fig:regimes}
\end{figure}

% Calculation \dw ===============================================================

\subsection{Walk dimension and Hausdorff dimension of superpositions \label{sec:walk-Hausdorff}}

The spectral dimension is only one of the possible dimensional observables. 
The superposition state model is well-suited to analyse other observables such as the walk dimension \eqref{qdw} and Hausdorff dimension \eqref{qdh} as well, and it is interesting to do so because there exist several relations among them, in classical and continuum spaces. 
Only a detailed analysis of their combined behaviour can give solid indications on the nature of the quantum geometries corresponding to quantum gravity states.

Quantum superpositions $\rsup$ are characterized by the Laplacian \eqref{assumption}.
Therefore, along the same lines as \eqref{htsuperposition} one can evaluate the expectation value of the mean square displacement \eqref{quantumX} in terms of its lattice solution \eqref{Xhlinf},
\begin{eqnarray}
\left\qbra \qX \right\qket_\ssup 
&=& \sumint^\jmax_{j=\jmin} \left|\qsc_j \right|^2 {\langle X^{2}\rangle}^{\hlinf\sd}_{0}[l^{-2}(j)\tau] \\
&=& d \sumint^\jmax_{j=\jmin} \left|\qsc_j \right|^2 l^{-2}(j)\tau\label{dwtem} \\
&\propto& \tau\,.
\end{eqnarray}
According to the definition \eqref{qdw} of the walk dimension $\dw$  as the scaling of the mean square displacement, this result for quantum superpositions yields the standard value
\[\label{dwfinal}
\dw^\ssup=2\,,
\]
independent of the form of the coefficients $\qsc_j$. Notice that the dependence on the topological dimension in \eqref{dwtem} is only through a proportionality coefficient. Therefore, \eqref{dwfinal} is valid both for space and spacetime.

% Calculation \dh ===============================================================
\
%\subsection{Hausdorff dimension of superpositions}

The Hausdorff dimension of a quantum state is defined in \eqref{qdh} in terms of the scaling of the expectation value of the volume $\qVr{}$ \eqref{quantumV} of a ball with radius $r$.

With the volume spectra \eqref{spectra}, 
the evaluation of $\qVr{}$  on a single discrete quantum geometry state $\rjb$ in the large size limit $\size\ra\infty$ can be expressed in terms of the classical lattice solutions $V_{\hlinf\sd} (r)$ \eqref{latticeV} as 
\begin{eqnarray}
\qbra V(R) \qket_{j,{\hlinf\sd}}  &\propto& l^\sd(j) V_{\hlinf\sd} (R/l(j))  \\
&\propto& l^\sd(j) \prod_{n=0}^{\sd-1} [R /{l(j)}+n]\,.
\end{eqnarray}

Thus, its Hausdorff dimension corresponds to the lattice solution \eqref{dh-lattice}, \ie
\[\label{dh1}
\dh^{j,{\hlinf\sd}}(r)=r \sum_{k=0}^{\sd-1}\frac{1}{r + k\;l(j)} = \frac r{l(j)} \left[\digamma(r/l(j)+\sd)-\digamma(r/l(j))\right]\,,
\]
 where $\digamma$ is the digamma function. At large scales, $\dh$ approaches the topological dimension $\sd$, while at small scales it tends to 1:
\[\label{dh2}
\dh^{j,{\hlinf\sd}} \simeq\left\{\begin{matrix}\sd\,,\qquad R\gg1\\ 1\,,\qquad R\ll1\end{matrix}\right..
\]

%On discrete quantum geometries $\rjc$, \cjt{we} define the quantum analogue of $V(R)$ as follows. Let $v_0\in\cm$ be a given vertex in the complex and consider the subcomplex $\cm_{v_0}\In\cm$ of all vertices $v$ which have an expectation value of their distance to $v_0$ no larger than the radius, 
%\[\label{radius}
%\ljc \widehat{L}_{vv_0}\rjc \le R\,,
%\]
%where the expectation value of
%$
%\widehat{L}_{vv_0} %:= \text{Min}\{ 
%$
%is the minimum of lengths derived from the sum of edge lengths of possible (combinatorial) paths between $v$ and $v_0$.

%The expectation value of the volume of this subcomplex $\rjc$ %on a state $|\psi\rangle$ 
%is $\sum_{v\in\cm_{v_0}}\langle V_v \rangle_{\{j_c\},\cm%\qs,\cm_{v_0}
%}$. To obtain the desired observable, one must average over all possible centers $v_0$:
%\[
%\ljc V(R) \rjc  = \sum_{v_0\in\cm} \sum_{v\in\cm_{v_0}} \ljc V_v \rjc\,. %\rangle_{\qs,\cm_{v_0}}.
%% = \sum_{v_0\in\cm} \sum_{\us{\langle l_{vv_0} \rangle \le R}{v\in\cm}} \langle V_v \rangle_{\qs,\cm}.
%\]
%On the uniform hypercubic lattice states $\rj$, however, the sum over center vertices $v_0$ is not necessary due to translation invariance and because of the local volumes being all equal, $\lj V_v \rj \propto l^{\sd}(j)$ for all $v\in{\hlinf\sd}$.

%Similarly, on $\rj$ the condition \eqref{radius} simplifies to 
%\[
%\lj \widehat{L}_{vv_0} \rj \propto l(j) N_{vv_0}
%\]
%where now $ N_{vv_0}$ is the minimal number of edges in a path from $v$ to $v_0$.

For generic superposition states, this gives a nontrivial expectation value 
\[\label{vrsuperposition}
\langle V(R) \rangle_\ssup \propto \sumint_{j=\jmin}^\jmax |\qsc_j|^2 l^\sd(j)  \prod_{n=0}^{d-1}[R /{l(j)}+n]\,
\]
like for the spectral dimension \eqref{htsuperposition}.

Nevertheless, numerical calculations on the same classes of states as investigated above for the spectral dimension show qualitatively similar results to the Hausdorff dimension $\dh^{j,{\hlinf\sd}}$ on single states (\fig{Hausdorff}). 
That is, in all instances there are plateaux as in the pure lattice case \eqref{dh2}.
Only the scale and steepness of the flow between these plateaux is modified. 
For example, for power-law coefficients \eqref{gamma} the fall-off is much steeper and occurs, as $\alpha$ increases, closer to the scale as in the case of the single state $|\jmin,{\hlinf\sd}\rangle$.

\begin{figure}
\centering
\includegraphics[scale=.6]{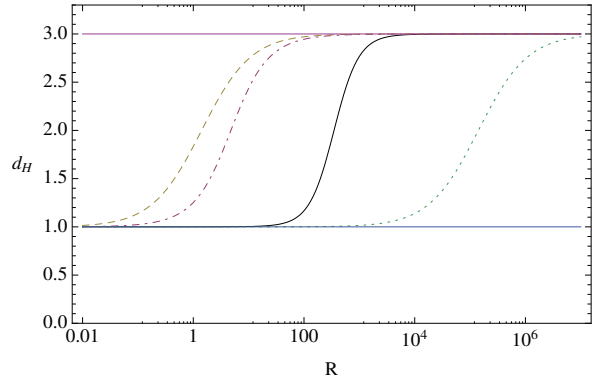}
\caption{Hausdorff dimension $\dh$ of a superposition with $\alpha = 1,2$ (solid and dash-dotted curve) in $\sd=3$ summing up to $\jmax =  10^5$, compared to $\dh$ on single states $|1,{\hlinf\sd}\rangle$ (dashed curve) and $|\jmax,{\hlinf\sd}\rangle$ (dotted curve).
}
\label{fig:Hausdorff}
\end{figure}

%\cjt{this discussion might be extended, but not necessarily so if there is really not more to say also for the other classes of coefficients...}

%Thus, there are no deviations from the Hausdorff dimension (\ref{dh1})-(\ref{dh2}) of the classical lattice.

% Discussion ==================================================================

\subsection{A fractal structure in (loop) quantum gravity? \label{sec:fractal-structure}}

%\subsubsection*{Discussion}

The calculations presented show that a flow in the spectral dimension occurs in quantum gravity, at least for a specific class of superpositions of regular (both from the combinatorial perspective and for what concerns the assignment of additional quantum labels) quantum states of geometry. 
These quantum states, although restricted by the regularity assumption, are exactly of the type appearing in the discrete quantum gravity formalisms of loop quantum gravity, spin-foam models and group field theory considered in this thesis.
Though they are more general, and can, for example, also simply be seen as quantum states of lattice quantum gravity, in the spirit of quantum Regge calculus. 

On the other hand, no dimensional flow due to quantum effects occurs in the Hausdorff and walk dimension. 
This conclusion is based on the interpretation, maintained throughout this thesis, that the flow of an effective dimension observable of a quantum geometry is a mere effect of discretization whenever it approximately coincides with the form of scale dependence for the related combinatorial complexes. %\cjt{we} will presently come back to this point.

The results are of particular interest from the point of view of loop quantum gravity.
Under the assumptions made for the action of the quantum Laplacian on the states (a very simple scaling behaviour), an important example of the states studied are kinematical LQG states where length ($\std=2+1$) or area and volume operators ($\std=3+1$) are diagonalized by spin-network states. 
In this sense, I have identified a class of LQG states with a dimensional flow. 
More precisely, for any $0<\alpha<\sd$ there is a class of states in the kinematical Hilbert space with a dimensional flow from the spatial topological dimension $\sd$ in the IR to a smaller value $\alpha$ in the UV. The UV value depends on the exact superposition considered but not on the topological dimension.%, in a regime around the scale set by the cutoff $\jmax$ in the superposition.

This result is in contrast with earlier arguments in LQG \cite{Modesto:2009bc}. There, the author argues for evidence of dimensional flow for individual spin-network states (thus, for a given graph or complex), and the same result is claimed in \cite{Caravelli:2009td,Magliaro:2009wa} for simple spin-network states with additional weights given by a 1-vertex spin foam (thus, not yet in a truly dynamical context). The starting point in \cite{Modesto:2009bc} is an assumption about the scaling of the expectation value of the Laplacian, very similar to \eqref{assumption}. The essential part of the argument is then the further assumption that the momenta $p$ of the scalar field defining the spectral dimension are directly related to a length scale set by the quantum numbers as $p\propto1/\sqrt j$. The scaling of the Laplacian in $j$ is finally translated in a modified dispersion relation in $p$ and the result depends on the precise form of the spectrum \eqref{area-spectrum} with $\csu=0$.

In the present case, no further assumption beyond \eqref{assumption} is made. Calculations are based on the momenta of the scalar field on the lattice-based geometry, that is, the spectrum of the Laplacian, but the spectral dimension is computed directly as a quantum geometric observable evaluated on quantum states. 
Furthermore, in \sec{coherent-states}, I have found no effects on the spectral dimension for individual quantum-geometry states of LQG based on given graphs or complexes. 
On the other hand, the genuine dimensional flow that occurs here for the states $\rsup$ is crucially related to the superposition of spin-network states also with respect to the underlying combinatorial structures, and it is not solely the result of the discreteness of geometric spectra. 
In this deeper sense, dimensional flow can indeed be seen as an effect of quantum discreteness of geometry.

\

%\cjt{we} are also in a position to characterize the change of dimensionality more precisely than a generic ``flow'' of geometry. 
The detailed results of all the three effective dimension observables allow also to address the question, whether such quantum geometries are effectively fractal geometries.
Quite often in the literature of quantum gravity, dimensional flow has been advertised as spacetime being fractal. 
However, strictly speaking not all sets with varying dimensionality are fractals. 
Although no unique operational and rigorous definition of `fractal' exists, one property all fractals generally possess is a special relation among the spectral dimension $\ds$, the Hausdorff dimension $\dh$ and the walk dimension $\dw$:
\begin{equation}
\dh=\frac{\dw}{2}\ds.\label{dimensions}
\end{equation}
%If the quantum geometries \cjt{we} considered satisfied this relation non-trivially (\ie in the presence of genuine dimensional flow, where at least one of the dimensions acquire scale dependence), then \cjt{we} could tentatively conclude that they are fractal in the strict sense of (\ref{eq:dimensions}).

On the hypercubic lattice superpositions considered, $\dw=2$ and the above relation simplifies to $\dh=\ds$. 
This is trivially obtained in the IR regime, where both dimensions take the value of the topological dimension. In the UV regime above the lattice scale (recall that below such scale any scaling effect is arguably unphysical), the Hausdorff dimension takes the classical value $\dh^{{\hlinf\sd}}\simeq 1$. Thus, \eqref{dimensions} is only obeyed in the case of a scaling $\alpha = 1$ such that also the spectral dimension takes this value. Only then can one call the quantum superposition $\rsup$ an effective one-dimensional fractal. 
This is indeed a perfectly allowed choice of quantum states and one can conclude that the results show a particular class of quantum geometries that corresponds, by all appearances, to a fractal quantum space.

However, one should mention a caveat here. 
For these geometries to be safely regarded as fractals, the origin of the dimensional flow should be the same in the left- and right-hand side of \eqref{dimensions}, which may not be the case for here: the left-hand side flows due to discreteness effects, while the right-hand side flows due to physical quantum effects. 
This situation might suggest either that one should not place particular significance in the fulfilment of \eqref{dimensions} or that the discrimination between discreteness effects and physical quantum effects is somewhat too strong and should be revised. 
I do not attempt to solve this mild conceptual issue here, which is harmless for the main results. 
Still, it will deserve further attention.

Interestingly, the geometry with $\alpha=1$ is also the only one where the spectral dimension of spacetime reaches the value $\Ds=\ds+1\simeq 2$ which is very often commented upon in the literature of quantum gravity. 
Its appearance across independent approaches such as causal dynamical triangulations, asymptotic safety, Ho\v{r}ava-Lifshitz gravity and others \cite{Hooft:1993vl,Carlip:2009cy,Calcagni:2009fh} triggered the suspicion that this ``magic number'' could be a universal characteristic of frameworks with good ultraviolet properties or, in other words, that a two-dimensional limit of the spectral dimension might be tightly related to the renormalizability or finiteness of quantum gravity. 
By now, it has become clear that this is not the case in general, as there exist counterexamples of nonlocal field theories with good renormalization properties with $\Ds\neq 2$ in the UV \cite{Modesto:2012er}, as well as of local theories whose renormalization properties are not at all improved by dimensional flow \cite{Calcagni:2014hx}. 
The superposition states investigated here provide another instance pointing towards the same conclusion: the value of $\ds$ is governed by a choice of states which, by itself, is not (sufficiently) connected with the dynamical UV properties of the underlying full theory.

% Conclusions ==================================================================
\
%\subsection*{Conclusions}

In this \sec{dimensional-flow}, I have  investigated the effective structure of quantum superpositions of regular (hypercubic and homogeneous in label assignment) states of quantum geometry. 
It is possible to identify states with a flow of the spectral dimension to a dimension $\alpha$ in the UV, provided the superposition includes fine enough combinatorial structures and a large enough number of (kinematical) degrees of freedom of quantum geometry, and a particular set of expansion coefficients \eqref{gamma} related to $\alpha$ \eqref{alpha}.
For the Hausdorff and walk dimension, I do not observe any physical quantum effects, although discreteness effects do alter the value of the Hausdorff dimension across scales.
A fractal structure in the strict sense, \ie where \eqref{dimensions} relating the three dimensions is fulfilled also in the UV, is realized in the case $\alpha=1$.
In particular, these results provide evidence for a dimensional flow in a certain class of kinematical LQG states, also available in the spin-foam and group field theory context.

%===========================================================================

\addchap[tocentry={}]{Conclusions}
\addcontentsline{toc}{chapter}{\protect\numberline{}Conclusions}

%where does this work lead to?
%
%what other lines of research are going to be influenced by all these results?
%
%what next?}

The main goal of this thesis has been to analyse the fractal structure of discrete quantum geometries in LQG, SF models and GFT and check if there are any more solid hints for a dimensional flow as observed in other quantum gravity approaches \cite{\cdtfractal,Benedetti:2009bi, Coumbe:2015bq,\asfractal,Calcagni:2013jx,\ncfractal,\hlfractal,\snfractal}.
This has been accompanied by several side topics such as question of theory convergence in these approaches, the possibility of a purely combinatorial description for quantum geometries in a fundamentally discrete setting, the development of discrete calculus to provide the conceptual framework for the notion of effective-dimension observables, as well as the extension of group field theories to cover the whole state space of discrete quantum geometries as demanded in particular in LQG.
Addressing all these issues, I have accomplished the following:

%\subsubsection*{Theory crystallization}
% chapter \ref{ch:relations} 
First, I have sketched a logical reconstruction of the relevant five discrete quantum gravity proposals, QRC, CDT, LQG, SF models and GFT, according to the structuralist view on scientific theories.
On this basis I have argued that these theories should be understood as distinct theories in the first place. This is the precondition for any meaningful (nontrivial) theory relation between them.
I have then discussed several main relations between the theories, both conceptually and on the level of dynamics, in terms of which all of them turn out to be connected, at least on the level of particular models.
This has led to the conclusion that there is no theory convergence between them in a stricter sense. 
On the other hand, a precise notion of theory crystallization seems to apply well to the web of theories. In this sense, one can therefore understand at least a subset of the approaches as likely  developing into a single theory eventually.

This investigation has not been comprehensive and there are many ways in which it could be expanded.
I have only sketched how the theories are logically reconstructed and the framework of structuralism allows a much more detailed analysis which would strengthen the consequent statements substantially.
From a physicist's perspective, though, it is probably more interesting to stay on a more informal level of theory assessment but retrace in more detail the various intertheoretical relations between the approaches of which I have explained only a number of examples.
Finally, the probably most interesting challenge from a philosophical perspective is to strengthen the hypothesis that a crystallization process indeed improves the epistemological status of the involved theories. 
For this purpose, an interesting framework might be Bayesian epistemology \cite{Bovens:2004tw} where mathematically strict results for such questions (\eg \cite{Dawid:2013tj}) are possible.

\

%\subsubsection*{Combinatorial complexes and spin-foam molecules}
%
Second, I have given an exhaustive description of the discrete structures playing a role in the discrete quantum gravity approaches under consideration.
I have shown how these can be defined in a purely combinatorial way, extending the notion of abstract simplicial complexes to polyhedral complexes and generalizations of these given by a certain subdivision structure.
A particular focus has been on the molecular 2-complex structure underlying amplitudes in SF models and GFT for which I have given a detailed description and a classification of the various relevant subclasses.
This classification complements, but also clarifies and completes the one in \cite{\KKL}, which formed the basis for the first combinatorial generalization of SF models.
%spin-foam molecules turn out to be combinatorial 2--complexes in the precise sense of abstract polyhedral complexes, settling the question of determining the kind of spin-foam complexes a theory should be based on when in an abstract, non--embedded context.  (This is at least a starting point, given that considerations about physical symmetries may require extra data to encode D--dimensional topologies).

\

%\subsubsection*{Extension of GFT}
%
Third, I have shown how GFT can be extended in a meaningful, manageable way to entail  boundary states based on graphs of arbitrary valence.
With the ground properly set, it is straightforward to define a generalization of the well known simplicial GFT using arbitrary atoms. 
I have presented explicitly how this can be obtained by a multi-field GFT.
Since it is extremely difficult to turn such a formally defined field theory with a potentially infinite number of fields into a controllable one allowing for concrete calculations and physical insights, there is need for an alternative.
To this end I have presented GFT models with a dual weigthing which generate a general class of molecules, but are based on standard simplicial GFT and are similarly manageable.
%The definition of \dwgft\ has been based firstly on the combinatorial possibility: \textit{i}) to decompose arbitrary boundary graphs into simplicial ones --  this permits their realization with single group field; and furthermore \textit{ii}) to decompose arbitrary spin-foam atoms and molecules into simplicial atoms that correspond to a simplicial \gft\ interaction. These facts were proven in the combinatorics section in every detail.
The dually weighted GFT models are based on the combinatorial possibility to obtain arbitrary boundary graphs from simplicial molecules, implemented on the  dynamical level. 
This implementation provides an example of a useful application of tensor-model techniques to LQG and SF models via a GFT formulation.
%A dynamical mechanism for this decomposition of spin-foam molecules in terms of a dual--weighting on a simplicial GFT...
%The effect is that in the large--$M$ limit only those molecules %(still built from labelled, simplicial atoms)  
%that can be canonically reduced to arbitrary spin-foam molecules survive. In this way, the dually weighted GFT gives rise to an effective perturbative series over arbitrary molecules with the corresponding generalized spin-foam amplitudes as dynamical quantum weights.
In both cases, the inclusion of the dynamics of gravitational SF models, generalized to arbitrary complexes, is possible. 
%While the implementation is generic,  provided as an explicit example the spin-foam operators in the case of the \eprl\ amplitude, thus obtaining a dually weighted \gft\ whose Feynman amplitudes match the \kkl\ spin-foam amplitudes. Moreover, \cjt{we} have given also a modification of the same model, resulting from a better justified imposition of geometricity conditions, as suggested by \cjt{our} dually--weighted construction.

%Concerning the geometry of gravitational models, the obvious first issue is the
The extension from simplicial to more general complexes raises questions concerning the implementation of simplicity constraints.
I have shown how their implementation in the known models can be straightforwardly applied to the extended GFT models % generating, directly or effectively, arbitrary spin-foam molecules. This is 
in the same spirit as in \cite{\KKL}.
However, the derivation of all these models rests on the classical geometry of simplices. 
A spin-foam atom with arbitrary combinatorics, in contrast, corresponds rather to a polytope. 
Consequently, taking the arbitrary closed graphs of LQG seriously, an adaption of the simplicity constraints and their quantization to polytopes is needed.
%Of course, one is then left to deal with the independent matter of quantizing any such geometricity constraints. %Given the classical polytope simplicity constraints one could then follow the quantization of one of the gravitational models. But the polytope case could also trigger new ideas for quantization. 

Another issue concerns the possibility of an extension of arbitrary molecules to higher dimensions.
In the simplicial case, the good behaviour of a SF or GFT model depends on the molecules possessing an extension to a full $\std$-dimensional orientable complex.  
As shown, this is necessary for the definition of the differential structure; but it is also important for the  control of divergences \cite{\gftrenorm, \COR},  as well as for diffeomorphism symmetry \cite{Baratin:2011bk,Bonzom:2012tg,Bonzom:2012gwa,Freidel:2003bh,Dittrich:2008pw}. 
%The question is therefore how these issues translate to the case of polyhedral complexes.
A solution to the issue could be to pass over to  coloured GFT models.
This is directly possible in the case of dually weighted GFT
% can be based on the coloured model without obstacle, and such formalism
which will then generate effectively all polyhedral $\std$-complexes in terms of their simplicial subdivisions. 
Still, one may want an encoding of the topology of general $\std$-complexes directly at the level of arbitrary SF molecules. This remains an interesting open problem.
%On the other hand, it could be also be possible to obtain such a restriction on the non-simplicial level, namely to restrict the arbitrary atoms to those which are dual 2-skeletons of $D$-polytopes in the first place. Molecules resulting from such bondings have not been considered before and the analysis of their topological structure would be a relevant question in itself, setting the stage to tackle the quantum gravity related issues mentioned before.

Apart from such conceptual issues, the most relevant task is the investigation of the field-theoretic properties of the dually weighted GFT models presented. 
Among them, the large-$N$ \cite{\largeN} and double-scaling \cite{\double} limits %of a (coloured) version  
are of primary interest. %, comparison to the GFT  theory without dual weighting. 
%This would extend the results obtained in the context of tensor models. 
Finally, the probably most important question concerns the renormalizability of these models.
As explained, the extension from simplicial interactions to a sum over tensor invariant interactions can be directly transferred to the dually weighted GFT.
Investigating the renormalizability of such models is thus possible using the same methods already applied in the GFT literature \cite{\gftrenorm, \COR}.  
 
%Lastly, utilizing a recently proposed strategy  based upon GFT condensates \cite{\gftcondensate}, one can extract effective cosmological dynamics directly from the fundamental GFT formulation.  One should expect that a modification of the combinatorial structures entering the microscopic dynamics would percolate directly to such effective macroscopic dynamics. This may lead to interesting modification and could give an alternative way, alongside renormalization analysis, to check the physical relevance and necessity of generalizing the combinatorics of fundamental quantum gravity states and histories. 

\

%\subsubsection*{Discrete calculus and the definition of effective-dimension observables}

Fourth, I have introduced a discrete calculus on discrete geometries consisting in an assignment of geometric data to purely combinatorial complexes, 
in this way extending formalisms relying on an ambient space \cite{Grady:2010wb, Desbrun:2005ug,Albeverio:1990ii, Adams:1996ul, Teixeira:2013ee}.
This should open up novel ways to investigate the physical and geometric properties of discrete quantum gravity theories.
To this end, I have given explicit expressions of the Laplacian in geometric variables used in any of such approaches: edge lengths, face normals, fluxes, and area-angle variables. 
In particular, the calculus provides a setting to rigorously define a discrete Laplacian operator.
I have shown that this operator satisfies all physically desirable properties of its continuum analogue.
I have used the Laplacian to give a meaningful definition of spectral and walk dimension as observables for discrete quantum geometries, in particular for states in a direct sum $\hs = \bigoplus_\cp \hs_\cp$ of Hilbert spaces $\hs_\cp$ for a class of distinct complexes $\cp$.
This provides the setting for the investigation of these observables in discrete quantum gravity approaches.

A natural field of application of the discrete calculus formalism is the addition of matter fields onto quantum geometry.
In any phenomenologically more relevant application of quantum gravity it is necessary to consider appropriate matter fields coupled to pure gravity.
%A proper treatment of such quantum fields is an essential missing ingredient in SF models and GFT approaches.
It is already rather well understood how to canoncially quantize theories of matter coupled to gravity in the spirit of LQG \cite{Thiemann:1998hn,Domagaia:2010iq,Giesel:2012tj},
and there are ideas how matter field degrees of freedom could directly be added to spin-foam configurations \cite{Oriti:2002gy,Fairbairn:2007fk,Bianchi:2013cm}, interpreted as topological defects \cite{Freidel:2004bq,Freidel:2006gc,Baez:2007fm,Fairbairn:2008ko,Fairbairn:2008bx}, or obtained in effective actions of GFT \cite{Fairbairn:2007bu,Oriti:2009ks}.
The discrete calculus allows now to properly define matter coupling directly on the fundamentally discrete geometries as they occur in SF models and GFT setting the stage for a more rigorous analysis of their interaction with the geometric degrees of freedom.

\

%\subsubsection*{Fractal structure of discrete quantum geometries}

Fifth, I have explored the  spectral dimension of states of quantum geometry in the case of (2+1)-dimensional LQG as well as in a general setup of superpositions of discrete quantum geometries in any spatial dimension.
%To this end, \cjt{we} have discussed the definition of the spectral dimension for combinatorial complexes equipped with the geometric variables defining the LQG configuration space. 
I have shown how one can define ob\-servables of the Hausdorff, spectral and walk dimension for quantum states of geometry in terms of the scaling of the quantum expectation value of the volume, the heat trace and the mean square displacement, respectively.
A systematic classification of topological effects and discreteness artefacts in classical settings has set the ground for the quantum case and I have pushed the analysis to analytic expressions whenever possible.

On this basis I have calculated the spectral dimension of coherent states in LQG in $2+1$ dimensions, 
both as superpositions of geometries on the same complex and as superpositions of complexes, in particular triangulations of the same geometry. 
To this end I have set up an algorithm to compute the spectral properties of the relevant discrete-geometry configurations the states are defined by.
A main result of the calculation of the spectral dimension of coherent states on a single complex is the lack of any strong indication of a dimensional flow. %\emph{not} due to topology or discreteness, 
%as otherwise suggested in \cite{Modesto:2009bc}. 

On the other hand, I have identified a generic dimensional flow for a particular class of large superposition states in any spatial dimension $\sd$, characterized by superposition coefficients with a power-function dependence on the quantum numbers in the discrete quantum geometries of spin-network type that are summed over. 
For any real number $0<\alpha<\sd$ there is a state in this class which has a spectral dimension flowing from $\sd$ on large lengths scales to $\alpha$ on small scales. 
I have found further that the quantum walk dimension shows no difference from the classical case and that the classical  behaviour of the Hausdorff dimension with a flow to $\sd=1$ on small scales due to discreteness transfers qualitatively to the quantum case.
Taking all these results together, there is a single state in the class of superpositions characterized by the parameter $\alpha =1 $ which can be understood as providing an effectively fractal geometry.

These results can be further generalized in various directions, \eg for lattices of different combinatorics, as well as refined within individual theories of quantum gravity. 
Furthermore, though already based on analytic solutions, the precise reason for the dimensional flow in the superposition is not yet fully understood; a detailed understanding of the relevant aspect in the structure of the superpositions causing the flow could help to generalize and classify further the class of states with potential fractal behaviour.
In parallel, it becomes manageable to explore the phenomenological consequences of the discovered dimensional flow and (when applicable) fractal nature of quantum space as a direct effect of the full quantum theory. 
Such a possibility is especially interesting in the context of quantum cosmology, where a change of dimensionality can bear its imprint in the early stage of cosmic evolution \cite{AmelinoCamelia:2013hk,AmelinoCamelia:2013kj,Calcagni:2013gt}.

%===========================================================================
%===========================================================================

\newpage

\appendix

%\KOMAoptions{fontsize=9pt,DIV=last}%twocolumn,

%===========================================================================

\chapter{Appendix}

\section{Classical expressions of the Laplacian}
\label{sec:classical-expressions}

In this appendix which is based on the appendix of \cite{\COTa}, I give explicit expressions for the Laplacian on discrete geometries in the various variables playing a role in quantum gravity approaches.

The general form of the discrete Laplacian introduced in \sec{calculus} depends both on the combinatorial structure of the underlying combinatorial complex and on its discrete geometry through the various volume factors. 
The Laplacian takes then different concrete expressions, depending on the variables used to encode the geometry of the combinatorial complex. 
These expressions would be needed for explicit calculations in different formulations of classical discrete gravity and, successively, in applications to quantum gravity models. 
In the following, I provide some examples for the discrete Laplacian constructed in the geometric variables used in various approaches to classical and quantum gravity.

\subsection{Regge edge length variables}\label{sec:length-variables}

The most common variables to describe the geometry of an $\m$-dimensional simplicial pseudo-manifold $\simc$ are the squared edge lengths $ l^2_{ij}  $ assigned to all $(ij)\in\simcp1$. 
In the standard version of Regge calculus \cite{Regge:1961ct,Loll:1998ue} which is the starting point for QRC as outlined in \sec{qrc}, these are taken as configuration space for the geometries of piecewise flat triangulations.

The expressions for primal volumes, \ie volumes of the $\p$-simplices $\sigma_\p$ in $\simc$, are well known in the Regge literature.
They can be obtained from the Cayley--Menger determinant 
\[
V_{\sigma_{p}}=\frac{1}{p!}\frac{\left(-1\right)^{\frac{p+1}{2}}}{2^{\frac{p}{2}}}\left|\begin{array}{cccc}
0 & 1 & \cdots & 1\\
1 & 0 & l_{ij}^{2} & \vdots\\
\vdots & l_{ij}^{2} & \ddots\\
1 & \cdots &  & 0
\end{array}\right|^{\frac{1}{2}}.
\]
The particular relevant examples in 4-dimensional spacetime are
\[
V_{\sigma_{2}}=\frac{1}{4}\sqrt{\underset{i}{\sum}\left(2l_{ij}^{2}l_{ik}^{2}-l_{jk}^{4}\right)}
\]
and, after some manipulations, 
\begin{eqnarray}
V_{\sigma_{3}} &=& \frac{1}{12}\sqrt{\underset{(ij)}{\sum}l_{ij}^{2}\left(l_{ik}^{2}l_{jl}^{2}+l_{il}^{2}l_{jk}^{2}-l_{ij}^{2}l_{kl}^{2}\right)-\underset{(ijk)}{\sum}l_{ij}^{2}l_{ik}^{2}l_{jk}^{2}}\,,\\
V_{\sigma_{4}} &=& \frac{1}{96}\left[\underset{(ij)(kl)}{\sum}l_{ij}^{4}l_{kl}^{4}+\underset{(ij)(k)}{\sum}\left(l_{li}^{2}l_{ik}^{2}l_{kj}^{2}l_{jm}^{2}+l_{mi}^{2}l_{ik}^{2}l_{kj}^{2}l_{jl}^{2}-l_{ij}^{4}l_{kl}^{2}l_{km}^{2}\right)\right.\nonumber\\
&& \qquad\left.-2\underset{(ijkl)}{\sum}l_{ij}^{2}l_{jk}^{2}l_{kl}^{2}l_{li}^{2}-4\underset{(ij)}{\sum}l_{ij}^{2}l_{kl}^{2}l_{lm}^{2}l_{mk}^{2}\right]^{\frac{1}{2}}\,,
\end{eqnarray}
where all sums run over all subsimplices of the given kind.

 Thus, the only geometric data needed for defining the dual scalar Laplacian $\Delta $ are the dual edge lengths $\dl_{ab}$ for pairs of dual vertices $v_a, v_b \in \simcsp0$.
I subdivide the dual edges into two parts ${\dl}^{a}$ and ${\dl}^{b}$,
associated respectively with the dual simplex $\sigma=\star v_a$ and $\star v_b$, such that $\dl_{ab}={\dl}^{a}+{\dl}^{b}$. These dual edge lengths depend on the chosen embedding of dual complex into the primal one.
 
In the barycentric case, when ${\dl_{\hat{\iota}}}^{a}$ is the length of the edge dual to the face $(012\dots \hat{\iota}\dots \m)$ contained inside the simplex $\sigma = \star v_a=(012\dots \m)$, the dual (half-)edge is given by %(see \ref{sec:Geometric-Data})
\[
{\dl_{\hat{\iota}}}^{a}=\frac{1}{\m\left(\m+1\right)}\sqrt{\m\underset{j}{\sum}l_{ij}^{2}-\underset{(jk)}{\sum}l_{jk}^{2}}\,.
\]
This can be seen as follows. In coordinates $x$ the position of the barycentre
$x_{bc}$ of a $p$-simplex $\sigma_\p$ is 
\[
x_{bc}=\frac{1}{p+1}\underset{i=0}{\overset{p}{\sum}}x_{i}\,.
\]
The distance from the barycentre of $\sigma$ to the barycentre
of $(12\dots \m)$ in these coordinates is 
\begin{equation}
{\dl_{\hat 0}}^{a}=\left|\frac{1}{\m+1}\underset{i=0}{\overset{\m}{\sum}}x_{i}-\frac{1}{\m}\underset{i=1}{\overset{\m}{\sum}}x_{i}\right|=\left|\m\,x_{0}-\frac{1}{\m(\m+1)}\underset{i=1}{\overset{\m}{\sum}}x_{i}\right|\,,\label{eq:BaryLength}
\end{equation}
and choosing coordinates where $x_{0}$ is the origin and using $x_{i}\cdot x_{j}=g_{ij}=\frac{1}{2}\left(l_{0i}^{2}+l_{0j}^{2}-l_{ij}^{2}\right)$
\cite{Hamber:2009wl} this reduces to 
\begin{eqnarray}
{\dl_{\hat 0}}^{a} & = &\frac{1}{\m(\m+1)} \sqrt{\left(\underset{i=1}{\overset{\m}{\sum}}x_{i}\right)^{2}}=\frac{1}{\m\left(\m+1\right)}\sqrt{\underset{i}{\sum}x_{i}^{2}-2\underset{(ij)}{\sum}x_{i}\cdot x_{j}}\nonumber\\
 & = & \frac{1}{\m\left(\m+1\right)}\sqrt{\m\underset{i}{\sum}l_{0i}^{2}-\underset{(ij)}{\sum}l_{ij}^{2}}\,.
\end{eqnarray}
Besides the simple two-dimensional case this formula was also proven before for the tetrahedron (theorem 187 in \cite{AltshillerCourt:1964vd}).

Then the matrix elements of the Laplacian \eqref{scalar-laplacian} 
have the form
\[\label{edge-length-laplacian}
\frac{w_{ab}}{V_{\star v_a}} = \m\left(\m+1\right)\frac{1}{V_{012\dots \m}}\frac{V_{12\dots \m}}{{\dl_{\hat{0}}}^{a}+{\dl_{\hat{0}'}}^{b}}\,
\]
where furthermore $\star v_b = (0'12\dots \m)$. %=\left(12\dots d\right)$.
These are well defined on simplicial geometries satisfying the strong generalized triangular inequalities, that is, $V_{\sigma_{\p}}>0$ for all $\sigma_\p\in\simcp\p\In\simc$. In particular, these conditions ensure that the dual lengths ${\dl}^a$ are non-zero and positive.

This is not the case for the circumcentric dual where each ${\dl}^{a}\in\R$
can be negative or vanishing, and thus it is possible to have ${\dl_{\hat{0}}}^{a}+{\dl_{\hat{0}'}}^{b}=0$. This pole in the expression for the Laplacian, moreover, cannot be absorbed into the volumes as they depend only on the edges of $\sigma=\star v_a$ but not of $\star v_b$. 
On the other hand, except for these singularities, the circumcentric Laplacian might be well defined even on degenerate geometries with $V_{\sigma}=0$. 
This is true, for example, for $\m=2,3$ where explicit expressions of the circumradius are known.
In $\m=2$,
\[
\frac{w_{\left(ijk\right)\left(jkl\right)}}{A_{ijk}} = \frac{8}{\pm\left(l_{ij}^{2}+l_{ik}^{2}-l_{jk}^{2}\right)\pm\frac{A_{ijk}}{A_{jkl}}\left(l_{jl}^{2}+l_{kl}^{2}-l_{jk}^{2}\right)}\,,
\label{CircEdgeLap2d}
\]
and in $\m=3$ 
\begin{eqnarray}
\frac{w_{(ijkl)(ijkm)}}{V_{ijkl}} &=& 12A_{ijk}^{2}
\bigg(\pm\sqrt{\left(2A_{ijk}\mathcal{A}_{ijkl}\right)^{2}-\left(3l_{ij}l_{jk}l_{ki}V_{ijkl}\right)^{2}} \\
&& \pm\frac{V_{ijkl}}{V_{ijkm}}\sqrt{\left(2A_{ijk}\mathcal{A}_{ijkm}\right)^{2}-\left(3l_{ij}l_{jk}l_{ki}V_{ijkm}\right)^{2}}\bigg)^{-1}.\nonumber
\label{CircEdgeLap3d}
\end{eqnarray}
The sign of each dual length part %${\dl}^{\sigma}$ 
is positive if the circumcenter lies inside the $\m$-simplex $\sigma$ and negative if outside. 

This is based on the following calculation of dual circumcentric edge lengths:
In $\m=2$ one gets dual edge lengths from
the circumradius $R_{ijk}=\frac{l_{ij}l_{ik}l_{jk}}{4A_{ijk}}$: 
\begin{eqnarray}
{\dl}_{jk}^{(ijk)} & = & \sqrt{R_{ijk}^{2}-\left(\frac{l_{jk}}{2}\right)^{2}}=\frac{l_{jk}}{2}\sqrt{\frac{l_{ij}^{2}l_{ik}^{2}}{4A_{ijk}^{2}}-1}\nonumber\\
 & = & \frac{l_{jk}}{4A_{ijk}}\sqrt{l_{ij}^{4}+l_{jk}^{2}+l_{ki}^{2}-l_{ij}^{2}l_{ik}^{2}-2(l_{jk}^{2}l_{ij}^{2}+l_{jk}^{2}l_{ik}^{2})}\,.
\end{eqnarray}
Since 
\begin{eqnarray}
4l_{jk}^{2}l_{ik}^{2}-16A_{ijk}^{2} & = & l_{ij}^{4}+l_{jk}^{4}+l_{ki}^{4}-2(l_{jk}^{2}l_{ij}^{2}+l_{jk}^{2}l_{ik}^{2}-l_{ij}^{2}l_{ik}^{2})\nonumber\\
 & = &(l_{ij}^{2}+l_{ik}^{2}-l_{jk}^{2})^{2}\,,
\end{eqnarray}
this simplifies to 
\[
{\dl}_{jk}^{(ijk)}=\frac{l_{ij}^{2}+l_{ik}^{2}-l_{jk}^{2}}{4A_{ijk}}\frac{l_{jk}}{2}\,.
\]
The matrix elements of the Laplacian are 
\begin{eqnarray}
\frac{w_{\left(ijk\right)\left(jkl\right)}}{A_{ijk}} & = &\frac{1}{A_{ijk}}\frac{2}{\pm\sqrt{\frac{l_{ij}^{2}l_{ik}^{2}}{4A_{ijk}^{2}}-1}\pm\sqrt{\frac{l_{jl}^{2}l_{kl}^{2}}{4A_{jkl}^{2}}-1}}\\
& = & \frac{4}{\pm\sqrt{l_{ij}^{2}l_{ik}^{2}-4A_{ijk}^{2}}\pm\frac{A_{ijk}}{A_{jkl}}\sqrt{l_{jl}^{2}l_{kl}^{2}-4A_{jkl}^{2}}}\nonumber\\
 & = & \frac{8}{\pm\left(l_{ij}^{2}+l_{ik}^{2}-l_{jk}^{2}\right)\pm\frac{A_{ijk}}{A_{jkl}}\left(l_{jl}^{2}+l_{kl}^{2}-l_{jk}^{2}\right)}\,.
\end{eqnarray}
For $\m=3$, there is a formula relating the circumradius $R$ of the
tetrahedron $(ijkl)$ to the area $\mathcal{A}_{ijkl}$ of a triangle
with the product of opposite edge lengths in the tetrahedron as its
edge lengths \cite{AltshillerCourt:1964vd}: 
\[
6V_{ijkl}R_{ijkl}=\mathcal{A}_{ijkl}\,.
\]
The circumcentric dual length to a face $(ijk)$ thus is 
\[
{\dl}_{{\hat\iota}}^{(ijkl)}=\sqrt{R_{ijkl}^{2}-R_{ijk}^{2}}=\frac{\sqrt{\left(2A_{ijk}\mathcal{A}_{ijkl}\right)^{2}-\left(3l_{ij}l_{jk}l_{ki}V_{ijkl}\right)^{2}}}{12A_{ijk}V_{ijkl}}\,,
\]
and the Laplace weight 
\begin{eqnarray}
w_{(ijkl)(ijkm)} & = & 12A_{ijk}^{2}\Bigg( \pm\sqrt{\left(\frac{2A_{ijk}\mathcal{A}_{ijkl}}{V_{ijkl}}\right)^{2}-\left(3l_{ij}l_{jk}l_{ki}\right)^{2}}\\
&& \pm\sqrt{\left(\frac{2A_{ijk}\mathcal{A}_{ijkm}}{V_{ijkm}}\right)^{2}-\left(3l_{ij}l_{jk}l_{ki}\right)^{2}}\Bigg)^{-1}\,.\nonumber
\end{eqnarray}
A simplification to avoid the square roots, as in $\m=2$, remains to be found.

With these descriptions of the Laplacian at hand, one can compare with other discrete Laplacians in the literature.

\subsubsection*{Sorkin's discrete Laplacian}

In \cite{Sorkin:1975kv} a formalism with special ``barycentric'' coordinates (not to be confused with the mathematical notion, where unit vectors are attached to corners) is developed. 
As done also in \cite{Dittrich:2012ba}, it can be expressed in terms of the dihedral angles as a ``cotangens Laplacian''
(with inverse volume factor) for primal scalar fields. 
In $\m=2$, with
% $A_i=V_{\star(i)}$ the dual volume to the primal point $(i)$ and 
$\alpha_{ij}^{\sigma_{2}}$ the angle opposite to the edge
$(ij)$ in the triangle $\sigma_{2}$, it is given by
\[
-(\Delta_{0}\phi)_{i}=\frac{1}{V_{\star(i)}}\underset{j}{\sum}\left(\underset{\sigma_{2}>(ij)}{\sum}\cot\alpha_{ij}^{\sigma_{2}}\right)(\phi_{i}-\phi_{j})\,,
\]
and it is easy to show its equivalence to the Laplacian coming from discrete calculus with  circumcentric duals since elementary geometric arguments yield 
\[
{\dl_{\hat{\iota}}}^{\left(ijk\right)}=\sqrt{R^{2}-\left({l_{jk}}/{2}\right)^{2}}
= \frac{l_{jk}}{2} \cot\alpha_{ij}^{\sigma_{2}}\;.
\]
In $\m=3$,
\[
-(\Delta_{0}\phi)_{i}=\frac{1}{V_{\star(i)}}\underset{j}{\sum}\left(\underset{\sigma_{3}>(ij)}{\sum}l_{\hat{\iota}\hat{\j}}^{\sigma_{3}}\cot\alpha_{ij}^{\sigma_{3}}\right)(\phi_{i}-\phi_{j})\,,
\]
where the opposite dihedral angle $\alpha_{ij}^{\sigma_{2}}$
now is between faces sharing the opposite edge $l_{\hat{\iota}\hat{\j}}^{\sigma_{3}}$
in the tetrahedron $\sigma_{3}$. %\footnote{From private communication with Sebastian Steinhaus.}
From the equivalence in $\m=2$ it is tempting to conjecture equivalence also for $\m\geq 3$, but this remains to be proven.

\subsubsection*{Laplacian in dynamical triangulations}

As discussed with the  introduction of  CDT in \sec{cdt}, a complementary way of encoding the simplicial geometry of a piecewise flat triangulation, still based on the Regge calculus description,
is to fix all edge lengths to some constant value, and allow only changes in the combinatorics of the simplicial complex itself.
%This idea underlies the quantum gravity program of dynamical triangulations \cite{Loll:1998ue,Ambjorn:2011wg}.
For such equilateral configurations, the Laplacian coming from discrete calculus drastically simplifies
(up to an overall factor) to a purely combinatorial graph Laplacian \cite{Chung:1997tk} of the form \eqref{DAeq},
\[
\Delta\propto D-A\,,
\]
where the weights here are $w_{ab}=1$ if $v_a$ and $v_b$ are incident.

While in the Lorentzian version, named causal dynamical triangulations, this is modified by introducing negative length squares for timelike edges parametrized by the factor $\alpha$ \eqref{cdt-wick}, this modification is not performed since the theory is Wick rotated to Euclidean signature and actual calculations are performed in a reduced ensemble of Euclidean triangulations (those that can indeed be obtained by Wick rotating Lorentzian ones) \cite{Ambjorn:2012vc}
Still, in that case, the Laplacian of causal dynamical triangulations has to be calculated explicitly from the full simplicial case, in terms of the functional dependence of volumes on $\alpha$.

\subsection{First-order Regge calculus with face variables}

For a simplicial complex $\simc$ with a geometric realization as a piecewise
linear space of dimension $\std$, the frame field can be considered as a set of discrete
edge vectors 
\begin{equation}
e^{I}=e_{\mu}^{I}\d x^{\mu}\quad\mapsto\quad e_{ij}^{I}(\alpha)=\left[x_{i}(\alpha)-x_{j}(\alpha)\right]^{I}\,,\label{eq:Vielbein}
\end{equation}
where the coordinates $x(\alpha)$ are given by a choice of origin
and frame for every $\std$-simplex
$\sigma^{\alpha}$ the edge $(ij)$ is face of. The index $\alpha=1,2,\dots ,\np{\std}$ labels the $\std$-simplices. Accordingly, the volume form of a $p$-simplex $\sigma_{p}$ in the
coordinates of a $\std$-simplex it is a face of is 
\[
\omega_{I_{p+1}\dots I_{\std}}^{\sigma_{p}}(\alpha)=\epsilon_{I_{1}\dots I_{\std}}\underset{k=1}{\overset{p}{\prod}}e_{k}^{I_{k}}(\alpha)\,,
\]
in terms of $p$ linear independent edge vectors $e_{k}$ belonging
to $\sigma_{p}$. The $p$-volume $\sigma_{p}$ is the norm of the
volume form,
\[
V_{\sigma_{p}}=\left|\omega^{\sigma_{p}}\right|=\frac{1}{p!}\sqrt{\underset{I_{p+1}<\dots <I_{\std}}{\sum}\left|\omega_{I_{p+1}\dots I_{\std}}^{\sigma_{p}}\right|^{2}}\,.
\]

An alternative version to edge-length Regge calculus is in terms of the $(\std-1)$-face normals $\omega^{\sigma_{\std-1}}(\alpha)$ (expressed in the reference frame of the $\std$-simplex $\sigma^{\alpha}$) and Lorentz rotations (parallel transports) $U(\alpha,\alpha')$ from frame to frame across neighbouring simplices. 
In turn, the latter define holonomies (around closed plaquettes)
\[
W_{\alpha}(h)=U_{\alpha,\alpha+1}U_{\alpha+1,\alpha+2}\dots U_{\alpha-1,\alpha}
\]
which are rotations in the plane orthogonal to hinges $h\in \simcp{\std-2}$ \cite{Caselle:1989cd,Barrett:1994ba,Gionti:2005gi} and measure the local curvature. 
The class angles corresponding to the holonomies are therefore the deficit angles $\theta_{h}=2\pi-\sum_\alpha\theta_{h}^{\alpha}$, as could be obtained from the dihedral angles $\theta_{h}^{\alpha}$ at the hinge $h$ in each $\std$-simplex $\sigma^{\alpha}$ sharing it.

I show how all geometric data needed for the Laplacian $\Delta $ have an expression in terms of the face normals $\omega^{\sigma_{\std-1}}(\alpha)$. 

While the $(\std-1)$-volumes are just the modulus of the face normals themselves, 
\[
V_{\sigma_{\std-1}}=\left|\omega^{\sigma_{\std-1}}(\alpha)\right|\,,
\]
the $\std$-volumes of simplices $\sigma^{\alpha}$ can also be expressed
by $\std$ of the face normals $\omega^{i}(\alpha)=\omega^{\sigma_{\std-1}=(012\dots \hat{\iota}\dots \std)}$
as \cite{Caselle:1989cd} 
\[
V_{\alpha}\equiv V_{\sigma^{\alpha}}=\left[\frac{1}{\std!}\epsilon^{I_{1}\dots I_{\std}}\epsilon_{i_{1}\dots i_{\std}j}\omega_{I_{1}}^{i_{1}}(\alpha)\dots \omega_{I_{\std}}^{i_{\std}}(\alpha)\right]^{\frac{1}{\std-1}},
\]
where capital indices $I, J,...$ are in internal space. 
By the closure relations, it does not matter which face $(012\dots \hat{\j}\dots \std)$
is left out if $\sigma^{\alpha}$ is closed. Alternatively, one could also
average over the choices of reference face.

An explicit expression of dual lengths can only be obtained using
position coordinates on $\sigma^{\alpha}$ as functions of the face
normals. Barycentric coordinates $z(\alpha)$, that is coordinates
for which the sum over vertices satisfies $\sum_{i=1}^{\std+1}z_{i}^{I}(\alpha)=0$,
can be derived inverting the expression of the face normals in terms
of discrete vielbeins (see eq.\ (\ref{eq:Vielbein})) in these coordinates
\cite{Caselle:1989cd}, 
\[
\omega_{I}^{i}(\alpha)=\frac{1}{\left(\std-1\right)!^{2}}\underset{k\ne i}{\sum}\epsilon_{J_{1}\dots J_{\std-1}I}\epsilon^{i,i_{1}\dots i_{\std-1},k}z_{i_{1}}^{J_{1}}(\alpha)\dots z_{i_{\std-1}}^{J_{\std-1}}(\alpha)\,,
\]
leading to 
\[
z_{i}^{I}(\alpha)=\frac{1}{\left(\std-1\right)!}\frac{1}{\left(V_{\alpha}\right)^{\std-2}}\underset{k\ne i}{\sum}\epsilon^{J_{1}\dots J_{\std-1}I}\epsilon_{i,i_{1}\dots i_{\std-1},k} \, \omega_{J_{1}}^{i_{1}}(\alpha)\dots \omega_{J_{\std-1}}^{i_{\std-1}}(\alpha)\,.
\]
The barycentric dual length is particularly simple in these coordinates.
It is just the distance from the barycenter of the tetrahedron with
coordinate $z^{I}=0$  to the barycenter of a face
\[
{\dl_{\hat{\iota}}}^{\sigma}=\left|\underset{j\ne i}{\overset{}{\sum}}z_{j}[\omega^{i}(\sigma)]\right|\,.
\]
For the circumcentric case no such simplification can be expected.
Still, primal edge lengths can be expressed in the coordinates $z(\alpha)$, taking then advantage of the above expressions (eqs.\ \eqref{edge-length-laplacian}, \eqref{CircEdgeLap2d} and \eqref{CircEdgeLap3d}).

As an example, I can give the (further simplified) expressions in $\std=3$.
On $\sigma^{\alpha}=(ijkl)$ (suppressing the frame label $\alpha$),
\[
z_{i}^{I}=\frac{1}{2}\frac{1}{V_{\alpha}}\underset{r\ne i}{\sum}\epsilon^{IJK}\epsilon_{imnr}\omega_{J}^{m}\omega_{K}^{n}=\frac{1}{2V_{\alpha}}\left(\omega^{j}\times\omega^{k}+\omega^{k}\times\omega^{l}+\omega^{j} \times\omega^{l}\right)^{I},
\]
and the tetrahedron volume in terms of three of its face triangles is
\[
\left(V_{\alpha}\right)^{2}=\frac{1}{6}\epsilon^{IJK}\epsilon_{ijkl}\omega_{I}^{i}\omega_{J}^{j}\omega_{K}^{k}\,.
\]
Therefore, the dual length is 
\begin{eqnarray}
{\dl_{i}}^{\alpha} & = &\frac{1}{3}\left|z_{j}+z_{k}+z_{l}\right|
=\frac{1}{6V_{\alpha}}\left|\omega^{j}\times\omega^{k}+\omega^{k}\times\omega^{l}+\omega^{l}\times\omega^{j}\right|\\
 & = &\frac{\sqrt{\underset{(mn)\in(jkl)}{\sum}\left[\omega_{m}^{2}\omega_{n}^{2}-(\omega_{m}\cdot\omega_{n})^{2}+(\omega_{m}\cd\omega_{r})(\omega_{r}\cdot\omega_{n})-(\omega_{m}\cdot\omega_{n})\omega_{r}^{2}\right]}}{6V_{\alpha}}\,.\nonumber
\end{eqnarray}
Using the closure condition $\sum\omega^{i}=0$, this further simplifies to 
\[
{\dl_{ij}}^{\alpha}=\frac{1}{2V_{\alpha}}\left|\omega^{j}\times\omega^{k}\right|=\sqrt{\omega_{j}^{2}\omega_{k}^{2}-\left(\omega_{j}\cdot\omega_{k}\right)^{2}}\label{baryfacevar}
\]
for some faces $j,k$. The matrix elements of the Laplacian \eqref{scalar-laplacian}
can then easily be computed combining all the above expressions.

Finally, I note that the volume form $\omega^{h}(\alpha)$ of a hinge
$h=\sigma_{\std-2}$ can be expressed in terms of two normals to two faces $\sigma^{\alpha+1,\alpha}$,
$\sigma^{\alpha,\alpha+1}$ sharing it, in the frame of $\sigma^{\alpha}$ \cite{Caselle:1989cd}: 
\[
\omega_{IJ}^{h}(\alpha)=\frac{1}{V_{\alpha}}\omega_{[I}^{\alpha-1,\alpha}(\alpha)\omega_{J]}^{\alpha,\alpha+1}(\alpha)\,.
\]
This gives a connection to flux variables, discussed in the next
section, which are exactly these $(\std-2)$-face normals. 

%%%%%%%%%%%%%%%%%%%%%%%%%%%%%%%%%%%%%%%%%%%%%%%%%%%%%%%%%%%%%%%%%%%%%%%

\subsection{Flux and area-angle variables}

In $\std=4$, a useful alternative set of variables in simplicial geometry
are the bivectors $b_{ijk}^{IJ}=e_{ij}^{I}\wedge e_{ik}^{J}$ associated
with triangles $(ijk)$ (or their internal Hodge duals $X_{ijk}^{IJ}=\epsilon_{\ KL}^{IJ}b_{ijk}^{KL}$), 
known as fluxes, and playing a prominent role in both canonical loop quantum gravity and spin-foam models \cite{Rovelli:2004wb, Baratin:2010in, Baratin:2011hc}. In a geometric
4-simplex $(ijklm)$, the triangle areas are 
\[
A_{ijk}=|X_{ijk}|\,,
\]
and volumes of tetrahedra can be computed using three of the fluxes associated with the four triangles on their boundary \cite{Barrett:1998fp},
regarding the bivectors as linear maps: 
\[
V_{ijkl}^{2}=\frac{8}{9}\Tr\left(\ast X_{ijk}\left[\ast X_{jkl},\ast X_{kli}\right]\right)\,.
\]
Volumes of 4-simplices can be taken from the wedge product of two
fluxes not lying in the same $3$-hyperplane (thus not belonging to the same tetrahedron), 
\[
V_{ijklm}=|X_{ijk}\wedge X_{ilm}|\,.
\]
Primal edge lengths can be expressed using the generalized sine formula as 
\[
l_{ij}^2=2\frac{|X_{ijk}|^2|X_{ijl}|^2-(X_{ijk}\cdot X_{ijl})^2}{\Tr\left(\ast X_{ijk}\left[\ast X_{jkl},\ast X_{kli}\right]\right)}\,.
\]
This gives all the buildings blocks for explicit expressions  (eqs.\ \eqref{edge-length-laplacian}, \eqref{CircEdgeLap2d} and \eqref{CircEdgeLap3d}) of the
barycentric and circumcentric discrete Laplacian $\Delta $ with elements \eqref{weights}.

In the spin representation in $\std=3+1$ LQG and $\std=4$ spin-foams (adapted to a simplicial context), the easiest variables to use are triangle areas and $3$-volumes of tetrahedra. However, it is known that they form an overcomplete set of data to specify a four-dimensional simplicial geometry, and should be supplemented by additional constraints whose explicit form is not known \cite{Barrett:1999fa,Makela:1994hm}. A more natural choice is to use areas $A_{ijk}$ and dihedral
angles $\phi_{k,l}^{ij}$ between faces $(ijk)$ and $(ijl)$ hinged
at the common edge $(ij)$ \cite{Barrett:1994ba,Dittrich:2008hg}. This set of data encodes the same
information as the fluxes $X_{ijk}$. In these variables, the relevant geometric data to compute the discrete Laplacian have the following expressions.  The $3$-volumes are

\[
V_{ijkl}^{2}=\frac{A_{ijk}}{9}\sqrt{\underset{j}{\sum}A_{ijl}^{2}\sin^{2}\phi_{k,l}^{ij}A_{jkl}^{2}\sin^{2}\phi_{i,l}^{jk}-\underset{(ij)}{\sum}A_{ijl}^{4}\sin^{4}\phi_{k,l}^{ij}}\,.
\]
Using the generalized sine
law according to which the angles $\theta_{l,m}^{ijk}$ between 3-simplices $(ijkl)$ and
$(ijkm)$ are functions of the area dihedral angles of the form \cite{Dittrich:2008hg}
\begin{equation}\nonumber
\cos\theta_{l,m}^{ijk}=\frac{\cos\phi_{k,l}^{ij}-\sin\phi_{l,m}^{ij}\sin\phi_{m,k}^{ij}}{\cos\phi_{l,m}^{ij}\cos\phi_{m,k}^{ij}}\,.
\end{equation}
one obtains the $4$-volumes 
\[
V_{ijklm}=\frac{3}{4}\frac{1}{A_{ijk}}V_{ijkl}V_{ijkm}\sin\theta_{l,m}^{ijk}[\phi]\,,
\]
as well as the primal edge lengths
\[
l_{ij}=\frac{2}{3}\frac{1}{V_{ijkl}}A_{ijk}A_{ijl}\sin\phi_{k,l}^{ij}\,.
\]
Again, this is all the information needed to build the Laplacian $\Delta $.

%===========================================================================
%===========================================================================

\backmatter

\bibliographystyle{JHEP}
\bibliography{dissarxiv}
\addcontentsline{toc}{chapter}{\protect\numberline{}Bibliography}

%===========================================================================
%===========================================================================

\chapter*{Acknowledgement}
%\section*{Acknowledgements}

The present work has evolved over the last years at the Albert Einstein Institute in Potsdam.
I am very grateful to my supervisors Daniele Oriti and Gianluca Calcagni for giving me the opportunity to pursue this great endeavour and supporting me with a perfect balance of guidance and freedom.
I am further grateful to Hermann Nicolai for providing this unique environment where all quantum gravity approaches find their place, in particular also sheltering this PhD work.
I am indebted to Daniele Oriti, Gianluca Calcagni, Andreas Pithis and Casey Tomlin for providing very helpful comments on the draft of this thesis.
In addition I also thank James Ryan for an inspiring collaboration on part of this work.

I acknowledge support from the Andrea von Braun foundation motivating the conceptual aspects of this work, from Evangelisches Studienwerk Villigst and from the FAZIT foundation.
Furthermore I would like to thank the quantum gravity groups at University Nijmegen, University Erlangen, and University Copenhagen, as well as the Munich Center for Mathematical Philosophy for hospitality.

Among the many people whose company has made life at the AEI such an enjoyable experience  I would like to devote special thanks to my fellow students
Sylvain Carrozza, Parikshit Dutta,  Frank Eckart, Angelika Fertig, Marco Finocchiaro, Philipp Fleig, Filippo Guarnieri Carlos Guedes, Despoina Katsimpouri, Alexander Kegeles, Tim Kittel, Pan Kessel, Olaf Kr\"uger,  Andreas Pithis, Matti Raasakka Marco Scalisi, Sebastian Steinhaus.

This thesis would not have been possible without the enduring support of my family; but most of all I thank Annika for being there and walking this path together with me.

%===========================================================================
%===========================================================================

%\chapter*{Selbst\"andigkeitserkl\"arung}
%
%Ich erkl\"are hiermit, dass ich die vorliegende Arbeit selbst\"andig und nur unter Verwendung der angegebenen Literatur und Hilfsmittel angefertigt habe.
%
%\
%
%Potsdam, den 04. Mai 2015

\end{document}